\numberwithin{thrm}{chapter}
\title{Constructive Quantum Field Theory \\ on Curved Surfaces and Related Topics}
\author{Jiasheng Lin\footnote{Sorbonne Universit\'e, email: \href{mailto:jiasheng.lin@imj-prg.fr}{jiasheng.lin@imj-prg.fr}}}
\date{}
\begin{document}

\begin{titlepage}
{\sffamily
    \begin{center}
\includegraphics[scale=0.5]{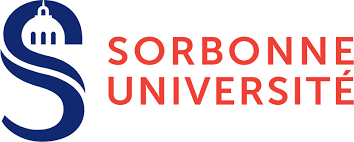}
\end{center}

\begin{center}

\vspace{\stretch{3}}

\hrule
\vspace{\stretch{1}}
{\LARGE \textbf{Constructive Quantum Field Theory}}
\medbreak 

{\Large \textbf{on Curved Surfaces and Related Topics}}

\vspace{\stretch{1}}
\hrule
\vspace{\stretch{2}}

{\large Doctoral Thesis of}
\vspace{\stretch{1}}

\textbf{{\LARGE Jiasheng \textsc{Lin}}}

\vspace{\stretch{2}}

{\Large 
July 2025\\
Sorbonne Universit\'e}

\end{center}


\newpage
\thispagestyle{empty}
\vspace*{\fill}

\noindent\begin{center}
\begin{minipage}[t]{(\textwidth-2cm)/2}
Institut de mathématiques de Jussieu-Paris Rive gauche. UMR 7586. \\
Boîte courrier 247 \\
4 place Jussieu \\
\texttt{75252} Paris Cedex 05
\end{minipage}
\hspace{1.5cm}
\begin{minipage}[t]{(\textwidth-2cm)/2}
 Université de Paris. \\
 École doctorale de sciences\\
 mathématiques de Paris centre.\\
  Boîte courrier 290 \\
 4 place Jussieu\\
 \texttt{75252} Paris Cedex 05
 \end{minipage}
\end{center}
}
\end{titlepage}

\frontmatter

\thispagestyle{empty}
\section*{Résumé}

Dans cette thèse, nous construisons le modèle de Théorie Quantique des Champs (TQC) $P(\phi)_2$ sur des surfaces courbes et montrons qu’il satisfait aux axiomes de Segal. Un ingrédient important de cette construction est l’utilisation d’une procédure de régularisation locale pour définir l’interaction comme une variable aléatoire par rapport au Champ Libre Gaussien (GFF). Nous fournissons un contre-exemple montrant que la régularisation par troncation spectrale viole la localité. Ensuite, nous expliquons en quoi le formalisme de Segal peut être étendu au collage de surfaces avec des coupures, ce qui offre une interprétation géométrique de l’entropie d’intrication. En utilisant cette interprétation, nous exploitons la formule d’anomalie de Polyakov en théorie des champs conformes (CFT) et appliquons une procédure de renormalisation simple pour définir une quantité correspondant à l’entropie d’intrication dans cette interprétation géométrique. Nous montrons alors que cette quantité se comporte comme une fonction de corrélation de CFT. Ceci nous permet de dériver de façon rigoureuse un calcul d'entropie de Cardy et Calabrese. Enfin, le formalisme de Segal est également lié à l'asymptotique de déterminants zeta sur des surfaces de grand genre, où le genre tends vers l'infini. Nous fournissons une démonstration géométrique, indépendante des axiomes de Segal, du résultat correspondant à l’aide des noyaux de la chaleur, en complément d'une autre preuve reposant sur les axiomes de Segal. Les deux preuves apparaissent dans la thèse.

\section*{Abstract}
In this thesis, we construct the $P(\phi)_2$ Quantum Field Theory (QFT) model on curved surfaces and show that it satisfies Segal's axioms. An important ingredient in this construction is the use of a local regularization procedure to define the interaction as a random variable with respect to the Gaussian Free Field (GFF). We provide a counterexample demonstrating that spectral truncation regularization violates locality. We then explain how Segal’s formalism can be extended to the gluing of surfaces with slits, which offers a geometric interpretation of the entanglement entropy. Using this interpretation, we exploit the Polyakov anomaly formula in Conformal Field Theory (CFT) and apply a simple renormalization procedure to define a quantity corresponding to entanglement entropy within this geometric interpretation. We then show that this quantity behaves like a CFT correlation function. This allows us to rigorously derive an entropy calculation of Cardy and Calabrese. Finally, Segal's formalism is also related to the asymptotics of zeta determinants on surfaces of large genus where the genus tends to infinity. We provide a geometric proof—independent of Segal's axioms—of the corresponding result using heat kernels, in addition to another proof based on Segal's axioms. Both proofs are presented in the thesis.

\clearpage
\null
\thispagestyle{empty}
\addtocounter{page}{-1}
\newpage

\chapter*{Acknowledgements}
\addcontentsline{toc}{chapter}{Acknowledgements}

First and foremost, I would like to express my deepest gratitude to my thesis advisor, Nguyen Viet Dang, for guiding me through this incredible journey. Without his support, I cannot imagine having come so far in the vast and challenging field of mathematical Quantum Field Theory. Beyond the immense mathematical knowledge I gained from him, I was truly inspired by his remarkable enthusiasm for scientific research and his unwavering determination in tackling difficult problems in mathematics and mathematical physics. More than just a caring, responsible, and encouraging advisor, he has gradually become a friend and a colleague with whom I enjoy collaborating. I feel truly fortunate to have had him as my advisor.

I would also like to sincerely thank my friend and teacher, Benoit Estienne, with whom I completed a significant portion of this thesis. Most of my understanding of Conformal Field Theory comes from him—either directly or through the excellent lecture notes he co-authored with Yacine Ikhlef. I am deeply grateful for the many insightful discussions on mathematics and physics, as well as the numerous lunches and coffee breaks we shared.

My heartfelt thanks go to Tat Dat Tô, whom I have known since the beginning of my PhD. I thank him for being one of my tutors in the \textit{\'ecole doctorale}, for his encouragements during difficult moments in my research, let alone many fruitful discussions on mathematics. I also extend my gratitude to Colin Guillarmou, whom I met around the same time, and from whom I also learned much of CFT, both directly and indirectly. I deeply appreciate his warmth and willingness to engage with young researchers.

I thank again Colin, together with Ilya Chevyrev, for accepting to be referees of this thesis. I extend my gratitude to Nathana\"el Berestycki, Frédéric Hélein, Pavel Mnev, and Eveliina Peltola for accepting to be on the jury of my defense.

I sincerely thank my collaborators Frédéric Naud, Gaëtan Leclerc, Ismael Bailleul, and Léonard Ferdinand, alongside Viet and Benoit, for also having taught me a lot and having helped me accomplish these valuable projects.

I am grateful to Fathi Ben Aribi, Laurent Charles, Gilles Courtois, Elisha Falbel, Thibault Lefeuvre, Vlerë Mehmeti, Frédéric Naud, Bram Petri, and Shu Shen, as well as many other colleagues in the laboratory, for their enriching discussions on mathematics and culture. A special thanks to Elisha Falbel for being my second tutor. Special thanks to Laurent Charles, Bram Petri, Shu Shen and others for scientifically enriching many of my lunches at RU. It is also my pleasure to thank the wonderful researchers Mikhail Basok, Reda Chhaibi, Nguyen-Bac Dang, Yuxin Ge, Yannick Guedes Bonthonneau, Fei Han, Frédéric Hélein, Luc Hillairet, Paul Laurain, Thierry Lévy, Semyon Klevtsov, Phan Thành Nam, Eveliina Peltola, Gabriel Rivière, Chenmin Sun, Fredrik Viklund, Fabien Vignes-Tourneret, Michał Wrochna, Junrong Yan, as well as many others who have shared their insights with me on several occasions and generously offered their wisdom.

My PhD journey would not have been the same without the camaraderie of my friends and fellow PhD students Xi Chen, Zhe Chen, Léonard Ferdinand, Pierre Godfard, Hedong Hou, Tristan Humbert, Gaëtan Leclerc, Yuxiang Li, Long Liu, Pietro Piccione, Yuan Tian, Tianqi Wang, Zuodong Wang, Yilin Ye, Lyuhui Wu, Sacha Zakharov, Yicheng Zhou and many others for lots of fun and enjoyment with math and other things. I sincerely thank Yilin Wang for playing the special role of a supportive family member as well as an inspiring colleague.

I must also thank our secretary at IMJ, Kahina Bencheikh, for undertaking the incredibly challenging French administrative tasks during my PhD years. I thank also the secretary of \textit{\'ecole doctorale}, Corentin Lacombe, for managing issues related to the defense.

Finally, my warmest gratitude to Yuetong and my parents---for everything else in life.

\clearpage
\null
\thispagestyle{empty}
\addtocounter{page}{-1}
\newpage

\tableofcontents

\chapter*{Résumé Détaillé en Français}
\addcontentsline{toc}{chapter}{Résumé Détaillé en Français}

\counterwithout{equation}{section}

La théorie quantique des champs (TQC dans la suite) est un cadre développé en physique théorique pour décrire et expliquer les phénomènes fondamentaux qui gouvernent la physique à l’échelle nucléaire ou particulaire.
C'est l'un des plus grands succès de la physique du 20ième siècle 
dont
les résultats théoriques
concordent aussi bien que possible avec les résultats expérimentaux -
jusqu’à 14
chiffres significatifs en électrodynamique quantique -
ce qui est un exploit inégalé par la physique
antérieure. Cependant, on peut dire que les fondements mathématiques de la TQC ne sont pas tout à fait définitifs. Il n'existe pas encore de cadre mathématique assez général et riche qui permettrait de décrire de façon non perturbative toutes les TQC qui décrivent des phénomènes physiques observés, comme le modèle standard qui est une théorie de jauge sur l'espace--temps de dimension $4$ (ceci fait l'objet de l'un des problèmes du Millénium). 

 Malgré les difficultés, depuis les années 60, de nombreux cadres ont été proposés pour axiomatiser la théorie quantique des champs. L'une des axiomatiques les plus célèbres est due à G\"arding et Wightman ~:
 on se donne un espace de Hilbert $\mathcal{H}$, un vecteur distingué de $\mathcal{H}$ appelé le vecteur vide $\Omega_0\in \mathcal{H}$, un opérateur sur $\mathcal{H}$ à valeur distributions $\Phi$ ainsi qu'une repésentation unitaire $U$ du groupe de Poincaré $\mathcal{P}_+^{\uparrow}$ satisfaisant une certaine liste d'axiomes. Par exemple, l'une des règles les plus importantes est la covariance:
 $$ U(a,\Lambda) \Phi(x) U(a,\Lambda)^{-1}=\Phi(\Lambda x+a) .$$
Suite aux travaux novateurs de Symanzik et Nelson, le sujet a été révolutionné par les travaux profonds de Osterwalder--Schrader dans les années 70. Ces auteurs ont donné un r\^ole fondamental à l'approche probabiliste reposant sur l'intégrale fonctionnelle Euclidienne. On construit une mesure de probabilité sur l'espace de configuration des champs et ils ont décrit une méthode permettant de reconstruire la théorie quantique des champs relativiste à partir d'axiomes venant du monde Euclidien.

Plus précisément, pour les bosons scalaires réels, l'espace des configurations est celui des distributions à valeur réelles, et la mesure $\mu$ est définie et reliée aux objets de G\"arding-Wightman (pour $t_1<\dots<t_n $) par
\begin{align}
	  \mb{E}_{\phi\in \mathcal{D}'(\mb{R}^4)}^{\mu}\big[\phi(\mn{x}_1,t_1)\cdots \phi(\mn{x}_n,t_n)\big] &:\heueq\frac{\int_{\mathcal{D}'(\mb{R}^4)}^{}\phi(\mn{x}_1,t_1)\cdots \phi(\mn{x}_n,t_n)\me^{-S(\phi)}[\dd \mathcal{L}(\phi)]}{\int_{\mathcal{D}'(\mb{R}^4)}^{}\me^{-S(\phi)}[\dd \mathcal{L}(\phi)]} \label{eqn-corr-func-fr}\\
	  &\heueq \bank{\Omega_0,\phi_{\mm{GW}}(-\ii t_n,\mn{x}_n)\cdots \phi_{\mm{GW}}(-\ii t_1,\mn{x}_1)\Omega_0}_{\mathcal{H}}.
	  \label{eqn-corr-wightman-fr}
	\end{align}
Ici, $[\dd \mathcal{L}(\phi)]$ est la mesure de Lebesgue formelle (il n'existe pas de telle mesure invariante par translation en dimension infinie) sur l'espace des champs et $S$ est la fonctionnelle d'action. Autrement dit, les fonctions de corrélation Euclidiennes sont obtenues à partir des fonctions de corrélation relativistes (Minkowskiennes) correspondantes en rendant le temps comme imaginaire.

Une des propriétés centrale permettant une la reconstruction de la théorie relativiste est la propriété de réflection positivité.
 Cette formulation a permis la construction rigoureuse du premier exemple d'une théorie quantique des champs en interaction en dimension $2$. En combinant avec les travaux antérieurs de Nelson \cite{GJ, Sim2}, Glimm--Jaffe sont parvenus à construire une mesure de probabilité chargeant des distributions de $\mathcal{D}^\prime(\Lambda)$ pour $\Lambda$ une boite de volume fini dans $\mathbb{R}^2$.
Cette construction correspond au célèbre modèle $P(\varphi)_2$. Notons qu'a priori, une distribution aléatoire tirée au hasard suivant la loi du GFF ou du $P(\varphi)_2$ a une très faible régularité, c'est une distribution qui vit dans les espaces de Sobolev strictements négatifs et ni les puissances $\varphi^n$ de $\varphi$ ni la densité de fonctionnelle $\vert\nabla \phi\vert^2  $ ne sont bien définis. L'observation clé est d'isoler la partie libre (qaudratique) de l'action, au lieu de prendre comme mesure de référence la mesure de Lebesgue en dimension infinie qui n'existe pas, on prend comme mesure de référence la mesure du champs libre Gaussien appelée aussi GFF.
La partie non quadratique de la fonctionnelle d'action devient ainsi intégrable contre la mesure gaussienne après une certaine procédure limite de soustraction de contre-termes divergents appelée \textbf{renormalisation} (ici en $2$--diemensions une simple renormalisation de Wick suffit). Récemment gr\^ace aux travaux révolutionnaires de Hairer et Gubinelli, une approche alternative à la construction de la mesure $P(\phi)_2$, différente de celle de Nelson, est apparue, en considérant $\mu$ comme la mesure invariante d'une EDP stochastique de Langevin et cette méthode a permis de donner de nouvelles constructions élégantes des mesures $\Phi^4_3$.

Dans les années 90, une approche différente d'axiomatisation a émergé, inspirée par la théorie des cordes et motivée par des considérations algébro-topologiques \cite{Segal}. Cette approche, initiée par Atiyah et Segal, se concentre sur les dimensions $2=1+1$ mais inclut naturellement comme espaces-temps toutes les surfaces courbes compactes, Lorentziennes ou Riemanniennes, avec ou sans bord. Rappelons que dans la théorie de G\"arding-Wightman, les translations en temps faisaient partie des symétries du groupe de Poincaré, et que construire une théorie des champs consistait en partie à construire une représentation unitaire du groupe de Poincar\'e incluant ces translations. Dans le cadre d'Atiyah-Segal, cependant, l'évolution temporelle d'une « tranche » est représentée par la portion de l'espace-temps elle-même qui relie la tranche à l'entrée et à la sortie (un cobordisme), et qui a priori peut ne posséder aucune symétrie. Néanmoins, les évolutions temporelles doivent se « composer » comme dans une représentation lorsque les espaces-temps sont collés. Le résultat est qu'une théorie quantique des champs (2D) se comprend comme un foncteur de la catégorie des cobordismes (métrisés) vers la catégorie des espaces de Hilbert où les morphismes correspondent à des opérateurs de Hilbert--Schmidt.

Le point de vue d'Atiyah-Segal présente l'avantage d'être étroitement lié à de nombreuses structures intéressantes en algèbre, géométrie et topologie. De plus, la métrique sur l'espace-temps peut varier et joue un rôle spécial. En particulier, lorsque les quantités physiques comme (\ref{eqn-corr-func-fr}) se transforment de manière covariante sous des changements conformes de la métrique, on obtient une théorie conforme des champs (CFT), dotée de riches structures du point de vue de la théorie des représentations. Cependant, contrairement au cas de Nelson-Osterwalder-Schrader, très peu d'exemples de TQC satisfaisant les axiomes ont été explicitement et rigoureusement construits \cite{GKR, KMW, Lin}, encore moins si l'on vise une mesure fonctionnelle $\mu$ définie de façon non perturbative, comme dans les travaux de Glimm--Jaffe, Nelson. En effet, de nombreuses techniques développées traditionnellement en théorie constructives reposent sur le fait que la métrique est plate et que tout est invariant par translation, et
ce n'est pas forcément clair comment de telles méthodes pourraient \^etre adaptées aux métriques courbes ou au découpage collage de l'espace ambiant. Un défi général dans la poursuite du programme constructif réside donc dans la fusion des techniques d'analyse fonctionnelles avec le cadre d'Atiyah-Segal, et dans la déduction, par la suite, de phénoménologies intéressantes (comme les transitions de phase) ou de conséquences mathématiques à partir de toute construction particulière.

\section*{Description des Contributions Principales  de la thèse}
\addcontentsline{toc}{section}{Description des Contributions Principales  de la thèse}

\paragraph{Axiomes de Segal pour la théorie $P(\varphi)_2$ et Conséquences.}
Dans la prépublication \cite{Lin}, j'obtiens une théorie quantique des champs $P(\phi)_2$ en interaction sur une surface riemannienne périodique de volume et genre infini. 
Ceci résulte précisément de la fusion des points de vue d'Atiyah-Segal et de Nelson-Osterwalder-Schrader, comme discuté ci-dessus. De plus, je montre que cette TQC possède un gap de masse, un exemple de phénoménologie découlant de la construction, ce qui implique la décroissance exponentielle des corrélations. 

Comme sous-produit, on retrouve également des résultats connus sur l'asymptotique de certains déterminants zêta de surfaces lorsque le genre tend vers l'infini, et le résultat est généralisé aux fonctions de partition de $P(\phi)_2$. Cela éclaire d'un jour nouveau ce type de résultats en analyse globale.

L'argument principal repose sur la vérification des axiomes de Segal à partir de la construction classique de Nelson des mesures fonctionnelles $P(\phi)_2$, qui s'appuie elle-même sur les mesures du champ libre gaussien. En fait, une difficulté majeure est déjà présente pour démontrer l'axiome de composition de Segal pour le champ libre, et je propose une nouvelle approche basée sur une idée conceptuelle simple impliquant une symétrie dans le conditionnement successif de variables aléatoires. Une seconde difficulté réside dans la définition elle-même des amplitudes de Segal en interaction pour une surface riemannienne avec bord. Un ingrédient clé pour y parvenir est de montrer que la fonctionnelle d'interaction est locale, pour laquelle aucune analyse suffisamment précise et explicite n'existait dans la littérature. Ce chapitre fournit un résultat rigoureux basé sur un renforcement de l'argument de Nelson et montre que la renormalisation de Wick peut être rendue compatible avec la localité.

\paragraph{Vérification Explicite que le GFF Régularisé n'est pas Réflexion positif.}
Un ingrédient clé dans la renormalisation d'une TQC du point de vue des mesures fonctionnelles est de régulariser le champ, qui est généralement une distribution de très faible régularité imposée par le support de la mesure. Cependant, la plupart des méthodes traditionnelles de régularisation (par exemple, la troncature dans l'espace de Fourier) ne sont pas compatibles avec la propriété importante de localité, car les supports des distributions deviennent fortement délocalisés. Dans cette courte note \cite{BDFL2}, nous montrons que l'exigence que certaines fonctionnelles du GFF spectralement tronqué soient mesurables par rapport à certaines $\sigma$-algèbres locales les force à être constantes, et nous construisons des fonctionnelles tests explicites montrant que le GFF spectralement régularisé n'est pas réflexion positif. Ces deux résultats sont des conséquences du fait que la régularisation spectrale est non locale.

\paragraph{Entropie d'intrication par Renormalisation de Hadamard.}
Dans ce travail en collaboration avec B. Estienne \cite{BL}, nous présentons une construction purement mathématique qui retrouve certains résultats de Cardy et Calabrese \cite{Cardy_Calabrese} sur l'entropie d'intrication. Physiquement, il s'agit d'une quantité décrivant comment des sous-systèmes (typiquement localisés dans une région spatiale) d'un système quantique s'influencent mutuellement. Pour les systèmes de dimension $1$ et en supposant qu'on puisse les décrire par une CFT en deux dimensions d'espace-temps, cela correspond mathématiquement à des revêtements ramifiés de surfaces de Riemann munies de métriques à singularités coniques. Dans ce travail, nous définissons d'abord la fonction de partition de la CFT sur une surface avec singularité conique en utilisant une simple renormalisation de Hadamard de la formule d'anomalie de Polyakov. Ensuite, le résultat principal est que pour un revêtement ramifié de degré $d$, on peut étudier comment un certain rapport des fonctions de partition se transforme par changements conformes de la métrique. On montre que le comportement ressemble à une fonction à deux points d'opérateurs primaires d'une CFT avec des poids conformes spécifiques. C'est le résultat clé qui conduit à des expressions explicites de l'entropie d'intrication retrouvant la formule de Cardy et Calabrese.

Dans un futur travail, nous prévoyons de construire un champs libre à valeurs dans un fibré sur la surface de base munie d'une connexion plate singulière. Nous montrons que ce modèle vivant sur la base permet de décrire de façon équivalente la théorie conforme vivant sur le rev\^etement ramifié, cette construction donnerait une justification rigoureuse des champs de twist apparaissant en physique théorique.
 Mathématiquement, il s'agit d'une version singulière des « laplaciens tordus » utilisés par exemple par Phillips et Sarnak \cite{PSa} pour compter les géodésiques fermées avec contraintes homologiques.

\paragraph{Déterminants Zêta des Laplaciens sur les Revêtements Cycliques.}
Dans ce travail avec Nguyen-Viet Dang et Frédéric Naud, nous obtenons l'asymptotique des déterminants zêta sur les revêtements abéliens d'une variété riemannienne fermée lorsque le degré du revêtement tend vers l'infini. Plus précisément, si $ (M_N\mapsto M)_N $ est une suite croissante de revêtements cycliques d'une variété riemannienne fermée $M$ qui converge quand $N\lto \infty$, dans un sens approprié, vers un certain revêtement  limite $M_\infty$ sur $M$, nous montrons la convergence de la suite 
$ \log\det_\zeta(\Delta_{N})/\text{Vol}(M_N)$ quand $N\lto \infty$
où $\Delta_N$ est l'opérateur de Laplace-Beltrami sur $M_N$. Pour ce faire, nous utilisons des résultats sur l'asymptotique du noyau de la chaleur combinées aux estimations de Li-Yau, ainsi qu'une analyse des « laplaciens tordus » développée par Phillips et Sarnak \cite{PSa}.

\counterwithin{equation}{section}

\mainmatter

\chapter{Introduction}

\section{Brief General Introduction}

Quantum Field Theory (QFT) is a theoretical framework developed by theoretical physicists for describing and analyzing the elementary particles and their fundamental interactions. It has been highly successful in standards of physics and is widely held to be the most accurate theory of nature in terms of producing experimental predictions. However, the mathematical foundation of QFT is not yet definitive, and there exists, so far, no mathematical framework which is both general and rich enough to include all existing models of QFT, particularly the Gauge theory in 4 dimensions (a Millennium problem). 

Despite the huge difficulty, there has been numerous attempts since 1960s at axiomatics. One earliest and most widely accepted axiom system proposed by \textbf{G\aa rding} and \textbf{Wightman} defines a QFT (on the Minkowski space $\mb{R}^{1,3}$) as the data of a Hilbert space $\mathcal{H}$, a distinguished vector~$\Omega_0\in \mathcal{H}$ called the \textsf{vacuum}, an~$\mathcal{H}$-acting operator-valued distribution~$\phi$ on~$\mb{R}^{1,3}$, and a unitary representation~$U$ of the proper Poincar\'e group~$\mathcal{P}_+^{\uparrow}$, satisfying a certain set of rules \cite{GJ, Sim2}. One example is \textsf{Poincar\'e covariance}, that is, for each~$(a,\Lambda)\in \mathcal{P}_+^{\uparrow}$,~$a\in\mb{R}^{1,3}$ and~$\Lambda\in\so^{\uparrow}(1,3)$, we have formally~$U(a,\Lambda)\phi(x)U(a,\Lambda)^{-1}=\phi(\Lambda x+a)$. A revolution came when \textbf{Osterwalder} and \textbf{Schrader}, exploiting the so-called \textsf{Euclidean path integral} approach inspired by \textbf{Nelson} and \textbf{Symanzik}, obtained another set of axioms which emphasized a probability measure~$\mu$ defined on the \textsf{field configuration space}, and proved the \textsf{reconstruction theorem} allowing one to go back to the former system. 
More precisely, for \textsf{real scalar bosons} the configuration space is the real distributions~$\mathcal{D}'(\mb{R}^4)$, and the measure~$\mu$ is defined and related to the G\aa rding-Wightman objects (provided $t_1<\cdots <t_n$) by 
	\begin{align}
	  \mb{E}_{\phi\in \mathcal{D}'(\mb{R}^4)}^{\mu}\big[\phi(\mn{x}_1,t_1)\cdots \phi(\mn{x}_n,t_n)\big] &:\heueq\frac{\int_{\mathcal{D}'(\mb{R}^4)}^{}\phi(\mn{x}_1,t_1)\cdots \phi(\mn{x}_n,t_n)\me^{-S(\phi)}[\dd \mathcal{L}(\phi)]}{\int_{\mathcal{D}'(\mb{R}^4)}^{}\me^{-S(\phi)}[\dd \mathcal{L}(\phi)]} \label{eqn-corr-func}\\
	  &\heueq \bank{\Omega_0,\phi_{\mm{GW}}(-\ii t_n,\mn{x}_n)\cdots \phi_{\mm{GW}}(-\ii t_1,\mn{x}_1)\Omega_0}_{\mathcal{H}}.
	  \label{eqn-corr-wightman}
	\end{align}
Here~$\dd \mathcal{L}$ is the non-existent Lebesgue measure on~$\mathcal{D}'(\mb{R}^4)$ and~$S$ is the \textsf{action functional}. In other words, the \textsf{Euclidean correlation functions} are obtained from the corresponding \textsf{relativistic (Minkowskian) correlation functions} by setting time to be imaginary.

Importantly, one may also ask about the reverse direction: starting from a probability measure $\mu$ on $\mathcal{D}'(\mb{R}^d)$ (for a theory in $d$ dimensions), can one recover from it the G\aa rding-Wightman objects? A crucial axiom on the measure $\mu$ which ensures this can be done rigorously is \textsf{reflection positivity}. This formulation allowed the first example of an \textit{interacting} quantum field theory to be constructed rigorously for $d=2$.
It included the case~$S(\phi):=\int_{\Lambda}^{}\frac{1}{2}|\nabla \phi(x)|^2+\frac{1}{2}m^2 \phi(x)^2+\phi(x)^4\,\dd x$ which corresponds to the celebrated $\Phi^4_2$ model. Note that a priori a distribution~$\phi$ has very low regularity and has neither gradient squared~$|\nabla\phi|^2$ nor products~$\phi^k$. The key observation of Nelson was that, after singling out the \textsf{free field action}~$S_0(\phi):=\int_{\Lambda}^{}\frac{1}{2}|\nabla \phi(x)|^2+\frac{1}{2}m^2 \phi(x)^2\,\dd x$ so that $\exp(-S_0)[\dd \mathcal{L}(\phi)]$ is the well-defined \textsf{Gaussian Free Field} (GFF) measure,
the functional~$S_{\Lambda}(\phi)=\int_{\Lambda}^{}\phi(x)^4\,\dd x$ becomes integrable against the Gaussian measure after a certain limiting procedure with subtraction of diverging counterterms (\textsf{Wick renormalization}). Recently an alternative approach to the measure $\mu$ different from Nelson's has appeared which views $\mu$ as the invariant measure of a \textsf{Langevin Stochastic PDE} \cite{BDFT}, particularly effective in 3 dimensions.

Then in the 1990s there arose a different approach to axiomatization from the algebraic-topological point of view, partly motivated by string theory \cite{Segal}. This approach, initiated by \textbf{Atiyah} and \textbf{Segal}, focuses on~$2=1+1$ dimensions but naturally includes as space-times all compact curved surfaces, Lorentzian or Riemannian, with or without boundary. Recall that in the G\aa rding-Wightman theory, time translations were part of the Poincar\'e symmetries and having a unitary representation meant~$U(t_1+t_2,\one)=U(t_2,\one)U(t_1,\one)$. In the Atiyah-Segal picture, however, time evolution from an initial to a terminal ``time-slice'' is represented by the piece of the space-time itself which connects them (a \textsf{cobordism}), and which a priori may not possess any symmetry at all. Nevertheless time evolutions must ``compose'' as in a representation when the space-times get glued. The \textbf{upshot} is that \textit{a (2d) Quantum Field Theory} could then be nothing but a functor from the (metrized)~$(1,2)$-bordism category to the category of Hilbert spaces and special operators.

\begin{figure}[h]
    \centering
    \includegraphics[width=0.8\linewidth]{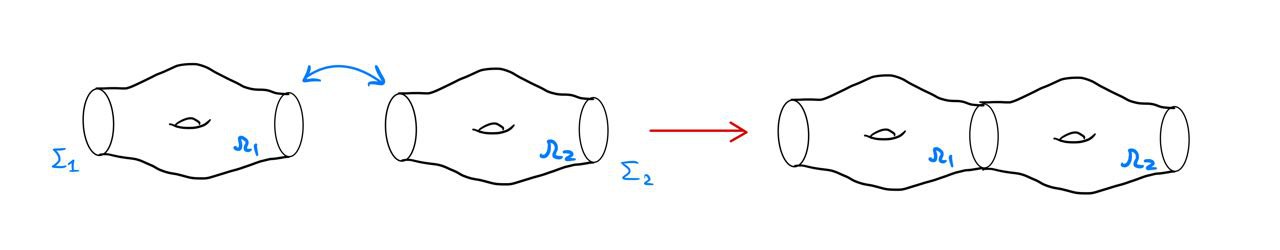}
    \caption{Segal Gluing}
    \label{fig-segal-glu}
\end{figure}

The Atiyah-Segal point of view has the advantage of being closely connected to many interesting structures in algebra, geometry and topology. Also the metric on the space-time is allowed to vary and plays a special role (hence making quantum gravity amenable). Especially when the physical quantities like (\ref{eqn-corr-func}) transform covariantly under conformal changes of the metric, one obtains a \textsf{Conformal Field Theory} (CFT), having rich representation theoretical structures. However unlike the case with Nelson-Osterwalder-Schrader, very few examples have been explicitly and rigorously constructed \cite{GKR, KMW, Lin}, even less if we aim ``non-perturbatively'' for a functional measure~$\mu$ as with Nelson. Indeed, many of the techniques developed traditionally in CQFT have been ``inherently Euclidean'' and it remains unclear how they could be adapted to curved metrics and further to cutting-gluing of space-times. One general challenge in continuing the constructive program lies henceforth in merging the functional measure techniques with the Atiyah-Segal picture, and in deducing, thereafter, interesting phenomenology (such as phase transition) or mathematical consequences out of any particular construction.

\subsection{Description of Main Contributions}

\paragraph{Segal Axioms for $P(\phi)_2$ and Consequences (Chapter \ref{chap-segal}).}
In the preprint \cite{Lin} I obtain an interacting $P(\phi)_2$ Quantum Field Theory on a curved Riemannian surface with infinite volume and genera. This is precisely a product of fusing the Atiyah-Segal and the Nelson-Osterwalder-Schrader points of view as discussed above. Moreover, I show that this QFT possesses a \textit{mass gap}, an example of phenomenology that comes out of the construction. As a by-product, one also re-obtains previously known results on the asymptotic of certain zeta-determinants of surfaces when the genus of the surface goes to infinity, and the result is generalized to QFT partition functions. This sheds new light on this type of results of pure geometry.

The main argument lies in verifying Segal's axioms starting from Nelson's classical construction of the $P(\phi)_2$ functional measures, which in turn builds upon the GFF measures. In fact a major difficulty is already present in showing Segal's composition axiom for the free field and I offer a novel approach based on a simple conceptual idea involving a symmetry in the successive conditioning of random variables. A second difficulty lies as a matter of fact in \textit{defining}, in the first place, the $P(\phi)_2$-interacting \textit{Segal amplitudes} for a Riemannian surface with boundary. A key ingredient in doing so is to show that the interaction functional is \textit{local}, of which no sufficiently precise and explicit analysis has been given in the literature. This chapter gives a precise and rigorous result based on a strengthening of Nelson's argument and shows that Wick renormalization can be made compatible with locality.

\paragraph{Explicit Verification that Regularized GFF is not Reflection Positive (chapter \ref{chap-cut-off}).}
A key ingredient in the renormalization of a QFT from the functional measure point of view is to \textit{regularize} a typical field configuration which generally has very low regularity as enforced by the support of the functional measure. However, most traditional methods for regularization (e.g.\ Fourier cut-off) are not compatible with the important property of \textit{locality} in that supports of distributions become tremendously de-localized. In this short note \cite{BDFL2} with many co-authors we argue that requiring certain functionals of the spectrally cut-off GFF to be measurable against some local $\sigma$-algebras forces them to be constant, and we construct explicit test-functionals showing that the spectrally cut-off GFF is not reflection positive. Both of these are consequences of spectral cut-off regularization being non-local.

\paragraph{Entanglement Entropy from Hadamard Renormalization (Chapter \ref{chap-entangle}).}
In this work in collaboration with B.\ Estienne \cite{BL} we demonstrate a purely mathematical construction which recovers some results of Cardy and Calabrese \cite{Cardy_Calabrese} on the so-called \textsf{entanglement entropy}. Physically this is a quantity which describes how subsystems (typically localized in a space region) of a quantum system influences each other. For systems of one spatial dimension and assuming the so-called ``replica trick'' that views it under a CFT in two space-time dimensions, it corresponds mathematically to branched covers of Riemann surfaces and metrics on such covers with conical singularities. In this work we first define the CFT ``partition function'' (denoted $\mathcal{Z}$, cf.\ denominator of (\ref{eqn-corr-func})) on a surface with conical singularity using a simple Hadamard renormalization (removing disks) of the Polyakov anomaly integral. Then the \textbf{main result} is that for a branched cover $f:\Sigma_d\lto \Sigma$ of degree $d$, the ratio $\mathcal{Z}(\Sigma_d,f^*g)/\mathcal{Z}(\Sigma,g)^d$ of partition functions transforms under conformal changes of $g$ like a \textit{two-point function of CFT primary operators} of specific conformal weights. This is the key result that leads to explicit expressions of the entanglement entropy from the work of Cardy and Calabrese.

  In a future work, we plan to construct a bundle-valued free field on the base surface with the critical values of $f$ removed, whose action is given by a canonical flat connection on the bundle. We show that this is an equivalent way of viewing the model on $\Sigma_d$ that gives the partition function $\mathcal{Z}(\Sigma_d,f^*g)$ (called ``twist fields'' in physics). Mathematically this is a singular version of the ``twisted Laplacians'' used e.g.\ by Phillips and Sarnak \cite{PSa} for counting closed geodesics with homological constraints.

\paragraph{Zeta Determinants of Laplacians on Cyclic Covers (Chapter \ref{chap-covers}).} In this work with Nguyen-Viet Dang and Fr\'ed\'eric Naud we obtain asymptotic of zeta-determinants on Abelian covers of a closed Riemannian manifold when the degree of the cover goes to infinity.
More precisely if $(M_N\lto  M)_N$ is an increasing sequence of cyclic covers of a closed Riemannian manifold $M$ that converges when $N\lto \infty$, in a suitable sense, to some limit $\mathbb{Z}$-cover $M_\infty$ over $M$,
we show the convergence of the sequence $ \log\det_\zeta(\Delta_{N})/\text{Vol}(M_N)$ when $N\rightarrow \infty$ where $\Delta_N$ is the Laplace-Beltrami operator on $M_N$. To do this we employ certain heat kernel techniques in combination with Li-Yau bounds, as well as an analysis of ``twist Laplacians'' as developed by Phillips and Sarnak \cite{PSa}.

\subsection{Road Map to the Rest of Introduction}

\begin{figure}[h]
    \centering
    \includegraphics[width=0.85\linewidth]{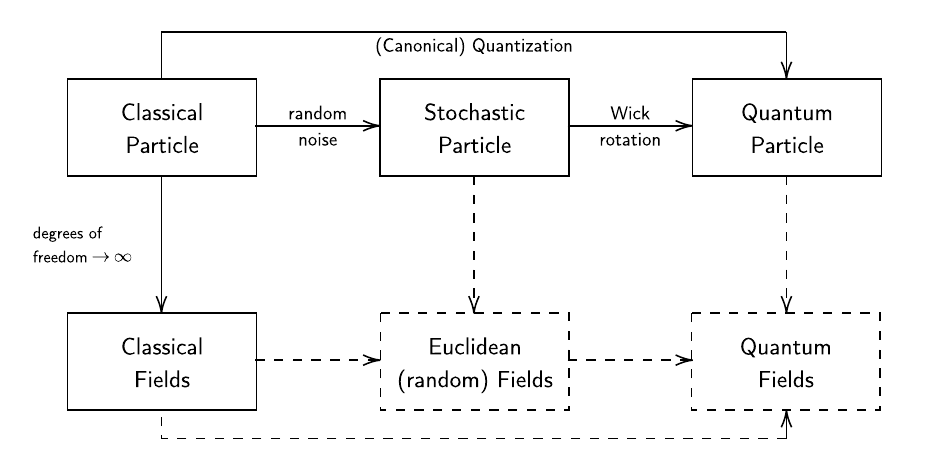}
    \caption{Method of the ``Missing Box(es)''}
    \label{fig-intro-road-map}
\end{figure}

In the rest of this introduction, rather than introducing background knowledge in the strict sense chapter-by-chapter for this thesis, we have chosen to accomplish a certain expository task which may also stand by its own. Certainly, it will provide enough motivation and background for proceeding to the remaining chapters of this thesis, each having also its own introduction. In fact, we will describe a main body of ideas which underlies all individual chapters although these may at first sight appear to belong to different subjects.

We borrow from Sidney Coleman (\cite{Coleman} pp.\ 57) the organizing principle of this exposition---he calls it the ``Method of the Missing Box''. However, we have one more box: the stochastic box. One of the main efforts in accomplishing our task has hence been to fit the new box(es) into Coleman's picture, and the result is roughly depicted in figure \ref{fig-intro-road-map}. In section \ref{sec-intro-1d} we traverse through the first line. In section \ref{sec-intro-field-2d} we traverse through the second line. In section \ref{sec-intro-fock} we present some necessary materials to deal with the stochastic box on the second line and to connect it with the quantum box; we refer to the more detailed description at the beginning of that section.

\section[The Harmonic Oscillator]{The Harmonic Oscillator: Perspectives on Time Evolution}\label{sec-intro-1d}

This section aims to introduce some key concepts in the mathematical treatment of quantum fields which are already relevant on the level of quantum mechanics. We discuss them in the simpler setting of quantum mechanics in order to bring conceptual clarity. We also make connections and comparison to probability theory on this level centering around the Feynman-Kac formula. In the process we illustrate all concepts with the key example of the harmonic oscillator. Persistent emphasis will be put on \textit{time evolution}, except perhaps in subsection \ref{sec-intro-stat-mec} where, loosely speaking, we have taken the limit $t\to +\infty$.

\subsection{Classical Mechanics}\label{sec-intro-class-mec}
We shall start with classical mechanics, recalling briefly the basic concepts of Hamiltonian and Lagrangian formalisms (both are important in this thesis). The three basic theoretical ingredients of mechanics are: the \textsf{states}, the \textsf{observables} and the \textsf{time evolution} (dynamics). The last one is our primary focus.

      Suppose we want to describe the motion of \textit{one} particle moving in $\mb{R}^d$, possibly under some external potential $V:\mb{R}^d\lto \mb{R}$. In other words, $\mb{R}^d$ is understood here as the space of \textit{all possible positions} of the particle, also called the \textsf{configuration space}. Suppose~$\mn{x}:(-\varepsilon,\varepsilon)\lto \mb{R}^d$, $\varepsilon>0$, describes the trajectory of this particle in a small interval of time. The basic idea of \textsf{Hamiltonian mechanics} is that one should lift the trajectory to the \textsf{phase space}~$T^*\mb{R}^d\cong \mb{R}^{2d}$, that is, to consider
\begin{equation}
  \tilde{\mn{x}}(t)\defeq (\mn{x}(t),\mn{p}(t))\defeq (\mn{x}(t),m\ank{\dot{\mn{x}}(t),-}_{\mb{R}^d}).
  \label{eqn-class-mec-phase-trajec}
\end{equation}
Here the 1-form $\mn{p}(t):=m\ank{\dot{\mn{x}}(t),-}_{\mb{R}^d}$ is called the \textsf{momentum} of the particle\footnote{The momentum relates to the velocity~$\dot{\mn{x}}(t)$ in this way only in some cases (though many). In general the relation is given by the \textit{Lagrangian} using the \textit{Legendre transform}, see (\ref{eqn-classical-canonical-momentum}) below and subsequent references.}, and the parameter $m>0$ the \textsf{mass} of the particle. 
The fundamental ingredients are described as follows.
\begin{enumerate}[(i)]
  \item The \textsf{classical state} of the particle at time~$t$ is given by the point~$(\mn{x}(t),\mn{p}(t))\in T^*\mb{R}^d$; thus the \textsf{phase space} is the space of \textit{all possible (classical) states} of the particle;
  \item An \textsf{observable} is given by a real function~$O\in C^{\infty}(T^*\mb{R}^d)$; its value at~$(\mn{x},\mn{p})$ corresponds to the outcome of an experimental observation made of the particle in the state~$(\mn{x},\mn{p})$;
  \item The \textsf{time evolution}, that is, the phase space trajectory~$t\mapsto \tilde{\mn{x}}(t)$, is an integral curve of a vector field $X_H\in C^{\infty}(T^*\mb{R}^d,TT^*\mb{R}^d)$ associated to a special observable~$H$ called the \textsf{Hamiltonian}, which gives the \textsf{energy} of each classical state $(\mn{x},\mn{p})$.
\end{enumerate}

Thus~$H$ is thought to specify the physics of any particular mechanical model at hand, and the vector field $X_H$, called the \textsf{Hamiltonian vector field} of $H$, is determined uniquely by $H$ through
\begin{equation}
  X_H=\sum_{j=1}^d \Big(
  \frac{\partial H}{\partial p_j}\frac{\partial}{\partial x^j}-\frac{\partial H}{\partial x^j}\frac{\partial}{\partial p_j} \Big),
  \label{}
\end{equation}
where~$(x^1,\dots,x^d,p_1,\dots,p_d)$ are standard coordinates on~$T^*\mb{R}^d$. An integral curve of $X_H$, therefore, solves the system of ODEs
\begin{equation}
  \frac{\dd x^j}{\dd t}(t)=\frac{\partial H}{\partial p_j}(\mn{x}(t),\mn{p}(t)),\quad\quad
  \frac{\dd p_j}{\dd t}(t)=-\frac{\partial H}{\partial x^j}(\mn{x}(t),\mn{p}(t)).
  \label{}
\end{equation}
These are called \textsf{Hamilton's equations of motion}.

Next we turn to the \textsf{Lagrangian formalism}. This formalism, while putting less emphasis on states and observables (traditionally), describes the time evolution by dictating that the full trajectory~$\mn{x}:[0,T]\lto \mb{R}^d$ of the particle over any \textit{fixed time interval}~$[0,T]$ solves a variational problem. It focuses on a function~$L$, called the \textsf{Lagrangian}, defined on the \textit{tangent bundle}~$T\mb{R}^d$ instead of the cotangent bundle. Precisely, the \textsf{least action principle} says that among all trajectories with fixed end points~$\mn{x}(0)=\mn{x}_0$ and~$\mn{x}(T)=\mn{x}_T$, the particle chooses the one which minimizes the \textsf{action}
\begin{equation}
  S_{[0,T]}(\mn{x})\defeq \int_{0}^{T}L(\mn{x}(t),\dot{\mn{x}}(t))\,\dd t.
  \label{}
\end{equation}
One could then deduce that the minimizer also satisfy a set of equations of motion, called the \textsf{Euler-Lagrange equations} \cite{Arnold-1989, Takh}. Now we point out that in \textit{regular}\footnote{See \cite{baez-wise-2005} section 4.1.} circumstances the Lagrangian and Hamiltonian formalisms are related as follows. Given a Lagrangian~$L$ one can define for each~$\mn{x}\in\mb{R}^n$ a map $\lambda_{L,\mn{x}}:T_\mn{x}\mb{R}^d\lto T^*_{\mn{x}}\mb{R}^d$ so that the coordinates~$p_j$,~$1\le j\le d$, of~$\lambda_{L,\mn{x}}(\mn{v})$ are given by
\begin{equation}
  p_j=\frac{\partial L}{\partial v^j}(\mn{v}),
  \label{eqn-classical-canonical-momentum}
\end{equation}
where~$(v^j)$ are standard coordinates on~$T_{\mn{x}}\mb{R}^d$. This~$p_j$ is sometimes called the \textsf{canonical momentum conjugate to the position coordinate~$x^j$} (with respect to~$L$). Moreover, the Hamiltonian function~$H$ on~$T^*\mb{R}^d$ is given by the \textit{Legendre transform} of~$L$. Now if~$t\mapsto \mn{x}(t)$ is a trajectory minimizing the action given by~$L$, then~$t\mapsto (\mn{x}(t),\lambda_{L,\mn{x}(t)}(\dot{\mn{x}}(t)))$ solves Hamilton's equations for~$H$.\footnote{A vast generalization of ``classical mechanics'' that deals with general maps between Riemannian manifolds (and sections of fiber bundles) has been put forward in \cite{Ch}.}

\begin{exxx}
  We look at
  \begin{equation}
    H(\mn{x},\mn{p})=\frac{|\mn{p}|_{T^*_{\mn{x}}\mb{R}^d}^2}{2m}+V(\mn{x})
    \label{eqn-ham-class-part-potential}
  \end{equation}
  with~$m>0$ and~$V:\mb{R}^d\lto \mb{R}$ a function. For this~$H$, we have
  \begin{equation}
    X_H|_{(\mn{x},\mn{p})}=(m^{-1}\mn{p},-\nabla_{\mb{R}^d} V(\mn{x})),
    \label{}
  \end{equation}
  and hence the Hamilton's equation becomes
  \begin{equation}
    \left\{
    \begin{array}{l}
      \partial_t\mn{x}(t)=m^{-1}\mn{p}(t),\\
      \partial_t\mn{p}(t)=-\nabla V(\mn{x}(t)).
    \end{array}
    \right.
    \label{eqn-HJ-equation-particle-potential}
  \end{equation}
  This implies~$m\cdot \ddot{\mn{x}}(t)=-\nabla V(\mn{x}(t))$ which is Newton's law. The Lagrangian is accordingly
  \begin{equation}
    L(\mn{x},\mn{v})=\frac{1}{2}m|\mn{v}|_{T_{\mn{x}}\mb{R}^d}^2 -V(\mn{x}).
    \label{eqn-lag-general-potential}
  \end{equation}
In this case indeed $\lambda_{L,\mn{x}}(\mn{v})=m\ank{\mn{v},-}_{\mb{R}^d}$.   For a classical 1-dimensional harmonic oscillator we have
  \begin{equation}
    V_{m,\omega}(x)=\frac{1}{2}m\omega^2x^2, \quad\quad\textrm{for }x\in\mb{R}\textrm{ with }\omega>0.
    \label{}
  \end{equation}
  The solutions are~$x(t)=A\sin(\omega t)+B\cos(\omega t)$.
\end{exxx}

\subsection{Quantum Mechanics}
Now we come to quantum mechanics, again of the same particle moving in~$\mb{R}^d$. To set the stage, now we remember that the configuration space~$\mb{R}^d$ comes equipped with a \textit{measure}~$\mathcal{L}_{\mb{R}^d}$, the \textsf{Lebesgue measure}.
The basic framework of quantum mechanics now dictates that, in the so-called \textsf{Schr\"odinger picture},
  \begin{enumerate}[(i)]
    \item the \textsf{state} of the particle is given by a complex function~$\psi\in \mathcal{H}_{\mm{pos}}:= L^2(\mb{R}^d,\mathcal{L}_{\mb{R}^d})$ with~$\nrm{\psi}_{L^2}=1$, called the \textsf{wave function}\footnote{More precisely, two wave functions~$\psi$ and~$\tilde{\psi}$ deferring by a phase factor,~$\tilde{\psi}=\me^{\ii \theta}\psi$, represent the same state. Thus the \textsf{space of states} is the \textsf{projective Hilbert space}~$\mathcal{PH}_{\mm{pos}}:=\mathcal{H}_{\mm{pos}}/\sim$ where~$\psi\sim \lambda \psi$ for all~$\lambda\in \mb{C}$,~$\lambda\ne 0$. A wave function specifies and over-specifies a state. While the projective point of view remains important, we will work with wave functions in this thesis.}; the basic interpretation is that~$|\psi(\mn{x})|^2$ for~$\mn{x}\in\mb{R}^d$ gives the \textit{probability density} that the particle be found at~$\mn{x}\in \mb{R}^d$ in an attempt of observation (on the \textit{position} of the particle); 
    \item an \textsf{observable} of the mechanical system is given by a (usually unbounded) self-adjoint operator on~$\mathcal{H}_{\mm{pos}}$.
    \item the \textsf{time evolution} is represented by the 1-parameter unitary group~$\me^{-\ii t\mn{H}}$ generated by a special observable~$\mn{H}$ called the \textsf{Hamiltonian operator}; in other words, if the system starts with an initial wave function~$\psi$, then the wave function at time~$t$ is~$\me^{-\ii t\mn{H}}\psi$.
  \end{enumerate}

  The \textsf{probabilistic interpretation} in general says that given an observable~$\mn{O}$ on~$\mathcal{H}$ and a wave function~$\psi\in \mathcal{H}$, then a large number of independent measurements of~$\mn{O}$ of the system ``ideally'' prepared in state~$\psi$ will return results following the probability law
\begin{equation}
  \left.
  \begin{array}{rcl}
    \ank{f,E_{\mn{O}}(-)f}_{\mathcal{H}}: \mathcal{B}_{\mb{R}}&\lto & [0,1],\\
    A&\longmapsto & \ank{f,E_{\mn{O}}(A)f}_{\mathcal{H}},
  \end{array}
  \right.
  \label{}
\end{equation}
where~$\mathcal{B}_{\mb{R}}$ denotes the Borel~$\sigma$-algebra of~$\mb{R}$, and~$E_{\mn{O}}$ the projection valued spectral measure of~$\mn{O}$. The interpretation of~$|\psi(\mn{x})|^2$ as a probability density mentioned in (i) is indeed a special case of this general interpretation by taking the multiplication operators~$\mn{X}_j:\psi\mapsto x_j\cdot \psi$,~$1\le j\le n$, as observables (called the \textsf{position operators}).

As with classical mechanics, the physics of a particular model is specified by giving the Hamiltonian~$\mn{H}$. Therefore, in view of the rule (iii) and the probabilistic interpretation, knowing the spectral resolution of~$\mn{H}$ is especially important. In the sequel, we focus on 3 specific models: the 1D harmonic oscillator, the 1D free particle, and the 2D harmonic oscillator. To indicate rigor, we state results on the first model as lemmas \ref{lemm-spec-har-osc-1d} and \ref{lemm-crea-ann-har-osc-1d}.

\begin{lemm}[\cite{Dimock2} theorem 4.5]\label{lemm-spec-har-osc-1d}
  The \textsf{harmonic oscillator Hamiltonian}
  \begin{equation}
    \mn{H}_{m,\omega}\defeq -\frac{1}{2m}\frac{\dd ^2}{\dd x^2}+\frac{1}{2}m\omega^2 x^2,\quad\quad m>0,~\omega>0,
    \label{eqn-def-1d-harm-osc-hamil}
  \end{equation}
  acts on~$L^2(\mb{R},\mathcal{L}_{\mb{R}})$ as an unbounded essentially self-adjoint operator with core $\mathcal{S}(\mb{R})$, the Schwartz functions, and its spectrum consists of \textsl{simple} eigenvalues
  \begin{equation}
    E_n(\mn{H}_{m,\omega})\defeq \Big( n+\frac{1}{2} \Big)\omega,\quad\quad n\in\mb{N},
    \label{}
  \end{equation}
  with a unique \textsf{ground state}
  \begin{equation}
    \Omega_0(x)=\left( \frac{m\omega}{\pi} \right)^{1/4}\me^{-\frac{1}{2}m\omega x^2}
    \label{eqn-har-osc-ground-state-1d}
  \end{equation}
  corresponding to~$E_0=\frac{1}{2}\omega$.
\end{lemm}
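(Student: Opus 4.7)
The plan is to proceed via ladder operators, which simultaneously diagonalize $\mn{H}_{m,\omega}$ and exhibit a complete orthonormal basis of eigenfunctions inside the Schwartz space. First I would introduce the formally adjoint pair of first-order operators
$$
a \defeq \sqrt{\tfrac{m\omega}{2}}\Big( x + \tfrac{1}{m\omega}\tfrac{\dd}{\dd x}\Big),\qquad
a^* \defeq \sqrt{\tfrac{m\omega}{2}}\Big( x - \tfrac{1}{m\omega}\tfrac{\dd}{\dd x}\Big),
$$
both of which preserve $\mathcal{S}(\mb{R})$. A direct calculation on $\mathcal{S}(\mb{R})$ gives the canonical commutation relation $[a,a^*]=\one$ and the factorization $\mn{H}_{m,\omega}=\omega(a^*a+\tfrac{1}{2}\one)$, reducing the spectral problem to that of the number operator $N\defeq a^*a$.

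Next I would construct the eigenvectors explicitly. The ground state equation $a\Omega_0=0$ is a first-order linear ODE whose unique $L^2$-normalized solution is precisely the Gaussian (\ref{eqn-har-osc-ground-state-1d}). Setting $\Omega_n\defeq(n!)^{-1/2}(a^*)^n\Omega_0$, an induction using $[a,a^*]=\one$ yields both $N\Omega_n=n\,\Omega_n$ and $\ank{\Omega_n,\Omega_k}_{L^2}=\delta_{nk}$. Each $\Omega_n$ lies in $\mathcal{S}(\mb{R})$, being a Hermite polynomial times $\Omega_0$, and satisfies $\mn{H}_{m,\omega}\Omega_n=(n+\tfrac{1}{2})\omega\,\Omega_n$.

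To finish the proof I would establish that $\{\Omega_n\}_{n\geq 0}$ is a Hilbert basis of $L^2(\mb{R})$. If $f\in L^2(\mb{R})$ is orthogonal to every $\Omega_n$, then $f$ is orthogonal to $P(x)\,\me^{-m\omega x^2/2}$ for every polynomial $P$, so the entire function
$$
F(z)\defeq \int_{\mb{R}}f(x)\,\me^{-m\omega x^2/2}\,\me^{zx}\,\dd x
$$
has all derivatives vanishing at $z=0$; hence $F\equiv 0$, and restricting to $z=\ii\xi$ shows that the Fourier transform of $f(x)\me^{-m\omega x^2/2}$ vanishes, forcing $f=0$. Granting completeness, essential self-adjointness with core $\mathcal{S}(\mb{R})$ follows from the classical criterion (e.g.\ Reed-Simon Vol.~I, VIII.11) that a symmetric operator on an invariant dense domain which contains a complete orthonormal basis of eigenvectors is essentially self-adjoint. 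Simplicity of each eigenvalue $E_n$ is then automatic, since any further eigenvector for $E_n$ would be orthogonal to every $\Omega_k$.

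The main obstacle is conceptually the essential self-adjointness rather than the spectral calculation itself: producing an infinite supply of eigenfunctions via $a,a^*$ is an algebraic exercise, but ruling out alternative self-adjoint extensions requires genuine analytic input. The completeness argument through the entire function $F$ is the delicate ingredient here, and it is also what identifies $\{\Omega_n\}$ with the standard Hermite basis of $L^2(\mb{R})$.
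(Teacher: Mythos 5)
Your proof is correct, and it follows essentially the same route as the source the paper cites (and as the paper's own subsequent Lemma \ref{lemm-crea-ann-har-osc-1d}): factorization through the ladder operators, the Gaussian ground state solving $a\Omega_0=0$, the Hermite basis $\Omega_n=(n!)^{-1/2}(a^*)^n\Omega_0$, completeness, and essential self-adjointness from having a complete orthonormal set of eigenvectors in the symmetric domain (equivalently, $(\mn{H}_{m,\omega}\pm\ii)\mathcal{S}(\mb{R})$ contains every $\Omega_n$ and is therefore dense). The only cosmetic caveat is the citation of the self-adjointness criterion: what you actually use is this dense-range/analytic-vector argument rather than a specific numbered result, so state it as such.
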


\begin{lemm} \label{lemm-crea-ann-har-osc-1d}
  Let~$\mn{H}_{m,\omega}$ be as in lemma \ref{lemm-spec-har-osc-1d}. Define the operators
  \begin{equation}
    \mn{A}_{m,\omega}\defeq \frac{1}{\sqrt{2m\omega}}\Big( \frac{\dd}{\dd x}+m\omega x \Big),\quad\quad \mn{A}_{m,\omega}^{\dagger}\defeq \frac{1}{\sqrt{2m\omega}}\Big( -\frac{\dd}{\dd x}+m\omega x \Big),
    \label{eqn-1d-har-osc-crea-ann-expres}
  \end{equation}
  Then we have~$\mn{A}_{m,\omega}:\mss{E}_n\lto \mss{E}_{n-1}$,~$n\ge 1$,~$\mn{A}_{m,\omega}^{\dagger}:\mss{E}_{n}\lto \mss{E}_{n+1}$,~$\mss{E}_n$ the eigenspace of~$E_n$,
  \begin{align}
    [\mn{A}_{m,\omega},\mn{A}_{m,\omega}^{\dagger}]&=\one,\\
    \mn{H}_{m,\omega}&=\omega\Big( \mn{A}_{m,\omega}^{\dagger}\mn{A}_{m,\omega}+\frac{1}{2} \Big), \label{eqn-1d-har-osc-hamil-by-crea-ann}\\
    \mn{A}_{m,\omega}\Omega_0 &=0,
    \label{}
  \end{align}
  with~$\Omega_0$ given by (\ref{eqn-har-osc-ground-state-1d}), and that
  \begin{equation}
    \Omega_n\defeq \frac{1}{\sqrt{n!}}(\mn{A}_{m,\omega}^{\dagger})^n\Omega_0
    \label{eqn-harm-osc-nth-eigenfunc-def}
  \end{equation}
  is the normalized eigenfunction generating the eigenspace of~$E_n(\mn{H}_{m,\omega})$. Moreover,~$\{\Omega_n\}_{n=0}^{\infty}$ forms a complete orthonormal basis of~$L^2(\mb{R},\mathcal{L}_{\mb{R}})$.
\end{lemm}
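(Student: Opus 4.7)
The plan is to proceed by direct computation on the Schwartz core $\mathcal{S}(\mb{R})$, and to extract the completeness statement from Lemma \ref{lemm-spec-har-osc-1d}. First I would verify the canonical commutation relation $[\mn{A}_{m,\omega},\mn{A}_{m,\omega}^{\dagger}] = \one$ by expanding the composition using (\ref{eqn-1d-har-osc-crea-ann-expres}): the pure second-order terms $(\dd/\dd x)^2$ and the pure multiplication terms $m^2\omega^2 x^2$ cancel against each other, leaving only the cross terms, which collapse to $\frac{1}{2m\omega} \cdot m\omega \cdot ([\frac{\dd}{\dd x}, x] + [x, -\frac{\dd}{\dd x}]) = \one$ via the elementary identity $[\frac{\dd}{\dd x}, x] = \one$. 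Similarly a direct expansion of $\omega\bigl(\mn{A}_{m,\omega}^{\dagger}\mn{A}_{m,\omega} + \tfrac{1}{2}\bigr)$ produces the kinetic and potential terms of (\ref{eqn-def-1d-harm-osc-hamil}) together with a leftover $-\omega/2$ from reordering $x$ and $\dd/\dd x$, which is precisely cancelled by the $+\omega/2$; this establishes (\ref{eqn-1d-har-osc-hamil-by-crea-ann}).

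From the two identities above I would derive the ladder commutators $[\mn{H}_{m,\omega}, \mn{A}_{m,\omega}^{\dagger}] = \omega\, \mn{A}_{m,\omega}^{\dagger}$ and $[\mn{H}_{m,\omega}, \mn{A}_{m,\omega}] = -\omega\, \mn{A}_{m,\omega}$. These imply that whenever $\psi \in \mss{E}_n$ (which by Lemma \ref{lemm-spec-har-osc-1d} has eigenvalue $(n+\tfrac{1}{2})\omega$), the vector $\mn{A}_{m,\omega}^{\dagger}\psi$, if nonzero, is an eigenvector with eigenvalue $(n+\tfrac{3}{2})\omega$ and hence lies in $\mss{E}_{n+1}$; symmetrically $\mn{A}_{m,\omega}\psi \in \mss{E}_{n-1}$ for $n \geq 1$. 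That $\mn{A}_{m,\omega}\Omega_0 = 0$ follows by plugging (\ref{eqn-har-osc-ground-state-1d}) directly into the definition of $\mn{A}_{m,\omega}$: one has $\frac{\dd}{\dd x}\Omega_0(x) = -m\omega x\,\Omega_0(x)$, which exactly cancels the multiplication term $m\omega x\,\Omega_0(x)$.

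Given $\mn{A}_{m,\omega}\Omega_0 = 0$, an easy induction based on $[\mn{A}_{m,\omega}, (\mn{A}_{m,\omega}^{\dagger})^n] = n(\mn{A}_{m,\omega}^{\dagger})^{n-1}$ (itself proved inductively from $[\mn{A}_{m,\omega}, \mn{A}_{m,\omega}^{\dagger}] = \one$) yields
\begin{equation*}
  \mn{A}_{m,\omega}(\mn{A}_{m,\omega}^{\dagger})^n \Omega_0 = n\,(\mn{A}_{m,\omega}^{\dagger})^{n-1}\Omega_0.
\end{equation*}
Taking the inner product against $(\mn{A}_{m,\omega}^{\dagger})^{n-1}\Omega_0$ and using $\mn{A}_{m,\omega}^* = \mn{A}_{m,\omega}^{\dagger}$ (which is visible from (\ref{eqn-1d-har-osc-crea-ann-expres}) by integration by parts on $\mathcal{S}(\mb{R})$) gives $\|(\mn{A}_{m,\omega}^{\dagger})^n\Omega_0\|^2 = n\,\|(\mn{A}_{m,\omega}^{\dagger})^{n-1}\Omega_0\|^2$; iterating from $\|\Omega_0\| = 1$ produces $\|(\mn{A}_{m,\omega}^{\dagger})^n\Omega_0\|^2 = n!$, so $\Omega_n$ of (\ref{eqn-harm-osc-nth-eigenfunc-def}) is normalized, and by the raising property it lies in $\mss{E}_n$. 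Mutual orthogonality for distinct $n$ is automatic from the self-adjointness of $\mn{H}_{m,\omega}$ and the distinctness of its eigenvalues.

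For the final assertion of completeness I would invoke Lemma \ref{lemm-spec-har-osc-1d}, which already asserts that $\mathrm{Spec}(\mn{H}_{m,\omega}) = \{(n+\tfrac{1}{2})\omega\}_{n\in\mb{N}}$ with each eigenvalue simple: each $\mss{E}_n$ is then one-dimensional, necessarily spanned by $\Omega_n$, and the pure point spectrum of the self-adjoint $\mn{H}_{m,\omega}$ forces $\overline{\mathrm{span}}\{\Omega_n\}_{n\ge 0} = L^2(\mb{R},\mathcal{L}_{\mb{R}})$. The only genuinely non-algebraic step here is completeness, which is essentially pre-packaged in the previous lemma; without that input, the main obstacle would be to show directly that the Hermite functions span $L^2(\mb{R})$, which could be achieved via the Fourier transform being an isomorphism interchanging multiplication and differentiation (so that the span is stable under Fourier transform) together with the density of polynomials in the weighted $L^2$ space against the Gaussian measure.
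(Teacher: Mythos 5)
Your proof is correct, and it is the standard ladder-operator argument; the paper itself states this lemma without proof (it is a textbook fact, cf.\ \cite{Dimock2}), so there is no alternative route in the text to compare against. All the algebraic identities check out, the norm computation $\|(\mn{A}_{m,\omega}^{\dagger})^n\Omega_0\|^2=n!$ is legitimate since every vector involved is a Gaussian times a polynomial and hence Schwartz, and your reduction of completeness to lemma \ref{lemm-spec-har-osc-1d} is sound: once the spectrum is known to be exactly the discrete set $\{(n+\tfrac{1}{2})\omega\}$ of simple eigenvalues, the spectral theorem gives $\one=\sum_n E_{\mn{H}_{m,\omega}}(\{E_n\})$ and hence that the $\Omega_n$ span $L^2(\mb{R},\mathcal{L}_{\mb{R}})$.
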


 \begin{def7}[anticipating subsections \ref{sec-intro-fock-rep}, \ref{sec-intro-free-field-cyl-canon-quant}] \label{rem-necessary-crea-ann-har-osc-1d}
    Since knowing the spectral resolution of~$\mn{H}_{m,\omega}$ already allows us to describe explicitly the dynamics, of what use are the operators~$\mn{A}_{m,\omega}$,~$\mn{A}_{m,\omega}^{\dagger}$ in (\ref{eqn-1d-har-osc-crea-ann-expres})? While they could still be dispensed here, they become crucial in QFT in two respects. Firstly (\ref{eqn-harm-osc-nth-eigenfunc-def}) shows that the finite linear spans of the vectors~$\{(\mn{A}_{m,\omega}^{\dagger})^n\Omega_0\}$ are \textit{dense} in the Hilbert space; secondly we note that many important observables can actually be expressed in terms of~$\mn{A}$,~$\mn{A}^{\dagger}$, such as the Hamiltonian (\ref{eqn-1d-har-osc-hamil-by-crea-ann}), the \textsf{position} and \textsf{momentum}
    \begin{equation}
      \mn{X}=\frac{1}{\sqrt{2m\omega}}(\mn{A}_{m,\omega}+\mn{A}_{m,\omega}^{\dagger}),\quad\quad \mn{P}=-\ii\frac{\dd}{\dd x}=\ii\sqrt{\frac{m\omega}{2}}(\mn{A}_{m,\omega}^{\dagger}-\mn{A}_{m,\omega}).
      \label{eqn-1d-har-osc-pos-mom}
    \end{equation}
    In infinite degrees of freedom, we will have an infinite family of operators~$\{\mn{A}_k\}$ indexed by some~$k$, as well as their adjoints. In that case, the above two points combined will tell us that if we know just the vacuum state~$\Omega_0$ and how the~$\mn{A}_k$'s commute with each others (and each other's adjoints), then \textit{essentially} we know the dynamics of the whole system, since the Hamiltonian and the important observables are all expressed in terms of~$\mn{A}_k$,~$\mn{A}_k^{\dagger}$!
  \end{def7}

\begin{exxx}
    [1D free particle] \label{exam-1d-free-part} The \textsf{free Hamiltonian} with mass~$m>0$ is
    \begin{equation}
      \mn{H}_{m,0}\defeq -\frac{1}{2m}\frac{\dd^2}{\dd x^2}.
      \label{}
    \end{equation}
    In other words, it is the (positive) Laplacian on~$\mb{R}$ with a coefficient in front. We recall some of its properties:
  \begin{enumerate}[(i)]
    \item it is self-adjoint and nonnegative on~$L^2(\mb{R},\mathcal{L}_{\mb{R}})$ with domain being the Sobolev space~$W^2(\mb{R})\subset L^2(\mb{R})$ of order~$2$ (operator closure of~$\mn{H}_{m,0}|_{\mathcal{S}(\mb{R})}$);
    \item its spectrum is~$\sigma(\mn{H}_{m,0})=[0,+\infty)$;
      \item none of~$\lambda\in \sigma(\mn{H}_{m,0})$ is an eigenvalue.
  \end{enumerate}
  The spectral properties of~$\mn{H}_{m,0}$ are thus drastically different from those of~$\mn{H}_{m,\omega}$ with~$\omega>0$. For proofs see \cite{Lewin} section 2.7. Although~$\mn{H}_{m,0}$ has no eigenfunctions in~$L^2$, the functions~$\mathcal{E}_k(x):=\me^{\ii kx}$,~$k\in \mb{R}$, satisfy
  \begin{equation}
    \mn{H}_{m,0}\mathcal{E}_k=\frac{|k|^2}{2m}\mathcal{E}_k.
    \label{}
  \end{equation}
These functions could be interpreted as ``traveling waves''. Indeed, applying time evolution we find
  \begin{equation}
    (\me^{-\ii t\mn{H}_{m,0}}\mathcal{E}_k)(x)=\me^{\ii k(x-\frac{k}{2m}t)}=(\tau_{vt}\mathcal{E}_k)(x),
    \label{}
  \end{equation}
  where~$(\tau_y f)(x):=f(x-y)$ defines the \textsf{shift operator} and~$v:=k/2m$ is the \textsf{(phase) velocity}. Therefore, heuristically,~$\mathcal{E}_k$ represents a wave ``traveling to the right''. Likewise,~$\mathcal{E}_{-k}$ ``travels to the left''.
  \end{exxx}

  \begin{exxx}
    [2D harmonic oscillator] \label{exam-2d-har-osc} The 2-dimensional \textsf{harmonic oscillator Hamiltonian} with mass~$m>0$ and angular frequency~$\omega>0$ is
    \begin{equation}
      \mn{H}^{\mb{R}^2}_{m,\omega}\defeq -\frac{1}{2m}\Big(
      \frac{\partial^2}{\partial x^2}+\frac{\partial^2}{\partial y^2}\Big) +\frac{1}{2}m\omega^2(x^2+y^2).
      \label{}
    \end{equation}
    Now by separation of variables we find that it acts on~$L^2(\mb{R}^2,\mathcal{L}_{\mb{R}^2})=L^2(\mb{R},\mathcal{L}_{\mb{R}})\otimes L^2(\mb{R},\mathcal{L}_{\mb{R}})$, the latter being the completed (Hilbert) tensor product, by
    \begin{equation}
      \mn{H}^{\mb{R}^2}_{m,\omega}=\mn{H}_{m,\omega}\otimes \one_{L^2(\mb{R})}+\one_{L^2(\mb{R})}\otimes \mn{H}_{m,\omega}.
      \label{eqn-2d-har-osc-hamil-tens-prod}
    \end{equation}
    Its spectrum consists again only of eigenvalues $E_n(\mn{H}^{\mb{R}^2}_{m,\omega}):=(n+1)\omega$, $n\in\mb{N}$, 
  and one could simply check that $\Omega_{p,q}(x,y):= \Omega_p(x)\Omega_q(y)$,
  with~$\Omega_n$ given by (\ref{eqn-harm-osc-nth-eigenfunc-def}), is now an eigenfunction of~$\mn{H}^{\mb{R}^2}_{m,\omega}$ with eigenvalue~$(p+q+1)\omega$. Since~$\{\Omega_n\}_{n=0}^{\infty}$ is a complete orthonormal basis of~$L^2(\mb{R},\mathcal{L}_{\mb{R}})$, we know that~$\{\Omega_{p,q}\}_{p,q=0}^{\infty}$ is then a complete orthonormal basis of~$L^2(\mb{R}^2,\mathcal{L}_{\mb{R}^2})$. Moreover, we know that the eigenvalue~$E_n$ now has multiplicity~$d(n)=n+1$, which is the number of~$(p,q)$ with~$p+q=n$. The \textsf{ground state}
  \begin{equation}
    \Omega_{0,0}(x,y)=\left( \frac{m\omega}{\pi} \right)^{1/2}\me^{-\frac{1}{2}m\omega (x^2+y^2)}
    \label{eqn-har-osc-ground-state-2d}
  \end{equation}
  will be of particular importance. By separation of variables, we could define similarly as in the 1D case operators
\begin{equation}
  \mn{A}_{m,\omega}^{(x)}\defeq \frac{1}{\sqrt{2m\omega}}\Big( \frac{\partial}{\partial x}+m\omega x\Big),\quad\quad
\mn{A}_{m,\omega}^{(y)}\defeq \frac{1}{\sqrt{2m\omega}}\Big( \frac{\partial}{\partial y}+m\omega y\Big),
  \label{}
\end{equation}
as well as their adjoints, and we would find
\begin{align}
  [\mn{A}_{m,\omega}^{(i)}, \mn{A}_{m,\omega}^{(j)\dagger}]&=[\mn{A}_{m,\omega}^{(i)}, \mn{A}_{m,\omega}^{(i)}]=0,\quad\quad\textrm{for }i\ne j,\label{eqn-2d-har-osc-ccr-1}\\
  [\mn{A}_{m,\omega}^{(i)},\mn{A}_{m,\omega}^{(i)\dagger}]&=\one,
  \label{eqn-2d-har-osc-ccr-2}
\end{align}
where~$i$,~$j\in \{x,y\}$. In the 2D case, since~$\mb{R}^2\cong\mb{C}$, it turns out useful to also introduce the \textit{complex coordinate}~$z=x+\ii y$. The corresponding position operator is then~$\mn{Z}:f\mapsto (x+\ii y)f$. This is no longer self-adjoint, and~$\mn{Z}^{\dagger}:f\mapsto (x-\ii y)f$. The momentum operators are $\mn{P}_z=\frac{1}{2}(\mn{P}_x-\ii \mn{P}_y)=-\ii\partial_z$, $\mn{P}_z^{\dagger}=-\ii \partial_{\ol{z}}$, where $\partial_z$, $\partial_{\ol{z}}$ are the Wirtinger operators. We thus put accordingly
\begin{equation}
  \mn{A}_{m,\omega}^{(z)}\defeq \frac{1}{\sqrt{m\omega}}\Big( \frac{\partial}{\partial \ol{z}}+\frac{1}{2}m\omega z\Big)=\frac{1}{\sqrt{2}}\big(\mn{A}_{m,\omega}^{(x)}+\ii\mn{A}_{m,\omega}^{(y)}\big),
  \quad\quad
  \mn{A}_{m,\omega}^{(\ol{z})}\defeq \frac{1}{\sqrt{m\omega}}\Big( \frac{\partial}{\partial z}+\frac{1}{2}m\omega \ol{z}\Big).
  \label{}
\end{equation}
Note here that~$\mn{A}_{m,\omega}^{(\ol{z})}=\ol{\mn{A}}_{m,\omega}^{(z)}\ne \mn{A}_{m,\omega}^{(z)\dagger}$. One could check that~$\mn{A}_{m,\omega}^{(z)}$,~$\mn{A}_{m,\omega}^{(\ol{z})}$ and their adjoints would satisfy the same relations (\ref{eqn-2d-har-osc-ccr-1}), (\ref{eqn-2d-har-osc-ccr-2}) now with~$i$,~$j\in\{z,\ol{z}\}$.
  \end{exxx}

\subsection{Stochastic Processes and Invariant Measures}

  For this subsection, we refer to appendix \ref{sec-app-proba-background} for more background and intuitive discussions on basic probability theory. 
  Let $(X_t)_{t\ge 0}$ be a (real-valued) \textsf{stochastic process} indexed by~$\mb{R}_+=[0,+\infty)$ on the probability space~$(Q,\mathcal{O},\mb{P})$.
  That is, let each $X_t$, $t\ge 0$, be a random variable on $(Q,\mathcal{O},\mb{P})$.
    For example,~$X_t$ could be the position (in~$\mb{R}$) of a Brownian particle at time~$t$. 
    Given any collection~$\{Y_i\}_{i\in I}$ of random variables defined on~$(Q,\mathcal{O},\mb{P})$, they generate a sub-$\sigma$-algebra of~$\mathcal{O}$ which is denoted~$\sigma(Y_i|i\in I)$. 
For any sub-$\sigma$-algebra~$\mathcal{A}\subset \mathcal{O}$, one could take the \textsf{conditional expectation} of a random variable~$Y$ with respect to~$\mathcal{A}$ which is an $\mathcal{A}$-measurable random variable denoted~$\mb{E}[Y|\mathcal{A}]$. 
In this thesis, we will focus on the following class of stochastic processes.

\begin{deef}
  We say that a stochastic process~$(X_t)_{t\ge 0}$ on~$(Q,\mathcal{O},\mb{P})$ is \textsf{Markov} if
  \begin{equation}
    \mb{E}\big[f(X_t)\big|\sigma(X_r|r\le s) \big]=\mb{E}\big[f(X_t) \big|\sigma(X_s) \big],
    \label{eqn-cond-gen-markov-proces}
  \end{equation}
  for all~$s\le t$ and Borel measurable functions~$f:\mb{R}\lto\mb{R}$.
\end{deef}
We are interested in describing the evolution of~$X_t$ through time. 
For $s\le t$, we define the \textsf{transition kernel}~$P_{s\to t}:\mb{R}\times \mathcal{B}_{\mb{R}}\lto [0,1]$,~$\mathcal{B}_{\mb{R}}$ being the Borel~$\sigma$-algebra on~$\mb{R}$, so that intuitively,
\begin{equation}
   P_{s\to t}(x,A)\heueq \mb{P}(X_t\in A|X_s=x),
 \end{equation}
namely,~$P_{s\to t}(x,A)$ is the probability for~$X_t$ to land in~$A$ if~$X_s=x$ (\cite{Comets-Meyre-2020} pp.\ 112). This kernel moreover has the property that for any measurable function $h:\mb{R}\lto \mb{R}$,
\begin{equation}
   \int_{}^{}h(y)P_{s\to t}(X_s,\dd y) = \mb{E}[h(X_t)|\sigma(X_s)],
   \label{eqn-trans-kernel-measura-func-property}
\end{equation}
almost surely as random variables.\footnote{We remark that the transition kernel is an object that could be defined for two general random variables $X$ and $Y$ regardless of whether they come from a stochastic process, see \cite{Dudley} section 10.2.}
 \begin{exxx}\label{exam-transition-brownian}
  The transition kernel for Brownian motion (in~$\mb{R}$) is 
  \begin{equation}
    P_{+t}^{\mm{BM}}(x,\dd y)\defeq \frac{1}{(2\pi t)^{1/2}} \me^{-\frac{1}{2t}|x-y|^2}\cdot\dd\mathcal{L}_{\mb{R}}(y),
    \label{}
  \end{equation}
  where we denote by~$P^{\mm{BM}}_{+t}$ the law~$P^{\mm{BM}}_{s\to s+t}$ for any~$s\ge 0$.
\end{exxx}
For a Markov process, there is the slogan that \textit{given the present, the future is independent from the past}. Indeed, as one could see from the condition (\ref{eqn-cond-gen-markov-proces}), regarding the future of any particular instant~$s$, knowing the whole history up-to~$s$ amounts to knowing just~$X_s$.
This has an important implication which we state as follows.

\begin{lemm}
    [Chapman-Kolmogorov]
    Suppose we have a Markov process~$(X_t)_{t\ge 0}$ on~$(Q,\mathcal{O},\mb{P})$ and let~$P_{s\to t}$ denote the transition kernels as above. Then for every~$r\le s\le t$, every~$x\in\mb{R}$ and Borel set~$A\subset\mb{R}$ we have
  \begin{equation}
    P_{r\to t}(x,A)=\int_{}^{} P_{s\to t}(y,A)P_{r\to s}(x,\dd y).
    \label{eqn-chap-kol}
  \end{equation}
\end{lemm}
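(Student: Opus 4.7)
The plan is to prove the identity by combining the characterization (\ref{eqn-trans-kernel-measura-func-property}) of transition kernels with the Markov property and the tower property of conditional expectations. The essential slogan is that ``going from $r$ to $t$ equals going from $r$ to $s$ then from $s$ to $t$,'' a fact that is almost tautological once one expresses all three transition probabilities as conditional expectations of $\one_A(X_t)$ with respect to nested $\sigma$-algebras.

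Concretely, I would proceed in the following steps. First, apply (\ref{eqn-trans-kernel-measura-func-property}) at times $r\le s$ with the measurable test function $h(y)\defeq P_{s\to t}(y,A)$ (which is Borel-measurable in $y$ because $P_{s\to t}$ is assumed to be a kernel, i.e.\ jointly measurable in its two entries). This yields
\begin{equation}
  \int_{\mb{R}}^{} P_{s\to t}(y,A)\, P_{r\to s}(X_r,\dd y)
  = \mb{E}\big[P_{s\to t}(X_s,A)\,\big|\,\sigma(X_r)\big]\quad\textrm{a.s.}
  \label{}
\end{equation}
Next, apply (\ref{eqn-trans-kernel-measura-func-property}) at times $s\le t$ with $h=\one_A$ to rewrite $P_{s\to t}(X_s,A)=\mb{E}[\one_A(X_t)\,|\,\sigma(X_s)]$ almost surely. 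Then invoke the Markov property in the form (\ref{eqn-cond-gen-markov-proces}) to upgrade the conditioning from $\sigma(X_s)$ to the full past $\sigma(X_u\,|\,u\le s)$, and use the tower property for the inclusion $\sigma(X_r)\subset \sigma(X_u\,|\,u\le s)$ (valid since $r\le s$) to obtain
\begin{equation}
  \mb{E}\big[P_{s\to t}(X_s,A)\,\big|\,\sigma(X_r)\big]
  = \mb{E}\big[\one_A(X_t)\,\big|\,\sigma(X_r)\big]
  = P_{r\to t}(X_r,A)\quad\textrm{a.s.},
  \label{}
\end{equation}
where the last equality applies (\ref{eqn-trans-kernel-measura-func-property}) a third time, now between $r$ and $t$ with $h=\one_A$. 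Combining the two displays gives the Chapman-Kolmogorov identity evaluated at the random point $X_r$.

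The main subtlety, and the only genuine obstacle, is the passage from a.s.\ equality of $\sigma(X_r)$-measurable functions to the pointwise identity (\ref{eqn-chap-kol}) that is claimed ``for every~$x\in\mb{R}$''. What the argument above strictly delivers is that the two Borel functions
\[
  x\longmapsto P_{r\to t}(x,A)\quad\textrm{and}\quad
  x\longmapsto \int_{\mb{R}}^{} P_{s\to t}(y,A)\,P_{r\to s}(x,\dd y)
\]
agree for $(\mb{P}\circ X_r^{-1})$-almost every $x$. To reach the stated ``for every $x$'' form one either assumes that the transition kernels have been chosen as a regular (or Feller) version, in which case both sides are continuous in $x$ and agreement on a dense set propagates everywhere, or—more commonly in constructive treatments—one \emph{defines} the transition kernel so that (\ref{eqn-chap-kol}) holds by fiat, using the above calculation as the consistency check. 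I would adopt the latter convention and state the identity modulo null sets, with a remark indicating when a canonical pointwise version exists.
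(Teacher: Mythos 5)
Your proof is correct and follows essentially the same route as the paper's: both combine the defining property (\ref{eqn-trans-kernel-measura-func-property}) of the transition kernel with the Markov property and the tower property over the inclusion $\sigma(X_r)\subset\sigma(X_u\,|\,u\le s)$, merely running the chain of equalities in the opposite direction. Your closing caveat that the argument delivers the identity only for $(\mb{P}\circ X_r^{-1})$-almost every $x$ is accurate, and is a point the paper's own proof silently glosses over as well.
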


\begin{proof}
  Indeed, note that~$\sigma(X_r)\subset \sigma(X_{r_1}|r_1\le r)\subset \sigma(X_{r_1}|r_1\le s)$. Therefore
  \begin{align*}
    \mb{E}[1_A(X_t)|\sigma(X_r)]&=\mb{E}\big[~\mb{E}\big[1_A(X_t)\big|\sigma(X_{r_1}|r_1\le s)\big]~\big|~\sigma(X_r)\big] \tag{law of total probability, lemma \ref{lemm-total-prob}} \\
    &=\mb{E}\big[~\mb{E}\big[1_A(X_t)\big| \sigma(X_s)\big]~\big|~\sigma(X_r)\big] \tag{Markov property} \\
    &=\mb{E} \big[P_{s\to t}(X_s,A)\big| \sigma(X_r)\big] \tag{definition of~$P_{s\to t}$} \\
    &=\int_{}^{} P_{s\to t}(y,A)P_{r\to s}(X_r,\dd y), \tag{property (\ref{eqn-trans-kernel-measura-func-property}) of~$P_{r\to s}$}
  \end{align*}
  giving the result.
\end{proof}

\begin{deef}
  A Markov process is called \textsf{homogeneous} if the transition kernel~$P_{s\to t}$ depends only on~$t-s$.
\end{deef}

Like the case of Brownian motion, if the Markov process~$(X_t)_{t\ge 0}$ is homogeneous, it makes sense then to define the transition kernel~$P_{+t}$ to be~$P_{s\to s+t}$ for any~$s\ge 0$. Now we are ready to introduce two important evolution operators.

\begin{deef}
  Let~$(X_t)_{t\ge 0}$ be a (real-valued) stationary Markov process and denote by~$P_{+t}$ its transition kernel. 
  \begin{enumerate}[(i)]
    \item Define the \textsf{Feller operator} (or \textsf{backward Kolmogorov}), for~$f\in L^{\infty}(\mb{R})$, by
    \begin{equation}
        (P_t^*f)(\cdot)\defeq \mb{E}\big[f(X_{r+t})\big|X_r=(\cdot)\big]=\ddp \int_{}^{}f(y)P_{+t}(\cdot,\dd y),\quad\quad \textrm{for any }r\ge 0.
    \end{equation}
    \item Define the \textsf{Fokker-Planck operator} (or \textsf{forward Kolmogorov}), for a Borel probability measure~$\mu\in \mathcal{M}_{\mm{P}}(\mb{R})$ on~$\mb{R}$, by
    \begin{equation}
      ({P_t}_* \mu)(A)\defeq \int_{}^{}P_{+t}(x,A)\dd\mu(x),
      \label{}
    \end{equation}
    namely $P_{t*}\mu$ is the law of $X_t$ when $\mu$ is the law of $X_0$.
  \end{enumerate}
\end{deef}

The Chapman-Kolmogorov lemma now translates precisely into saying that both~$(P_t^*)_{t\ge 0}$,~$(P_{t*})_{t\ge 0}$ form semi-groups:\begin{equation}
  P_{s+t}^*=P_t^*\circ P_s^*,\quad\quad\textrm{and}\quad\quad P_{s+t*}=P_{t*}\circ P_{s*},\quad\quad\textrm{for all }t,~s\ge 0.
  \label{}
\end{equation}
One also has straightforwardly~$P_0^*=\one_{L^{\infty}(\mb{R})}$ and~$P_{0*}=\one_{\mathcal{M}_{\mm{P}}(\mb{R})}$.
Before specializing to the main example of this subsection, we point out one last lemma which results from an iterated application of the proof of the Chapman-Kolmogorov lemma.

\begin{lemm}
  [multiple correlation] \label{lemm-multiple-corr-intro} Let~$(X_t)_{t\ge 0}$ be a stationary Markov process and denote by~$(P_t^*)_{t\ge 0}$ its Feller semi-group. Then we have, for any sequence of times~$0<t_1<t_2<\cdots<t_n$, and functions $f_j\in L^{\infty}(\mb{R})$, $0\le j\le n$,
  \begin{align}
  \mb{E}\big[ f_1(X_{t_1})\cdots f_n(X_{t_n}) \big|X_0=x\big]&=(P_{t_1}^*f_1\cdots P^*_{t_{n-1}-t_{n-2}}f_{n-1} P^*_{t_n-t_{n-1}}f_n )(x),\quad\textrm{and} \\
    \mb{E}\big[ f_0(X_0)f_1(X_{t_1})\cdots f_n(X_{t_n}) \big]&=\bank{f_0, P_{t_1}^*f_1\cdots P^*_{t_{n-1}-t_{n-2}}f_{n-1} P^*_{t_n-t_{n-1}}f_n }_{L^2(\gamma_0)}.
    \label{eqn-mult-cor-fey-kac}
  \end{align}
  Here~$\gamma_0$ denotes the law of~$X_0$.
\end{lemm}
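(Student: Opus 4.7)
The plan is to prove both identities by induction on $n$, in direct imitation of the Chapman-Kolmogorov argument just given. For the base case $n=1$ of the first identity, the formula reduces to the very definition of the Feller operator $P^{*}_{t_1}$ evaluated at $X_0=x$; the second identity for $n=0$ is simply $\mb{E}[f_0(X_0)] = \int f_0\,\dd\gamma_0$.

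For the inductive step, I would condition on the larger $\sigma$-algebra $\mathcal{F}_{t_{n-1}}\defeq \sigma(X_r\mid r\le t_{n-1})$ and use the tower property together with the fact that $f_1(X_{t_1}),\ldots,f_{n-1}(X_{t_{n-1}})$ are $\mathcal{F}_{t_{n-1}}$-measurable, so that they factor out of the inner conditional expectation:
\begin{equation*}
\mb{E}\big[f_1(X_{t_1})\cdots f_n(X_{t_n})\,\big|\,X_0=x\big] = \mb{E}\big[f_1(X_{t_1})\cdots f_{n-1}(X_{t_{n-1}})\,\mb{E}[f_n(X_{t_n})\mid \mathcal{F}_{t_{n-1}}]\,\big|\,X_0=x\big].
\end{equation*}
The Markov property together with homogeneity then collapses the inner conditional expectation to $(P^{*}_{t_n-t_{n-1}}f_n)(X_{t_{n-1}})$, exactly as in the Chapman-Kolmogorov proof. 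The $n$-fold correlation is thereby reduced to an $(n-1)$-fold correlation in which $f_{n-1}$ is replaced by $\tilde{f}_{n-1}\defeq f_{n-1}\cdot P^{*}_{t_n-t_{n-1}}f_n$, which remains bounded because each Feller operator is an $L^{\infty}$-contraction ($\nrm{P^*_tg}_{\infty}\le \nrm{g}_{\infty}$ for a Markov kernel). Applying the induction hypothesis yields the nested expression of the statement, read from right to left as alternating multiplications by $f_j$ and applications of $P^{*}_{t_{j+1}-t_j}$.

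The second identity follows from the first by integrating both sides against the law $\gamma_0$ of $X_0$, pulling $f_0(X_0) = f_0(x)$ outside the conditional expectation:
\begin{equation*}
\mb{E}\big[f_0(X_0)f_1(X_{t_1})\cdots f_n(X_{t_n})\big] = \int_{\mb{R}} f_0(x)\cdot \big(P_{t_1}^{*}f_1\cdots P^{*}_{t_n-t_{n-1}}f_n\big)(x)\,\dd\gamma_0(x),
\end{equation*}
which is precisely the $L^2(\gamma_0)$-pairing stated. No genuine obstacle appears in the argument: the proof is essentially bookkeeping, and the only point requiring minor vigilance is the stability of the iteration in $L^{\infty}$, needed to ensure each successive Feller operator may be legitimately applied.
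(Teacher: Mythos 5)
Your proposal is correct and follows essentially the same route as the paper, which proves the lemma by "an iterated application of the proof of the Chapman-Kolmogorov lemma" — i.e.\ exactly your induction via the tower property, the Markov property, and the defining property of the transition kernel, with the $L^{\infty}$-contractivity remark justifying the iteration. The passage to the second identity by integrating against $\gamma_0$ is likewise the intended argument.
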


\begin{exxx}
  [Ornstein-Uhlenbeck process]\label{exam-orns-uhl} We consider the process~$(X_t)_{t\ge 0}$ which solves the \textsf{stochastic differential equation} (SDE), called the \textsf{Langevin equation},
  \begin{equation}
    \frac{\dd X_t}{\dd t}=-X_t +\xi_t,\quad\quad \textrm{say, with }X_0\defeq \zeta\textrm{ deterministic},
    \label{eqn-langevin-orn-uhl-1d-simple}
  \end{equation}
  where~$\xi$ is a random distribution on~$\mb{R}$ called the \textsf{Gaussian white noise}. 
  Informally,~$\xi$ can be thought of as the velocity~$\dd B_t/\dd t$ of a standard Brownian motion~$(B_t)_{t\ge 0}$.\footnote{Although the law of~$B_t$ itself evolves with time and depends on an initial condition,~$\xi$ is ``uniformly'' random through~$\mb{R}$ (against the Lebesgue measure).} Note also that $X_t$ is the value of the gradient~$\nabla(\frac{1}{2}x^2)$ at~$x=X_t$. Thus (\ref{eqn-langevin-orn-uhl-1d-simple}) also describes a \textsf{noisy gradient flow}. One could verify for (\ref{eqn-langevin-orn-uhl-1d-simple}) that the solution is given (up-to negligible modification) by the \textsf{Duhamel formula}
\begin{equation}
  X_t=\zeta\me^{-t}+\int_{0}^{t}\me^{-(t-s)}\,\dd B_s,\quad\quad \textrm{for all }t\ge 0.
  \label{}
\end{equation}
Here the latter integral, which could also be written as~$\int_{0}^{t}\me^{-(t-s)}\xi(s)\,\dd s$, is interpreted as a Wiener integral. From this formula we deduce
\begin{equation}
  X_t\quad\xlongrightarrow[t\to +\infty]{\mm{law}} \quad\mathcal{N}(0,\tts\frac{1}{2}).
  \label{eqn-orns-uhlen-equil-measure}
\end{equation}
Now we could also see that if we randomize the initial condition~$\zeta$ to have law~$\mathcal{N}(0,\frac{1}{2})=:\gamma_{1,1}$, then the law of~$X_t$ remains~$\gamma_{1,1}$ for all~$t\ge 0$. In other words,~$P_{t*}\gamma_{1,1}\equiv \gamma_{1,1}$ for all~$t\ge 0$. When this happens $(X_t)_{t\ge 0}$ is called the \textsf{Ornstein-Uhlenbeck process} and we say that~$\gamma_{1,1}$ is an \textsf{invariant measure} of the equation (\ref{eqn-langevin-orn-uhl-1d-simple}). Equivalently, $(X_t)_{t\ge 0}$ is also characterized as the unique \textsf{Gaussian process} (each $X_t$ is Gaussian) satisfying
\begin{equation}
  \mb{E}[X_t]\equiv 0,\quad\quad \mb{E}[X_s X_t]=\frac{1}{2}\me^{-|t-s|},\quad\quad\textrm{for all }t,s\ge 0.
  \label{eqn-orns-uhl-cov-1d}
\end{equation}
By the definition of $P_t^*$ we have, for~$f$,~$h\in L^{\infty}(\mb{R})$,
\begin{equation}
  \mb{E}[f(X_0)h(X_t)]=\ank{f,P_t^* h}_{L^2(\gamma_{1,1})}.
  \label{eqn-orns-uhl-feynman-kac-1d-basic}
\end{equation}
At this stage a coincidence manifests itself with quantum mechanics. We note that
\begin{equation}
  \dd \gamma_{1,1}(x)=|\Omega_0(x)|^2\,\dd \mathcal{L}_{\mb{R}}(x),
  \label{eqn-orns-uhl-ground-state-gauss-meas}
\end{equation}
where we have taken~$\Omega_0$ to be the ground state of the quantum harmonic oscillator (\ref{eqn-har-osc-ground-state-1d}) with mass~$m=1$ and angular frequency~$\omega=1$. We could therefore rewrite the inner product on the r.h.s.\ of (\ref{eqn-orns-uhl-feynman-kac-1d-basic}) against the Lebesgue measure as
\begin{equation}
  \bank{f,P_t^* h}_{L^2(\gamma_{1,1})}=\bank{f\Omega_0,\tilde{P}_t^* (h\Omega_0)}_{L^2(\mathcal{L}_{\mb{R}})},
  \label{}
\end{equation}
where~$\tilde{P}_t^*:=\Omega_0 P_t^* \Omega_0^{-1}$ with~$\Omega_0$ interpreted as a multiplication operator. The unitary transformation
  \begin{equation}
    \left.
    \begin{array}{rcl}
       \cdot \Omega_0:L^2(\mb{R},\gamma_{1,1})&\lto& L^2(\mb{R},\mathcal{L}_{\mb{R}}),\\
       f&\longmapsto & f\Omega_0,
    \end{array}
    \right.
    \label{eqn-def-ground-state-trans}
  \end{equation}
  is usually called the \textsf{ground state transform}. The \textsf{generator} of the semi-group~$(P_t^*)_{t\ge 0}$ can be computed (usually by It\^o's formula \cite{Comets-Meyre-2020} theorem 6.5) for~$f\in C_c^{\infty}(\mb{R})\subset L^{\infty}(\mb{R})$ to be
\begin{equation}
  Lf\defeq \lim_{t\to 0^+}\frac{1}{t}[(P_t^* f)-f]=\Big( \frac{1}{2}\frac{\dd^2}{\dd x^2} -x\frac{\dd}{\dd x}\Big)f.
  \label{}
\end{equation}
Accordingly, the generator of~$(\tilde{P}_t^*)_{t\ge 0}$ is exactly~$\Omega_0 L \Omega_0^{-1}$ and by computation we find,
\begin{equation}
  \Omega_0 L \Omega_0^{-1}=-\mn{H}_{1,1}+\frac{1}{2}=-\mn{H}_{1,1}+E_0(\mn{H}_{1,1}),
  \label{}
\end{equation}
where~$\mn{H}_{1,1}$ is defined as in (\ref{eqn-def-1d-harm-osc-hamil}). Therefore (\ref{eqn-orns-uhl-feynman-kac-1d-basic}) can be rewritten again as
\begin{equation}
  \boxed{\mb{E}\big[f(X_0)h(X_t)\big]=\bank{f\Omega_0,\me^{-t(\mn{H}_{1,1}-E_0(\mn{H}_{1,1}))} (h\Omega_0)}_{L^2(\mathcal{L}_{\mb{R}})}.}
  \label{eqn-funda-relation-har-osc-orn-uhl}
\end{equation}
We find here a precise relation between the Ornstein-Uhlenbeck process~$(X_t)_{t\ge 0}$ and the quantum harmonic oscillator. This relation is going to be exploited again and again throughout the thesis. Finally, we remark that conjugation by the ground state is the analogue on~$\mb{R}$ of the \textsf{Witten deformation} by a Morse function on a general manifold. Indeed, since by Morse's lemma any Morse function~$\varphi$ is quadratic in coordinates near a critical point,~$\me^{-\varphi}$ is then approximately (square-root of) a ``Gaussian density'' (assuming the Hessian positive), and conjugation by~$\me^{-\varphi}$ an ``approximate'' ground state transform. 
\end{exxx}

\subsection{Statistical Mechanics and Equilibria}\label{sec-intro-stat-mec}
If we drip a few drops of dye into a glass of water, and wait, we see that the dye ``diffuses'' and after a while the whole glass becomes evenly colored. Why? One naive attempt is to assume that water and dye consist in total of~$N\gg 1$ molecules obeying \textit{classical laws of motion}, write down the Newtonian equations on the configuration space~$\mb{R}^{3N}$ (or Hamilton's equations on the phase space~$\mb{R}^{6N}$), and solve for the complete trajectories. However, usually~$N\sim 10^{23}$ and it's practically impossible to solve a system of ODEs with~$10^{23}$ mutually coupled components. In fact, since the molecules are so small and so many, it is simply impossible to decide which state in~$\mb{R}^{6N}$ the system is at, for any given instant. It has proved more fruitful, on the other hand, to consider the motion of the dye particles to be \textit{random}. 
Indeed, it turns out that the diffusion of dye in water could be modeled\footnote{We need to assume, for example, that the dye is \textit{dilute} and no \textit{convection} takes place.} by a stochastic process similar to the \textit{Brownian motion} mentioned in the previous subsection and, therefore, is now very well-understood (\cite{Lavenda-Scientific-American}, \cite[section 2.2]{Sethna-2021}).

In the case above, the collection of all the classical states of the individual particles in question (a point in~$\mb{R}^{6N}$) gives what is usually called a \textsf{microstate} of the system. The moral of the above story, and of statistical mechanics, on studying large (macroscopic) systems is thus: one should give up having complete deterministic descriptions of the system in terms of particular microstates, but instead work with \textit{probability measures} over the set of all microstates \cite[pp.\ 19]{FV}. Moreover, there might be a particular measure that would describe the given system when it has ``reached equilibrium''.\footnote{Heuristically and informally, say for a box of gas or a glass of liquid, this corresponds to the ``state of affairs'' when one has isolated the system and waited ``long enough'' ($t\to +\infty$). It could be compared to a limit measure such as (\ref{eqn-orns-uhlen-equil-measure}). Precise physical interpretations, however, are irrelevant for us.} This is usually called an \textsf{equilibrium measure} or \textsf{Gibbs measure} (or \textsf{Gibbs ensemble}). \textsf{Equilibrium statistical mechanics} focus on the study of these measures \cite[pp.\ 1]{FV}.

\begin{exxx}[Ising Model]
One natural phenomenon that the above line of thought has proved particularly successful in explaining is that of \textit{magnets}. 
The model starts by assuming that a (solid) magnet consists of ``dipoles'' located at the vertices of a finite lattice\footnote{Assume also that we have normalized the spacing between nearby atoms to be~$1$.}~$\Lambda\subset \mb{Z}^d$, which are themselves ``small magnets'' each having two possible orientations denoted~$+1$ and~$-1$ called the \textsf{spin}. 
Here a \textsf{microstate} of the system will be specified by the knowledge of the individual spin values at all vertices,
  namely a function~$\phi:\Lambda\lto \{\pm 1\}$, also called a \textsf{configuration}. It turns out that the \textsf{Gibbs measure}, now a probability measure on the \textsf{configuration space} which is the space of maps~$\{\pm 1\}^{\Lambda}$, is given by
  \begin{equation}
    \dd\mu_{\Lambda,\mss{H}}(\phi)\defeq \frac{1}{Z_{\Lambda,\mss{H}}}\me^{-\mss{H}_{\mn{b}}(\phi)}\prod_{x\in\Lambda}\dd \mu_x(\phi(x)),
    \label{eqn-def-lattice-gibbs-measure}
  \end{equation}
  where each~$\mu_x$ is taken as the uniform measure on~$\{\pm 1\}$ with~$\mu_x(\{+1\})=\mu_x(\{-1\})=\frac{1}{2}$,~$\mss{H}_{\mn{b}}$ a specific functional on~$\{\pm 1\}^{\Lambda}$ called the \textsf{Hamiltonian}\footnote{The specific Hamiltonian for the Ising model is written on \cite[pp.\ 42]{FV}. One can also add \textit{boundary conditions} which are discussed in \cite[section 3.1]{FV}.} giving the ``energy'' of a configuration, with several parameters denoted~$\mn{b}$ (e.g.\ the inverse temperature), and~$Z_{\Lambda,\mss{H}}$ a normalization constant making~$\mu_{\Lambda,\mss{H}}$ a probability measure, called the \textsf{partition function}. 
  It is important that~$\mss{H}_{\mm{b}}$ contains joint expressions of spins over different sites such as~$\phi(x)\phi(y)$ for some~$x\ne y$, so that (\ref{eqn-def-lattice-gibbs-measure}) is not merely a product measure. This captures the fact that the spins interact. 
  Having defined~$\mu_{\Lambda,\mss{H}}$, the macroscopic properties of this magnet in equilibrium is then reflected in the expected values of various random variables (functionals) defined on~$\{\pm 1\}^{\Lambda}$. 
  One natural and useful functional is the \textsf{magnetization density}
\begin{equation}
  \mss{M}_{\Lambda}(\phi)\defeq \frac{1}{\#\Lambda}\sum_{x\in \Lambda}\phi(x).
  \label{}
\end{equation}
This measures whether the orientations of the individual atoms add up to an overall orientation of the whole system. If it happens that for certain values of parameters in $\mn{b}$ (such as external magnetic field) we have
\begin{equation}
    \mb{E}_{\mu}[\mss{M}_{\Lambda}]>0, \quad\quad (\textrm{or }<0)
\end{equation}
then it would be reasonable to say that the magnet is ``magnetized''.
\end{exxx}

The recipe given in the form of (\ref{eqn-def-lattice-gibbs-measure}) is in fact generic and gives the Gibbs measure for a wide class of lattice spin models. 
For other models the spin at each vertex may value in spaces other than~$\{\pm 1\}$, such as~$\mb{R}$ or~$\mb{S}^k$. 
Accordingly the \textsf{single spin measures}~$\mu_x$ are adjusted to be the \textit{uniform} measures on these spaces. For example, the Lebesgue measure~$\mathcal{L}_{\mb{R}}$ on~$\mb{R}$. The physics of the model is specified by the  Hamiltonian~$\mss{H}_{\mn{b}}$. 
The general intuition behind (\ref{eqn-def-lattice-gibbs-measure}) is that the measure would favor ``low energy'' configurations and penalize ``high energy'' ones.
In particular, if the measure favors \textit{alignment} of nearby spins, then we call the model \textsf{ferromagnetic} \cite[pp.\ 42]{FV}.\footnote{Although this does not imply that the model \textit{behaves} like a ferromagnet.} These include both the (ordinary) Ising model and the discrete GFF introduced below.
Generally, the products of spin values give an important family of functionals whose expected values
\begin{equation}
  \mb{E}_{\mu}[\phi(x_1)\cdots\phi(x_n)],\quad\quad x_i\in\Lambda,~x_i\ne x_j,~i\ne j
  \label{}
\end{equation}
are called \textsf{correlation functions} or~$n$\textsf{-point functions}.

\begin{exxx}[discrete GFF]\label{exam-intro-discre-gff}
    Another important model is the \textsf{discrete (massive) Gaussian Free Field (GFF)} with~$\phi(x)\in\mb{R}$, and with Hamiltonian
\begin{equation}
  \mss{H}_{\beta,m}^{\mm{GFF}}(\phi)\defeq \frac{\beta}{4d}\sum_{\substack{x,y\in \Lambda \\ x\sim y}} |\phi(x)-\phi(y)|^2 +\frac{m^2}{2}\sum_{x\in \Lambda}\phi(x)^2,
  \label{}
\end{equation}
where~$\beta>0$ is the \textsf{inverse temperature},~$m>0$ is the \textsf{mass},~$\Lambda\subset \mb{Z}^d$ finite, and~$x\sim y$ means~$x$ and~$y$ are nearest neighbors. In this case~$\mu_x\equiv \mathcal{L}_{\mb{R}}$.
\end{exxx}

We end this section with an example which gives a good hint on the relationship between statistical mechanics and (Euclidean) quantum field theory.

\begin{exxx}
  [1d Spring system and Brownian motion] Let~$(B_t)_{t\ge 0}$ be the standard Brownian motion on~$\mb{R}$ starting from~$B_0\equiv 0$. We start by recalling the standard \textit{cylindrical measure} formula for~$(B_t)_{t\ge 0}$, which can be viewed as a special case of lemma \ref{lemm-multiple-corr-intro}:
  \begin{equation}
    \mb{P}(B_{t_1}\in A_1,\cdots,B_{t_n}\in A_n)=\int_{A_1\times\cdots\times A_n}^{}\frac{\me^{-\phi_1^2/2t_1}}{\sqrt{2\pi t_1}}\prod_{j=1}^{n-1}\bigg[ \frac{\exp\big(-\frac{(\phi_{j+1}-\phi_j)^2}{2(t_{j+1}-t_j)}\big)}{\sqrt{2\pi (t_{j+1}-t_j)}} \bigg]\,\dd \phi_1\,\dd \phi_2\cdots \dd \phi_n,
    \label{eqn-brownian-cyl-measure}
  \end{equation}
  where~$0<t_1<\cdots <t_n$ and~$A_j\subset \mb{R}$. For simplicity, we let~$t_j:=j\in\mb{N}$. Formula (\ref{eqn-brownian-cyl-measure}) then tells us that the joint distribution of the tuple~$(B_{t_1},\dots,B_{t_n})$ is
  \begin{equation}
    \dd \mu_{\Lambda_n,\mm{Br}}(\phi)=\frac{1}{(2\pi)^{n/2}}\me^{-\frac{1}{2}\sum_{j=1}^n (\phi_j-\phi_{j-1})^2}\prod_{j=1}^n \dd \mathcal{L}_{\mb{R}}(\phi_j),
    \label{}
  \end{equation}
  with~$\phi_0\equiv 0$ and setting $\phi_j:=B_{t_j}$. This measure is otherwise viewed as a Gibbs measure of a lattice spin system on the 1d finite lattice~$\Lambda_n=\{1,\dots,n\}$, with spin values in~$\mb{R}$, and Hamiltonian
  \begin{equation}
    \mss{H}_{\mm{Br}}(\phi)\defeq \frac{1}{2}\sum_{j=1}^n (\phi_j-\phi_{j-1})^2
    \label{}
  \end{equation}
  which is akin to that of the discrete GFF mentioned above. Moreover we have now a ``boundary condition'' $\phi_0=0$. This means that if we are only interested in the position of the Brownian motion at the time points~$t_j$, then we would do equally well by simulating it on a discrete system consisting of a ``comb'' with beads on it connected by springs (see figure \ref{fig-intro-stat-mec-bead}).
\end{exxx}

\begin{figure}[t]
    \centering
    \includegraphics[width=0.6\linewidth]{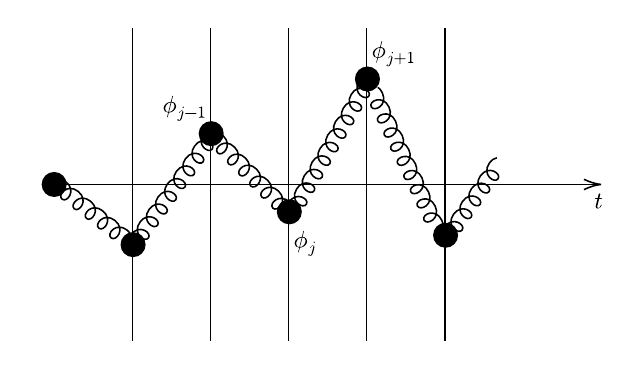}
    \caption{Spring system for Brownian Motion}
    \label{fig-intro-stat-mec-bead}
\end{figure}

The question thus arises: if one is allowed to ``subdivide'' the lattice finer and finer, does it make sense to say that the behavior of the random configuration (discrete field) under a sequence of Gibbs measures would actually ``converge'' to that of a random \textit{field} defined on a continuum domain~$D\subset \mb{R}^d$, with a certain probability law? The answer is yes in the above example as one already has the well-defined Wiener measure. But the question is highly difficult in general.
In fact, in many cases this would prove more difficult than constructing \textit{random fields} directly, which we will do later in the thesis.
Nevertheless, random fields and lattice spin models share many phenomena in common, and they benefit from each other in understanding.

\subsection{Path Integrals: Feynman-Hibbs versus Feynman-Kac}\label{sec-intro-path-integral}

So far we have encountered two points of view on describing the time evolution of systems: the ``Hamiltonian'' point of view which focuses on ``states'' of the system at individual instants, and the ``Lagrangian'' point of view which considers the whole \textit{trajectory} over a period of time as a single object. In probability theory (stochastic processes), these two points of view are in harmony.  For example, for the Brownian motion, the \textsf{Wiener measure}~$\mb{P}^{\mm{W}}$ is a single probability measure defined on the space~$C([0,\infty),\mb{R})$ of \textit{all possible trajectories}, and the random variable~$X_t$ at the instant~$t$ is realized as the evaluation map sending~$\omega\in C([0,\infty),\mb{R})$ to $ \omega(t) \in \mathbb{R}$. The law of~$X_t$ is accordingly the push-forward of~$\mb{P}^{\mm{W}}$ under this evaluation map.
  So far, we have been describing quantum mechanics from the Hamiltonian viewpoint and the Lagrangian description is still missing. Now we fill this gap by introducing the \textsf{path integral} approach. 

  \begin{exxx}[Feynman and Hibbs \cite{Feynman-Hibbs-Styer2010}] \label{exam-feynman-hibbs}
    We start with the 1D free particle introduced in example \ref{exam-1d-free-part}. We set~$m=1$ so that~$\mn{H}_{1,0}=-\frac{1}{2}\Delta_{\mb{R}}$. Using Fourier analysis, we express the unitary evolution~$\me^{-\ii t\mn{H}_{1,0}}$ as an integral operator:
    \begin{equation}
      (\me^{-\ii t\mn{H}_{1,0}}\psi)(x)=\int_{\mb{R}}^{}K_t^0(x,y)\psi(y)\,\dd y,
      \quad \textrm{with }\quad
      K_t^0(x,y)\defeq \frac{1}{\sqrt{2\pi\ii t}}\me^{\frac{\ii}{2t}|x-y|^2},
      \label{}
    \end{equation}
    called the \textsf{free propagator}. We note that~$(\me^{-\ii t \mn{H}_{1,0}})_{t\in\mb{R}}$ is a 1-parameter group which means
    \begin{equation}
      K_{t+s}^0(x,y)=\int_{\mb{R}}^{}K_t^0(x,z)K_s^0(z,y)\,\dd z,\quad\quad\textrm{for all }t,s\in\mb{R}.
      \label{eqn-composition-free-propa-1d}
    \end{equation}
    In particular, given a time interval~$[0,T]$ we divide it into~$N$ equal subintervals and write~$\me^{-\ii T \mn{H}_{1,0}}=(\me^{-\ii \frac{t}{N} \mn{H}_{1,0}})^N$. Then by the composition property (\ref{eqn-composition-free-propa-1d}) of the kernels we obtain
    \begin{align}
      K_T^0(x,y)&=\underbrace{\idotsint}_{N-1}K_{T/N}^0(x,\phi_{N-1})K_{T/N}^0(\phi_{N-1},\phi_{N-2})\cdots K_{T/N}^0(\phi_1,y)\,\dd \phi_1\cdots \dd \phi_{N-1} \\
      &=\idotsint \prod_{j=0}^{N-1} \bigg[
      \frac{\exp\big( \frac{\ii(\phi_{j+1}-\phi_{j})^2}{2(T/N)}\big)}{\sqrt{2\pi \ii(T/N)}} \bigg]\,\dd \phi_1\cdots \dd \phi_{N-1} \label{eqn-1d-free-path-integral-brownian-expres}\\
      &=\idotsint \exp\bigg( \sum_{j=0}^{N-1} \frac{\ii}{2}\frac{(\phi_{j+1}-\phi_j)^2}{(\delta t)^2}(\delta t) \bigg)
      \frac{\dd \phi_1\cdots \dd \phi_{N-1}}{(2\pi\ii (\delta t))^{N/2}}, 
      \label{eqn-1d-free-path-integral-riemann-sum}
    \end{align}
    where~$\delta t:=T/N$, $\phi_0\equiv y$ and $\phi_N\equiv x$. We remember from quantum mechanics that the value~$\psi(x)$ at~$x\in\mb{R}$ of a wave function~$\psi$ is interpreted as the ``probability amplitude'' that the particle be found at~$x$. Accordingly, the value of the propagator~$K_t^0(x,y)$ at~$x$ and~$y\in\mb{R}$ gives the ``probability amplitude'' that the particle has traveled from~$y$ to~$x$ over the time interval~$[0,t]$. Now it is also reasonable to interpret the integration variables~$\phi_j$ above as the positions of the particle at~$t_j=Tj/N$,~$1\le j\le N-1$, respectively. 
  We recognize the term in the exponential of (\ref{eqn-1d-free-path-integral-riemann-sum}) as a Riemann sum and
  \begin{equation}
    \frac{\ii}{2}\sum_{j=0}^{N-1} \frac{(\phi_{j+1}-\phi_j)^2}{(\delta t)^2}(\delta t) \quad \xlongrightarrow[N\to +\infty]{\mm{heu}}\quad  \frac{\ii}{2}\int_{0}^{T}|\phi'(t)|^2\,\dd t,
    \label{eqn-heuristic-limit-1d-free-path-int}
  \end{equation}
  where~$t\mapsto \phi(t)$ now denotes the \textit{trajectory} of the particle over~$[0,T]$, with~$\phi(0)\equiv y$ and~$\phi(T)\equiv x$. Since (\ref{eqn-1d-free-path-integral-riemann-sum}) is valid for all~$N$, we deduce
  \begin{equation}
    K_T^0(x,y) \heueq \int_{\left\{  \substack{\textrm{paths} \\ \phi:[0,T]\to\mb{R} }~\middle|~ \substack{
	     \phi(0)=y,\\ 
	     \phi(T)=x
	   }
      \right\}}^{}
      \me^{\ii\int_{0}^{T}\frac{1}{2}|\phi'(t)|^2\,\dd t} \,\dd \mathcal{L}(\phi),
    \label{eqn-1d-free-path-int-heu-kernel}
  \end{equation}
  where
  \begin{equation}
    \dd \mathcal{L}(\phi) \heueq \lim_{N\to +\infty}\frac{\dd\mathcal{L}_{\mb{R}} (\phi(t_1))\cdots \dd\mathcal{L}_{\mb{R}} (\phi(t_{N-1}))}{(2\pi\ii (T/N))^{N/2}}
    \label{}
  \end{equation}
  is interpreted as the ``Lebesgue measure'' on the space of all paths (with $\phi(0)=y$ and $\phi(T)=x$). We have hence expressed the propagator~$K_T^0(x,y)$ in terms of a so-called \textsf{path integral}.
  \end{exxx}

We have a similar (heuristic) formula also in the more general case with an external potential, that is, for~$\mn{H}_V=\mn{H}_{1,0}+V(x)$. Here~$V$ acts as a multiplication operator~$\psi\mapsto V\psi$. 
The formula is
  \begin{equation}
    K_T^V(x,y) \heueq \int_{\left\{  \substack{\textrm{paths} \\ \phi:[0,T]\to\mb{R} }~\middle|~ \substack{
	     \phi(0)=y,\\ 
	     \phi(T)=x
	   }
      \right\}}^{}
      \me^{\ii\int_{0}^{T}\big(\frac{1}{2}|\phi'(t)|^2-V(\phi(t))\big)\,\dd t} \,\dd \mathcal{L}(\phi),
    \label{eqn-feynman-hibbs-with-pot}
  \end{equation}
  with~$K_t^V$ being the integral kernel of~$\me^{-\ii t\mn{H}_V}$.
  However, due to the non-commutativity between~$\mn{P}=-\ii \partial_x$ and~$\mn{X}$ (cf.\ (\ref{eqn-1d-har-osc-pos-mom}), and hence with~$V$), much more caution is needed. For instance, generally
  \begin{equation}
    \me^{-\ii T(\mn{H}_{1,0}+V)}\ne (\me^{-\ii \frac{T}{N}\mn{H}_{1,0}}\me^{-\ii \frac{T}{N}V})^N.
    \label{}
  \end{equation}
   Along these lines, the key rigorous result available is lemma \ref{lemm-lie-trotter-product} below. 

We now notice the correspondence between (\ref{eqn-1d-free-path-integral-brownian-expres}) and (\ref{eqn-brownian-cyl-measure}), where the former becomes precisely the latter if we substitute $t\mapsto -\ii t$. This is called the \textsf{Wick rotation} which will be discussed in the next subsection. Indeed, from example \ref{exam-transition-brownian} we have
  \begin{equation}
    \mb{E}[h(B_t)|B_0=x]=(\me^{-t\mn{H}_{1,0}}h)(x).
    \label{}
  \end{equation}
Thus lemma \ref{lemm-multiple-corr-intro} gives for a sequence~$0<t_1<\cdots<t_n$,
  \begin{equation}
    \mb{E}\big[ f_1(B_{t_1})\cdots f_n(B_{t_n}) \big|B_0=x\big]=(\me^{-t_1\mn{H}_{1,0}}f_1\cdots \me^{-(t_{n-1}-t_{n-2})\mn{H}_{1,0}}f_{n-1} \me^{-(t_n-t_{n-1})\mn{H}_{1,0}}f_n )(x).
    \label{}
  \end{equation}
  Now let~$V:\mb{R}\lto\mb{R}$ be a potential,~$[0,T]$ a time interval and~$t_j:= Tj/N$ as above. Setting~$f_j:=\me^{-\frac{T}{N}V}$,~$1\le j\le N-1$, and~$f_N:=\me^{-\frac{T}{N}V}h$ we obtain
  \begin{equation}
    \mb{E}\big[ \me^{-\frac{T}{N}V}(B_{T/N})\cdots \me^{-\frac{T}{N}V}(B_{T}) h(B_T) \big|B_0=x\big]=(\me^{-\frac{T}{N}\mn{H}_{1,0}}\me^{-\frac{T}{N}V}\cdots  \me^{-\frac{T}{N}\mn{H}_{1,0}}\me^{-\frac{T}{N}V}h )(x),
    \label{}
  \end{equation}
that is,
\begin{equation}
  \mb{E}\big[ \me^{-\sum_{j=1}^N V(B(t_j))(\delta t)}h(B_T) \big|B_0=x\big]=\big[(\me^{-\frac{T}{N}\mn{H}_{1,0}}\me^{-\frac{T}{N}V} )^Nh\big](x).
    \label{}
  \end{equation}
Minimizing technicality, we assume now~$V$ is bounded and continuous. Since the Brownian path~$t\mapsto B_t$ is (almost surely) continuous, we obtain in combining with lemma \ref{lemm-lie-trotter-product},
\begin{prop}
  [Feynman-Kac] \label{prop-1d-feynman-kac} Let~$V$,~$h\in L^{\infty}(\mb{R})$ and let~$V$ be continuous and non-negative. Then we have the representation
  \begin{equation}
    (\me^{t(\frac{1}{2}\Delta_{\mb{R}}-V)}h)(x)=\mb{E}\big[ \me^{-\int_{0}^{t}V(B_s)\,\dd s}h(B_t) \big|B_0=x\big],
    \label{eqn-intro-feynman-kac-main}
  \end{equation}
  for all $t\ge 0$. \hfill $\Box$
\end{prop}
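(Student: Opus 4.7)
The strategy is essentially to combine the discretization identity that precedes the statement with a Lie-Trotter product formula and a dominated convergence argument, exploiting the almost sure continuity of Brownian paths. Writing $\delta t := t/N$ and $t_j := j\delta t$, the starting point is the identity
\begin{equation}
  \mb{E}\bigl[ \me^{-\sum_{j=1}^N V(B_{t_j})(\delta t)}h(B_t) \bigm| B_0=x\bigr]=\bigl[(\me^{-\delta t\,\mn{H}_{1,0}}\me^{-\delta t\,V})^N h\bigr](x),
  \label{}
\end{equation}
which is already derived just before the statement using lemma \ref{lemm-multiple-corr-intro} applied to Brownian motion (whose Feller semigroup is $\me^{-t\mn{H}_{1,0}}$ by example \ref{exam-transition-brownian}) together with the choice $f_j:=\me^{-\delta t\,V}$ and $f_N:=\me^{-\delta t\,V}h$. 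The plan is then to let $N\to+\infty$ on both sides.

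On the right-hand side, I would invoke the Lie-Trotter product formula (the quoted lemma \ref{lemm-lie-trotter-product}), which gives
\begin{equation}
  (\me^{-\delta t\,\mn{H}_{1,0}}\me^{-\delta t\,V})^N h \;\xrightarrow[N\to\infty]{}\; \me^{-t(\mn{H}_{1,0}+V)}h = \me^{t(\frac{1}{2}\Delta_{\mb{R}}-V)}h,
  \label{}
\end{equation}
in a suitable operator-theoretic sense (e.g.\ strong operator topology on $L^2(\mb{R})$ with $V$ interpreted as a bounded multiplication operator). A small technicality is that I want pointwise, or at least pointwise almost everywhere, convergence of the evaluation at $x$; this can be arranged either by passing to a subsequence or, since $V$ and $h$ are bounded, by observing that the integral kernels involved are jointly continuous so the Trotter convergence can be upgraded to pointwise convergence in $x$ without much fuss.

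On the left-hand side, the goal is to apply dominated convergence under the expectation. Continuity of $V$ together with the almost sure continuity of $s\mapsto B_s$ ensures that
\begin{equation}
  \sum_{j=1}^{N} V(B_{t_j})\,\delta t \;\xrightarrow[N\to\infty]{}\; \int_{0}^{t} V(B_s)\,\dd s
  \label{}
\end{equation}
almost surely under $\mb{P}(\,\cdot\,|B_0=x)$, since the sum is a Riemann sum for a continuous integrand. The integrand is uniformly bounded since $V\ge 0$ gives $0\le \me^{-\sum_j V(B_{t_j})\delta t}\le 1$, and $h$ is bounded by hypothesis, so the constant $\nrm{h}_{L^{\infty}}$ serves as a dominating function and the dominated convergence theorem yields convergence of the left-hand side to the desired expectation.

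The main obstacle I anticipate is a mild one: reconciling the operator convergence in the Trotter step with the pointwise evaluation at $x$ demanded by the right-hand side. Boundedness of $V$ makes the multiplication operator bounded on $L^\infty$, so by a density/mollification argument one can reduce to testing the identity against Schwartz functions (or, alternatively, reinterpret both sides through their integral kernels, which are continuous in $(x,y)$). The non-negativity assumption on $V$ is essentially a convenience ensuring the exponential factor on the left is bounded by $1$, and is not otherwise used. Everything else is bookkeeping.
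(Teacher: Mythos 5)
Your proposal is correct and follows essentially the same route as the paper: the paper derives exactly the discretization identity you start from, and then obtains the result by combining almost sure continuity of Brownian paths (so the Riemann sums converge, with $0\le \me^{-\sum_j V(B_{t_j})\delta t}\le 1$ and $h$ bounded giving domination) with the Lie-Trotter lemma on the semigroup side. Your extra remark on upgrading strong $L^2$ convergence to pointwise evaluation at $x$ is precisely the technicality the paper chooses to suppress, and your kernel-continuity/density fix is an adequate way to handle it.
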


\begin{lemm}
  [Lie-Trotter] \label{lemm-lie-trotter-product} Let~$A$ and~$B$ be positive self-adjoint operators on a Hilbert space $\mathcal{H}$ with domains~$\dom(A)$, $\dom(B)$ and suppose that~$A+B$ is essentially self-adjoint on the domain~$\dom(A)\cap \dom(B)$. Then for~$t>0$ we have
  \begin{equation}
    \slim_{N\to \infty}\big( \me^{-\frac{t}{N}A}\me^{-\frac{t}{N}B}\big)^N =\me^{-t(A+B)},
    \label{}
  \end{equation}
  where ``$\slim$'' denotes the limit in strong operator topology.
\end{lemm}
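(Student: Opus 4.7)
The plan is to combine a telescoping identity with a Taylor-type ``consistency'' estimate, following the classical strategy behind Chernoff's product formula. Set
\[
S_N := e^{-(t/N)A}\,e^{-(t/N)B}, \qquad T_N := e^{-(t/N)(A+B)},
\]
so that $T_N^N = e^{-t(A+B)}$. Since $A$, $B$, and $A+B$ are positive self-adjoint, the spectral theorem yields the contraction bounds $\|e^{-sA}\|, \|e^{-sB}\|, \|e^{-s(A+B)}\|\le 1$ for all $s\ge 0$; in particular $\|S_N\|, \|T_N\|\le 1$.

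First I would write the telescoping identity
\[
S_N^N - T_N^N \;=\; \sum_{k=0}^{N-1} S_N^{\,N-1-k}\,(S_N - T_N)\,T_N^k
\]
and use the contraction bounds to reduce the claim to
\[
\|S_N^N\psi - e^{-t(A+B)}\psi\| \;\le\; \sum_{k=0}^{N-1} \|(S_N - T_N)\,T_N^k\psi\|.
\]
The heart of the matter is the consistency estimate: for $\phi$ in the common domain $\mathcal{D}:=\dom(A)\cap\dom(B)$, which by hypothesis is a core for $A+B$,
\[
\tfrac{1}{s}\,\bigl(e^{-sA}e^{-sB} - e^{-s(A+B)}\bigr)\phi \;\longrightarrow\; 0 \quad\text{in norm as } s\to 0^+.
\]
This follows from the Hille--Yosida expansions $e^{-sA}\phi = \phi - sA\phi + o(s)$, $e^{-sB}\phi = \phi - sB\phi + o(s)$, valid on $\mathcal{D}$, which compose to $e^{-sA}e^{-sB}\phi = \phi - s(A+B)\phi + o(s)$; similarly $e^{-s(A+B)}\phi = \phi - s(A+B)\phi + o(s)$ for $\phi\in\dom(A+B)\supset\mathcal{D}$.

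To apply the pointwise consistency estimate to the telescoping sum, I fix $\psi\in\mathcal{D}$ and consider the orbit $K := \{e^{-r(A+B)}\psi : r\in[0,t]\}$, a compact subset of $\mathcal{H}$ by strong continuity of the semigroup; each iterate $T_N^k\psi$ lies in $K$. Granting uniform convergence over $\phi\in K$ of $\tfrac{N}{t}\|(S_N - T_N)\phi\|\to 0$, the telescoping bound yields
\[
\|S_N^N\psi - e^{-t(A+B)}\psi\| \;\le\; N\cdot\sup_{\phi\in K}\|(S_N - T_N)\phi\| \;=\; t\cdot\sup_{\phi\in K}\tfrac{N}{t}\|(S_N - T_N)\phi\| \;\longrightarrow\; 0.
\]
Since $\mathcal{D}$ is dense and $S_N^N$, $e^{-t(A+B)}$ are uniformly bounded contractions, a standard $3\varepsilon$ density argument extends the strong convergence from $\mathcal{D}$ to all of $\mathcal{H}$.

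The main obstacle is precisely the uniform-over-$K$ step: without it one obtains only a pointwise bound of size $N\cdot o(t/N) = o(1)$ for each fixed $\phi$, whereas the telescoping sum genuinely needs the ``sup'' version because $\phi=T_N^k\psi$ varies with $k$. A further subtlety is that $e^{-r(A+B)}$ need not preserve the common domain $\mathcal{D}$ even though it preserves $\dom(A+B)$, so one must either regularize the orbit (e.g.\ via Yosida approximations) to bring it back inside $\mathcal{D}$ before applying the consistency estimate, or work directly with $\dom(A+B)$ using a more delicate generator expansion. The cleanest packaging of both issues is Chernoff's product theorem, whose standard proof combines the pointwise consistency estimate with an equicontinuity argument---a finite $\varepsilon$-net in $K$ plus the uniform contraction bound $\|S_N - T_N\|\le 2$---to promote the convergence to be uniform on $K$; one may either invoke Chernoff directly or reproduce its proof in this setting by hand.
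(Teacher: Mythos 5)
The paper does not prove this lemma at all: it is quoted as the classical Trotter--Kato product formula (cf.\ Reed--Simon, Theorem VIII.31, or Chernoff's product theorem), so there is no internal proof to compare against, and citing or invoking that standard result is exactly what is expected here. Your overall strategy (telescoping, consistency estimate on the core $\dom(A)\cap\dom(B)$, contraction bounds, density) is the right skeleton, and your final fallback --- verify the hypotheses of Chernoff's theorem ($F(s)=e^{-sA}e^{-sB}$ are contractions, $F(0)=\one$, $s^{-1}(F(s)\phi-\phi)\to-(A+B)\phi$ on a core of the self-adjoint closure $\overline{A+B}$, which generates a contraction semigroup) and invoke it --- is a complete and correct route.

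The genuine gap is in your proposed by-hand repair of the uniformity step. You cannot promote the pointwise consistency estimate to $\sup_{\phi\in K}\frac{N}{t}\|(S_N-T_N)\phi\|\to 0$ by a finite $\varepsilon$-net in the compact orbit $K$ together with the crude bound $\|S_N-T_N\|\le 2$: after multiplication by $N$ the relevant family of maps $\phi\mapsto N(S_N-T_N)\phi$ is \emph{not} equicontinuous (the net argument would need a bound like $N\|S_N-T_N\|\le C$, which is false in general), so the error at off-net points is of size $N\cdot\varepsilon$, not $\varepsilon$. This is precisely why the standard proofs do not proceed this way: Chernoff's argument uses the lemma $\|(F^n-e^{n(F-\one)})\psi\|\le\sqrt{n}\,\|(F-\one)\psi\|$ for a contraction $F$, together with the Trotter--Kato approximation theorem (convergence of generators/resolvents implies strong convergence of semigroups), and that route also disposes of your second worry, since consistency is only ever needed on the core $\dom(A)\cap\dom(B)$ and never at the iterates $T_N^k\psi$, which indeed need not lie in that intersection. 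So either cite the Trotter/Chernoff theorem outright, or, if you want a self-contained proof, reproduce the $\sqrt{n}$-lemma plus resolvent-convergence argument rather than the $\varepsilon$-net scheme you describe.
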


Written more suggestively in analogy with (\ref{eqn-feynman-hibbs-with-pot}), (\ref{eqn-intro-feynman-kac-main}) reads
\begin{equation}
  \bank{f,\me^{t(\frac{1}{2}\Delta_{\mb{R}}-V)}h}_{L^2(\mb{R},\mathcal{L}_{\mb{R}})} \heueq 
  \iint f(y)h(x)\int_{\left\{  \substack{\textrm{paths} \\ \phi:[0,T]\to\mb{R} }~\middle|~ \substack{
	     \phi(0)=y,\\ 
	     \phi(T)=x
	   }
      \right\}}^{}
      \me^{-\int_{0}^{T}\left(\frac{1}{2}|\phi'(t)|^2+V(\phi(t))\right)\,\dd t} \,\dd \mathcal{L}(\phi)\,\dd x\,\dd y.
  \label{eqn-1d-feynman-kac-heuristic}
\end{equation}
Proposition \ref{prop-1d-feynman-kac} gives a rigorous sense to this equality if we \textit{define}
\begin{equation}
  \me^{-\frac{1}{2}\int_{0}^{T}|\phi'(t)|^2\,\dd t} \,\dd \mathcal{L}(\phi)\defeq \frac{1}{\sqrt{2\pi T}}\me^{-\frac{1}{2T}|x-y|^2}\cdot \dd\mb{W}^{y,x}_{[0,T]}(\phi),
  \label{}
\end{equation}
where~$\mb{W}^{y,x}_{[0,T]}$ denotes the \textsf{Brownian bridge measure} on~$C([0,T])$ with boundary conditions~$\phi(0)=y$,~$\phi(T)=x$.

\subsection{Wick Rotation}\label{sec-intro-wick-rot}

 In this subsection we consider~$\ms{H}:=\mn{H}_{1,1}-E_0$, the ``shifted Hamiltonian'' of the harmonic oscillator, appeared in (\ref{eqn-funda-relation-har-osc-orn-uhl}). The main concern of this subsection is to relate the unitary group~$\me^{-\ii t\ms{H}}$ to the semi-group~$\me^{-t\ms{H}}$ by ``analytic continuation''. Formally, the former becomes the latter upon a substitution~$t\mapsto -\ii t$ which is usually called the \textsf{Wick rotation}. Rigorously, the question could be asked at different levels. Denote~$\mathcal{H}:=L^2(\mb{R},\mathcal{L}_{\mb{R}})$.
  \begin{enumerate}[(i)]
    \item For which vectors~$\psi\in \mathcal{H}$ (and which~$z\in\mb{C}$) does~$z\mapsto \me^{-z\ms{H}}\psi$ make sense as an~$\mathcal{H}$-valued analytic function of~$z$? Clearly, if~$\psi$ is a finite linear combination of the eigenstates (\ref{eqn-harm-osc-nth-eigenfunc-def}), then~$\sum_n \frac{|z|^n}{n!}\nrm{\ms{H}^n\psi}_{\mathcal{H}}<\infty$ for all~$z\in \mb{C}$ and~$z\mapsto\me^{-z\ms{H}}\psi$ is entire. In particular these~$\psi$ are \textsf{analytic vectors} of~$\ms{H}$, see \cite{RSim2} section X.6;
    \item For which~$z\in\mb{C}$ does~$z\mapsto \me^{-z\ms{H}}$ define an \textit{analytic family} of bounded operators (that is, an~$\mathcal{L}(\mathcal{H})$-valued analytic function of~$z$), which then necessarily satisfies~$\me^{-(z+z')\ms{H}}=\me^{-z\ms{H}}\me^{-z'\ms{H}}$ for admissible~$z$ and~$z'\in\mb{C}$? This leads to the notion of a \textsf{bounded holomorphic semi-group} (\cite{RSim2} pp.\ 248), typically defined on a ``right opening'' sector in~$\mb{C}$ centered at~$0$. Now since~$\ms{H}$ is self-adjoint and positive, it is possible to define~$\me^{-z\ms{H}}$ as such for~$\fk{Re}(z)>0$ by holomorphic functional calculus (Cauchy integral), see \cite{RSim2} theorem X.52. However we barely miss the imaginary axis. 
    \item The ``path measure'' of the Ornstein-Uhlenbeck process can more generally be seen as a probability measure on~$\mathcal{S}'(\mb{R})$, so that~$\omega\in \mathcal{S}'(\mb{R})$ gives a typical path and the ``evaluation''~$\omega\mapsto \omega(t)$ gives the random variable~$X_t$ (see example \ref{exam-orns-uhl-cont}). Can we characterize more generally properties of such a measure which, like our case, enables one to define from it a Hilbert space based on the ``law of~$\omega(0)$'' (cf.\ $L^2(\mb{R},\gamma_{1,1})$), a ``stochastic evolution'' on this Hilbert space, and extract from it a positive self-adjoint generator? In this way, we say that we have \textit{reconstructed} quantum mechanics out of the given probability measure on~$\mathcal{S}'(\mb{R})$.
  \end{enumerate}
  
  More could be said about question (ii). As the imaginary axis is only ``barely missed'' by the region of analyticity of~$\me^{-z\ms{H}}$,  it is still reasonable that for (general) fixed~$\varphi$,~$\psi\in \mathcal{H}$,~$\sank{\varphi,\me^{-\ii t\ms{H}}\psi}$ can be taken as a ``boundary value'' of the analytic function~$\sank{\varphi,\me^{-z\ms{H}}\psi}$. Let~$z=\ii w$ so that~$w\mapsto \me^{-\ii w\ms{H}}$ is analytic on the lower half-plane~$\mb{C}_-$. Note
  \begin{equation}
    \sank{\varphi,\me^{-\ii(t-\ii\eta)\ms{H}}\psi}=\int_{\mb{R}}^{} \me^{-\ii(t-\ii\eta)\lambda}\sank{\varphi,\dd E_{\ms{H}}(\lambda)\psi},
    \label{}
  \end{equation}
  where~$E_{\ms{H}}$ is spectral measure of~$\ms{H}$. In our case, since~$\ank{\varphi,\dd E_{\ms{H}}(\bullet)\psi}$ is a complex measure (with finite total variation) and~$\ms{H}$ is \textit{positive} (\textit{a fortiori} bounded from below),~$ \sank{\varphi,\me^{-\ii t\ms{H}}\psi}$ is indeed the limit of~$ \sank{\varphi,\me^{-\ii(t-\ii\eta)\ms{H}}\psi}$ as~$\eta\to 0^+$ by dominated convergence. More generally, the Fourier transform of any tempered distribution supported on~$\mb{R}_+$ is the boundary value of an analytic function on~$\mb{C}_-$ (\cite{RSim2} theorem IX.16). For example,
  \begin{equation}
    \mathcal{F}(1_{(0,\infty)})=\frac{1}{\ii(x- \ii 0)}=\frac{1}{\ii}\mm{p.v.}\Big( \frac{1}{x} \Big)+ \pi \delta_0,
    \label{}
  \end{equation}
  that is,~$\mathcal{F}(1_{(0,\infty)})$ is the boundary value of~$1/\ii w$ on~$\mb{C}_-$. In QFT, it is important to make sense of the (higher dimensional) analogue of
\begin{equation}
  W_n(t_1,\dots,t_n)\defeq \bank{\Omega_0,\mn{X}\me^{-\ii(t_n-t_{n-1})\ms{H}}\mn{X}\cdots\me^{-\ii (t_2-t_1)\ms{H}}\mn{X}\Omega_0},
  \label{eqn-wightman-1d-express}
\end{equation}
the so-called \textsf{Wightman distributions}, as the boundary value of an analytic function of \textit{several variables} defined on the \textit{forward tube} (\cite{RSim2} theorem IX.32, \cite{Sim2} pp.\ 60). Here~$\mn{X}$ is the position operator (cf.\ (\ref{eqn-1d-har-osc-pos-mom})). In that case,~$W_n$ must be taken as a distribution and (\ref{eqn-wightman-1d-express}) is understood formally. Generally, the possibility of doing the analytic continuation comes from the requirement that the \textit{joint spectrum} (\cite{Sim3} section 5.6) of the generators of space-time translations\footnote{from the unitary representation of the Poincar\'e group.} lie in the \textit{closed forward light cone} (\cite{RSim2} pp.\ 63), in particular, that the Hamiltonian be positive. That~$W_n$ could be analytically continued to the forward tube means if~$s_1<\cdots <s_n$ then
\begin{equation}
  W_n(-\ii s_1,\dots,-\ii s_n)= \bank{\Omega_0,\mn{X}\me^{-(s_n-s_{n-1})\ms{H}}\mn{X}\cdots\me^{- (s_2-s_1)\ms{H}}\mn{X}\Omega_0}.
  \label{}
\end{equation}
The function~$S_n(s_1,\dots,s_n):=W_n(-\ii s_1,\dots,-\ii s_n)$ is usually called a \textsf{Schwinger function}\footnote{This definition only works for~$s_1<\cdots <s_n$ by the requirements of analytic continuation. It could be extended symmetrically to larger domains, see \cite{Sim2} theorem II.12.}. We observe a peculiarity of the Schwinger function: let~$s_2>s_1>0$, since the semi-group~$\me^{-s\ms{H}}$ is now self-adjoint we have
\begin{equation}
  \boxed{S_4(-s_2,-s_1,s_1,s_2)=\bank{\me^{-s_1\ms{H}}\mn{X}\me^{-(s_2-s_{1})\ms{H}}\mn{X}\Omega_0,\me^{-s_1\ms{H}}\mn{X}\me^{-(s_2-s_{1})\ms{H}}\mn{X}\Omega_0}\ge 0.}
  \label{}
\end{equation}
This is an instance of a property called \textsf{reflection positivity} which we discuss briefly below.

Question (iii) underlies the so-called \textsf{Euclidean approach} to constructive QFT. It goes the other way round: start with the Euclidean, or probabilistic, objects, in what circumstances can we recover from them a ``genuine'' Minkowski QFT? This leads to the \textsf{Osterwalder-Schrader (OS) axioms}. Let~$\mu$ be a probability measure on~$\mathcal{S}'(\mb{R})$. Assume that the random variables~$(X_t)_{t\in\mb{R}}$ are well-defined by evaluation~$\omega\mapsto \omega(t)$, giving accordingly the~$\sigma$-algebras~$\mathcal{O}_{\ge t}=\sigma(X_s|s\ge t)$,~$\mathcal{O}_{\le t}=\sigma(X_s|s\le t)$.\footnote{We will use the notations~$X_t$ and~$\omega(t)$ interchangeably to avoid heavy subscripts.} We present an incomplete and simplified version for~$\mu$, referring to \cite{GJ} chapter 6 for the complete version:
\begin{description}
    \item[(OS0)] The variables~$X_t$ have finite joint moments. 
    \item[(OS2)] $\mu$ is Euclidean (translation and reflection) invariant (e.g.\ $\mb{E}_{\mu}[X_s X_t]=\mb{E}_{\mu}[X_{s-a}X_{t-a}]=\mb{E}_{\mu}[X_{-s}X_{-t}]$).
    \item[(OS3)] \textsf{Reflection Positivity (RP)}. Let~$F=F(\omega(t_1),\dots,\omega(t_N))$ be a polynomial functional of finitely many~$X_t$'s. Denote by~$\Theta F$ the ``reflection across~$t=0$'' of~$F$, namely~$(\Theta F)(\omega)=F(\omega(-t_1),\dots,\omega(-t_N))$. Then we have
\begin{equation}
  \mb{E}_{\mu}[\ol{F}(\Theta F)]\ge 0,\quad \textrm{provided}\quad F=F(\omega(t_1),\dots,\omega(t_N))\textrm{ with }t_j\ge 0.
  \label{eqn-intro-rp-inner-prod}
\end{equation}
\end{description}
\begin{exxx}[Ornstein-Uhlenbeck, following \ref{exam-orns-uhl}]
  We can characterize the Hilbert space~$L^2(\mb{R},(\tau_0)_*\mu)$ of the Ornstein-Uhlenbeck process in this new framework. Here~$(\tau_0)_*\mu=\gamma_{1,1}$ is the law of~$X_0$. Consider the map~$J_0:L^2(\mb{R},(\tau_0)_*\mu)\lto L^2(\mathcal{S}'(\mb{R}),\mu)$ defined by~$(J_0 F)(\omega):=F(\omega(0))$. Then by definition the functionals in~$\ran J_0$ are~$\sigma(X_0)$-measurable. In fact, by the Gaussianity of the Ornstein-Uhlenbeck process and the Wiener Chaos decomposition (proposition \ref{prop-wiener-chaos}),~$\ran J_0$ are \textit{precisely} the~$\sigma(X_0)$-measurable functionals in~$L^2(\mathcal{S}'(\mb{R}),\mu)$. Next we note that the latter condition in (\ref{eqn-intro-rp-inner-prod}) says~$F$ is~$\mathcal{O}_{\ge 0}$-measurable. Accordingly~$\Theta F$ is~$\mathcal{O}_{\le 0}$-measurable. Applying the \textit{Markov property} and reflection invariance of the Ornstein-Uhlenbeck process we have
  \begin{equation}
    \mb{E}_{\mu}[\ol{F}(\Theta F)]=\mb{E}_{\mu}[~\mb{E}_{\mu}[\ol{F}|\mathcal{O}_{\le 0}]\cdot \Theta F~]\overset{\mm{Markov}}{=}\mb{E}_{\mu}[~\mb{E}_{\mu}[\ol{F}|\sigma(X_0)]\cdot \Theta F~]\overset{\substack{\textrm{reflection}\\ \mm{invariance}}}{=}\mb{E}_{\mu} \bigl|\mb{E}_{\mu}[F|\sigma(X_0)]\bigr| ^2\ge 0.
    \label{}
  \end{equation}
   This shows that the Ornstein-Uhlenbeck process is reflection positive and more importantly, that for an~$\mathcal{O}_{\ge 0}$-measurable functional~$F$,~$\mb{E}_{\mu}[\ol{F}(\Theta F)]=0$ iff~$\mb{E}_{\mu}[F|\sigma(X_0)]=0$ almost surely and these are precisely the functionals in~$(\ran J_0)^{\perp}$. From the above discussions we conclude
  \begin{equation}
    L^2(\mb{R},(\tau_0)_*\mu)\cong \ol{\mathcal{E}_+/\mathcal{N}},
    \label{eqn-intro-rp-reconstruct-hilb-space}
  \end{equation}
  where~$\mathcal{E}_+:=L^2(\mathcal{S}'(\mb{R}),\mathcal{O}_{\ge 0},\mu)$,~$\mathcal{N}$ is the kernel of the semi-definite inner product (\ref{eqn-intro-rp-inner-prod}), and the closure taken w.r.t.\ (\ref{eqn-intro-rp-inner-prod}) as well. 
\end{exxx}

The recipe on the r.h.s.\ of (\ref{eqn-intro-rp-reconstruct-hilb-space}) applies more generally when neither Gaussianity nor Markov property is available. It gives a general reconstruction of the ``quantum mechanical'' Hilbert space of a QFT out of a probability measure on~$\mathcal{S}'(\mb{R}^{d+1})$ satisfying (OS2-3) (\cite{GJ} propositions 6.1.1-2). Now following \cite{GJ} section 6.1, we indicate how to reconstruct time evolution again with~$d=0$. The crucial fact to note is that if a Euclidean transformation~$\mss{T}$\footnote{Like~$\Theta$, we define the action of~$\mss{T}$ on~$L^2(\mathcal{S}'(\mb{R}^{d+1}))$ by~$(\mss{T}F)(\phi)=F(\mss{T}_*\phi)$ for~$\phi\in \mathcal{S}'(\mb{R}^{d+1})$, and~$(\mss{T}_*\phi)(f)=\phi(\mss{T}^* f)$.} maps~$\mathcal{E}_+\lto \mathcal{E}_+$ and~$\mathcal{N}\lto \mathcal{N}$, then it descends to~$\mathcal{H}:=\ol{\mathcal{E}_+/\mathcal{N}}$. A \textit{positive} time translation~$a\mapsto a+t$, denoted~$\mss{T}(t)$ for~$t\ge 0$, has this property. We denote by~$U(t):\mathcal{H}\lto \mathcal{H}$ the operator induced by~$\mss{T}(t)$. Since~$\mss{T}(t)\mss{T}(s)=\mss{T}(t+s)$,~$(U(t))_{t\ge 0}$ defines a semi-group and more precisely it will be strongly continuous and self-adjoint (\cite{GJ} theorem 6.1.3). There will be an extra, most noticeable feature: each~$U(t)$ will be a \textit{contraction} (i.e.~$\nrm{U(t)}\le 1$).
Indeed, by semi-group property, self-adjointness, and Cauchy-Schwarz on $\mathcal{H}$,
\begin{equation}
  \nrm{U(t)\psi}_{\mathcal{H}}=\ank{\psi,U(2t)\psi}^{1/2}_{\mathcal{H}}\le \nrm{\psi}^{1/2}_{\mathcal{H}}\nrm{U(2t)\psi}^{1/2}_{\mathcal{H}}\le\cdots\le \nrm{\psi}^{1-2^{-n}}_{\mathcal{H}}\nrm{U(2^n t)\psi}^{2^{-n}}_{\mathcal{H}}.
  \label{}
\end{equation}
Now observe (i) the projection~$\Pi:\mathcal{E}_+\lto \mathcal{H}$ is itself a contraction, (ii) the Euclidean transformations act unitarily on~$L^2(\mathcal{S}'(\mb{R}))$, and (iii) by definition of~$U(t)$, if~$\psi=\Pi(F)$ then~$U(2^n t)\psi=\Pi(\mss{T}(2^n t)F)$. Therefore~$\nrm{U(2^n t)\psi}_{\mathcal{H}}\le \nrm{F}_{\mathcal{E}_+}$ and we obtain the result by sending~$n\to \infty$. This contraction property of~$U(t)$, with self-adjointness, tells that its generator~$\ms{H}$, the reconstructed Hamiltonian, must be \textit{positive}. As discussed above, this is a key fact which suggests that one may proceed to reconstruct a ``genuine'' quantum theory out of~$\mathcal{H}$ and~$\ms{H}$.

\section{Fock Space and Gaussian Measures}\label{sec-intro-fock}

We start by recalling the formula for a Gaussian density on~$\mb{R}^N$. That is, given a real positive symmetric~$N\times N$ matrix~$C^{-1}$ on~$\mb{R}^N$, one has
\begin{equation}
  \int_{\mb{R}^N}^{}\me^{-\frac{1}{2}\sank{\mn{x},C^{-1}\mn{x}}} \dd\mathcal{L}_{\mb{R}^N}(\mn{x})=(2\pi)^{\frac{N}{2}}(\det C)^{\frac{1}{2}},
  \label{}
\end{equation}
and the measure
\begin{equation}
  \dd\mu_C(\mn{x})\defeq (2\pi)^{-\frac{N}{2}}(\det C^{-1})^{\frac{1}{2}}\me^{-\frac{1}{2}\sank{\mn{x},C^{-1}\mn{x}}} \dd\mathcal{L}_{\mb{R}^N}(\mn{x})
  \label{eqn-fini-dim-gaussian-measure-expr}
\end{equation}
is a Gaussian probability measure with~$C$ being its covariance matrix, namely~$C_{ij}=\mb{E}_{\mu_C}[x_ix_j]$. In the case where~$C^{-1}$ is diagonal with entries (eigenvalues)~$\lambda_1$, \dots,~$\lambda_N$, the expression becomes
\begin{equation}
  \dd\mu_C(\mn{x})=\bigg[ \prod_{i=1}^N \Big(\frac{\lambda_i}{2\pi}\Big)^{\frac{1}{2}} \bigg]
  \me^{-\frac{1}{2}\sum_i \lambda_i x_i^2}\,\dd \mathcal{L}_{\mb{R}^N}(\mn{x}).
  \label{}
\end{equation}

The purpose of this section is two-fold. In subsection \ref{sec-intro-from-bm-to-gff} we motivate the notion of a \textit{Gaussian random field} which is the infinite dimensional generalization of a \textsf{Gaussian random vector}, i.e.\ a random vector in~$\mb{R}^N$ with law (\ref{eqn-fini-dim-gaussian-measure-expr}). Examples \ref{exam-cons-brownian-motion} and \ref{exam-orns-uhl-cont} show that these are ubiquitous and natural. Then in subsection \ref{sec-intro-general-gauss-meas} we explain the general characterization of (infinite dimensional) Gaussian measures, i.e.\ laws of Gaussian random fields, by their \textit{covariance structure} and indicate one of their general constructions. They show up naturally in considerations of both the Quantum and Euclidean free fields discussed in subsections \ref{sec-intro-free-field-cyl-canon-quant} and \ref{sec-intro-dyna-massive-gff}. These are the first focus. Subsections \ref{sec-intro-fock-wiener-chaos} and \ref{sec-intro-fock-rep} constitute the second focus devoted to explaining one fact, that is, informally, any probability~$L^2$ space containing a Gaussian variable automatically ``contains'' infinitely many Bosons. The precise meaning of this statement is propositions \ref{prop-wiener-chaos}, \ref{prop-intro-fock-rep-on-l2} and definition \ref{def-fock-rep-abs} combined.

\subsection{From the Brownian Motion to the GFF}\label{sec-intro-from-bm-to-gff}
The Brownian motion (on $\mb{R}$) is a family~$(B_t)_{t\ge 0}$ of Gaussian random variables defined on some probability space~$(Q,\mathcal{O},\mb{P})$ such that~$\mb{E}[B_t]\equiv 0$ and~$\mb{E}[B_t B_s]=\min\{s,t\}$ for~$s$,~$t\ge 0$. To \textit{construct} the Brownian motion means to find a suitable probability space~$(Q,\mathcal{O},\mb{P})$ and to define the random variables~$B_t$. A natural choice for~$Q$ is the space of \textit{all possible trajectories}. Below, we construct $B_t$ up to $t=\pi/2$.

\begin{exxx}
  [construction of Wiener measure on~$C({[}0,\pi/2{]},\mb{R})$] \label{exam-cons-brownian-motion} We start with the Laplacian~$\Delta_{\mb{R}}=-\dd^2/\dd x^2$ on the open interval~$(0,\frac{\pi}{2})$, initially acting on~$C_{c}^{\infty}((0,\frac{\pi}{2}))$. Consider its self-adjoint extension~$\Delta_{\mm{W}}$ with domain
  \begin{equation}
    \dom(\Delta_{\mm{W}})\defeq \big\{ f\in W^2((0,\tts \frac{\pi}{2})) ~\big|~ f(0)=f'(\frac{\pi}{2})=0\big\},
    \label{eqn-wiener-laplacian-domain}
  \end{equation}
  where~$W^2$ denotes the second order Sobolev space (\cite{Lewin} section 2.8.3). This~$\Delta_{\mm{W}}$ is then positive, has compact resolvent (\cite{Lewin} theorem 5.25), and~$L^2$-diagonalized by the complete eigenbasis~$\{e_n\}_{n=0}^{\infty}$ where
  \begin{equation}
    e_n(t)\defeq \frac{2}{\sqrt{\pi}}\sin((2n+1)t),\quad\quad x\in (0,\tts\frac{\pi}{2}),
    \label{}
  \end{equation}
  corresponding to eigenvalues~$\lambda_n=(2n+1)^2$. We now define a random distribution~$\varphi\in \mathcal{D}'((0,\frac{\pi}{2}))$ by writing formally a Fourier series
  \begin{equation}
    B\defeq\sum_{n=0}^{\infty}\frac{\xi_n}{2n+1}e_n,
    \label{eqn-def-brownian-with-basis}
  \end{equation}
  with the random coefficients~$\{\xi_n\}$ i.i.d.\ following the standard Gaussian~$\mathcal{N}(0,1)$. What does~$\varphi$ look like? We could compute the covariances. Indeed, since~$\{\xi_n\}$ are i.i.d.\ we have formally
  \begin{align*}
    \mb{E}[B(t)B(s)]&=\sum_{n=0}^{\infty}\frac{e_n(t)e_n(s)}{(2n+1)^2}=\sum_{n=0}^{\infty}\frac{4}{\pi}\frac{\sin((2n+1)t)\sin((2n+1)s)}{(2n+1)^2} \\
    &=\sum_{n=0}^{\infty}\frac{4}{\pi}\int_{0}^{t}\cos((2n+1)x)\,\dd x\int_{0}^{s}\cos((2n+1)y)\,\dd y\\
    &=\ank{1_{(0,t)},1_{(0,s)}}_{L^2((0,\pi/2))}=\min\{t,s\},
  \end{align*}
  where the second last step is true by Plancherel's theorem and realizing that~$\tilde{e}_n:=2\pi^{-1/2}\cos((2n+1)x)$ form another basis\footnote{
  In fact, a closer scrutiny shows that~$\sum_n \int_{0}^{t}e_n'\int_{0}^{s}e_n'=\min\{t,s\}$ for any o.n.\ basis~$\{e_n'\}$ of~$L^2((0,\pi/2))$. This reflects the fact that the \textsf{Gaussian white noise}~$B'=\sum_n \xi_n e_n'$ is ``intrinsically defined'', whose law being independent of the basis, and accordingly so is~$B$ when written as~$\sum_n \xi_n \int_{0}^{t}e_n'$. However, our basis~$\{e_n\}$ is special in giving the equality (\ref{eqn-cov-brownian-test-function}).}
  of~$L^2((0,\frac{\pi}{2}))$.
  Therefore we recovered the covariance structure of Brownian motion. In fact, the expression (\ref{eqn-def-brownian-with-basis}) also shows that~$\mb{E}\snrm{\Delta_{\mm{W}}^{s/2}B}_{L^2}^2=\sum_n(2n+1)^{-2+s}<\infty$ if~$s<1$. Since~$\Delta_{\mm{W}}$ is an elliptic positive differential operator,~$\snrm{\Delta_{\mm{W}}^{s/2}B}_{L^2}\approx \nrm{B}_{W^s}$, and we see that~$B$ is almost surely in~$W^s$ for all~$s<1$. By the Sobolev embedding theorem, therefore,~$B$ is almost surely in~$C^{\frac{1}{2}-\varepsilon}$ for all~$\varepsilon>0$. We have refound Brownian motion!
  We remark that the construction above is adapted form an original construction by Paul Lévy (1948). See \cite{Comets-Meyre-2020} section 2.3 and also \cite{Kahane-1985} chapter 16.

  Pick~$f\in L^2((0,\frac{\pi}{2}))$. The expression (\ref{eqn-def-brownian-with-basis}) also enables us to work more generally with the random variables~$\ank{B,f}_{L^2}$, and~$B(t)$ could be thought of as taking~$f=\delta_t$. Indeed, its variance can be computed as
\begin{equation}
  \mb{E} \bigl|\bank{B,f}\bigr|^2
  =\sum_{n=0}^{\infty}\frac{\left|\ank{f,e_n}\right|^2}{(2n+1)^2}=\bank{f,\Delta_{\mm{W}}^{-1}f}_{L^2}.
  \label{eqn-cov-brownian-test-function}
\end{equation}
This \textit{a posteriori} shows that~$\ank{B,f}_{L^2}$ is also the probability~$L^2$-limit of the Gaussian variables~$\sum_{n=1}^N \xi_n f_n(2n+1)^{-1}$ where~$f_n:=\ank{f,e_n}$, and hence is also \textit{Gaussian}. Since~$\sank{f,\Delta_{\mm{W}}^{-1}f}\le \nrm{f}^2$,~$\ank{B,f}$ is as well the limit of~$\ank{B,\varphi_n}$ for some sequence of test functions~$\varphi_n\in C_c^{\infty}((0,\frac{\pi}{2}))$ approaching~$f$ in~$L^2$. We will see in subsection \ref{sec-intro-general-gauss-meas} that this perspective of characterizing \textit{Gaussian} random distributions, by looking at the random variables produced by testing against test functions, generalizes.
\end{exxx}

It is now not difficult to generalize the recipe (\ref{eqn-def-brownian-with-basis}) and construct Gaussian random distributions on an arbitrary closed Riemannian manifold~$(M,g)$ with the help of the discrete spectrum of~$\Delta_g$. Indeed, if~$\{e_n\}_{n=0}^{\infty}$ is the o.n.\ basis (abusing notation) of~$L^2(M)$ consisting of eigenfunctions of~$\Delta_g$ with eigenvalues~$0=\lambda_0< \lambda_1\le \lambda_2\le\cdots \le \lambda_n\le \cdots$ counted with multiplicity, we define
\begin{equation}
  \phi\defeq \sum_{n=1}^{\infty}\frac{\xi_n}{\sqrt{\lambda_n}}e_n,
  \label{eqn-intro-def-gff-eigenmodes}
\end{equation}
with~$\{\xi_n\}$ again being i.i.d.\ Gaussians following~$\mathcal{N}(0,1)$. One could then show that~$\phi$ defines a Gaussian random distribution with~$\int_{M}^{}\phi=0$ almost surely (since we removed the zero eigenvalue), and the random variables~$\ank{\phi,f}_{L^2}$,~$f\in C^{\infty}(M)$, satisfy
\begin{equation}
  \mb{E}\bigl| \bank{\phi,f}_{L^2} \bigr|^2 =\bank{f,\Delta_{g,0}^{-1}f}_{L^2},\quad\quad \mb{E} \big[ \bank{\phi,f}_{L^2} \big]=0,
  \label{eqn-intro-covariance-massless-gff}
\end{equation}
where~$\Delta_{g,0}^{-1}=\Delta_g^{-1}(\one-P_{\ker \Delta})$ denotes the Green operator. This random distribution, or random \textit{field} (and accordingly the measure on~$\mathcal{D}'(M)$), is called the \textsf{(massless) Gaussian Free Field} (GFF). The bottom eigenvalue may also be included if we consider~$\Delta_g+m^2$ instead for some~$m>0$. The result will be called the \textsf{(massive) Gaussian Free Field with mass}~$m>0$. Moreover, the almost sure Sobolev regularity of~$\phi$ could also be deduced from the same argument as in example \ref{exam-cons-brownian-motion} by resorting to the Weyl asymptotics.

\begin{exxx}
    [Ornstein-Uhlenbeck, continued] \label{exam-orns-uhl-cont} We notice the following workhorse formula:
\begin{equation}
  \frac{1}{2\pi}\int_{-\infty}^{\infty}\frac{\me^{\ii \rho s}}{\rho^2+a^2}\dd \rho =\frac{1}{2a}\me^{-a|s|},\quad\quad a>0,~s\in\mb{R}.
      \label{eqn-massive-green-function-formula}
    \end{equation}
    Comparing with (\ref{eqn-orns-uhl-cov-1d}), we notice that $\mb{E}[X_s X_t]=\sank{\delta_s,(-\frac{\dd^2}{\dd x^2}+1)^{-1} \delta_t}$. Indeed, (\ref{eqn-orns-uhl-cov-1d}) already suggests that unlike the Brownian motion, the Ornstein-Uhlenbeck process is allowed to go ``backwards in time''. By Minlos' theorem, one can indeed define a (unique) global Gaussian measure on~$\mathcal{S}'(\mb{R})$ with covariance~$(-\frac{\dd^2}{\dd t^2}+1)^{-1}$ (playing the role of $\Delta_{\mm{W}}^{-1}$ above), and the random variables~$X_t$ realized as evaluation at~$t$.
  \end{exxx}

\subsection{Fock Space and Wiener Chaos}\label{sec-intro-fock-wiener-chaos}

In this subsection we show how Gaussian random variables could be related to a construction in mathematical physics called the (Boson) \textit{Fock space}.
This elegant relation will be exploited in the next subsection and then in subsection \ref{sec-intro-dyna-massive-gff} to obtain the semi-group of the Euclidean free field as well as its Markov property. 
We mention that, among the other references cited below, a detailed development that parallels this subsection and the next could be found in \cite{Malliavin-1997} chapter I, but in a different language.

Any closed linear subspace of~$L^2(Q,\mathcal{O},\mb{P})$ of a probability space~$(Q,\mathcal{O},\mb{P})$, consisting of centered \textit{Gaussian} variables will be called a \textsf{Gaussian Hilbert space}.

\begin{deef}
   Let~$\mathcal{H}$ be a Hilbert space. Then the \textsf{Boson Fock space~$\Gamma(\mathcal{H})$ associated to~$\mathcal{H}$} is defined to be the (completed) Hilbert space direct sum
   \begin{equation}
     \Gamma(\mathcal{H})\defeq\bigoplus_{n=0}^{\infty}\sym^{\otimes n}(\mathcal{H}),\quad\quad
     \nrm{(\alpha_0,\alpha_1,\alpha_2,\dots)}_{\Gamma(\mathcal{H})}^2\defeq\sum_{i=0}^{\infty} \nrm{\alpha_i}_{\odot i}^2,
     \label{}
   \end{equation}
   where $\sym^{\otimes n}(\mathcal{H})$ denotes the $n$-fold (completed) symmetric tensor product of $\mathcal{H}$ with norm $\nrm{\cdot}_{\odot n}$, and we put~$\mathcal{H}^{\otimes 0}:=\mb{C}$ (or $\mb{R}$), and the unit length generator of $\mathcal{H}^{\otimes 0}$ is called the \textsf{vacuum}, denoted~$|0\rangle$ or~$\Omega_0$.
 \end{deef}

In concrete situations,~$\mathcal{H}$ is the Hilbert space of wave functions describing \textit{1 particle}. If the particle is a \textit{Boson}, then~$\sym^{\otimes n}(\mathcal{H})$ is the Hilbert space of the joint wave functions of~$n$ indistinguishable particles of the same type. The next two propositions suggest that each probability~$L^2$ space having Gaussian variables automatically ``contains'' infinitely many Bosons. 
 
 \begin{prop}
  [Wiener Chaos decomposition, \cite{Janson} theorem 2.6]\label{prop-wiener-chaos}
  Let~$(Q,\mathcal{O},\mb{P})$ be a probability space and~$\mathcal{H}\subset L^2(Q,\mathcal{O},\mb{P})$ a Gaussian Hilbert space. Then there is an orthogonal decomposition
  \begin{equation}
    L^2(Q,\mathcal{O}(\mathcal{H}),\mb{P})\cong \bigoplus_{n=0}^{\infty}\mathcal{H}^{{:}n{:}},
    \label{}
  \end{equation}
  where~$\mathcal{O}(\mathcal{H})$ is the~$\sigma$-algebra generated by variables in~$\mathcal{H}$,~$\mathcal{H}^{{:}n{:}}=\ol{\mathcal{P}}_n(\mathcal{H})\cap \ol{\mathcal{P}}_{n-1}(\mathcal{H})^{\perp}$, where~$\mathcal{P}_j(\mathcal{H})$ denotes the span of polynomials of random variables in~$\mathcal{H}$ of degree~$\le j$; in particular~$\mathcal{H}^{{:}0{:}}$ denotes the constants.\hfill~$\Box$
\end{prop}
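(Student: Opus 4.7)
The plan is to reduce the statement to the density of polynomial functions in $L^2$ against Gaussian measures, and then organize the chaos spaces by a Gram--Schmidt procedure on the degree filtration. Granting that $\bigcup_{n \ge 0} \overline{\mathcal{P}_n(\mathcal{H})}$ is dense in $L^2(Q, \mathcal{O}(\mathcal{H}), \mathbb{P})$, the decomposition is essentially formal: since by definition $\mathcal{H}^{{:}n{:}} = \overline{\mathcal{P}_n(\mathcal{H})} \ominus \overline{\mathcal{P}_{n-1}(\mathcal{H})}$, an induction on $N$ yields the finite internal orthogonal sum $\overline{\mathcal{P}_N(\mathcal{H})} = \bigoplus_{n=0}^{N} \mathcal{H}^{{:}n{:}}$, and the density claim lets one send $N \to \infty$. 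Distinct chaoses are mutually orthogonal by the nested construction itself.

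The substantive step is the density. I would first handle the finite-dimensional situation: fix an orthonormal family $e_1, \ldots, e_N \in \mathcal{H}$. Since orthogonal centered jointly Gaussian variables are automatically independent, the vector $(e_1, \ldots, e_N)$ has law $\gamma_N$, the standard Gaussian measure on $\mathbb{R}^N$, so the task reduces to the classical fact that polynomials are dense in $L^2(\mathbb{R}^N, \gamma_N)$. This can be proved either by exhibiting the tensorized Hermite polynomials as an orthonormal basis, or more slickly by noting that each exponential $e^{\alpha \cdot x}$ lies in $L^2(\gamma_N)$ and is the $L^2$-limit of its Taylor polynomials, while finite linear combinations of imaginary exponentials $e^{i \alpha \cdot x}$ are already total by Fourier injectivity.

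To pass to infinite dimensions, let $\mathcal{F}$ denote the closed linear span of $\bigcup_n \overline{\mathcal{P}_n(\mathcal{H})}$ in $L^2(Q, \mathcal{O}(\mathcal{H}), \mathbb{P})$. By the usual measure-theoretic approximation lemma, any $X \in L^2(Q, \mathcal{O}(\mathcal{H}), \mathbb{P})$ is measurable with respect to a countable orthonormal subfamily $\{e_k\}_{k \ge 1} \subset \mathcal{H}$. Setting $V_N := \mathrm{span}(e_1, \ldots, e_N)$, the finite-dimensional step gives $L^2(Q, \mathcal{O}(V_N), \mathbb{P}) \subset \mathcal{F}$; since $\mathcal{O}(V_N) \uparrow \sigma(e_k : k \ge 1)$, $L^2$-martingale convergence provides $\mathbb{E}[X \mid \mathcal{O}(V_N)] \to X$ in $L^2$ with each conditional expectation in $\mathcal{F}$, whence $X \in \mathcal{F}$ by closedness. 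The main obstacle I anticipate is precisely the finite-dimensional polynomial density, where one must carefully exploit the Gaussian decay to control polynomial approximants in $L^2(\gamma_N)$; the remaining steps are essentially bookkeeping between $L^2$-closure and the successive $\sigma$-algebra filtrations.
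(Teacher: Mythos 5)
Your proof is correct. Note that the paper does not prove this proposition at all: it is quoted from Janson (Theorem 2.6) with a $\Box$ and no argument, so there is no in-paper proof to compare against; your route --- reducing to density of polynomials in $L^2(\mathbb{R}^N,\gamma_N)$ via the $L^2$-convergent Taylor expansions of the (imaginary) exponentials together with Fourier injectivity, then passing to infinite dimensions by conditioning on the filtration $\mathcal{O}(V_N)\uparrow\sigma(e_k:k\ge 1)$ and $L^2$-martingale convergence --- is essentially the standard argument given in that reference, and the remaining decomposition $\overline{\mathcal{P}}_N(\mathcal{H})=\bigoplus_{n\le N}\mathcal{H}^{{:}n{:}}$ is indeed formal. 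The only points worth stating explicitly when writing it up are that joint Gaussianity of $(e_1,\dots,e_N)$ follows from $\mathcal{H}$ being a linear space of Gaussian variables (so orthogonality gives independence and the law is exactly $\gamma_N$), and that the countable generating family can be replaced, up to null sets, by an orthonormal basis of its closed span.
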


 \begin{prop}
   [It\^o-Wiener-Segal isomorphism] \label{prop-ito-wiener-segal-iso}
   Let~$(Q,\mathcal{O},\mb{P})$ be a probability space and~$\mathcal{H}\subset L^2(Q,\mathcal{O},\mb{P})$ a Gaussian Hilbert space. Then the map, denoting by $\odot$ the symmetric tensor product,
   \begin{equation}
     X_1\odot \dots\odot X_n\longmapsto {:}X_1\cdots X_n{:}
     \label{eqn-ito-wiener-segal-map}
   \end{equation}
   gives a Hilbert space isomorphism~$\sym^{\otimes n}(\mathcal{H})\cong\mathcal{H}^{{:}n{:}}$. Moreover, the direct sum of these maps for each~$n$ extends to a Hilbert space isomorphism
   \begin{equation}
     \Gamma(\mathcal{H})\cong L^2(Q,\mathcal{O}(\mathcal{H}),\mb{P}).
     \label{eqn-iso-ito-wiener-segal}
   \end{equation}
   Consequently for~$X_1$, \dots, $X_n\in \mathcal{H}$,~$X_1\odot \dots\odot X_n$ and~${:}X_1\cdots X_n{:}$ are indistinguishable. 
 \end{prop}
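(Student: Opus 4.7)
The plan is to verify the isomorphism piece by piece, first at each chaos level and then globally using the Wiener chaos decomposition of Proposition \ref{prop-wiener-chaos}. The map (\ref{eqn-ito-wiener-segal-map}) is manifestly symmetric in the $X_i$'s, since Wick ordering is symmetric in its arguments, so it is well-defined on the algebraic symmetric tensor power and extends by multilinearity to finite linear combinations of simple tensors. What remains is (a) an isometry statement at each level~$n$, (b) surjectivity onto~$\mathcal{H}^{{:}n{:}}$, and (c) assembling the pieces.

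For the isometry at level~$n$, the cleanest route is via Wick exponentials. Recall that for a centered Gaussian~$X\in\mathcal{H}$ one has the identity
\begin{equation*}
  {:}\me^{tX}{:} \;=\; \exp\Bigl(tX - \tfrac{t^2}{2}\mb{E}[X^2]\Bigr),
\end{equation*}
from which a direct computation of the Gaussian expectation gives, for any $X,Y\in\mathcal{H}$ and $s,t\in\mb{R}$,
\begin{equation*}
  \mb{E}\bigl[{:}\me^{sX}{:}\cdot{:}\me^{tY}{:}\bigr]=\me^{st\,\ank{X,Y}_{\mathcal{H}}}.
\end{equation*}
Polarizing in $n$ variables $s_1,\dots,s_n$ and $m$ variables $t_1,\dots,t_m$ and matching Taylor coefficients in these parameters yields the fundamental identity
\begin{equation*}
  \mb{E}\bigl[{:}X_1\cdots X_n{:}\cdot{:}Y_1\cdots Y_m{:}\bigr]
  \;=\;\delta_{nm}\sum_{\sigma\in S_n}\prod_{i=1}^n\ank{X_i,Y_{\sigma(i)}}_{\mathcal{H}},
\end{equation*}
which is exactly the permanent-based inner product on $\sym^{\otimes n}(\mathcal{H})$. (A more pedestrian proof uses Isserlis' theorem combined with the recursive defining relation of Wick products to match pairings; the generating-function approach just packages these pairings efficiently.) This establishes that (\ref{eqn-ito-wiener-segal-map}) sends the dense subspace of finite linear combinations of simple symmetric tensors isometrically into $L^2(Q,\mathcal{O}(\mathcal{H}),\mb{P})$; by density it extends to an isometric embedding $\sym^{\otimes n}(\mathcal{H})\hookrightarrow L^2$.

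Next I would identify the range. By construction, ${:}X_1\cdots X_n{:}$ lies in $\ol{\mathcal{P}}_n(\mathcal{H})$, and the Wick ordering subtracts off exactly the lower-degree moment terms, so it lies in $\ol{\mathcal{P}}_{n-1}(\mathcal{H})^{\perp}$ by the orthogonality asserted above with $m<n$. Hence the image sits inside $\mathcal{H}^{{:}n{:}}$. For surjectivity one notes that every monomial $X_1\cdots X_n$ is congruent modulo $\ol{\mathcal{P}}_{n-1}(\mathcal{H})$ to ${:}X_1\cdots X_n{:}$; since such monomials span a dense subspace of $\ol{\mathcal{P}}_n(\mathcal{H})$, their projections span a dense subspace of $\mathcal{H}^{{:}n{:}}$, proving $\sym^{\otimes n}(\mathcal{H})\cong\mathcal{H}^{{:}n{:}}$. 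Taking the Hilbert direct sum over $n$ and invoking Proposition \ref{prop-wiener-chaos} yields the global isomorphism (\ref{eqn-iso-ito-wiener-segal}).

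The main obstacle is really the single combinatorial identity computing $\mb{E}[{:}X_1\cdots X_n{:}\cdot{:}Y_1\cdots Y_m{:}]$; everything else is formal once this is in hand. The reason the Wick-exponential trick is attractive is that it bypasses a direct inductive bookkeeping of pairings between ``internal'' Wick contractions and ``external'' Gaussian pairings, which is where a naive approach via Isserlis' theorem tends to get tangled. One small but necessary care-point I would flag is the convention issue in normalizing the symmetric tensor inner product: depending on whether one uses $\ank{X_1\odot\cdots\odot X_n, Y_1\odot\cdots\odot Y_n}=\mathrm{per}(\ank{X_i,Y_j})$ or that divided by $n!$, the map (\ref{eqn-ito-wiener-segal-map}) must be rescaled by $\sqrt{n!}$ to remain isometric; the statement above uses the permanent convention, which is what the Wick identity above delivers directly.
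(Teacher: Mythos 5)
Your proposal is correct and follows essentially the same route as the paper: the isometry rests on the Wick inner-product formula (\ref{eqn-wick-inner-prod}), and the global isomorphism follows from the Wiener chaos decomposition of Proposition \ref{prop-wiener-chaos}; the paper simply cites Wick's theorem where you derive it via Wick exponentials, and your normalization remark matches the convention the paper fixes in the footnote to Definition \ref{def-fock-rep-crea-ann-ops}.
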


 \begin{proof}
  It follows from the observation that the~$L^2(Q)$ inner product restricted to~$\mathcal{H}^{{:}n{:}}$ coincides with the symmetric one, by Wick's theorem (\ref{eqn-wick-inner-prod}) which we discuss below.
\end{proof}

If~$F\in \ol{\mathcal{P}}_n(\mathcal{H})$, denote by~${:}F{:}$ the projection of~$F$ onto~$\mathcal{H}^{{:}n{:}}$, and is called a \textsf{Wick ordered polynomial}; for~$F\in L^2(Q,\mathcal{O}(\mathcal{H}),\mb{P})$, denote by~$I_n(F)$ its projection onto~$\mathcal{H}^{{:}n{:}}$. Computation rules for Wick ordering generally go under the name \textsf{Wick's theorem} or \textsf{Feynman rules}. The terms in those formulas can be systematically represented by \textsf{Feynman diagrams}, and are generally expressed in terms of the \textsf{Hermite polynomials}~$h_n(x)$, defined by
\begin{equation}
  \exp\Big( zx-\frac{1}{2}z^2 \Big)=\sum_{n=0}^{\infty}\frac{z^n}{n!}h_n(x).
  \label{eqn-gen-func-hermite}
\end{equation}
We have~$h_0(x)=1$,~$h_2(x)=x^2-1$,~$h_4(x)=x^4-6x^2+3$, etc. Useful formulas include
\begin{align}
    & \mb{E}\big[{:}X_1\dots X_n{:}~{:}Y_1\dots Y_n{:}\big]=\sum_{\sigma\in\fk{S}_n}\prod_{i=1}^n \mb{E}[X_i Y_{\sigma(i)}],\label{eqn-wick-inner-prod}\\
    & {:}X^n{:}=\mb{E}\big[X^2\big]^{\frac{n}{2}} h_n\big(X\big/\mb{E}[X^2]^{\frac{1}{2}}\big),\label{eqn-hermite-wick-power}\\
    &Y{:}X_1\cdots X_n{:} ={:}YX_1\cdots X_n{:}+\sum_{j=1}^n \mb{E}\big[YX_j\big]{:}X_1\cdots\wh{X_j}\cdots X_n{:}. \label{eqn-wick-fock-rep-lemma}
\end{align}
Here~$\fk{S}_n$ is the symmetric group. Next we indicate how \textit{operators} correspond under (\ref{eqn-iso-ito-wiener-segal}).

 \begin{prop}[\cite{Janson} theorem 4.5]\label{cor-IWS-induced}
   Let~$(Q,\mathcal{O},\mb{P})$ be a probability space and~$\mathcal{H}$,~$\mathcal{K}\subset L^2(Q,\mathcal{O},\mb{P})$ two Gaussian Hilbert spaces. Denote respectively by~$\mathcal{O}(\mathcal{H})$ and~$\mathcal{O}(\mathcal{K})$ the sub-$\sigma$-algebra generated by variables in~$\mathcal{H}$ and in~$\mathcal{K}$. Let~$A:\mathcal{H}\lto \mathcal{K}$ be an operator which is a \textsf{contraction}, namely~$\nrm{A}\le 1$. Then
   \begin{equation}
     \left.
     \begin{array}{rcl}
       \Gamma(A):L^2(Q,\mathcal{O}(\mathcal{H}),\mb{P})&\lto & L^2(Q,\mathcal{O}(\mathcal{K}),\mb{P}),\\
       {:}X_1\cdots X_n{:}&\longmapsto & {:}A(X_1)\cdots A(X_n){:},
     \end{array}
     \right.
     \label{}
   \end{equation}
   for any~$X_1$, \dots,~$X_n\in \mathcal{H}$, is a bounded operator with norm~$1$. 
 \end{prop}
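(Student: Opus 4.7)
The plan is to transport everything to the abstract Fock spaces via the It\^o--Wiener--Segal isomorphism (proposition \ref{prop-ito-wiener-segal-iso}) and then reduce to the standard fact that tensor powers of a Hilbert-space contraction are contractions. More precisely, under the identifications $L^2(Q,\mathcal{O}(\mathcal{H}),\mb{P})\cong \Gamma(\mathcal{H})$ and $L^2(Q,\mathcal{O}(\mathcal{K}),\mb{P})\cong \Gamma(\mathcal{K})$ provided by (\ref{eqn-ito-wiener-segal-map}), sending ${:}X_1\cdots X_n{:}$ to $X_1\odot\cdots\odot X_n$, the candidate map $\Gamma(A)$ becomes the second-quantization functor
\begin{equation*}
\Gamma(A)\colon\bigoplus_{n\ge 0}\sym^{\otimes n}(\mathcal{H})\lto \bigoplus_{n\ge 0}\sym^{\otimes n}(\mathcal{K}),\quad X_1\odot\cdots\odot X_n\longmapsto A(X_1)\odot\cdots\odot A(X_n),
\end{equation*}
whose $n$-particle component is $A^{\odot n}:=A^{\otimes n}|_{\sym^{\otimes n}(\mathcal{H})}$. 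Well-definedness on each sector follows from the universal property of the symmetric tensor product, since $(X_1,\ldots,X_n)\mapsto A(X_1)\odot\cdots\odot A(X_n)$ is multilinear and symmetric; on the vacuum sector ($n=0$) the map is the identity on constants.

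Next I would establish the crucial estimate $\nrm{A^{\otimes n}}_{\mathcal{H}^{\otimes n}\to \mathcal{K}^{\otimes n}}\le \nrm{A}^n$. The cleanest route is to write
\begin{equation*}
\nrm{A^{\otimes n}}^2 = \nrm{(A^{\otimes n})^* A^{\otimes n}} = \nrm{(A^*A)^{\otimes n}} = \nrm{A^*A}^n = \nrm{A}^{2n},
\end{equation*}
where the equality $\nrm{B^{\otimes n}}=\nrm{B}^n$ for $B=A^*A\ge 0$ follows from the spectral theorem (the spectrum of $B^{\otimes n}$ consists of $n$-fold products of elements of $\sigma(B)$). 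Since $A^{\otimes n}$ commutes with every permutation of the tensor factors, it preserves the closed subspace $\sym^{\otimes n}(\mathcal{H})$, and the restriction $A^{\odot n}$ inherits the bound $\nrm{A^{\odot n}}\le \nrm{A}^n\le 1$.

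With this in hand, for any $\alpha=(\alpha_0,\alpha_1,\ldots)\in\Gamma(\mathcal{H})$,
\begin{equation*}
\nrm{\Gamma(A)\alpha}_{\Gamma(\mathcal{K})}^2=\sum_{n\ge 0}\nrm{A^{\odot n}\alpha_n}_{\odot n}^2\le \sum_{n\ge 0}\nrm{A}^{2n}\nrm{\alpha_n}_{\odot n}^2\le \nrm{\alpha}_{\Gamma(\mathcal{H})}^2,
\end{equation*}
so $\Gamma(A)$ extends by continuity from finite linear combinations of Wick products to a bounded operator with $\nrm{\Gamma(A)}\le 1$. Since the vacuum component is the identity on the constants, one has $\nrm{\Gamma(A)}\ge 1$ by testing on $\Omega_0$, hence equality.

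The one step requiring genuine care, and that I view as the main (small) obstacle, is to make sure the norm convention on $\sym^{\otimes n}(\mathcal{H})$ used in the definition of $\Gamma(\mathcal{H})$ matches the $L^2$ norm of the corresponding Wick polynomial in $\mathcal{H}^{{:}n{:}}$, so that the operator-norm bound on $A^{\odot n}$ transfers without a spurious combinatorial factor when interpreted through IWS. This is exactly the statement of Wick's formula (\ref{eqn-wick-inner-prod}): $\mb{E}[{:}X_1\cdots X_n{:}\,{:}Y_1\cdots Y_n{:}]=\sum_{\sigma\in\fk{S}_n}\prod_i \mb{E}[X_iY_{\sigma(i)}]$ is precisely the symmetric tensor inner product under which IWS is unitary. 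Once this compatibility is registered, the rest of the proof is the routine contraction-of-tensor-power computation outlined above.
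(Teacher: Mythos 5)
Your proof is correct: transporting $\Gamma(A)$ to Fock space via the It\^o--Wiener--Segal isometry, bounding $\|A^{\odot n}\|\le\|A^{\otimes n}\|\le\|A\|^{n}$, summing over chaoses, and noting the vacuum is fixed gives exactly norm $1$, and your remark about matching the Wick inner product of (\ref{eqn-wick-inner-prod}) with the symmetric tensor norm is precisely the point that makes the transfer legitimate. The paper offers no proof of its own here (it simply cites Janson, Theorem 4.5), and your argument is the standard second-quantization proof consistent with that reference.
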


Lastly we point out that the operation ``$\Gamma$'' has a nice functoriality property:

 \begin{prop}
   [\cite{Janson} page 45] Following the previous proposition let~$\mathcal{L}\subset L^2(Q,\mathcal{O},\mb{P})$ be another Gaussian Hilbert space and~$B:\mathcal{K}\lto \mathcal{L}$ another operator with~$\nrm{B}\le 1$. Then
   \begin{enumerate}[(i)]
     \item $\Gamma(BA)=\Gamma(B)\Gamma(A)$ as operators~$\Gamma(\mathcal{H})\to \Gamma(\mathcal{L})$;
     \item $\Gamma(A^*)=\Gamma(A)^*:\Gamma(\mathcal{K})\to \Gamma(\mathcal{H})$,~$(\bullet)^*$ being the adjoint;
     \item $\Gamma(\one_{\mathcal{H}})=\one_{\Gamma(\mathcal{H})}$ for any~$\mathcal{H}$.
   \end{enumerate}
 \end{prop}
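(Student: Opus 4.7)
The plan is to verify each assertion first on the total subset of Wick-ordered monomials ${:}X_1\cdots X_n{:}$ with $X_1,\dots,X_n\in\mathcal{H}$ (or $\mathcal{K}$), which span $L^2(Q,\mathcal{O}(\mathcal{H}),\mb{P})$ densely by the Wiener Chaos decomposition (proposition \ref{prop-wiener-chaos}), and then extend by continuity. The previous proposition already ensures that $\Gamma(A)$, $\Gamma(B)$ and $\Gamma(BA)$ are bounded operators (of norm $1$), so any bounded-linear identity established on the dense span propagates to all of $\Gamma(\mathcal{H})$.

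Claim (iii) is immediate from the formula $\Gamma(\one_{\mathcal{H}}){:}X_1\cdots X_n{:}={:}X_1\cdots X_n{:}$. For (i) I would compute directly on a Wick monomial: since $BA:\mathcal{H}\lto \mathcal{L}$ is again a contraction, the definition gives
\begin{equation}
\Gamma(BA)\,{:}X_1\cdots X_n{:}={:}(BA)(X_1)\cdots (BA)(X_n){:}={:}B(A(X_1))\cdots B(A(X_n)){:}=\Gamma(B)\,{:}A(X_1)\cdots A(X_n){:},
\end{equation}
which equals $\Gamma(B)\Gamma(A)\,{:}X_1\cdots X_n{:}$; both sides are bounded, so the identity extends.

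The main subtlety is (ii). I would first observe that $\Gamma(A)$ preserves the grading by chaos degree (it sends $\mathcal{H}^{{:}n{:}}$ into $\mathcal{K}^{{:}n{:}}$), which forces $\Gamma(A)^*$ to preserve the grading as well, since Wiener chaoses of different orders are orthogonal. The same is true of $\Gamma(A^*)$. So it suffices to match $\Gamma(A^*)$ with $\Gamma(A)^*$ on each fixed chaos level. Applying Wick's inner-product identity (\ref{eqn-wick-inner-prod}) and the defining property $\mb{E}[A(X)Y]=\langle A(X),Y\rangle_{\mathcal{K}}=\langle X,A^*(Y)\rangle_{\mathcal{H}}=\mb{E}[X\,A^*(Y)]$, one obtains
\begin{equation}
\bank{\Gamma(A)\,{:}X_1\cdots X_n{:},\,{:}Y_1\cdots Y_n{:}}=\sum_{\sigma\in\fk{S}_n}\prod_{i=1}^n \mb{E}[A(X_i)Y_{\sigma(i)}]=\sum_{\sigma\in\fk{S}_n}\prod_{i=1}^n \mb{E}[X_i\,A^*(Y_{\sigma(i)})],
\end{equation}
which by (\ref{eqn-wick-inner-prod}) applied in reverse is exactly $\bank{{:}X_1\cdots X_n{:},\,\Gamma(A^*)\,{:}Y_1\cdots Y_n{:}}$. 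The expected main obstacle is just this bookkeeping: being careful that we can identify an operator on the Fock space by its sesquilinear form on Wick monomials, and that boundedness on the dense total set allows extension. Once the three identities are checked on Wick monomials, continuity (and hence the full statements on $\Gamma(\mathcal{H})$, $\Gamma(\mathcal{K})$, $\Gamma(\mathcal{L})$) is automatic from proposition \ref{cor-IWS-induced}.
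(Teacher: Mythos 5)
Your proof is correct. The paper gives no argument of its own for this proposition (it is quoted from Janson), and your verification — checking all three identities on Wick monomials, which are total by the chaos decomposition, and then extending by the boundedness supplied by the previous proposition — is exactly the standard argument; in particular your adjoint computation is sound because each application of the Wick pairing formula involves variables lying in a single Gaussian Hilbert space ($\mathcal{K}$ in one direction, $\mathcal{H}$ in the other), so joint Gaussianity is never an issue, and equality of the bilinear forms of the two bounded operators $\Gamma(A^*)$ and $\Gamma(A)^*$ on a dense spanning set does identify them.
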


\subsection{Fock Representation of the CCR}\label{sec-intro-fock-rep}

 Let~$(Q,\mathcal{O},\mb{P})$ be a probability space and~$\mathcal{H}\subset L^2(Q,\mathcal{O},\mb{P})$ a Gaussian Hilbert space. We have seen above that in the case~$\mathcal{H}$ generates~$\mathcal{O}$,~$L^2(Q,\mathcal{O},\mb{P})$ is naturally identified with the (Boson) Fock space~$\Gamma(\mathcal{H})$ over~$\mathcal{H}$, and algebraic operations on Fock spaces are thus carried over to~$L^2(Q,\mathcal{O},\mb{P})$. In this subsection, we present one of the most important operations. We construct (Boson) \textsf{creation} and \textsf{annihilation operators} acting on~$L^2(Q)$, satisfying a set of \textsf{canonical commutation relations} (CCR), that generalizes (\ref{eqn-2d-har-osc-ccr-1}) and (\ref{eqn-2d-har-osc-ccr-2}). 
 In example \ref{exam-1d-fock-ccr} we apply the construction to the 1D case, rebuilding the operators (\ref{eqn-1d-har-osc-crea-ann-expres}, with~$m=\omega=1$) of the harmonic oscillator from the invariant measure~$\mathcal{N}(0,\frac{1}{2})$ of the Ornstein-Uhlenbeck process. This provides one more link between the two objects, furnishing (\ref{eqn-funda-relation-har-osc-orn-uhl}).

  Denote~$\Omega_0:=1_Q\in L^2(Q,\mathcal{O},\mb{P})$, and the (abstract) random variable corresponding to~$f\in \mathcal{H}$ by~$\phi(f)$.

      \begin{deef}
	[Fock representation on $L^2(Q)$] \label{def-fock-rep-crea-ann-ops} For~$f\in \mathcal{H}$ and each~$n\ge 0$ define\footnote{ Our conventions on the symmetrization operator~$\sym^{\otimes n}$ and inner product on~$\sym^{\otimes n}(\mathcal{H})$ are such as to make (\ref{eqn-ito-wiener-segal-map}) an isometry and the following definitions coincide with the usual one on Fock space. This convention is notably adopted in \cite{Jaffe-lec-note-2005}.}
	 \begin{align}
	\mn{a}(f)\big( {:}\phi(f_1)\cdots \phi(f_n){:} \big)&\defeq \sum_{j=1}^n \ank{f,f_j}_{\mathcal{H}}{:}\phi(f_1)\cdots\wh{\phi(f_j)}\cdots\phi(f_n){:}, \label{eqn-def-fock-ann}\\
	\mn{a}^{\dagger}(f)\big( {:}\phi(f_1)\cdots \phi(f_n){:} \big) &\defeq {:}\phi(f)\phi(f_1)\cdots\phi(f_n){:}.
	\label{eqn-def-fock-crea}
      \end{align}
      Moreover,~$\mn{a}^{\dagger}(f)\Omega_0:=\phi(f)={:}\phi(f){:}$ and~$\mn{a}(f)\Omega_0:=0$.
      \end{deef}

         \begin{exxx}\label{exam-1d-fock-ccr}
  We will consider~$Q=\mb{R}$ with the Borel~$\sigma$-algebra and~$\mb{P}=\gamma_{1,1}=\mathcal{N}(0,\frac{1}{2})$, that is
  \begin{equation}
    \dd\gamma_{1,1}(x)=\sqrt{\frac{1}{\pi}}\me^{-x^2}\dd \mathcal{L}_{\mb{R}}(x),
    \label{}
  \end{equation}
  here~$\mathcal{L}_{\mb{R}}$ being the Lebesgue measure on~$\mb{R}$. Set $\mathcal{H}=\spn\{x\}$.
  Since~$x$ generates the Borel~$\sigma$-algebra, we have~$L^2(\mb{R},\gamma_{1,1})=\Gamma(\mathcal{H})$. The normalized basis of~$\mathcal{H}$ consists of~$X=\sqrt{2}x$ and thus by (\ref{eqn-hermite-wick-power}),
  \begin{equation}
    {:}X^n{:}=h_n(X)=h_n(\sqrt{2}x).
    \label{}
  \end{equation}
  In this setting, let
  \begin{equation}
    A\defeq \mn{a}(X)=\mn{a}(\sqrt{2}x),\quad\quad A^{\dagger}\defeq \mn{a}^{\dagger}(X)=\mn{a}^{\dagger}(\sqrt{2}x),
    \label{}
  \end{equation}
  then~$[A,A^{\dagger}]=\one$ by computing with Hermite polynomials. Indeed, from (\ref{eqn-gen-func-hermite}) we have recurrence formulas
  \begin{equation}
    \left\{
    \begin{array}{l}
      h_n'(\theta)=nh_{n-1}(\theta),\\
      h_{n+1}(\theta)=\theta h_n(\theta)-h_n'(\theta).
    \end{array}
    \right.
  \end{equation}
  This, via the prescriptions (\ref{eqn-def-fock-ann}) and (\ref{eqn-def-fock-crea}), gives
  \begin{equation}
    Ag=\frac{1}{\sqrt{2}}\frac{\dd g}{\dd x},\quad\quad  A^{\dagger}g=\sqrt{2}xg-\frac{1}{\sqrt{2}}\frac{\dd g}{\dd x},
    \label{}
  \end{equation}
  with~$g$ being any finite linear combination of~$\{h_n(\sqrt{2}x)\}$. To fully recover (\ref{eqn-1d-har-osc-crea-ann-expres}), we need to conjugate by the ground state transform defined in (\ref{eqn-def-ground-state-trans}) since the natural Hilbert space for quantum mechanics is $L^2(\mb{R},\mathcal{L}_{\mb{R}})$. Doing this, we find
  \begin{equation}
    \Omega_0 A\Omega_0^{-1}=\frac{1}{\sqrt{2}}\Big( \frac{\dd}{\dd x}+x \Big)=\mn{A}_{1,1},\quad\quad
    \Omega_0 A^{\dagger}\Omega_0^{-1}=\frac{1}{\sqrt{2}}\Big( -\frac{\dd}{\dd x}+x \Big)=\mn{A}_{1,1}^{\dagger},
    \label{}
  \end{equation}
  acting on~$L^2(\mb{R},\mathcal{L}_{\mb{R}})$.
\end{exxx}

In general, by using (\ref{eqn-wick-inner-prod}) and the definitions one can show the following.

      \begin{prop}\label{prop-intro-fock-rep-on-l2}
	With the above definitions (\ref{eqn-def-fock-ann}), (\ref{eqn-def-fock-crea}) we have
	\begin{equation}
	  \bank{\mn{a}^{\dagger}(f) F,G}_{L^2(Q)}=\bank{F,\mn{a}(f) G}_{L^2(Q)}
	  \label{eqn-ccr-conjugate}
	\end{equation}
	for~$F$,~$G\in \ol{\mathcal{P}}_N(\mathcal{H})$ for some~$N<\infty$, and
	\begin{align}
	  [\mn{a}(f),\mn{a}(h)]&=[\mn{a}^{\dagger}(f),\mn{a}^{\dagger}(h)]=0, \label{eqn-ccr-commutator-0}\\
	  [\mn{a}(f),\mn{a}^{\dagger}(h)]&=\ank{f,h}_{\mathcal{H}},
	  \label{eqn-ccr-commutator}
	\end{align}
	for~$f$,~$h\in \mathcal{H}$.\hfill~$\Box$
      \end{prop}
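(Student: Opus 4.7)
The plan is to verify the three identities on the dense subspace $\bigcup_N \ol{\mathcal{P}}_N(\mathcal{H})$ spanned by Wick-ordered monomials ${:}\phi(f_1)\cdots\phi(f_n){:}$, whereupon the relations extend by linearity and boundedness on each chaos. Orthogonality of the Wiener chaoses $\mathcal{H}^{{:}n{:}}$ already shows that both sides of (\ref{eqn-ccr-conjugate}) vanish unless the Wick degrees of $F$ and $G$ differ by one, so one may reduce to $F={:}\phi(f_1)\cdots\phi(f_n){:}$ and $G={:}\phi(h_1)\cdots\phi(h_{n+1}){:}$.

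For (\ref{eqn-ccr-conjugate}), I would expand both sides using the Wick inner product formula (\ref{eqn-wick-inner-prod}), which expresses the inner product of two Wick monomials of equal length as a sum over pairings. Regrouping that sum on the left-hand side according to which factor $\phi(h_k)$ of $G$ is paired with $\phi(f)$ yields
\[
\bank{\mn{a}^{\dagger}(f)F,G}_{L^2(Q)} = \sum_{k=1}^{n+1} \mb{E}[\phi(f)\phi(h_k)]\, \bank{F,{:}\phi(h_1)\cdots\wh{\phi(h_k)}\cdots\phi(h_{n+1}){:}}_{L^2(Q)},
\]
where the inner factor is again recognized, via (\ref{eqn-wick-inner-prod}), as a Wick inner product of two $n$-fold products. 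Since $\phi:\mathcal{H}\lto L^2(Q)$ is an isometry onto a Gaussian subspace, $\mb{E}[\phi(f)\phi(h_k)]=\ank{f,h_k}_{\mathcal{H}}$, and the resulting expression matches exactly $\bank{F,\mn{a}(f)G}_{L^2(Q)}$ by the definition (\ref{eqn-def-fock-ann}) of $\mn{a}(f)$.

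For the commutator identities (\ref{eqn-ccr-commutator-0}) and (\ref{eqn-ccr-commutator}), I would compute the action on $F={:}\phi(f_1)\cdots\phi(f_n){:}$ directly from (\ref{eqn-def-fock-ann})--(\ref{eqn-def-fock-crea}). The equality $[\mn{a}^{\dagger}(f),\mn{a}^{\dagger}(h)]F=0$ is immediate from the symmetry of Wick ordering under permutations of the arguments; $[\mn{a}(f),\mn{a}(h)]F=0$ follows because the iterated annihilation produces the double sum $\sum_{i\ne j}\ank{f,f_i}_{\mathcal{H}}\ank{h,f_j}_{\mathcal{H}}{:}\phi(f_1)\cdots\wh{\phi(f_i)}\cdots\wh{\phi(f_j)}\cdots\phi(f_n){:}$, which is symmetric under $f\leftrightarrow h$. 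For (\ref{eqn-ccr-commutator}), a single computation gives
\[
\mn{a}(f)\mn{a}^{\dagger}(h)F = \ank{f,h}_{\mathcal{H}}F + \sum_{j=1}^n \ank{f,f_j}_{\mathcal{H}}{:}\phi(h)\phi(f_1)\cdots\wh{\phi(f_j)}\cdots\phi(f_n){:},
\]
where the first term comes from annihilating the newly inserted $\phi(h)$ and the sum is exactly $\mn{a}^{\dagger}(h)\mn{a}(f)F$; subtraction yields $[\mn{a}(f),\mn{a}^{\dagger}(h)]F=\ank{f,h}_{\mathcal{H}}F$.

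No step is conceptually difficult; the principal obstacle is the bookkeeping in the adjoint identity, namely confirming that the permutation sum supplied by Wick's theorem reassembles correctly after the regrouping over $k$. A small preliminary point is to check that $\mn{a}(f)$ and $\mn{a}^{\dagger}(f)$ are well-defined, i.e.\ that their prescribed values on a Wick monomial depend only on the underlying vector in $\mathcal{H}^{{:}n{:}}$; this is ensured by the It\^o-Wiener-Segal isomorphism (proposition \ref{prop-ito-wiener-segal-iso}) which identifies those monomials with symmetric tensors $f_1\odot\cdots\odot f_n$, on which (\ref{eqn-def-fock-ann})--(\ref{eqn-def-fock-crea}) act as the standard creation/annihilation operators of the symmetric Fock space $\Gamma(\mathcal{H})$.
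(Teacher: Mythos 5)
Your proof is correct and follows exactly the route the paper indicates (the proposition is stated with proof omitted, preceded by the remark that it follows ``by using (\ref{eqn-wick-inner-prod}) and the definitions''): you expand the adjoint identity via the Wick pairing formula and regroup over the factor paired with $\phi(f)$, and verify the commutators by direct computation on Wick monomials, extending by linearity and density in each chaos. The well-definedness remark via proposition \ref{prop-ito-wiener-segal-iso} is a sensible addition, and nothing further is needed.
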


 By (\ref{eqn-wick-fock-rep-lemma}) the random variable~$\phi(f)$ acting as a multiplication operator on each~$\mathcal{H}^{{:}n{:}}$ satisfies
  \begin{equation}
    \phi(f)=\mn{a}^{\dagger}(f)+\mn{a}(f).
    \label{}
  \end{equation}
  Using further Wick formulas, one can show (\cite{Sim2} pp.\ 24) that the multiplication by~${:}\phi(f)^n{:}$ is accordingly
  \begin{equation}
    {:}\phi(f)^n{:}=\sum_{p=0}^n \binom{n}{p}\mn{a}^{\dagger}(f)^{n-p}\mn{a}(f)^p.
    \label{eqn-wick-binomial-crea-left-ann-right}
  \end{equation}
  Therefore, the Wick ordering~${:}\bullet{:}$ amounts to \textit{moving annihilation operators to the right}, and \textit{creation operators to the left}. In regard to the important remark \ref{rem-necessary-crea-ann-har-osc-1d}, we notice that a common domain for all the~$\mn{a}$,~$\mn{a}^{\dagger}$'s is
\begin{equation}
  \mathcal{D}_{\mm{Fk}}\defeq \spn\big\{f\in L^2(Q)~\big|~f\in \mathcal{H}^{{:}N{:}}\textrm{ for some }N<\infty \big\},
  \label{}
\end{equation}
which by proposition \ref{prop-wiener-chaos} is \textit{dense} in~$L^2(Q)$. We end our discussion with the following definition.

\begin{deef}[\cite{GJ} definition 6.3.2] \label{def-fock-rep-abs}
  Let~$\mathcal{S}$ be a real pre-Hilbert space with inner product~$\ank{\cdot,\cdot}_{\mathcal{S}}$ and~$\mss{H}$ a complex Hilbert space. Then a \textsf{Fock representation of the CCR} over~$\mathcal{S}$ on~$\mss{H}$ is a pair of linear maps
  \begin{equation}
    \left.
    \begin{array}{rcl}
      \mn{a},\mn{a}^{\dagger}:\mathcal{S} &\lto & 
      \left\{  
      \begin{array}{c}
	(\textrm{\small Unbounded}) \\ \textrm{\small Operators on }\mss{H}
      \end{array}
    \right\} \\ [+15pt]
    f&\longmapsto & \mn{a}(f),\mn{a}^{\dagger}(f).
    \end{array}
    \right.
    \label{}
  \end{equation}
  such that
  \begin{enumerate}[(i)]
    \item there is a unit vector~$\Omega\in \mss{H}$ with~$\mn{a}(f)\Omega=0$ for all~$f\in \mathcal{S}$;
    \item all of~$\mn{a}(f)$ and~$\mn{a}^{\dagger}(f)$,~$f\in \mathcal{S}$, have the common domain
      \begin{equation}
	\mss{D}_{\mm{Fk}}=\spn \big\{\mn{a}^{\dagger}(f_1)\cdots \mn{a}^{\dagger}(f_n)\Omega~\big|~f_1,\cdots,f_n\in \mathcal{S},~n\in\mb{Z}_{\ge 0} \big\}
	\label{}
      \end{equation}
      which is dense in~$\mss{H}$;
    \item we have~$\mn{a}(f)\mss{D}_{\mm{Fk}}\subset \mss{D}_{\mm{Fk}}$,~$\mn{a}^{\dagger}(f)\mss{D}_{\mm{Fk}}\subset \mss{D}_{\mm{Fk}}$, (\ref{eqn-ccr-conjugate}), (\ref{eqn-ccr-commutator-0}), and (\ref{eqn-ccr-commutator}) for all~$f$,~$h\in \mathcal{S}$,~$F$,~$G\in \mss{D}_{\mm{Fk}}$, with~$\mss{H}$ in place of~$L^2(Q)$ and~$\mathcal{S}$ in place of~$\mathcal{H}$.
  \end{enumerate}
\end{deef}

  \begin{def7}\label{rem-fock-rep-unique}
  The Fock representation over the same real pre-Hilbert space is unique up to unitary equivalence (\cite{GJ} theorem 6.3.4).
\end{def7}

\begin{def7}
For parallel results on Fermions, see \cite{CHP} section 2. These are carried out in the context of \textit{non-commutative} probability theory.
\end{def7}

\subsection{General Gaussian Measures}\label{sec-intro-general-gauss-meas}

In this subsection we discuss general Gaussian measures on (real) separable Fr\'echet spaces: they are uniquely determined by their \textit{covariance structures} (cf.\ (\ref{eqn-intro-covariance-massless-gff}), assuming they are all centered), and on the other hand, on a special class of spaces we could always construct a Gaussian measure with a given covariance structure (Minlos' theorem). Readers focused on quantization may take these two statements for granted and skip this subsection. The main reference is \cite{Bogachev}, but see also \cite{HairerSPDE} section 4 for a shorter and lucid account.

\begin{deef}\label{def-gaussian-measure}
  A Borel probability measure~$\gamma$ on a locally convex topological vector space~$\mathcal{X}$ is called (centered) \textsf{Gaussian} if every linear functional~$f\in \mathcal{X}^*$,~$f:\mathcal{X}\lto \mb{R}$, is a (centered) Gaussian random variable on~$\mathcal{X}$. 
\end{deef}

In the general situation we will also denote the random variable represented by the linear functional~$f$ by~$\phi(f)$ or~$\ank{\phi,f}$, thinking of~$\phi$ as a random vector following~$\gamma$ in~$\mathcal{X}$. An important tool for identifying Gaussian measures is the (inverse) Fourier transform, more frequently called the \textsf{charateristic function},~$\wh{\gamma}:\mathcal{X}^*\lto \mb{C}$, defined as
\begin{equation}
  \wh{\gamma}(f)\defeq \int_{\mathcal{X}}^{}\me^{\ii \phi(f)}\dd\gamma(\phi).
  \label{}
\end{equation}
This is useful because of the following property.
\begin{lemm}[\cite{Bogachev} theorem 2.2.4, proposition A.3.18]  \label{lemm-gaus-mea-char-func}
  Let~$\mathcal{X}$ be a real separable Fr\'echet (or Banach) space, then any two Borel measures on~$\mathcal{X}$ with equal characteristic functions coincide. In particular, a Borel probability measure~$\gamma$ on~$\mathcal{X}$ is a centered Gaussian measure if and only if
  \begin{equation}
    \wh{\gamma}(f)=\me^{-\frac{1}{2}\mb{E}_{\gamma}[\phi(f)^2]}
    \label{eqn-gaus-mea-char-func}
  \end{equation}
  for all~$f\in \mathcal{X}^*$, where~$\mb{E}_{\gamma}$ is the expectation with respect to~$\gamma$.
\end{lemm}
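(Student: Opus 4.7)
The plan is to split the lemma into its two assertions and handle them separately. The uniqueness statement is the main measure-theoretic input; the Gaussian characterization then follows almost formally by reduction to the one-dimensional case.

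For the \emph{uniqueness} part, my approach would be to reduce to the classical one-dimensional fact that the characteristic function determines a finite Borel measure on $\mb{R}^n$. For any finite tuple $f_1,\dots,f_n\in\mathcal{X}^*$ the evaluation map $\Phi_{f_1,\dots,f_n}:\mathcal{X}\lto\mb{R}^n$, $\phi\mapsto(\phi(f_1),\dots,\phi(f_n))$, is continuous, hence Borel. Its pushforward of $\gamma$ is a finite Borel measure on $\mb{R}^n$ whose $n$-dimensional characteristic function at $(t_1,\dots,t_n)$ is precisely $\wh\gamma(t_1 f_1+\dots+t_n f_n)$, by linearity of $\phi\mapsto\phi(f)$. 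Thus knowing $\wh\gamma$ determines every such finite-dimensional marginal, and in turn its values on all \emph{cylinder sets} $\Phi_{f_1,\dots,f_n}^{-1}(B)$ with $B\in\mathcal{B}_{\mb{R}^n}$.

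To promote this to agreement on the whole Borel $\sigma$-algebra, I would invoke the $\pi$-$\lambda$ theorem with the $\pi$-system of all cylinder sets. The content lies in showing that this $\pi$-system \emph{generates} the Borel $\sigma$-algebra of $\mathcal{X}$, which is where separability of the Fréchet space is used: by Hahn-Banach applied to the family of continuous seminorms defining the topology, one can extract a countable family $\{f_n\}\subset\mathcal{X}^*$ that separates points of $\mathcal{X}$ and controls the seminorms, so that every closed ball (in every seminorm) is a countable intersection of half-spaces $\{|\phi(f_n)|\le r\}$. Since a countable base of the topology can be built from such balls, every Borel set lies in the $\sigma$-algebra generated by cylinder sets. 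This is the main obstacle, and the argument must be adapted slightly from Banach to Fréchet (countable family of seminorms).

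For the \emph{Gaussian characterization}, the forward direction is immediate from Definition \ref{def-gaussian-measure}: each $\phi(f)$ is a centered Gaussian random variable with variance $\sigma_f^2:=\mb{E}_\gamma[\phi(f)^2]$, whence $\wh\gamma(f)=\mb{E}_\gamma[\me^{\ii\phi(f)}]=\me^{-\sigma_f^2/2}$ by the one-dimensional formula. For the converse, fix $f\in\mathcal{X}^*$; the characteristic function of the real random variable $\phi(f)$ evaluated at $t\in\mb{R}$ is, by linearity,
\begin{equation*}
  \mb{E}_\gamma[\me^{\ii t\phi(f)}]=\wh\gamma(tf)=\me^{-\frac{1}{2}\mb{E}_\gamma[\phi(tf)^2]}=\me^{-\frac{t^2}{2}\mb{E}_\gamma[\phi(f)^2]},
\end{equation*}
which coincides with the characteristic function of $\mathcal{N}(0,\mb{E}_\gamma[\phi(f)^2])$. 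By one-dimensional uniqueness, $\phi(f)$ is itself a centered Gaussian, so $\gamma$ meets Definition \ref{def-gaussian-measure}. This direction is formal modulo the one-dimensional case; all of the analytic work is concentrated in the separability-based reduction to cylinder sets in the first part.
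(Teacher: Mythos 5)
Your proof is correct, and since the paper itself gives no argument for this lemma (it simply cites Bogachev, Theorem 2.2.4 and Proposition A.3.18), the right comparison is with that standard proof — which yours essentially reproduces: reduction of uniqueness to finite-dimensional marginals plus the fact that, by separability, the Borel $\sigma$-algebra of $\mathcal{X}$ coincides with the $\sigma$-algebra generated by $\mathcal{X}^*$, followed by the one-dimensional reduction for the Gaussian characterization. The only point to state a bit more carefully is the extraction of the countable family $\{f_n\}$ with $p=\sup_n|f_n|$ for each continuous seminorm $p$ (Hahn--Banach applied at the points of a countable dense set, or weak-$*$ separability of the polar), but this is exactly the separability input you already identified as the crux.
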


\begin{corr}
    Let~$\mathcal{X}$ be a real separable Fr\'echet (or Banach) space, then any two centered Borel Gaussian measures on~$\mathcal{X}$ with the same \textsf{covariance structure} must coincide; that is, if
   \begin{equation}
     \mb{E}_{\gamma}[\phi(f)\phi(h)]=\mb{E}_{\tilde{\gamma}}[\phi(f)\phi(h)]
     \label{}
   \end{equation}
   for all~$f$,~$h\in \mathcal{X}^*$, then~$\gamma=\tilde{\gamma}$. \hfill~$\Box$
\end{corr}

 Now we turn to the existence result which applies to a class of Fr\'echet topological vector spaces called \textsf{nuclear spaces}. While postponing their definition to appendix \ref{sec-app-nuclear-space}, we point out that the two most typical examples of nuclear spaces are $\mathcal{S}(\mb{R}^d)$, the Schwartz functions on $\mb{R}^d$, and $C^{\infty}(M)$, the smooth functions on a closed smooth manifold. Their topological duals are accordingly $\mathcal{S}'(\mb{R}^d)$ and $\mathcal{D}'(M)$. The main reference is \cite{GV} section IV.3.2 theorem 2 (chapter I for nuclear space). Helpful information can be found in \cite{BHL2} section 1.4.2, \cite{BS} sections 5.10-11, \cite{GJ} appendix A.4.

\begin{prop}
  [Minlos] \label{prop-minlos-theo} Let~$\mathcal{H}_{\infty}$ be a nuclear space and~$\mathcal{H}_{-\infty}$ its topological dual space. Then given a continuous positive definite Hermitian bilinear form (indeed, an \textsf{inner product})~$\mathcal{C}$ on~$\mathcal{H}_{\infty}\times \mathcal{H}_{\infty}$, there exists a unique (centered) Borel Gaussian probability measure~$\mu_{\mathcal{C}}$ on~$\mathcal{H}_{-\infty}$ such that
  \begin{equation}
    \mb{E}_{\mu}[\phi(f)\phi(h)]=\mathcal{C}(f,h)
    \label{}
  \end{equation}
  for every~$f$,~$h\in \mathcal{H}_{\infty}$.
\end{prop}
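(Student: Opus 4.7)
The plan is to apply the Bochner-Minlos theorem to a well-chosen candidate characteristic functional, and then verify that the resulting Borel measure is indeed Gaussian with the prescribed covariance. Uniqueness is immediate from the corollary following lemma \ref{lemm-gaus-mea-char-func}: any two centered Gaussian measures with the same covariance coincide, so only existence requires genuine work.

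First I would introduce the candidate characteristic functional $\chi:\mathcal{H}_\infty\to\mathbb{C}$,
$$ \chi(f) \;:=\; \exp\!\Bigl(-\tfrac{1}{2}\mathcal{C}(f,f)\Bigr), $$
and check that it satisfies the three hypotheses of Bochner-Minlos: (i) $\chi(0)=1$, trivial; (ii) continuity on $\mathcal{H}_\infty$, which follows from the assumed continuity of $\mathcal{C}$; and (iii) positive definiteness, that is $\sum_{j,k}\lambda_j\overline{\lambda_k}\,\chi(f_j-f_k)\ge 0$ for all finite families $(f_j)\subset\mathcal{H}_\infty$ and $(\lambda_j)\subset\mathbb{C}$. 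Point (iii) is the classical fact that Gaussian functions are positive definite: restricting $\chi$ to the finite-dimensional span $V:=\mathrm{span}\{f_1,\dots,f_n\}$ and diagonalizing $\mathcal{C}|_{V\times V}$ (which is positive definite by assumption), one recognizes $\chi|_V$ as the Fourier transform of the honest finite-dimensional Gaussian measure on $V^{*}$ furnished by formula (\ref{eqn-fini-dim-gaussian-measure-expr}), so positive definiteness is inherited from that of the transform of a genuine probability measure.

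The main obstacle is the Bochner-Minlos theorem itself, which promotes such a $\chi$ on a nuclear space to a genuine Borel probability measure on $\mathcal{H}_{-\infty}$. The strategy is first to apply classical Bochner on every finite-dimensional quotient $\mathcal{H}_\infty/W$ (with $W\subset\mathcal{H}_\infty$ of finite codimension): for each such $W$, the functional $\chi$ descends to a continuous positive definite function on a Euclidean space and hence is the Fourier transform of a probability measure on its dual; compatibility as $W$ shrinks produces a cylindrical premeasure on $\mathcal{H}_{-\infty}$. The non-trivial step is upgrading this cylindrical premeasure to a countably additive Borel measure, and this is exactly where \emph{nuclearity} enters in an essential way: by definition, any continuous Hilbert seminorm $p$ on $\mathcal{H}_\infty$ is dominated by another continuous Hilbert seminorm $q$ through a Hilbert-Schmidt (in fact nuclear) inclusion of the corresponding Hilbert completions. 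Combined with the continuity of $\chi$, one derives uniform tightness estimates showing that the cylindrical measure concentrates on balls of some sufficiently weak Hilbert completion sitting inside $\mathcal{H}_{-\infty}$; $\sigma$-additivity on the algebra of cylinder sets follows, and Carath\'eodory extension (equivalently Prokhorov's theorem applied on the weak Hilbert completion) yields the desired Borel probability measure $\mu_{\mathcal{C}}$.

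Finally, once $\mu_{\mathcal{C}}$ has been constructed, it remains to identify it. For each $f\in\mathcal{H}_\infty$, the pushforward of $\mu_{\mathcal{C}}$ under the linear functional $\phi\mapsto\phi(f)$ has characteristic function $t\mapsto \chi(tf) = \exp(-\tfrac{1}{2}t^2\,\mathcal{C}(f,f))$, which is the characteristic function of $\mathcal{N}(0,\mathcal{C}(f,f))$; hence every $\phi(f)$ is a centered Gaussian, so $\mu_{\mathcal{C}}$ is Gaussian in the sense of definition \ref{def-gaussian-measure}. A polarization argument applied to $\mathbb{E}_{\mu_{\mathcal{C}}}[\phi(f+h)^2] - \mathbb{E}_{\mu_{\mathcal{C}}}[\phi(f-h)^2]$ then yields $\mathbb{E}_{\mu_{\mathcal{C}}}[\phi(f)\phi(h)] = \mathcal{C}(f,h)$, completing the proof.
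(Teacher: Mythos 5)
The paper does not give an in-house proof of this proposition at all: the ``proof'' is a pointer to the standard references (Gelfand--Vilenkin, and the appendices of \cite{GJ}, \cite{BHL2}), and your proposal is essentially a faithful reconstruction of the argument contained in those references --- Gaussian characteristic functional, finite-dimensional Bochner on quotients, nuclearity to upgrade the cylindrical premeasure to a countably additive Borel measure, then identification of the one-dimensional marginals and polarization. So in substance your route coincides with what the paper relies on, and the outline is correct. One small caveat: you justify uniqueness by appealing to the corollary of lemma \ref{lemm-gaus-mea-char-func}, but that corollary is stated for real separable Fr\'echet (or Banach) spaces, whereas $\mathcal{H}_{-\infty}$ (e.g.\ $\mathcal{S}'(\mathbb{R}^d)$ or $\mathcal{D}'(M)$) is the strong dual of a nuclear Fr\'echet space and is not itself Fr\'echet, so the hypothesis is not literally met. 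The conclusion is still fine, but the clean way to say it is that on $\mathcal{H}_{-\infty}$ the Borel $\sigma$-algebra is generated by the cylinder sets $\{\phi : (\phi(f_1),\dots,\phi(f_n))\in B\}$, and the characteristic functional determines the measure on each finite-dimensional cylinder via classical Fourier uniqueness on $\mathbb{R}^n$, hence on the whole Borel $\sigma$-algebra; alternatively, uniqueness is already part of the Minlos statement in the cited sources. With that repair your proposal is complete at the level of detail one would expect for this appendix-type statement.
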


\begin{proof}
    See \cite{BHL2} section 1.4.2, \cite{GJ} appendix A.4.
\end{proof}

\section{Quantization on the Cylinder $\mb{R}\times \mb{S}^1$}\label{sec-intro-field-2d}

In this section we traverse through the second line of development in the map \ref{fig-intro-road-map}, for the free real scalar field over the circle, taken as space, and the cylinder, taken as space-time. This is the simplest and most fundamental type of field theory. Most of subsections \ref{sec-intro-class-field-theory} and \ref{sec-intro-free-field-cyl-canon-quant} are standard introductory graduate-level materials in the physics department, see e.g.\ \cite{PS} and \cite{Coleman}. Our presentation, however, will have slightly different flavor and emphasis, oriented towards the aspects that could be mathematically ``rigorized''. In addition, in subsection \ref{sec-intro-free-field-cyl-canon-quant} (cf.\ (\ref{eqn-slice-gauss-measure-expression})) we point out a precise relationship between the ``ground state'' of the free field on the Minkowski cylinder with a Gaussian measure on $\mathcal{D}'(\mb{S}^1)$ analogous to the relationship (\ref{eqn-orns-uhl-ground-state-gauss-meas}) in the 1D case.

\subsection{From Particles to Fields}\label{sec-intro-class-field-theory}

In this subsection we describe the classical field theory of a free real scalar field over the circle~$\mb{S}^1$. We shall say that a typical \textsf{field configuration}~$\varphi$ belongs to $\mm{Map}(\mb{S}^1,\mb{R})$, where by writing ``Map'' we mean functions on which we deliberately do not specify the regularity, as the purpose of this subsection is rather heuristic. More precisely, we will proceed by the following analogy with the point particle in~$\mb{R}^d$ described in subsection \ref{sec-intro-class-mec}.
\begin{center}
  \begin{tabular}[]{|c|c|c|}
    \hline 
    & \textbf{\textsf{real scalar field}} & \textbf{\textsf{particle in~$\mb{R}^d$}} \\ \hline
    configuration space & $\mm{Map}(\mb{S}^1,\mb{R})$ & $\mb{R}^d$ \\ \hline
    \begin{tabular}[]{c}
      ``position-velocity''\\
       space
    \end{tabular} & $T\mm{Map}(\mb{S}^1,\mb{R})\cong \mm{Map}(\mb{S}^1,\mb{R})\times \mm{Map}(\mb{S}^1,\mb{R})$ & $T\mb{R}^d$ \\ \hline
    phase space
    & $T^*\mm{Map}(\mb{S}^1,\mb{R})\cong \mm{Map}(\mb{S}^1,\mb{R})\times \mm{Map}(\mb{S}^1,\mb{R})$ & $T^*\mb{R}^d$ \\ \hline 
\begin{tabular}[]{c}
      index for\\
       ``coordinate''
     \end{tabular} & $\theta\in\mb{S}^1$ or $n\in\mb{N}$ (see example \ref{exam-intro-class-field-fourier}) & $1\le i\le d$ \\ \hline
  \end{tabular}
\end{center}

\begin{def7}
The real scalar field is a simplest type of field that one can consider. The ``configuration'' spaces of other types of fields (over some manifold~$M$) include the section space of a complex vector bundle over~$M$ (e.g.\ for \textsf{spinor fields}), the space~$\mm{Map}(M,N)$ of maps into a certain Riemannian manifold~$N$ ($\sigma$\textsf{-models}), or space of connections on a principal bundle associated to a compact Lie group~$G$ over~$M$ (\textsf{gauge fields}).
 Moreover, different types of fields can be ``coupled'' (e.g.\ spinor field to a gauge field). We refer to \cite{Rudolph-Schmidt-2017-ii} chapter 7 for some detailed discussion of modern field theory models in mathematical language. 
 \end{def7}
 
Now we give the defining data of the real scalar free field as follows.
\begin{align}
&\textrm{\textsf{Lagrangian density}}: &\mss{L}(\varphi,\dot{\varphi})(\theta)&\defeq \frac{\beta}{4\pi} \big( \dot{\varphi}(\theta)^2-(\partial_{\theta}\varphi)(\theta)^2-m^2\varphi(\theta)^2 \big),\\
     &\textrm{\textsf{Lagrangian}}: & L(\varphi,\dot{\varphi})&\defeq \int_{\mb{S}^1}^{}\mss{L}(\varphi,\dot{\varphi})(\theta)\,\dd \theta=\frac{\beta}{4\pi}\int_{\mb{S}^1}^{}\big( \dot{\varphi}(\theta)^2-(\partial_{\theta}\varphi)(\theta)^2-m^2\varphi(\theta)^2 \big)\,\dd\theta\\
  &\textrm{\textsf{Action over }}[0,T]: & S_{[0,T]}^{\mm{free}}(\phi)&\defeq \frac{\beta}{4\pi}\int_{0}^{T}\int_{\mb{S}^1}^{}\big( (\partial_t\phi)^2-(\partial_{\theta}\phi)^2 -m^2\phi^2\big)\,\dd\theta\,\dd t.
  \label{}
\end{align}
Here $m\ge 0$, $\beta>0$ are some fixed parameters, and $\phi:[0,T]\lto \mm{Map}(\mb{S}^1,\mb{R})$ or equivalently $[0,T]\times\mb{S}^1\lto\mb{R}$, is the trajectory through which a configuration in $\mm{Map}(\mb{S}^1,\mb{R})$ evolves over $[0,T]$. 

Below, we discuss this model for two choices of $m$, $\beta$ with different purposes: in example \ref{exam-intro-class-field-klein-gordon} we let $m>0$, $\beta=2\pi$ and obtain the Hamiltonian picture and Newtonian equation of motion, which results in an analogue of the \textit{Klein-Gordon} equation on the circle; then in example \ref{exam-intro-class-field-fourier} we describe this model in a special set of coordinates on $\mm{Map}(\mb{S}^1,\mb{R})$, in the case $m=0$, which prepares for the next subsection.

\begin{exxx}
    [$m>0$, $\beta=2\pi$, Klein-Gordon]\label{exam-intro-class-field-klein-gordon}
 To transit to the Hamiltonian picture, the first step is to find an analogue of the ``canonical momentum'' of the point particle in $\mb{R}^d$. Before doing that, it would be beneficial to go back and make the analogy more precise. The convenient point of view here is to treat~$\varphi(\theta)$ as an ordinary coordinate for~$\varphi$ just as~$x^i$ is for~$\mn{x}\in\mb{R}^d$, and taking the~$\theta$-derivative on~$\mm{Map}(\mb{S}^1,\mb{R})$ could be compared to a linear map~$T$ on~$\mb{R}^d$. Now a precise analogy of our Lagrangian for a particle in~$\mb{R}^d$ would be
 \begin{equation}
   L(\mn{x},\mn{v})=\frac{1}{2}\sum_i \big[(v^i)^2 -((T\mn{x})^i)^2 -m^2(x^i)^2\big].
   \label{}
 \end{equation}
 The good news is that whatever~$T$ may be,~$T\mn{x}$ does not involve any~$v^i$. So the canonical momentum conjugate to the coordinate~$x^i$ is as usual~$p_i=\partial L/\partial (v^i) =v^i$. Going back to our scalar field, we obtain by direct analogy the \textsf{momentum density}~$\pi(\theta)=\dot{\varphi}(\theta)$. Now by comparing (\ref{eqn-ham-class-part-potential}) with (\ref{eqn-lag-general-potential}), and then with our case, we also write down by direct analogy the \textsf{Hamiltonian}
\begin{equation}
  H(\varphi,\pi)=\frac{1}{2}\int_{\mb{S}^1}^{}\big( 
  |\pi(\theta)|^2+|\partial_{\theta}\varphi(\theta)|^2 +m^2|\varphi(\theta)|^2 \big) \,\dd\theta.
  \label{}
\end{equation}
Next, what is the Hamiltonian equation of motion? To continue with our analogy, we now need to find the infinite dimensional gradient of the potential
\begin{equation}
  V(\varphi)=\frac{1}{2}\int_{\mb{S}^1}^{}(|\partial_{\theta}\varphi (\theta)|^2+m^2\varphi^2(\theta))\,\dd \theta.
  \label{}
\end{equation}
To proceed, we notice that the~$L^2$ inner product on~$\mm{Map}(\mb{S}^1,\mb{R})$ now plays the role of the standard inner product on~$\mb{R}^d$ (more precisely, the Riemannian metric on the corresponding tangent spaces). Therefore, by definition of the ``gradient'',
\begin{equation}
  \ank{\nabla V(\varphi),h}_{L^2(\mb{S}^1)}\defeq (\DD_h V)(\varphi)= \lim_{\varepsilon\to 0}\frac{1}{\varepsilon}(V(\varphi+\varepsilon h)-V(\varphi)),
  \label{}
\end{equation}
where~$\DD_h V$ denotes the ``directional derivative'' in the direction of the ``test function''~$h\in \mm{Map}(\mb{S}^1,\mb{R})$. Applying the Green-Stokes formula we get
\begin{equation}
  V(\varphi+\varepsilon h)-V(\varphi) =\varepsilon\int_{\mb{S}^1}^{}\big((-\partial_{\theta}^2 \varphi)h +m^2 \varphi h\big)\,\dd \theta+ \mathcal{O}(\varepsilon^2).
  \label{}
\end{equation}
Therefore,~$\nabla V(\varphi)=(-\partial_{\theta}^2 +m^2)\varphi$. We could then proceed to write down the Hamiltonian as well as Newtonian equations. The Newtonian equation is
 \begin{equation}
   \Big( \frac{\partial^2}{\partial t^2} - \frac{\partial^2}{\partial \theta^2}+m^2 \Big) \varphi=0,
   \label{}
 \end{equation}
 which is an analogue of the \textsf{Klein-Gordon} equation on the circle. Lastly, we mention that the ``value'' of~$\nabla V(\varphi)$ at~$\theta\in \mb{S}^1$ corresponds heuristically to the case~$h=\delta_{\theta}$, the delta function supported at~$\theta$. The commonly used notation is
 \begin{equation}
   \frac{\delta V}{\delta \varphi(\theta)}\defeq (\DD_{\delta_{\theta}} V)(\varphi).
   \label{}
 \end{equation}
 The~$\DD_h V$ and~$\delta V/\delta \varphi(\theta)$ are usually called \textsf{functional derivatives} and could be made rigorous under various assumptions. See \cite{GJ} section 9.1, \cite{Dimock2} section 12.1.3 or \cite{Folland-2008} pp.\ 268-269 for details.
 \end{exxx}

\begin{exxx}
    [$m=0$, Fourier coordinates]\label{exam-intro-class-field-fourier}
In the previous example we exploited an analogy with particle mechanics to analyze the real scalar field, by treating~$\varphi(\theta)$ as a ``coordinate'' for~$\varphi\in \mm{Map}(\mb{S}^1,\mb{R})$. For the purpose of quantization which we do in the next subsection, another set of coordinates will prove convenient. These are given by the Fourier decomposition. For precision we now stress that our circle has perimeter $2\pi$, and $\theta$ is the arc length parameter. Given the trajectory~$\phi:[0,T]\lto \mm{Map}(\mb{S}^1,\mb{R})$, we write
\begin{equation}
  \phi(t,\theta)=\sum_{n=-\infty}^{\infty}\phi_n(t)\me^{\ii n\theta},\quad\quad \textrm{with }\phi_{-n}=\ol{\phi_n},
  \label{eqn-fourier-decomp-field-cylinder}
\end{equation}
as~$\phi$ is real-valued\footnote{
  Alternatively, we could also write~$\phi(t,\theta)=\phi_0(t)+2\sum_{n>0}(x_n(t)\cos(n\theta)-y_n(t)\sin(n\theta))$ with~$\phi_0(t)\in\mb{R}$,~$x_n(t)=\fk{Re}(\phi_n(t))$ and~$y_n(t)=\fk{Im}(\phi_n(t))$.}. Now we make an important observation: thanks to the fact that~$\{\me^{\ii n\theta}\}_{n\in\mb{Z}}$ is orthogonal\footnote{Note $\snrm{\me^{\ii n\theta}}_{L^2}=2\pi$.} in~$L^2(\mb{S}^1)$, the \textit{action} now also decomposes into a sum:
  \begin{align}
    S_{[0,T]}^{\mm{free}}(\phi)&=\frac{\beta}{2}\int_{0}^{T}|\phi_0'(t)|^2\,\dd t +\beta\sum_{n>0}\int_{0}^{T}\big(|\phi_n'(t)|^2 -n^2|\phi_n(t)|^2\big)\,\dd t \label{eqn-class-free-field-fourier-decomp-action}\\
    &\defeq S_{[0,T]}^{(0)}(\phi_0)+\sum_{n>0}S_{[0,T]}^{(n)}(\phi_n),
    \label{}
  \end{align}
where by the latter relation in (\ref{eqn-fourier-decomp-field-cylinder}) the two terms involving the modes $\phi_{-n}$ and $\phi_n$ are combined into one in (\ref{eqn-class-free-field-fourier-decomp-action}). Therefore,~$\phi$ minimizes~$S_{[0,T]}^{\mm{free}}$ if and only if each~$\phi_k$,~$k\ge 0$, minimizes~$S_{[0,T]}^{(k)}$ individually. This means that for the \textit{classical} field $\phi$, the evolution over time of the space-slice configuration~$\phi(t,\bullet)$ under~$S^{\mm{free}}$ is equivalent to the evolution of the sequence of numbers~$\{\phi_n(t)\}_{n\ge 0}$ in which each~$\phi_n(t)$ evolves under~$S^{(n)}$ separately. 
  Note, in comparing with (\ref{eqn-lag-general-potential}), that~$S^{(0)}$ describes a free particle with mass~$\beta$ moving in~$\mb{R}$, and~$S^{(n)}$,~$n\ge 1$, the harmonic oscillator with mass~$2\beta$ and angular frequency~$\omega_n=n$ on $\mb{C}=\mb{R}^2$.
  \end{exxx}

\subsection{Free Field on the Cylinder = Infinitely Many Harmonic Oscillators}\label{sec-intro-free-field-cyl-canon-quant}
Having described the classical mechanics of the (massless) real scalar free field~$\phi$ on~$[0,T]\times\mb{S}^1$, we will now try to quantize it. To begin with, we emphasize that we have identified heuristically\footnote{We remark here that $\varphi_n$ are not the actual Fourier coefficients, $\sqrt{2\pi}\varphi_n$ are, and we have~$\ank{\varphi,f}_{L^2(\mb{S}^1)}=(2\pi)[\varphi_0 f_0+2\sum_{n\ge 1}\fk{Re}(\ol{\varphi_n} f_n)]$ for another real ``test function'' $f$ written in the form (\ref{eqn-fourier-decomp-field-cylinder}).}
  \begin{equation}
    \left.
    \begin{array}{rcl}
     \mathcal{F}_\mb{R}: \mm{Map}(\mb{S}^1,\mb{R}) &\xlongrightarrow{\sim} & \mb{R}\times\mb{C}^{\mb{N}},\\
      \varphi &\longmapsto& (\varphi_0,\varphi_1,\varphi_2,\dots),
    \end{array}
    \right.
    \label{}
  \end{equation}
  where~$\varphi_i$ are the Fourier coefficients of~$\varphi$ as in (\ref{eqn-fourier-decomp-field-cylinder}). To quantize a system, two basic ingredients are the Hilbert space and the operators of time evolution. In analogy with the 1D case, the Hilbert space should be
  \begin{align}
    \mathcal{H}_{\mb{S}^1}&\defeq L^2(\mm{Map}(\mb{S}^1,\mb{R}),\mathcal{L}_{\mm{Map}(\mb{S}^1,\mb{R})})\heueq L^2(\mb{R}\times\mb{C}^{\mb{N}},\mathcal{L}_{\mb{R}\times\mb{C}^{\mb{N}}})\\
    &\heueq L^2(\mb{R},\mathcal{L}_{\mb{R}})\otimes \bigotimes_{i=1}^{\infty}L^2(\mb{C},\mathcal{L}_{\mb{C}})_i\defeq \mathcal{H}_0\otimes \bigotimes_{i=1}^{\infty}\mathcal{H}_i
    \label{eqn-circle-hilb-space-leb-repre}
  \end{align}
  where~$\mathcal{L}_{\mb{R}\times\mb{C}^{\mb{N}}}$ denotes the non-existent Lebesgue measure on~$\mb{R}\times\mb{C}^{\mb{N}}$, and~$L^2(\mb{C},\mathcal{L}_{\mb{C}})_i$ is the individual Hilbert space associated to the component~$\varphi_i$. The recipe for time evolution, in one way, is given by the so-called \textsf{canonical quantization}:  
  \begin{framed}\noindent We assume that the coefficient~$\phi_0(t)$ evolves as a \textit{quantum} free particle of mass~$\beta$ in~$\mb{R}$, and~$\phi_n(t)$,~$n\ge 1$, as a \textit{quantum} harmonic oscillator of mass~$2\beta$ and angular frequency~$\omega_n=n$ in~$\mb{C}=\mb{R}^2$, and that $\phi_i$ evolves independently from $\phi_j$ for $i\ne j$.\end{framed}

   In analogy with (\ref{eqn-2d-har-osc-hamil-tens-prod}), we find that the \textsf{Hamiltonian} of~$\phi$ on~$\mathcal{H}_{\mb{S}^1}$ should be
  \begin{equation}
    \mn{H}_{\mm{fr},\beta}\defeq \mn{H}_{\beta,0}+\sum_{n=1}^{\infty}\mn{H}_{2\beta, n},
    \label{eqn-sum-expression-free-ham}
  \end{equation}
  where each~$\mn{H}_{2\beta, n}$ is defined as in (\ref{eqn-def-1d-harm-osc-hamil}), and its action extended to~$\mathcal{H}_{\mb{S}^1}$ by imposing
  \begin{equation}
    \mn{H}_{2\beta, n}\defeq \one_{\mathcal{H}_0}\otimes \cdots\otimes \one_{\mathcal{H}_{n-1}}\otimes (\mn{H}_{2\beta, n}|_{\mathcal{H}_n})\otimes \one_{\mathcal{H}_{n+1}}\otimes \cdots,
    \label{}
  \end{equation}
  and likewise for $\mn{H}_{\beta,0}$.

   Next, we try to postulate a ``ground state'' for~$\mn{H}_{\mm{fr},\beta}$ and relate it to a Gaussian measure on $\mm{Map}(\mb{S}^1,\mb{R})$, preliminarily, in the spirit of (\ref{eqn-orns-uhl-ground-state-gauss-meas}). First we consider the components $\mn{H}_{2\beta, n}$ with $n>0$. From example \ref{exam-2d-har-osc} we know that~$\mn{H}_{2\beta, n}$ has the ground state
  \begin{equation}
    \Omega_{(n)}(x,y)\defeq \Big(\frac{2n\beta}{\pi}\Big)^{\frac{1}{2}}\me^{-n\beta (x^2+y^2)}.
    \label{}
  \end{equation}
In analogy with the case of finite tensor product, the reasonable ``ground state'' for~$\mn{H}_{\mm{fr},\beta}$ restricted to~$\bigotimes_{i\ge 1}\mathcal{H}_i$ is therefore
  \begin{align}
    \Omega_{0,\beta}^{\mm{os}}(\varphi_1,\varphi_2,\cdots)&\defeq \prod_{n=1}^{\infty}\Big(\frac{2n\beta}{\pi}\Big)^{\frac{1}{2}}\me^{-n\beta (x_n^2+y_n^2)}=\bigg[ \prod_{n=1}^{\infty}\Big(\frac{2n\beta}{\pi}\Big)^{\frac{1}{2}}\bigg]
    \exp\Big(-\frac{\beta}{2}\sum_{n\in\mb{Z}}|n|\cdot|\varphi_n|^2 \Big)\\
    &=\bigg[ \prod_{n=1}^{\infty}\Big(\frac{2n\beta}{\pi}\Big)^{\frac{1}{2}}\bigg]
    \me^{-\frac{1}{2}\cdot \frac{\beta}{2\pi}\sank{\varphi,\mn{D}_{\mb{S}^1}\varphi}_{L^2(\mb{S}^1)}}
    \label{}
  \end{align} 
  where~$\mn{D}_{\mb{S}^1}=(\Delta_{\mb{S}^1})^{\frac{1}{2}}$, and~$\varphi_n=x_n+\ii y_n$. This suggests the following expression for a Gaussian probability measure on~$\mm{Map}(\mb{S}^1,\mb{R})$:
\begin{equation}
  \dd\mu_{\mn{D},\mb{S}^1}(\varphi)\defeq \dd\Big[\bigotimes_{n=1}^{\infty} \gamma^{\mb{C}}_{2\beta,n}\Big](\varphi)=\big(\textrm{``}\det\textrm{''}(4\beta \mn{D}_{\mb{S}^1})\big)^{\frac{1}{2}}\me^{-\frac{1}{2}\cdot \frac{\beta}{2\pi}\sank{\varphi,2\mn{D}_{\mb{S}^1}\varphi}_{L^2(\mb{S}^1)}}
  \prod_{n=1}^{\infty}\frac{\dd \mathcal{L}_{\mb{C}}(\varphi_n)}{2\pi}.
  \label{eqn-slice-gauss-measure-expression}
\end{equation}
where we write in short hand~$\mathcal{N}_{\mb{C}}(0,(4n\beta)^{-1}):=\gamma^{\mb{C}}_{2\beta,n}$.
This measure has covariance\footnote{
  The discrepancy of a factor of 2 between what's in the exponential of (\ref{eqn-slice-gauss-measure-expression}) and (\ref{eqn-slice-field-covariance}) as compared to the finite dimensional version (\ref{eqn-fini-dim-gaussian-measure-expr}) comes from the fact that the quadratic form~$\ank{\varphi,2\mn{D}_{\mb{S}^1}\varphi}_{L^2(\mb{S}^1)}$ is a priori defined on~$\mm{Map}(\mb{S}^1,\mb{C})$, but only restricted to~$\mm{Map}(\mb{S}^1,\mb{R})$.}
\begin{equation}
  \mb{E}\bigl|\ank{\varphi,f}_{L^2(\mb{S}^1)}\bigr|^2 =\frac{4\pi}{\beta}\bank{f,(2\mn{D}_{\mb{S}^1})^{-1}f}_{L^2(\mb{S}^1)},
  \label{eqn-slice-field-covariance}
\end{equation}
for real test functions~$f\in C^{\infty}(\mb{S}^1)$. Accordingly from example \ref{exam-1d-free-part} we know that the ``ground state'' of the free Hamiltonian~$\mn{H}_{\beta,0}$ on the zero mode is simply the function~$1$. Thus~$|1|^2$ gives nothing but the Lebesgue measure. Therefore an infinite dimensional analogue of the \textsf{ground state transform} (cf.\ (\ref{eqn-def-ground-state-trans})) gives a formal isomorphic representation
\begin{equation}
  \mathcal{H}_{\mb{S}^1}\cong L^2(\mb{R},\mathcal{L}_{\mb{R}})\otimes \bigotimes_{n=1}^{\infty}L^2(\mb{C},\gamma^{\mb{C}}_{2\beta,n}).
  \label{eqn-circle-hilb-space-gaussian-rep}
\end{equation}
 Note that unlike (\ref{eqn-circle-hilb-space-leb-repre}), (\ref{eqn-circle-hilb-space-gaussian-rep}) can actually be given a rigorous meaning since the second component makes sense as the~$L^2$ space over $\mb{C}^{\mb{N}}$ equipped with a Gaussian measure.

Each~$L^2(\mb{C},\gamma^{\mb{C}}_{2\beta,n})$ now carries a Fock representation of the CCR as suggested by example \ref{exam-1d-fock-ccr} in the 1D case. For conformal purposes it turns out more convenient to write the representation in complex coordinates as introduced in example \ref{exam-2d-har-osc}. We thus define the (ground state conjugated) operators
\begin{equation}
  \mn{a}_{2\beta}(n)\defeq \frac{1}{\sqrt{2\beta n}}\Big( \frac{\partial}{\partial \ol{\varphi}_n}+n\beta \varphi_n\Big),\quad\quad
  \ol{\mn{a}}_{2\beta}(n)\defeq \frac{1}{\sqrt{2\beta n}}\Big( \frac{\partial}{\partial \varphi_n}+n\beta \ol{\varphi}_n\Big).
  \label{eqn-free-field-crea-anni-complex}
\end{equation}
Then we would have~$\mn{H}_{2\beta,n}=n(\mn{a}^{\dagger}_{2\beta}(n)\mn{a}_{2\beta}(n)+\ol{\mn{a}}^{\dagger}_{2\beta}(n)\ol{\mn{a}}_{2\beta}(n)+1)$ and
\begin{equation}
  \mn{H}_{\mm{fr},\beta}=-\frac{1}{2\beta}\frac{\partial^2}{\partial\varphi_0^2}+\sum_{n=1}^{\infty}n(\mn{a}^{\dagger}_{2\beta}(n)\mn{a}_{2\beta}(n)+\ol{\mn{a}}^{\dagger}_{2\beta}(n)\ol{\mn{a}}_{2\beta}(n)+1).
  \label{}
\end{equation}
Note here that we have an infinite ground energy~$E_0=\sum_{n=1}^{\infty}n$, where it is customary to adopt~$\sum_{n=1}^{\infty}n=:\zeta(-1)=-\frac{1}{12}$ (\cite{DiFrancesco-Mathieu-Senechal-1997} pp.\ 172). 
Given a smooth real test function~$f\in C^{\infty}(\mb{S}^1)$ written in the form~$f=\sum_{n\in\mb{Z}}f_n\me^{\ii n\theta}$ again with~$f_{-n}=\ol{f}_n$ we have from (\ref{eqn-free-field-crea-anni-complex}) the following representation of the multiplication operator~$\mf{\varphi}(f):F\mapsto \ank{\varphi,f}_{L^2}F$ on~$\mathcal{H}_{\mb{S}^1}$:
\begin{equation}
  \mf{\varphi}(f)=2\pi\varphi_0 f_0+\frac{2\pi}{\sqrt{2\beta}}\sum_{n>0}\frac{1}{\sqrt{n}}\big[
    f_{-n}(\mn{a}_{2\beta}(n)+\ol{\mn{a}}^{\dagger}_{2\beta}(n))
  +f_n(\mn{a}^{\dagger}_{2\beta}(n)+\ol{\mn{a}}_{2\beta}(n)) \big].
  \label{eqn-field-op-test-func}
\end{equation}
Actually, since~$\varphi_{-n}=\ol{\varphi}_n$, namely~$x_{-n}=x_n$,~$y_{-n}=-y_n$, we have~$\ol{\mn{a}}_{2\beta}(n)=\mn{a}_{2\beta,-n}$. Thus (\ref{eqn-field-op-test-func}) shows the operator valued distribution~$\mf{\varphi}$ is also given formally by
\begin{equation}
  \mf{\varphi}(\theta)=\varphi_0 +\frac{1}{\sqrt{2\beta}}\sum_{n\ne 0}\frac{1}{\sqrt{|n|}} \big( \me^{\ii n\theta}\mn{a}_{2\beta}(n)+\me^{-\ii n\theta}\mn{a}^{\dagger}_{2\beta}(n)\big).
  \label{}
\end{equation}
Here~$\{\mf{\varphi}(\theta)\}_{\theta}$ could be thought of as an~$\mb{S}^1$-indexed family of ``position operators''~$\mn{X}_{\theta}:F\mapsto \varphi(\theta)\cdot F$ on~$\mathcal{H}_{\mb{S}^1}$.

\subsection{Dynamics of the Massive Gaussian Free Field}\label{sec-intro-dyna-massive-gff}
    In this subsection we are interested in describing the ``slice dynamics'' of the massive Gaussian Free Field (GFF) on the infinite cylinder~$\mb{R}\times\mb{S}^1$. In the previous subsection we have seen that the quantum (massless) free field on the cylinder consists of infinitely many harmonic oscillators and a free particle; here we will see that the Euclidean (probabilistic) free field accordingly consists of infinitely many Ornstein-Uhlenbeck processes, while the counterpart of the free particle disappears due to the mass.

    Let~$\mathcal{S}(\mb{R}\times \mb{S}^1)$ denote the space of smooth functions on~$\mb{R}\times \mb{S}^1$ which are rapidly decaying (Schwartz) in the~$\mb{R}$-direction, and accordingly~$\mathcal{S}'(\mb{R}\times \mb{S}^1)$ the dual space of distributions tempered in the~$\mb{R}$-direction. Let~$\Delta:=\Delta_{\mb{R}\times \mb{S}^1}$ denote the positive Laplacian. We have by Fourier transform\footnote{Now we use~$\wh{\chi}(k):=(2\pi)^{-\frac{1}{2}}\int_{}^{}\me^{-\ii k\theta}\chi(\theta)\,\dd\theta$ on~$\mb{S}^1$ and~$\wh{f}(E)=\int_{}^{}\me^{-\ii E t}f(t)\,\dd t$ on~$\mb{R}$.}
    \begin{equation}
      \ank{f,(\Delta+\one)^{-1}f}_{L^2}=\frac{1}{2\pi}\sum_{k=-\infty}^{\infty}\int_{-\infty}^{\infty}\frac{|\wh{f}(k,E)|^2}{k^2+E^2+1}\,\dd E
      \label{}
    \end{equation}
    for~$f\in\mathcal{S}(\mb{R}\times\mb{S}^1)$ and one sees that~$(\Delta+\one)^{-1}$ extends to a bounded positive self-adjoint operator on~$L^2(\mb{R}\times \mb{S}^1)$. By Minlos' theorem (proposition \ref{prop-minlos-theo}) there exists a unique Gaussian measure~$\mu_{\mm{GFF}}^{\mb{R}\times \mb{S}^1}$ (abb.\ $\mu$) on~$\mathcal{S}'(\mb{R}\times \mb{S}^1)$ with
    \begin{equation}
      \mb{E}_{\mu}\big[\phi(f)\phi(h)\big]=\bank{f,(\Delta+\one)^{-1}h}_{L^2},\quad\quad \mb{E}_{\mu}\big[\phi(f)\big]\equiv 0,
      \label{eqn-cov-massive-gff}
    \end{equation}
    for all~$f$,~$h\in \mathcal{S}(\mb{R}\times \mb{S}^1)$,~$\phi$ being the random distribution following~$\mu$, called the \textsf{massive GFF} with mass~$=1$. We will proceed by the following (heuristic) analogy with the Ornstein-Uhlenbeck process:
\begin{center}\def\arraystretch{1.2}
  \begin{tabular}{|c|c|c|}
    \hline
    &\textsf{Ornstein-Uhlenbeck on $\mb{R}$} & \textsf{massive GFF on $\mb{R}\times \mb{S}^1$} \\ \hline
    sample path & $\omega\in \mathcal{S}'(\mb{R})$ & $\phi\in \mathcal{S}'(\mb{R}\times \mb{S}^1)$ \\ \hline
    configuration at time $t$ &$X_t=\omega(t)\in\mb{R}$ &  $\phi|_{\{t\}\times\mb{S}^1}\in \mathcal{S}'(\mb{S}^1)$  \\ \hline
    covariance & $\mb{E}[X_s X_t]=\sank{\delta_s,(-\frac{\dd^2}{\dd x^2}+1)^{-1} \delta_t}$ & (\ref{eqn-cov-massive-gff}) \\ \hline
    invariant measure & $\mathcal{N}(0,\frac{1}{2})=\gamma_{1,1}$ & ? \\ \hline
    Hilbert space & $L^2(\mb{R},\gamma_{1,1})$ & $L^2(\mathcal{S}'(\mb{S}^1),~?~)$ \\ \hline
    Feller semi-group & (\ref{eqn-orns-uhl-feynman-kac-1d-basic}) or (\ref{eqn-funda-relation-har-osc-orn-uhl}) & ? \\ \hline
  \end{tabular}
\end{center}

 Our goal now is to find the question marks. Recall that due to the Gaussianity of the Ornstein-Uhlenbeck process, we were able to read-off the fact that the law of each~$X_t$ is invariant for all~$t$ from the covariance structure, as all~$X_t$'s have the same variance (and are centered). In the GFF case, we are led naturally to ask: pretend that the restriction map~$\tau_t:\mathcal{S}'(\mb{R}\times\mb{S}^1)\lto \mathcal{S}'(\mb{S}^1)$,~$\phi\mapsto \phi|_{\{t\}\times\mb{S}^1}$ is well-defined, what would be the image measure~$(\tau_t)_*\mu$ living on~$\mathcal{S}'(\mb{S}^1)$? Is it still Gaussian? If so, what is the covariance? Is it also invariant for all $t$?

    Many of these questions have simple answers intuitively. Since~$\tau_t$ is \textbf{linear}, it follows from the definition \ref{def-gaussian-measure} that~$(\tau_t)_*\mu$, which is the linear image of a Gaussian measure, should also be Gaussian, provided that we are allowed to apply the ``change of variables'' formula\footnote{That is,~$\int_{}^{}F(\varphi)\,\dd(\tau_t)_*\mu=\int_{}^{}F(\tau_t\phi)\,\dd\mu$.}. Moreover, since the covariance operator $(\Delta+\one)^{-1}$ commutes with time translations, $\mu_{\mm{GFF}}$ is time translation invariant, and therefore $(\tau_t)_*\mu$ should indeed be the same for all $t$, once we are able to define it. Along these lines, the main technical point thus boils down to making sense of the restriction map $\tau_t$ in our low regularity scenario where it is ill-defined ordinarily. Note that in the probabilistic setting we need not make $\tau_t$ \textit{continuous} of any sort; all we need is \textit{measurability}. This will turn out to allow much more freedom; we leave the details to more specialized discussions (e.g.\ \cite{HairerSPDE} section 4.3).

    \begin{figure}[h]
     \centering
     \begin{subfigure}[t]{0.4\textwidth}
         \centering
         \includegraphics[width=\textwidth]{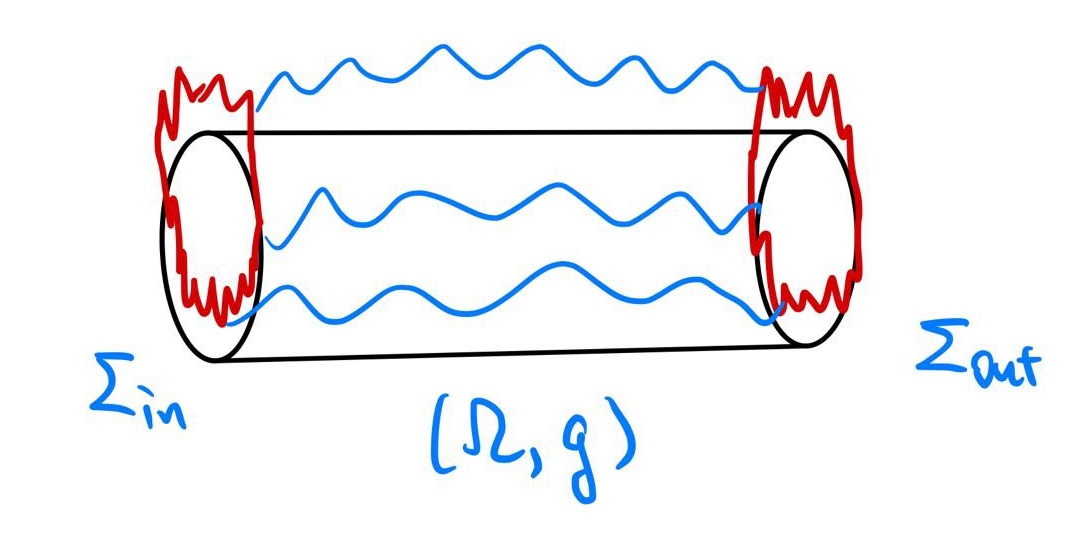}
         \caption{GFF on a finite piece of Cylinder}
         \label{fig-cylinder-propagat}
     \end{subfigure}
     \hspace{1cm}
     \begin{subfigure}[t]{0.4\textwidth}
         \centering
         \includegraphics[width=\textwidth]{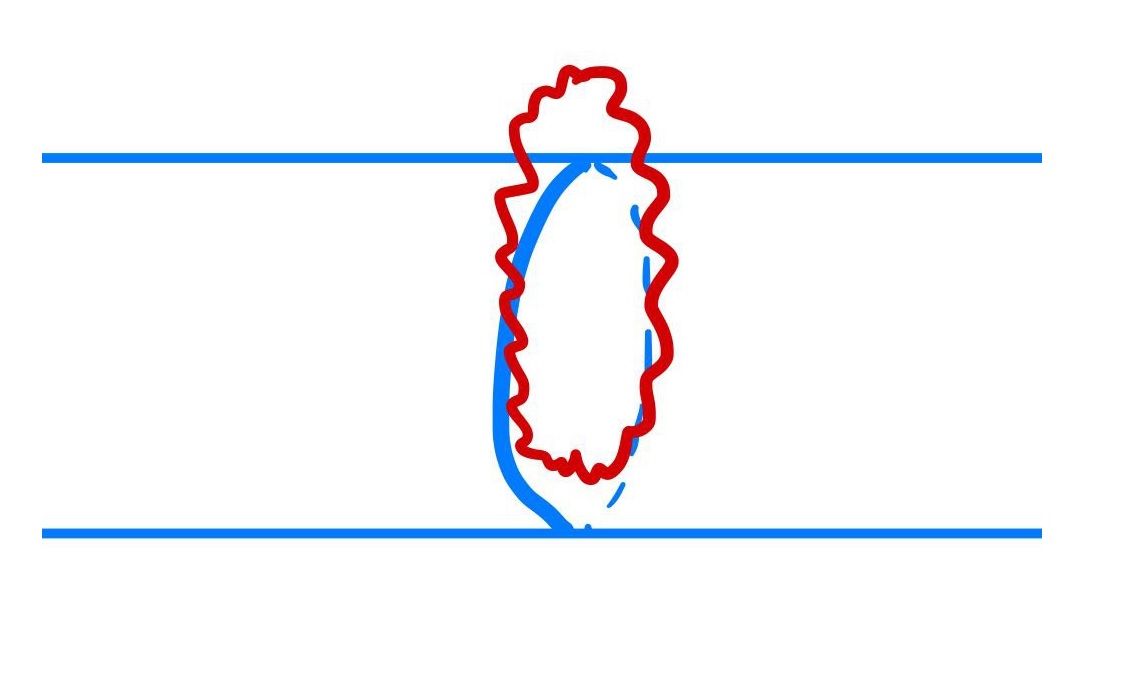}
         \caption{Sharp-time Test Function}
         \label{fig-sharp-time-field}
     \end{subfigure}
        \caption{GFF on the Cylinder}
        \label{fig-gff-cylinder-massive}
\end{figure}
    
     To actually ``know'' the measure $(\tau_t)_*\mu$, we still need to find the covariance. This, in other words, is to determine the bilinear form\footnote{Again having applied the ``change of variables'' formula above.}~$\mb{E}_{\mu}[(\tau_t \phi)(\chi)(\tau_t \phi)(\eta)]$ for two (real) test functions~$\chi$,~$\eta\in C^{\infty}(\mb{S}^1)$. In the first place, we need to have a good definition of the random variable~$(\tau_t \phi)(\chi)$, and again, the problem is the low regularity of $\phi$. However, we see from the covariance structure (\ref{eqn-cov-massive-gff}) that as long as~$f\in W^{-1}(\mb{R}\times \mb{S}^1)$, the Sobolev space of order~$-1$, we would be able to define the random variable~$\phi(f)$ uniquely as a limit in~$L^2(\mu)$. Now observe that (by definition)
    \begin{equation}
      \ank{\tau_t\phi,\chi}_{L^2(\mb{S}^1)}=\ank{\phi,J_t \chi}_{L^2(\mb{R}\times \mb{S}^1)},\quad\quad\textrm{with }J_t(\chi)\defeq \delta_t\otimes \chi,
      \label{}
    \end{equation}
    and by Fourier transform,~$\delta_t\otimes \chi$ is actually in~$W^{-1}$ (see figure \ref{fig-sharp-time-field})!  We thus find by (\ref{eqn-massive-green-function-formula})
    \begin{align}
      \mb{E}_{\mu}\big[(\tau_t \phi)(\chi)(\tau_t \phi)(\eta)\big]&=\bank{J_t \chi,(\Delta+\one)^{-1}J_t \eta}_{L^2(\mb{R}\times \mb{S}^1)}
      =\frac{1}{2\pi}\sum_{k}^{}\int_{\mb{R}}^{}\frac{\ol{\wh{\chi}}(k)\wh{\eta}(k)}{k^2+E^2+1}\,\dd E\\
      &=\sum_k \frac{\ol{\wh{\chi}}(k)\wh{\eta}(k)}{2\sqrt{k^2+1}}=\bank{\chi,(2(\Delta_{\mb{S}^1}+1)^{\frac{1}{2}})^{-1}\eta}_{L^2(\mb{S}^1)}.
      \label{eqn-inner-prod-gauss-hilb-space-slice-meas}
    \end{align}
    This shows the covariance operator of~$(\tau_t)_*\mu$ is~$(2\mn{D}_{\mb{S}^1+1})^{-1}$ setting $\mn{D}_{\mb{S}^1+1}:=(\Delta_{\mb{S}^1}+1)^{\frac{1}{2}}$ (cf.\ (\ref{eqn-slice-field-covariance})). Moreover, we are reassured that~$(\tau_t)_*\mu$ is indeed independent of~$t$. Note that the above calculations does not prevent us from taking two \textit{different} time instants~$s$ and~$t$ for~$\chi$ and~$\eta$ respectively. Doing that, we find
    \begin{equation}
      \mb{E}_{\mu}\big[(\tau_s \phi)(\chi)(\tau_t \phi)(\eta)\big]=\bank{\chi,\frac{\me^{-|s-t|\cdot\mn{D}_{\mb{S}^1+1}}}{2\mn{D}_{\mb{S}^1+1}}\eta}_{L^2(\mb{S}^1)}.
      \label{eqn-slice-gff-covariance-diff-time}
    \end{equation}
    This is the ``1-particle'' analogue of (\ref{eqn-orns-uhl-feynman-kac-1d-basic}). If we switch to~$L^2((\tau_0)_*\mu)$ on the r.h.s., that is, we regard~$\chi$ and~$\eta$ as random variables defined by pairing with~$\varphi\in \mathcal{S}'(\mb{S}^1)$, then we get
    \begin{equation}
      \mb{E}_{\mu}\big[(\tau_s \phi)(\chi)(\tau_t \phi)(\eta)\big]=\bank{\chi,\me^{-|s-t|\cdot\mn{D}_{\mb{S}^1+1}}\,\eta}_{L^2((\tau_0)_*\mu)}.
      \label{eqn-gff-semi-group-express-1-part}
    \end{equation}
    The inner product (\ref{eqn-inner-prod-gauss-hilb-space-slice-meas}) suggests that the \textit{Gaussian Hilbert space} of~$(\tau_0)_*\mu$ is~$W^{-\frac{1}{2}}(\mb{S}^1)$ which corresponds to the ``1-particle'' Hilbert space. By proposition \ref{prop-wiener-chaos} its Fock space is exactly~$L^2((\tau_0)_*\mu)$ and we can further promote (\ref{eqn-gff-semi-group-express-1-part}) to general functionals~$F$,~$G\in L^2((\tau_0)_*\mu)$ by applying the ``Fock functor''~$\Gamma$:
    \begin{equation}
      \mb{E}_{\mu}\big[\ol{F(\tau_s \phi)}G(\tau_t \phi)\big]=\bank{F,\Gamma(\me^{-|s-t|\cdot\mn{D}_{\mb{S}^1+1}})\,G}_{L^2((\tau_0)_*\mu)}.
      \label{eqn-intro-gff-feller-semigroup-fock}
    \end{equation}
    Since $\me^{-|s-t|\cdot\mn{D}_{\mb{S}^1+1}}$ acting on $W^{-\frac{1}{2}}(\mathbb{S}^1)$ is \textit{contracting},
    further Fock space theory tells that 
\begin{equation}
  \Gamma(\me^{-|s-t|\cdot\mn{D}_{\mb{S}^1+1}}) = \me^{-|s-t|\cdot \dd \Gamma(\mn{D}_{\mb{S}^1+1})},
  \label{eqn-intro-gff-feller-semigroup-fock-generator}
\end{equation}
with~$\dd\Gamma(\mn{D}_{\mb{S}^1+1})|_{\sym^{\otimes n}}:=\mn{D}_{\mb{S}^1+1}\otimes \one\otimes\cdots\otimes \one +\cdots +\one\otimes\cdots\otimes \one\otimes \mn{D}_{\mb{S}^1+1}$ called the \textsf{second quantization} of~$\mn{D}_{\mb{S}^1+1}$ (see \cite{Sim2} pp.\ 32). It \textit{seems} that we have found the Feller semi-group of the massive GFF, generated by~$\mn{H}_0:=\dd\Gamma(\mn{D}_{\mb{S}^1+1})$, and completed the program of determining the question marks!

Actually, we missed the \textit{most important} point. Namely, we did not show that the GFF has a \textit{Markov} property analogous to (\ref{eqn-cond-gen-markov-proces}) and hence we do not yet have the Chapman-Kolmogorov lemma!  One way to peek into this issue is to put~$\chi=\eta=\me^{\ii k\theta}/\sqrt{2\pi}$ and see from (\ref{eqn-slice-gff-covariance-diff-time}) that the Fourier coefficients~$\wh{\phi}_k(t)$ of~$\tau_t\phi$ follow pairwise independent Ornstein-Uhlenbeck processes with~$\mb{E}|\wh{\phi}_k(t)|^2=(2\sqrt{k^2 +1})^{-1}$. As the Ornstein-Uhlenbeck process is Markov, it is reasonable that the ``distribution valued'' process~$t\mapsto \tau_t\phi$ is also Markov in time. Nelson \cite{Nel2}, nevertheless, found the following beautiful generalization of the Markov property that is \textbf{covariant} and does not depend on the choice of a time coordinate, or a ``foliation'' of our space-time by timelike Cauchy hypersurfaces.

One important ingredient is to define ``localized''~$\sigma$-algebras~$\mathcal{O}_A$ associated to \textit{closed subsets}~$A\subset\mb{R}\times \mb{S}^1$ which generalizes the~$\sigma$-algebra~$\mathcal{O}_{\le t}=\sigma(X_s|s\le t)$ in 1D. Restricting to the GFF case, we exploit the fact that the random variable~$\phi(f)$ is well-defined for~$f\in W^{-1}(\mb{R}\times \mb{S}^1)$ and put~$\mathcal{O}_A:=\sigma(\phi(f)|\supp f\subset A,f\in W^{-1})$. 
\begin{prop}
  [Nelson's Markov property] Let~$A$,~$B\subset \mb{R}\times \mb{S}^1$ be closed subsets such that~$A^{\circ}\cap B=\varnothing$. Then
  \begin{equation}
    \mb{E}_{\mu}[ F|\mathcal{O}_A]=\mb{E}_{\mu}[F|\mathcal{O}_{\partial A}]
    \label{eqn-main-markov-property-gff}
  \end{equation}
  for all functionals~$F\in L^2(\mu)$ which are~$\mathcal{O}_B$-measurable. Here~$\partial A$ denotes the boundary points of~$A$.
\end{prop}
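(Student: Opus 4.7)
The plan is to reduce Nelson's Markov property to a one-particle statement in the Gaussian Hilbert space, and then resolve that statement by converting the support conditions on test functions into PDE-type conditions on their $(\Delta+\one)^{-1}$-images.

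First, since $\mu$ is Gaussian, Propositions \ref{prop-wiener-chaos} and \ref{prop-ito-wiener-segal-iso} identify $L^2(\mu,\mathcal{O}_A)$ with the Fock subspace $\Gamma(\mathcal{H}_A)$, where $\mathcal{H}_A\subset L^2(\mu)$ denotes the closed Gaussian Hilbert subspace spanned by $\{\phi(f):\supp f\subset A,~f\in W^{-1}\}$. The conditional expectation $\mb{E}_\mu[\,\cdot\,|\mathcal{O}_A]$ is then the orthogonal projection $L^2(\mu)\to \Gamma(\mathcal{H}_A)$, which by the functoriality of $\Gamma$ on contractions (Proposition \ref{cor-IWS-induced}) equals $\Gamma(\Pi_A)$, with $\Pi_A:\mathcal{H}\to \mathcal{H}_A$ the one-particle orthogonal projection. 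Consequently it suffices to prove the inclusion $\Pi_A(\mathcal{H}_B)\subset \mathcal{H}_{\partial A}$: the tower property, combined with $\mathcal{O}_{\partial A}\subset\mathcal{O}_A$, will then force $\mb{E}_\mu[F|\mathcal{O}_A]=\mb{E}_\mu[F|\mathcal{O}_{\partial A}]$ for every $\mathcal{O}_B$-measurable $F$.

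Next, I realize the Gaussian Hilbert space concretely as a Sobolev space. The assignment $f\mapsto u\defeq(\Delta+\one)^{-1}f$ gives an isometric identification of $\mathcal{H}$ with $W^1(\mb{R}\times\mb{S}^1)$ endowed with its standard inner product, since
\[\bank{f,g}_{\mathcal{H}}=\bank{f,(\Delta+\one)^{-1}g}_{L^2}=\bank{(\Delta+\one)u,v}_{L^2}=\bank{u,v}_{W^1}.\]
Under this identification $\mathcal{H}_A$ corresponds to $\mathcal{V}_A\defeq \{u\in W^1:(\Delta+\one)u~\textrm{is supported in}~A\}$: this subspace is closed because $(\Delta+\one):W^1\to W^{-1}$ is continuous and the $W^{-1}$-distributions supported in $A$ form a closed subspace. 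Using $\bank{u,v}_{W^1}=\bank{u,(\Delta+\one)v}_{L^2}$ and testing against a generic $v=(\Delta+\one)^{-1}h$ with $\supp h\subset A$, the orthogonal complement is identified as $\mathcal{V}_A^{\perp}=\{u\in W^1:u~\textrm{vanishes on}~A~\textrm{as a distribution}\}$. Consequently, the projection $\Pi_A u$ is uniquely characterized by
\[\Pi_A u=u\quad\textrm{on}~A,\qquad (\Delta+\one)(\Pi_A u)=0\quad\textrm{on}~A^c.\]

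Finally, let $u\in\mathcal{V}_B$, so $(\Delta+\one)u=0$ on $B^c$. Since $A^\circ\cap B=\varnothing$ gives $A^\circ\subset B^c$, we have $(\Delta+\one)u=0$ on $A^\circ$ in particular. But $\Pi_A u=u$ on $A$ as distributions, and $A^\circ$ being open, this equality and its distributional derivatives pass to $A^\circ$; hence $(\Delta+\one)(\Pi_A u)=(\Delta+\one)u=0$ on $A^\circ$. Combined with $(\Delta+\one)(\Pi_A u)=0$ on $A^c$, we conclude $(\Delta+\one)(\Pi_A u)$ is supported in $A\setminus A^\circ=\partial A$, i.e.\ $\Pi_A u\in\mathcal{V}_{\partial A}$, which is what was wanted. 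The main obstacle is the middle step: establishing the identification $\mathcal{H}_A\leftrightarrow \mathcal{V}_A$ rigorously in the distributional setting, where the support condition on $f\in W^{-1}$ must be matched with a weak notion of ``harmonicity'' of $u=(\Delta+\one)^{-1}f$ across a potentially rough closed set $A$, without recourse to classical trace theory.
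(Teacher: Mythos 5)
Your proof is correct and follows essentially the same route as the paper's: reduce via the Wiener chaos/Fock-functor identification of conditional expectation with $\Gamma(\Pi_A)$ to the one-particle statement that the projection maps the subspace indexed by $B$ into the one indexed by $\partial A$, and then conclude from the locality of $\Delta+1$; the paper phrases the one-particle step directly in $W^{-1}$ (if $\supp f\subset B$ then $\supp (P_A f)\subset\partial A$), and your transfer to $W^1$ via $(\Delta+1)^{-1}$ is exactly the isometry underlying its Sobolev decomposition lemma, so this is the same argument in different clothing. One caveat on your middle step: the claimed identity $\mathcal{V}_A^{\perp}=\{u\in W^1 : u\ \textrm{vanishes on}\ A\ \textrm{as a distribution}\}$ is not correct as stated for a general closed set $A$ — for instance if $A=\{t\}\times\mb{S}^1$ it has empty interior, so the right-hand side is all of $W^1$, while $\mathcal{V}_A^{\perp}$ is a proper subspace because $\delta_t\otimes\chi\in W^{-1}$ is supported in $A$; the true characterization involves quasi-everywhere vanishing (spectral synthesis for $W^{1}$), not distributional vanishing. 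However, your argument never uses that equality: all you need is the easy inclusion that $u-\Pi_A u$ annihilates every test function in $C_c^{\infty}(A^{\circ})\subset W^{-1}_A$, hence $\Pi_A u=u$ as distributions on $A^{\circ}$, and this, together with locality of $\Delta+1$ and $\supp((\Delta+1)\Pi_A u)\subset A$, already forces the support into $\partial A$. So the ``main obstacle'' you flag at the end is not actually an obstacle for this proposition — no trace theory or synthesis result is required, only the trivial direction of the duality.
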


\begin{proof}
  [Idea of proof.] First recall that for~$L^2$ random variables conditional expectation corresponds to orthogonal projection. For \textit{Gaussian} variables, we can use the Fock functor~$\Gamma$ and see that (\ref{eqn-main-markov-property-gff}) boils down to a fact about projections in the (space-time) ``1-particle'' Hilbert space, that is,~$W^{-1}$. This is to say, if $f\in W^{-1}$ and~$\supp f\subset B$ then~$\supp(P_{A}f)\subset \partial A$ and hence~$P_A f=P_{\partial A}f$. Here~$P_A$ denotes the orthogonal projection in~$W^{-1}$ onto the closed subspace supported in~$A$. This latter property of~$W^{-1}$ in turn boils down to the fact that~$\Delta+\one$, inverse of the covariance which gives the \textit{action}~$\int_{}^{}|\nabla\phi|^2+\phi^2$ of the GFF, is a \textit{local} operator.
\end{proof}

With this Markov property in hand, it would be possible to proceed exactly as in the 1D case (using the Lie-Trotter formula) to derive a generalized ``Feynman-Kac'' type formula in our setting, modulo technicalities in defining rigorously the interaction potential, a version of which will be treated in section \ref{sec-nelson-main}. Heuristically, for a (real) test function~$\chi\in C^{\infty}(\mb{S}^1)$ we define, with $\varphi$ following $(\tau_0)_*\mu$, the random variable
\begin{equation}
  V_{\mm{int}}(\chi)(\varphi)\defeq \int_{\mb{S}^1}^{}\chi(\theta){:}\varphi^{2n}(\theta){:}\,\dd \theta.
  \label{}
\end{equation}
Acting on~$\mathcal{H}^{{:}m{:}}$ by multiplication, we have heuristically (\cite{GRS1} pp.\ 142), by the slogan below (\ref{eqn-wick-binomial-crea-left-ann-right}),
\begin{equation}
  V_{\mm{int}}(\chi)=\sum_{k_1,\cdots,k_{2n}}\ol{\wh{\chi}}(k_1+\cdots+k_{2n})\sum_{j=0}^n \binom{2n}{j}
  \frac{\mn{a}^{\dagger}(-k_1)\cdots\mn{a}^{\dagger}(-k_j)\mn{a}(k_{j+1})\cdots \mn{a}(k_{2n})}{2^{\frac{n}{2}}(k_1^2+1)^{\frac{1}{4}}\cdots (k_{2n}^2+1)^{\frac{1}{4}}},
  \label{}
\end{equation}
where~$\mn{a}^{\dagger}(k)$ and~$\mn{a}(k)$, like (\ref{eqn-free-field-crea-anni-complex}), are the ground state conjugated complex creation and annihilation operators corresponding to the~$k$-th Fourier mode following the measure~$\mathcal{N}_{\mb{C}}(0,\frac{1}{2}(k^2+1)^{-\frac{1}{2}})$.
We conclude the present subsection with the 2D analogue of (\ref{eqn-1d-feynman-kac-heuristic}) which summarizes in a sense the probabilistic interpretation of the Euclidean field theory.

\begin{prop}
    [Feynman-Kac-Nelson, \cite{GRS1} Theorem II.14] Let~$\mu=\mu_{\mm{GFF}}$, and~$\chi\in C^{\infty}(\mb{S}^1)$ as above. Then
    \begin{equation}
      \mb{E}_{\mu}\Big[ 
      \ol{F(\tau_0\phi)}\exp\Big( -\int_{0}^{t}\int_{\mb{S}^1}^{}\chi(\theta){:}\phi^{2n}(s,\theta){:}\,\dd \theta\,\dd s \Big) G(\tau_t\phi) \Big]=\bank{F,\me^{-t(\mn{H}_0+V_{\mm{int}}(\chi))}G}_{L^2((\tau_0)_*\mu)},
      \label{}
    \end{equation}
    for any~$F$,~$G\in L^2((\tau_0)_*\mu)$ and~$t>0$.\hfill~$\Box$
  \end{prop}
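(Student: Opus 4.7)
The strategy is to transport the one-dimensional Feynman-Kac argument of proposition \ref{prop-1d-feynman-kac} to the infinite-dimensional setting, replacing the Markov property of Brownian motion with Nelson's Markov property (\ref{eqn-main-markov-property-gff}) of the GFF, and replacing the bounded potential $V$ by the Wick-ordered random variable $V_{\mm{int}}(\chi)$. I would partition $[0,t]$ into $N$ equal sub-intervals of length $\delta := t/N$ at the time-slices $\{t_j\}\times \mb{S}^1$, $t_j := jt/N$, apply Nelson's Markov property inductively across these slices, and recognize the resulting expression on the operator side as a Trotter approximant for $\me^{-t(\mn{H}_0+V_{\mm{int}}(\chi))}$.

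The first step is to establish a multiple-correlation formula analogous to lemma \ref{lemm-multiple-corr-intro}: for bounded functionals $F_0,\dots,F_N$ of the slice fields $\tau_{t_j}\phi$,
\begin{equation}
  \mb{E}_{\mu}\big[\ol{F_0(\tau_0\phi)}\,F_1(\tau_{t_1}\phi)\cdots F_N(\tau_{t_N}\phi)\big] = \bank{F_0,\, \me^{-\delta \mn{H}_0}M_{F_1}\, \me^{-\delta \mn{H}_0}M_{F_2}\cdots M_{F_{N-1}}\,\me^{-\delta \mn{H}_0} F_N}_{L^2((\tau_0)_*\mu)},
\end{equation}
where $\mn{H}_0 = \dd\Gamma(\mn{D}_{\mb{S}^1+1})$ was identified via (\ref{eqn-intro-gff-feller-semigroup-fock})--(\ref{eqn-intro-gff-feller-semigroup-fock-generator}), and $M_{F_j}$ denotes multiplication by $F_j$ on $L^2((\tau_0)_*\mu)$. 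The proof is by induction on $N$: conditioning on the $\sigma$-algebra generated by $(\tau_s\phi)_{s \le t_{N-1}}$ and invoking Nelson's Markov property collapses it to conditioning on $\mathcal{O}_{\{t_{N-1}\}\times\mb{S}^1} = \sigma(\tau_{t_{N-1}}\phi)$, which, combined with (\ref{eqn-intro-gff-feller-semigroup-fock}), replaces the innermost $F_N$ by $\me^{-\delta \mn{H}_0}F_N$ viewed as a function of the slice field at $t_{N-1}$; iterating yields the claim.

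With this identity in hand, I would specialize to $F_j(\varphi) := \exp(-\delta V_{\mm{int}}(\chi)(\varphi))$ for $1\le j\le N-1$ and $F_N(\varphi) := \exp(-\delta V_{\mm{int}}(\chi)(\varphi))\, G(\varphi)$. The probabilistic side becomes the expectation of $\ol{F(\tau_0\phi)}\exp\big(-\delta \sum_{j=1}^N V_{\mm{int}}(\chi)(\tau_{t_j}\phi)\big)\, G(\tau_t\phi)$, which converges as $N\to\infty$ to the left-hand side of the stated identity by recognizing the Riemann sum and applying dominated convergence, provided $s\mapsto V_{\mm{int}}(\chi)(\tau_s\phi)$ is almost surely Riemann-integrable and the exponential is dominated in $L^1(\mu)$. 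On the operator side the expression collapses telescopically to $\sank{F, (\me^{-\delta \mn{H}_0}\me^{-\delta V_{\mm{int}}(\chi)})^N G}_{L^2((\tau_0)_*\mu)}$, which converges in the strong operator topology to $\sank{F, \me^{-t(\mn{H}_0+V_{\mm{int}}(\chi))}G}$ by the Lie-Trotter formula (lemma \ref{lemm-lie-trotter-product}).

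The main obstacle is verifying the hypotheses required for Lie-Trotter, namely that $\mn{H}_0+V_{\mm{int}}(\chi)$ be essentially self-adjoint on $\dom(\mn{H}_0)\cap \dom(V_{\mm{int}}(\chi))$ and bounded below, together with the integrability controls needed on the probabilistic side. This is where Wick renormalization earns its keep: by Nelson's hypercontractive estimate, $\Gamma(\me^{-s\mn{D}_{\mb{S}^1+1}})$ is bounded from $L^2((\tau_0)_*\mu)$ into $L^p$ for $s$ sufficiently large depending on $p$, and combined with the explicit $L^p$-estimates on Wick powers of slice fields this gives $V_{\mm{int}}(\chi)\in \bigcap_{p<\infty} L^p((\tau_0)_*\mu)$, a lower bound $V_{\mm{int}}(\chi)\ge -K_\chi$ (say in the case $\chi \ge 0$, or by absorbing the negative part), and $\me^{-V_{\mm{int}}(\chi)}\in L^1$. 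Once semiboundedness and these integrability estimates are in place, essential self-adjointness of the sum follows from a standard Glimm-Jaffe argument, joint measurability of $(s,\phi)\mapsto V_{\mm{int}}(\chi)(\tau_s\phi)$ yields Fubini-integrability of the time-integral, and all passages to the limit on both sides are then justified by dominated convergence.
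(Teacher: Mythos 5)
Your overall architecture—slice $[0,t]$ into $N$ pieces, use Nelson's Markov property to get the transfer-matrix identity with $\me^{-\delta\mn{H}_0}$ between neighbouring slices, insert the multiplication operators $\me^{-\delta V_{\mm{int}}(\chi)}$, and pass to the limit by a Trotter argument—is precisely the route the paper itself indicates just before the statement (``proceed exactly as in the 1D case (using the Lie-Trotter formula)\dots modulo technicalities in defining rigorously the interaction potential''); the paper gives no proof and defers to \cite{GRS1}, where this is exactly how it is done. So the plan is the intended one, and the multiple-correlation formula via conditioning plus (\ref{eqn-intro-gff-feller-semigroup-fock}) is correct.

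There is, however, one concrete step that fails as written: the claimed uniform lower bound $V_{\mm{int}}(\chi)\ge -K_\chi$. The Wick-ordered power of the time-slice field is a genuinely renormalized object—the slice covariance $(2\mn{D}_{\mb{S}^1+1})^{-1}$ has a logarithmically divergent diagonal, so the regularized interaction only satisfies a bound of the type $V_\varepsilon\ge -b|\log\varepsilon|^n$ (compare (\ref{eqn-lower-bound-hermite}) and the proof of Theorem \ref{thrm-nelson}), which blows up as the cutoff is removed. The limiting $V_{\mm{int}}(\chi)$ is unbounded below even for $\chi\ge 0$; ``absorbing the negative part'' is not available, and this unboundedness is exactly why Nelson's $\me^{-V}\in L^1$ argument is needed at all. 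Consequently Lemma \ref{lemm-lie-trotter-product}, which you invoke and which is stated for \emph{positive} self-adjoint operators, does not apply directly to the pair $(\mn{H}_0,V_{\mm{int}}(\chi))$, and the essential self-adjointness and semiboundedness of $\mn{H}_0+V_{\mm{int}}(\chi)$ cannot be deduced from a pointwise bound on $V$—it is itself a nontrivial theorem (Nelson, Glimm--Jaffe, Segal) proved from the hypercontractivity and $L^p$ estimates you list. The standard repair, and the one in \cite{GRS1} and \cite{Sim2}, is to first prove the identity with a UV-cutoff interaction $V_\varepsilon$ (bounded, or bounded below), where your Markov/Trotter scheme is legitimate, and then remove the cutoff on both sides simultaneously: on the probabilistic side using $V_\varepsilon\to V$ and $\me^{-\int V_\varepsilon}\to\me^{-\int V}$ in every $L^p$, and on the operator side using convergence of the semigroups $\me^{-t(\mn{H}_0+V_\varepsilon)}\to\me^{-t(\mn{H}_0+V)}$, which rests on the essential self-adjointness statement above. (A smaller point to make explicit in the same spirit: identifying the space-time integral $\int_0^t\int_{\mb{S}^1}\chi\,{:}\phi^{2n}{:}$ with the time-Riemann integral of the slice variables $V_{\mm{int}}(\chi)(\tau_s\phi)$ also requires an $L^p$-continuity-in-$s$ argument, which is again done at the level of the cutoff objects.) With that restructuring, your outline becomes the correct proof.
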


\section{The Functorial Framework}
In the previous sections we have been describing quantum evolution over space-times where there is always a distinguished time coordinate (space-time~$=$ space~$\times$ time). For \textit{general} curved space-times this is no longer the case. To deal with the more general situation Atiyah and Segal \cite{Segal} proposed the following \textit{functorial} framework, motivated partly by gauge theories as well as string theory. In this subsection we describe a very rough version of this framework (as multiple precise versions exist oriented towards different purposes), and finally relate it to the more traditional point of view (Feynman-Kac) established in the previous sections.

Let~$\mathcal{C}_{d+1}^{\fk{s}}$ denote the category whose objects are closed~$d$-dimensional manifolds with a specified ``structure'' denoted~$\fk{s}$ (e.g.\ topological, smooth, Riemannian, conformal, \dots), and for two such objects~$\Sigma_1$,~$\Sigma_2$ a morphism from~$\Sigma_1$ to~$\Sigma_2$ given by an \textit{oriented cobordism} from~$\Sigma_1$ to~$\Sigma_2$ (that is, a compact~$(d+1)$-manifold~$\Omega$ with boundary~$\partial\Omega=\Sigma_1\sqcup \Sigma_2$, with~$\Sigma_1$ labelled as ``in'', and~$\Sigma_2$ ``out'') also with the structure~$\fk{s}$. We shall write $\Omega:\Sigma_1\rsa \Sigma_2$. The composition of morphisms is concatenation (or sewing/gluing) of cobordisms\footnote{If $\fk{s}$ is ``topological'', then the identities are finite cylinders; however in other cases the identity might be something degenerate, e.g.\ an ``empty'' cobordism.}. We denote very loosely by~$\ms{Vect}_{\mb{C}}$ the category of (topological) complex vector spaces and linear maps (different subcategories would be considered for different~$\fk{s}$).

\begin{deef}
  [Atiyah \cite{Atiyah}, Segal \cite{KS, Segal}, Kontsevich \cite{KS}] A~$(d+1)$-dimensional QFT (with structure~$\fk{s}$) is a monoidal\footnote{Loosely speaking, a \textsf{monoidal category} is a category where tensor products of objects make sense. In~$\mathcal{C}^{\fk{s}}_{d+1}$ the ``tensor product'' is given by disjoint union. Moreover, there is a ``unit'' with respect to the tensor product ($\varnothing$ and~$\mb{C}$ respectively in~$\mathcal{C}^{\fk{s}}_{d+1}$ and~$\ms{Vect}_{\mb{C}}$). Therefore, rule (i) is saying simply that our functor respects the monoidal structure, i.e.\ a \textsf{monoidal functor}. See \cite{MacL} chapter XI.} functor~$\mathcal{C}^{\fk{s}}_{d+1}\lto \ms{Vect}_{\mb{C}}$. This means effectively that it is a rule which associates
  \begin{enumerate}[(a)]
    \item to each closed~$d$-dimensional manifold~$\Sigma$ a complex vector space~$\mathcal{H}_{\Sigma}$, the ``space of states'';
    \item to each~$\Omega:\Sigma_{\mm{in}}\rsa \Sigma_{\mm{out}}$ a linear map~$U_{\Omega}:\mathcal{H}_{\Sigma_{\mm{in}}}\lto \mathcal{H}_{\Sigma_{\mm{out}}}$, the ``evolution'';
  \end{enumerate}
  satisfying the following \textit{axioms}:
  \begin{enumerate}[(i)]
    \item (finite) disjoint unions are taken to (finite) tensor products (e.g.\ $\Sigma_1\sqcup \Sigma_2$ to~$\mathcal{H}_{\Sigma_1}\otimes \mathcal{H}_{\Sigma_2}$), and the empty $d$-manifold~$\varnothing$ is taken to~$\mathcal{H}_{\varnothing}:=\mb{C}$;
    \item let~$\Omega_{12}:\Sigma_1\rsa \Sigma_2$ and~$\Omega_{23}:\Sigma_2\rsa \Sigma_3$ be two cobordisms, and consider~$\Omega_{23}\circ \Omega_{12}:=\Omega_{12}\cup_{\Sigma_2}\Omega_{23}$ the cobordism obtained by gluing~$\Omega_{12}$ and~$\Omega_{23}$ along~$\Sigma_2$. Then we have
      \begin{equation}
	U_{\Omega_{23}\circ \Omega_{12}}=U_{\Omega_{23}}\circ U_{\Omega_{12}}:\mathcal{H}_{\Sigma_1}\lto \mathcal{H}_{\Sigma_3}
	\label{eqn-intro-segal-compos-axiom}
      \end{equation}
      as linear maps.
  \end{enumerate}
\end{deef}

\begin{figure}[h]
    \centering
    \includegraphics[width=0.8\linewidth]{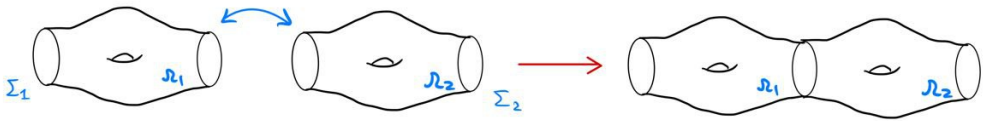}
    \caption{Segal Gluing}
    \label{fig-segal-glu-1}
\end{figure}

Unfortunately, much more needs to be said to make the above definition precise (and complete, in many situations). We list a few (far from exhaustive) of these technical points in remark \ref{rem-intro-segal-general} below, and refer to \cite{KS} section 3, \cite{Mnev} section 1.2, \cite{Segal} and the introduction of \cite{KMW} for further discussions.

In chapter \ref{chap-segal} we work with \textit{a} 2D QFT over the Riemannian category. For us objects of~$\mathcal{C}^{\mm{riem}}_{1+1}$ are finite disjoint unions of Riemannian circles characterized by perimeters (moreover they need to be ``enhanced'' by two-sided collars), morphisms are Riemannian surfaces with geodesic boundaries,~$\mathcal{H}_{\Sigma}$ are Hilbert spaces and~$U_{\Omega}$ are \textit{Hilbert-Schmidt} operators. See section \ref{sec-segal-descript} for a precise re-statement.

\begin{def7}\label{rem-intro-segal-general}
  \begin{enumerate}[(1)]
    \item Regarding rule (ii), when one glues~$\Omega_{12}$ and~$\Omega_{23}$ one must at the same time glue the geometric structure~$\fk{s}$.\footnote{A succinct way to say this is that one takes the \textit{fiber product}~$\Omega_{12}\times_{\Sigma_2}\Omega_{23}$ in the corresponding category of manifolds with structure~$\fk{s}$ (\cite{Mnev} section 1.2.2).} Also, one should appropriately identify cobordisms~$\Omega$,~$\Omega':\Sigma_{\mm{in}}\lto \Sigma_{\mm{out}}$ which are ``isomorphic'' via a boundary-preserving~$\fk{s}$-isomorphism; equivalently, one requires a ``naturality'' condition in terms of a commutative diagram (\cite{Mnev} section 1.2.2).
    \item As~$d$-manifolds are not \textit{a priori} embedded in~$(d+1)$-manifolds, a careful definition of the cobordism requires the data of the two $\fk{s}$-embeddings (a.k.a.\ parametrizations)~$\xi_{\mm{in},\mm{out}}:\Sigma_{\mm{in,out}}\hooklto \partial\Omega$, and gluing cobordisms as in rule (ii) essentially glues the parametrizations of~$\Sigma_2$. With this data, the ``in'' and ``out'' labels on the boundaries are identified by checking whether the induced orientations by the embeddings agree with the co-orientations coming from~$\Omega$. The parametrizations naturally go into the definition of~$U_{\Sigma}$ as we need them to relate things ``happening on~$\Omega$'' to the ``model'' vector spaces~$\mathcal{H}_{\Sigma_{\mm{in}}}$,~$\mathcal{H}_{\Sigma_{\mm{out}}}$. They are especially important in (2D) CFT regarding the next point below.
    \item The vector space~$\mathcal{H}_{\Sigma}$ is usually \textit{constructed out of}~$\Sigma$ (see path-integral representation below) and hence carries a natural representation of~$\mm{Diff}(\Sigma)$, assuming~$\fk{s}$ at least smooth. In 2D CFT the operators~$U_{\Omega}$ interplay with the representations of~$\mm{Diff}(\mb{S}^1)$ on~$\mathcal{H}_{\Sigma_{\mm{in}}}$,~$\mathcal{H}_{\Sigma_{\mm{out}}}$ through the parametrizations of~$\Sigma_{\mm{in}}$ and~$\Sigma_{\mm{out}}$ by copies of~$\mb{S}^1$.
    \item (\textsf{unitarity assumption}) For \textit{unitary} QFTs we will make the following assumptions \textit{in this thesis}: we let~$\mathcal{H}_{\Sigma}$ be real Hilbert spaces;\footnote{Equivalently, complex Hilbert spaces equipped with an antilinear involution making it the complexification of an underlying real Hilbert space.} 
    if~$\Omega:\Sigma_1\rsa \Sigma_2$ is a cobordism, it is seen equally as a cobordism~$\Omega^*:\Sigma_2^*\rsa \Sigma_1^*$, and also~$\Omega:\varnothing\rsa \Sigma_1^*\sqcup \Sigma_2$ (abusing notation), where~$\Sigma_j^*$ denotes~$\Sigma_j$ with an oppositely oriented parametrization. In this case, we require~$U_{\Omega^*}=U_{\Omega}^{\dagger}$, the \textit{real adjoint} (or transpose) of~$U_{\Omega}$. This is not yet entirely precise as $\Sigma$ needs to be enhanced by symmetric 2-sided collars, see \cite{KS} section 3 under ``unitarity''.
    \item A \textit{closed}~$(d+1)$-manifold is considered as a cobordism~$\varnothing \lto\varnothing$. By rule (i) we get in this case an operator~$U_M:\mb{C}\lto\mb{C}$ identified with a single complex number. We denote this number otherwise by~$\mathcal{Z}_M$, usually called the \textsf{partition function}. Similarly for a ``1-sided'' cobordism~$\Omega$ with~$\partial\Omega=\Sigma_{\mm{out}}$ we get a vector~$U_{\Omega}$ in~$\mathcal{H}_{\Sigma_{\mm{out}}}$ ($\partial\Omega=\Sigma_{\mm{in}}$, a dual vector in~$\mathcal{H}^*_{\Sigma_{\mm{in}}}$). Often a third axiom is mentioned, that if we glue a cobordism~$\Omega$ with itself by identifying~$\Sigma_{\mm{in}}$ with~$\Sigma_{\mm{out}}$, getting the closed manifold~$\check{\Omega}$, then~$\ttr(U_{\Omega})=\mathcal{Z}_{\check{\Omega}}$. Technically, this could be implied by rule (ii) using the aforementioned interpretations.\footnote{If one glues two oppositely 1-sided cobordisms with boundaries of both identified with~$\Sigma$ (one labelled ``in'', the other ``out''), then one effectively pairs a vector in~$\mathcal{H}_{\Sigma}$ with a dual vector in~$\mathcal{H}_{\Sigma}^*$, and gets the number~$\mathcal{Z}_M$,~$M$ being the resulting closed manifold. One could then recover the ``trace axiom'' by taking one of these cobordisms to be a very thin cylinder and taking the length of the cylinder to zero, through an appropriate limiting procedure. See ``foreword and postscript'' to section 4 in \cite{Segal}.}
    On the other hand, rule (ii) is also a special case of a ``partial'' version of this trace axiom, namely, composition of linear maps~$U\lto V$,~$V\lto W$ of vector spaces corresponds to contracting (tracing)~$V$ with~$V^*$ in~$W\otimes V^*\otimes V\otimes U^*$.
  \end{enumerate}
\end{def7}

As mentioned by the authors in \cite{KS} section 3, ``the guiding principle of this approach is to preserve as much as possible of the path-integral intuition''. Let us see now that the path-integral formulation introduced in subsection \ref{sec-intro-path-integral} applies without essential modification to this new setting.

A cobordism~$\Omega:\Sigma_{\mm{in}}\rsa \Sigma_{\mm{out}}$ is considered as a piece of space-time connecting the two ``time slices''~$\Sigma_{\mm{in}}$ and~$\Sigma_{\mm{out}}$. As explained in section \ref{sec-intro-class-field-theory}, each field theory comes with specified \textsf{field configuration spaces} over space and space-time, denoted~$\conf(\Sigma)$ and~$\conf(\Omega)$. Typically~$\conf(X)=\mm{Map}(X,\mss{V})$, the space of \textit{maps} from~$X$ to a \textsf{spin value space}/\textsf{target space}~$\mss{V}$.
The path-integral formulation gives a recipe for $\mathcal{H}_{\Sigma}$ and $U_{\Omega}$ as follows. We define
	\begin{equation}
	  \mathcal{H}_\Sigma\defeq L^2(\conf(\Sigma),\mathcal{L}),
	  \label{eqn-intro-heu-hilb-space}
	\end{equation}
	where~$\mathcal{L}$ denotes the non-existent \textit{Lebesgue measure} on the configuration space~$\conf(\Sigma)$. Now suppose our field theory (over Euclidean signature, for definiteness) comes with the action~$S_{\Omega}:\conf(\Omega)\lto\mb{R}$. Define
	\begin{equation}
	  \left.
	  \def\arraystretch{1.3}
	  \begin{array}{rcl}
	     U_{\Omega}: \mathcal{H}_{\Sigma_{\mm{in}}} &\lto &\mathcal{H}_{\Sigma_{\mm{out}}},\\
	     F&\longmapsto & (U_{\Omega} F)(\psi)\defeq \ddp\int_{\conf(\Sigma_{\mm{in}})}^{}\mathcal{A}_\Omega(\psi,\varphi)F(\varphi)\,\dd \mathcal{L}(\varphi),
	  \end{array}
	  \right.
	  \label{}
	\end{equation}
	with the integral kernel
	\begin{equation}
	  \mathcal{A}_\Omega(\psi,\varphi)\defeq \int_{
	     \left\{  \phi\in \conf(\Omega)\middle| \substack{
	     \phi|_{\mm{in}}=\varphi,\\ 
	     \phi|_{\mm{out}}=\psi
	   }
      \right\}} \me^{-S_{\Omega}(\phi)}\,\dd \mathcal{L}(\phi),
	  \label{}
	\end{equation}
	where now we integrate against the still non-existent Lebesgue measure on~$\conf(\Sigma)$ (with the indicated boundary conditions). At this level, (\ref{eqn-intro-segal-compos-axiom}) reads
    \begin{equation}
        \mathcal{A}_{\Omega_{13}}(\psi,\varphi)=\int_{\conf(\Sigma_2)} \mathcal{A}_{\Omega_{23}}(\psi,\eta)\mathcal{A}_{\Omega_{12}}(\eta,\varphi)\, \dd\mathcal{L}(\eta),
    \end{equation}
    and corresponds to a ``formal Fubini theorem''. It requires $S_{\Omega_{13}}(\phi)=S_{\Omega_{12}}(\phi|_{\Omega_{12}})+S_{\Omega_{23}}(\phi|_{\Omega_{23}})$, namely, the action is \textit{local}.

\begin{exxx}[connection with Feynman-Kac for cylinders]
  Now we try to connect the Atiyah-Segal picture to the traditional Feynman-Kac-Nelson picture introduced previously. We consider the massive GFF in 2D discussed in subsection \ref{sec-intro-dyna-massive-gff}. This is a unitary QFT so we work under (4) of remark \ref{rem-intro-segal-general}, and since the massive GFF depends on a metric we work in the Riemannian category. Here the configuration space for a circle~$\mb{S}^1$ is~$\mathcal{D}'(\mb{S}^1)$ and the formal Hilbert space (\ref{eqn-intro-heu-hilb-space}) can be replaced by the rigorous probabilistic Hilbert space~$L^2(\mathcal{D}'(\mb{S}^1),(\tau_0)_*\mu)$ obtained in subsection \ref{sec-intro-dyna-massive-gff} (cf.\ (\ref{eqn-gff-semi-group-express-1-part})). 
  On a naive level, for a flat cylinder~$\Omega= [0,t]\times \mb{S}^1$, it is natural to impose~$U_{\Omega}:=\me^{-t\mn{H}_0}$ with~$\mn{H}_0$ defined below (\ref{eqn-intro-gff-feller-semigroup-fock-generator}). On this level (\ref{eqn-intro-segal-compos-axiom}) holds trivially between standard cylinders. Going one step further, the fact that~$\me^{-t\mn{H}_0}$ is the ``Feller semi-group'' of the GFF means that it comes from the underlying path-integral representation for the GFF in the spirit of (\ref{eqn-1d-feynman-kac-heuristic}). 
  On this level (\ref{eqn-intro-segal-compos-axiom}), also for cylinders, is the Chapman-Kolmogorov lemma for GFF which is a Markov process in the sense of (\ref{eqn-main-markov-property-gff}). However, more questions remain to be asked:
  \begin{enumerate}[(a)]
    \item How is this definition of~$U_{\Omega}$ for cylinders compatible with gluing other general curved surfaces?
    \item Postulating a definition for~$U_{\Omega}$ that satisfies (\ref{eqn-intro-segal-compos-axiom}) gives one \textit{possibility} for an Atiyah-Segal model for the (massive) GFF. In general this does not imply \textit{necessity}. How \textit{necessary} is this definition of~$U_{\Omega}$, even for cylinders? 
  \end{enumerate}
  As an immature attempt at answering these questions (especially (b)), we single out the following assumptions which would be made for the GFF. In this more general setting the data of the GFF includes a probability measure~$\mu_M$ defined on~$\mathcal{D}'(M)$ for every closed manifold~$M$ following (\ref{eqn-intro-def-gff-eigenmodes}), which has a Markov property in the sense of (\ref{eqn-main-markov-property-gff}) due to the locality of its action. Moreover denote by~$\mu$ the measure for the infinite flat cylinder defined in (\ref{eqn-cov-massive-gff}). We postulate that
\begin{enumerate}[(i)]
  \item for a 1-sided cobordism~$\Omega:\varnothing \rsa \Sigma$, the vector or state~$U_{\Omega}$ is represented by a nonnegative~$L^2$ function\footnote{This would mean that~$U_{\Omega}$ could be interpreted as a ``perturbed ground state'' in view of the Perron-Frobenius theorem (cf.\ proposition \ref{prop-perron-frob} and \cite{GJ} theorem 3.3.2).}, note that under our unitarity assumption (remark \ref{rem-intro-segal-general}(4)),~$U_{\Omega}$ is identified with~$U_{\Omega^*}$, the dual vector representing~$\Omega^*:\Sigma\rsa \varnothing$;
  \item for 2 oppositely 1-sided cobordisms~$\Omega_1:\varnothing\rsa \Sigma$ and~$\Omega_2:\Sigma\rsa \varnothing$ such that~$\Omega_2\circ\Omega_1=M$ with~$\Sigma\subset M$ embedded as a hypersurface, then~$U_{\Omega_1} U_{\Omega_2}\cdot(\tau_0)_*\mu=(\tau_{\Sigma})_*\mu_M$. Here~$U_{\Omega_1} U_{\Omega_2}$ is the product of the 2 functions in~$L^2$ space, and~$\tau_{\Sigma}:\mathcal{D}'(M)\lto \mathcal{D}'(\Sigma)$ denotes the restriction.
\end{enumerate}
\begin{figure}[h]
    \centering
    \includegraphics[width=0.6\linewidth]{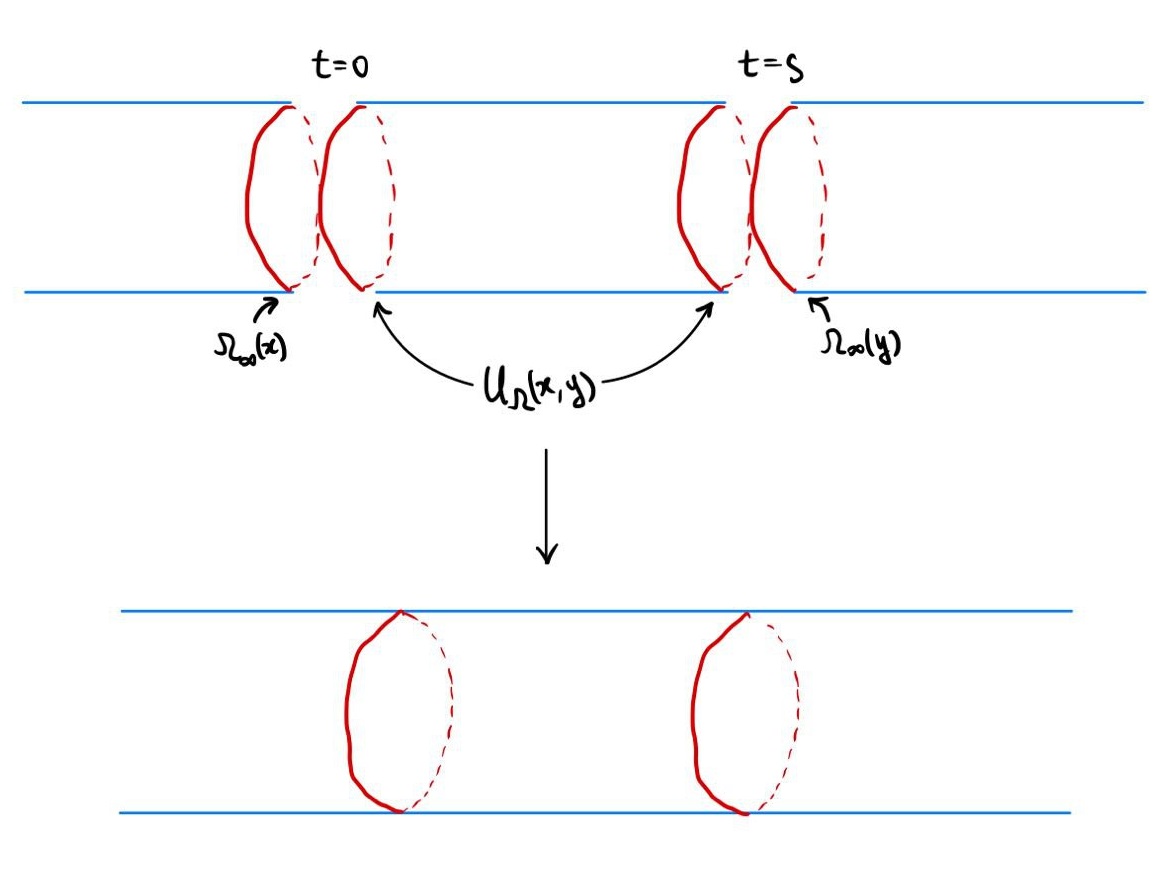}
    \caption{The Cylinder Case}
    \label{fig-segal-cyl}
\end{figure}
From (i) and (ii) one can deduce that for an infinite half-cylinder~$(-\infty,0]\times \mb{S}^1$, the associated state~$U_{(-\infty,0]\times \mb{S}^1}$ is the function~$1$, the ``ground state'', somehow tautologically since we used~$(\tau_0)_*\mu$ to define the Hilbert space. Accordingly the operator~$U_{[0,t]\times\mb{S}^1}$ associated to the finite cylinder~$[0,t]\times\mb{S}^1$ is the one represented by the Radon-Nikodym density
\begin{equation}
  U_{[0,t]\times\mb{S}^1}=\frac{\dd (\tau_{0\sqcup t})_* \mu}{\dd [(\tau_0)_*\mu\otimes (\tau_0)_*\mu]} \in \mathcal{H}_{\mb{S}^1}\otimes \mathcal{H}_{\mb{S}^1},
  \label{eqn-intro-segal-amp-cylinder}
\end{equation}
where~$\tau_{0\sqcup t}$ is the 2-component restriction map~$\mathcal{S}'(\mb{R}\times\mb{S}^1)\lto \mathcal{D}'(\{0\}\times\mb{S}^1)\times  \mathcal{D}'(\{t\}\times\mb{S}^1)$. By (\ref{eqn-intro-gff-feller-semigroup-fock}), we justify that~$U_{[0,t]\times\mb{S}^1}$ given by (\ref{eqn-intro-segal-amp-cylinder}) indeed coincides with~$\me^{-t\mn{H}_0}$. Moreover, one can proceed to show that (\ref{eqn-intro-segal-amp-cylinder}) is compatible with gluing other surfaces under assumption (ii), hence answering question (i). For example, let~$\Omega_1:\varnothing \rsa \mb{S}^1$ and~$\Omega_2:\mb{S}^1\rsa \varnothing$ again be two 1-sided cobordism. Put~$\tilde{M}:=\Omega_2\circ [0,t]\times \mb{S}^1\circ\Omega_1$. Then, following essentially the arguments in section \ref{sec-segal-main} using the Markov property, one may show that
\begin{equation}
  U_{[0,t]\times\mb{S}^1}(U_{\Omega_1}\otimes U_{\Omega_2})\cdot (\tau_0)_*\mu\otimes (\tau_0)_*\mu =(\tau_{0\sqcup t})_*\mu_{\tilde{M}},
  \label{}
\end{equation}
where we abused the notation~$\tau_{0\sqcup t}$ now to denote the restriction map onto the two boundaries of~$[0,t]\times\mb{S}^1$ seen as embedded in~$\tilde{M}$. Last but not least, we mention that postulate (ii) could be formulated in a ``ground-state-free'' manner using the infinite dimensional half-densities introduced by Pickrell \cite{Pickrell}.
\end{exxx}

To end this section, we mention that \cite{KS} has recently proposed a framework for doing Wick rotation that fits into the functorial picture, based on the idea of working with complex metrics on curved surfaces. A remarkable feature of this setting is that for any fixed surface with boundary there's a space of admissible metrics where Lorentzian ones are on the boundary and Euclidean ones in the interior. Remembering from subsection \ref{sec-intro-wick-rot} that Minkowski evolution operators were the boundary value of (complex) Euclidean ones, one plausibly imagines that Wick rotation may also be made ``functorial'', accounting for the ``boundary-bulk'' correspondence both on the metric side and on the operator side. However, making this statement precise lies far beyond the scope of the present thesis. We also mention that complex metrics have also been considered in \cite{Wrochna-2019-wick-analytic} and \cite{gérard-2021-hartle-hawking-israel-state-stationary-black}. Lastly, see also \cite{witten-2022-notecomplexspacetimemetrics}.

\chapter{Segal Axioms and Periodic Covers}\label{chap-segal}

This Chapter is adapted from \cite{Lin}.

     \section{Introduction}

In the classical approach to Quantum Field Theory (QFT),
a central role is played by the representations of the Poincar\'e group. For example, the Wightman axioms postulate such a representation on a Hilbert space $\mathcal{H}$ and assign operators on $\mathcal{H}$ to test functions on Minkowski space-time respecting the representation (\cite{RSim2} X.7).
It was later realized by Feynman, Symanzik \cite{Symanzik}, Nelson \cite{NelsonEuclidean} in the 60's that one could also describe many models of QFT using the \textit{functional integral}, and considering it in \textit{imaginary time} leads to the \textbf{Euclidean approach} to QFT (EQFT). This approach allowed many interesting models to be constructed rigorously in Minkowski space-time, for which the Wightman axioms were explicitly checked. They constitute a significant part of the subject of Constructive Quantum Field Theory (CQFT). However, the traditional techniques relied heavily on the space-time being flat and the results were likewise limited.

In the 80's, motivated by works of Atiyah and Witten on the relation of QFT with geometry and topology, there was an attempt to give an axiomatic definition of QFT that would incorporate arbitrary curved space-times, while still keeping the main properties of the
functional integral representation. Atiyah, Segal, Kontsevich \cite{Segal, KS} then arrived at a \textbf{functorial framework}. Loosely speaking, this framework defines a QFT as a linear representation of a certain bordism category of ``space-times''. At a preliminary level, a manifest difference between the traditional axiomatics and the functorial approach is the way they view time evolution. Traditionally time evolution is represented by a symmetry transformation of space-time (i.e.\ a Poincar\'e group element), whereas in the functorial picture it is represented by the piece of space-time cobordism connecting the initial and final ``time-slices''.
One important postulate, for example, says that \textbf{composition} of time evolutions corresponds to \textbf{gluing} of space-time cobordisms.

This paper is an attempt to reconstruct in the new functorial framework a celebrated interacting QFT model (the $P(\phi)_2$-model) previously constructed using traditional CQFT techniques. We give the precise sense and a rigorous proof that the $P(\phi)_2$-model satisfies Segal's axioms (Section \ref{sec-segal-main}). Then we apply the functorial QFT framework to certain purely mathematical questions on surface geometry (Section \ref{sec-period-cover}).

\subsection{Example of 1D Spin Chain and its Transfer Operator}
\label{sec-intro-spin-chain}

\noindent Our example illustrates the gluing properties by the transfer operator for discrete spin systems.
Consider $\mb{Z}_N:=\mathbb{Z}/N\mathbb{Z}$ as a $1$-dimensional chain of size $N$, consider the space $\mathbb{R}^N$ now as the space of maps $\mb{Z}_N \lto \mathbb{R}$, namely a discrete \textsf{path space}. The lattice site is denoted by $i\in \{1,\dots,N\}$ and call $\sigma\in\mb{R}^N$ a \textsf{spin configuration}, whose value at the site $i$ reads $\sigma(i)\in \mathbb{R}$. We impose \textsf{periodic boundary condition}, which means~$\sigma(N+1)\equiv \sigma(1)$. Thus the chain could be considered as circular.
We are given an action functional on the configuration space $\mb{R}^N$ that reads
\begin{eqnarray}
S_N(\sigma)\defeq\sum_{i=1}^{N} \vert \sigma(i+1)-\sigma(i)\vert^2 + \sum_{i=1}^N P(\sigma(i))
\label{eqn-spin-chain-inter}
\end{eqnarray}
where $P$ is a polynomial bounded from below, the interaction is nearest neighbour.
Given a configuration~$\sigma\in \mb{R}^N$,~$S_N(\sigma)$ may be thought of as its ``energy'' and the statistical behaviour of the system is described by the probability measure called \textsf{Gibbs measure},
\begin{eqnarray}
 \dd\mu_{P(\sigma)}(\sigma)\defeq \frac{1}{\mathcal{Z}(N)}\me^{-S_N(\sigma)} \dd^N\sigma, \quad\textrm{with}\quad
\mathcal{Z}(N)=\int_{\mathbb{R}^N} \me^{-S_N(\sigma)} \dd^N\sigma 
\label{eqn-spin-chain-gibbs-part-func}
\end{eqnarray}
called the \textsf{partition function} of the system.

Now we would like to express this partition function $\mathcal{Z}(N)$ in terms of elementary building blocks.
The main idea is to slice the action functional as
\begin{eqnarray*}
S_N(\sigma)=\sum_{i=1}^{N} \Big[ \vert \sigma(i+1)-\sigma(i)\vert^2 + \frac{1}{2}(P(\sigma(i+1))+P(\sigma(i))) \Big]
\end{eqnarray*} 
so exponentiating gives
\begin{eqnarray}
\exp\left(-S_N(\sigma) \right)=  \prod_{i=1}^{N} K(\sigma(i+1),\sigma(i)) 
\label{eqn-spin-chain-inter-expression}
\end{eqnarray}
where 
\begin{equation}
    K(x,y)= \me^{-\vert x-y\vert^2-\frac{1}{2}(P(x)+P(y))} 
    \label{eqn-spin-chain-trans-kernel}
\end{equation}
which is the Schwartz kernel of an operator on $L^2(\mathbb{R})$ that is smoothing.

\begin{deef}
  Define the \textsf{transfer operator} $T$ to be exactly the operator with kernel~$K(x,y)$, that is,
  \begin{equation}
    (TF)(x)=\int K(x,y)F(y) \dd y,
    \label{}
  \end{equation}
  for any function(al)~$F\in L^2(\mb{R})$.
\end{deef}

Then we see immediately from (\ref{eqn-spin-chain-inter-expression}) that
\begin{eqnarray}
  \mathcal{Z}(N)=\int_{}^{}\prod_{i=1}^{N} \big[K(\sigma(i+1),\sigma(i)) \dd\sigma(i) \big]= \ttr_{L^2(\mathbb{R})}( T^{N} ),
  \label{eqn-spin-chain-part-func-trace}
\end{eqnarray}
remembering that~$\sigma(N+1)\equiv \sigma(1)$, since for a smoothing operator~$A: L^2(\mb{R}) \lto \mathcal{S}(\mathbb{R})$ we have~$\ttr_{L^2(\mb{R})}(A)=\int_{}^{}K_A(x,x)\dd x$ with~$K_A$ being the integral kernel.

More generally we would like to express the kernel of~$T^N$ in terms of~$\exp(-S(\sigma))$ using the relation (\ref{eqn-spin-chain-inter-expression}). This means instead of letting the boundary condition be periodic we let $\sigma(1)=\sigma_{\mm{in}}$, $\sigma(N+1)=\sigma_{\mm{out}}$ given two boundary conditions $(\sigma_{\mm{in}},\sigma_{\mm{out}})\in \mathbb{R}^2 $. Then the kernel~$K_N$ of~$T^N$ is\footnote{Formula (\ref{eqn-intro-kernel-compo-transfer}) below is an analogue of what is called in the quantum information context the ``\textsf{replica trick}''.}
\begin{align}
  K_N(\sigma_{\mm{out}},\sigma_{\mm{in}})&=\int_{}^{}K(\sigma_{\mm{out}},\sigma(N))\cdots K(\sigma(2),\sigma_{\mm{in}}) \prod_{i=2}^N \dd\sigma(i) \label{eqn-intro-kernel-compo-transfer} \\
  &=\int_{\mb{R}^{N-1}}^{} \me^{-S_N(\sigma|\sigma_{\mm{in}},\sigma_{\mm{out}})} \prod_{i=2}^N \dd\sigma(i), \label{eqn-intro-kernel-path-integral}
\end{align}
with the \textsf{conditioned} interaction~$S_N(\sigma|\sigma_{\mm{in}},\sigma_{\mm{out}})$ defined as
\begin{equation}
  S_N(\sigma|\sigma_{\mm{in}},\sigma_{\mm{out}})\defeq \sum_{i=1}^{N} \vert \sigma(i+1)-\sigma(i)\vert^2 + \sum_{i=2}^N P(\sigma(i)) +\frac{1}{2}(P(\sigma_{\mm{in}})+P(\sigma_{\mm{out}})),
  \label{eqn-intro-interact-condition}
\end{equation}
with~$\sigma(1)\equiv \sigma_{\mm{in}}$ and~$\sigma(N+1)\equiv\sigma_{\mm{out}}$. Now if~$N_1$,~$N_2$ are two integers and we define the kernels~$K_{N_1}$,~$K_{N_2}$ using (\ref{eqn-intro-kernel-path-integral}) and (\ref{eqn-intro-interact-condition}) with~$N$ replaced respectively by~$N_1$,~$N_2$, then it follows ``trivially'' from the composition property~$T^{N_2}\circ T^{N_1}=T^{N_2+N_1}$ that
\begin{equation}
  K_{N_2+N_1}(\sigma_{\mm{out}},\sigma_{\mm{in}})=\int_{}^{}K_{N_2}(\sigma_{\mm{out}},\sigma)K_{N_1}(\sigma,\sigma_{\mm{in}}) \dd\sigma.
  \label{eqn-intro-composition}
\end{equation}

However, such a relation becomes remarkable (rather than trivial) if we do not have the ``unit'' transfer operator~$T$ to start with; that is, if we do not have (\ref{eqn-intro-kernel-compo-transfer}) but define~$K_{N_1}$ and~$K_{N_2}$ \textit{directly} with an expression of the form (\ref{eqn-intro-kernel-path-integral}) and (\ref{eqn-intro-interact-condition}). 
This corresponds to the idea of a \textsf{path integral} in quantum mechanics and quantum field theory. Alternatively one could consider the Gibbs measure (\ref{eqn-spin-chain-gibbs-part-func}) and take~$K_N$ as the \textit{transition probability} of a certain stochastic process. Then (\ref{eqn-intro-composition}) is the \textsf{Chapman-Kolmogorov} equation which relies heavily on the fact that the underlying process is \textsf{Markovian}. In a sense, for both interpretations a crucial condition is that the interaction~$S(\sigma)$ be \textsf{local}; that is, very roughly speaking, if one chops the sites~$[1,N_1+N_2]:=\{1,2,\dots,N_1+N_2\}$ into~$[1,N_1]\sqcup [N_1+1,N_1+N_2]$ then~$S_{[1,N_1+N_2]}(\sigma)\approx S_{[1,N_1]}(\sigma|_{[1,N_1]})+S_{[N_1+1,N_1+N_2]}(\sigma|_{[N_1+1,N_1+N_2]})$.

The main result of this article concerns a 2-dimensional ``continuum'' version of this story where lattice sites are replaced by the continuum of points on a 2D surface (considered as space-time) and a configuration is replaced by a distribution. See the section below for a more precise description. In the final section, we also show that when the space-time admits a periodic translation symmetry then a more precise analogy with the spin chain described above can be restored, in particular, there exists a \textsf{Gibbs state} in the \textsf{thermodynamic limit}. Further discussion of the above example continues in section \ref{sec-spin-chain-cont}.

\subsection{Main Results}

This article will prove three main results whose preliminary versions are stated as theorems \ref{thrm-intro-main-1}, \ref{thrm-intro-main-2}, and \ref{thrm-intro-main-3} below, among which the first is the main theorem and the second and third are two main consequences.

\begin{thrm}
  [Segal Axioms for~$P(\phi)_2$, partial statement] \label{thrm-intro-main-1} For each finite disjoint union~$\Sigma$ of Riemannian circles (each component circle characterised by its radius) there exists a finite measure~$\mu_{\Sigma}$ on the space~$\mathcal{D}'(\Sigma)$ of real distributions\footnote{By a \textsf{real distribution} $\phi$ we mean that if we pair~$\phi$ with a potentially complex test function~$f$, then $\ol{\phi(f)}=\phi(\ol{f})$,
  the bar denoting complex conjugation.} over~$\Sigma$, giving the Hilbert space~$\mathcal{H}_{\Sigma}:= L^2(\mathcal{D}'(\Sigma),\mu_{\Sigma})$, such that the following holds: 
  if~$\Omega$ is a Riemannian surface whose boundary has two components~$\partial\Omega=\Sigma_{\mm{in}}\sqcup \Sigma_{\mm{out}}$, denote by~$\mu_{\mm{in}}$,~$\mu_{\mm{out}}$ respectively the measures on~$\mathcal{D}'(\Sigma_{\mm{in}})$,~$\mathcal{D}'(\Sigma_{\mm{out}})$ and put~$\mathcal{H}_{\mm{in}}:=L^2(\mathcal{D}'(\Sigma_{\mm{in}}),\mu_{\mm{in}})$,~$\mathcal{H}_{\mm{out}}:=L^2(\mathcal{D}'(\Sigma_{\mm{out}}),\mu_{\mm{out}})$, then there exists an operator~$U_{\Omega}:\mathcal{H}_{\mm{in}}\lto \mathcal{H}_{\mm{out}}$ given by
  \begin{equation}
    (U_{\Omega}F)(\varphi_{\mm{out}})=\int_{}^{}\mathcal{A}_{\Omega}(\varphi_{\mm{in}},\varphi_{\mm{out}}) F(\varphi_{\mm{in}})\,\dd\mu_{\mm{in}}(\varphi_{\mm{in}}),\quad F\in \mathcal{H}_{\mm{in}},
    \label{}
  \end{equation}
  where $\mathcal{A}_{\Omega}(\varphi_{\mm{in}},\varphi_{\mm{out}})$ is given by the Radon-Nikodym density between two mutually absolutely continuous finite measures on~$\mathcal{D}'(\Sigma_{\mm{in}})\times \mathcal{D}'(\Sigma_{\mm{out}})$ defined rigorously in Eqn. (\ref{eqn-def-amp-omega}), Subsection \ref{sec-amplitude}.

Moreover, the correspondence~$\Omega\longmapsto U_{\Omega}$ between surfaces and operators satisfy the following \textsf{composition/functorial property}. Let~$\Omega_1$,~$\Omega_2$ be two Riemannian surfaces of the kind as above where the ``out'' boundary of~$\Omega_1$ is isometric to the ``in'' boundary of~$\Omega_2$ via an isometry~$\rho$, we glue them along~$\rho$ and obtain the surface~$\Omega_2\cup_{\rho}\Omega_1$. Then
\begin{equation}
  U_{\Omega_2\cup_{\rho}\Omega_1}=U_{\Omega_2}\circ U_{\Omega_1},
  \label{eqn-intro-state-compo}
\end{equation}
for the operators~$U_{\Omega_2\cup_{\rho}\Omega_1}$,~$U_{\Omega_2}$ and~$U_{\Omega_1}$ defined as above.
\end{thrm}

See Subsection \ref{sec-segal-descript}, Theorem {\color{blue} 1$'$} for a full statement.

The first application of Theorem \ref{thrm-intro-main-1} generalizes to the interacting cases, certain results of Naud~\cite{Nauddet} on the asymptotics of the free energy of free bosons on certain large degree covers.

\begin{thrm}[asymptotics of the partition function on large degree cyclic covers]\label{thrm-intro-main-2}

 Let $(M,g)$ denotes a compact Riemannian surface, $M_\infty\lto M$ a Riemannian $\mathbb{Z}$-cover of $M$ and $\gamma$ the generator of the deck group.
We can think of $M_\infty$ as the infinite composition of some given cobordism $\Omega$, which serves as the ``fundamental domain'' of the deck transformations on $M_\infty$. 
For every $N\in \mb{N}$, denote by $M_N:=M_\infty/\gamma^N $ the cyclic cover of degree $N$ of $M$ and $\Delta_N$ is the Laplace-Beltrami operator on $M_N$.
Let $Z_N$ be the partition function of the $P(\phi)_2$ theory on $M_N$ obtained in Theorem \ref{thrm-nelson}.

Then the renormalized sequence of free energies
$
\frac{1}{N}\log\left(Z_N \right)    
$ has a limit $\lambda_0$ when $N\rightarrow +\infty$, moreover this limit $\lambda_0$ can be interpreted as the leading eigenvalue of some transfer operator $U_\Omega$ which is the quantization of the cobordism $\Omega$ mentioned in Theorem \ref{thrm-intro-main-1}.
\end{thrm}

In subsection~\ref{sec-perron-frob-gibbs}, we use the above $P(\phi)_2$ measures defined on the towers of cyclic covers $M_N$ to produce a $P(\phi)_2$-Gibbs state on the periodic surface $M_\infty$ of infinite volume. 

\begin{thrm}[Mass gap for $P(\phi)_2$ on $M_\infty$]\label{thrm-intro-main-3}
In the notations from Theorem~\ref{thrm-intro-main-2},
for compactly supported bounded observables $F$ on $M_\infty$, one can define expectations with respect to the $P(\phi)_2$ Gibbs state as
\begin{eqnarray*}
\mathbb{E}\left[F \right]  \defeq \lim_{N\rightarrow +\infty} \frac{1}{Z_N\cdot{\tts \det_{\zeta}}(\Delta_N+m^2)^{-\frac{1}{2}}} \int_{\mathcal{D}'(M_N)}^{} F(\phi)\me^{-\int_{M_N}^{}{:}P(\phi(x)){:}\dd V(x)}\dd\mu_{\mm{GFF}}^{M_N}(\phi),  
\end{eqnarray*}
where the construction of $Z_N$ and the sense of the latter integral is made in Theorem \ref{thrm-nelson}. The generator of the deck group $\gamma$ acts on $M_\infty$ by diffeomorphism which induces a shift map $\tau$ on observables which moves the support by $\gamma$.
The Gibbs state defined above is exponentially mixing under the shift map, for compactly supported $L^2$ observables~:
\begin{eqnarray*}
\mathbb{E}[ \tau^kF G]=\mathbb{E}[ F ]\mathbb{E}[ G]+ \mathcal{O}(\alpha^k)    
\end{eqnarray*}
for some $\alpha<1$ (see corollary~\ref{cor-segal-trans-exp-decay}).
\end{thrm}

\subsection{Main Novelties, Organization}

We would like to point out \textbf{four main novelties} of this article, regarding the earlier work \cite{Pickrell} on the subject as well as the related work \cite{GKRV}. See the published version of \cite{Lin} for a more detailed comparison with literature.

\paragraph{Segal Gluing Gives a Gapped Theory.}
Firstly, the relationship between theorems \ref{thrm-intro-main-2}, \ref{thrm-intro-main-3} and \ref{thrm-intro-main-1} above offers a new perspective that certain asymptotic results in pure geometry can be viewed as a consequence of an underlying Segal Quantum Field Theory whose ``transfer operators''~$U_{\Omega}$ satisfy certain nice properties. Besides, of course, theorem \ref{thrm-intro-main-2} itself is new in that the numbers~$Z_N$ are now partition functions of \textit{interacting} field theories instead of regularized determinants of geometric operators, which were studied previously and in our setting corresponds to free field theories. Moreover, in the process of proving theorem \ref{thrm-intro-main-2} one also obtains an infinite-volume~$P(\phi)_2$-theory (in one direction) over a curved periodic surface which has never been considered before in the literature. Moreover, the Segal view point gives a simple conceptual proof of the mass gap for the $P(\phi)_2$ state for any polynomial $P$ of degree $\geqslant 4$ bounded from below on a space of infinite volume, infinite genus. Theorem \ref{thrm-intro-main-3} seems to be one of the first mass gap results for interacting QFT on curved spaces.

\paragraph{A Conceptual Proof for Gluing GFFs.}

We also offer a novel treatment on Segal gluing of the GFF ($P=0$), which is natural and based on first principles, but nevertheless has not been followed by existing literature. We describe it briefly via a finite dimensional analogy: suppose two real random variables~$X$ and~$Y$ have joint probability law~$\mb{P}_{(X,Y)}$; then we have two equal expressions for their joint density~$p(x,y)=(\dd\mb{P}_{(X,Y)}/\dd \mathcal{L}_{\mb{R}}^{\otimes 2}) (x,y)$, namely
\begin{equation}
  p(x,y)=\frac{\dd(\pi^x_*\mb{P}_{(X,Y)})}{\dd \mathcal{L}_{\mb{R}}}(x)\frac{\dd \mb{P}_{Y|X=x}}{\dd \mathcal{L}_{\mb{R}}}(y)=\frac{\dd(\pi^y_*\mb{P}_{(X,Y)})}{\dd \mathcal{L}_{\mb{R}}}(y)\frac{\dd \mb{P}_{X|Y=y}}{\dd \mathcal{L}_{\mb{R}}}(x),
  \label{eqn-bayes-ordinary}
\end{equation}
where~$\pi^x$,~$\pi^y$ are respectively the projections onto the~$x$- and~$y$-axes,~$\mb{P}_{Y|X=x}$ denotes the conditional law of~$Y$ knowing ``$X=x$'', and vice versa for~$\mb{P}_{X|Y=y}$. In other words, one could evaluate~$p(x,y)$ by \textit{conditioning} in two alternative ways: on~$X$ or on~$Y$. 
In Proposition \ref{prop-bayes-gff} we prove a version of (\ref{eqn-bayes-ordinary}) for the GFF measure on~$\mathcal{D}'(M)$ for~$M$ a closed Riemannian surface, where the role of~$\pi^{x,y}$ are played by trace maps (restrictions) onto embedded Riemannian circles. 
Since the Segal amplitudes (Eqn.\ (\ref{eqn-def-amp-omega})) are related in a simple way to the trace-image measures on distributions over the circles (corresponding to~$p(x,y)$ above), we may prove Segal's gluing for the GFF essentially from just the definitions, the Markov property, and symmetry considerations (Proposition \ref{prop-bayes-free-end}). Then by the \textit{locality} of the~$P(\phi)_2$-interaction, the result extends to the interacting case. 

\paragraph{A Transparent Treatment of Locality.}

To go from~$P=0$ to~$P\ne 0$, a key ingredient is to show that the~$P(\phi)$-interaction functional of the field ---~$\int_{M}^{}P(\phi(x))\,\dd x$ heuristically --- is \textit{local}, which in fact enables the amplitude (cf.\ Eqn.\ (\ref{eqn-def-amp-omega})) to be defined rigorously in the first place. No precise description of this locality existed in the literature and no explicit construction was given to prove it except for some remarks in~\cite{Abdesselam, Duch}. The present article fills this gap in Subsection \ref{sec-locality}, based on a strengthening of Nelson's argument presented in Subsection \ref{sec-nelson-reg-indep} that allows a freedom in choosing the regulators in the renormalization process, producing the same measure in the limit (see also the introduction of Section \ref{sec-nelson-main}). This latter result is also new by itself.

\paragraph{Dirichlet-to-Neumann Operator and Reflection Positivity.}
 As a tool in our treatment of the law of the restriction of GFF on circles, we exhibit an elementary relation between the operators~$\tau_{\Sigma}$,~$\Delta_M+m^2$,~$\PI_M^{\Sigma}$ and~$\DN_M^{\Sigma}$ (see ``notations'' below):
\begin{equation}
  j_{\Sigma}\defeq \tau_{\Sigma}^*=(\Delta_M+m^2)\PI_M^{\Sigma}(\DN_M^{\Sigma})^{-1},
  \label{}
\end{equation}
where~``$*$'' denotes the distributional adjoint. In Lemma \ref{lemm-DN-trick} we derive this from integration by parts (Green-Stokes theorem). Though elementary, we do not find this formula elsewhere. Via similar considerations, one also has (see Appendix \ref{sec-RP-first})
 \begin{equation}
    \PI_{\Omega}^{\partial\Omega}(\DN_{\Omega}^{\partial\Omega})^{-1}(\PI_{\Omega}^{\partial\Omega})^*=C_N-C_D,
    \label{}
  \end{equation}
  where~$C_D$,~$C_N$ respectively are the Green operators (of the massive Laplacian) with Dirichlet and Neumann conditions. This gives a \textbf{direct relation} between the positivity of the Dirichlet-to-Neumann map and the reflection positivity (RP) of the GFF on closed Riemannian manifolds with reflection symmetry. It has never been made explicit previously.

\paragraph{Organisation.} In Section \ref{sec-gaus-fields} we recall the preliminaries on which our construction will be based; these include the definition and properties of the (massive) Gaussian Free Field (GFF) measure on $\mathcal{D}'(M)$ (Subsection \ref{sec-mas-GFF}), the functional determinants (Subsection \ref{sec-det}) and a formula computing the Radon-Nikodym densities between mutually absolutely continuous Gaussian measures on spaces of distributions (Subsection \ref{sec-quad-pert-rad-niko-main}), of which we include a full proof based on the method of \cite{GJ} section 9.3. 

In Section \ref{sec-nelson-main} we retrace the classical argument of Nelson \cite{Nel66} and construct the interacting $P(\phi)$ functional measure on $\mathcal{D}'(M)$ for $M$ a closed Riemannian surface, giving the partition function $Z_M$; the new result which says the interaction functional (and hence the limiting measure) is independent of the regulator for the renormalisation process, when the regulator is picked from a specific class, is Proposition \ref{prop-nelson-main}.

Section \ref{sec-trace-poisson} aims to derive the behavior of a Gaussian field under the trace map (restriction to hypersurface); our method is based on elementary relations between several geometric operators as mentioned above (Lemma \ref{lemm-DN-trick}), and as above it leads to the somewhat new perspective on reflection positivity (Subsection \ref{sec-RP-first}). Section \ref{sec-markov-main} discusses the Markov decomposition of the GFF (Subsection \ref{sec-markov-main}) that culminates in a ``Bayes formula'' (Proposition \ref{prop-bayes-gff}) for the GFF and proving locality of the $P(\phi)$ interaction (Subsection \ref{sec-locality}). 

Section \ref{sec-segal-main} formulates precisely and proves theorem \ref{thrm-intro-main-1} while Section \ref{sec-period-cover} does it for theorem \ref{thrm-intro-main-2} and \ref{thrm-intro-main-3}.

\subsection{Notations}

\noindent In this paper, unless stated otherwise,

``$\heueq$'' means ``heuristically equal to'';

$(M,g)=$ closed Riemannian surface with metric~$g$;

$(\Omega,g)=$ compact Riemannian surface with metric~$g$, boundary~$\partial\Omega$, seen as isometrically embedded in~$M$, the induced metric on~$\partial\Omega$ is still denoted~$g$;

$\Sigma =$ disjoint union of Riemannian circles (of different radii), usually~$\Sigma =\partial\Omega$; hypersurface in $M$;

$\Omega^{\circ}=$ interior of $\Omega$;

$m=$ mass parameter,~$m>0$, \textit{fixed throughout paper};

$\mathcal{D}'(M)$,~$\mathcal{D}'(\Omega^{\circ})=$ \textsf{real} distributions on~$M$ and~$\Omega$;

$C^{\infty}(M)$,~$C_c^{\infty}(\Omega^{\circ})=$ smooth functions on~$M$, smooth compact support functions on~$\Omega^{\circ}$;

$\Delta=$ Laplacian on~$(M,g)$, $\Delta f:=-\ddiv (\nabla f)$, therefore it is defined to be \textsf{nonnegative}, the same applies below;

$\Delta_{\Sigma}=$ Laplacian on~$\Sigma$;

$\Delta_{\Omega,D}=$ Laplacian on~$\Omega$ with (zero) Dirichlet boundary condition (see also remark \ref{rem-Dir-Neu-cond-mean});

$\mn{D}_{\Sigma}=(\Delta_{\Sigma}+m^2)^{1/2}$;

$\tau_{\Sigma}=$ trace operator (restriction) onto the hypersurface $\Sigma$;

$(Q,\mathcal{O})=$ general probability sample space with~$\sigma$-algebra~$\mathcal{O}$;

$C=$ general positive self-adjoint elliptic pseudodifferential operator on~$M$,~$\Omega$,~$\Sigma$, which is Hilbert-Schmidt on the corresponding~$L^2$ spaces;

$\mu_C^{\mm{cond}}=\mu_{C^{-1}}^{\mm{cond}}=$ Gaussian measure on~$\mathcal{D}'(M)$,~$\mathcal{D}'(\Omega^{\circ})$, or~$\mathcal{D}'(\Sigma)$, equipped with their Fr\'echet Borel~$\sigma$-algebra, with covariance~$\mb{E}_C[\phi(f)\phi(h)]=\ank{f,C h}_{L^2}$, under some conditions;

$\mu_{\mm{GFF}}^M=$ massive GFF measure on $\mathcal{D}'(M)$; $\mu_{\mm{GFF}}^{\Omega,D}=$ massive Dirichlet GFF measure on $\mathcal{D}'(\Omega^{\circ})$;

$\mb{E}_{B}^A=$ expectation under $\mu_{B}^A$;

$\ank{-,-}_{L^2}=L^2$-inner product, pairing between $\mathcal{D}'(M)$ and $C^{\infty}(M)$, or between $W^s(M)$ and $W^{-s}(M)$, or any other pairing which is an extension of the $L^2$-inner product;

$\phi(f)=$ the random variable~$\phi\mapsto \ank{\phi,f}_{L^2}$ indexed by~$f\in C^{\infty}(M)$,~$C_c^{\infty}(\Omega^{\circ})$ or~$C^{\infty}(\Sigma)$;

$W^s(M)=$ the~$L^2$ Sobolev space on~$M$, with inner product~$\ank{-,-}_{W^s(M)}:=\ank{-,(\Delta+m^2)^{s} -}_{L^2}$;

$W^s_{A}(M)$, $W^s_U(M)$, $W^s(U)=$ see appendix \ref{sec-app-sobo};

$p_{M\setminus A}^{\perp}$,~$p_{M\setminus A}$,~$P_{ A}$, and~$P_{ A}^{\perp}=$ Sobolev projections defined in lemma \ref{lemm-sobo-decomp};

$\Psi^r(M)$, $\Psi^r(\Sigma)=$ pseudodifferential operators ($\Psi$DOs) on~$M$ or $\Sigma$ with order~$r$;

$[\mathcal{L}\phi]=$ hypothetical Lebesgue measure on a space of distributions;

$\nrm{\cdot}_{\mm{tr}}$, $\nrm{\cdot}_{\mm{HS}}$, $\nrm{\cdot}_{L^2}=$ trace norm, Hilbert-Schmidt norm, operator norm acting on $L^2$ or $L^2$-norm on function;

$\PI_{\Omega}^{\Sigma,B}=$ Poisson integral operator extending from~$\Sigma$ to~$\Omega$ (definition \ref{def-pi-more-general}), with boundary conditions~$B$ imposed on boundary components \textit{other than}~$\Sigma$ (see also remark \ref{rem-Dir-Neu-cond-mean});

$\DN_{\Omega}^{\Sigma,B}=$ Dirichlet-to-Neumann operator or the jumpy version defined using~$\PI_{\Omega}^{\Sigma,B}$ (jumpy version is understood when~$\Sigma$ is in the interior of~$\Omega$ rather than a boundary component);

$\PI_{M}^{\Sigma}$, $\DN_{M}^{\Sigma}=$ Poisson integral operator extending from~$\Sigma$ to~$M$ (see (\ref{eqn-def-pi-emb-hyp-in-closed-case})), and jumpy Dirichlet-to-Neumann operator defined using $\PI_M^{\Sigma}$ (definition \ref{def-jp-DN-map}).

$\mu_{\DN}^{\Sigma,\Omega,B}=$ Gaussian measure on~$\mathcal{D}'(\Sigma)$ with covariance~$(\DN_{\Omega}^{\Sigma,B})^{-1}$;

$\mu_{2\DN}^{\partial\Omega,\Omega}=$ Gaussian measure on~$\mathcal{D}'(\partial\Omega)$ with covariance~$\frac{1}{2}(\DN_{\Omega}^{\partial\Omega})^{-1}$;

$\mu_{2\mn{D}}^{\Sigma}=$ Gaussian measure on~$\mathcal{D}'(\Sigma)$ with covariance~$\frac{1}{2}(\mn{D}_{\Sigma})^{-1}$;

$\mathcal{M}_{M,2}^1= \tau_{\Sigma_2}\PI_{M}^{\Sigma_1}$ is the \textsf{transition operator} defined in section \ref{sec-bayes-gff}.

  \section{Gaussian Fields on a Riemannian Manifold}
  \label{sec-gaus-fields}

\noindent \textbf{In this and the next section} we define rigorously a probability measure on~$\mathcal{D}'(M)$, as well as certain variants on~$\mathcal{D}'(\Omega^{\circ})$, which heuristically bears the form
\begin{equation}
  \dd\mu_P(\phi)\heueq\frac{1}{Z}\overbrace{\me^{-\int_{M}^{}P(\phi(x))\dd V_M (x)}}^{A} \underbrace{\me^{-\frac{1}{2}\int_{M}^{}(|\nabla \phi|_g^2 +m^2\phi^2 )\dd V_M}[\mathcal{L}\phi]}_{B},
  \label{eqn-def-mes-gibbs-heu}
\end{equation}
where~$P$ is a polynomial bounded from below, such as $P(\phi)=\phi^4-\phi^2$,~$[\mathcal{L}\phi]$ denotes the nonexistent Lebesgue measure on~$\mathcal{D}'(M)$ and
\begin{equation}
  Z_M\heueq\int_{\mathcal{D}'(M)}^{} \me^{-\int_{M}^{}P(\phi(x))\dd V_M(x)}\me^{-\frac{1}{2}\int_{M}^{}(|\nabla \phi|_g^2 +m^2\phi^2 )\dd V_M}[\mathcal{L}\phi]
  \label{}
\end{equation}
is the normalization factor, also called the \textsf{partition function}. 

The idea is that while~$[\mathcal{L}\phi]$ is nonexistent, together with the factor~$\exp(-\frac{1}{2}\int_{M}^{}(|\nabla \phi|_g^2 +m^2\phi^2 )\dd V_M)$ the expression~$B$ can be given a rigorous meaning as a Gaussian probability measure scaled by a (finite) volume constant, and the measure (\ref{eqn-def-mes-gibbs-heu}) can be constructed if, after defining part $A$ rigorously, one proves that it is~$L^1$ with respect to the measure~$B$. 

\textbf{In this section} we construct the measure $B$. In analogy with the expression of Gaussian measures on $\mb{R}^n$, one should define heuristically
\begin{equation}
  \me^{-\frac{1}{2}\int_{M}^{}(|\nabla \phi|_g^2 +m^2\phi^2 )\dd V_M}[\mathcal{L}\phi] \heueq \textrm{``}\det\textrm{''}(\Delta+m^2)^{-\frac{1}{2}} \dd\mu_{\mm{GFF}}^M (\phi)
  \label{eqn-heu-gaussian-mea-det-volume}
\end{equation}
where~$\mu_{\mm{GFF}}^M$ is a Gaussian measure on~$\mathcal{D}'(M)$ with \textit{covariance operator}~$(\Delta+m^2)^{-1}$, and ``$\det$'' is an infinite dimensional generalization of the determinant of a matrix. The Gaussian measure and the determinant are two issues to be treated separately (in sections \ref{sec-mas-GFF} and \ref{sec-det}), both being rather classical, the former called \textsf{Gaussian Free Field (measure)} and the latter called the \textsf{$\zeta$-regularized determinant}.

  \label{sec3.1}

 \subsection{The Massive Gaussian Free Field}\label{sec-mas-GFF}

 \subsubsection{Definition and Representations}
In this subsection we explain the point of view adopted in this article of the massive Gaussian Free Field (GFF), and refer to Sheffield \cite{Shef} and Powell and Werner \cite{powell-werner} for more information. 

Let~$(M,g)$ be a closed Riemannian manifold \textsf{of dimension~$d$} with metric~$g$, and~$\Omega \subset M$ an open domain with smooth boundary~$\partial\Omega$, both equipped with the metric induced from~$g$ (same notation). Fix~$m>0$ as the mass parameter. We say that
  the \textsf{massive Gaussian Free Field (GFF)} with mass~$m$ on~$M$ is the \textit{Gaussian random process indexed by}~$C^{\infty}(M)$, consisting of random variables~$\{\phi(f)~|~f\in C^{\infty}(M)\}$ such that
  \begin{equation}
    \mb{E}[\phi(f)\phi(h)]=\ank{f,(\Delta+m^2)^{-1}h}_{L^2(M)},\quad \mb{E}[\phi(f)]\equiv 0,
    \label{eqn-gff-cov-closed}
  \end{equation}
  for any~$f$,~$h\in C^{\infty}(M)$. Similarly, the \textsf{Dirichlet massive Gaussian Free Field} with mass~$m$ on~$\Omega$ is the \textit{Gaussian random process indexed by}~$C_c^{\infty}(\Omega^{\circ})$, consisting of random variables~$\{\phi(f)~|~f\in C_c^{\infty}(\Omega^{\circ})\}$ such that
  \begin{equation}
    \mb{E}[\phi(f)\phi(h)]=\bank{f,(\Delta_{\Omega,D}+m^2)^{-1}h}_{L^2(\Omega)}=\bank{P_{M\setminus \Omega^{\circ}}^{\perp}f, P_{M\setminus \Omega^{\circ}}^{\perp}h}_{W^{-1}},\quad \mb{E}[\phi(f)]\equiv 0,
    \label{}
  \end{equation}
  for any~$f$,~$h\in C_c^{\infty}(\Omega^{\circ})$. See appendix \ref{sec-app-sobo} and in particular lemma \ref{lemm-diri-green-op-quad-form}.

There exist many choices of sample spaces (``$Q$-spaces'' in the terminology of \cite{Sim2}) on which to realize those Gaussian processes. These realizations are all equivalent in the sense of \textsf{isomorphism of measure algebras} (see \cite{Sim2} section I.2). For the sake of concreteness, we point out that one choice for the sample space is~$\mathcal{D}'(M)$ (or~$\mathcal{D}'(\Omega^{\circ})$), as formulated in the following proposition, whose proof parallels the case on Euclidean space with minor modification.

\begin{prop}
  [Bochner-Minlos, \cite{BHL} theorem 5.11, page 266] \label{prop-boch-min} There exists a Borel probability measure~$\mu_{\mm{GFF}}^M$ on the Fr\'echet space~$\mathcal{D}'(M)$ such that~$\phi\mapsto \ank{\phi,f}_{L^2(M)}=:\phi(f)$,~$f\in C^{\infty}(M)$, realizes the random variable~$\phi(f)$ of the massive GFF on~$M$. 
  
  Similarly, there exists a Borel probability measure~$\mu_{\mm{GFF}}^{\Omega,D}$ on the Fr\'echet space~$\mathcal{D}'(\Omega^{\circ})$ such that~$\phi\mapsto \ank{\phi,f}_{L^2(\Omega)}$,~$f\in C_c^{\infty}(\Omega^{\circ})$, realizes the random variable~$\phi(f)$ of the Dirichlet massive GFF on~$\Omega$.\hfill~$\Box$
\end{prop}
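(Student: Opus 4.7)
The plan is to invoke Minlos' theorem (Proposition 1.3.7 in the paper) in both cases, so the whole argument reduces to constructing an appropriate covariance form on the relevant test function space and verifying that it is continuous, symmetric, and positive definite. For the closed case, I would take the nuclear Fréchet space $C^{\infty}(M)$, with topological dual $\mathcal{D}'(M)$, and define
\begin{equation*}
\mathcal{C}_M(f,h) \defeq \bank{f,(\Delta+m^2)^{-1}h}_{L^2(M)}.
\end{equation*}
Symmetry is immediate from self-adjointness of $\Delta+m^2$. Positive definiteness follows because $m>0$ ensures $\sigma(\Delta+m^2)\subset [m^2,\infty)$, so $(\Delta+m^2)^{-1}$ is a strictly positive bounded operator on $L^2(M)$; hence $\mathcal{C}_M(f,f)=0$ forces $f=0$. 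Continuity with respect to the Fréchet topology of $C^{\infty}(M)$ is clear since $(\Delta+m^2)^{-1}$ is an elliptic pseudodifferential operator of order $-2$, in particular bounded on $L^2(M)$, so $|\mathcal{C}_M(f,h)|\le m^{-2}\|f\|_{L^2}\|h\|_{L^2}$, and the $L^2$-norm is continuous on $C^{\infty}(M)$. Minlos then delivers a unique centered Gaussian Borel probability measure $\mu_{\mm{GFF}}^M$ on $\mathcal{D}'(M)$ with $\mb{E}[\phi(f)\phi(h)]=\mathcal{C}_M(f,h)$, and the canonical $L^2$-pairing $\phi\mapsto\phi(f)$ realizes the GFF process of (\ref{eqn-gff-cov-closed}).

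For the Dirichlet case the strategy is the same with test functions in $C_c^{\infty}(\Omega^{\circ})$ and the covariance operator $(\Delta_{\Omega,D}+m^2)^{-1}$. The alternative expression provided in the statement,
\begin{equation*}
\mathcal{C}_{\Omega,D}(f,h) = \bank{P_{M\setminus \Omega^{\circ}}^{\perp}f,\,P_{M\setminus \Omega^{\circ}}^{\perp}h}_{W^{-1}(M)},
\end{equation*}
makes all three properties manifest: symmetry is obvious, continuity follows from boundedness of $P_{M\setminus\Omega^\circ}^{\perp}$ on $W^{-1}(M)$ combined with the continuous inclusion $C_c^{\infty}(\Omega^{\circ})\hookrightarrow W^{-1}(M)$, and positive definiteness reduces to showing that $P_{M\setminus\Omega^\circ}^{\perp}f=0$ with $f\in C_c^{\infty}(\Omega^{\circ})$ forces $f=0$; but $P_{M\setminus\Omega^\circ}^{\perp}f=0$ means $f$ lies in the $W^{-1}$-closure of distributions supported in $M\setminus\Omega^{\circ}$, and a smooth function compactly supported in $\Omega^{\circ}$ having such support is necessarily zero.

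The only delicate point is that $C_c^{\infty}(\Omega^{\circ})$ is not Fréchet but a strict inductive limit of Fréchet spaces; nevertheless, it is a nuclear LF-space, and Minlos' theorem holds at this level of generality (the reference \cite{GV} cited just above Proposition 1.3.7 treats precisely this case). Applying it yields $\mu_{\mm{GFF}}^{\Omega,D}$ on $\mathcal{D}'(\Omega^{\circ})$ with the required covariance. Uniqueness in both cases is guaranteed by Lemma 1.3.5, since a centered Gaussian Borel measure on a separable Fréchet space (such as $\mathcal{D}'(M)$ or $\mathcal{D}'(\Omega^{\circ})$ equipped with its natural topology) is determined by its covariance. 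I expect the main obstacle to lie not in the algebraic verifications but in cleanly handling this functional-analytic subtlety concerning the topologies of $C_c^{\infty}(\Omega^{\circ})$ and $\mathcal{D}'(\Omega^{\circ})$; everything else is a direct consequence of positivity and self-adjointness of the massive Laplacian with the appropriate boundary conditions.
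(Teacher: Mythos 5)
Your proposal is correct and follows exactly the route the paper intends: the paper gives no written proof but cites the Bochner--Minlos theorem (and its own Proposition \ref{prop-minlos-theo}), so verifying that the covariances $(\Delta+m^2)^{-1}$ and $(\Delta_{\Omega,D}+m^2)^{-1}$ define continuous, symmetric, strictly positive bilinear forms on $C^{\infty}(M)$ resp.\ $C_c^{\infty}(\Omega^{\circ})$ and invoking Minlos is precisely the ``parallel of the Euclidean case'' alluded to. Your remark on the LF-space nature of $C_c^{\infty}(\Omega^{\circ})$ is the right technical caveat and is covered by the Gelfand--Vilenkin form of the theorem cited in the paper.
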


\begin{notation}
  Denote by~$\mb{E}_{\mm{GFF}}^M$ and~$\mb{E}_{\mm{GFF}}^{\Omega,D}$ the expectations under~$\mu_{\mm{GFF}}^M$ and~$\mu_{\mm{GFF}}^{\Omega,D}$ respectively.
\end{notation}

\begin{def7}\label{rem-mass-gau-field-four}
  Alternatively, based on the spectral theory of~$\Delta$, let~$\left\{ \varphi_j \right\}_{j=0}^{\infty}$ be its complete orthonormal eigenfunctions with (real nonnegative) eigenvalues~$\left\{ \lambda_j \right\}_{j=0}^{\infty}$,~$0=\lambda_0<\lambda_1\le \lambda_2\le \dots\le \lambda_j\le \cdots$, counted with multiplicity. Then the \textsf{GFF with mass~$m$} on~$M$ could also be represented as the random formal series
  \begin{equation}
    \phi=\sum_{j=0}^{\infty}\xi_j \varphi_j
    \label{eqn3.3}
  \end{equation}
  where the sequence~$(\xi_j)_j$ consists of i.i.d.\ real-valued random variables with $\xi_j$ being the standard centered Gaussian on~$\mb{R}$ with variance~$(\lambda_j+m^2)^{-1}$. Similarly, the Dirichlet GFF on~$\Omega$ could also be so represented using eigenfunctions of~$\Delta_{\Omega,D}$, which are complete for~$L^2(\Omega)$. One way to see that this construction is equivalent to the previous one is by appealing to \cite{Shu} page 92 proposition 10.2.
\end{def7}

We also consider a slightly more general situation  where~$(\Delta+m^2)^{-1}$ is replaced by an operator $C$.

\begin{deef}
  We say that a bounded self-adjoint positive elliptic pseudodifferential operator~$C$ of order~$-s$ on~$M$ ($s>0$) is a \textsf{Gaussian covariance operator} of order~$-s$.
\end{deef}

Following similar reasoning as proposition \ref{prop-boch-min} or remark \ref{rem-mass-gau-field-four}, one obtains a measure~$\mu_C$ on~$\mathcal{D}'(M)$.

\begin{deef}
  The \textsf{Gaussian Field} on~$M$ with \textsf{covariance operator}~$C$ is the \textit{Gaussian random process indexed by}~$C^{\infty}(M)$, consisting of random variables~$\{\phi(f)~|~f\in C^{\infty}(M)\}$ such that
  \begin{equation}
    \mb{E}[\phi(f)\phi(h)]=\ank{f,C h}_{L^2(M)},\quad \mb{E}[\phi(f)]\equiv 0,
    \label{}
  \end{equation}
  for any~$f$,~$h\in C^{\infty}(M)$. We denote the corresponding measure on~$\mathcal{D}'(M)$ by~$\mu_C$ and the expectation with respect to this measure by~$\mb{E}_C$.
\end{deef}

\begin{def7}
  For~$C$ satisfying the assumptions, the inner product~$(f,h)\mapsto \ank{f, Ch}_{L^2(M)}$ defines an equivalent norm for~$W^{-s}(M)$.
\end{def7}

\subsubsection{Essential Properties}

Now we collect some properties of the Gaussian random fields constructed above. We start with the following classical fact (the proofs are in parallel with the Euclidean case for which one could refer to \cite{Shef}).
\begin{lemm}\label{lemm-ghs-for-gff}
  The \textsf{Gaussian Hilbert space} of~$\mu_{\mm{GFF}}^M$ is~$W^{-1}(M)$, and that of~$\mu_{\mm{GFF}}^{\Omega,D}$ is~$W^{-1}(\Omega)$. The \textsf{Cameron-Martin space} of~$\mu_{\mm{GFF}}^M$ is~$W^{1}(M)$, and that of~$\mu_{\mm{GFF}}^{\Omega,D}$ is~$W^{1}_{\Omega}(M)$.\hfill~$\Box$
\end{lemm}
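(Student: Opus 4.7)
The plan is to exploit directly the covariance identity $\mb{E}[\phi(f)\phi(h)]=\ank{f,(\Delta+m^2)^{-1}h}_{L^2(M)}$, which by the definition $\ank{\cdot,\cdot}_{W^{-1}(M)}:=\ank{\cdot,(\Delta+m^2)^{-1}\cdot}_{L^2}$ recorded in the notations is exactly the $W^{-1}(M)$ inner product. Hence the linear map
\[
  \iota: (C^{\infty}(M),\ank{\cdot,\cdot}_{W^{-1}}) \lto L^2(\mathcal{D}'(M),\mu_{\mm{GFF}}^M),\quad f\longmapsto \phi(f)
\]
is an isometry. Since $C^{\infty}(M)$ is dense in $W^{-1}(M)$ (standard approximation by mollification of the spectral resolution of $\Delta$), $\iota$ extends uniquely by continuity to an isometric embedding $\tilde{\iota}:W^{-1}(M)\hooklto L^2(\mu_{\mm{GFF}}^M)$ whose image is closed. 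By definition the Gaussian Hilbert space is the $L^2(\mu)$-closure of the range of $\iota$; thus $\mathrm{ran}(\tilde{\iota})$ is precisely the Gaussian Hilbert space and is unitarily identified with $W^{-1}(M)$. This settles the closed case. For the Dirichlet case the same argument applies verbatim once one invokes lemma~\ref{lemm-diri-green-op-quad-form} from appendix~\ref{sec-app-sobo} to identify $\ank{P_{M\setminus\Omega^{\circ}}^{\perp}(\cdot),P_{M\setminus\Omega^{\circ}}^{\perp}(\cdot)}_{W^{-1}(M)}$ with the $W^{-1}(\Omega)$ inner product on $C_c^{\infty}(\Omega^{\circ})$, and uses that $C_c^{\infty}(\Omega^{\circ})$ is dense in $W^{-1}(\Omega)$.

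For the Cameron--Martin space, I will use the standard abstract characterization: $h\in \mathcal{D}'(M)$ belongs to the CM space of $\mu_{\mm{GFF}}^M$ iff there exists $\xi_h$ in the Gaussian Hilbert space such that $\ank{h,f}_{L^2}=\mb{E}[\xi_h\phi(f)]$ for every $f\in C^{\infty}(M)$, with $\nrm{h}_{H_\mu}=\nrm{\xi_h}_{L^2(\mu)}$. By the first part we can write $\xi_h=\tilde{\iota}(g)$ for a unique $g\in W^{-1}(M)$, so the defining relation becomes $\ank{h,f}_{L^2}=\ank{g,f}_{W^{-1}}=\ank{(\Delta+m^2)^{-1}g,f}_{L^2}$ for all $f\in C^{\infty}(M)$, forcing $h=(\Delta+m^2)^{-1}g$. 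Since $(\Delta+m^2)^{-1}:W^{-1}(M)\lto W^{1}(M)$ is an isometric isomorphism (with the inner products $\ank{\cdot,(\Delta+m^2)^{\mp 1}\cdot}_{L^2}$), this gives exactly $h\in W^1(M)$ with $\nrm{h}_{H_\mu}=\nrm{g}_{W^{-1}}=\nrm{h}_{W^1}$. The same reasoning applied to the Dirichlet Green operator $(\Delta_{\Omega,D}+m^2)^{-1}$, which is an isometric isomorphism $W^{-1}(\Omega)\lto W^1_{\Omega}(M)$ by the appendix material, yields the CM space $W^1_{\Omega}(M)$.

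The main potential obstacle is bookkeeping rather than substance: one must verify carefully the density statements and the isomorphism properties of the Green operator between the relevant Sobolev spaces in the Dirichlet case, which rely on the definitions of $W^{-1}(\Omega)$, $W^1_{\Omega}(M)$ and the projections $P_{M\setminus\Omega^{\circ}}^{\perp}$ given in appendix~\ref{sec-app-sobo}. No deep analytic difficulty is anticipated; the lemma is essentially an identification of the two abstract Gaussian objects (GHS and CM space) with concrete Sobolev spaces through the covariance operator seen as an isometry $W^{-1}\cong W^1$ via the spectral calculus of $\Delta+m^2$.
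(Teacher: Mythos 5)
Your argument is correct and is exactly the classical identification that the paper leaves to the cited reference: the covariance realizes $f\mapsto\phi(f)$ as a $W^{-1}$-isometry whose closed range is the Gaussian Hilbert space, and the Cameron--Martin space is the image of that space under the covariance operator, i.e.\ under $(\Delta+m^2)^{-1}$ (resp.\ the Dirichlet Green operator), which is an isometric isomorphism $W^{-1}(M)\to W^{1}(M)$ (resp.\ $W^{-1}(\Omega)\to W^{1}_{\Omega}(M)$, by lemma~\ref{lemm-sobo-decomp}). One small correction in the Dirichlet case: $C_c^{\infty}(\Omega^{\circ})$ is not contained in $W^{-1}(\Omega)$ (see the remark following the definition of $W^s(U)$ in appendix~\ref{sec-app-sobo}), so the density you need is that of the projected test functions $P_{M\setminus\Omega^{\circ}}^{\perp}(C_c^{\infty}(\Omega^{\circ}))$, which is exactly lemma~\ref{lemm-sobo-test-dense-in-open}; with that substitution, and the inner-product identity of lemma~\ref{lemm-diri-green-op-quad-form} that you already invoke, your argument goes through verbatim.
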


\begin{def7}\label{rem-use-cam-mar-pairing-not-l2}
  There is a tacit assumption in the way we defined our fields: we expected the random variable~$\phi(f)$ to come from the distributional pairing ($L^2$-pairing) between~$\mathcal{D}'(M)$ and~$C^{\infty}(M)$ (respectively,~$\Omega$). Other pairings may also be used. For example, let~$\phi(f)$ be~$\ank{-,f}_{W^1(M)}$ instead of~$\ank{-,f}_{L^2(M)}$. One then needs to alter the covariances accordingly. Indeed, they are related by
  \begin{equation}
    \mb{E}\big[\bank{\phi,f}_{W^1}\bank{\phi,h}_{W^1}\big]=\mb{E}\big[
    \bank{\phi,(\Delta+m^2)f}_{L^2}\bank{\phi,(\Delta+m^2)h}_{L^2}\big]=\bank{f,h}_{W^1(M)}.
    \label{}
  \end{equation}
  This way of definition is noticeably used by Sheffield \cite{Shef}. The Gaussian Hilbert space in this case is~$W^1(M)$ (respectively,~$W^1_{\Omega}(M)$), and the Cameron-Martin spaces are the same. See also remark \ref{rem-natural-sobo-cam-mar}.
\end{def7}

Similarly,

\begin{lemm}\label{lemm-GHS-cam-mar-of-cov-C}
  The Gaussian Hilbert space of~$\mu_C$ is~$W^{-s}(M)$, equipped with~$\ank{-,C-}_{L^2}$, and the Cameron-Martin space is~$W^s(M)$, equipped with~$\sank{-,C^{-1}-}_{L^2}$. \hfill~$\Box$
\end{lemm}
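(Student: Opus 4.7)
The strategy is to extend the map $J: f\mapsto \phi(f)$ from $C^\infty(M)$ into $L^2(\mu_C)$ by isometry and identify its natural domain of continuous extension. By the very definition of $\mu_C$, one has $\mb{E}_C[\phi(f)\phi(h)] = \ank{f,Ch}_{L^2}$, so the bilinear form $(f,h)_C \defeq \ank{f,Ch}_{L^2}$ on $C^\infty(M)$ makes $J$ isometric; it is positive-definite because $C$ is elliptic, self-adjoint, and positive, hence injective on $C^\infty(M)$.

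The key analytic step is then to identify the completion of $(C^\infty(M),\|\cdot\|_C)$ with $W^{-s}(M)$. Since $C$ is a positive self-adjoint elliptic pseudodifferential operator, its square root $C^{1/2}$ is again a positive self-adjoint elliptic $\Psi$DO (via the Seeley-type functional calculus for elliptic operators, or equivalently through the spectral theorem combined with elliptic regularity of eigenfunctions) of half the order, and extends to a bounded isomorphism $W^{-s}(M) \to L^2(M)$. Consequently $\|f\|_C^2 = \|C^{1/2}f\|_{L^2}^2 \asymp \|f\|_{W^{-s}}^2$, and since $C^\infty(M)$ is dense in $W^{-s}(M)$, the completion is exactly $W^{-s}(M)$. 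Extending $J$ by continuity then yields an isometric embedding $W^{-s}(M)\hookrightarrow L^2(\mu_C)$ whose image is closed and contains every $\phi(f)$, $f\in C^\infty(M)$; thus it coincides with the Gaussian Hilbert space of $\mu_C$.

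For the Cameron-Martin space, I would invoke the general correspondence recalled in \cite{Bogachev}, which identifies it as the image of the Gaussian Hilbert space under the covariance operator, the inner product being transported so as to make this map an isometry. Here, $C$ restricts to a bounded isomorphism $W^{-s}(M)\to W^s(M)$ by ellipticity, so for $h=Cf$ with $f\in W^{-s}(M)$ one has $\|h\|_{CM}^2=\ank{f,Cf}_{L^2}=\ank{C^{-1}h,h}_{L^2}$. This yields the asserted identification of the Cameron-Martin space with $(W^s(M),\ank{-,C^{-1}-}_{L^2})$.

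The only mildly delicate point is the norm equivalence $\|f\|_C^2\asymp\|f\|_{W^{-s}}^2$ for a general elliptic $C$, rather than a fixed power of the Laplacian; this is handled by a G\aa rding-type inequality together with the functional calculus for self-adjoint elliptic $\Psi$DOs, after which everything reduces to bookkeeping with Sobolev spaces.
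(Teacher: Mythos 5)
Your overall strategy — extend $f\mapsto\phi(f)$ isometrically from $(C^\infty(M),\ank{-,C-}_{L^2})$ into $L^2(\mu_C)$, identify the completion of $C^\infty(M)$ with a Sobolev space, and then obtain the Cameron--Martin space as the image of the Gaussian Hilbert space under the covariance with the transported inner product — is the standard argument, and it is exactly what the paper leaves implicit (the lemma is stated without proof). The problem is the Sobolev bookkeeping, which you yourself flag as the delicate point: it does not close as written. If $C$ is elliptic, invertible and of order $-s$ in the usual pseudodifferential sense (this is the paper's definition, and it is the reading forced by the exponent $\frac{1}{2}(d-s)$ in lemma \ref{lemm-supp-gaus-meas}), then $C^{1/2}$ has order $-s/2$ and is an isomorphism $W^{-s/2}(M)\lto L^2(M)$, \emph{not} $W^{-s}(M)\lto L^2(M)$: an elliptic invertible operator of order $-m$ shifts Sobolev regularity by exactly $m$. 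Hence $\ank{f,Cf}_{L^2}=\nrm{C^{1/2}f}_{L^2}^2\asymp\nrm{f}_{W^{-s/2}}^2$, and the completion of $C^\infty(M)$ under $\ank{-,C-}_{L^2}$ is $W^{-s/2}(M)$. For the same reason your claim that ``$C$ restricts to a bounded isomorphism $W^{-s}(M)\to W^{s}(M)$'' is false: an operator of order $-s$ maps $W^{-s}(M)$ onto $L^2(M)$, and the image of the Gaussian Hilbert space under $C$ is $W^{s/2}(M)$. The cross-check is the GFF itself: $C=(\Delta+m^2)^{-1}$ has order $-2$, and lemma \ref{lemm-ghs-for-gff} gives Gaussian Hilbert space $W^{-1}(M)$ and Cameron--Martin space $W^{1}(M)$, i.e.\ $W^{-s/2}$ and $W^{s/2}$; likewise the circle measures with covariance of order $-1$, such as $(\DN_M^{\Sigma})^{-1}$ or $(2\mn{D}_{\Sigma})^{-1}$, have Gaussian Hilbert space $W^{-1/2}(\Sigma)$ and Cameron--Martin space $W^{1/2}(\Sigma)$, which is precisely how they are used around proposition \ref{prop-induce-law-DN} and lemma \ref{lemm-pre-bayes}.

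So, as written, ``$C^{1/2}$ has half the order'' and ``$C^{1/2}:W^{-s}(M)\to L^2(M)$ is an isomorphism'' cannot both hold, and the central analytic step of your proof collapses. There are two consistent repairs. Either adopt explicitly the convention that ``order $-s$'' means $C$ is comparable to $(\Delta+m^2)^{-s}$ (pseudodifferential order $-2s$, cf.\ lemma \ref{lemm-sobo-inner-prod}); then your two isomorphism claims become correct, but the phrase ``half the order'' must be changed accordingly. Or keep the standard convention and prove the statement with the exponents $-s/2$ and $s/2$: Gaussian Hilbert space $W^{-s/2}(M)$ with $\ank{-,C-}_{L^2}$ and Cameron--Martin space $W^{s/2}(M)$ with $\ank{-,C^{-1}-}_{L^2}$, which is the version consistent with how the lemma is applied later in the paper. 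The remaining ingredients of your argument — the isometry of $f\mapsto\phi(f)$, density of $C^\infty(M)$, and the identification of the Cameron--Martin space as the image of the Gaussian Hilbert space under the covariance (which requires noting that ellipticity plus strict positivity gives invertibility of $C$ with $C^{-1}\in\Psi^{s}(M)$, so that $\ank{-,C^{-1}-}_{L^2}$ makes sense) — are fine.
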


Finally we say a word about the supports of the measures in the closed manifold case. This can be proved either using lemma \ref{lemm-quadratic-pert-trace} and evaluating the expectation of $\snrm{\phi}_{W^s(M)}^2$ or a spectral representation like in remark \ref{rem-mass-gau-field-four}.

\begin{lemm}\label{lemm-supp-gaus-meas}
  We have~$\mu_C(W^{-\delta}(M))=1$ for any~$\delta>\frac{1}{2}(d-s)$.\hfill~$\Box$
\end{lemm}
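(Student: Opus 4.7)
The plan is to use the spectral representation of $\phi$ outlined in Remark \ref{rem-mass-gau-field-four}. Since $C$ is a positive self-adjoint elliptic $\Psi$DO of order $-s$ on the closed manifold $M$, it is compact on $L^2(M)$, with positive eigenvalues $\lambda_0 \geq \lambda_1 \geq \cdots \to 0$ and a complete orthonormal eigenbasis $\{\varphi_j\}_{j \geq 0} \subset C^\infty(M)$. Realize the Gaussian field under $\mu_C$ as the random series $\phi = \sum_j \xi_j \sqrt{\lambda_j}\, \varphi_j$ with $(\xi_j)$ i.i.d.\ standard Gaussians, converging at least in $\mathcal{D}'(M)$.

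First, I would note that $W^{-\delta}(M) \hookrightarrow \mathcal{D}'(M)$ is a continuous inclusion, hence a Borel subset. Fixing an $L^2$-orthonormal eigenbasis $\{e_k\}_{k \geq 0}$ of $\Delta$ with eigenvalues $\mu_k \geq 0$, I would define, for every $\phi \in \mathcal{D}'(M)$,
\begin{equation*}
\|\phi\|_{W^{-\delta}}^2 \defeq \sum_{k} (\mu_k + m^2)^{-\delta} |\phi(e_k)|^2 \in [0,+\infty],
\end{equation*}
which is a measurable function on $\mathcal{D}'(M)$ and coincides with the usual Sobolev norm whenever finite. It then suffices to show $\mathbb{E}_C[\|\phi\|_{W^{-\delta}}^2] < \infty$, for then $\phi \in W^{-\delta}(M)$ almost surely.

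The key computation uses Tonelli and the covariance of $\mu_C$:
\begin{equation*}
\mathbb{E}_C\bigl[\|\phi\|_{W^{-\delta}}^2\bigr] \;=\; \sum_k (\mu_k + m^2)^{-\delta}\, \mathbb{E}_C\bigl[|\phi(e_k)|^2\bigr] \;=\; \sum_k (\mu_k + m^2)^{-\delta}\, \langle e_k, C e_k\rangle_{L^2} \;=\; \operatorname{tr}_{L^2}\bigl((\Delta+m^2)^{-\delta} C\bigr),
\end{equation*}
the second equality because $\{e_k\}$ diagonalizes $(\Delta+m^2)^{-\delta}$. By cyclicity of the trace, this equals $\operatorname{tr}_{L^2}(T)$ where $T \defeq (\Delta+m^2)^{-\delta/2}\, C\, (\Delta+m^2)^{-\delta/2}$ is a \emph{positive} elliptic $\Psi$DO of order $-(2\delta + s)$. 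By Weyl's law for elliptic $\Psi$DOs, its eigenvalues $\sigma_k$ satisfy $\sigma_k \asymp k^{-(2\delta + s)/d}$, hence $\operatorname{tr}(T) = \sum_k \sigma_k < \infty$ exactly when $(2\delta + s)/d > 1$, i.e. $\delta > (d-s)/2$.

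The only technical ingredient is the Weyl asymptotics for $T$, but $T$ is genuinely elliptic of negative order (its principal symbol is a positive scalar multiple of $|\xi|_g^{-(2\delta+s)}$), so this is standard. The alternative route alluded to in the statement uses Lemma \ref{lemm-quadratic-pert-trace}: one identifies $\mathbb{E}_C[\|\phi\|_{W^{-\delta}}^2]$ with the trace of $C^{1/2}(\Delta+m^2)^{-\delta}C^{1/2}$ by a quadratic perturbation argument, but the resulting trace-class criterion is the same.
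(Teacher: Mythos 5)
Your proof is correct and follows essentially the route the paper itself sketches (it only remarks that the lemma follows from evaluating $\mb{E}_C[\nrm{\phi}_{W^{-\delta}(M)}^2]$ via lemma \ref{lemm-quadratic-pert-trace} or the spectral representation of remark \ref{rem-mass-gau-field-four}): you compute this expectation as $\ttr_{L^2}\bigl((\Delta+m^2)^{-\delta/2}C(\Delta+m^2)^{-\delta/2}\bigr)$ and invoke Weyl asymptotics for the positive elliptic operator of order $-(2\delta+s)$. The only cosmetic imprecision is the claim that the principal symbol of $T$ is a scalar multiple of $|\xi|_g^{-(2\delta+s)}$ — in general it is only a strictly positive symbol of that order, which is all the Weyl law needs, so the argument stands.
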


\begin{def7}
  We point out that~$\mu_C(W^{-\delta}(M))=1$ for any~$\delta>0$ in the following two cases:
  \begin{enumerate}[(i)]
    \item $\dim M=2$ and~$\mu_C=\mu_{\mm{GFF}}^M$;
    \item $\dim M=1$ and~$C$ has order~$-1$.
  \end{enumerate}
\end{def7}

Last but not least, we make the following innocent but useful observation.

\begin{lemm}\label{lemm-gaus-field-disj-indep}
  If~$\Omega=\Omega_1\sqcup \Omega_2$ (possibility of non-empty boundary in either or both components), then GFFs (indeed, Gaussian fields) over~$\Omega_1$ and~$\Omega_2$ are independent and~$\mu_{\mm{GFF}}^{\Omega,B}=\mu_{\mm{GFF}}^{\Omega_1,B}\otimes \mu_{\mm{GFF}}^{\Omega_2,B}$ where~$B=D$ when the corresponding~$\Omega$,~$\Omega_1$ or~$\Omega_2$ has boundary and~$B=\varnothing$ when either is closed. \hfill~$\Box$
\end{lemm}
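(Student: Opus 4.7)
The plan is to show the stated factorisation by reducing it to a decomposition of the covariance operator and then invoking the standard fact that uncorrelated jointly Gaussian variables are independent.

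First, I would observe that because $\Omega = \Omega_1 \sqcup \Omega_2$ is a disjoint union of open sets (or closed surfaces), the relevant elliptic operator splits as a direct sum: one has
\begin{equation*}
  \Delta_\Omega + m^2 \;=\; (\Delta_{\Omega_1} + m^2) \,\oplus\, (\Delta_{\Omega_2} + m^2)
\end{equation*}
on $L^2(\Omega) = L^2(\Omega_1) \oplus L^2(\Omega_2)$, and likewise in the Dirichlet case, where the boundary condition is imposed component-wise on $\partial \Omega = \partial \Omega_1 \sqcup \partial \Omega_2$. Consequently the covariance operator $C_\Omega := (\Delta_{\Omega,B} + m^2)^{-1}$ also splits as a direct sum $C_{\Omega_1} \oplus C_{\Omega_2}$, where $C_{\Omega_i}$ is the analogous covariance operator on $\Omega_i$.

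Next, I would pick test functions $f_i, h_i$ supported in $\Omega_i$ (taken in $C^\infty(\Omega)$ when the relevant piece is closed, or in $C_c^\infty(\Omega_i^\circ)$ in the Dirichlet case). The direct-sum decomposition immediately gives
\begin{equation*}
  \mb{E}\bigl[\phi(f_1)\,\phi(h_2)\bigr] \;=\; \bank{f_1, C_\Omega h_2}_{L^2(\Omega)} \;=\; 0,
\end{equation*}
since $C_\Omega h_2$ is supported in $\Omega_2$ while $f_1$ is supported in $\Omega_1$. In other words, the two sub-families $\{\phi(f_1)\}_{f_1}$ and $\{\phi(h_2)\}_{h_2}$ form jointly Gaussian centred families with vanishing cross-covariance; by the standard fact for Gaussian vectors, they generate independent sub-$\sigma$-algebras. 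Restricted to functions supported in $\Omega_i$ separately, each family again has covariance given by $\bank{f_i, C_{\Omega_i} h_i}_{L^2(\Omega_i)}$, so its law is exactly $\mu_{\mm{GFF}}^{\Omega_i,B}$ via the uniqueness in Proposition \ref{prop-boch-min}.

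Finally, to conclude the factorisation of measures, I would compute the characteristic functional: for $f = f_1 + f_2$ with $f_i$ supported in $\Omega_i$,
\begin{equation*}
  \int e^{\ii \phi(f)}\,\dd\mu_{\mm{GFF}}^{\Omega,B}(\phi) \;=\; e^{-\frac{1}{2}\ank{f, C_\Omega f}_{L^2}} \;=\; e^{-\frac{1}{2}\ank{f_1, C_{\Omega_1} f_1}} \, e^{-\frac{1}{2}\ank{f_2, C_{\Omega_2} f_2}},
\end{equation*}
which is precisely the characteristic functional of the product measure $\mu_{\mm{GFF}}^{\Omega_1,B} \otimes \mu_{\mm{GFF}}^{\Omega_2,B}$ on $\mathcal{D}'(\Omega_1) \times \mathcal{D}'(\Omega_2) \cong \mathcal{D}'(\Omega)$. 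Uniqueness of the characteristic functional (lemma in the style of Lemma \ref{lemm-gaus-mea-char-func} / Bochner--Minlos) then yields the equality of measures. There is no real obstacle here; the only subtlety worth stating carefully is that in the Dirichlet case the decomposition of $\Delta_{\Omega,D}$ as $\Delta_{\Omega_1,D} \oplus \Delta_{\Omega_2,D}$ relies on the fact that $\Omega_1^\circ$ and $\Omega_2^\circ$ are disjoint, so each component has its own independent Dirichlet boundary.
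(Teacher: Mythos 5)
Your proof is correct, and since the paper states this lemma without proof (as an immediate observation), your argument — block-diagonal splitting of the covariance, vanishing cross-covariance giving independence for jointly Gaussian families, and identification of the product measure via characteristic functionals as in Lemma \ref{lemm-gaus-mea-char-func} — is exactly the standard reasoning the paper takes for granted. Nothing is missing; at most you could note that the same conclusion also follows from the abstract $Q$-space statement in Proposition \ref{prop-gauss-hilb-space-cons-q-space}.
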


 \subsection{Determinants}\label{sec-det}

 \noindent In this section we discuss (two) generalizations of the notion of the determinant (of a matrix) to infinite dimensional operators. General references include Kontsevich and Vishik \cite{KV}, Shubin \cite{Shu} sections 9-13, Simon \cite{Sim1} chapter 3 and finally Gohberg, Goldberg and Krupnik \cite{GGK}. See also Dang \cite{Dang} for a quick acquaintance of the physical-geometric context and Quine, Heydari and Song \cite{QHS} for an interesting discussion of zeta-regularization of infinite products.

\subsubsection{Zeta-regularized and Fredholm Determinants}\label{sec-det-def}

\noindent The zeta-regularized determinant was first introduced by Ray and Singer \cite{RS}. The first step is to define the \textsf{zeta function} of a (rather special) pseudodifferential operator~$A$ over a manifold $M$ with or without boundary,
\begin{equation}
  \zeta_A(z)\defeq \ttr_{L^2}(A^{-z})
  \label{eqn-zeta-def}
\end{equation}
as a function of the complex variable~$z$ and study its meromorphic extension over a region that includes~$z=0$. For our purposes ~$A$ is~$\Delta+m^2$,~$(\Delta+m^2)^{1/2}$ or a Dirichlet-to-Neumann operator. These are positive elliptic~$\Psi$DOs of positive order such that the principal symbol~$\sigma_A(x,\xi)$ is \textsf{strictly positive} whenever~$\xi\ne 0$. In particular their spectra are in~$\mb{R}_+$ and does not intersect~$\mb{B}_{\rho}(0)\subset\mb{C}$ for some~$\rho>0$. This enables one to define the complex power~$A^{-z}$ using the Cauchy integral representation
\begin{equation}
  A^{-z}\defeq \frac{\ii}{2\pi}\int_{\gamma}^{}\me^{-z\log \lambda}(A-\lambda)^{-1}\dd\lambda,
  \label{eqn-complex-pow-cauchy}
\end{equation}
where~$\gamma$ is the contour (with parametrization traversing in order)
\begin{equation}
  \gamma=\{r\me^{\ii\pi}~|~r>\rho\}\cup\{\rho\me^{\ii\theta}~|~-\pi<\theta<\pi\}\cup\{r\me^{-\ii \pi}~|~r>\rho\},
  \label{}
\end{equation}
and~$\log\lambda$ taken to be the principal branch defined on~$\mb{C}\setminus (-\infty,0]$ with $\log 1=0$.

\begin{prop}[\cite{Shu} proposition 10.1, theorems 10.1, 13.1, 13.2, also \cite{Seeley}]\label{prop-def-det-zeta} We have
\begin{enumerate}[(i)]
    \item For the operators~$A$ under consideration, we have the bound
  \begin{equation}
    \nrm{(A-\lambda)^{-1}}_{L^2}\le c|\lambda|^{-1}
    \label{eqn-resolve-op-norm-bound}
  \end{equation}
  for~$\lambda\in \gamma$, and the integral (\ref{eqn-complex-pow-cauchy}) defines~$A^{-z}$ as a holomorphic function valued in bounded~$L^2$ operators, for~$\fk{Re}(z)>0$. It continues as such a holomorphic operator function to all~$z\in\mb{C}$ via
  \begin{equation}
    A^{-z}\defeq A^k A^{-z-k},
    \label{eqn-complex-pow-cont}
  \end{equation}
  where~$k$ is any integer with~$\fk{Re}(z)>-k$ so that~$A^{-z-k}$ is defined by (\ref{eqn-complex-pow-cauchy}), and the definition (\ref{eqn-complex-pow-cont}) does not depend on~$k$.
  \item If~$A$ has order~$s$ then~$A^{-z}$ defined as above is a classical~$\Psi$DO of order~$-zs$. In particular it is trace class on~$L^2(M)$ when~$\fk{Re}(z)>d/s$, $d=\dim M$, for which~$\zeta_A(z)$ is well-defined by (\ref{eqn-zeta-def}). Moreover, for these~$z$,
\begin{equation}
  \zeta_A(z)=\sum_{j=0}^{\infty}\lambda_j^{-z},
  \label{eqn-zeta-eigenval}
\end{equation}
where~$\{\lambda_j\}_{j=0}^{\infty}$ are the eigenvalues of~$A$, and the sum converges absolutely, and uniformly in~$z$ over~$\{\fk{Re}(z)>d/s+\varepsilon\}$ for any~$\varepsilon>0$. 
\item Finally,~$\zeta_A(z)$ can be meromorphically continued over~$\mb{C}$ with simple poles possible at~$\{\frac{d}{s},\frac{d-1}{s},\frac{d-2}{s},\cdots\}\setminus \mb{Z}_{\le 0}$, and holomorphic elsewhere. In particular, it is holomorphic at~$z=0$.\hfill~$\Box$
\end{enumerate}
\end{prop}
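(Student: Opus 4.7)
My plan to prove Proposition \ref{prop-def-det-zeta} follows the now-classical Seeley--Shubin strategy; each part rests on a different technical ingredient (resolvent analysis, symbolic calculus, heat-kernel asymptotics), and the meromorphic continuation is the main obstacle.

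\textbf{Part (i): Cauchy integral and holomorphic extension.} First I would establish the resolvent bound (\ref{eqn-resolve-op-norm-bound}). Since $A$ is self-adjoint positive with $\sigma(A)\subset[\rho,\infty)$, the spectral theorem gives $\snrm{(A-\lambda)^{-1}}_{L^2}=\mathrm{dist}(\lambda,\sigma(A))^{-1}$, and an elementary geometric computation shows that for $\lambda$ on the contour $\gamma$, this distance is comparable to $|\lambda|$, yielding (\ref{eqn-resolve-op-norm-bound}). Then the integrand in (\ref{eqn-complex-pow-cauchy}) is bounded in $\mathcal{L}(L^2)$-norm by $C|\lambda|^{-1-\fk{Re}(z)}$, so the integral converges absolutely in operator norm when $\fk{Re}(z)>0$, depends holomorphically on $z$ there (differentiate under the integral), and clearly the semigroup law $A^{-z_1}A^{-z_2}=A^{-(z_1+z_2)}$ holds on this half-plane. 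The extension (\ref{eqn-complex-pow-cont}) is independent of $k$ because, for $\fk{Re}(z)>-\min(k,k')$, both $A^k A^{-z-k}$ and $A^{k'}A^{-z-k'}$ agree when $\fk{Re}(z)$ is also $>0$ by the semigroup law on the integral-defined region, hence everywhere by analytic continuation.

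\textbf{Part (ii): Symbol calculus and trace-class property.} Here I would construct a parameter-dependent parametrix for $(A-\lambda)^{-1}$: locally in a chart, $A$ has principal symbol $a_s(x,\xi)$ homogeneous of degree $s$ and strictly positive for $\xi\ne 0$, and one builds an asymptotic expansion of the symbol of $(A-\lambda)^{-1}$ in homogeneous-in-$(\xi,\lambda^{1/s})$ pieces (Shubin's parameter-dependent class). Substituting this into (\ref{eqn-complex-pow-cauchy}) and performing the contour integral at the level of symbols, one obtains that $A^{-z}$ is a classical $\Psi$DO with principal symbol $a_s(x,\xi)^{-z}$, so of order $-zs$. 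For $\fk{Re}(z)>d/s$, the order $-zs<-d$ forces the Schwartz kernel to be continuous on $M\times M$, hence trace class, and Lidskii's theorem combined with the spectral decomposition $A=\sum_j\lambda_j\ank{\varphi_j,\cdot}\varphi_j$ gives $\zeta_A(z)=\sum_j\lambda_j^{-z}$. Weyl's law $\lambda_j\sim c\, j^{s/d}$ then implies uniform absolute convergence on $\{\fk{Re}(z)\ge d/s+\varepsilon\}$ by comparison with $\sum j^{-(s/d)(d/s+\varepsilon)}=\sum j^{-1-\varepsilon s/d}$.

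\textbf{Part (iii): Meromorphic continuation --- the main obstacle.} The cleanest route is via the Mellin transform of the heat semigroup: for $\fk{Re}(z)>d/s$,
\begin{equation}
\zeta_A(z)=\frac{1}{\Gamma(z)}\int_0^\infty t^{z-1}\ttr(\me^{-tA})\,\dd t.
\end{equation}
The difficulty lies in controlling $\ttr(\me^{-tA})$ as $t\to 0^+$; one invokes the classical short-time asymptotic expansion
\begin{equation}
\ttr(\me^{-tA})\sim\sum_{k=0}^{\infty}a_k\, t^{(k-d)/s}\qquad (t\to 0^+),
\end{equation}
which itself requires a parametrix construction for the heat kernel (or equivalently Seeley's resolvent-based argument). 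Splitting the Mellin integral at $t=1$, the piece $\int_1^\infty$ is entire in $z$ because $\ttr(\me^{-tA})$ decays exponentially (the smallest eigenvalue is $\ge\rho>0$). For the piece $\int_0^1$, substituting the asymptotic expansion up to order $N$ and integrating term by term produces simple poles at $z=(d-k)/s$ for $k=0,1,2,\ldots$, with a remainder holomorphic on an increasingly large half-plane as $N\to\infty$. Finally, the prefactor $1/\Gamma(z)$ vanishes at $z=0,-1,-2,\ldots$, cancelling exactly the potential poles at the non-positive integers among $\{(d-k)/s\}_k$; in particular $\zeta_A$ is holomorphic at $z=0$, and its residual pole set is $\{(d-k)/s:k\in\mb{Z}_{\ge 0}\}\setminus\mb{Z}_{\le 0}$ as claimed.
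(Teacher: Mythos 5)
First, a framing note: the paper itself gives no proof of this proposition---it is imported wholesale from Shubin and Seeley---so your proposal is measured against those standard arguments. Your parts (i) and (ii) follow that route and are essentially correct. Two small repairs: the identity $\snrm{(A-\lambda)^{-1}}_{L^2}=\mathrm{dist}(\lambda,\sigma(A))^{-1}$ uses self-adjointness of the operators under consideration and should be said so; and ``order $-zs<-d$ forces the Schwartz kernel to be continuous, hence trace class'' is not a valid implication as written---argue instead that a $\Psi$DO of order $<-d$ on a closed $d$-manifold is trace class, e.g.\ by writing $A^{-z}=A^{-z/2}A^{-z/2}$ with both factors Hilbert--Schmidt, after which $\zeta_A(z)=\sum_j\lambda_j^{-z}$ follows from normality of $A^{-z}$ (Lidskii is not needed).

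The genuine gap is in (iii). You ``invoke'' the short-time expansion $\ttr(\me^{-tA})\sim\sum_k a_k\,t^{(k-d)/s}$ in pure powers, but this is classical only for \emph{differential} operators, whereas the proposition is asserted for the operators the paper lists as ``under consideration,'' which include $(\Delta+m^2)^{1/2}$ and Dirichlet-to-Neumann operators: genuinely pseudodifferential, of order $1$. For classical $\Psi$DOs the heat trace in general carries additional $t^k\log t$ terms ($k\in\mb{Z}_{\ge 1}$), and under your Mellin argument such terms produce simple poles of $\zeta_A$ at negative integers. This is not a hypothetical loophole: for $A=\mn{D}_{\mb{S}^1}=(\Delta_{\mb{S}^1}+m^2)^{1/2}$ one has $\zeta_A(z)=\zeta_{\Delta_{\mb{S}^1}+m^2}(z/2)$, and since $\ttr(\me^{-t(\Delta_{\mb{S}^1}+m^2)})=\sqrt{\pi/t}\,\me^{-tm^2}+\mathcal{O}(t^{\infty})$, the function $\zeta_{\Delta_{\mb{S}^1}+m^2}$ has a genuine simple pole at $w=-\tfrac12$ with residue $m^2/2$, hence $\zeta_A$ has a simple pole at $z=-1$ (residue $m^2$). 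So the log-free expansion you cite is false for this operator (a $t\log t$ term must be present), and an argument resting on it cannot establish the stated pole structure in the stated generality---it would ``prove'' regularity at $z=-1$ where a pole actually sits (this also shows the literal exclusion of all of $\mb{Z}_{\le 0}$ in the quoted statement needs care for the order-one operators, though the thesis only ever uses holomorphy at $z=0$ and for $\fk{Re}(z)$ large). The proofs in the cited references avoid this: one uses the symbol expansion of $A^{-z}$ from part (ii) to continue the diagonal kernel $K_{A^{-z}}(x,x)$ meromorphically, each homogeneous symbol component contributing exactly one simple pole at $z=(d-j)/s$, and regularity at $z=0$ follows because the residue there is proportional to the $(-d)$-homogeneous symbol component of $A^{0}=\one$, which vanishes; regularity at strictly negative integers requires a further vanishing-residue argument that holds for differential $A$ but, as the example shows, not in general. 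If you keep the heat-kernel route, you must either restrict (iii) to differential operators such as $\Delta+m^2$, or work with the full Grubb--Seeley expansion including logarithmic terms and claim only what survives: simple poles within $\{(d-j)/s\}$ and holomorphy at $z=0$.
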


\begin{deef}
  For an operator~$A$ under consideration, we define its \textsf{zeta-regularized determinant} as
  \begin{equation}
    \detz A\defeq \exp\left( -\partial_z\zeta_A(0) \right),
    \label{}
  \end{equation}
  where~$\zeta_A(0)$ is the \textsf{zeta function} of~$A$ given by (\ref{eqn-zeta-def}) and (\ref{eqn-complex-pow-cauchy}).
\end{deef}

\begin{def7}
  An alternative way of defining the zeta function and its meromorphic continuation is to use the heat kernel and the Mellin transform. See Gilkey \cite{Gilkey} section 1.12. This way of definition also gives (\ref{eqn-zeta-eigenval}) over the same region, defining therefore the same function as ours.
\end{def7}

\begin{def7}\label{rem-zeta-det-positive}
  From (ii) of proposition \ref{prop-def-det-zeta} we see that if~$A$ is self-adjoint and strictly positive, then~$\zeta_A(z)$ is real-valued for~$z\in (d/s,+\infty)$. But~$\fk{Im}(\zeta_A)$ is real analytic and hence~$\zeta_A$ remains real-valued on~$\mb{R}$ before crossing a pole, and by (iii) it is in particular real-valued on an interval around~$0$. Thus~$\partial_z \zeta_A(0)$ is real and~$\detz A$ is positive.
\end{def7}

Now we move on to the second notion of determinant. Let~$\mathcal{H}$ be a Hilbert space and~$A\in \mathcal{L}(\mathcal{H})$. Denote by~$\Lambda^k \mathcal{H}$ and~$\Lambda^k A$, respectively, the~$k$-th exterior product of~$\mathcal{H}$ and~$A$ (see Simon \cite{Sim1} section 1.5).

\begin{prop}[\cite{Sim1} lemma 3.3]\label{prop-def-det-fred}
  If~$A$ is trace class on~$\mathcal{H}$, then~$\Lambda^k A$ is also trace class on~$\Lambda^k \mathcal{H}$ with bound on trace norm
  \begin{equation}
    \bnrm{\Lambda^k A}_{\mm{tr}}\le \frac{1}{k!}\bnrm{A}_{\mm{tr}}^k.
    \label{}
  \end{equation}
  In particular, putting
  \begin{equation}
    \detf(\one +zA)\defeq \sum_{k=0}^{\infty} z^k\ttr_{\Lambda^k\mathcal{H}}(\Lambda^k A)
    \label{eqn-det-fred-func}
  \end{equation}
  for~$z\in\mb{C}$ defines an entire function, and
  \begin{equation}
    |\detf(\one+z A)|\le \exp(|z|\nrm{A}_{\mm{tr}}).\quad \Box
    \label{}
  \end{equation}
\end{prop}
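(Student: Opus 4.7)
The plan is to reduce everything to a statement about singular values of $A$ and then a purely combinatorial estimate. Write the polar decomposition $A = U|A|$ and diagonalize the compact positive operator $|A|$ in an orthonormal basis $\{e_j\}_{j=1}^{\infty}$ of $\mathcal{H}$, so that $|A|e_j = \mu_j e_j$ where $\mu_1 \ge \mu_2 \ge \cdots \ge 0$ are the singular values of $A$, and $\sum_j \mu_j = \|A\|_{\mm{tr}} < \infty$. The wedge products $\{e_{j_1}\wedge\cdots\wedge e_{j_k}\}_{j_1<\cdots<j_k}$ form an orthonormal basis of $\Lambda^k\mathcal{H}$ (with the natural inner product), and $\Lambda^k|A|$ is diagonal in this basis with eigenvalues $\mu_{j_1}\mu_{j_2}\cdots\mu_{j_k}$. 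Moreover $\Lambda^k A = (\Lambda^k U)(\Lambda^k|A|)$ with $\Lambda^k U$ a partial isometry (being the exterior power of a partial isometry), so $\Lambda^k|A|$ is exactly $|\Lambda^k A|$ up to such isometry, and the singular values of $\Lambda^k A$ are precisely the numbers $\mu_{j_1}\cdots\mu_{j_k}$ with $j_1<\cdots<j_k$.

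Next I would conclude the trace-norm estimate by the combinatorial bound
\begin{equation}
  \bnrm{\Lambda^k A}_{\mm{tr}} = \sum_{j_1<\cdots<j_k} \mu_{j_1}\cdots \mu_{j_k} \le \frac{1}{k!}\Big(\sum_{j=1}^{\infty} \mu_j\Big)^{k} = \frac{1}{k!}\bnrm{A}_{\mm{tr}}^{k},
\end{equation}
where the inequality comes from the fact that expanding $(\sum_j \mu_j)^k$ produces every ordered tuple $(j_1,\dots,j_k)$, and those with all distinct indices contribute exactly $k!\sum_{j_1<\cdots<j_k}\mu_{j_1}\cdots\mu_{j_k}$, with the remaining terms being nonnegative.

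Having the bound $\bnrm{\Lambda^k A}_{\mm{tr}}\le \frac{1}{k!}\bnrm{A}_{\mm{tr}}^{k}$, the estimate $|\ttr_{\Lambda^k\mathcal{H}}(\Lambda^k A)|\le \bnrm{\Lambda^k A}_{\mm{tr}}$ (a standard property of the trace on trace class operators) gives
\begin{equation}
  \sum_{k=0}^{\infty}\bigl|z^k\ttr_{\Lambda^k\mathcal{H}}(\Lambda^k A)\bigr| \le \sum_{k=0}^{\infty}\frac{|z|^{k}}{k!}\bnrm{A}_{\mm{tr}}^{k} = \exp(|z|\,\nrm{A}_{\mm{tr}}),
\end{equation}
which yields both the claimed pointwise bound on $\detf(\one+zA)$ and the uniform convergence of the defining series on every compact subset of $\mb{C}$, hence analyticity (entireness) of $z\mapsto\detf(\one+zA)$ by Weierstrass.

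The only step requiring some care, and what I expect to be the main obstacle to rendering this fully rigorous, is the identification of the singular values of $\Lambda^k A$ with products $\mu_{j_1}\cdots\mu_{j_k}$: one must justify that $\Lambda^k$ distributes over the polar decomposition $A=U|A|$, that $\Lambda^k U$ is a partial isometry on $\Lambda^k\mathcal{H}$, and handle possible finite multiplicity issues when some $\mu_j=0$; all this can be done by approximating $A$ by finite-rank truncations (keeping only the top $N$ singular directions) and passing to the limit, using continuity of $\Lambda^k$ as a map from trace class to trace class, a continuity itself checked via the bound we just derived on finite-rank operators.
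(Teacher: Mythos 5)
Your proof is correct and is exactly the standard singular-value argument behind the result the paper simply cites (Simon, Trace Ideals, Lemma 3.3): identify the singular values of $\Lambda^k A$ with the products $\mu_{j_1}\cdots\mu_{j_k}$, $j_1<\cdots<j_k$, bound their sum by $\frac{1}{k!}\bigl(\sum_j\mu_j\bigr)^k$, and conclude via $|\ttr(\Lambda^k A)|\le\bnrm{\Lambda^k A}_{\mm{tr}}$ and Weierstrass. The caution in your last paragraph is not really needed: since $(\Lambda^k A)^*(\Lambda^k A)=\Lambda^k(A^*A)=(\Lambda^k|A|)^2$ with $\Lambda^k|A|\ge 0$, one gets $|\Lambda^k A|=\Lambda^k|A|$ directly, and $\Lambda^k U$ is a partial isometry because $(\Lambda^k U)^*(\Lambda^k U)=\Lambda^k(U^*U)$ is an orthogonal projection, so no finite-rank approximation is required.
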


\begin{deef}
  Let~$A$ be a trace class operator on the Hilbert space~$\mathcal{H}$. Then the determinant~$\detf(\one+A)$ given by (\ref{eqn-det-fred-func}) for~$z=1$ is called the \textsf{Fredholm determinant} of~$\one+A$.
\end{deef}

\begin{lemm}[\cite{Sim1} theorems 3.4, 3.7, 3.8] \label{lemm-det-fred-cont} ~
\begin{enumerate}[(i)]
    \item The map~$A\longmapsto \detf(\one+A)$ defines a continuous function on the trace ideal~$\mathcal{J}_1$ with~$\nrm{\cdot}_{\mm{tr}}$. More precisely,
  \begin{equation}
    |\detf(\one+A)-\detf(\one+B)|\le \nrm{A-B}_{\mm{tr}}\exp(\nrm{A}_{\mm{tr}}+\nrm{B}_{\mm{tr}}+1).
    \label{}
  \end{equation}
  \item If~$A$,~$B\in \mathcal{L}(\mathcal{H})$ are such that both~$AB$ and~$BA$ are of trace class, then we have
  \begin{equation}
    \detf(\one+AB)=\detf(\one +BA). \quad~\Box
    \label{}
  \end{equation}
\end{enumerate}
\end{lemm}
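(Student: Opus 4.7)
My plan is to exploit the power series representation $\detf(\one+A)=\sum_{k=0}^{\infty}\ttr_{\Lambda^k\mathcal{H}}(\Lambda^k A)$ from Proposition \ref{prop-def-det-fred}, together with the submultiplicativity bound $\nrm{\Lambda^k A}_{\mm{tr}} \le \tfrac{1}{k!}\nrm{A}_{\mm{tr}}^k$ already established there. Both (i) and (ii) reduce to term-by-term manipulations of this series, and the series converges absolutely in a uniform enough way to be handled safely.

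For (i), the key observation is that $\Lambda^k$ is a functor on bounded operators, so on elementary vectors one has the telescoping identity
\begin{equation*}
Av_1\wedge\cdots\wedge Av_k - Bv_1\wedge\cdots\wedge Bv_k = \sum_{j=1}^k Av_1\wedge\cdots\wedge Av_{j-1}\wedge (A-B)v_j \wedge Bv_{j+1}\wedge\cdots\wedge Bv_k.
\end{equation*}
Reading each summand as a composition $\Lambda^{j-1}A\cdot \widetilde{D}_j\cdot \Lambda^{k-j}B$ with $\widetilde{D}_j$ the ``wedge insertion'' of $A-B$ at the $j$-th slot, and estimating its trace norm via the submultiplicativity bound of Proposition \ref{prop-def-det-fred} applied to the two flanking blocks, I would obtain
\begin{equation*}
\bnrm{\Lambda^k A-\Lambda^k B}_{\mm{tr}} \le \sum_{j=1}^k \frac{\nrm{A}_{\mm{tr}}^{j-1}\,\nrm{A-B}_{\mm{tr}}\,\nrm{B}_{\mm{tr}}^{k-j}}{(j-1)!(k-j)!}.
\end{equation*}
Taking traces, summing over $k\ge 1$ and reindexing via $m=j-1$, $n=k-j$ collapses the resulting double sum into the product of two exponentials, giving $|\detf(\one+A)-\detf(\one+B)|\le \nrm{A-B}_{\mm{tr}}\exp(\nrm{A}_{\mm{tr}}+\nrm{B}_{\mm{tr}})$, which is even slightly better than the stated bound (the $+1$ in the exponent provides harmless slack).

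For (ii), the proof is essentially pointwise in $k$: functoriality gives $\Lambda^k(AB)=\Lambda^k A\cdot \Lambda^k B$ and $\Lambda^k(BA)=\Lambda^k B\cdot \Lambda^k A$. Since $AB$ and $BA$ are trace class, the Fredholm series for both sides converges absolutely (Proposition \ref{prop-def-det-fred}), and in particular $\Lambda^k(AB)$ and $\Lambda^k(BA)$ are trace class on $\Lambda^k\mathcal{H}$. The cyclicity of the trace on trace-class operators then yields $\ttr(\Lambda^k A\cdot \Lambda^k B)=\ttr(\Lambda^k B\cdot \Lambda^k A)$ for every $k\ge 0$, and summing term by term gives $\detf(\one+AB)=\detf(\one+BA)$.

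The main technical obstacle is making the telescoping in (i) rigorous at the level of trace ideals rather than just on elementary wedges: one must check that the ``wedge insertion'' operator $\widetilde{D}_j$ is well-defined as a bounded map between $\Lambda^{j-1}B(\Lambda^{j-1}\mathcal{H})$ and the relevant subspace, and that its trace norm satisfies the expected bound $\nrm{\widetilde{D}_j}_{\mm{tr}}\le \nrm{A-B}_{\mm{tr}}$ uniformly. I expect this to follow by a standard approximation argument reducing to finite-rank $A$, $B$ where $\Lambda^k$ is concretely a matrix minor computation, and then extending by trace-norm continuity provided by the basic estimate of Proposition \ref{prop-def-det-fred} itself.
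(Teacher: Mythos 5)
The paper itself does not prove this lemma: it is quoted from Simon \cite{Sim1} (Theorems 3.4, 3.7, 3.8), so your argument should be measured against the standard proofs there.

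For (i), your telescoping argument is correct and can be made fully rigorous without finite-rank approximation: work on $\mathcal{H}^{\otimes k}$, write $A^{\otimes k}-B^{\otimes k}=\sum_{j=1}^{k}A^{\otimes(j-1)}\otimes(A-B)\otimes B^{\otimes(k-j)}$, and compress by the antisymmetrizer $P_k$. Since $P_k=(P_{j-1}\otimes\one\otimes P_{k-j})P_k$, each compressed summand has trace norm at most
\begin{equation*}
\bnrm{\Lambda^{j-1}A}_{\mm{tr}}\,\nrm{A-B}_{\mm{tr}}\,\bnrm{\Lambda^{k-j}B}_{\mm{tr}}\le \frac{\nrm{A}_{\mm{tr}}^{j-1}\nrm{A-B}_{\mm{tr}}\nrm{B}_{\mm{tr}}^{k-j}}{(j-1)!\,(k-j)!},
\end{equation*}
using multiplicativity of the trace norm under tensor products and the bound of Proposition \ref{prop-def-det-fred}; summing over $j$ and $k$ gives your estimate with $\exp(\nrm{A}_{\mm{tr}}+\nrm{B}_{\mm{tr}})$, which indeed implies the stated inequality. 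This is a genuinely different route from Simon's proof of Theorem 3.4, which bounds the entire function $z\mapsto\detf(\one+A+z(B-A))$ by $\exp(\nrm{A}_{\mm{tr}}+|z|\nrm{B-A}_{\mm{tr}})$ and applies a Cauchy-type estimate (the source of the $+1$); your route is more elementary and marginally sharper, at the cost of the multilinear bookkeeping above.

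For (ii) there is a genuine gap. You invoke ``cyclicity of the trace'' to get $\ttr(\Lambda^kA\,\Lambda^kB)=\ttr(\Lambda^kB\,\Lambda^kA)$, but $\Lambda^kA$ and $\Lambda^kB$ are only bounded operators (they are not trace class unless $A$, $B$ themselves are); only their products are trace class. The elementary cyclicity $\ttr(ST)=\ttr(TS)$ requires one of the factors to be trace class. The statement you actually need --- $S,T$ bounded with $ST$ and $TS$ trace class implies $\ttr(ST)=\ttr(TS)$ --- is true, but its standard proof goes through Lidskii's theorem together with the fact that $ST$ and $TS$ have the same nonzero eigenvalues with algebraic multiplicities; that is essentially the same depth as the identity $\detf(\one+AB)=\detf(\one+BA)$ itself, which is precisely why the citation includes Simon's Theorem 3.7 (Lidskii) alongside 3.8. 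As written, your proof of (ii) therefore assumes an input at least as hard as its conclusion. Two ways to repair it: either argue as in Simon, using Lidskii's product formula $\detf(\one+T)=\prod_j(1+\lambda_j(T))$ and the coincidence of the nonzero spectra (with multiplicities) of $AB$ and $BA$; or observe that in the paper's only application of (ii), in the proof of Corollary \ref{corr-rad-niko-dense}, one of the two factors (e.g.\ $\mn{D}_{\Sigma}^{-1/2}(\DN_M^{\Sigma}-2\mn{D}_{\Sigma})$, a $\Psi$DO of order $\le -5/2$ on a $1$-manifold) is itself trace class, so there the elementary cyclicity applied term by term in your series already suffices.
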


\subsubsection{Factorization Lemma}

\begin{lemm}\label{lemm-det-factor}
  Suppose~$A$ and~$K$ are~$\Psi$DOs such that both~$A$ and~$A(\one+K)$ satisfy the assumptions of proposition \ref{prop-def-det-zeta} and that~$\detz(A)$ and~$\detz(A(\one+K))$ are defined. Suppose moreover~$K$ is trace class and there exists smoothing operators~$\{K_i\}_{i=1}^{\infty}$ such that~$AK_i\to AK$ in~$\nrm{\cdot}_{\mm{tr}}$ (in particular, $AK$ is also trace class). Then
  \begin{equation}
    \detz(A(\one+K))=\detz(A)\detf(\one +K).
    \label{eqn-det-factor}
  \end{equation}
\end{lemm}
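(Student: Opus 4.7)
The plan is to handle the smoothing case first via a variational argument, then pass to the limit using the approximation hypothesis.

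For the smoothing case, suppose that $K$ itself is a smoothing operator. I would consider the one-parameter family
\begin{equation*}
    A_t \defeq A(\one + tK), \qquad t\in [0,1].
\end{equation*}
Since $tK$ is smoothing, $A_t$ has the same principal symbol as $A$, so it remains an elliptic positive $\Psi$DO satisfying the hypotheses of Proposition \ref{prop-def-det-zeta} (possibly modulo a small interval of $t$'s where zero eigenvalues could appear, which can be excluded using that $\detz(A_0)$ and $\detz(A_1)$ are defined together with continuity of the spectrum). The key input is the classical variational formula for zeta-determinants (Kontsevich--Vishik, cf.\ \cite{KV}): for such a smooth family,
\begin{equation*}
    \partial_t \log \detz(A_t) = \ttr\bigl(A_t^{-1}\, \partial_t A_t\bigr),
\end{equation*}
valid whenever $A_t^{-1}\partial_t A_t$ is trace class. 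In our case $A_t^{-1}\partial_t A_t = (\one+tK)^{-1}K$, which is smoothing (product of smoothing $K$ with bounded $(\one+tK)^{-1}$), hence trace class. Meanwhile, standard Fredholm determinant calculus gives $\partial_t \log \detf(\one+tK) = \ttr((\one+tK)^{-1}K)$. Integrating both identities from $t=0$ to $t=1$ and using $\log\detf(\one)=0$ yields $\detz(A(\one+K)) = \detz(A)\detf(\one+K)$.

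To derive the variational formula, I would start from $\zeta_{A_t}(z) = \ttr(A_t^{-z})$ for $\fk{Re}(z)$ large, differentiate in $t$ using $\partial_t(A_t-\lambda)^{-1} = -(A_t-\lambda)^{-1}(\partial_t A_t)(A_t-\lambda)^{-1}$ and the contour representation (\ref{eqn-complex-pow-cauchy}), and apply the trace together with cyclicity to obtain $\partial_t \zeta_{A_t}(z) = -z\,\ttr(A_t^{-z-1}\partial_t A_t)$. Since $A_t^{-z-1}\partial_t A_t$ stays trace class for all $z$ near $0$ (because $\partial_t A_t = AK$ is smoothing), the right-hand side has an analytic continuation vanishing at $z=0$, so taking $\partial_z|_{z=0}$ yields $-\ttr(A_t^{-1}\partial_t A_t)$ as claimed.

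For the general case, I would exploit the hypothesis $AK_i \to AK$ in $\nrm{\cdot}_{\mm{tr}}$. Since $A$ is positive elliptic of positive order on a closed manifold, it is invertible with bounded inverse $A^{-1}$, so applying $A^{-1}$ yields $K_i \to K$ in $\nrm{\cdot}_{\mm{tr}}$ as well. By the smoothing case,
\begin{equation*}
    \detz(A(\one+K_i)) = \detz(A)\detf(\one+K_i).
\end{equation*}
By Lemma \ref{lemm-det-fred-cont}(i), $\detf(\one+K_i)\to \detf(\one+K)$. To see that the left-hand side converges to $\detz(A(\one+K))$, I would re-run the variational argument along the affine family $B_s \defeq A(\one+K) + s\,A(K_i-K)$, $s\in[0,1]$, giving
\begin{equation*}
    \log \detz(A(\one+K_i)) - \log \detz(A(\one+K)) = \int_0^1 \ttr\bigl((\one + K + s(K_i-K))^{-1}(K_i-K)\bigr)\, ds,
\end{equation*}
which is bounded by $\sup_s \snrm{(\one + K + s(K_i-K))^{-1}}_{L^2} \cdot \nrm{K_i-K}_{\mm{tr}}$. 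Assuming $\one+K$ is invertible (needed implicitly for $\detz(A(\one+K))\neq 0$), the operator norm factor remains uniformly bounded for large $i$, so the difference tends to $0$. Combining the two limits gives (\ref{eqn-det-factor}).

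The main obstacle is a careful justification of the variational formula: showing that $\partial_t\zeta_{A_t}(z) = -z\,\ttr(A_t^{-z-1}\partial_t A_t)$ makes sense globally in $z$ under our analytical hypotheses, and in particular that we may interchange $\partial_t$ and $\partial_z$ at $z=0$. The trick is that, because $\partial_t A_t$ is smoothing, the product $A_t^{-z-1}\partial_t A_t$ stays in the trace ideal uniformly on a neighborhood of $z=0$, which is precisely what the approximation hypothesis ensures even in the limiting step (where $AK_i \to AK$ in trace norm plays the role that smoothness of the kernel played in the base case).
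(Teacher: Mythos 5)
Your overall skeleton is the same as the paper's: first establish the factorization when the perturbation is smoothing, then pass to general $K$ using $AK_i\to AK$ in $\nrm{\cdot}_{\mm{tr}}$ together with the continuity of $\detf$ (Lemma \ref{lemm-det-fred-cont}). The difference is that the paper simply cites Kontsevich--Vishik (\cite{KV}, Proposition 6.4) for the smoothing case and handles the limit by estimating $\zeta_{A(\one+K)}(z)-\zeta_{A(\one+K_i)}(z)$ directly on a small disk around $z=0$ — via the resolvent identity, which produces the trace-class factor $A(K_i-K)$, and then Cauchy's estimate on $\zeta'(0)$ — whereas you re-prove the smoothing case and run the limit through variational formulas along deformation paths. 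That is where the genuine gap lies.

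Concretely: your path $A_t=A(\one+tK)$ is not guaranteed to stay in the class where $\detz$ is defined. The hypotheses only give invertibility and admissibility (spectrum in $\mb{R}_+$ away from $0$, resolvent bounds on $\gamma$) at the endpoints $t=0,1$; for interior $t$ the operator $\one+tK$ can be singular (e.g.\ if $K$ has eigenvalue $-2$, then $\one+\tfrac12 K$ is not invertible), and even when invertible, $A(\one+tK)$ is a non-self-adjoint perturbation of $A$ by $tAK$ which is not small, so its spectrum may cross the cut along the negative axis used to define $\lambda^{-z}$ in (\ref{eqn-complex-pow-cauchy}); at such points $A_t^{-z}$, $\zeta_{A_t}$ and hence your integrand $\ttr((\one+tK)^{-1}K)$ are undefined or the branch of $\log\detz$ jumps. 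Your parenthetical remark that a "small interval of $t$'s" can be "excluded by continuity of the spectrum" does not repair this: if the path crosses a singular or inadmissible operator, both sides of the integrated identity diverge there, and discarding an interval destroys the equality between the endpoint values. One would need either an extra hypothesis (admissibility along the whole segment), a deformation of the path, or a different argument — which is presumably why the paper quotes \cite{KV} rather than deforming. The same issue reappears, in milder form, in your limiting step: the family $B_s=A(\one+K)+sA(K_i-K)$ is admissible for $i$ large because $A(K_i-K)$ is small in operator norm, but you neither argue this nor justify the variational formula when $\partial_sB_s$ is merely trace class (interchanging $\partial_s$ with the contour integral and with the meromorphic continuation at $z=0$ requires exactly the kind of uniform trace-norm bounds the paper writes down in (\ref{eqn-diff-resolv-bound})--(\ref{eqn-diff-zeta-bound})). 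The paper's route avoids all intermediate operators: it only ever uses the resolvents of the two endpoint operators $A(\one+K)$ and $A(\one+K_i)$, whose admissibility is available, so if you want to salvage your scheme the cleanest fix is to adopt that endpoint-only estimate for the limit and to cite (or carefully reprove under a path-admissibility assumption) the smoothing case.
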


\begin{proof}
  We start from Kontsevich and Vishik \cite{KV} proposition 6.4 and take for granted that (\ref{eqn-det-factor}) holds with~$K_i$ in place of~$K$. Our assumptions are tailor-made so that as~$i\to \infty$,
  \begin{equation}
    \detz(A(\one+K_i))\lto \detz(A(\one+K)).
    \label{}
  \end{equation}
  Indeed, by (\ref{eqn-resolve-op-norm-bound}) we have
\begin{equation}
  \nrm{(A(\one+K)-\lambda)^{-1}A(K_i-K)(A(\one+K_i)-\lambda)^{-1}}_{\mm{tr}}\le c|\lambda|^{-2}\bnrm{A(K_i-K)}_{\mm{tr}},
  \label{eqn-diff-resolv-bound}
\end{equation}
with~$c$ independent of~$i$ since a fortiori~$AK_i\to AK$ under~$\nrm{\cdot}_{L^2}$. This in particular shows when~$\fk{Re}(z)>-1$ the integral expression for~$A(\one+K)^{-z}-A(\one+K_i)^{-z}$ is a converging Bochner integral valued in the trace class ideal~$\mathcal{J}_1$ (note~$|\lambda^{-z}|\asymp |\lambda|^{-\fk{Re}(z)}$ as~$\lambda\to -\infty$) and since~$\ttr_{L^2}$ is a continuous functional on~$\mathcal{J}_1$,
\begin{equation}
  \big|\ttr_{L^2}(A(\one+K)^{-z}-A(\one+K_i)^{-z})\big|\lesssim \int_{\gamma}^{} |\lambda|^{-\fk{Re}(z)-2}\bnrm{A(K_i-K)}_{\mm{tr}}\dd\lambda.
  \label{eqn-diff-zeta-bound}
\end{equation}
Now by (iii) of proposition \ref{prop-def-det-zeta} there is~$\frac{1}{2}>\delta>0$ so that~$\zeta_{A(\one+K_i)}$,~$\zeta_{A(\one+K)}$ are both holomorphic over~$\ol{\mb{B}}_{\delta}(0)$ (for example,~$\delta<1/|s|$ where~$s$ is the order of~$A$). Thus by (\ref{eqn-diff-zeta-bound}) and Cauchy's estimate
\begin{align*}
  \big|\zeta_{A(\one+K)}'(0)-\zeta_{A(\one+K_i)}'(0)\big|&\le \frac{1}{\delta}\sup_{|z|=\delta}\big|\zeta_{A(\one+K)}(z)-\zeta_{A(\one+K_i)}(z)\big| \\
  &\lesssim \frac{1}{\delta}\sup_{|z|=\delta}\int_{\gamma}^{} |\lambda|^{-\fk{Re}(z)-2}\bnrm{A(K_i-K)}_{\mm{tr}}\dd\lambda \\
  &\lesssim \bnrm{A(K_i-K)}_{\mm{tr}}\int_{\gamma}^{}|\lambda|^{-3/2}\dd\lambda \\
  &\lesssim \bnrm{A(K_i-K)}_{\mm{tr}}.
\end{align*}
This shows~$|\zeta_{A(\one+K)}'(0)-\zeta_{A(\one+K_i)}'(0)|\to 0$, as we have desired.
\end{proof}

\subsubsection{The Gluing Formula of Burghelea-Friedlander-Kappeler}\label{sec-BFK}

\noindent Let~$(M,g)$ is a closed Riemannian manifold and~$\Sigma\subset M$ an embedded closed hypersurface with induced metric. Assume proposition \ref{prop-stoc-decomp-closed} and decompose~$\phi=\phi_{\Sigma}+\phi_{M\setminus\Sigma}^D$ corresponding to~$\mu_{\mm{GFF}}^M=\mu_{\mm{GFF}}^{M\setminus\Sigma,D}\otimes \mu_{\DN}^{\Sigma,M}$. In view of equation (\ref{eqn-heu-gaussian-mea-det-volume}), and in parallel
\begin{equation}
  \me^{-\frac{1}{2}\ank{\varphi,\DN \varphi}_{L^2}}[\mathcal{L}\varphi]\heueq\detz(\DN_M^{\Sigma})^{-\frac{1}{2}}\dd\mu_{\DN}^{\Sigma,M},
  \label{}
\end{equation}
if we assume a ``formal Fubini theorem'' with respect to the heuristic expressions involving~$\mathcal{L}$, namely
\begin{equation}
  \int_{}^{}\me^{-\frac{1}{2}\sank{\phi,(\Delta+m^2)\phi}_{L^2}}[\mathcal{L}\phi] \heueq\iint \me^{-\frac{1}{2}\sank{\phi_{\Sigma},(\Delta+m^2)\phi_{\Sigma}}_{L^2}}\me^{-\frac{1}{2}\sank{\phi_{M\setminus\Sigma}^D,(\Delta+m^2)\phi_{M\setminus\Sigma}^D}_{L^2}} [\mathcal{L}\phi_{\Sigma}]\otimes [\mathcal{L}\phi_{M\setminus\Sigma}^D],
  \label{}
\end{equation}
then we are led to the following relation of the corresponding determinants (volumes) which were first rigorously proved by Burghelea, Friedlander and Kappeler \cite{BFK}.

\begin{prop}
  [\cite{BFK} theorem B, \cite{Lee} theorem 1.1] Let~$(M,g)$ is a closed Riemannian surface and~$\Sigma\subset M$ an embedded closed hypersurface with induced metric. Then
  \begin{equation}
    \detz(\Delta_M+m^2)=\detz(\Delta_{M\setminus\Sigma,D}+m^2)\detz(\DN_M^{\Sigma}). \quad\Box
    \label{eqn-bfk-non-disec}
  \end{equation}
\end{prop}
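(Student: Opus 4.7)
The plan reduces \eqref{eqn-bfk-non-disec} to an identity of zeta derivatives at $0$,
\[
  \zeta'_{\Delta_M+m^2}(0) - \zeta'_{\Delta_{M\setminus\Sigma,D}+m^2}(0) = \zeta'_{\DN_M^\Sigma}(0),
\]
which I would establish by introducing a spectral parameter $\lambda$, proving a Krein-type resolvent identity, and differentiating. For $\lambda$ in the common resolvent set, one expects
\[
  R_M(\lambda) - R_D(\lambda) = \PI_{M,\lambda}^\Sigma\,(\DN_{M,\lambda}^\Sigma)^{-1}(\PI_{M,\lambda}^\Sigma)^\ast,
\]
where $R_M(\lambda) = (\Delta_M+m^2-\lambda)^{-1}$, $R_D(\lambda) = (\Delta_{M\setminus\Sigma,D}+m^2-\lambda)^{-1}$, and $\PI_{M,\lambda}^\Sigma$, $\DN_{M,\lambda}^\Sigma$ are the $\lambda$-shifted Poisson and jumpy Dirichlet-to-Neumann operators. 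This is the direct generalization of the $\lambda=0$ identity that the paper records via $j_\Sigma = \tau_\Sigma^\ast = (\Delta_M+m^2)\PI_M^\Sigma(\DN_M^\Sigma)^{-1}$: if $u := R_M(\lambda)f$ and $v := R_D(\lambda)f$, their difference is a $(\Delta+m^2-\lambda)$-harmonic function on $M\setminus\Sigma$, hence a Poisson extension of its trace, while the trace itself is determined by matching jumps of normal derivatives across $\Sigma$.

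Differentiating the log-determinant identity in $\lambda$ (in a zeta-regularized sense) produces the trace identity
\[
  \mathrm{tr}\bigl[R_M(\lambda) - R_D(\lambda)\bigr] = \mathrm{tr}_\Sigma\bigl[(\DN_{M,\lambda}^\Sigma)^{-1}\partial_\lambda \DN_{M,\lambda}^\Sigma\bigr].
\]
Using the resolvent identity and cyclicity of the trace, the left-hand side equals $\mathrm{tr}_\Sigma[(\DN_{M,\lambda}^\Sigma)^{-1}(\PI_{M,\lambda}^\Sigma)^\ast \PI_{M,\lambda}^\Sigma]$, and a Hadamard-type variational formula (integration by parts on $M$ via Green's identity) yields $(\PI_{M,\lambda}^\Sigma)^\ast \PI_{M,\lambda}^\Sigma = \partial_\lambda \DN_{M,\lambda}^\Sigma$, closing the identity. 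Integrating back in $\lambda$ along the Cauchy contour $\gamma$ of \eqref{eqn-complex-pow-cauchy} and matching the $\lambda\to-\infty$ asymptotics delivers the desired zeta-derivative identity.

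The main obstacle is handling the zeta regularization rigorously. On the two-dimensional $M$, neither $R_M(\lambda)$ nor $R_D(\lambda)$ is trace class on $L^2$ (only Hilbert--Schmidt), so the trace identity must be interpreted either via sufficiently high resolvent powers (their difference is smoothed by the Poisson kernels and becomes trace class earlier than each term individually) or via the Mellin representation of $\zeta$, with careful meromorphic continuation and control of residues at $z=0$. On the one-dimensional $\Sigma$, by contrast, $\DN_M^\Sigma$ has order $+1$ and $(\DN_{M,\lambda}^\Sigma)^{-1}\partial_\lambda \DN_{M,\lambda}^\Sigma$ is of order $-2$ and trace class, so the right-hand side is unproblematic; the delicate point is that the pole contributions coming from the left must conspire to cancel, a phenomenon reflecting the geometric anomaly on the surface. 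This is precisely the technical content of \cite{BFK} Theorem B and of Lee's streamlined proof \cite{Lee} Theorem 1.1, whose route I would follow for the full execution.
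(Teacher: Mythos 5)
You should first be aware that the paper does not prove this proposition at all: it is quoted from \cite{BFK} (Theorem B) and \cite{Lee} (Theorem 1.1) --- hence the terminal $\Box$ --- and subsection \ref{sec-BFK} only supplies a heuristic Gaussian-measure motivation. So there is no in-paper argument to compare against; what your sketch reproduces is, in outline, the parameter-variation strategy of the cited works themselves. That outline is sound and meshes with identities the paper does establish: your Krein-type formula $R_M(\lambda)-R_D(\lambda)=\PI_{M,\lambda}^{\Sigma}(\DN_{M,\lambda}^{\Sigma})^{-1}(\PI_{M,\lambda}^{\Sigma})^{*}$ is the $\lambda$-shifted version of corollary \ref{cor-dn-is-conj-of-green} and of the identity $\PI_{\Omega}^{\partial\Omega}(\DN_{\Omega}^{\partial\Omega})^{-1}(\PI_{\Omega}^{\partial\Omega})^{*}=C_N-C_D$ of appendix \ref{sec-RP-first}, and your Hadamard formula follows from the Dirichlet-energy expression in lemma \ref{lemm-dn-prop}(i). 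Two small points of care: the correct sign is $\partial_\lambda\DN_{M,\lambda}^{\Sigma}=-(\PI_{M,\lambda}^{\Sigma})^{*}\PI_{M,\lambda}^{\Sigma}$ (the form contains $(m^2-\lambda)\int u^2$), which cancels against the sign in $\partial_\lambda\log\detz(\Delta+m^2-\lambda)=-\ttr\,R(\lambda)$, so the $\lambda$-derivative of the claimed identity does vanish as you intend; and $\lambda$ must avoid both $\sigma(\Delta_M+m^2)$ and $\sigma(\Delta_{M\setminus\Sigma,D}+m^2)$ for $\DN_{M,\lambda}^{\Sigma}$ to be defined and invertible, which is why one runs $\lambda$ along the negative real axis.

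What keeps this from being a proof is precisely the two steps you defer, and they carry essentially all the weight. (a) Differentiation under $\detz$ is not automatic: on the surface side neither resolvent is trace class and zeta determinants of families can produce anomaly terms, so one must argue via the heat/Mellin representation or via the trace-class difference $R_M(\lambda)-R_D(\lambda)$ (trace class because, as you note, each Poisson factor is Hilbert--Schmidt: $(\DN_{M,\lambda}^{\Sigma})^{-1/2}(\PI_{M,\lambda}^{\Sigma})^{*}$ lands in $W^{1}(\Sigma)$ and the embedding $W^{1}(\Sigma)\hookrightarrow L^2(\Sigma)$ is Hilbert--Schmidt in dimension one), checking that no residue contributions survive at $z=0$. (b) More importantly, differentiation only shows that the two sides of (\ref{eqn-bfk-non-disec}) agree up to a $\lambda$-independent multiplicative constant; BFK's general theorem in fact yields a local factor computed from the symbol of the Dirichlet-to-Neumann family, and one must carry out the $\lambda\to-\infty$ asymptotics of $\log\detz\DN_{M,\lambda}^{\Sigma}$ (e.g.\ by comparison with the shifted $2\mn{D}_{\Sigma}$, cf.\ lemma \ref{lemm-dn-prop}(iv)) and of the heat-trace expansions on $M$ and $M\setminus\Sigma$ to see that this constant equals $1$ in the present situation (closed surface, one-dimensional $\Sigma$, massive operator). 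Since you explicitly propose to follow \cite{Lee} for (a) and (b), your proposal operates at the same level as the paper's own treatment --- a citation --- rather than as an independent proof; that is legitimate, but be clear that those two steps are where the theorem actually lives.
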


The following version where~$\Sigma$ dissects~$M$ such that~$M\setminus \Sigma=M^{\circ}_+\sqcup M^{\circ}_-$ is also useful.

\begin{corr}
  In the situation as above, we have
  \begin{equation}
    \detz(\Delta_M+m^2)=\detz(\Delta_{M_+,D}+m^2)\detz(\Delta_{M_-,D}+m^2)\detz(\DN_M^{\Sigma}).\quad\Box
    \label{eqn-bfk-disec}
  \end{equation}
\end{corr}

As we base our analysis on background Gaussian probability measures, the formulae (\ref{eqn-bfk-non-disec}) and (\ref{eqn-bfk-disec}) constitute separate ingredients (constants) that needs to be ``tuned'' for the final gluing result to hold exactly. In fact, it is also reasonable to consider ``projective gluing'' which allows the freedom for an arbitrary (nonzero) constant to appear in the equation (see Segal \cite{Segal} page 460).

\subsection{Quadratic Perturbation = Radon-Nikodym Density}\label{sec-quad-pert-rad-niko-main}

\noindent Let~$C$ be a Gaussian covariance operator of order $-s$ on a closed Riemannian manifold~$\Sigma$, and denote by~$\mu_C$ the Gaussian measure on~$\mathcal{D}'(\Sigma)$ with covariance~$\ank{-, C-}_{L^2(\Sigma)}$. Let~$V$ be another bounded formally self-adjoint operator on~$L^2(\Sigma)$ (it could be given by a real symmetric Schwartz kernel). In this section we look at the Gibbs measure
\begin{equation}
  \dd\mu (\varphi) \defeq \frac{\me^{-\frac{1}{2}\ank{\varphi,V\varphi}_{L^2}}\dd\mu_C(\varphi)}{\int_{}^{}\me^{-\frac{1}{2}\ank{\varphi,V\varphi}_{L^2}}\dd\mu_C},
  \label{eqn-quadratic-gibbs-meas}
\end{equation}
which is a Gaussian measure (see proposition \ref{prop-quadratic-perturbation}).

From another perspective we consider Radon-Nikodym densities between mutually absolutely continuous Gaussian measures on~$\mathcal{D}'(\Sigma)$. See Bogachev \cite{Bogachev} section 6.4 for a general treatment from this perspective. We shall reproduce a proof following Glimm and Jaffe \cite{GJ} section 9.3 for reader's convenience and adaptation to the current situation.

A principal corollary of the results of this section is the following.
\begin{corr}\label{corr-rad-niko-dense}
  Let~$\Sigma$ be the disjoint union of Riemannian circles, embedded in an ambient Riemannian surface~$M$ (with or without boundary). Let~$\mu_{\mm{DN}}^{\Sigma,M}$ and~$\mu_{2\mn{D}}^{\Sigma}$ be the two Gaussian measures constructed on~$\mathcal{D}'(\Sigma)$ with covariance operators~$(\DN_M^{\Sigma})^{-1}$ and~$(2\mn{D}_{\Sigma})^{-1}$ (if~$M$ has boundary, specify the boundary condition to be $B$ as in section \ref{sec-dn-map}). Then~$\mu_{\mm{DN}}^{\Sigma,M}$ and~$\mu_{2\mn{D}}^{\Sigma}$ are mutually absolutely continuous with Radon-Nikodym density given by
  \begin{equation}
    \frac{\dd\mu_{\mm{DN}}^{\Sigma,M}}{\dd\mu_{2\mn{D}}^{\Sigma}}(\varphi)=({\tts \det_{\zeta}} (2\mn{D}_{\Sigma}))^{-\frac{1}{2}}({\tts \det_{\zeta}} \DN_M^{\Sigma})^{\frac{1}{2}} \me^{-\frac{1}{2}\sank{\varphi,(\DN_M^{\Sigma}-2\mn{D}_{\Sigma})\varphi}_{L^2(\Sigma)}}.
    \label{}
  \end{equation}
\end{corr}

The proof is at the end of this section. First we come back to the general case.

\begin{prop}\label{prop-quadratic-perturbation}
  Let~$V:L^2(\Sigma)\lto L^2(\Sigma)$ be as above and moreover assume
  \begin{equation}
    C^{-1}+V\textrm{ is positive,}
    \label{eqn-quadratic-perturbation-cov}
  \end{equation}
  and that
  \begin{equation}
    \wh{V}\defeq C^{\frac{1}{2}}V C^{\frac{1}{2}} \textrm{ is trace class.}
    \label{eqn-quadratic-perturbation-trace}
  \end{equation}
  Then
  \begin{enumerate}[(i)]
  \item the random variable~$\ank{\varphi,V\varphi}_{L^2}$ can be defined in $L^1(\mu_C)$ and $\mb{E}_C[\ank{\varphi,V\varphi}_{L^2}]=\ttr(\wh{V})$,
  \item $Z:= \mb{E}_C[\me^{-\frac{1}{2}\ank{\varphi,V\varphi}_{L^2}}] ={\tts \det_{\mm{Fr}}}(\one+ \wh{V})^{-\frac{1}{2}}$, and
  \item the Gibbs measure (\ref{eqn-quadratic-gibbs-meas}) is Gaussian with covariance~$(C^{-1}+V)^{-1}$.
\end{enumerate}
\end{prop}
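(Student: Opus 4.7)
The plan is to diagonalize the operator $\widehat{V}$ and reduce every statement to a countable family of one-dimensional Gaussian integrals, then take limits controlled by the trace class assumption. Since $\widehat{V}=C^{1/2}VC^{1/2}$ is self-adjoint (by self-adjointness of $V$ and $C^{1/2}$) and trace class on $L^2(\Sigma)$, there is an orthonormal basis $\{e_n\}_{n=1}^\infty$ of $L^2(\Sigma)$ with $\widehat{V}e_n=\mu_n e_n$ and $\sum_n|\mu_n|<\infty$. Let $f_n:=C^{-1/2}e_n\in W^{-s/2}(\Sigma)$; these lie in the Gaussian Hilbert space of $\mu_C$ described in lemma \ref{lemm-GHS-cam-mar-of-cov-C}, and the random variables $X_n:=\phi(f_n)$ satisfy $\mathbb{E}_C[X_nX_m]=\langle f_n,Cf_m\rangle_{L^2}=\langle e_n,e_m\rangle_{L^2}=\delta_{nm}$, so the $X_n$ are i.i.d.\ $\mathcal{N}(0,1)$.

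For (i) I would first define the quadratic form through its spectral expansion: set $Q_N(\varphi):=\sum_{n=1}^N\mu_n X_n(\varphi)^2$. Each partial sum is $\mu_C$-measurable, and $\mathbb{E}_C[Q_N]=\sum_{n\le N}\mu_n\to\operatorname{tr}\widehat{V}$. To get $L^1$-convergence I would bound $\mathbb{E}_C|Q_N-Q_M|\leq\sum_{M<n\le N}|\mu_n|\,\mathbb{E}[X_n^2]=\sum_{M<n\le N}|\mu_n|$, which is Cauchy by the trace class assumption; the limit is the rigorous definition of $\langle\varphi,V\varphi\rangle_{L^2}$, with expectation $\operatorname{tr}\widehat{V}$. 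One then checks that on smooth $\varphi$ (or on cylinder functions) this definition agrees with the naive one, so the notation is unambiguous.

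For (ii), I would truncate: $V_N$ is defined so that $\widehat{V_N}$ is the spectral projection of $\widehat{V}$ onto the first $N$ eigenmodes, so $\langle\varphi,V_N\varphi\rangle_{L^2}=\sum_{n\le N}\mu_n X_n^2$. Since the $X_n$ are independent and $1+\mu_n>0$ eventually (by assumption (\ref{eqn-quadratic-perturbation-cov})),
\begin{equation*}
\mathbb{E}_C\bigl[e^{-\tfrac12\langle\varphi,V_N\varphi\rangle}\bigr]=\prod_{n\le N}\mathbb{E}\bigl[e^{-\tfrac12\mu_n X_n^2}\bigr]=\prod_{n\le N}(1+\mu_n)^{-1/2}.
\end{equation*}
The right-hand side converges to $\det_{\mathrm{Fr}}(1+\widehat{V})^{-1/2}$ by the standard formula $\det_{\mathrm{Fr}}(1+\widehat{V})=\prod_n(1+\mu_n)$ (cf.\ \cite{Sim1}). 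On the left, $\langle\varphi,V_N\varphi\rangle\to\langle\varphi,V\varphi\rangle$ in $L^1(\mu_C)$ by step (i); since the integrands are uniformly bounded in $L^\infty(\mu_C)$ once positivity of $C^{-1}+V$ is used to truncate the positive tail (or via dominated convergence after splitting $V=V^+-V^-$ using $\mu_n=\mu_n^+-\mu_n^-$), the integrals pass to the limit, giving~(ii).

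For (iii), I would compute the characteristic function of the Gibbs measure at an arbitrary test function $f\in C^\infty(\Sigma)$. Writing $\varphi(f)=\sum_n\langle f,e_n\rangle_{L^2}\cdot\phi(C^{1/2}C^{-1/2}e_n)$ and using the $X_n$-decomposition,
\begin{equation*}
\mathbb{E}_\mu[e^{i\phi(f)}]=\frac{1}{Z}\prod_n\int_\mathbb{R}e^{ia_n x-\tfrac12(1+\mu_n)x^2}\frac{dx}{\sqrt{2\pi}},\qquad a_n:=\langle C^{1/2}f,e_n\rangle_{L^2},
\end{equation*}
which, by completing the square in each factor, equals $Z^{-1}\prod_n(1+\mu_n)^{-1/2}e^{-a_n^2/(2(1+\mu_n))}=e^{-\tfrac12\langle f,(C^{-1}+V)^{-1}f\rangle_{L^2}}$ once one identifies $\sum_n a_n^2(1+\mu_n)^{-1}=\langle C^{1/2}f,(1+\widehat{V})^{-1}C^{1/2}f\rangle=\langle f,(C^{-1}+V)^{-1}f\rangle$. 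I would make this rigorous again by truncating to $V_N$, applying Fubini for the independent $X_n$, and passing to the limit by dominated convergence (bounds coming from $C^{-1}+V\geq 0$). By lemma \ref{lemm-gaus-mea-char-func}, this identifies $\mu$ as the announced Gaussian.

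The main obstacle in this proof is controlling the limits $N\to\infty$: a priori $\langle\varphi,V\varphi\rangle$ is only defined as an $L^1(\mu_C)$-limit, so passing it inside exponentials requires that the negative part of the quadratic form be uniformly controlled. The assumption $C^{-1}+V\ge 0$ is exactly what prevents the Gaussian integrals from blowing up in the limit, and translating it into the factor-by-factor bound $1+\mu_n\ge 0$ (after absorbing a uniform positive shift if needed) is the delicate point. Corollary \ref{corr-rad-niko-dense} then follows by applying (iii) with $C=(2\mathbf{D}_\Sigma)^{-1}$ and $V=\mathrm{DN}_M^\Sigma-2\mathbf{D}_\Sigma$: trace class of $\widehat V$ follows from the fact that $\mathrm{DN}_M^\Sigma-2\mathbf{D}_\Sigma$ is smoothing of order $-1$ (standard fact on Dirichlet-to-Neumann operators), and positivity of $C^{-1}+V=\mathrm{DN}_M^\Sigma$ is automatic; the explicit normalisation constant is identified via the determinant factorisation $\det_{\mathrm{Fr}}(1+\widehat V)=\det_\zeta(\mathrm{DN}_M^\Sigma)/\det_\zeta(2\mathbf{D}_\Sigma)$ from lemma \ref{lemm-det-factor}.
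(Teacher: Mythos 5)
Your proposal follows essentially the same route as the paper: diagonalize $\widehat{V}$ to get the system $f_n=C^{-1/2}e_n$, orthonormal in the Gaussian Hilbert space, with i.i.d.\ standard Gaussians $X_n=\phi(f_n)$, define $\langle\varphi,V\varphi\rangle_{L^2}$ by its $L^1(\mu_C)$-convergent spectral series, compute the truncated expectations as products of one-dimensional Gaussian integrals, and pass to the limit using trace-norm convergence of $\widehat{V}_N$, continuity of the Fredholm determinant, and characteristic functions. One caution: your primary justification in (ii), that positivity of $C^{-1}+V$ makes the integrands uniformly bounded in $L^\infty(\mu_C)$, does not work, because $\langle\varphi,C^{-1}\varphi\rangle=\infty$ for $\mu_C$-almost every $\varphi$, so that positivity yields no pointwise lower bound on the quadratic form; your parenthetical alternative is the correct one (dominate by $e^{\frac12\sum_n\mu_n^- X_n^2}$, which is integrable precisely because $\mu_n>-1$ by (\ref{eqn-quadratic-perturbation-cov}) and $\sum_n\mu_n^-<\infty$ by (\ref{eqn-quadratic-perturbation-trace})), and it is also what the paper's truncation step implicitly relies on.
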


Note that since~$C^{-1}+V$ is positive and~$C$ is also positive,~$\one + \wh{V}=C^{\frac{1}{2}}(C^{-1}+V)C^{\frac{1}{2}}$ is positive.

\begin{lemm}\label{lemm-quadratic-pert-trace}
  There exist an orthonormal basis~$\{f_j\}_{j=1}^{\infty}$ of the Gaussian Hilbert space~$W^{-s}(\Sigma)$ of~$\mu_C$ equipped with~$\ank{-,C-}_{L^2}$, such that
  \begin{equation}
    \ank{\varphi,V \varphi}_{L^2}=\sum_{j=1}^{\infty} \lambda_j \varphi(f_j)^2,
    \label{eqn-quadratic-expression}
  \end{equation}
  for all~$\varphi$ belonging to the Cameron-Martin space~$W^s(\Sigma)$, where~$\{\lambda_j\}$ are the eigenvalues of~$\wh{V}$ on~$L^2(\Sigma)$, and the series converges absolutely in~$L^1(\mu_C)$. Thus we \textsf{define} the random variable~$\ank{\varphi,V\varphi}_{L^2}$ with this converging series. Consequently, (i) of proposition \ref{prop-quadratic-perturbation} holds.
\end{lemm}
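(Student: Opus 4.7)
The plan is to reduce the statement to the spectral theorem for the trace-class self-adjoint operator $\wh{V}$, and then transport the resulting orthogonal decomposition from $L^2(\Sigma)$ to the Gaussian Hilbert space $W^{-s}(\Sigma)$ via the unitary $C^{1/2}$. Concretely, since $\wh{V}=C^{1/2}VC^{1/2}$ is trace class (and self-adjoint, as $V$ and $C$ are), I would start by writing its spectral decomposition $\wh{V}=\sum_{j}\lambda_j|e_j\rangle\langle e_j|$, where $\{e_j\}_{j=1}^{\infty}$ is an orthonormal basis of $L^2(\Sigma)$ and $\sum_j |\lambda_j|=\nrm{\wh{V}}_{\mm{tr}}<\infty$.

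Next I would set $f_j:=C^{-1/2}e_j$. Since $C^{1/2}$ is an isometry from the Gaussian Hilbert space $(W^{-s}(\Sigma),\ank{\cdot,C\cdot}_{L^2})$ onto $L^2(\Sigma)$ (this is the content of Lemma \ref{lemm-GHS-cam-mar-of-cov-C}), the family $\{f_j\}$ is an orthonormal basis of $W^{-s}(\Sigma)$ with the Gaussian inner product; indeed
\begin{equation*}
\ank{f_j,Cf_k}_{L^2}=\ank{C^{-1/2}e_j,C^{1/2}e_k}_{L^2}=\ank{e_j,e_k}_{L^2}=\delta_{jk}.
\end{equation*}
Consequently each $\varphi(f_j)$ is a centered Gaussian on $(\mathcal{D}'(\Sigma),\mu_C)$ with variance $1$, and different $\varphi(f_j)$'s are independent (orthogonal Gaussians), so $\mb{E}_C[\varphi(f_j)^2]=1$.

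For the pointwise identity on the Cameron-Martin space, I would note that for $\varphi\in W^s(\Sigma)$ we may legitimately write $V=C^{-1/2}\wh{V}C^{-1/2}$ and hence, using the spectral decomposition of $\wh{V}$,
\begin{equation*}
\ank{\varphi,V\varphi}_{L^2}
=\bank{C^{-1/2}\varphi,\wh{V}C^{-1/2}\varphi}_{L^2}
=\sum_{j}\lambda_j\,\big|\bank{C^{-1/2}\varphi,e_j}_{L^2}\big|^2
=\sum_j \lambda_j\,\varphi(f_j)^2,
\end{equation*}
which is exactly (\ref{eqn-quadratic-expression}) on $W^s(\Sigma)$. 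Off this Cameron-Martin space the left-hand side has no direct meaning (as $\mu_C$ is supported on distributions of too low regularity by Lemma \ref{lemm-supp-gaus-meas}), so we take the convergent series on the right as the \emph{definition} of the random variable $\ank{\varphi,V\varphi}_{L^2}$.

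Finally, to establish absolute convergence in $L^1(\mu_C)$ and deduce statement (i) of Proposition \ref{prop-quadratic-perturbation}, I would apply monotone convergence / Tonelli to
\begin{equation*}
\sum_{j}\mb{E}_C\big[|\lambda_j|\,\varphi(f_j)^2\big]
=\sum_j |\lambda_j|\cdot 1=\nrm{\wh{V}}_{\mm{tr}}<\infty,
\end{equation*}
which shows $\sum_j \lambda_j\varphi(f_j)^2$ converges absolutely in $L^1(\mu_C)$ and that $\mb{E}_C[\ank{\varphi,V\varphi}_{L^2}]=\sum_j\lambda_j=\ttr(\wh{V})$. The only mildly delicate point is to justify $V=C^{-1/2}\wh{V}C^{-1/2}$ on $W^s(\Sigma)$: $C^{-1/2}$ is an unbounded operator, but since $\varphi\in W^s(\Sigma)=\dom(C^{-1/2})\subset L^2(\Sigma)$ by Lemma \ref{lemm-GHS-cam-mar-of-cov-C} and $V$ is bounded on $L^2(\Sigma)$, the composition is well-defined, making the manipulation above rigorous. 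This is the only subtle step; everything else is a direct transcription of the spectral theorem.
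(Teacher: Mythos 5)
Your proposal is correct and follows essentially the same route as the paper: diagonalize the trace-class self-adjoint operator $\wh{V}=C^{1/2}VC^{1/2}$ on $L^2(\Sigma)$, transport the eigenbasis to the Gaussian Hilbert space via $f_j=C^{-1/2}e_j$ (the paper phrases this as seeking $\{C^{1/2}f_j\}$ as the $L^2$-eigenfunctions), compute $\ank{\varphi,V\varphi}_{L^2}=\sank{C^{-1/2}\varphi,\wh{V}C^{-1/2}\varphi}_{L^2}=\sum_j\lambda_j\varphi(f_j)^2$ on the Cameron–Martin space, and conclude absolute $L^1(\mu_C)$ convergence from $\sum_j|\lambda_j|\,\mb{E}_C[\varphi(f_j)^2]=\nrm{\wh{V}}_{\mm{tr}}<\infty$. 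Your explicit remark on the domain issue for $C^{-1/2}$ matches the paper's brief observation that $C^{-1/2}\varphi\in L^2(\Sigma)$ for $\varphi\in W^s(\Sigma)$, so there is no gap.
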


\begin{proof}
  The key is to seek~$\{C^{\frac{1}{2}}f_j\}$ as complete~$L^2$-orthonormal eigenfunctions of~$\wh{V}$ with eigenvalues~$\{\lambda_j\}$, which exist since~$\wh{V}$ is self-adjoint and trace class on~$L^2(\Sigma)$. Now~$\{f_j\}\subset W^{-s}(\Sigma)$ and is complete orthonormal with respect to~$\ank{-,C-}_{L^2}$. This also means the random variables~$\{\varphi(f_j)\}$ are mutually independent. Note for~$\varphi\in W^s(\Sigma)$,~$C^{-\frac{1}{2}}\varphi \in L^2(\Sigma)$, and
  \begin{align*}
    \bank{\varphi,V \varphi}_{L^2}&=\bank{C^{-\frac{1}{2}}\varphi,\wh{V} C^{-\frac{1}{2}}\varphi}_{L^2} \\
    &=\sum_{j=1}^{\infty} \lambda_j \big|\bank{C^{-\frac{1}{2}}\varphi,C^{\frac{1}{2}}f_j}_{L^2}\big|^2 \\
    &=\sum_{j=1}^{\infty} \lambda_j \varphi(f_j)^2.
  \end{align*}
  Since~$\wh{V}$ is trace class,
  \begin{equation}
    \sum_{j=1}^{\infty} \mb{E}_C\big[ |\lambda_j| \varphi(f_j)^2\big]=\sum_{j=1}^{\infty}|\lambda_j|<\infty,
    \label{}
  \end{equation}
  and we obtain the result.
\end{proof}

\begin{def7}
  We thus defined~$\ank{\varphi,V\varphi}_{L^2}$ as a \textsf{Wiener quadratic form} and it lies in the second Wiener chaos of~$\mu_C$, a fortiori in~$L^2(\mu_C)$. See Bogachev \cite{Bogachev} pages 257-261 for more information and in particular proposition 5.10.16 for the same result in the context of Malliavin calculus. In fact,
  \begin{equation}
    VCf_j=C^{-\frac{1}{2}} \wh{V} C^{\frac{1}{2}}f_j=\lambda_j f_j,
    \label{}
  \end{equation}
  and hence~$V$ is trace class on~$W^s(\Sigma)$ with eigenbasis~$\{Cf_j\}$. 
\end{def7}

\begin{def7}
  We have~$\lambda_j>-1$ by the assumption (\ref{eqn-quadratic-perturbation-cov}).
\end{def7}

To treat (ii) and (iii) of proposition \ref{prop-quadratic-perturbation} we adopt some approximations.

\begin{lemm}
  Proposition \ref{prop-quadratic-perturbation} (ii) and (iii) is true in the case~$V$ has finite rank.
\end{lemm}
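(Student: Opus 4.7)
The plan is to reduce the statement, in the finite-rank case, to a finite-dimensional Gaussian integral computation, using the diagonalizing basis already produced by lemma \ref{lemm-quadratic-pert-trace}. Since $V$ has finite rank, so does $\wh{V}=C^{1/2}VC^{1/2}$, and only finitely many of the eigenvalues $\lambda_j$ of $\wh{V}$ are nonzero; call them $\lambda_1,\dots,\lambda_n$ (with $\lambda_j>-1$ by (\ref{eqn-quadratic-perturbation-cov})). The corresponding $\{f_j\}_{j=1}^n\subset W^{-s}(\Sigma)$ can be completed into an orthonormal basis of $W^{-s}(\Sigma)$ for $\ank{\cdot,C\cdot}_{L^2}$, so the random variables $\{\varphi(f_j)\}_{j\ge 1}$ are independent standard Gaussians under $\mu_C$, and lemma \ref{lemm-quadratic-pert-trace} gives the \emph{finite} sum $\ank{\varphi,V\varphi}_{L^2}=\sum_{j=1}^n \lambda_j \varphi(f_j)^2$.

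For (ii), since the $\varphi(f_j)$ are independent and only $j=1,\dots,n$ appear, $Z$ factorizes into ordinary 1D Gaussian integrals:
\begin{equation}
Z=\prod_{j=1}^n \mb{E}_C\bigl[\me^{-\frac{1}{2}\lambda_j \varphi(f_j)^2}\bigr]=\prod_{j=1}^n (1+\lambda_j)^{-1/2}.
\end{equation}
On the other hand, $\wh{V}$ is a finite-rank (hence trace-class) self-adjoint operator, so its Fredholm determinant collapses to the ordinary determinant $\det(\one+\wh{V})=\prod_{j=1}^n (1+\lambda_j)$, giving $Z={\tts\det_{\mm{Fr}}}(\one+\wh{V})^{-1/2}$.

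For (iii), I would check that the characteristic function of the Gibbs measure (\ref{eqn-quadratic-gibbs-meas}) equals that of the centered Gaussian with covariance $(C^{-1}+V)^{-1}$, using lemma \ref{lemm-gaus-mea-char-func}. Given a test function $f\in C^\infty(\Sigma)$, expand it in the basis as $f=\sum_j \gamma_j f_j$ with $\gamma_j=\ank{f,Cf_j}_{L^2}$, so that $\varphi(f)=\sum_j \gamma_j \varphi(f_j)$ in $L^2(\mu_C)$. Splitting the sum at $j=n$ and using independence,
\begin{align}
\mb{E}_C\bigl[\me^{\ii \varphi(f)-\frac{1}{2}\ank{\varphi,V\varphi}_{L^2}}\bigr]
&=\Bigl(\prod_{j=1}^n \int_{\mb{R}} \me^{\ii \gamma_j x-\frac{1}{2}(1+\lambda_j)x^2}\tfrac{\dd x}{\sqrt{2\pi}}\Bigr)\cdot \mb{E}_C\bigl[\me^{\ii \sum_{j>n}\gamma_j \varphi(f_j)}\bigr]\\
&=\Bigl[\prod_{j=1}^n (1+\lambda_j)^{-1/2}\me^{-\frac{\gamma_j^2}{2(1+\lambda_j)}}\Bigr]\me^{-\frac{1}{2}\sum_{j>n}\gamma_j^2}.
\end{align}
Dividing by $Z$ and identifying $(1+\lambda_j)^{-1}$ with $1$ for $j>n$ yields $\exp\!\bigl(-\frac{1}{2}\sum_{j\ge 1}\gamma_j^2/(1+\lambda_j)\bigr)$. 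To finish, it remains to recognize this as $\exp(-\frac{1}{2}\ank{f,(C^{-1}+V)^{-1}f}_{L^2})$; this follows from the identity $(C^{-1}+V)^{-1}=C^{1/2}(\one+\wh{V})^{-1}C^{1/2}$, the spectral decomposition $\wh{V}C^{1/2}f_j=\lambda_j C^{1/2}f_j$, and the fact that $\{C^{1/2}f_j\}$ is orthonormal in $L^2(\Sigma)$. The only mild subtlety is bookkeeping the correct inner products when passing between the Gaussian Hilbert space $W^{-s}$ with $\ank{\cdot,C\cdot}_{L^2}$, the Cameron–Martin space $W^s$ with $\ank{\cdot,C^{-1}\cdot}_{L^2}$, and plain $L^2$; everything else is elementary thanks to the finite rank assumption.
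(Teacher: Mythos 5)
Your proof is correct. Part (ii) is essentially the paper's own argument: project onto the finitely many modes $\varphi(f_j)$ and evaluate the product of one-dimensional Gaussian integrals, identifying $\prod_j(1+\lambda_j)$ with $\detf(\one+\wh{V})$. For part (iii) you take a slightly different route: where the paper first argues Gaussianity of the Gibbs measure from the orthogonal splitting $W^{-s}(\Sigma)=\spn\{f_j\}\oplus\spn\{f_j\}^{\perp}$ and then verifies the covariance by computing the second moment $Z^{-1}\mb{E}_C[\varphi(f)^2\me^{-\frac{1}{2}\ank{\varphi,V\varphi}}]$ via the block-diagonal form of $(\one+\wh{V})^{-1}$, you instead compute the full characteristic function $Z^{-1}\mb{E}_C[\me^{\ii\varphi(f)-\frac{1}{2}\ank{\varphi,V\varphi}}]$ and conclude by lemma \ref{lemm-gaus-mea-char-func}. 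This is arguably the cleaner packaging, since the characteristic function delivers Gaussianity and the covariance in one stroke (note that from $\wh{\mu}(f)=\me^{-\frac{1}{2}Q(f)}$ one reads off $\phi(f)\sim\mathcal{N}(0,Q(f))$ by scaling $f\mapsto tf$, so the hypothesis of the lemma is indeed met), and it is precisely the object the paper needs when passing to the general $V$ by the cut-off limit $V_N\to V$; the paper's second-moment computation, on the other hand, makes the block structure of $(\one+\wh{V})^{-1}$ and the role of the orthogonal complement more explicit. Your closing identification $\sum_j\gamma_j^2/(1+\lambda_j)=\ank{f,(C^{-1}+V)^{-1}f}_{L^2}$ via $(C^{-1}+V)^{-1}=C^{1/2}(\one+\wh{V})^{-1}C^{1/2}$ and the $L^2$-orthonormality of $\{C^{1/2}f_j\}$ is exactly the right bookkeeping, and the tail factor $\me^{-\frac{1}{2}\sum_{j>n}\gamma_j^2}$ is justified since $\sum_j\gamma_j^2=\ank{f,Cf}_{L^2}<\infty$.
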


\begin{proof}
  In this case the series in (\ref{eqn-quadratic-expression}) is finite with~$\{f_j\}_{j=1}^N$ for some~$N\in\mb{N}$. Thus
  \begin{align*}
    Z&=\frac{1}{(2\pi)^{N/2}}\int_{\mb{R}^N}^{}\me^{-\frac{1}{2}\sum_{j} \lambda_j x_j^2} \me^{-\frac{1}{2}\sum_j x_j^2}\dd^N x \\
    &=\prod_j (1+\lambda_j)^{-\frac{1}{2}}=\det(\one +\wh{V})^{-\frac{1}{2}},
  \end{align*}
  by projecting onto~$\mb{R}^N$ via~$\varphi\mapsto (\varphi(f_1),\dots,\varphi(f_N))=:(x_1,\dots,x_N)$. 
   For the covariance, we orthogonally decompose~$W^{-s}(\Sigma)$ as
  \begin{equation}
    W^{-s}(\Sigma)=\spn\{f_j~|~ 1\le j\le N\}\oplus \spn\{f_j~|~ 1\le j\le N\}^{\perp},
    \label{eqn-quadratic-orth-gauss}
  \end{equation}
  let~$\Pi_0$ and~$\Pi_1$ be the corresponding orthogonal projections (in order), and for any~$f\in C^{\infty}(\Sigma)$ write
  \begin{equation}
    \varphi(f)=\alpha_1\varphi(f_1)+\cdots +\alpha_N\varphi(f_N)+\varphi(\Pi_1 f),
  \end{equation}
  then~$\varphi(\Pi_1 f)$ is independent of both~$\varphi(f_j)$, $1\le j\le N$, and~$\ank{\varphi,V\varphi}_{L^2}$. It is now clear that (\ref{eqn-quadratic-gibbs-meas}) is Gaussian because we could now express~$\mu(\{\phi(f)\in A\})$ for any Borel set $A\subset \mb{R}$ as a Gaussian integral over~$\mb{R}^{N+1}$. Apply $C^{\frac{1}{2}}$ to (\ref{eqn-quadratic-orth-gauss}) we get the~$L^2$-orthogonal decomposition
  \begin{equation}
    L^2(\Sigma)=\underbrace{C^{\frac{1}{2}}(\spn\{f_j~|~ 1\le j\le N\})}_{=:L^2(\Sigma)_0}\oplus \underbrace{C^{\frac{1}{2}}(\spn\{f_j~|~ 1\le j\le N\}^{\perp})}_{=:L^2(\Sigma)_1}.
    \label{}
  \end{equation}
  Clearly~$\wh{V}$ leaves this decomposition invariant and is zero on~$L^2(\Sigma)_1$. Hence~$(\one+\wh{V})^{-1}$ is block-diagonal,
  \begin{equation}
    (\one+\wh{V})^{-1}=
    \def\arraystretch{1.3}
    \begin{blockarray}{ccl}
      \{C^{\frac{1}{2}}f_j\}_1^N & L^2(\Sigma)_1& \\
      \begin{block}{(cc)c}
	\diag_{j=1}^N\{(\lambda_j+1)^{-1}\}&0& \{C^{\frac{1}{2}}f_j\}_1^N \\
	0&\one &L^2(\Sigma)_1\\ \end{block}\end{blockarray}
    \label{}
  \end{equation}
   Thus
  \begin{align*}
    Z^{-1}\mb{E}_C[\varphi(f)^2 \me^{-\frac{1}{2}\ank{\varphi,V \varphi}}]&=Z^{-1}\mb{E}_C[\varphi(\Pi_0 f)^2 \me^{-\frac{1}{2}\sum_j \lambda_j \varphi(f_j)^2}] +Z^{-1}\mb{E}_C[\varphi(\Pi_1 f)^2]\mb{E}_C[\me^{-\frac{1}{2}\ank{\varphi,V\varphi}}] \\
    &=\sum_j \alpha_j^2(\lambda_j+1)^{-1}+\bank{\Pi_1 f,C \Pi_1 f}_{L^2} \\
    &=\bank{C^{\frac{1}{2}}\Pi_0 f, (\one +\wh{V})^{-1}C^{\frac{1}{2}}\Pi_0 f}_{L^2}+\bank{\Pi_1 f,C^{\frac{1}{2}}(\one +\wh{V})^{-1}C^{\frac{1}{2}} \Pi_1 f}_{L^2} \\
        &=\ank{f,(C^{-1}+V)^{-1}f}_{L^2},
  \end{align*}
  where we perform again a Gaussian integral on~$\mb{R}^N$ in the second line.
\end{proof}

\begin{proof}[Proof of proposition \ref{prop-quadratic-perturbation} (ii) and (iii).]
For general~$V$ we impose spectral cut-off at~$N$,
\begin{equation}
  V_N(\varphi)\defeq \sum_{j=1}^N \lambda_j \ank{\varphi,f_j}_{L^2}f_j.
  \label{eqn-quadratic-V-spectral-cutoff}
\end{equation}
Then $C^{\frac{1}{2}}V_N C^{\frac{1}{2}}=:\wh{V}_N \to \wh{V}$ under the trace norm~$\nrm{\cdot}_{\ttr}$ (acting on~$L^2(\Sigma)$). Hence~$\ank{\varphi,V_N \varphi}_{L^2}\to \ank{\varphi,V\varphi}_{L^2}$ in~$L^1(\mu_C)$ by lemma \ref{lemm-quadratic-pert-trace} and after passing to a subsequence
\begin{equation}
  \me^{-\frac{1}{2}\ank{\varphi,V_N \varphi}_{L^2}}\to \me^{-\frac{1}{2}\ank{\varphi,V\varphi}_{L^2}}
  \label{}
\end{equation}
in~$L^1(\mu_C)$ and (ii) for $V$ follows, since $\detf(\one+\wh{V}_N)\to \detf(\one+\wh{V})$ by lemma \ref{lemm-det-fred-cont}.
  To prove (iii), we note that~$\wh{V_N}\to \wh{V}$ a fortiori under the operator norm, then~$C^{\frac{1}{2}}(\one+\wh{V}_N)^{-1}C^{\frac{1}{2}}\to C^{\frac{1}{2}}(\one +\wh{V})^{-1}C^{\frac{1}{2}}$ in norm and
  \begin{align*}
    \frac{\mb{E}_C[\me^{\ii\varphi(f)}\me^{-\frac{1}{2}\ank{\varphi,V\varphi}_{L^2}}]}{\mb{E}_C[\me^{-\frac{1}{2}\ank{\varphi,V\varphi}_{L^2}}]}&=\lim_{N\to\infty} \frac{\mb{E}_C[\me^{\ii\varphi(f)}\me^{-\frac{1}{2}\ank{\varphi,V_N\varphi}_{L^2}}]}{\mb{E}_C[\me^{-\frac{1}{2}\ank{\varphi,V_N\varphi}_{L^2}}]}\\
    &=\lim_{N\to\infty} \exp\Big( -\frac{1}{2}\ank{f,(C^{-1}+V_N)^{-1}f}_{L^2} \Big) \\
    &=\exp\Big( -\frac{1}{2}\ank{f,(C^{-1}+V)^{-1}f}_{L^2} \Big),
  \end{align*}
  for~$f\in C^{\infty}(\Sigma)$, showing that (\ref{eqn-quadratic-gibbs-meas}) is Gaussian with the right covariance.
\end{proof}

\begin{proof}[Proof of corollary \ref{corr-rad-niko-dense}.]
Remember now that~$\dim \Sigma=1$.
  Write for short~$\mn{D}:=2\mn{D}_{\Sigma}$ and~$\DN:=\DN_M^{\Sigma}$. Setting~$C=\mn{D}^{-1}$ and~$V=\DN-\mn{D}$ in proposition \ref{prop-quadratic-perturbation}, we are left to prove the determinant identity
  \begin{equation}
    \detz(\mn{D})\detf(\one+\mn{D}^{-\frac{1}{2}}(\DN-\mn{D})\mn{D}^{-\frac{1}{2}})=\detz(\DN).
    \label{}
  \end{equation}
  Indeed, this is now immediate as
  \begin{equation}
    \textrm{LHS}=\detz(\mn{D})\detf(\one+\mn{D}^{-1}(\DN-\mn{D}))=\textrm{RHS}
    \label{}
  \end{equation}
  by lemma \ref{lemm-det-fred-cont} (ii) and lemma \ref{lemm-det-factor}. We point out~$\mn{D}\mn{D}^{-1}(\DN-\mn{D})=V$ can be approximated in~$\nrm{\cdot}_{\mm{tr}}$ by smoothing operators (by lemma \ref{lemm-dn-prop} (iv),~$V=\DN-\mn{D}$ is~$L^2$-trace class), so the conditions of lemma \ref{lemm-det-factor} is satisfied. Indeed, as above,~$\{Cf_j\}\subset W^1(\Sigma)$ are $W^1$-complete eigenfunctions of~$V$. If the corresponding eigenvalue~$\lambda_j\ne 0$, then the bootstrap argument shows~$Cf_j\in C^{\infty}(\Sigma)$ since~$V\in \Psi^{<0}(\Sigma)$. Thus~$V_N$, where~$V_N$ is as in (\ref{eqn-quadratic-V-spectral-cutoff}), approximates~$V$ in~$\nrm{\cdot}_{\mm{tr}}$ and is smoothing. This concludes the proof.
\end{proof}

\begin{def7}
  Though it is probably true, we do not claim~$\DN-\mn{D}$ is elliptic.
\end{def7}

  \section{Variants of Nelson's Argument}\label{sec-nelson-main}

 \noindent The goal of this section is to define part~$A$ of (\ref{eqn-def-mes-gibbs-heu}) which culminates in \hyperref[thrm-nelson]{Nelson's theorem}. The principal obstacle in achieving this is the fact that powers of a distribution such as~$\phi^2$,~$\phi^4$ etc., are generally not defined. Even as a random variable under~$\mu_{\mm{GFF}}^M$, we have (formally)~$\mb{E}_{\mm{GFF}}^M[\phi(x)\phi(y)]\heueq G_{(\Delta+m^2)}(x,y)$ from (\ref{eqn-gff-cov-closed}) for~$x\ne y$ but this implies~$\mb{E}_{\mm{GFF}}^M[\phi(x)\phi(x)]= G_{(\Delta+m^2)}(x,x)=\infty$. This necessitates a procedure of \textsf{renormalization} which \textit{subtracts away}~$\infty$ and makes~$\mb{E}_{\mm{GFF}}^M[\phi(x)\phi(x)]<\infty$.

Here the natural renormalization strategy is provided by the Gaussian probability theory. Let~$\{K_{\varepsilon}~|~\varepsilon>0\}$ be a family of \textit{smoothing} operators\footnote{that is, each $K_{\varepsilon}$ maps $\mathcal{D}'(M)\lto C^{\infty}(M)$.} on~$M$, for which~$K_{\varepsilon}\to \one$ as~$\varepsilon\to 0$ (in a sense to be specified later), and consider the mollified random field~$\phi_{\varepsilon}:=K_{\varepsilon}\phi$. Instead of~$\phi_{\varepsilon}(x)^4$, say, we look at
\begin{align*}
  {:}\phi_{\varepsilon}(x)^4{:}&\defeq \phi_{\varepsilon}(x)^4-6\mb{E}[\phi_{\varepsilon}(x)^2] \phi_{\varepsilon}(x)^2+ 3\mb{E}[\phi_{\varepsilon}(x)^2]^2 \\
  &\defeq \phi_{\varepsilon}(x)^4-6C_{\varepsilon}(x) \phi_{\varepsilon}(x)^2+ 3C_{\varepsilon}(x)^2.
\end{align*}
It happens that for any~$\chi\in C^{\infty}(M)$, the integral~$\int_{M}^{}\chi(x){:}\phi_{\varepsilon}(x)^4{:}\dd V_M(x)$ converges as a random variable in~$L^2(\mu_{\mm{GFF}}^M)$ to a definitive limit, and defines~$\int_{M}^{}\chi(x){:}\phi(x)^4{:}\dd V_M(x)$ as a random variable in~$L^2(\mu_{\mm{GFF}}^M)$. Note that~$C_{\varepsilon}(x)\to \infty$ as~$\varepsilon\to 0$, so we have subtracted ``infinities''.

\begin{def7}\label{rem-locality-crucial}
\textbf{The crucial point} in our adaptation of Nelson's argument is to realize the \textit{locality} of the interaction~$\int_{M}^{}\chi(x){:}P(\phi(x)){:}\dd V_M(x)$ (see section \ref{sec-locality}). To this end we must allow a sufficiently large class of \textit{regulators}~$K_{\varepsilon}$ (in particular, \textit{local} ones) and show that they define the same interaction (proposition \ref{prop-nelson-main}). In addition, the \textit{Wick ordering} also needs to be local (see section \ref{sec-change-wick}), so that the interaction on a domain with boundary could be defined without reference to the ambient closed manifold where this domain ``caps''. See Brunetti, Fredenhagen, Verch \cite{BFV} and Guo, Paycha, Zhang \cite{GPZ} for more information and perspective on locality.
\end{def7}

\begin{def7}
The method adopted here is restricted to dimension two. In three dimensions, the target measure bearing the heuristic form (\ref{eqn-def-mes-gibbs-heu}) becomes mutually singular with respect to~$\mu_{\mm{GFF}}^M$ and hence cannot be expressed as an integrable function multiplied by~$\mu_{\mm{GFF}}^M$. A recent phenomenal method to treat this case is developed in the framework of stochastic PDEs, called \textsf{stochastic quantization}, providing an alternative to older results outlined in \cite{GJ} section 23.1. See the introductions in \cite{GH}, \cite{HS}, \cite{AK}, \cite{MWX} and \cite{BDFT2} for reviews of the literature and pedagogical discussions.
\end{def7}

\subsection{Regularizations}

\noindent In this subsection we describe an admissible class of regulators~$K_{\varepsilon}$ which would eventually produce the same random variable~$\int_{M}^{}\chi(x){:}\phi(x)^4{:}\dd V_M(x)$ as will be proved in the next subsection. Basically, they are smoothing operators such that~$K_{\varepsilon}\to\one$ in~$\Psi^{\delta}(M)$ in the \textit{symbol sense} for any~$\delta>0$ (see definition below). A compact notation is to say~$K_{\varepsilon}\to \one $ in~$\Psi^{0+}(M)$.

\begin{deef}
  Let~$r\in \mb{R}$ we say that operators~$K_{\varepsilon}\to K$ in~$\Psi^r(M)$ \textsf{in the symbol sense} if for any coordinate chart~$\kappa: U\lto \mb{R}^d$ and cut-off~$\chi\in C_c^{\infty}(U)$, the full symbol of~$\chi K_{\varepsilon}\chi$ (considered acting on~$C_c^{\infty}(\kappa(U))$), converges to that of~$\chi K\chi$ in the~$\mathcal{S}^r_{1,0}(\kappa(U)\times \mb{R}^d)$ topology.
\end{deef}

Now we describe the first candidate for~$K_{\varepsilon}$ satisfying the above assumption (the proof is in appendix \ref{app-symbol}). This was introduced in Dyatlov and Zworski \cite{DZ} and has the advantage of being \textsf{local}, realizing the locality of the~$P(\phi)$ interaction eventually in section \ref{sec-locality}. Consider~$\psi\in C_c^{\infty}((-1,1))$ with~$0\le \psi\le 1$ and equal to~$1$ near~$0$. For~$\varepsilon>0$ we define the operator
\begin{equation}
  E_{\varepsilon}u(x)\defeq \int_{M}^{} E_{\varepsilon}(x,y)u(y)\dd V_g(y),\quad \textrm{with }E_{\varepsilon}(x,y)=\frac{1}{F_{\varepsilon}(x)}\psi\left( \frac{d_g(x,y)}{\varepsilon} \right).
  \label{eqn-dyatlov-zwors-reg}
\end{equation}
Here~$F_{\varepsilon}(x)=\int_{}^{}\psi(d(x,y)/\varepsilon) \dd y$ so that~$\int_{}^{}E_{\varepsilon}(x,y)\dd y=1$, and~$d_g$ denotes the Riemannian distance. One observes that~$E_{\varepsilon}(x,y)$ is smooth for each~$\varepsilon>0$ so~$E_{\varepsilon}:\mathcal{D}'(M)\lto C^{\infty}(M)$. Observe also that~$\varepsilon^d/C\le F_{\varepsilon}(x)\le C\varepsilon^d$ for some~$C>0$, and this~$C$ could be made dependent neither on~$\varepsilon$ nor on~$x$ as~$M$ is compact.

\begin{lemm}\label{lemm-dyat-zwor-symbol-conv}
  For any~$\delta>0$ we have~$E_{\varepsilon}\to \one$ in~$\Psi^{\delta}(M)$ in the symbol sense.
\end{lemm}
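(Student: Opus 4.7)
The plan is to work in a local chart, extract the full symbol of $\chi E_\varepsilon \chi$ via its integral kernel, and compare it against the symbol of the multiplication operator $\chi^2$. Fix $\kappa:U\lto\mb{R}^d$ and $\chi\in C_c^\infty(U)$, and let
\begin{equation}
a_\varepsilon(x,\xi)=\chi(x)\int_{\mb{R}^d} e^{-i\xi\cdot(x-y)}E_\varepsilon(x,y)\chi(y)\sqrt{\det g(y)}\,dy.
\end{equation}
The first step is the change of variable $y=x-\varepsilon z$, which turns this into
\begin{equation}
a_\varepsilon(x,\xi)=\chi(x)\int e^{-i(\varepsilon\xi)\cdot z}\,\frac{\psi(\rho_\varepsilon(x,z))}{F_\varepsilon(x)}\,\chi(x-\varepsilon z)\sqrt{\det g(x-\varepsilon z)}\,\varepsilon^d\,dz,
\end{equation}
where $\rho_\varepsilon(x,z):= d_g(x,x-\varepsilon z)/\varepsilon$. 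This exhibits $a_\varepsilon$ as a Fourier transform in the rescaled variable $\varepsilon\xi$, which is the right variable to exploit.

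Next, I would Taylor-expand in $\varepsilon$ everything that depends on it. Since $d_g(x,x-\varepsilon z)^2=\varepsilon^2 g_{ij}(x)z^iz^j+O(\varepsilon^3|z|^3)$, one has $\rho_\varepsilon(x,z)=|z|_{g(x)}+\varepsilon\rho^{(1)}(x,z)+\cdots$ smoothly in $(x,z,\varepsilon)$, uniformly on the compact set where $z$ is constrained by $\psi$. A key observation is that $\psi$ equals $1$ near $0$, so $z\mapsto \psi(\rho_\varepsilon(x,z))$ is genuinely smooth and compactly supported in $z$ (no square-root singularity at $z=0$). One then computes $F_\varepsilon(x)=\varepsilon^d\hat\Psi_x(0)+O(\varepsilon^{d+1})$ with $\hat\Psi_x(\eta):=\int e^{-i\eta\cdot z}\psi(|z|_{g(x)})\sqrt{\det g(x)}\,dz$, which lies in $\mathcal{S}(\mb{R}^d)$ with Schwartz seminorms bounded uniformly for $x$ in a compact set. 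Combining these expansions, $a_\varepsilon$ decomposes as
\begin{equation}
a_\varepsilon(x,\xi)=\chi(x)^2\,\frac{\hat\Psi_x(\varepsilon\xi)}{\hat\Psi_x(0)}+\varepsilon\cdot r_\varepsilon(x,\xi),
\end{equation}
where $r_\varepsilon$ has the same qualitative structure (a Schwartz-in-$\varepsilon\xi$ function of $x$, compactly supported in $x$).

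The core estimate is then for $b_\varepsilon(x,\xi):=\chi(x)^2[\hat\Psi_x(\varepsilon\xi)/\hat\Psi_x(0)-1]$. For $|\beta|=0$, I would use the elementary bound $|e^{-i\eta\cdot z}-1|\le 2^{1-\delta}|\eta\cdot z|^\delta$ valid for all $\delta\in[0,1]$ to get $|b_\varepsilon(x,\xi)|\le C\varepsilon^\delta\langle\xi\rangle^\delta$ uniformly in $x$. For $|\beta|\ge 1$, I would use $\partial_\xi^\beta \hat\Psi_x(\varepsilon\xi)=\varepsilon^{|\beta|}(\partial^\beta\hat\Psi_x)(\varepsilon\xi)$ combined with the Schwartz decay $|\partial^\beta\hat\Psi_x(\eta)|\le C_N\langle\eta\rangle^{-N}$, and split into the regimes $\varepsilon|\xi|\le 1$ and $\varepsilon|\xi|\ge 1$: in the first regime the prefactor $\varepsilon^{|\beta|}$ alone gives the decay, while in the second the rapid decay in $\varepsilon\xi$ absorbs the polynomial growth $\langle\xi\rangle^{|\beta|-\delta}$ at a cost of a factor $\varepsilon^\delta$. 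In both cases
\begin{equation}
\langle\xi\rangle^{|\beta|-\delta}|\partial_\xi^\beta b_\varepsilon(x,\xi)|\le C_{\beta,\delta}\,\varepsilon^{\min(|\beta|,\delta)}\lto 0.
\end{equation}
Derivatives in $x$ produce analogous integrals (with $\psi(|z|_{g(x)})$ replaced by $x$-derivatives thereof, still smooth and compactly supported in $z$), and the same estimates apply. The remainder $\varepsilon r_\varepsilon$ inherits the same estimates with an additional factor of $\varepsilon$.

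The main technical obstacle, rather than any single step, will be keeping the bookkeeping clean: controlling the joint smooth dependence of $\rho_\varepsilon,\sqrt{\det g},F_\varepsilon$ on $(x,z,\varepsilon)$ uniformly on compacts, and verifying that the Schwartz seminorms of $\hat\Psi_x$ and its $x$-derivatives are locally uniform in $x$. Once this infrastructure is in place, the $\mathcal{S}^\delta_{1,0}$-convergence for every $\delta>0$ follows directly from the two-regime split above, since the threshold $\delta=0$ is inaccessible (the pointwise values $E_\varepsilon(x,x)\sim\varepsilon^{-d}$ blow up) but any positive $\delta$ suffices to absorb the loss.
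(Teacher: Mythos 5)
Your argument is correct and follows essentially the same route as the paper's proof in appendix \ref{app-symbol}: rescale so that the symbol of $\chi E_{\varepsilon}\chi$ becomes, up to a normalizing factor, a fixed Schwartz function evaluated at $\varepsilon\xi$ whose value at $\varepsilon\xi=0$ is the limiting symbol, and then estimate each $\mathcal{S}^{\delta}_{1,0}$ seminorm by splitting according to the size of $\varepsilon|\xi|$, with $\xi$-derivatives contributing factors $\varepsilon^{|\beta|}$. The only difference is cosmetic: the paper exploits the scaling identity $\sigma_{\chi E_{\varepsilon}\chi}(x,\xi)=\tfrac{\varepsilon^d F_1(x)}{F_{\varepsilon}(x)}\sigma_{\chi E_1\chi}(x,\varepsilon\xi)$ and splits at $|\xi|=\varepsilon^{-1/2}$, whereas you Taylor-expand the metric-dependent quantities (distance function, volume density, $F_{\varepsilon}$) and split at $\varepsilon|\xi|=1$, which amounts to tracking the same remainder a bit more explicitly.
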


\begin{proof}
    See appendix \ref{app-symbol}.
\end{proof}

Note that~$K^*_{\varepsilon}(x,y)=K_{\varepsilon}(y,x)$ for real smoothing operators (and their symbols are related in a simple manner),~$K_{\varepsilon}\to\one$ in the symbol sense is equivalent to~$K_{\varepsilon}^*\to \one$ in the symbol sense.

\begin{lemm}
  \label{lemm-conv-symbol-top}
  Let~$\{E'_{\varepsilon'}\}_{\varepsilon'>0}$ be another family of smoothing operators such that~$E'_{\varepsilon'}\to \one$ in~$\Psi^{\delta'}(M)$ in the symbol sense for any~$\delta'>0$. Then the net~$E_{\varepsilon}^*(\Delta+m^2)^{-1}(E'_{\varepsilon'}-E_{\varepsilon})$,~$(\varepsilon',\varepsilon)\in\mb{R}_+\times\mb{R}_+$ (we say~$(\varepsilon',\varepsilon)\prec (\varepsilon'_1,\varepsilon_1)$ iff~$\varepsilon'>\varepsilon'_1$ and~$\varepsilon>\varepsilon_1$), converges to zero in~$\Psi^{-2+\delta}(M)$ in the symbol sense for any~$\delta>0$.
\end{lemm}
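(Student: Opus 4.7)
The plan is to reduce the lemma to the joint continuity of $\Psi$DO composition in the symbol topology. Specifically, I would set $A_{\varepsilon',\varepsilon} := E_\varepsilon^*(\Delta+m^2)^{-1}(E'_{\varepsilon'}-E_\varepsilon)$ and rewrite it as the triple product
\[
A_{\varepsilon',\varepsilon} \;=\; E_\varepsilon^*\,\cdot\, (\Delta+m^2)^{-1}\,\cdot\,\bigl((E'_{\varepsilon'}-\one)-(E_\varepsilon-\one)\bigr),
\]
so that the middle factor is a fixed classical elliptic $\Psi$DO in $\Psi^{-2}(M)$, the left factor converges to $\one$ in $\Psi^{\delta/2}(M)$ in the symbol sense (this uses the hypothesis on $E_\varepsilon$ together with the standard formula expressing the full symbol of $K^*$ continuously in terms of that of $K$), and the rightmost factor is a difference of two terms each tending to zero in $\Psi^{\delta/2}(M)$ in the symbol sense.

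Next I would invoke the general fact that, on the compact manifold $M$, composition of $\Psi$DOs is jointly continuous in the Fr\'echet symbol topology: if $A_n\to A$ in $\Psi^r(M)$ and $B_n\to B$ in $\Psi^s(M)$, both in the symbol sense, then $A_nB_n\to AB$ in $\Psi^{r+s}(M)$ in the symbol sense. Using a finite partition of unity subordinate to coordinate charts, this reduces to the local statement on $\mathbb{R}^d$, where the composition symbol is given by the oscillatory-integral formula
\[
(a\# b)(x,\xi) \;\sim\; \sum_{\alpha}\frac{(-\ii)^{|\alpha|}}{\alpha!}\,\partial_\xi^\alpha a\cdot \partial_x^\alpha b,
\]
which, together with its standard remainder estimates, depends continuously on $(a,b)\in \mathcal{S}^r_{1,0}\times\mathcal{S}^s_{1,0}$. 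The off-diagonal contributions arising from the partition of unity are smoothing by pseudolocality of all factors and hence tend to zero in every $\Psi^r$ topology automatically.

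Applying this composition continuity twice to the three factors of orders $\delta/2$, $-2$, and $\delta/2$ respectively will then yield $A_{\varepsilon',\varepsilon}\to \one\cdot(\Delta+m^2)^{-1}\cdot 0=0$ in $\Psi^{-2+\delta}(M)$ in the symbol sense, as desired. The main obstacle will be verifying the joint-continuity statement along a \emph{product net} $(\varepsilon',\varepsilon)$ rather than a single sequence: concretely, one must check that each seminorm of the full symbol of $A_{\varepsilon',\varepsilon}$ is dominated by a finite sum of products of seminorms of the three factors, with at least one factor—applied to $E'_{\varepsilon'}-E_\varepsilon$—tending to zero while the others remain bounded uniformly in $(\varepsilon',\varepsilon)$. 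Given the explicit $\Psi$DO composition formula above this is routine bookkeeping, and it is the only point that requires genuine care in the argument.
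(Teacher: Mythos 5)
Your proposal is correct and follows essentially the same route as the paper: bound the symbol seminorms of $E_\varepsilon^*$ uniformly (using that adjoints of real smoothing operators converge in the symbol sense iff the operators do), observe that $E'_{\varepsilon'}-E_\varepsilon\to 0$ in $\Psi^{\delta/2}(M)$, and conclude via the continuity of the composition (twisted product) of symbols as a map $\mathcal{S}^{r}_{1,0}\times\mathcal{S}^{r'}_{1,0}\to\mathcal{S}^{r+r'}_{1,0}$. The only difference is that the paper simply cites this continuity (Folland, Theorem 2.47) rather than re-deriving it from the partition of unity and the asymptotic composition formula as you sketch, and the ``routine bookkeeping'' along the product net that you flag is exactly the uniform-boundedness-plus-one-factor-to-zero argument the paper invokes.
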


\begin{proof}
  Note that following essentially the same arguments as above the~$\Psi^{\delta/2}(M)$ seminorms of~$E_{\varepsilon}$ can be bounded uniformly in~$\varepsilon$. This said, the result follows essentially from the continuity of the twisted product (composition product) of symbols as a map~$\mathcal{S}^r_{1,0}\times \mathcal{S}^{r'}_{1,0}\lto \mathcal{S}^{r+r'}_{1,0}$ with respect to the symbol topologies (see Folland \cite{Folland2} page 105 theorem 2.47).
\end{proof}

We shall consider another set of seminorms on~$\Psi^r(M)$ in the case~$-d<r<0$ which suits better our purposes. They are defined as follows. Let~$\mathcal{M}\subset C^{\infty}(M\times M,T(M\times M))$ denote the~$C^{\infty}(M\times M)$-module of smooth vector fields tangent to the diagonal in~$M\times M$. We fix a finite coordinate cover~$\{U_i\}_{i\in 1}^N$ of~$M$, with charts~$\kappa_i:U_i\lto \mb{R}^d$, and a partition of unity~$\{\chi_i\}$ subordinate to this cover.

\begin{deef}\label{def-kernel-diag-top}
  For any~$K\in C^{\infty}(M\times M\setminus \mm{diag})$,~$1\le i\le N$ and~$L_1$, \dots,~$L_p\in \mathcal{M}$, we define the seminorms
  \begin{equation}
    p_{i,L_1,\dots,L_p}(K)\defeq \sup_{(x,y)\in U_i\times U_i} \big|(\kappa_i\times \kappa_i)_*\left( (\chi_i\otimes \chi_i)L_1\cdots L_p K \right)(x,y)\big|\cdot d_g(x,y)^{d+r},
    \label{}
  \end{equation}
  while on~$U_i\times U_j$,~$i\ne j$, which does not touch the diagonal, we use the~$C^{\infty}(U_i\times U_j)$ seminorms. By the \textsf{kernel topology} on~$\Psi^r(M)$,~$-d<r<0$, we mean the topology induced by these seminorms on the Schwartz kernels~$K_A$ of~$A\in \Psi^r(M)$. Here~$d_g$ is the distance function.
\end{deef}

\begin{prop}\label{prop-kernel-top-equiv-symbol}
  In the case~$-d<r<0$, the above kernel topology is equivalent to the topology induced by symbols~$\mathcal{S}^r_{1,0}(T^*M)$ on~$\Psi^r(M)$. In particular, if~$A_{\varepsilon}\to A$ as~$\varepsilon\to 0$ in~$\Psi^r(M)$ in the symbol sense then~$A_{\varepsilon}\to A$ also in the above kernel topology.
\end{prop}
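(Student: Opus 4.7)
The plan is to prove both directions of the equivalence by passing between the oscillatory-integral representation of the kernel and the symbol formula, working locally via the cover $\{U_i\}$ and partition of unity $\{\chi_i\}$. Off-diagonal behavior on $U_i\times U_j$ ($i\ne j$) is controlled by the $C^\infty$ seminorms shared by both topologies, so the real content lies on the diagonal, for $\chi_i A\chi_i$ in a single chart.

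For the symbol-to-kernel direction, I would write $K_A(x,y)=(2\pi)^{-d}\int e^{i(x-y)\xi}a(x,\xi)\,d\xi$ in the chart. The generators of $\mathcal{M}$ act in a symbol-order-preserving way on such oscillatory integrals: $(\partial_{x^j}+\partial_{y^j})$ corresponds to replacing $a$ by $\partial_{x^j}a$ (the $\pm i\xi^j$ from differentiating the exponential cancel), preserving order $r$; multiplication by $(x^j-y^j)$ corresponds, via integration by parts in $\xi$, to replacing $a$ by $i\partial_{\xi^j}a$, of order $r-1$. Iterating, $L_1\cdots L_p K_A$ is the kernel of a classical pseudodifferential operator of order $\le r$, whose symbol depends linearly on finitely many $(\partial_x,\partial_\xi)$-derivatives of $a$ and is thus controlled by finitely many symbol seminorms. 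The standard pointwise estimate $|K_B(x,y)|\lesssim d_g(x,y)^{-d-r}$ valid for $b\in\mathcal{S}^r_{1,0}$ with $-d<r<0$ then gives the kernel seminorm bound, and continuity of the identity from symbol to kernel topology follows.

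For the kernel-to-symbol direction, recover the symbol locally by $a(x,\xi)=\int e^{-iz\xi}\tilde\chi(z)K_A(x,x-z)\,dz$ plus a smoothing correction controlled by the off-diagonal $C^\infty$ seminorms. To establish $|\partial_x^\alpha\partial_\xi^\beta a(x,\xi)|\lesssim \langle\xi\rangle^{r-|\beta|}$, split the $z$-integral at $|z|\sim \langle\xi\rangle^{-1}$: near zero insert the kernel bound $|K_A(x,x-z)|\lesssim |z|^{-d-r}$ (given by the unweighted kernel seminorm) and integrate $|z|^{|\beta|-d-r}$ over the ball to obtain precisely the right $\langle\xi\rangle^{r-|\beta|}$ order, using $|\beta|-r>0$. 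For the outer region, integrate by parts with $(i\partial_z)^\gamma e^{-iz\xi}=\xi^\gamma e^{-iz\xi}$. The crucial observation that keeps the argument inside the diagonal-tangent calculus is that, writing $F(x,z):=K_A(x,x-z)$, we have $\partial_{x^j}F=[(\partial_{x^j}+\partial_{y^j})K_A](x,x-z)$ (diagonal-tangent), while $\partial_{z^j}F=-(\partial_{y^j}K_A)(x,x-z)$ is \emph{not} tangent and must be paired with a factor $z^j=(x^j-y^j)$ taken from the prefactor $z^\beta$ to form the diagonal-tangent vector field $(x^j-y^j)\partial_{y^j}\in\mathcal{M}$. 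Leibniz expansion after repeated integration by parts produces exactly such products of tangent vector fields applied to $K_A$, each one controllable by finitely many kernel seminorms.

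The main obstacle lies in the bookkeeping for the second direction: one must verify that throughout the Leibniz expansion of $(\partial_z)^\gamma[z^\beta\partial_x^\alpha F]$ there are always enough $z$-factors to pair with every $\partial_y$-derivative produced, and then track the resulting weights in $|z|$ so that the $z$-integral converges with the right $\langle\xi\rangle$-power. A cleaner alternative is a dyadic Littlewood--Paley-type decomposition of $K_A$ in the frequency variable dual to $z$: each dyadic block is smoothing and localized at a single frequency scale, where the symbol estimates follow transparently from the kernel seminorms at that scale, after which summation over dyadic blocks reassembles the full symbol estimates and bypasses the direct combinatorial matching.
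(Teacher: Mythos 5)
Your outline is correct, but note that the paper itself does not prove this proposition: it simply defers to Taylor \cite{Taylor2} (propositions 2.2, 2.4, 2.7) and to \cite{BDFT}, whose content is precisely the symbol--kernel correspondence you reconstruct. So what you propose is a self-contained version of the cited argument rather than a different theorem: the forward direction via the oscillatory-integral representation, with $(\partial_{x^j}+\partial_{y^j})$ acting as $\partial_{x^j}$ on the symbol and $(x^j-y^j)$ acting as $i\partial_{\xi_j}$, plus the standard bound $|K_B(x,y)|\lesssim d_g(x,y)^{-d-r}$ for $B\in\Psi^r$, $-d<r<0$, is exactly how those references proceed, and it is the only direction the paper actually uses downstream (in the proof of Proposition \ref{prop-nelson-main}). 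For the converse, the ``bookkeeping obstacle'' you flag is not a real obstacle and needs neither careful Leibniz matching nor a dyadic decomposition: the kernel seminorms of definition \ref{def-kernel-diag-top} already imply the full Calder\'on--Zygmund-type estimates $|\partial_x^\alpha\partial_z^\gamma K(x,x-z)|\lesssim |z|^{-d-r-|\gamma|}$, because at any point $z\neq 0$ one may choose a coordinate with $|z_i|\gtrsim |z|$ and write $\partial_{z_j}=z_i^{-1}(z_i\partial_{z_j})$, the field $z_i\partial_{z_j}$ being diagonal-tangent; an induction on $|\gamma|$ (commutator terms have fewer derivatives) then gives the pointwise bounds, after which your split of the $z$-integral at $|z|\sim\langle\xi\rangle^{-1}$ and routine integration by parts in the outer region yield $|\partial_x^\alpha\partial_\xi^\beta a(x,\xi)|\lesssim\langle\xi\rangle^{r-|\beta|}$ with no cancellation condition, precisely because $-d-r>-d$ makes the kernel absolutely integrable near the diagonal. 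With that simplification, and the remark you already make that off-diagonal contributions (including the cutoff error in the symbol reconstruction) are controlled since $\mathcal{M}$ spans all directions away from the diagonal, your two-sided seminorm comparison gives the stated equivalence of topologies.
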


\begin{proof}
    Essentially in Taylor \cite{Taylor2} page 6, proposition 2.2, page 7, proposition 2.4 and page 10, proposition 2.7. See also Bailleul, Dang, Ferdinand and Tô \cite{BDFT} proposition 6.9 for a more detailed treatment.
\end{proof}

Finally, we observe that the heat operator~$K_{\varepsilon}=\me^{-\varepsilon(\Delta+m^2)}$ is also a valid candidate:

\begin{lemm}
  [\cite{Dang} lemma 4.15] We have~$\me^{-\varepsilon(\Delta+m^2)}\to \one$ in~$\Psi^{\delta}(M)$ in the symbol sense for any~$\delta>0$. \hfill~$\Box$
\end{lemm}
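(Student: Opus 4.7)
The plan is to establish the lemma by working with the full symbol of $\me^{-\varepsilon(\Delta+m^2)}$ in local coordinates via the classical heat parametrix. First I would deal with the off-diagonal contributions: since the heat kernel satisfies Gaussian off-diagonal decay, for cutoffs $\chi_1,\chi_2\in C^{\infty}(M)$ with disjoint supports the product $\chi_1 \me^{-\varepsilon(\Delta+m^2)}\chi_2$ has a smooth kernel whose every $C^k$ seminorm tends to $0$ as $\varepsilon\to 0$, hence it tends to zero in $\Psi^{-\infty}(M)$. A partition-of-unity argument then reduces the problem to showing that for a single chart $\kappa: U\lto \mb{R}^d$ and a cutoff $\chi\in C_c^{\infty}(U)$, the full symbol of $\chi \me^{-\varepsilon(\Delta+m^2)}\chi$ converges to that of $\chi^2\cdot \one$ in $\mathcal{S}^{\delta}_{1,0}$ for any $\delta>0$.

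Next I would write down an asymptotic expansion of the full symbol $a_\varepsilon(x,\xi)$ of $\me^{-\varepsilon(\Delta+m^2)}$ in the chart. The standard construction (via the Cauchy formula $\me^{-\varepsilon(\Delta+m^2)}=\frac{1}{2\pi\ii}\int_\gamma \me^{-\varepsilon z}(z-\Delta-m^2)^{-1}\dd z$, or equivalently by solving the heat equation on symbols) yields an expansion
\begin{equation*}
  a_\varepsilon(x,\xi)\sim \me^{-\varepsilon q(x,\xi)}\sum_{k\ge 0} b_k(x,\xi,\varepsilon),\quad q(x,\xi)\defeq g^{ij}(x)\xi_i\xi_j+m^2,
\end{equation*}
with $b_0\equiv 1$ and each $b_k$ a polynomial in $\xi$ and $\varepsilon$ with controlled degree (typically $b_k$ is a finite sum of monomials $\varepsilon^{\ell}\xi^{\beta}$ with $2\ell\ge |\beta|+k$, gaining one unit of regularization per level). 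The task becomes to show that both the leading term $\me^{-\varepsilon q}-1$ and the lower-order pieces tend to zero in $\mathcal{S}^{\delta}_{1,0}$.

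The heart of the proof is the interpolation estimate for the leading term. From $|\me^{-t}-1|\le t$ and $|\me^{-t}-1|\le 2$ for $t\ge 0$, taking geometric means with exponent $\theta=\delta/2$ gives
\begin{equation*}
  |\me^{-\varepsilon q(x,\xi)}-1|\le C(\varepsilon q(x,\xi))^{\delta/2}\le C'\varepsilon^{\delta/2}(1+|\xi|)^{\delta}.
\end{equation*}
For derivatives $\partial_\xi^\alpha \partial_x^\beta(\me^{-\varepsilon q}-1)$ I would use Fa\`a di Bruno to write it as a polynomial in $\varepsilon$ and $\xi$ (of combined weight) multiplied by $\me^{-\varepsilon q}$, then apply the scaling trick $\varepsilon^k|\xi|^{2k}\me^{-\varepsilon q}\le C_k$ together with one more interpolation to extract a factor of $\varepsilon^{\delta/2}$ while losing only $(1+|\xi|)^{\delta-|\alpha|}$. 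This gives exactly the $\mathcal{S}^{\delta}_{1,0}$ seminorm bound with a prefactor $\to 0$. The lower-order terms $b_k(x,\xi,\varepsilon)\me^{-\varepsilon q}$ for $k\ge 1$ actually live in better classes $\mathcal{S}^{-k+\delta}_{1,0}$ with $\varepsilon$-dependent constants that similarly tend to zero, by the same scaling. A standard Borel summation/remainder argument then controls the tail of the asymptotic expansion uniformly in $\varepsilon\in(0,1]$.

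I expect the main obstacle to be the bookkeeping in step three: one has to manage simultaneously the decay in $\varepsilon$ and the loss of one order in $\xi$ per $\xi$-derivative, while ensuring the constants in the scaling bounds $\varepsilon^k|\xi|^{2k}\me^{-\varepsilon q}\le C_k$ do not accumulate uncontrollably across the multi-index $\alpha$. Once the leading term is handled, the lower-order terms are actually easier because each additional power of $\varepsilon$ in $b_k$ can be spent directly. Nothing in the argument requires cancellation between the terms of the parametrix, only that each is controlled in the correct symbol class with a prefactor vanishing as $\varepsilon\to 0$.
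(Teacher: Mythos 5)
The paper does not prove this lemma at all: it is quoted verbatim from the literature (Dang, Lemma 4.15), and the only symbol-convergence statement proved inside the thesis is the analogous Lemma for the local mollifier $E_{\varepsilon}$, whose proof in the appendix works with the explicit kernel $F_{\varepsilon}(x)^{-1}\psi(d_g(x,y)/\varepsilon)$ and a frequency splitting at $R=\varepsilon^{-1/2}$. So there is no internal proof to match yours against; your parametrix argument is the natural way to supply the missing details for the heat regulator.

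On its merits, your outline is correct and is essentially the standard argument. The off-diagonal reduction via Gaussian decay, the heat parametrix $a_{\varepsilon}\sim \me^{-\varepsilon q}\sum_k b_k$ with $b_0=1$, and the interpolation $|\me^{-t}-1|\le \min(t,1)\le t^{\delta/2}$ giving $|\me^{-\varepsilon q}-1|\lesssim \varepsilon^{\delta/2}(1+|\xi|)^{\delta}$ are all sound, and the derivative bounds do follow from Fa\`a di Bruno together with $(\varepsilon|\xi|^2)^{k}\me^{-\varepsilon q}\le C_k$, since each term $\varepsilon^{k}\xi^{\beta}$ with $|\beta|\le 2k-|\alpha|$ can be rewritten as $\varepsilon^{\delta/2}(1+|\xi|)^{\delta-|\alpha|}$ times a bounded quantity (for $\delta\le 2$, which suffices because the $\mathcal{S}^{\delta}_{1,0}$ seminorms weaken as $\delta$ grows). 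This is the same mechanism — trade a vanishing power of $\varepsilon$ against an arbitrarily small growth $(1+|\xi|)^{\delta}$ — that drives the paper's proof for $E_{\varepsilon}$, just implemented through the parametrix rather than an explicit Fourier transform. The one point you should not wave at is the remainder: you need the difference between $\chi\me^{-\varepsilon(\Delta+m^2)}\chi$ and the truncated parametrix to go to zero in the relevant seminorms \emph{uniformly in} $\varepsilon\in(0,1]$, i.e.\ kernel bounds $\mathcal{O}(\varepsilon^{M})$ in every $C^{N}$ norm; this is standard for the heat calculus but it is the part of the bookkeeping that genuinely requires the uniform parametrix construction, not just a fixed-$\varepsilon$ asymptotic expansion.
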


Some properties of the heat operator is summed up in appendix \ref{app-symbol}.

\subsection{Integrability of Interaction and Regularization Independence}\label{sec-nelson-reg-indep}

\begin{prop}
  \label{prop-nelson-main} Let~$(K_{\varepsilon})_{\varepsilon>0}$ be any family of real smoothing operators such that~$K_{\varepsilon}\to \one$ in~$\Psi^{\delta}(M)$ in the symbol sense for any~$\delta>0$. Define~$\phi_{\varepsilon}(x)$ and~${:}P(\phi_{\varepsilon}(x)){:}$ as above and let $\chi\in C_c^\infty(M)$ be a test function. Put
  \begin{equation}
    S_{M,\varepsilon,\chi}(\phi):=\int_{M}^{} \chi(x) {:}P(\phi_{\varepsilon}(x)){:} \dd V_M.
    \label{}
  \end{equation}
  This is a random variable on~$\mathcal{D}'(M)$ equipped with~$\mu_{\mm{GFF}}^M$. Then~$\{S_{M,\chi,\varepsilon}\}$ converges in~$L^2(\mu_{\mm{GFF}}^M)$ as~$\varepsilon\to 0$, and the limit is independent of the specific smoothing~$(K_{\varepsilon})$ chosen, provided they have the convergence property described above.
  
  More precisely, for any other smoothing family~$(\tilde{K}_{\varepsilon'})$ satisfying the same condition and defining the random variable~$\tilde{S}_{M,\varepsilon',\chi}$, we have a quantitative bound of the form
\begin{eqnarray*}
  \mathbb{E}\big[\big| S_{M,\varepsilon,\chi}-\tilde{S}_{M,\varepsilon^\prime,\chi}\big|^2\big]\le \fk{O}(\vert \varepsilon-\varepsilon^\prime\vert) C_n  \Vert \chi\Vert_{L^4}^2,
\end{eqnarray*}  
where~$\fk{O}(|\varepsilon-\varepsilon'|)$ is a function going to zero as~$|\varepsilon-\varepsilon'|\to 0$, depending only on~$M$.
\end{prop}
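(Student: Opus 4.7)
The plan is to proceed in the spirit of Nelson's classical argument but with close attention to the fact that the two families $(K_\varepsilon)$, $(\tilde K_{\varepsilon'})$ may be different. Writing $P(X) = \sum_{k=0}^{n} a_k X^k$ and exploiting the orthogonality of Wick chaoses together with formula \eqref{eqn-wick-inner-prod}, I first expand
\begin{equation*}
\mb{E}\big[|S_{M,\varepsilon,\chi} - \tilde S_{M,\varepsilon',\chi}|^2\big] = \sum_{k=0}^{n} a_k^{2}\, k!\, \iint_{M\times M}\chi(x)\chi(y)\,\big[C^{11}_{\varepsilon,\varepsilon}(x,y)^{k} - 2\,C^{12}_{\varepsilon,\varepsilon'}(x,y)^{k} + C^{22}_{\varepsilon',\varepsilon'}(x,y)^{k}\big]\,dV(x)\,dV(y),
\end{equation*}
where $C^{ij}_{\varepsilon,\varepsilon'}(x,y)$ denotes the Schwartz kernel of $K^{(i)}_{\varepsilon}(\Delta+m^{2})^{-1}K^{(j)*}_{\varepsilon'}$, identified with $\mb{E}[(K^{(i)}_\varepsilon\phi)(x)(K^{(j)}_{\varepsilon'}\phi)(y)]$. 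Both the convergence as $\varepsilon\to 0$ (within a single family, by taking $\tilde K=K$) and the regulator-independence then reduce to the \emph{same} question: do the integrals $\iint \chi\otimes\chi\, (C^{ij}_{\varepsilon,\varepsilon'})^{k}$ converge to a common limit $\iint \chi\otimes\chi\, G_{(\Delta+m^{2})}^{k}$ as $\varepsilon,\varepsilon'\to 0$?

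The second step is to answer this affirmatively using the regularity theory developed in this subsection. By the continuity of the twisted product on symbol classes (the same ingredient used in Lemma~\ref{lemm-conv-symbol-top}), each family $K^{(i)}_{\varepsilon}(\Delta+m^{2})^{-1}K^{(j)*}_{\varepsilon'}$ converges to $(\Delta+m^{2})^{-1}$ in $\Psi^{-2+\delta}(M)$ in the symbol sense for every $\delta>0$, and Proposition~\ref{prop-kernel-top-equiv-symbol} translates this into convergence in the kernel topology of Definition~\ref{def-kernel-diag-top}. In dimension two this has two consequences I would use: (a) the difference $C^{ij}_{\varepsilon,\varepsilon'}-G_{(\Delta+m^{2})}$ tends to zero uniformly on compact subsets of $M\times M\setminus \mathrm{diag}$ and is dominated near the diagonal by a quantity of order $d_g(x,y)^{-\delta}$ whose size tends to zero; (b) the limit $G_{(\Delta+m^{2})}(x,y) \sim -\tfrac{1}{2\pi}\log d_g(x,y)$ has only a logarithmic singularity, hence $G_{(\Delta+m^{2})}^{k} \in L^{p}_{\mathrm{loc}}(M\times M)$ for every finite $p$ and every $k$. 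Combining the telescoping identity $a^{k}-b^{k} = (a-b)\sum_{j=0}^{k-1} a^{j}b^{k-1-j}$ with these two facts and Hölder's inequality on $M\times M$, each of the integrals in the Wick expansion converges to $\iint \chi\otimes\chi\, G^{k}$, and the three limits coincide; this proves both Cauchyness in $L^{2}(\mu_{\mm{GFF}}^{M})$ and the announced independence of the regulator.

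For the quantitative bound, applying $|a^{k}-b^{k}|\le k\,|a-b|\,(|a|+|b|)^{k-1}$ pointwise to the bracket, followed by a Hölder inequality with exponent pair $(2,2)$ on $M\times M$ (or equivalently the symmetry trick $|\chi(x)\chi(y)|\le \tfrac{1}{2}(|\chi(x)|^{2}+|\chi(y)|^{2})$ to decouple $x$ and $y$ and then Cauchy--Schwarz), brings out the combination $\|\chi\|_{L^{4}(M)}^{2}$ together with a double integral of the form $\iint |\log d_g|^{2(k-1)}|R_{\varepsilon,\varepsilon'}|^{2}$, where $R_{\varepsilon,\varepsilon'}$ is the difference of two regularized covariances. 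The kernel-topology control gives $\sup_{x\ne y} d_g(x,y)^{\delta}|R_{\varepsilon,\varepsilon'}(x,y)| = \fk{O}(|\varepsilon-\varepsilon'|)$ (more precisely: a modulus of continuity of the regularization vanishing as $|\varepsilon-\varepsilon'|\to 0$), so that double integral is bounded by $C_{n} \cdot \fk{O}(|\varepsilon-\varepsilon'|)^{2}$, depending only on $M$ and on $n=\deg P$. Taking square roots yields the stated estimate.

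The main obstacle, and what I would focus the bulk of the calculation on, is establishing the uniform near-diagonal bound for $R_{\varepsilon,\varepsilon'}$: while $(\Delta+m^{2})^{-1}$ itself sits on the boundary $r=-d$ where the kernel seminorms of Definition~\ref{def-kernel-diag-top} are not applicable, the regularized family sits strictly inside $\Psi^{-2+\delta}(M)$ for every $\delta>0$, and one must check that the \emph{difference} with $G_{(\Delta+m^{2})}$ also lies in this better class and converges there. This is not merely a soft symbol-calculus statement: it requires that the map $K_{\varepsilon}\mapsto K_{\varepsilon}(\Delta+m^{2})^{-1} K^{*}_{\tilde\varepsilon'}$ is continuous from $\Psi^{\delta}\times\Psi^{\delta}$ into $\Psi^{-2+2\delta}$ with quantitative control, which in turn reduces to the twisted-product continuity invoked above but must be tracked carefully enough to extract the rate $\fk{O}(|\varepsilon-\varepsilon'|)$.
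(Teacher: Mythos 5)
Your proposal is correct and follows essentially the same route as the paper's proof: the Wick/chaos expansion of the $L^2$ difference into the three regularized covariance kernels, the telescoping identity plus H\"older producing the $\Vert \chi\Vert_{L^4}^2$ factor, and the symbol-to-kernel convergence machinery (Lemma \ref{lemm-conv-symbol-top}, Proposition \ref{prop-kernel-top-equiv-symbol}) combined with Lemma \ref{lemm-2d-log-blow-up-can} to control the near-diagonal singularity. The only cosmetic deviation is that you compare each regularized covariance with the Green function $G_{(\Delta+m^2)}$ itself (which forces you to address the borderline order $-d$ issue you flag), whereas the paper works purely with differences of regularized kernels, which automatically lie in $\Psi^{-2+\delta}(M)$ and converge to zero there; both reductions rest on the same twisted-product continuity, so the substance of the argument is unchanged.
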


A particularity of the dimension two is seen in the following elementary lemma.

\begin{lemm}\label{lemm-2d-log-blow-up-can}
  If~$\dim M=2$ and~$g$ is a smooth Riemannian metric on~$M$ then
  \begin{equation}
    \iint_{M\times M}|\log(d_g(x,y))|^p \dd V_{M\times M}\le C_p<\infty
    \label{}
  \end{equation}
  for any~$1\le p<\infty$ and
  \begin{equation}
    \iint_{M\times M} d_g(x,y)^{-\delta}\dd V_{M\times M}\le C_{\delta}<\infty
    \label{}
  \end{equation}
  for any~$0<\delta<1$,~$d_g$ denoting the distance function.\hfill~$\Box$
\end{lemm}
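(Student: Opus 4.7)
The plan is to exploit that in dimension $d=2$ the Riemannian volume form in geodesic polar coordinates takes the shape $\sqrt{\det g}\, r\, dr\, d\theta$ with $\sqrt{\det g}$ smooth and bounded, so that the radial Jacobian factor $r$ exactly compensates the mild singularities we must integrate. I will reduce the global statement to a local computation in normal coordinates, and then a one--variable radial integral.

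\textbf{Step 1 (localization).} Since $M$ is compact and $g$ smooth, the injectivity radius $\iota>0$. Split
\[
  M\times M = \{(x,y):d_g(x,y)<\iota/2\} \;\sqcup\; \{(x,y):d_g(x,y)\ge \iota/2\}.
\]
On the second region the integrands $|\log d_g(x,y)|^p$ and $d_g(x,y)^{-\delta}$ are bounded (uniformly in $(x,y)$) by constants depending only on $p$, $\delta$, $\iota$, and the diameter of $M$, and the volume $V_{M\times M}$ is finite; so that piece is trivially bounded.

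\textbf{Step 2 (passing to normal coordinates).} Over the diagonal region, I would fix $x\in M$ and integrate first over $y\in B_{\iota/2}(x)$. Using geodesic normal coordinates centered at $x$, the map $\exp_x:T_xM\supset B_{\iota/2}(0)\to B_{\iota/2}(x)$ is a diffeomorphism, and writing $y=\exp_x(v)$ one has $d_g(x,y)=|v|$ (Euclidean norm in $T_xM$). The Riemannian volume form pulls back to $J(x,v)\, dv$ with $J\in C^\infty$ and $J(x,0)=1$, hence $J$ is bounded above uniformly in $x$ (compactness of $M$). Writing $v$ in polar coordinates $(r,\theta)\in(0,\iota/2)\times \mathbb{S}^1$, the volume form becomes $J(x,r,\theta)\, r\, dr\, d\theta$.

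\textbf{Step 3 (radial estimate).} After these changes of variable, the diagonal contribution to $\iint |\log d_g(x,y)|^p\, dV_{M\times M}$ is bounded by
\[
  C\,\mathrm{Vol}(M)\cdot \int_{\mathbb{S}^1}\!\int_0^{\iota/2} |\log r|^p\, r\, dr\, d\theta,
\]
which is finite for every $p\ge 1$ since $r|\log r|^p$ is integrable near $0$. Analogously, the contribution to $\iint d_g(x,y)^{-\delta}\, dV_{M\times M}$ is bounded by $C\,\mathrm{Vol}(M)\cdot \int_{\mathbb{S}^1}\!\int_0^{\iota/2} r^{-\delta}\cdot r\, dr\, d\theta$, which converges whenever $1-\delta>-1$, in particular for $0<\delta<1$ (indeed for every $\delta<2$).

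\textbf{Expected difficulty.} There is no deep obstacle here: the lemma is essentially an expression of the fact that the codimension of the diagonal in $M\times M$ equals the dimension $d=2$, and that in two dimensions $\log$ and $r^{-\delta}$ ($\delta<2$) are locally integrable against $r\, dr$. The only minor care needed is the uniformity of the Jacobian $J(x,v)$ as $x$ varies, which follows from the compactness of $M$ and smoothness of $\exp$; this provides the constants $C_p$, $C_\delta$ depending only on $(M,g)$.
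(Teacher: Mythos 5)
Your proof is correct, and the paper in fact omits the argument (the lemma is stated with a $\Box$ as elementary), so your localization near the diagonal plus geodesic normal/polar coordinates is precisely the intended standard argument: away from the diagonal the integrands are bounded by compactness, and near the diagonal the radial Jacobian $r\,dr$ absorbs both $|\log r|^p$ and $r^{-\delta}$ for $\delta<2$, with uniform constants coming from the smoothness of $\exp$ on the compact set $\{(x,v)\in TM : |v|\le \iota/2\}$. No gaps; your remark that the result actually holds for all $0<\delta<2$ is also accurate.
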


Consequently, since in two dimensions~$G(x,y)=\mathcal{O}(\log(d(x,y)))$ as~$y\to x$ (see lemma \ref{lemm-tadpole-mass}), a simple argument with partition of unity shows
\begin{equation}
  \iint_{M\times M} \chi(x)\chi(y) |G(x,y)|^p \dd V_{M\times M}\le C_p \Vert \chi\Vert_{L^\infty}^2\vol(\text{supp}(\chi))^2 <\infty
  \label{eqn-upper-bound-green-p}
\end{equation}
for some~$C_p>0$, for all~$1\le p<\infty$. When~$p=2$ this is the familiar fact that in two dimensions the Green operator~$(\Delta+m^2)^{-1}$ is Hilbert-Schmidt, which may also be shown using Weyl's law.

We will denote
\begin{align}
  \mb{E}_{\mm{GFF}}^M\big[\phi_{\varepsilon}(x)\phi_{\varepsilon}(y)\big]&=\bank{\delta_x,K_{\varepsilon}^*(\Delta+m^2)^{-1}K_{\varepsilon}\delta_y}_{L^2}\defeq G_{\varepsilon,\varepsilon}(x,y),\\
  \mb{E}_{\mm{GFF}}^M\big[\phi_{\varepsilon}(x)\phi_{\varepsilon'}(y)\big]&=\bank{\delta_x,K_{\varepsilon}^*(\Delta+m^2)^{-1}\tilde{K}_{\varepsilon'}\delta_y}_{L^2}\defeq G_{\varepsilon,\varepsilon'}(x,y), \\
  \mb{E}_{\mm{GFF}}^M\big[\phi_{\varepsilon'}(x)\phi_{\varepsilon'}(y)\big]&=\bank{\delta_x,\Tilde{K}_{\varepsilon'}^*(\Delta+m^2)^{-1}\tilde{K}_{\varepsilon'}\delta_y}_{L^2}\defeq G_{\varepsilon',\varepsilon'}(x,y).
  \label{}
\end{align}
Note that~$G_{\varepsilon,\varepsilon'}$ has a \textit{different} regulator on the second variable! For fixed~$\varepsilon>0$, we have by lemma \ref{lemm-wick-feyn},
\begin{equation}
    {:}\phi_{\varepsilon}(x)^{2n}{:}=G_{\varepsilon,\varepsilon}( x,x)^n h_{2n}\big(\phi_{\varepsilon}(x)\big/G_{\varepsilon,\varepsilon}(x,x)^{\frac{1}{2}} \big)\ge -b_1 G_{\varepsilon,\varepsilon}(x,x)^n,
    \label{eqn-lower-bound-hermite}
  \end{equation}
  for some constant~$b_1>0$ independent of~$\varepsilon$, since the even-degree Hermite polynomial~$h_{2n}$ is bounded below.

\begin{proof}[Proof of proposition \ref{prop-nelson-main}.]
   We shall prove the proposition for the case~$P(\theta)=\theta^{2n}$,~$n\in\mb{N}$, the general case is similar. For fixed~$\varepsilon$,~$\varepsilon'>0$, we compute
  \begin{align*}
    \nrm{S_{M,\varepsilon}-S_{M,\varepsilon'}}_{L^2(\mu_{\mm{GFF}})}^2&=\mb{E}\bigg[ \left| \int_{M}^{} \chi {:}\phi_{\varepsilon}(x)^{2n}{:}\dd V_M -\int_{M}^{} \chi  {:}\phi_{\varepsilon'}(x)^{2n}{:} \dd V_M \right|^2 \bigg] \\
    &=\mb{E}\left[  \iint_{M\times M}^{}{:}\chi(x) \phi_{\varepsilon}(x)^{2n}{:}~{:}\phi_{\varepsilon}(y)^{2n}{:}\chi(y)\dd V_M\otimes \dd V_M \right] \\
    &\quad \quad-2\mb{E}\left[  \iint_{M\times M}^{}\chi(x)  {:}\phi_{\varepsilon}(x)^{2n}{:}~{:}\phi_{\varepsilon'}(y)^{2n}{:} \chi(y) \dd V_M\otimes \dd V_M \right] \\
    &\quad\quad +\mb{E}\left[  \iint_{M\times M}^{}  \chi(x){:}\phi_{\varepsilon'}(x)^{2n}{:}~{:}\phi_{\varepsilon'}(y)^{2n}{:}\chi(y)  \dd V_M\otimes \dd V_M \right] \\
    &=\iint_{M\times M}^{}\Big( \mb{E}\left[\chi(x) {:}\phi_{\varepsilon}(x)^{2n}{:}~{:}\phi_{\varepsilon}(y)^{2n}{:}\chi(y) \right] -2  \mb{E}\left[\chi(x) {:}\phi_{\varepsilon}(x)^{2n}{:}~{:}\phi_{\varepsilon'}(y)^{2n}{:} \chi(y)\right] \\
    &\quad\quad+\mb{E}\left[\chi(x) {:}\phi_{\varepsilon'}(x)^{2n}{:}~{:}\phi_{\varepsilon'}(y)^{2n}{:} \chi(y)\right] \Big)
    \dd V_{M\times M} \tag{Tonelli}\\
    &=(2n)!\iint_{M\times M}\chi(x)\chi(y) \left( G_{\varepsilon,\varepsilon}(x,y)^{2n}-2G_{\varepsilon,\varepsilon'}(x,y)^{2n}+G_{\varepsilon',\varepsilon'}(x,y)^{2n} \right) \dd V_{M\times M} \tag{lemma \ref{lemm-wick-feyn}}
  \end{align*}
  We will control the integral~$\iint_{M\times M}\chi(x)\chi(y)\left( G_{\varepsilon,\varepsilon}(x,y)^{2n}-G_{\varepsilon,\varepsilon'}(x,y)^{2n} \right)\dd V_{M\times M}$ for $\chi\in C^\infty(M)$ and $\chi\geqslant 0$. 
  Indeed,
\begin{align*}
  |\textrm{this integral}|&\le \iint_{M\times M}\chi(x)\chi(y) \left|G_{\varepsilon,\varepsilon}(x,y)-G_{\varepsilon,\varepsilon'}(x,y)\right| \cdot \big|G_{\varepsilon,\varepsilon}(x,y)^{2n-1} \\
  &\quad\quad +G_{\varepsilon,\varepsilon}(x,y)^{2n-2}G_{\varepsilon,\varepsilon'}(x,y)+\dots+G_{\varepsilon,\varepsilon'}(x,y)^{2n-1} \big| \dd V_{M\times M} \\
  &\le 2n  \Vert\chi\Vert^2_{L^4} \left( C_{8n-4} \right)^{\frac{1}{4}}
  \times \left( \iint_{M\times M} \left|G_{\varepsilon,\varepsilon}(x,y)-G_{\varepsilon,\varepsilon'}(x,y)\right|^2 \right)^{\frac{1}{2}} 
\end{align*}
Remember that~$G_{\varepsilon,\varepsilon'}(x,y)$ is the kernel of~$K^*_{\varepsilon}(\Delta+m^2)^{-1}\tilde{K}_{\varepsilon'}$ and~$G_{\varepsilon,\varepsilon'}(x,y)-G_{\varepsilon,\varepsilon}(x,y)$ is the kernel of~$K^*_{\varepsilon}(\Delta+m^2)^{-1}(\tilde{K}_{\varepsilon'}-K_{\varepsilon})$. By lemma \ref{lemm-conv-symbol-top}, definition \ref{def-kernel-diag-top} and proposition \ref{prop-kernel-top-equiv-symbol},
\begin{equation}
  |G_{\varepsilon,\varepsilon'}(x,y)|\le C_{M,\delta}d_g(x,y)^{-\delta},
  \label{}
\end{equation}
uniformly in~$(\varepsilon,\varepsilon')$ and by lemma \ref{lemm-conv-symbol-top},
\begin{equation}
  |G_{\varepsilon,\varepsilon'}(x,y)-G_{\varepsilon,\varepsilon}(x,y)| \le \fk{O}_{M,\delta}(|\varepsilon-\varepsilon'|)d_g(x,y)^{-\delta}
  \label{}
\end{equation}
for any~$\delta>0$. If we restrict moreover to $\delta<1$, then we prove our result, thanks to lemma \ref{lemm-2d-log-blow-up-can}.
\end{proof}

\subsection{Integrability of the Exponential of Interaction}

\noindent In this subsection we will adopt the heat regulator~$K_{\varepsilon}=\tilde{K}_{\varepsilon}:=\me^{-\varepsilon(\Delta+m^2)}$.

\begin{lemm}\label{lemmB6}
  Let~$\deg P=2n$. Then~$S_{M,\chi,\varepsilon}$,~$\varepsilon>0$, and hence the resulting limit~$S_{M,
  \chi}$, is in the~$(2n)$-th Wiener chaos of the GFF, that is,~$\ol{\mathcal{P}}_{2n}(\mathcal{H})$, where~$\mathcal{H}=W^{-1}(M)$ is the Gaussian Hilbert space of the GFF.
\end{lemm}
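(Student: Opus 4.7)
The plan is to first reduce the statement about the limit $S_{M,\chi}$ to one about each $S_{M,\chi,\varepsilon}$, and then argue the latter directly from the structure of the integrand. By proposition \ref{prop-nelson-main}, $S_{M,\chi,\varepsilon}\to S_{M,\chi}$ in $L^2(\mu_{\mm{GFF}}^M)$ as $\varepsilon\to 0$, and by definition $\ol{\mathcal{P}}_{2n}(\mathcal{H})$ is a \emph{closed} subspace of $L^2(\mu_{\mm{GFF}}^M)$; hence it will be enough to prove that $S_{M,\chi,\varepsilon}\in\ol{\mathcal{P}}_{2n}(\mathcal{H})$ for every fixed $\varepsilon>0$.

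For fixed $\varepsilon>0$ the key observation is that the smoothing property of $K_\varepsilon$ gives $K_\varepsilon^*\delta_x\in C^\infty(M)\subset W^{-1}(M)=\mathcal{H}$ for every $x\in M$, so that $\phi_\varepsilon(x)=\phi(K_\varepsilon^*\delta_x)$ is a bona fide centred Gaussian variable lying in the Gaussian Hilbert space $\mathcal{H}$. Decomposing $P$ into monomials of degree $\le 2n$ and applying (\ref{eqn-hermite-wick-power}) pointwise yields
$$ {:}P(\phi_\varepsilon(x)){:} \;=\; \sum_{k=0}^{2n} c_k(x)\,{:}\phi_\varepsilon(x)^k{:}, $$
where the coefficients $c_k(x)$ depend smoothly on $x$ through powers of the pointwise-finite variance $G_{\varepsilon,\varepsilon}(x,x)$. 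By the It\^o--Wiener--Segal isomorphism (proposition \ref{prop-ito-wiener-segal-iso}), each summand ${:}\phi_\varepsilon(x)^k{:}$ belongs to $\mathcal{H}^{:k:}$; hence the integrand ${:}P(\phi_\varepsilon(x)){:}$ lies in $\bigoplus_{k=0}^{2n}\mathcal{H}^{:k:}\subset\ol{\mathcal{P}}_{2n}(\mathcal{H})$ for every individual $x\in M$.

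The final step is to show that integration against $\chi$ preserves membership in this closed subspace. I would interpret $S_{M,\chi,\varepsilon}$ as a Bochner integral in the Hilbert space $L^2(\mu_{\mm{GFF}}^M)$ of the $L^2$-valued map $x\mapsto\chi(x){:}P(\phi_\varepsilon(x)){:}$. Wick's formula (\ref{eqn-wick-inner-prod}) gives the explicit $L^2$-norm $\snrm{{:}\phi_\varepsilon(x)^k{:}}_{L^2}^2=k!\,G_{\varepsilon,\varepsilon}(x,x)^k$; since $G_{\varepsilon,\varepsilon}$ is smooth on $M\times M$ (as $K_\varepsilon$ is smoothing), both the continuity and the uniform boundedness of $x\mapsto\chi(x){:}P(\phi_\varepsilon(x)){:}$ in $L^2(\mu_{\mm{GFF}}^M)$ follow, and the Bochner integral exists. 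A Bochner integral of a function taking values in a closed linear subspace lies in that subspace, so $S_{M,\chi,\varepsilon}\in\ol{\mathcal{P}}_{2n}(\mathcal{H})$, completing the proof. Equivalently, one may approximate by Riemann sums, each of which is automatically in $\ol{\mathcal{P}}_{2n}(\mathcal{H})$ as a finite linear combination, and invoke closedness; this Riemann-sum route is really the only ``obstacle'', and it is mild, being a routine continuity check made easy by the smoothness of $K_\varepsilon$'s kernel.
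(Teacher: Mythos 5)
Your proof is correct, but it follows a genuinely different route from the paper's. The paper fixes the heat regulator $K_\varepsilon=\me^{-\varepsilon(\Delta+m^2)}$, expands $\phi_\varepsilon(x)^{2n}$ in the eigenbasis of $\Delta$ via the spectral representation of the GFF, and shows the resulting multi-indexed series (after integrating against $\chi$) converges absolutely in $L^2(\mu_{\mm{GFF}}^M)$ using the two-dimensional sup-norm bound $\sup_M|\varphi_j|\lesssim(1+\lambda_j)^2$ from Sobolev embedding; each term is visibly a polynomial of degree $2n$ in the Gaussians $\xi_j$, so the limit sits in $\ol{\mathcal{P}}_{2n}(\mathcal{H})$. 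You instead observe that $\phi_\varepsilon(x)=\phi(K_\varepsilon^*\delta_x)$ is itself a variable of the Gaussian Hilbert space, so the Wick-ordered integrand lies in $\bigoplus_{k\le 2n}\mathcal{H}^{{:}k{:}}$ pointwise in $x$, and you then pass the $\dd V_M$-integral through by a Bochner-integral (or Riemann-sum) argument plus closedness of $\ol{\mathcal{P}}_{2n}(\mathcal{H})$ in $L^2$; the reduction from $S_{M,\chi}$ to $S_{M,\chi,\varepsilon}$ via Proposition \ref{prop-nelson-main} is common to both. Your route is more robust: it works verbatim for any admissible smoothing regulator and needs no eigenfunction estimates, whereas the paper's argument is tied to the heat kernel but is completely explicit. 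Two small points you should tighten: continuity of $x\mapsto{:}\phi_\varepsilon(x)^k{:}$ in $L^2$ does not follow from the norm formula alone but from the cross term $\mb{E}[{:}\phi_\varepsilon(x)^k{:}\,{:}\phi_\varepsilon(y)^k{:}]=k!\,G_{\varepsilon,\varepsilon}(x,y)^k$ (the same Wick identity, so a one-line fix); and the identification of the $\omega$-wise defined integral $S_{M,\chi,\varepsilon}$ with your $L^2$-valued Bochner integral deserves the short duality/Fubini check, namely pairing both sides against an arbitrary $G\in L^2(\mu_{\mm{GFF}}^M)$. Neither affects the validity of the argument.
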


\begin{proof}
  We shall show it for~$P(\theta)=\theta^4$ and the general case is similar. Here we use the spectral representation of remark \ref{rem-mass-gau-field-four} (using the notations thereof) and take~$K_{\varepsilon}:=\me^{-\varepsilon(\Delta+m^2)}$. Thus we can write
  \begin{align*}
    \phi_{\varepsilon}(x)^4&=\bigg( \sum_{j=0}^{\infty}\ank{\varphi_j, K_{\varepsilon}\delta_x}_{L^2} \xi_j\bigg)^4 =\bigg( \sum_{j=0}^{\infty} \me^{-\varepsilon(\lambda_j+m^2)}\varphi_j(x) \xi_j\bigg)^4 \\
    &=\sum_{j,k,\ell,p} \me^{-\varepsilon(\lambda_j+\lambda_k+\lambda_{\ell}+\lambda_p+4m^2)} \varphi_j(x)\varphi_k(x)\varphi_{\ell}(x)\varphi_p(x)\xi_j\xi_k\xi_{\ell}\xi_p,
  \end{align*}
  the series converging absolutely in~$L^2(\mu_{\mm{GFF}}^M)$. Now each individual term is clearly in~$\ol{\mathcal{P}}_4(\mathcal{H})$. Since
  \begin{equation}
    \int_{}^{}|\varphi_j(x)\varphi_k(x)\varphi_{\ell}(x)\varphi_p(x)|\dd V_M(x) \lesssim \vol(M)\cdot [\textrm{polynomial in }\lambda_j,\lambda_k,\lambda_{\ell},\lambda_p\textrm{ with fixed degree}]
    \label{}
  \end{equation}
  as one has~$\sup_M |\varphi_j|\lesssim (1+\lambda_j)^2$ in two dimensions which follows essentially from the Sobolev embedding (see Sogge \cite{Sogge2} page 43 equation (3.1.12)), thus
  \begin{equation}
    \int_{}^{}\chi(x)\phi_{\varepsilon}(x)^4 \dd V_M(x)=\sum_{j,k,\ell,p} \me^{-\varepsilon(\lambda_j+\lambda_k+\lambda_{\ell}+\lambda_p+4m^2)} C_{\chi,j,k,\ell,p}\xi_j\xi_k\xi_{\ell}\xi_p
    \label{}
  \end{equation}
  with the series converging absolutely in~$L^2(\mu_{\mm{GFF}}^M)$ and the result is in~$\ol{\mathcal{P}}_4(\mathcal{H})$.
\end{proof}

\begin{prop}
  [hypercontractivity, \cite{Janson} theorem 5.10, \cite{Sim2} theorem I.22] \label{propB7} Let~$\mathcal{H}\subset L^2(Q,\mathcal{O},\mb{P})$ be a Gaussian Hilbert space on some probability space~$(Q,\mathcal{O},\mb{P})$, and let~$n\ge 1$,~$2\le p<\infty$. Then 
  \begin{equation}
    \mb{E}[|X|^p]^{\frac{1}{p}}\le (p-1)^{\frac{n}{2}}\mb{E}[X^2]^{\frac{1}{2}}
    \label{}
  \end{equation}
  for all~$X\in\ol{\mathcal{P}}_n (\mathcal{H})$.  \hfill~$\Box$
\end{prop}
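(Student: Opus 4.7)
The plan is to reduce the statement to Nelson's hypercontractive estimate for the Ornstein--Uhlenbeck semigroup (equivalently, for the second-quantisation operator $\Gamma(s\one)$ acting on the Fock space $\bigoplus_k \mathcal{H}^{:k:}$), which was already quietly in the background in Section \ref{sec-intro-fock-wiener-chaos}. Specifically, recall that for a contraction $A:\mathcal{H}\to\mathcal{H}$ the operator $\Gamma(A)$ acts on the $k$-th Wiener chaos $\mathcal{H}^{:k:}$ as the $k$-fold symmetric tensor power of $A$. Taking $A=s\one$ with $0<s\le 1$, one gets the diagonal operator $\Gamma(s\one)|_{\mathcal{H}^{:k:}}=s^k\cdot\one$. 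Nelson's theorem (see e.g.\ \cite{Janson} Theorem 5.1 or \cite{Sim2} Theorem I.17) upgrades the $L^2\to L^2$ contractivity of $\Gamma(s\one)$ to an $L^2\to L^p$ contractivity, namely
\begin{equation}
  \nrm{\Gamma(s\one) Y}_{L^p(Q)}\le \nrm{Y}_{L^2(Q)}\quad\text{whenever } 0<s\le (p-1)^{-1/2},\ p\ge 2,
  \label{eqn-hypercon-gamma}
\end{equation}
for every $Y\in L^2(Q,\mathcal{O}(\mathcal{H}),\mb{P})$.

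Given this, the argument is short. Let $X\in\ol{\mathcal{P}}_n(\mathcal{H})$ and decompose it into Wiener chaoses via Proposition \ref{prop-wiener-chaos}:
\begin{equation}
  X=\sum_{k=0}^n X_k,\quad X_k\in \mathcal{H}^{{:}k{:}},\quad \nrm{X}_{L^2}^2=\sum_{k=0}^n\nrm{X_k}_{L^2}^2.
  \label{eqn-hyp-chaos-decomp}
\end{equation}
Fix $s=(p-1)^{-1/2}$ and set $Y\defeq \sum_{k=0}^n s^{-k}X_k$. Then $\Gamma(s\one)Y=\sum_k s^{-k}s^k X_k=X$, so by \eqref{eqn-hypercon-gamma},
\begin{equation}
  \nrm{X}_{L^p}=\nrm{\Gamma(s\one)Y}_{L^p}\le \nrm{Y}_{L^2}=\Big(\sum_{k=0}^n s^{-2k}\nrm{X_k}_{L^2}^2\Big)^{1/2}\le s^{-n}\nrm{X}_{L^2}=(p-1)^{n/2}\nrm{X}_{L^2},
  \label{eqn-hyp-final-bound}
\end{equation}
which is the desired inequality.

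The only non-trivial ingredient is Nelson's bound \eqref{eqn-hypercon-gamma}, and this is the step I would expect to be the main obstacle if one wanted a self-contained proof. The standard route is to reduce \eqref{eqn-hypercon-gamma} to the one-dimensional case $\mathcal{H}=\spn\{x\}\subset L^2(\mb{R},\gamma_{1,1})$ by a tensor product argument (using multiplicativity of $\Gamma$ under orthogonal direct sums of the one-particle space, cf.\ the functoriality listed just after Corollary \ref{cor-IWS-induced}), and then to establish the two-point version by a direct computation of the kernel of the Mehler semigroup, or alternatively via the Gaussian log-Sobolev inequality. Either way, the proof of \eqref{eqn-hypercon-gamma} is classical and we would simply cite it; the content of Proposition \ref{propB7} lies in packaging \eqref{eqn-hypercon-gamma} via the identification $\Gamma(s\one)|_{\mathcal{H}^{:k:}}=s^k$ and the elementary optimisation performed in \eqref{eqn-hyp-final-bound}.
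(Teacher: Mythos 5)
Your proof is correct, and it is essentially the argument behind the result as the paper presents it: the paper gives no proof of Proposition \ref{propB7} beyond citing \cite{Janson} Theorem 5.10 and \cite{Sim2} Theorem I.22, and those references prove it exactly as you do — decompose into Wiener chaoses, apply Nelson's $L^2\to L^p$ hypercontractive bound for $\Gamma(s\one)$ with $s=(p-1)^{-1/2}$, and optimize, using $s\le 1$ (i.e.\ $p\ge 2$) so that $s^{-2k}\le s^{-2n}$ for $k\le n$. So the only outsourced ingredient, Nelson's estimate, is precisely what the cited theorems supply, and the rest of your reduction is complete and accurate.
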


Combining lemma \ref{lemmB6} and proposition \ref{propB7}, we have

\begin{corr}\label{cor-nelson-all-Lp-cutoff-limit}
  The convergence of~$\{S_{M,\chi,\varepsilon}\}$, as well as the limit~$S_{M,\chi}$, is in~$L^p(\mu_{\mm{GFF}})$ for all~$1\le p<\infty$. Moreover, if $\chi\rightarrow 0$ in $L^4(M)$ then 
$$\lim_{\varepsilon\rightarrow 0}\lim_{\chi\rightarrow 0} S_{M,\varepsilon,\chi}=0  $$
as random variable in $L^p(\mu_{\mm{GFF}}^M)$ for all $1\le p<\infty$.
Moreover $\mathbb{E}[\me^{-S_{M,\varepsilon,\chi}}] $ remains uniformly bounded along the limit. In particular $S_{M,\chi}$ is defined for $\chi\in L^4(M)$. \hfill~$\Box$
\end{corr}

We single out a calculus computation which will be used in the sequel:

\begin{lemm}\label{lemmB9}
  Let~$a$,~$b$ be positive real numbers. Then the real function~$\alpha(x):=x^{(bx)} a^x$,~$x>0$, attains its minimum value~$\me^{-be^{-1}a^{-1/b}}$ at~$x=e^{-1}a^{-1/b}$. \hfill ~$\Box$
\end{lemm}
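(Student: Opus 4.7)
The plan is to reduce the problem to a one-variable calculus exercise by passing to logarithms, since $\alpha(x)>0$ for $x>0$. Set $L(x) \defeq \log \alpha(x) = bx\log x + x\log a$. Minimizing $\alpha$ on $(0,+\infty)$ is equivalent to minimizing $L$ there.

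First, I would differentiate:
\begin{equation*}
L'(x) = b\log x + b + \log a, \qquad L''(x) = \frac{b}{x} > 0.
\end{equation*}
Thus $L$ is strictly convex on $(0,+\infty)$, so any critical point is the unique global minimum. Solving $L'(x)=0$ gives $\log x = -1 - \tfrac{\log a}{b}$, i.e.\ the unique critical point is $x_* = e^{-1}a^{-1/b}$, as claimed.

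Next, I would evaluate the minimum. At $x = x_*$ we have $b\log x_* + \log a = -b$, hence
\begin{equation*}
L(x_*) = x_*\bigl(b\log x_* + \log a\bigr) = -b\,x_* = -b e^{-1} a^{-1/b},
\end{equation*}
and exponentiating gives $\alpha(x_*) = e^{-be^{-1}a^{-1/b}}$. Finally, $L(x)\to 0$ as $x\to 0^+$ and $L(x)\to +\infty$ as $x\to+\infty$, which together with convexity confirms that $x_*$ is indeed a global (not just local) minimum on $(0,+\infty)$. There is no genuine obstacle here; the only minor care needed is to verify that the critical point lies in the open half-line (which is automatic since $e^{-1}a^{-1/b}>0$) and to keep track of the algebraic simplification at $x_*$ that makes the bracket collapse to $-b$.
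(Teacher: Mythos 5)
Your proof is correct: passing to $L(x)=bx\log x+x\log a$, using strict convexity, and solving $L'(x)=0$ gives exactly the claimed minimizer $x_*=e^{-1}a^{-1/b}$ and minimum value $e^{-be^{-1}a^{-1/b}}$. The paper offers no proof (the lemma is stated with a terminal $\Box$ as an elementary calculus fact), and your argument is precisely the standard computation it implicitly relies on.
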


\begin{thrm}
  [Nelson] \label{thrm-nelson} We have for $\chi\in L^4(M)$,
  \begin{equation}
    \me^{-S_{M,\chi}}\in L^1(\mu_{\mm{GFF}})
    \label{}
  \end{equation}
  and hence
  \begin{equation}
  Z_M={\tts \det_{\zeta}}(\Delta_g+m^2)^{-\frac{1}{2}}\int_{\mathcal{D}'(M)}^{}\me^{-\int_{M}^{}{:}P(\phi){:} \dd V_M}\dd\mu_{\mm{GFF}}^M(\phi)<\infty.
    \label{eqn-def-part-func-rigor}
  \end{equation}
\end{thrm}

\begin{proof}
  For any~$\varepsilon>0$, by (\ref{eqn-upper-bound-green-p}) and (\ref{eqn-lower-bound-hermite}),
  \begin{equation}
    S_{M,\chi,\varepsilon}\ge -b\vol(\text{supp}(\chi)) \Vert \chi\Vert_{L^\infty} \sup_x (|G_{\varepsilon,\varepsilon}(x,x)|^n).
    \label{}
  \end{equation}
From formula (\ref{eqnB9}), for~$\varepsilon$ small,
  \begin{equation}
    G_{\varepsilon,\varepsilon}(x,x)=\int_{2\varepsilon}^{\infty}p_t(x,x)\dd t=\bigg(\underbrace{\int_{2\varepsilon}^{1}}_{A}+\underbrace{\int_{1}^{\infty}}_{B}\bigg) p_t(x,x)\dd t.
    \label{}
  \end{equation}
  Now by (iii) of lemma \ref{lemm-heat} part~$A$ is~$\mathcal{O}(\log(2\varepsilon))$; since our field is \textit{massive} ($m>0$), by (iv) of lemma \ref{lemm-heat} part~$B$ is \textit{bounded}. Therefore one has overall~$G_{\varepsilon,\varepsilon}(x,x)=\mathcal{O}(\log(2\varepsilon))$.
  As a result~$S_{M,\chi,\varepsilon}\ge -b_2 |\log(2\varepsilon)|^n$ for~$\varepsilon$ small.

  Now we compute that
  \begin{align*}
    \mb{P}\big( \me^{-S_{M,\chi}}\ge \me^{b_2|\log(2\varepsilon)|^n +1}\big)&=\mb{P}\left( S_{M,\chi}\le -b_2|\log(2\varepsilon)|^n-1 \right) \\
    &\le \mb{P}\left( |S_{M,\chi}-S_{M,\chi,\varepsilon}|\ge 1 \right) \\
    &\le \nrm{S_{M,\chi}-S_{M,\chi,\varepsilon}}_{L^p(\mu_{\mm{GFF}})}^p \tag{Chebyshev} \\
    &\le (p-1)^{\frac{np}{2}} C_1^p \varepsilon^{\frac{p}{2}}\Vert \chi\Vert_{L^4}^p \tag{proposition \ref{propB7} and \ref{prop-nelson-main}} \\
    &\lesssim \Vert \chi\Vert_{L^4}^pp^{\frac{n}{2}p}(C_1\varepsilon^{\frac{1}{2}})^p,
  \end{align*}
  for all $2\le p <\infty$. The last line as a function of~$p$ has the form dealt with in lemma \ref{lemmB9} and attains a minimum of~$\exp\big( -C_2(\varepsilon^{\frac{1}{2}}\Vert \chi\Vert_{L^4})^{-1/n} \big)$, with some absorbed constant~$C_2>0$ which does not depend on $\chi$. Thus we obtain
  \begin{equation}
    \mb{P}\big( \me^{-S_{M,\chi}}\ge \me^{b_2|\log(2\varepsilon)|^n +1}\big)\lesssim \exp\big( -C_2\big(\varepsilon^{\frac{1}{2}}\Vert \chi\Vert_{L^4}\big)^{-\frac{1}{n}} \big).
    \label{}
  \end{equation}
  Now
we may conclude with the formula
\begin{eqnarray*}
\mathbb{E}\big[e^{-S_{M,\chi}} \big]=\int_0^\infty \mb{P}\left( \me^{-S_{M,\chi}}\ge  t\right)\dd t=\int_0^1 \mb{P}\left( \me^{-S_{M,\chi}}\ge  t\right)\Big| \frac{\dd t}{\dd\varepsilon} \Big| \dd\varepsilon
\end{eqnarray*}  
where the last integral involves a change of variable~$t:=\me^{b_2|\log(2\varepsilon)|^n +1}$. This gives
\begin{eqnarray*}
\mathbb{E}\big[e^{-S_{M,\chi}} \big]\lesssim 1+\int_0^1 \exp\big( -C_2\big(\varepsilon^{\frac{1}{2}}\Vert \chi\Vert_{L^4}\big)^{-\frac{1}{n}} \big) C_3\varepsilon^{-1}n\vert\log(\varepsilon)\vert^{n-1}
e^{c\vert\log(\varepsilon)\vert^n-1}
  \dd\varepsilon
\end{eqnarray*}  
which is finite since integrable near $\varepsilon=0$.   
  Moreover, we see that the bound is uniform when $\Vert \chi\Vert_{L^4}\leqslant C_0$ for some given $C_0>0$.
\end{proof}

\subsection{Change of Wick Ordering}\label{sec-change-wick}

\noindent In order for the proof of proposition \ref{prop-nelson-main} to work as it is written one has to insist on the Wick ordering~${:}\bullet{:}$ provided by~$\mu_{\mm{GFF}}^M$, since we desire convergence in~$L^2(\mu_{\mm{GFF}}^M)$ and with a different Wick ordering the Feynman rules (lemma \ref{lemm-wick-feyn}) are not exact. Nevertheless, in order to define the interaction over a domain~$\Omega$ independently of its embedding in an ambient manifold~$M$, one must employ a Wick ordering independent of~$M$, or in order words, one that is \textit{local}.

Let~$d$ denote the Riemannian distance function of~$M$. This function is local in the sense that~$d(x,y)$ (as $y\to x$) depends only on the restriction of the Riemannian metric on any geodesic convex neighborhood containing~$x$ and~$y$. The local Wick ordering~${:}\bullet{:}_0$ is provided by the \textsf{log-measure}~$\mu_{\log}^M$ which is the Gaussian measure on~$\mathcal{D}'(M)$ with covariance
\begin{equation}
  \mb{E}_{\log}[\phi(f)\phi(h)]=\int_{M}^{}f(x)\Big(-\frac{1}{2\pi}\log(m\cdot d(x,y))\Big) h(y)\dd V_M(x)\dd V_M(y),
  \label{}
\end{equation}
for~$f$,~$h\in C^{\infty}(M)$, thanks to lemma \ref{lemm-2d-log-blow-up-can}. Here~$m$ is the mass used for~$(\Delta+m^2)^{-1}$. We denote~$-\frac{1}{2\pi}\log(md(x,y)):=C_0(x,y)$ and the corresponding operator by~$C_0$. We emphasize here that~$\mu_{\log}^M$ is used only as a tool to produce a linear change of random variables with deterministic coefficients, no random variables will be actually defined on~$\mu_{\log}^M$.

\begin{lemm}
  If~$C_1$,~$C_2$ are two covariance operators on~$M$, then
  \begin{equation}
    {:}\phi(f)^n{:}_{C_1} =\sum_{j=0}^{\lfloor n/2 \rfloor} \frac{n!}{(n-2j)!j!2^j} \ank{f,(C_2-C_1)f}_{L^2(M)}^j {:}\phi(f)^{n-2j}{:}_{C_2},
    \label{}
  \end{equation}
  for~$f\in C^{\infty}(M)$.
\end{lemm}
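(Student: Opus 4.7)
The plan is to prove this change-of-ordering formula by comparing coefficients in the generating function for Hermite polynomials. Write $\sigma_i^2 := \ank{f, C_i f}_{L^2(M)}$ for $i = 1, 2$; this is the variance of the Gaussian variable $\phi(f)$ under the covariance $C_i$. By the identity (\ref{eqn-hermite-wick-power}) recalled earlier in the introduction, we have
\begin{equation*}
{:}\phi(f)^n{:}_{C_i} = \sigma_i^n\, h_n\!\left(\phi(f)/\sigma_i\right),
\end{equation*}
where $h_n$ is the $n$-th Hermite polynomial defined via (\ref{eqn-gen-func-hermite}).

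First I would pack both Wick orderings into generating functions. Using (\ref{eqn-gen-func-hermite}) with the substitution $z \mapsto z\sigma_i$, one gets the identity
\begin{equation*}
\sum_{n=0}^{\infty} \frac{z^n}{n!}\, {:}\phi(f)^n{:}_{C_i} \;=\; \exp\!\left( z\,\phi(f) - \tfrac{1}{2} z^2 \sigma_i^2 \right),
\end{equation*}
valid as a formal power series in $z$ (or as an actual convergent series, e.g.\ at the random variable level). The key algebraic step is the trivial factorization
\begin{equation*}
\exp\!\left( z\phi(f) - \tfrac{1}{2} z^2 \sigma_1^2 \right) = \exp\!\left( \tfrac{1}{2}z^2(\sigma_2^2 - \sigma_1^2) \right) \cdot \exp\!\left( z\phi(f) - \tfrac{1}{2} z^2 \sigma_2^2 \right),
\end{equation*}
which transforms the generating function for the $C_1$-Wick ordering into a product involving the generating function for the $C_2$-Wick ordering.

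The main step is then to expand the first factor as $\sum_{j \geq 0} (\sigma_2^2 - \sigma_1^2)^j z^{2j}/(2^j j!)$, multiply it with $\sum_k \frac{z^k}{k!}{:}\phi(f)^k{:}_{C_2}$, and collect coefficients of $z^n$. After matching, the coefficient of $z^n$ on the left is $\frac{1}{n!}{:}\phi(f)^n{:}_{C_1}$, while on the right it is
\begin{equation*}
\sum_{j=0}^{\lfloor n/2 \rfloor} \frac{(\sigma_2^2 - \sigma_1^2)^j}{2^j\, j!\, (n-2j)!}\, {:}\phi(f)^{n-2j}{:}_{C_2}.
\end{equation*}
Multiplying through by $n!$ and recognizing $\sigma_2^2 - \sigma_1^2 = \ank{f,(C_2-C_1)f}_{L^2(M)}$ yields exactly the claimed identity.

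I do not expect any serious obstacle here: everything is a polynomial identity in $\phi(f)$ with scalar coefficients, so the generating-function manipulation is purely formal and does not require analytic justification. The only mild point to keep in mind is that the two sides must be interpreted as polynomials in the single random variable $\phi(f)$; the statement does not invoke any common probability space or mutual absolute continuity of $\mu_{C_1}$ and $\mu_{C_2}$.
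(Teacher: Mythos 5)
Your proof is correct. The paper itself gives no computation here --- its proof is the single line ``Follows readily from Wick's theorem,'' i.e.\ the intended argument is the combinatorial one: expand ${:}\phi(f)^n{:}_{C_1}$ and ${:}\phi(f)^{n-2j}{:}_{C_2}$ via the contraction rules and count the $\frac{n!}{(n-2j)!\,j!\,2^j}$ ways of choosing $j$ self-contractions, each contributing a factor $\ank{f,(C_2-C_1)f}_{L^2(M)}$. Your route packages exactly that combinatorics into the Hermite generating function $\sum_n \frac{z^n}{n!}{:}\phi(f)^n{:}_{C_i}=\exp\bigl(z\phi(f)-\tfrac12 z^2\sigma_i^2\bigr)$, after which the identity drops out of the factorization $\exp\bigl(\tfrac12 z^2(\sigma_2^2-\sigma_1^2)\bigr)$ and coefficient matching; this is a clean, complete verification and arguably more transparent than the pairing count, since all the combinatorial factors are generated automatically. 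Your closing remark is also the right one and matches the paper's own caveat in Section \ref{sec-change-wick}: the identity is a polynomial identity in the single variable $\phi(f)$ with deterministic scalar coefficients, so no common probability space or absolute continuity between $\mu_{C_1}$ and $\mu_{C_2}$ is needed (the paper stresses that the second ordering is used ``only as a tool to produce a linear change of random variables with deterministic coefficients''). The only cosmetic point: the intermediate formula ${:}\phi(f)^n{:}_{C_i}=\sigma_i^n h_n(\phi(f)/\sigma_i)$ presupposes $\sigma_i\ne 0$, but since the final statement and the generating-function identity are polynomial in $\sigma_i^2$, the degenerate case $f$ with $\ank{f,C_if}=0$ (only $f=0$ for the covariances considered) is immediate.
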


\begin{proof}
  Follows readily from Wick's theorem.
\end{proof}

The reason why the new Wick ordering works is the following. Let~$G_{(\Delta+m^2)}(x,y)$ denote the integral kernel of~$(\Delta+m^2)^{-1}$.

\begin{lemm}\label{lemm-tadpole-mass}
  For each~$x\in M$, the limit
  \begin{equation}
    \lim_{y\to x}\left( G_{(\Delta+m^2)}(x,y)-C_0(x,y) \right)\defeq \delta G(x)
    \label{}
  \end{equation}
  exists, and that~$\delta G\in L^p(M)$ for all~$1\le p<\infty$.
\end{lemm}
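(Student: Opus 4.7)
\medskip

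\noindent\textbf{Proof plan.} My plan is to use the heat kernel representation
\[
G_{(\Delta+m^2)}(x,y)=\int_0^{\infty}\me^{-tm^2}p_t(x,y)\,\dd t,
\]
which converges pointwise off the diagonal and gives a way to isolate and compute the short-distance singularity explicitly. I would split the integral as $\int_0^1+\int_1^{\infty}$. The tail $\int_1^{\infty}$ is harmless: the heat semigroup is smoothing, so $p_t(x,y)$ is jointly smooth in $(x,y)$ for $t\geqslant 1$, and the exponential factor $\me^{-tm^2}$ (recall $m>0$) gives absolute convergence in any $C^k$ norm. Hence this piece contributes a function in $C^{\infty}(M\times M)$.

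For $\int_0^1$, I would substitute the Minakshisundaram--Pleijel short-time asymptotic
\[
p_t(x,y)=\frac{1}{4\pi t}\me^{-d(x,y)^2/(4t)}\Big(u_0(x,y)+t\,u_1(x,y)+\cdots+t^N u_N(x,y)\Big)+t^N r_N(t,x,y),
\]
valid uniformly in $(x,y)$ for $t\leqslant 1$ (see e.g.\ the construction in Berger--Gauduchon--Mazet or Gilkey), with $u_0(x,x)=1$, each $u_k$ smooth on $M\times M$, and $r_N$ bounded. The leading term integrates explicitly: with the change of variable $u=d(x,y)^2/(4t)$,
\[
\int_0^1 \frac{\me^{-tm^2}\,u_0(x,y)}{4\pi t}\me^{-d^2/(4t)}\,\dd t
=\frac{u_0(x,y)}{4\pi}\,E_1\!\big(d(x,y)^2/4\big)+\fk{O}(1),
\]
where the $\fk{O}(1)$ comes from rewriting $\me^{-tm^2}=1+(\me^{-tm^2}-1)$ and noting that $(\me^{-tm^2}-1)/t$ is bounded on $(0,1]$. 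Using the classical expansion $E_1(z)=-\log z-\gamma+\fk{O}(z)$ as $z\to 0^+$, this contribution equals
\[
-\frac{u_0(x,y)}{2\pi}\log d(x,y)\;+\;\text{a function continuous up to the diagonal.}
\]
Because $u_0$ is smooth and $u_0(x,x)=1$, we have $u_0(x,y)=1+\fk{O}(d(x,y)^2)$, so
\[
-\frac{u_0(x,y)}{2\pi}\log d(x,y)=-\frac{1}{2\pi}\log d(x,y)+\fk{O}(d(x,y)^2|\log d(x,y)|),
\]
with the error extending continuously (by zero) to the diagonal. The higher-order terms in the parametrix have an extra factor $t^k$ with $k\geqslant 1$, so after integration they yield functions continuous on all of $M\times M$ (and in fact $C^{N-1}$ up to the diagonal), and the remainder $r_N$ gives an arbitrarily smooth contribution by choosing $N$ large.

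Assembling everything, I obtain
\[
G_{(\Delta+m^2)}(x,y)+\frac{1}{2\pi}\log d(x,y)=S(x,y),
\]
with $S\in C^0(M\times M)$. Consequently
\[
G_{(\Delta+m^2)}(x,y)-C_0(x,y)=S(x,y)+\frac{1}{2\pi}\log m,
\]
whose limit as $y\to x$ exists and equals $\delta G(x)=S(x,x)+\frac{1}{2\pi}\log m$, a continuous function on the compact manifold $M$. Continuous functions on a compact manifold are bounded and therefore lie in every $L^p(M)$, $1\leqslant p<\infty$.

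The main technical point is the parametrix estimate that controls the remainder after subtracting the leading logarithm, specifically ensuring that the $\fk{O}(d^2\log d)$ correction and the higher heat-coefficient contributions are genuinely continuous on the diagonal and not merely bounded. This is standard parametrix bookkeeping, but it is what makes the existence of a pointwise (rather than merely distributional) limit work; one could equivalently deduce it from the fact that $(\Delta+m^2)^{-1}$ differs from the flat $\mathbb{R}^2$ fundamental solution pulled back in normal coordinates by a pseudodifferential operator of order $\leqslant -3$, whose Schwartz kernel is therefore $C^1$ near the diagonal.
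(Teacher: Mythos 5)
Your proof is correct. It is worth noting, though, that the paper does not actually carry out this argument: its ``proof'' of Lemma \ref{lemm-tadpole-mass} is a remark that delegates the statement to the literature (the point-splitting tadpole function of Kandel--Mnev--Wernli, and the classical fact that $G_{(\Delta+m^2)}(x,y)-C_0(x,y)$ extends to a $C^1$ function on $M\times M$, which is also how the ``mass function'' is treated in conformal geometry). What you supply is essentially the standard heat-kernel derivation that underlies those citations: splitting $\int_0^1+\int_1^\infty$, using massiveness to make the tail smooth, and extracting the $-\frac{1}{2\pi}\log d(x,y)$ singularity from the leading Minakshisundaram--Pleijel term via $E_1(z)=-\log z-\gamma+\mathcal{O}(z)$, with the higher coefficients and the remainder contributing continuously across the diagonal (your computation of the constants is right: $-C_0=\frac{1}{2\pi}\log d+\frac{1}{2\pi}\log m$, so $\delta G=S(x,x)+\frac{1}{2\pi}\log m$). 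Your conclusion is slightly weaker than the cited $C^1$ regularity --- you only get continuity of $G-C_0$ up to the diagonal --- but continuity is more than enough here, since it gives $\delta G\in C^0(M)\subset L^\infty(M)\subset L^p(M)$ for all $p<\infty$; and if one wanted $C^1$, the same parametrix bookkeeping (differentiating under the integral, where each derivative costs one power of $t^{-1/2}$ but is compensated by the Gaussian factor and the extra powers of $t$ in the higher terms) or your closing pseudodifferential remark delivers it. One point to keep honest in a final write-up: when you declare the $\mathcal{O}(1)$ term coming from $\me^{-tm^2}-1$ and the $t^k$-terms ($k\geqslant 1$) to be ``continuous up to the diagonal,'' say explicitly that this follows by dominated convergence, since boundedness alone is not what you need for the existence of the pointwise limit --- you do assert the continuity, so this is a presentational gap rather than a mathematical one.
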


\begin{proof}
  [Remarks for proof.] The function~$\delta G$ is called in our context the (point-splitting) \textsf{tadpole function} (see Kandel, Mnev and Wernli \cite{KMW} section 5.4, in particular lemma 5.20 for a precise expression), which can be seen as a \textsf{renormalized} diagonal value of the Green function~$G_{(\Delta+m^2)}(x,x)$. The asymptotic of the Green function along the diagonal is a classical subject and we have in fact~$G_{(\Delta+m^2)}(x,y)-C_0(x,y)\in C^1(M\times M)$. The function~$\delta G$ is also important in the context of conformal geometry where it is called the \textsf{mass function}, if more precisely we do not include the constant~$-\log m/2\pi$ in~$C_0$ but rather in~$\delta G$. See Hermann and Humbert \cite{HH}, Ludewig \cite{Ludewig} or Schoen and Yau \cite{SY} for more information.
\end{proof}

It follows that
\begin{equation}
  \ank{E_{\varepsilon}\delta_x,\left( (\Delta+m^2)^{-1}-C_0 \right)E_{\varepsilon}\delta_x}_{L^2(M)} \lto \delta G(x)
  \label{}
\end{equation}
as~$\varepsilon\to 0$ and through the limiting process of proposition \ref{prop-nelson-main} we find
\begin{equation}
  \int_{M}^{}\chi(x){:}\phi(x)^{2n}{:}_{0}\dd V_M(x) =\sum_{j=0}^{ n } \frac{(2n)!}{(2n-2j)!j!2^j} \int_{}^{}\chi(x)\delta G(x)^j {:}\phi(x)^{2n-2j}{:}_{\mm{GFF}}\dd V_M(x),
  \label{}
\end{equation}
which exist as a random variable in~$L^p(\mu_{\mm{GFF}}^M)$ for all~$2\le p<\infty$ since each term on the RHS are such by proposition \ref{prop-nelson-main} and corollary \ref{cor-nelson-all-Lp-cutoff-limit}.

  \section{Geometric Operators and Induced Laws}\label{sec-trace-poisson}

  \noindent In this section we obtain a series of rather elementary relations between various geometric-analytic operators on $M$ and on $\Omega$. The moral is that, the so-called ``sharp-time localization'' map $j_{\Sigma}$ (see lemma \ref{lemm-DN-trick}), induced probability laws of Gaussian fields under the trace $\tau_{\Sigma}$ (see section \ref{sec-ind-law}), and finally the Green-Stokes formula, are largely different aspects of the same thing.
  
  \subsection{Summary of Operators Concerned}  \label{sec-dn-map}

   Let~$(M,g)$ be a closed Riemannian manifold, and~$\Sigma\subset M$ a smooth embedded hypersurface (codimension one submanifold). The map
\begin{equation}
  \left.
  \begin{array}{rcl}
    \tau_{\Sigma}:C^{\infty}(M) & \lto & C^{\infty}(\Sigma),\\
    f &\longmapsto & f|_{\Sigma},
  \end{array}
  \right.
  \label{}
\end{equation}
is called the \textsf{trace map} from~$M$ onto~$\Sigma$. If~$(\Omega,g)$ is a compact Riemannian manifold with boundary~$\partial\Omega$, we know that for~$f\in C^{\infty}(\partial\Omega)$, the (Helmholtz) boundary value problem
\begin{equation}
  \left\{
  \begin{array}{ll}
    (\Delta_{\Omega}+m^2)u=0,&\textrm{in }\Omega,\\
    u|_{\partial\Omega}=f,&\textrm{on }\partial\Omega,
  \end{array}
  \right.
  \label{eqn-PI-pde-bdy}
\end{equation}
admits a unique solution~$u\in C^{\infty}(\ol{\Omega})$ which is extendably smooth upto~$\partial\Omega$. The solution operator
  \begin{equation}
    \left.
    \begin{array}{rcl}
      \PI_{\Omega}^{\partial\Omega}:C^{\infty}(\partial\Omega) &\lto &C^{\infty}(\ol{\Omega}),\\
      f&\longmapsto & u\textrm{ solving (\ref{eqn-PI-pde-bdy})},
    \end{array}
    \right.
    \label{}
  \end{equation}
  is called in this paper the \textsf{Poisson integral operator} from~$\partial\Omega$ to~$\Omega$, \textit{with mass}~$m>0$. We also need a variant of this operator that works for embedded hypersurfaces in closed manifolds, as $M$ and $\Sigma$ above. Pick~$f\in C^{\infty}(\Sigma)$. This time we look at the boundary value problem
\begin{equation}
  \left\{
  \begin{array}{ll}
    (\Delta_{M}+m^2)u=0,&\textrm{in }M\setminus\Sigma,\\
    u|_{\Sigma}=f,&\textrm{on }\Sigma.
  \end{array}
  \right.
  \label{eqn-PI-pde-hyp}
\end{equation}
Indeed, one views~$M\setminus \Sigma$ as a manifold with two boundaries~$\Sigma\sqcup \Sigma$, and as a result one obtains a unique solution~$u$ which is smooth on~$M\setminus \Sigma$ and one-sidedly smooth upto~$\Sigma$ respectively on its two sides. In this case we denote the solution operator by
  \begin{equation}
    \left.
    \begin{array}{rcl}
      \PI_{M}^{\Sigma}:C^{\infty}(\Sigma) &\lto &C^{\infty}(\ol{M\setminus\Sigma}),\\
      f&\longmapsto & u\textrm{ solving (\ref{eqn-PI-pde-hyp})},
    \end{array}
    \right.
    \label{eqn-def-pi-emb-hyp-in-closed-case}
  \end{equation}
  also called the \textsf{Poisson integral operator}, from~$\Sigma$ to~$M$, \textit{with mass}~$m>0$.

\begin{lemm}
  [\cite{Taylor1} page 334, 361, \cite{Eskin} page 57, example 13.3] \label{lemm-trace-prop} Let~$(M,g)$,~$\Sigma$, $\Omega$ and $\partial\Omega$ be as above. Then
  \begin{enumerate}[(i)]
    \item the map~$\tau_{\Sigma}$ extends uniquely to a continuous operator~$\tau_{\Sigma}:W^s(M)\lto W^{s-\frac{1}{2}}(\Sigma)$ for each~$s>\frac{1}{2}$;
    \item the map~$\tau_{\Sigma}:W^s(M)\lto W^{s-\frac{1}{2}}(\Sigma)$,~$s>\frac{1}{2}$, is surjective;
    \item the map~$\PI_{\Omega}^{\partial\Omega}$ extends uniquely to a continuous operator $\PI_{\Omega}^{\partial\Omega}:W^s(\partial \Omega)\lto W^{s+\frac{1}{2}}(\Omega)$ for each~$s\ge -\frac{1}{2}$.
  \end{enumerate}
\end{lemm}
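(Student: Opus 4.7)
The plan is to reduce all three statements to their model formulations on the Euclidean half-space via partitions of unity and flattening charts, then attack the model problems by Fourier analysis (for (i) and (ii)) and elliptic regularity (for (iii)). First I would pick a finite cover of a tubular neighbourhood of $\Sigma$ by charts flattening $\Sigma$ to $\{x_d=0\}\subset\mathbb{R}^d$ together with a subordinate partition of unity; since the Sobolev norms on $M$ and on $\Sigma$ are equivalent to the sums of the corresponding norms of the localized pull-backs, it suffices to prove the Euclidean statements. For (i), the trace bound follows from the identity $\widehat{\tau f}(\xi') = (2\pi)^{-1}\int \hat f(\xi',\xi_d)\,d\xi_d$ and Cauchy--Schwarz weighted by $(1+|\xi|^2)^s$; the residual one-dimensional integral $\int (1+|\xi|^2)^{-s}\,d\xi_d$ converges precisely when $s>1/2$, which simultaneously explains the threshold and yields the desired estimate. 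Uniqueness of the extension follows from density of $C^\infty(M)$ in $W^s(M)$.

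For (ii), I would build an explicit continuous right inverse. In the model case one can take $\widehat{Eg}(\xi',\xi_d):=\varphi(\xi_d/\langle\xi'\rangle)\,\hat g(\xi')/\langle\xi'\rangle$, where $\langle\xi'\rangle=(1+|\xi'|^2)^{1/2}$ and $\varphi\in C_c^\infty(\mathbb{R})$ is a bump normalized so that $\tau\circ E=\mathrm{id}$; the substitution $\xi_d=\langle\xi'\rangle t$ then gives continuity $W^{s-1/2}(\mathbb{R}^{d-1})\to W^s(\mathbb{R}^d)$ for all $s\in\mathbb{R}$. Transferring back with the same partition of unity as in (i) yields surjectivity on $\Sigma$.

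For (iii) the strategy splits by the range of $s$. When $s\ge 1/2$, I would use (ii) to extend $f\in W^s(\partial\Omega)$ to some $\tilde f\in W^{s+1/2}(\Omega)$ and write the Poisson extension as $u=\tilde f - w$, where $w$ solves the Dirichlet problem $(\Delta_\Omega+m^2)w=(\Delta_\Omega+m^2)\tilde f$ with $w|_{\partial\Omega}=0$. Since $m>0$, the Dirichlet realization of $\Delta_\Omega+m^2$ is an isomorphism between the appropriate Sobolev spaces, and standard elliptic regularity upgrades $w$ to $W^{s+1/2}(\Omega)$, giving $u$ with continuous dependence on $f$. For $-1/2\le s<1/2$, where neither the trace nor the extension is canonically available on $W^{s+1/2}(\Omega)$, I would instead \emph{define} $\PI_\Omega^{\partial\Omega}$ by duality through Green's identity, pairing the formal Poisson extension of $f$ against a test $g\in C_c^\infty(\Omega)$ as $\langle f, \partial_n v_g\rangle_{L^2(\partial\Omega)}$, where $v_g$ solves the Dirichlet problem $(\Delta_\Omega+m^2)v_g=g$, $v_g|_{\partial\Omega}=0$, and then verify that this recovers the classical operator on smooth data while extending continuously in the stated range.

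The hard part will be the negative-index case in (iii): making the adjoint Neumann-trace $g\mapsto \partial_n v_g$ continuous between the correct dual Sobolev spaces requires careful bookkeeping of a $3/2$-derivative gap (two derivatives gained from ellipticity, minus one spent on the normal derivative, minus a half spent on the boundary trace), together with ensuring that the Dirichlet resolvent of $\Delta_\Omega+m^2$ is well-defined as a distributional map in this range. The remaining pieces---surjectivity in (ii) and the positive-index case of (iii)---are essentially routine bookkeeping once the Fourier model from (i) is in place.
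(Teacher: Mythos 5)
The paper does not prove this lemma at all: it is quoted verbatim from the literature (Taylor, Eskin), so there is no internal proof to compare against. Your proposal essentially reconstructs the standard argument behind those citations — localization and boundary-flattening, the Fourier proof of the trace estimate with the $s>\frac{1}{2}$ threshold, the explicit right inverse $\widehat{Eg}(\xi',\xi_d)=\varphi(\xi_d/\langle\xi'\rangle)\hat g(\xi')/\langle\xi'\rangle$ for surjectivity, and extension-plus-Dirichlet-solve for the Poisson operator at high regularity — and all of that is correct. Your duality formula for low regularity is moreover exactly the identity the paper itself imports as Lemma \ref{lemm-adj-PI-bdy}, so the transposition route is consistent with the toolkit the thesis uses.

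The one place your sketch does not yet close is the adjoint bound in (iii) on the range $0\le s<\frac{1}{2}$. Your derivative bookkeeping for $g\mapsto \partial_\nu v_g$ (gain $2$ from ellipticity, lose $1$ from $\partial_\nu$, lose $\frac{1}{2}$ from the boundary trace) requires the interior regularity of $\partial_\nu v_g$ to exceed $\frac{1}{2}$ before taking the trace; with $g$ dual to $W^{s+1/2}(\Omega)$ this holds only for $s<0$, so exactly half of your claimed duality range $-\frac{1}{2}\le s<\frac{1}{2}$ is not covered by the naive composition. The standard fixes are either (a) to define the Neumann trace of $v_g$ weakly, via Green's identity, using that $v_g$ solves the homogeneous-right-hand-side-free elliptic equation up to the boundary (solutions admit boundary traces below the generic threshold), or (b) to prove the two endpoint cases ($s\ge\frac12$ directly, $s=-\frac12$ by transposition) and interpolate, or (c) to work with the explicit half-space solution formula and read off the mapping property from the symbol, which is Eskin's route in the cited example. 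Any of these completes your plan; as written, the "careful bookkeeping" you defer is precisely this threshold issue rather than a routine verification.
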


\begin{def7}\label{rem-Dir-Neu-cond-mean}
  In this paper, a function~$u\in C^{\infty}(\ol{\Omega})$ is said to satisfy the \textsf{Dirichlet condition} (respectively \textsf{Neumann}) if~$u|_{\partial\Omega}=0$ (respectively~$(\partial_{\nu}u)|_{\partial\Omega}=0$,~$\nu$ the outward unit normal). The~$f$ appearing in (\ref{eqn-PI-pde-bdy}) is called a \textsf{Dirichlet datum}.
\end{def7}

\begin{deef}\label{def-pi-more-general}
   More generally if~$\Omega$ has boundaries and~$\Sigma$ is \textit{either} one component of~$\partial\Omega$ \textit{or} embedded in the \textit{interior} of~$\Omega$, then we denote by~$\PI_{\Omega}^{\Sigma,B}f$ the solution with its restriction equal to~$f$ on~$\Sigma$ and boundary condition ``$B$'' (Dirichlet or Neumann) on all components of~$\partial\Omega$ \textit{except}~$\Sigma$. Such notations raise no ambiguity when the situation is understood from context.
\end{deef}

\begin{lemm}\label{lemm-pi-hyp-reg}
  Let~$(M,g)$ and~$\Sigma$ be as above. Then~$\PI_{M}^{\Sigma}$ extends uniquely to a continuous operator
  \begin{equation}
    \PI_{M}^{\Sigma}:W^{\frac{1}{2}}(\Sigma)\lto W^{1}(M).
    \label{}
  \end{equation}
\end{lemm}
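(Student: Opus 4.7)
The plan is to reduce this statement to the manifold-with-boundary case covered by lemma \ref{lemm-trace-prop} (iii), by cutting $M$ along $\Sigma$. Let $\widetilde{M}$ denote the compact Riemannian manifold with boundary obtained by slicing $M$ open along $\Sigma$; its boundary is $\partial\widetilde{M}=\Sigma_+\sqcup\Sigma_-$, two isometric copies of $\Sigma$ (if $\Sigma$ is separating, these belong to different connected components; if not, they both sit inside the single component). As measure spaces $\widetilde{M}$ and $M$ differ only by the zero-measure set $\Sigma$, so any $L^2$ function on $\widetilde{M}$ is canonically an $L^2$ function on $M$.

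Given $f\in W^{1/2}(\Sigma)$, I would set
\[
  u\defeq \PI_{\widetilde{M}}^{\partial\widetilde{M}}(f,f),
\]
where $(f,f)\in W^{1/2}(\partial\widetilde{M})$ is $f$ on each of the two boundary copies $\Sigma_\pm$. By lemma \ref{lemm-trace-prop} (iii) with $s=\tfrac{1}{2}$, applied on $\widetilde{M}$, this gives $u\in W^1(\widetilde{M})$ with a continuous bound $\|u\|_{W^1(\widetilde{M})}\lesssim \|f\|_{W^{1/2}(\Sigma)}$. For smooth $f$ the function $u$ coincides with $\PI_M^\Sigma f$ on $M\setminus\Sigma$ by the uniqueness for the boundary value problem (\ref{eqn-PI-pde-hyp}), so the construction does extend the original operator.

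The main step is then to upgrade from $u\in W^1(\widetilde{M})$ to $u\in W^1(M)$. For any smooth vector field $X\in C^\infty(M,TM)$, I would compute $\int_M u\,\ddiv(X)\,\dd V_M$ by splitting the integral across $\Sigma$ and applying the Green--Stokes formula on each side of $\Sigma$ inside $\widetilde{M}$. The bulk terms on the two sides reassemble into $-\int_{\widetilde{M}}\langle \nabla u,X\rangle\,\dd V_{\widetilde{M}}$, and the boundary integrals on $\Sigma_+$ and $\Sigma_-$ are of the form $\pm\int_\Sigma (\tau_{\Sigma_\pm}u)\langle X,\nu\rangle\,\dd V_\Sigma$. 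Since by lemma \ref{lemm-trace-prop} (i) $\tau_{\Sigma_+}u=\tau_{\Sigma_-}u=f$ in $W^{1/2}(\Sigma)$, and the outward conormals on $\Sigma_+$ and $\Sigma_-$ point in opposite directions in $M$, these two boundary contributions cancel exactly. Hence the distributional gradient of $u$ on $M$ has no singular part on $\Sigma$ and agrees with the a.e.\ defined $L^2$ vector field $\nabla u$ from $\widetilde{M}$; in particular $u\in W^1(M)$ with $\|u\|_{W^1(M)}=\|u\|_{W^1(\widetilde{M})}\lesssim \|f\|_{W^{1/2}(\Sigma)}$.

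Uniqueness of the continuous extension then follows from the density of $C^\infty(\Sigma)$ in $W^{1/2}(\Sigma)$. The only delicate point in the argument is the matching-of-traces / cancellation step in the previous paragraph, which requires verifying both that the $W^{1/2}$ boundary traces from the two sides of $\Sigma$ really coincide (which is forced by our boundary data) and the orientation bookkeeping that makes the $\Sigma_+$ and $\Sigma_-$ terms cancel rather than add; once this is in hand, everything else is a routine application of results already at our disposal.
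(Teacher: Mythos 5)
Your proposal is correct and is essentially the paper's own argument: the paper likewise views $M\setminus\Sigma$ as a compact manifold with boundary $\Sigma\sqcup\Sigma$, invokes the boundary Poisson-operator regularity to get $u\in W^1(M\setminus\Sigma)$ with the bound $\lesssim\|f\|_{W^{1/2}(\Sigma)}$, and then tests the distributional gradient against $X\in C^{\infty}(M,TM)$ via Green--Stokes, with the two boundary integrals over the copies of $\Sigma$ cancelling because both traces equal $f$ and the outward normals are opposite, so $\nabla^M u=\nabla^{M\setminus\Sigma}u$. Your additional remarks (agreement with $\PI_M^{\Sigma}$ on smooth data and uniqueness by density) are fine and only make explicit what the paper leaves implicit.
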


\begin{proof}
  Let~$f\in W^{\frac{1}{2}}(\Sigma)$ and~$u:=\PI_M^{\Sigma}f$. We know that~$u\in W^1(M\setminus\Sigma)\subset \mathcal{D}'(M\setminus\Sigma)$, this means~$u\in L^2(M\setminus \Sigma)=L^2(M)$, and~$\nabla u\in L^2(T(M\setminus \Sigma))$, as a distribution over~$M\setminus \Sigma$. The problem is to compute~$\nabla u$ as a distribution over~$M$. For this, one picks a testing vector field~$X\in C^{\infty}(M,TM)$ and applies the Green-Stokes formula to get
  \begin{align*}
    \ank{\nabla u,X}_{L^2(M,TM)}&\defeq -\ank{u,\ddiv X}_{L^2(M)}\\
    &=\int_{M\setminus\Sigma}^{}\ank{\nabla u,X}_g \dd V_M -\int_{\Sigma}^{} f\ank{X,\nu}_g \dd V_{\Sigma}-\int_{\Sigma}^{}f\ank{X,-\nu}_g\dd V_{\Sigma} \\
    &=\int_{M\setminus\Sigma}^{}\ank{\nabla u,X}_g \dd V_M.
  \end{align*}
  Here~$\nu$ is any one of the two possible unit normal vector fields along~$\Sigma$. This shows that, nevertheless,
  \begin{equation}
    \nabla^M u=\nabla^{M\setminus\Sigma}u,
    \label{}
  \end{equation}
  and hence~$\nrm{u}_{W^1(M)}=\nrm{\nabla u}_{L^2(T(M\setminus\Sigma))}+\nrm{u}_{L^2(M)}\approx\nrm{u}_{W^1(M\setminus\Sigma)}$. 
\end{proof}

  \begin{deef}
  Let~$(\Omega,g)$ be a compact Riemannian manifold with boundary~$\partial \Omega\ne\varnothing$, and~$\PI_{\Omega}^{\partial \Omega}:C^{\infty}(\partial \Omega)\to C^{\infty}(\ol{\Omega})$ the Poisson operator defined previously. Put
  \begin{equation}
    \left.
    \begin{array}{rcl}
       \DN_{\Omega}^{\partial \Omega}:C^{\infty}(\partial \Omega)&\lto & C^{\infty}(\partial \Omega),\\
       f&\longmapsto & \partial_{\nu}(\PI_{\Omega}^{\partial \Omega} f),
    \end{array}
    \right.
    \label{}
  \end{equation}
  where~$\nu=$ outward unit normal along $\partial \Omega$, called the \textsf{Dirichlet-to-Neumann operator} on $\partial\Omega$ with respect to $\Omega$.
\end{deef}

  \begin{deef} \label{def-jp-DN-map}
  Let~$(M,g)$ be a closed Riemannian manifold,~$\Sigma\subset M$ an embedded hypersurface, and~$\PI_{M}^{\Sigma}:C^{\infty}(\partial \Omega)\to C^{\infty}(M\setminus \Sigma)$ the hypersurface Poisson operator. Put
  \begin{equation}
    \left.
    \begin{array}{rcl}
       \DN_{M}^{\Sigma}:C^{\infty}(\partial \Omega)&\lto & C^{\infty}(\partial \Omega),\\
       f&\longmapsto & \partial_{\nu}(\PI_{M}^{\Sigma} f)|_{\Sigma_-}+\partial_{-\nu}(\PI_{M}^{\Sigma} f)|_{\Sigma_+},
    \end{array}
    \right.
    \label{}
  \end{equation}
  where~$\nu$ is any one of the two unit normal vector fields along~$\Sigma$, extended over a cylindrical neighborhood of~$\Sigma$. Here~$\Sigma_{-}$ and~$\Sigma_+$ means that we are taking one-sided derivatives, respectively, from the backward-time and forward-time directions with regard to the flow of~$\nu$. We call~$\DN_{M}^{\Sigma}$ the \textsf{jumpy Dirichlet-to-Neumann operator} on $\Sigma$ with respect to $M$.
\end{deef}

\begin{deef}\label{def-DN-boundary}
   Similarly if~$\Omega$ has boundaries and~$\Sigma$ is \textit{either} one component of~$\partial\Omega$ \textit{or} embedded in the \textit{interior} of~$\Omega$, then we denote by~$\DN_{\Omega}^{\Sigma,B}$ the corresponding operator with~$\PI_{M}^{\Sigma}$ replaced by~$\PI_{\Omega}^{\Sigma,B}$ in the definition (see definition \ref{def-pi-more-general}).
\end{deef}

\begin{def7}\label{rmk-dn-jump}
  If we see~$M\setminus\Sigma$ as a manifold with boundary~$\Sigma\sqcup \Sigma$, then~$\DN_M^{\Sigma}f$ is also the sum over~$\Sigma$ of the two \textit{outward} unit normal derivatives of~$\PI_M^{\Sigma}f$ along~$\Sigma\sqcup\Sigma$. Intuitively,~$\DN_M^{\Sigma}f$ describes the ``jump'' of~$\nabla\PI_M^{\Sigma}f$ across~$\Sigma$.
\end{def7}

We summarize in the following lemma the essential properties of~$\DN_{\Omega}^{\partial\Omega}$ and~$\DN_M^{\Sigma}$. Parallel results also hold for~$\DN_{\Omega}^{\Sigma,B}$ ($\Sigma$ being either one component of boundary or embedded in interior) but we shall not discuss them in order to simplify the presentation. The same applies to everything below this section.

\begin{lemm}\label{lemm-dn-prop}
  Under their respective settings,~$\DN_{\Omega}^{\partial\Omega}$ and~$\DN_M^{\Sigma}$ are such that
  \begin{enumerate}[(i)]
    \item their quadratic forms are given respectively by the Dirichlet energies of their harmonic extensions:
      \begin{align}
	\bank{f,\DN_{\Omega}^{\partial\Omega}f}_{L^2(\partial\Omega)}&=\int_{\Omega}^{}\big(|\nabla\PI_{\Omega}^{\partial\Omega}f|_g^2+m^2(\PI_{\Omega}^{\partial\Omega}f)^2\big) \dd V_{\Omega}, \\
	\bank{f,\DN_{M}^{\Sigma}f}_{L^2(\Sigma)}&=\int_{M}^{}\big(|\nabla\PI_{M}^{\Sigma}f|_g^2+m^2(\PI_{M}^{\Sigma}f)^2\big) \dd V_{M}, 
	\label{eqn-dn-qua-hyp}
      \end{align}
      for~$f\in C^{\infty}(\partial\Omega)$;
    \item they are formally self-adjoint, strictly positive, and~$L^2$-invertible;
    \item they are elliptic~$\Psi$DOs of order~$1$, with principal symbols being~$|\xi|_g$ and~$2|\xi|_g$ respectively;
    \item they afford a finer comparison with~$\mn{D}_{\partial\Omega}=(\Delta_{\partial\Omega}+m^2)^{\frac{1}{2}}$ or~$2\mn{D}_{\Sigma}=2(\Delta_{\Sigma}+m^2)^{\frac{1}{2}}$: the operators
      \begin{equation}
	\DN_{\Omega}^{\partial\Omega}-\mn{D}_{\partial\Omega},\quad \DN_{M}^{\Sigma}-2\mn{D}_{\Sigma},\quad \mn{D}_{\partial\Omega}^{-1}\DN_{\Omega}^{\partial\Omega}-\one, \quad \textrm{and}\quad (2\mn{D}_{\Sigma})^{-1}\DN_{M}^{\Sigma}-\one,
	\label{}
      \end{equation}
      are~$\Psi$DOs of orders at most~$-2$,~$-2$,~$-3$, and~$-3$ respectively. A fortiori, they are all of trace class when~$\dim \Omega=\dim M=2$ and~$\dim \Sigma=\dim \partial\Omega=1$.
  \end{enumerate}
\end{lemm}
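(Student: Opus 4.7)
The plan is to prove each of (i)--(iv) in turn. The unifying tool is the Calder\'on pseudodifferential factorization of $\Delta+m^2$ near a hypersurface, combined with integration by parts.

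For \textbf{(i)}, I would apply the Green--Stokes identity to $u=\PI_\Omega^{\partial\Omega}f$. Using $(\Delta+m^2)u=0$ on $\Omega$ and $u|_{\partial\Omega}=f$, one obtains
$$\int_\Omega\bigl(|\nabla u|_g^2+m^2 u^2\bigr)\dd V_\Omega = \int_\Omega u(\Delta+m^2)u\,\dd V_\Omega + \int_{\partial\Omega}f\cdot\partial_\nu u\,\dd V_{\partial\Omega}=\langle f,\DN_\Omega^{\partial\Omega}f\rangle_{L^2(\partial\Omega)}.$$
For the jumpy case, I would view $M\setminus\Sigma$ as a manifold with boundary $\Sigma\sqcup\Sigma$; the two outward normal derivatives of $\PI_M^\Sigma f$ sum to $\DN_M^\Sigma f$ by Remark \ref{rmk-dn-jump}, while the two Dirichlet energies recombine into the integral over $M$, giving \eqref{eqn-dn-qua-hyp}.

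Part \textbf{(ii)} is then essentially a corollary of (i). Symmetry of the right-hand side in the arguments $(f,f)$ (extended by polarization) gives formal self-adjointness; strict positivity follows directly from the Dirichlet energy expression, since its vanishing would force $u\equiv 0$ on $\Omega$ (or $M$) and hence $f\equiv 0$, with the mass term $m^2>0$ ensuring nondegeneracy even on constants. $L^2$-invertibility then combines strict positivity with the ellipticity in (iii), via the functional calculus of a strictly positive self-adjoint elliptic $\Psi$DO of positive order.

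For \textbf{(iii)}, I would work in Fermat tubular coordinates $(t,y)$ near $\partial\Omega$ (resp. $\Sigma$), in which $g=\dd t^2+h_{ij}(t,y)\dd y^i\dd y^j$ and
$$\Delta+m^2 = -\partial_t^2 - a(t,y)\partial_t + \Delta_y(t) + m^2,\qquad a=\tfrac{1}{2}\partial_t\log\det h,$$
with $\Delta_y(t)$ the Laplacian of the level-$t$ slice. Following Calder\'on, one seeks a factorization $(\partial_t+R)(-\partial_t+Q)\equiv\Delta+m^2$ modulo smoothing, where $Q,R$ are $t$-dependent tangential $\Psi$DOs. Expanding $Q=Q_1+Q_0+Q_{-1}+\cdots$ by symbol orders, the leading-order match gives $Q_1(t)^2=\Delta_y(t)+m^2$, so the principal symbol of $Q_1$ is $|\xi|_g$. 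The decaying harmonic extension is then $u\equiv\me^{-tQ(t)}f$ modulo smoothing, whence $\DN_\Omega^{\partial\Omega}\equiv Q(0)$; for the hypersurface case the two-sided extensions each contribute a copy of $Q(0)$ with the same principal symbol, yielding $\DN_M^\Sigma\equiv 2Q(0)$ and principal symbol $2|\xi|_g$.

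Part \textbf{(iv)} is what I expect to be the hardest. The plan is to continue the symbolic recursion one step further and compare $Q_0(0)$ with the sub-principal symbol of $\mn{D}_{\partial\Omega}=(\Delta_{\partial\Omega}+m^2)^{1/2}$. The order-one matching in the Calder\'on recursion expresses $Q_0(0)$ in terms of $a(0,y)$, $\partial_t h|_{t=0}$, and tangential derivatives of $Q_1$. Under the standing assumption of the thesis's Segal framework that boundaries are geodesic, the second fundamental form of $\partial\Omega\hookrightarrow\Omega$ vanishes, giving $a(0,\cdot)\equiv 0$ and $\partial_t h|_{t=0}=0$; meanwhile the sub-principal symbol of $\mn{D}_{\partial\Omega}$ vanishes identically in dimension one (by direct computation from the square root, since the underlying differential operator has no first-order term in arc-length coordinates). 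These combine to show that $\DN_\Omega^{\partial\Omega}-\mn{D}_{\partial\Omega}$ has vanishing principal and sub-principal symbol, hence is a $\Psi$DO of order $\leq -2$. Composing with $\mn{D}_{\partial\Omega}^{-1}$ (order $-1$) drops the order by one, giving order $\leq -3$ for $\mn{D}_{\partial\Omega}^{-1}\DN_\Omega^{\partial\Omega}-\one$; the jumpy versions follow identically up to the factor of $2$. Trace-class on $L^2$ of the one-dimensional $\partial\Omega$ or $\Sigma$ for operators of order $\leq -2$ then follows from Weyl's law. The main technical obstacle is organizing the symbolic recursion cleanly enough to read off the vanishing of $Q_0(0)$ and its matching with the sub-principal symbol of $\mn{D}$, rather than merely the expected cancellation at principal order.
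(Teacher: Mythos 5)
Your route is genuinely different from the paper's: the paper disposes of this lemma by citation (Taylor for (i)--(iii), Kandel--Mnev--Wernli Prop.\ A.3 for (iv)), whereas you propose a self-contained proof via Green--Stokes plus the Calder\'on factorization. Parts (i)--(iii) of your sketch are correct and standard: the integration by parts with $(\Delta+m^2)u=0$ and the outward normals is right (and your recombination of the two one-sided energies for the jumpy operator matches how the paper itself handles $\PI_M^{\Sigma}$ in its Lemma on $W^1$-extension), (ii) does follow by polarization, the mass term, and ellipticity, and the factorization argument correctly produces the principal symbols $|\xi|_g$ and $2|\xi|_g$.

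The genuine gap is in (iv), and it is an order-counting one: the difference $\DN_{\Omega}^{\partial\Omega}-\mn{D}_{\partial\Omega}$ is a priori of order $1$, so vanishing of its order-$1$ and order-$0$ symbol parts -- which is all your recursion-plus-geodesic-boundary argument establishes -- only yields order $\le -1$, not the claimed $\le -2$; correspondingly $\mn{D}_{\partial\Omega}^{-1}\DN_{\Omega}^{\partial\Omega}-\one$ would only come out of order $\le -2$, not $\le -3$. To reach the stated orders you must push the Calder\'on recursion one step further and show that the order-$(-1)$ terms also agree, e.g.\ that the $\tfrac{m^2}{2|\xi|}$ term in the symbol of $(\Delta_{\partial\Omega}+m^2)^{1/2}$ is reproduced by $Q_{-1}(0)$. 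Vanishing of the second fundamental form alone does not obviously give this: the order-$(-1)$ term of the Dirichlet-to-Neumann symbol involves second normal derivatives of the metric coefficients and ambient curvature at the boundary, so geodesic boundary is not the hypothesis doing the work. What does make the comparison go through -- and what the cited KMW proposition and the paper's own Segal setting actually use -- is that the metric is of product (flat-collar) type near $\partial\Omega$ resp.\ $\Sigma$; in that case the Taylor jet of the coefficients at the boundary is exactly cylindrical, the factorization $(\partial_t+\mn{D})(-\partial_t+\mn{D})$ is exact in the collar, and the full symbols agree to all orders, giving the claimed orders with room to spare. So either add the collar/product hypothesis explicitly and argue the all-orders matching, or carry out the order-$(-1)$ comparison honestly; as written, the jump from "principal and sub-principal cancel" to "order $\le -2$" is unjustified. (The final trace-class deduction is fine once order $\le -2$ on a one-dimensional manifold is in hand.)
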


\begin{proof}
  See Taylor \cite{Taylor3} and the references therein. For (iv) see Kandel, Mnev and Wernli \cite{KMW} proposition A.3.
\end{proof}

\subsection{Two Consequences of the Green-Stokes Formula}\label{sec-green-stokes}

\noindent Let~$(M,g)$ be a closed Riemannian manifold and~$\Sigma\subset M$ an embedded hypersurface. Formula (\ref{eqn-dn-qua-hyp}) in a slightly more general form allows one to obtain an expression for the ``distributional adjoint'' of the trace map onto~$\Sigma$.

\begin{lemm}\label{lemm-DN-trick}
  Let~$\tau_{\Sigma}:C^{\infty}(M)\to C^{\infty}(\Sigma)$,~$\phi\mapsto \phi|_{\Sigma}$ be the trace map. One has, for any~$\phi\in C^{\infty}(M)$ and~$f\in C^{\infty}(\Sigma)$,
  \begin{equation}
    \ank{\tau_{\Sigma}\phi,f}_{L^2(\partial\Omega)}=\ank{\phi,j_{\Sigma}f}_{L^2(M)}
    \label{eqn-adj-trace}
  \end{equation}
  where~$j_{\Sigma}=(\Delta+m^2)\PI_M(\DN_M^{\Sigma})^{-1}$. Moreover, this equality can be extended to~$\phi\in W^1(M)$ and~$f\in W^{-\frac{1}{2}}(\Sigma)$.
\end{lemm}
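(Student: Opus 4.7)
The proof proceeds by integration by parts. The plan is first to reduce $f$ to the form $f = \mathrm{DN}_M^\Sigma h$ for some $h \in C^\infty(\Sigma)$, which is possible by invertibility of $\mathrm{DN}_M^\Sigma$ (lemma \ref{lemm-dn-prop}(ii)), and to set $u := \Pi_M^\Sigma h$. By definition $u \in C^\infty(\overline{M \setminus \Sigma})$ (one-sided smooth on each side of $\Sigma$), $u|_\Sigma = h$ from both sides, and $(\Delta + m^2) u = 0$ on $M \setminus \Sigma$.

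Viewing $M \setminus \Sigma$ as a Riemannian manifold with two boundary copies $\Sigma_+ \sqcup \Sigma_-$, I would apply Green's identity to the smooth functions $\phi \in C^\infty(M)$ and $u$ on each of the two components of $M \setminus \Sigma$ (with outward normal $-\nu$ on $\Sigma_+$ and $+\nu$ on $\Sigma_-$, where $\nu$ is any choice of unit normal along $\Sigma$):
\begin{equation*}
  \int_{M \setminus \Sigma} \bigl[ \phi (\Delta + m^2) u - u (\Delta + m^2) \phi \bigr] \, \dd V_M = \int_{\Sigma_+ \sqcup \Sigma_-} \bigl[ u \, \partial_{\mm{out}} \phi - \phi \, \partial_{\mm{out}} u \bigr] \, \dd V_\Sigma.
\end{equation*}
The left-hand side reduces to $-\int_M u (\Delta + m^2)\phi \, \dd V_M$ since $u$ is $(\Delta+m^2)$-harmonic off $\Sigma$. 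On the right, the term $u \, \partial_{\mm{out}} \phi$ contributes zero: $u$ is continuous across $\Sigma$ (takes value $h$ on both sides), while $\phi \in C^\infty(M)$ is smooth across $\Sigma$ so that $\partial_{\mm{out}} \phi$ has opposite signs on $\Sigma_+$ and $\Sigma_-$, producing cancellation. The remaining term $\phi \, \partial_{\mm{out}} u$ likewise involves $\phi|_\Sigma = \tau_\Sigma \phi$ on both copies, while the two outward normal derivatives of $u$ sum to $\mathrm{DN}_M^\Sigma h = f$ by definition (see remark \ref{rmk-dn-jump}). Thus
\begin{equation*}
  \bank{\tau_\Sigma \phi, f}_{L^2(\Sigma)} = \int_M u (\Delta + m^2) \phi \, \dd V_M = \bank{\Pi_M^\Sigma (\mathrm{DN}_M^\Sigma)^{-1} f, (\Delta + m^2) \phi}_{L^2(M)}.
\end{equation*}
The right-hand side is, by definition of the distributional adjoint, the pairing $\langle j_\Sigma f, \phi \rangle$ where $j_\Sigma f := (\Delta + m^2)[\Pi_M^\Sigma (\mathrm{DN}_M^\Sigma)^{-1} f] \in \mathcal{D}'(M)$ (a distribution supported on $\Sigma$, as $\Pi_M^\Sigma (\mathrm{DN}_M^\Sigma)^{-1} f$ is $(\Delta+m^2)$-harmonic off $\Sigma$).

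For the Sobolev extension, I would track the continuity of each operator in the composition: $(\mathrm{DN}_M^\Sigma)^{-1} : W^{-1/2}(\Sigma) \to W^{1/2}(\Sigma)$ by ellipticity of order $1$ (lemma \ref{lemm-dn-prop}(iii)); $\Pi_M^\Sigma : W^{1/2}(\Sigma) \to W^1(M)$ by lemma \ref{lemm-pi-hyp-reg}; and $(\Delta + m^2) : W^1(M) \to W^{-1}(M)$ by elementary mapping properties. Hence $j_\Sigma : W^{-1/2}(\Sigma) \to W^{-1}(M)$ continuously, while $\tau_\Sigma : W^1(M) \to W^{1/2}(\Sigma)$ continuously by lemma \ref{lemm-trace-prop}(i). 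Both sides of (\ref{eqn-adj-trace}) then extend continuously in $(\phi, f) \in W^1(M) \times W^{-1/2}(\Sigma)$, and the identity propagates from the dense subspace $C^\infty(M) \times C^\infty(\Sigma)$ by continuity.

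The main subtlety is not in the computation itself but in the careful bookkeeping of the jump structure across $\Sigma$: making sure the ``jumpy'' definition of $\mathrm{DN}_M^\Sigma$ (definition \ref{def-jp-DN-map}) matches the sum of outward normal derivatives on $\Sigma_+ \sqcup \Sigma_-$, and checking that the terms in Green's formula involving the smooth function $\phi$ do cancel across $\Sigma$. Everything else is a routine application of Green--Stokes and Sobolev trace theory.
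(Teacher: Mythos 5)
Your proof is correct and follows essentially the same route as the paper: a Green--Stokes computation on $\ol{M\setminus\Sigma}$ viewed as a manifold with boundary $\Sigma\sqcup\Sigma$, with the jump of the outward normal derivatives identified with $\DN_M^{\Sigma}$, followed by extension to $W^1(M)\times W^{-\frac{1}{2}}(\Sigma)$ via the mapping properties of $\tau_{\Sigma}$, $\PI_M^{\Sigma}$ and $(\DN_M^{\Sigma})^{-1}$ --- the paper merely phrases the same integration by parts through the first Green identity (Dirichlet form plus lemma \ref{lemm-pi-hyp-reg}) rather than your second-identity/cancellation version. One cosmetic caveat: $M\setminus\Sigma$ need not have two components ($\Sigma$ may be non-dissecting, which is exactly the case used later for periodic covers), but this does not affect your argument since only the boundary $\Sigma_+\sqcup\Sigma_-$ of the cut manifold enters.
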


\begin{proof}
  First suppose~$h\in C^{\infty}(\Sigma)$, then applying the Green-Stokes formula to~$M\setminus\Sigma$ with boundary~$\Sigma\sqcup\Sigma$ gives
\begin{align*}
  \ank{\tau\phi,\DN_M^{\Sigma}h}_{L^2(\Sigma)}&=\int_{M\setminus\Sigma}^{} (\langle\nabla \phi,\nabla \PI_M h\rangle+ m^2 \phi(\PI_M h)) \dd V_{M} \\
&=\int_{M}^{} (\langle\nabla \phi,\nabla \PI_M h\rangle+ m^2 \phi(\PI_M h)) \dd V_{M} \tag{lemma \ref{lemm-pi-hyp-reg}}\\
&=\ank{\phi,(\Delta_M+m^2)\PI_M^{\Sigma} h}_{L^2(M)}. \tag{\#}
\end{align*}
We remark that step (\#) is the \textit{definition} of the action of~$(\Delta_M+m^2)$ on the distribution~$\PI_M^{\Sigma} h$. By lemma \ref{lemm-trace-prop}, lemma \ref{lemm-pi-hyp-reg}, and (iii) of lemma \ref{lemm-dn-prop}, then, this equality can be extended to~$\phi\in W^1(M)$ and~$h\in W^{\frac{1}{2}}(\Sigma)$. Finally, replacing~$h$ by~$(\DN_{M}^{\Sigma})^{-1}f$, with~$(\DN_M^{\Sigma})^{-1}$ being a~$\Psi$DO of order~$-1$, yields the desired relation (\ref{eqn-adj-trace}) as well as its domain.
\end{proof}

\begin{def7}
  One is advised to compare lemma \ref{lemm-DN-trick} with the fact in one dimensions that the distributional derivative of the Heaviside function~$H_a=a\cdot 1_{(0,\infty)}$ ($a\in\mb{R}$) is the delta function multiplied by the jump of~$H_a$ across~$0$, that is,
  \begin{equation}
  \ank{H_a',\varphi}_{L^2(\mb{R})}=[H_a(0+)-H_a(0-)]\cdot \varphi(0)
    \label{}
  \end{equation}
  for any~$\varphi\in \mathcal{S}(\mb{R})$. In our case the role of the Heaviside function is played by the vector field~$\nabla \PI_M^{\Sigma}(\DN_M^{\Sigma})^{-1}f$. Indeed, following remark \ref{rmk-dn-jump}, the ``jump'' of~$\nabla \PI_M^{\Sigma}(\DN_M^{\Sigma})^{-1}f$ across $\Sigma$ is exactly $f$, as the directions tangential to~$\Sigma$ does not contribute to the jump with~$(\DN_M^{\Sigma})^{-1}f$ being smooth. This comparison in mind, it is also customary to write~$j_{\Sigma}f$ as~$f\otimes \delta_{\Sigma}$, as for example, in Carron \cite{Carron}.
\end{def7}

\begin{corr}\label{cor-dn-is-conj-of-green}
  For~$f$,~$h\in C^{\infty}(\Sigma)$, we have
  \begin{equation}
    \ank{f,(\DN_M^{\Sigma})^{-1}h}_{L^2(\Sigma)}=\ank{j_{\Sigma}f,(\Delta+m^2)^{-1}j_{\Sigma}h}_{L^2(M)}.
    \label{}
  \end{equation}
  In other words,~$(\DN_M^{\Sigma})^{-1}=\tau(\Delta+m^{2})^{-1}j_{\Sigma}~(=\tau(\Delta+m^{2})^{-1}\tau^*)$.
\end{corr}

\begin{proof}
  This is immediate by noting that~$\tau_{\Sigma}\PI_M^{\Sigma}$ is the identity on~$L^2(\Sigma)$.
\end{proof}

\begin{def7}
  Indeed, noting that the Schwartz kernel~$K_{\tau}$ of~$\tau_{\Sigma}$ is the delta distribution on the diagonal~$\{(x,x)\}\subset \Sigma\times M$, and that $j_{\Sigma}$ is the distributional adjoint of $\tau_{\Sigma}$, corollary \ref{cor-dn-is-conj-of-green} allows one to deduce immediately the Schwartz kernel of~$(\DN_M^{\Sigma})^{-1}$, denoted $G_{\DN}^{\Sigma}$:
  \begin{align*}
    \ank{f,(\DN_M^{\Sigma})^{-1}h}_{L^2(\Sigma)}&=\iint_{\Sigma\times M}\iint_{\Sigma\times M} f(x)K_{\tau}(x,z)G_{(\Delta+m^2)}(z,w)K_{\tau}(y,w)h(y)\dd x\dd z\dd y\dd w \\
    &=\iint_{\Sigma\times \Sigma}f(x)G_{(\Delta+m^2)}(x,y)h(y)\dd x\dd y,
  \end{align*}
  that is,~$G_{\DN}^{\Sigma}=G_{(\Delta+m^2)}|_{\Sigma\times \Sigma}$, where $G_{(\Delta+m^2)}$ is the Helmholtz Green function on $M$, which is a well-known result. Of course, assuming this result, one could also work backwards to give lemma \ref{lemm-DN-trick} another proof, using the Poisson integral formula (lemma \ref{lemm-adj-PI-bdy} below) for $\PI_M^{\Sigma}$.
\end{def7}

Now we move to the second consequence of the Green-Stokes formula. Let~$(\Omega,g)$ be a compact Riemannian manifold with boundary~$\partial\Omega$. Recall that~$(\Delta_{\Omega,D}+m^2)^{-1}$ denotes the Helmholtz Green operator with \textit{Dirichlet} conditions on~$\partial\Omega$.

\begin{lemm}[\cite{Taylor2} page 46] \label{lemm-adj-PI-bdy}
  We have, for~$\varphi\in C^{\infty}(\partial\Omega)$ and~$f\in C_c^{\infty}(\Omega^{\circ})$,
  \begin{equation}
    \bank{\PI_{\Omega}^{\partial\Omega}\varphi,f}_{L^2(\Omega)}=
    \bank{\varphi,-\partial_{\nu}(\Delta_{\Omega,D}+m^2)^{-1}f|_{\partial\Omega}}_{L^2(\partial\Omega)},
    \label{}
  \end{equation}
  where~$\nu$, again, denotes the \textit{outward} unit normal vector field along~$\partial\Omega$.\hfill~$\Box$
\end{lemm}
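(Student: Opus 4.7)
The plan is to prove the identity by a direct application of Green's second identity, choosing the two arguments to be exactly the two Helmholtz solutions under consideration. Set $u := \PI_\Omega^{\partial\Omega}\varphi$, which by definition satisfies $(\Delta+m^2)u=0$ in $\Omega$ and $u|_{\partial\Omega}=\varphi$, and set $v := (\Delta_{\Omega,D}+m^2)^{-1}f$, which satisfies $(\Delta+m^2)v=f$ in $\Omega$ and $v|_{\partial\Omega}=0$. Since $f\in C_c^\infty(\Omega^\circ)$ and the Dirichlet Helmholtz problem has elliptic regularity up to the boundary, $v\in C^\infty(\overline{\Omega})$, so $\partial_\nu v|_{\partial\Omega}$ is a bona fide smooth function on $\partial\Omega$; likewise $u\in C^\infty(\overline{\Omega})$ from the Poisson problem. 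This ensures all integrals below are classical.

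Next I would recall Green's second identity in the sign convention of this paper (where $\Delta=-\mathrm{div}\,\nabla$ is the \emph{nonnegative} Laplacian): from the divergence theorem applied to the vector fields $u\nabla v$ and $v\nabla u$ one obtains
\begin{equation*}
\int_\Omega \bigl(u\,\Delta v - v\,\Delta u\bigr)\,dV_\Omega
= \int_{\partial\Omega}\bigl(v\,\partial_\nu u - u\,\partial_\nu v\bigr)\,dV_{\partial\Omega},
\end{equation*}
with $\nu$ the outward unit normal. Adding and subtracting the $m^2$ terms, one can equivalently write the left-hand side as $\int_\Omega [u(\Delta+m^2)v - v(\Delta+m^2)u]\,dV_\Omega$, so the mass contribution cancels automatically.

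Now I would substitute: $(\Delta+m^2)u=0$ kills the second term under the bulk integral, leaving $u\,f$, while on the boundary $v|_{\partial\Omega}=0$ kills the first term and $u|_{\partial\Omega}=\varphi$ rewrites the second, giving
\begin{equation*}
\int_\Omega (\PI_\Omega^{\partial\Omega}\varphi)\,f\,dV_\Omega
= -\int_{\partial\Omega} \varphi\,\partial_\nu v\,dV_{\partial\Omega}
= \bigl\langle \varphi,\,-\partial_\nu (\Delta_{\Omega,D}+m^2)^{-1} f\big|_{\partial\Omega}\bigr\rangle_{L^2(\partial\Omega)},
\end{equation*}
which is exactly the stated identity.

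No serious obstacle is expected here: the only subtle point is ensuring enough regularity to justify the boundary integrals and the pointwise restriction $\partial_\nu v|_{\partial\Omega}$, but with $f$ compactly supported in the interior and smoothness of $\PI_\Omega^{\partial\Omega}\varphi$ up to $\partial\Omega$ (properties already recorded in lemmas \ref{lemm-trace-prop} and \ref{lemm-dn-prop}), this is routine. If one wishes to extend the identity beyond smooth data, one can then invoke continuity of $\PI_\Omega^{\partial\Omega}:W^s(\partial\Omega)\to W^{s+\frac{1}{2}}(\Omega)$ and the $\Psi^1(\partial\Omega)$-property of $\DN_\Omega^{\partial\Omega}$, but for the stated lemma the classical Green's identity argument suffices.
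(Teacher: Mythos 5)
Your proof is correct: the paper states this lemma without proof (quoting \cite{Taylor2}), and your argument via Green's second identity — with $u=\PI_{\Omega}^{\partial\Omega}\varphi$, $v=(\Delta_{\Omega,D}+m^2)^{-1}f$, the $m^2$ terms cancelling, and the boundary terms collapsing by $v|_{\partial\Omega}=0$, $u|_{\partial\Omega}=\varphi$ — is exactly the standard derivation, with the signs handled correctly for the nonnegative Laplacian $\Delta=-\ddiv\nabla$. Your regularity remarks ($v\in C^{\infty}(\ol{\Omega})$ by elliptic regularity since $f\in C_c^{\infty}(\Omega^{\circ})$, and $u$ smooth up to $\partial\Omega$ by the definition of the Poisson operator) suffice to make every integral classical, so nothing is missing.
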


\begin{corr}
  For~$(M,g)$ a closed Riemannian manifold and~$\Sigma\subset M$ an embedded hypersurface, for~$\varphi\in C^{\infty}(\Sigma)$ and~$f\in C^{\infty}_c(M\setminus\Sigma)$,
  \begin{equation}
    \ank{\PI_M^{\Sigma}\varphi,f}_{L^2(M)}=\ank{\varphi,-(\partial_{\nu}u|_{\Sigma_-}+\partial_{-\nu}u|_{\Sigma_+})}_{L^2(\Sigma)},
    \label{}
  \end{equation}
  where~$u=(\Delta_{M\setminus\Sigma,D}+m^2)^{-1}f$, and the notations~$\Sigma_-$ and~$\Sigma_+$ have the same meanings as in definition \ref{def-jp-DN-map}.
\end{corr}

\begin{proof}
  See~$M\setminus\Sigma$ as a manifold with two boundaries~$\Sigma\sqcup\Sigma$, and we note
  \begin{equation}
    \PI_M^{\Sigma}\varphi=\PI_{M\setminus\Sigma}^{\Sigma\sqcup\Sigma}\bnom{\varphi}{\varphi},
    \label{eqn-two-PI-same-law}
  \end{equation}
  as well as
  \begin{equation}
    \Bank{\bnom{\varphi}{\varphi},-\bnom{\partial_{\nu}u|_{\Sigma_-}}{\partial_{-\nu}u|_{\Sigma_+}}}_{L^2(\Sigma\sqcup\Sigma)}=
\bank{\varphi,-(\partial_{\nu}u|_{\Sigma_-}+\partial_{-\nu}u|_{\Sigma_+})}_{L^2(\Sigma)},
    \label{}
  \end{equation}
  while applying lemma \ref{lemm-adj-PI-bdy}.
\end{proof}

\subsection{Induced Laws}\label{sec-ind-law}

Results of this section rely on the possibility of extending linear functionals or operators measurably from the Cameron-Martin space, and the (almost sure) uniqueness of such extensions, a detailed treatment of which we refer to \cite{Bogachev} sections 2.10 and 3.7.

Now let~$(M,g)$ be a closed Riemannian manifold and~$\Sigma\subset M$ an embedded hypersurface. Lemma \ref{lemm-DN-trick} then says that, for each~$f\in C^{\infty}(\Sigma)$, the random variables
\begin{equation}
  \tau_{\Sigma}\phi(f)\quad\textrm{and}\quad \phi(j_{\Sigma}f),
  \label{}
\end{equation}
while~$\phi\sim \mu_{\mm{GFF}}^M$, are (surely) equal on the Cameron-Martin space~$W^1(M)$. By \cite{Bogachev} theorem 2.10.11, they are almost surely equal over~$\mathcal{D}'(M)$. Subsequently from corollary \ref{cor-dn-is-conj-of-green} we deduce
\begin{equation}
  \mb{E}_{\mm{GFF}}^M[\tau_{\Sigma}\phi(f)\tau_{\Sigma}\phi(h)]=\mb{E}_{\mm{GFF}}^M[\phi(j_{\Sigma}f)\phi(j_{\Sigma}h)]=\ank{f,(\DN_M^{\Sigma})^{-1}h}_{L^2(\Sigma)}.
  \label{}
\end{equation}
Taking into account lemma \ref{lemm-GHS-cam-mar-of-cov-C}, (iii) of lemma \ref{lemm-dn-prop} and \cite{Bogachev} theorem 3.7.6, we have proved the following.
\begin{prop}\label{prop-induce-law-DN}
  If~$\phi\in \mathcal{D}'(M)$ follows the law of~$\mu_{\mm{GFF}}^M$, then the random field~$\tau_{\Sigma}\phi\in \mathcal{D}'(\Sigma)$ can equivalently be realized as the (centered) Gaussian field~$\tilde{\varphi}$ on~$\Sigma$ with covariance
  \begin{equation}
    \mb{E}\big[\tilde{\varphi}(f)\tilde{\varphi}(h)\big]=\ank{f,(\DN_{M}^{\Sigma})^{-1}h}_{L^2(\Sigma)},
    \label{eqn-cov-dn-field}
  \end{equation}
  for~$f$,~$h\in C^{\infty}(\Sigma)$. In other words, the measure image~$\wh{\tau_{\Sigma}}_*(\mu_{\mm{GFF}}^M)$ of~$\mu_{\mm{GFF}}^M$ under any measurable linear extension~$\wh{\tau_{\Sigma}}$ of~$\tau_{\Sigma}:W^1(M)\lto W^{\frac{1}{2}}(\Sigma)$ coincides with the measure~$\mu_{\DN}^{\Sigma,M}$ on any~$\mathcal{D}'(\Sigma)$ for the field~$\tilde{\varphi}$ satisfying (\ref{eqn-cov-dn-field}). \hfill~$\Box$
\end{prop}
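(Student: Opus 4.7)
The strategy, essentially already sketched in the paragraph preceding the statement, is to reduce the pushforward $\widehat{\tau_{\Sigma}}_*(\mu_{\mm{GFF}}^M)$ to a Gaussian measure identified via its characteristic functional. First I would note that, by lemma \ref{lemm-trace-prop}(i), $\tau_{\Sigma}$ is well-defined on the Cameron-Martin space $W^1(M)$ with image in $W^{1/2}(\Sigma)$; hence for each $f\in C^{\infty}(\Sigma)$ the linear functional $\phi\mapsto \langle \tau_{\Sigma}\phi,f\rangle_{L^2(\Sigma)}$ is a continuous linear map $W^1(M)\to \mb{R}$. By lemma \ref{lemm-DN-trick} it coincides on $W^1(M)$ with $\phi\mapsto \langle\phi,j_{\Sigma}f\rangle_{L^2(M)}$ where $j_{\Sigma}f=(\Delta+m^2)\PI_M^{\Sigma}(\DN_M^{\Sigma})^{-1}f$. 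The key observation is that $j_{\Sigma}f\in W^{-1}(M)$: indeed $(\DN_M^{\Sigma})^{-1}f\in W^{r}(\Sigma)$ for all $r$, then $\PI_M^{\Sigma}$ gains half a derivative into $W^{1}(M)$ (lemma \ref{lemm-pi-hyp-reg}), and finally $\Delta+m^2$ drops two derivatives, landing in $W^{-1}(M)$. Therefore the random variable $\phi(j_{\Sigma}f)$ is well-defined in $L^2(\mu_{\mm{GFF}}^M)$ as an element of the Gaussian Hilbert space from lemma \ref{lemm-ghs-for-gff}.

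Next I would compute the covariance. Invoking corollary \ref{cor-dn-is-conj-of-green}, for $f,h\in C^{\infty}(\Sigma)$,
\begin{equation*}
  \mb{E}_{\mm{GFF}}^M\big[\phi(j_{\Sigma}f)\phi(j_{\Sigma}h)\big]
  = \langle j_{\Sigma}f,(\Delta+m^2)^{-1}j_{\Sigma}h\rangle_{L^2(M)}
  = \langle f,(\DN_M^{\Sigma})^{-1}h\rangle_{L^2(\Sigma)}.
\end{equation*}
Since each $\phi(j_{\Sigma}f)$ is a centered Gaussian variable under $\mu_{\mm{GFF}}^M$, the family $\{\phi(j_{\Sigma}f)\}_{f\in C^{\infty}(\Sigma)}$ is a centered Gaussian process on $\Sigma$ with covariance operator $(\DN_M^{\Sigma})^{-1}$, which by lemma \ref{lemm-dn-prop}(iii) is an elliptic positive self-adjoint $\Psi$DO of order $-1$, hence a valid Gaussian covariance operator. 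By the Bochner-Minlos construction (proposition \ref{prop-boch-min}) applied to $\Sigma$ with $C=(\DN_M^{\Sigma})^{-1}$, there is a unique Gaussian Borel probability measure $\mu_{\DN}^{\Sigma,M}$ on $\mathcal{D}'(\Sigma)$ realizing this covariance.

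Finally I would handle the passage from the Cameron-Martin space to the full measure space. By Bogachev \cite{Bogachev} theorem 2.10.11, the continuous linear map $\tau_{\Sigma}:W^1(M)\to W^{1/2}(\Sigma)$, which is already in the class of maps continuous in the Cameron-Martin norm, admits a $\mu_{\mm{GFF}}^M$-measurable linear extension $\widehat{\tau_{\Sigma}}:\mathcal{D}'(M)\to \mathcal{D}'(\Sigma)$, unique up to $\mu_{\mm{GFF}}^M$-negligible modification. This extension satisfies $\widehat{\tau_{\Sigma}}\phi(f)=\phi(j_{\Sigma}f)$ almost surely for every $f\in C^{\infty}(\Sigma)$, because both sides are measurable linear extensions of the same functional on $W^1(M)$. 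Pushing forward, the characteristic functional of $\widehat{\tau_{\Sigma}}_*\mu_{\mm{GFF}}^M$ evaluated at $f$ equals $\exp(-\tfrac{1}{2}\langle f,(\DN_M^{\Sigma})^{-1}f\rangle_{L^2(\Sigma)})$ by the computation above, which by lemma \ref{lemm-gaus-mea-char-func} uniquely identifies the image measure with $\mu_{\DN}^{\Sigma,M}$.

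The main subtlety, rather than a serious obstacle, lies in the Bogachev-style extension step: one must check that $\tau_{\Sigma}$, viewed as a linear map out of the Cameron-Martin space $W^1(M)$, is indeed continuous in the Cameron-Martin norm (which it is, since $W^1$ is continuously embedded into $W^{1/2}(\Sigma)$), and that the resulting measurable extension composes compatibly with testing against arbitrary $f\in C^{\infty}(\Sigma)$. Once this is granted, everything else is a packaging of the earlier Green-Stokes identities together with the general theory of Gaussian measures on nuclear spaces.
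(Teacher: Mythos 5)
Your proposal is correct and follows essentially the same route as the paper: both identify $\tau_{\Sigma}\phi(f)$ with $\phi(j_{\Sigma}f)$ on the Cameron--Martin space via lemma \ref{lemm-DN-trick}, pass to measurable linear extensions using Bogachev's theory, and compute the covariance with corollary \ref{cor-dn-is-conj-of-green} together with lemma \ref{lemm-dn-prop}(iii). Your write-up merely spells out the regularity bookkeeping ($j_{\Sigma}f\in W^{-1}(M)$) and closes with the characteristic-functional identification, which is the same packaging the paper delegates to lemma \ref{lemm-GHS-cam-mar-of-cov-C} and Bogachev theorem 3.7.6.
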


Next we study induced random fields in the other direction, by the Poisson integral operator. Namely, for~$\Omega$,~$\partial\Omega$ as in lemma \ref{lemm-adj-PI-bdy}, given a Gaussian random field~$\varphi$ on~$\partial\Omega$, what is the law of the field~$\PI_{\Omega}^{\partial\Omega}\varphi$? From another perspective one solves the Helmholtz (Laplace) equation with random boundary conditions. We write in shorthand
\begin{equation}
    (\PI_{\Omega}^{\partial\Omega})^*\defeq -\partial_{\nu}(\Delta_{\Omega,D}+m^2)^{-1}(-)|_{\partial\Omega}.
\end{equation}
Thus lemma \ref{lemm-adj-PI-bdy} says
\begin{equation}
\bank{\PI_{\Omega}^{\partial\Omega}\varphi,f}_{L^2(\Omega)}=
  \bank{\varphi,(\PI_{\Omega}^{\partial\Omega})^* f}_{L^2(\partial\Omega)},
  \label{}
\end{equation}
for~$\varphi\in C^{\infty}(\partial\Omega)$,~$f\in C_c^{\infty}(\Omega^{\circ})$. Suppose~$\varphi$ has covariance operator~$C$ of order~$-s$ ($s>0$). Note~$\PI_{\Omega}^{\partial\Omega}$ is always well-defined on the Cameron-Martin space~$W^s(\partial\Omega)$. By the same token as above, for~$f\in C_c^{\infty}(\Omega^{\circ})$,
\begin{equation}
  (\PI_{\Omega}^{\partial\Omega}\varphi)(f)\quad\textrm{and}\quad \varphi((\PI_{\Omega}^{\partial\Omega})^*f)
  \label{}
\end{equation}
are surely equal on~$W^s(\partial\Omega)$. Moreover,
\begin{equation}
  \mb{E}_{C}^{\partial\Omega}\big[(\PI_{\Omega}^{\partial\Omega}\varphi)(f)(\PI_{\Omega}^{\partial\Omega}\varphi)(h)\big]=\bank{f,\PI_{\Omega}^{\partial\Omega} C(\PI_{\Omega}^{\partial\Omega})^*h}_{L^2(\partial\Omega)}.
\end{equation}
We deduce
\begin{prop}\label{prop-induce-PI-law}
  If~$\varphi\in \mathcal{D}'(\partial\Omega)$ is a (centered) Gaussian random field with covariance operator~$C$, then the random field~$\PI_{\Omega}^{\partial\Omega}\varphi\in \mathcal{D}'(\Omega^{\circ})$ can equivalently be realized as the (centered) Gaussian field~$\tilde{\phi}$ on~$\Omega$ with covariance
  \begin{equation}
    \mb{E}\big[\tilde{\phi}(f)\tilde{\phi}(h)\big]=\bank{f,\PI_{\Omega}^{\partial\Omega} C(\PI_{\Omega}^{\partial\Omega})^*h}_{L^2(\partial\Omega)},
    \label{eqn-cov-pi-ind-field}
  \end{equation}
  for~$f$,~$h\in C^{\infty}(\partial\Omega)$. In other words, the measure image~$\wh{\PI_{\Omega}^{\partial\Omega}}_*(\mu_{C}^{\partial\Omega})$ of~$\mu_{C}^{\partial\Omega}$ under any measurable linear extension~$\wh{\PI_{\Omega}^{\partial\Omega}}$ of~$\PI_{\Omega}^{\partial\Omega}:W^s(\partial\Omega)\lto W^{s+\frac{1}{2}}(\Omega)$ coincides with the measure for the field~$\tilde{\phi}$ satisfying (\ref{eqn-cov-pi-ind-field}). \hfill~$\Box$
\end{prop}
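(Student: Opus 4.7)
The proof will follow in close parallel to that of Proposition~\ref{prop-induce-law-DN} just carried out. The plan is to transport the pairing of $\PI_{\Omega}^{\partial\Omega}\varphi$ against test functions on $\Omega$ back to a pairing of $\varphi$ against smooth data on $\partial\Omega$, via Lemma~\ref{lemm-adj-PI-bdy}, and then to invoke the Bogachev measurable-extension machinery already used above.

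First I would record, using Lemma~\ref{lemm-adj-PI-bdy}, that for $\varphi\in C^\infty(\partial\Omega)$ and $f\in C_c^\infty(\Omega^\circ)$ one has the sure identity
\begin{equation*}
  \bank{\PI_{\Omega}^{\partial\Omega}\varphi,f}_{L^2(\Omega)}
  =\bank{\varphi,(\PI_{\Omega}^{\partial\Omega})^* f}_{L^2(\partial\Omega)},
\end{equation*}
where $(\PI_{\Omega}^{\partial\Omega})^* f:=-\partial_\nu(\Delta_{\Omega,D}+m^2)^{-1}f|_{\partial\Omega}$ lies in $C^\infty(\partial\Omega)$, since $(\Delta_{\Omega,D}+m^2)^{-1}f$ is smooth up to the boundary by elliptic regularity. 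By Lemma~\ref{lemm-GHS-cam-mar-of-cov-C} the Cameron--Martin space of $\mu_C^{\partial\Omega}$ is $W^s(\partial\Omega)$, and by Lemma~\ref{lemm-trace-prop}(iii) the map $\PI_{\Omega}^{\partial\Omega}\colon W^s(\partial\Omega)\to W^{s+1/2}(\Omega)$ is continuous. Hence by \cite{Bogachev} Theorem~2.10.11 there exists an a.s.\ unique measurable linear extension $\widehat{\PI_{\Omega}^{\partial\Omega}}$ to all of $\mathcal{D}'(\partial\Omega)$, and the two real-valued random variables $(\widehat{\PI_{\Omega}^{\partial\Omega}}\varphi)(f)$ and $\varphi((\PI_{\Omega}^{\partial\Omega})^* f)$ are both measurable linear extensions of the same continuous functional on $W^s(\partial\Omega)$, hence coincide almost surely.

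The covariance is then a direct calculation: for $f,h\in C_c^\infty(\Omega^\circ)$,
\begin{equation*}
  \mb{E}_C^{\partial\Omega}\big[\varphi((\PI^*)f)\,\varphi((\PI^*)h)\big]
  =\bank{(\PI^*)f,\,C\,(\PI^*)h}_{L^2(\partial\Omega)}
  =\bank{f,\,\PI\,C\,\PI^*\,h}_{L^2(\Omega)},
\end{equation*}
the second equality being another application of Lemma~\ref{lemm-adj-PI-bdy} to the smooth function $C(\PI^*)h\in C^\infty(\partial\Omega)$. Finally, invoking \cite{Bogachev} Theorem~3.7.6, which asserts that a centered Gaussian Borel measure on the separable Fr\'echet space $\mathcal{D}'(\Omega^\circ)$ is uniquely determined by its covariance functional on test functions, identifies $\widehat{\PI_{\Omega}^{\partial\Omega}}_{*}(\mu_C^{\partial\Omega})$ with the Gaussian measure prescribed by (\ref{eqn-cov-pi-ind-field}).

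The only mildly delicate point will be checking that $\PI C\PI^*$ is a bona fide Gaussian covariance in the sense required to apply Theorem~3.7.6 on $\mathcal{D}'(\Omega^\circ)$, rather than only a bounded positive form on test functions. This is automatic: by Lemma~\ref{lemm-trace-prop}(iii) each factor of $\PI$ gains half a Sobolev order, so $\PI C\PI^*$ is bounded $W^{-(s+1/2)}(\Omega)\to W^{s+1/2}(\Omega)$ (i.e.\ of order $-(s+1)$) and inherits positivity and symmetry from $C$; the resulting Gaussian law on $\mathcal{D}'(\Omega^\circ)$ exists by the same Bochner--Minlos construction as in Proposition~\ref{prop-boch-min}.
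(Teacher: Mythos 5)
Your proposal is correct and follows essentially the same route as the paper: the sure identity from Lemma~\ref{lemm-adj-PI-bdy} on the Cameron--Martin space $W^s(\partial\Omega)$, the almost-surely unique measurable linear extension via \cite{Bogachev}, and the covariance computation identifying the image law with the Gaussian measure of covariance $\PI_{\Omega}^{\partial\Omega} C(\PI_{\Omega}^{\partial\Omega})^*$. Your closing paragraph checking that this operator is a bona fide covariance is a reasonable extra verification the paper leaves implicit.
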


 \begin{def7}\label{rem-induce-law-equality}
    What is strictly needed for showing Segal axioms is not the full proposition \ref{prop-induce-PI-law} but rather this innocent observation: by (\ref{eqn-two-PI-same-law}), if the random field~$\varphi\in \mathcal{D}'(\Sigma)$ follows a fixed probability law, then the induced random fields~$\PI_M^{\Sigma}\varphi$ and~$\PI_{M\setminus\Sigma}^{\Sigma\sqcup\Sigma}[\smx{\varphi\\ \varphi}]$ in~$\mathcal{D}'(M\setminus\Sigma)$ follows the same law.
  \end{def7}

\subsection{A Remark on Reflection Positivity}  
\label{sec-RP-first}

\noindent As the names would suggest, the positivity of the Dirichlet-to-Neumann map (itself the consequence of the positivity of the Dirichlet energy) gives an interesting inequality comparing the resolvants of Laplacians with Dirichlet and Neumann conditions (corollary \ref{cor-rp-op-ineq}), via the Poisson integral formula (lemma \ref{lemm-adj-PI-bdy}). Let~$\Omega$,~$\partial\Omega$ be as in lemma \ref{lemm-adj-PI-bdy}.  We adopt the shorthand notations
\begin{equation}
  \quad C_D\defeq (\Delta_{\Omega,D}+m^2)^{-1},\quad C_N\defeq (\Delta_{\Omega,N}+m^2)^{-1},
  \label{}
\end{equation}
where, similar to~$C_D$,~$C_N f$ for~$f\in C_c^{\infty}(\Omega^{\circ})$ solves the Neumann boundary value problem
\begin{equation}
  \left\{
    \begin{array}{ll}
      (\Delta+m^2)(C_N f)=f,&\textrm{in }\Omega,\\
      \partial_{\nu}(C_N f)|_{\partial\Omega}=0,&\textrm{on }\partial\Omega.
    \end{array}
    \right.
  \label{}
\end{equation}
There is the following simple, elementary relation:
\begin{lemm}
  In the situation as above, we have the operator equality on~$C_c^{\infty}(\Omega^{\circ})$,
  \begin{equation}
    \PI_{\Omega}^{\partial\Omega}(\DN_{\Omega}^{\partial\Omega})^{-1}(\PI_{\Omega}^{\partial\Omega})^*=C_N-C_D.
    \label{}
  \end{equation}
\end{lemm}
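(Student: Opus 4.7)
The plan is to recognize the right-hand side $C_N f - C_D f$ as a harmonic (Helmholtz) function and identify its Dirichlet data with the Poisson extension applied to $(\DN_{\Omega}^{\partial\Omega})^{-1}(\PI_{\Omega}^{\partial\Omega})^{*}f$. Concretely, fix $f \in C_c^{\infty}(\Omega^{\circ})$ and set
\begin{equation}
u \defeq C_N f - C_D f.
\end{equation}
Since both $C_N f$ and $C_D f$ solve $(\Delta+m^2)v = f$ in $\Omega$, the difference satisfies $(\Delta+m^2)u = 0$, so $u$ is the Poisson integral of its own boundary values, i.e.\ $u = \PI_{\Omega}^{\partial\Omega}(u|_{\partial\Omega})$.

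Next I would read off the boundary data. On the one hand, $C_D f|_{\partial\Omega} = 0$ by construction, so $u|_{\partial\Omega} = C_N f|_{\partial\Omega}$. On the other hand, the Neumann condition $\partial_\nu C_N f|_{\partial\Omega}=0$ gives
\begin{equation}
\partial_\nu u|_{\partial\Omega} = -\partial_\nu C_D f|_{\partial\Omega} = (\PI_{\Omega}^{\partial\Omega})^{*}f,
\end{equation}
by the very definition of $(\PI_{\Omega}^{\partial\Omega})^{*}$ recalled just above the statement. Applying the definition of the Dirichlet-to-Neumann operator to the Poisson extension $u$ then yields
\begin{equation}
\DN_{\Omega}^{\partial\Omega}\bigl(u|_{\partial\Omega}\bigr) = \partial_\nu u|_{\partial\Omega} = (\PI_{\Omega}^{\partial\Omega})^{*}f,
\end{equation}
hence $u|_{\partial\Omega} = (\DN_{\Omega}^{\partial\Omega})^{-1}(\PI_{\Omega}^{\partial\Omega})^{*}f$, where invertibility of $\DN_{\Omega}^{\partial\Omega}$ on smooth data is guaranteed by lemma \ref{lemm-dn-prop}(ii). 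Substituting back into $u = \PI_{\Omega}^{\partial\Omega}(u|_{\partial\Omega})$ gives exactly the claimed identity.

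There is essentially no obstacle here beyond being careful about regularity. Since $f \in C_c^{\infty}(\Omega^{\circ})$ the functions $C_D f, C_N f$ are smooth up to the boundary, so all normal-derivative traces and the invocation of $\DN_{\Omega}^{\partial\Omega}$ and its inverse on smooth data are unambiguous; the operator equality is then asserted on $C_c^{\infty}(\Omega^{\circ})$ as stated. No extension to lower regularity is needed for the statement itself, although the result extends by density using the mapping properties of $\PI_{\Omega}^{\partial\Omega}$, $\DN_{\Omega}^{\partial\Omega}$ and $C_D$, $C_N$ summarized in lemmas \ref{lemm-trace-prop}--\ref{lemm-dn-prop}.
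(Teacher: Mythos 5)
Your proof is correct and is essentially the paper's argument run in reverse: the paper defines $w:=\PI_{\Omega}^{\partial\Omega}(\DN_{\Omega}^{\partial\Omega})^{-1}(-\partial_\nu C_D f|_{\partial\Omega})$ and checks that $C_D f+w$ solves the Neumann problem, while you start from $C_N f-C_D f$ and read off its Dirichlet/Neumann data; both use only the definitions of $\PI$, $\DN$, $(\PI)^{*}$ and uniqueness of the relevant boundary value problem. No gap.
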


\begin{proof}
  Pick~$f\in C_c^{\infty}(\Omega^{\circ})$, let~$u:=C_D f$ and put~$w:=\PI_{\Omega}^{\partial\Omega}(\DN_{\Omega}^{\partial\Omega})^{-1}(-\partial_{\nu} u|_{\partial\Omega})$. Then, by the definition of~$\DN_{\Omega}^{\partial\Omega}$, $w$ solves the following boundary value problem:
  \begin{equation}
    \left\{
    \begin{array}{ll}
      (\Delta+m^2)w=0,&\textrm{in }\Omega,\\
      \partial_{\nu}w|_{\partial\Omega}=-\partial_{\nu}u|_{\partial\Omega},&\textrm{on }\partial\Omega.
    \end{array}
    \right.
    \label{}
  \end{equation}
  However,
  \begin{equation}
    \left\{
    \begin{array}{ll}
      (\Delta+m^2)u=f,&\textrm{in }\Omega,\\
     u|_{\partial\Omega}=0,&\textrm{on }\partial\Omega,
    \end{array}
    \right.\quad\textrm{implying}\quad
    \left\{
    \begin{array}{ll}
      (\Delta+m^2)(u+w)=f,&\textrm{in }\Omega,\\
      \partial_{\nu}(u+w)|_{\partial\Omega}=0,&\textrm{on }\partial\Omega,
    \end{array}
    \right.
    \label{}
  \end{equation}
  namely~$u+w=C_N f$, that is,~$w=C_N f-C_D f$. We obtain the result.
\end{proof}

Now the positivity of~$\DN_{\Omega}^{\partial\Omega}$ (lemma \ref{lemm-dn-prop}) implies~$C_N\ge C_D$, namely~$\ank{f,(C_N-C_D)f}_{L^2(\Omega)}\ge 0$ for all~$f\in C_c^{\infty}(\Omega^{\circ})$. One step further,

\begin{corr}\label{cor-rp-op-ineq}
  We have~$C_N\ge C_D$ as operators on~$L^2(\Omega)$. \hfill~$\Box$
\end{corr}

It is emphasized in Jaffe and Ritter \cite{JR} section 3 that~$C_N\ge C_D$ is the crucial relation that leads to the so-called \textit{reflection positivity} (RP) of the GFF. The geometric set-up is as follows. Let now~$\partial\Omega\subset\Omega$ be totally geodesic,~$\Omega^*$ a copy of~$\Omega$ (reversing the coorientation of~$\partial\Omega$),~$|\Omega|^2:=\Omega^*\cup_{\partial\Omega}\Omega$, the \textsf{isometric double} which is a closed Riemannian manifold, and~$\Theta:|\Omega|^2\lto |\Omega|^2$ an isometric involution fixing~$\partial\Omega$, such that~$\Theta(\Omega)=\Omega^*$ and~$\Theta(\Omega^*)=\Omega$.

\begin{def7}\label{rem-real-tun-geo}
  Such~$\Omega$ is named in Gibbons \cite{Gibbons} (in the 4-dimensional case) as a \textsf{real tunnelling geometry} (see \cite{Gibbons} section 4). The isometric double $|\Omega|^2$ and the involution~$\Theta$ exist, the latter being a \textsf{reflection}, \textit{dissecting} $|\Omega|^2$ into $\Omega$ and $\Omega^*$, its fixed point set being~$\partial\Omega$ (see also Ritter \cite{Ritter} section 2.1.1).
\end{def7}

The action of~$\Theta$ extends in the usual way to~$C^{\infty}(|\Omega|^2)$ and~$\mathcal{D}'(|\Omega|^2)$ by pulling-back. This set-up brings in another resolvant operator which is
\begin{equation}
  C\defeq (\Delta_{|\Omega|^2}+m^2)^{-1}.
  \label{}
\end{equation}
Denote also by~$\Pi_+:L^2(|\Omega|^2)\lto L^2(\Omega)$ the orthogonal projection. Then we have

\begin{lemm}[\cite{JR} lemma 3]
  Let~$|\Omega|^2$,~$\Theta$,~$\Pi_+$ and~$C$ be as above. Then
  \begin{equation}
    \Pi_+ \Theta C=\frac{1}{2}(C_N-C_D)
    \label{}
  \end{equation}
  on~$C_c^{\infty}(\Omega^{\circ})$ and~$L^2(\Omega)$.\hfill~$\Box$
\end{lemm}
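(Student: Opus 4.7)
The plan is to reduce the identity to a direct computation using the fact that, since $\Theta$ is an isometric reflection fixing $\partial\Omega$ pointwise, the doubled Green function $Cf$ splits naturally into pieces that solve the Dirichlet and Neumann boundary value problems on $\Omega$.

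Let $f \in C_c^\infty(\Omega^\circ)$, extended by zero to $|\Omega|^2$, and set $u := Cf$, so that $(\Delta_{|\Omega|^2} + m^2) u = f$ on $|\Omega|^2$. Because $f \in L^2$, elliptic regularity gives $u \in W^2(|\Omega|^2)$; in particular $u$ and $\Theta^* u$ are smooth across $\partial\Omega$. Let $u_+ := u|_\Omega$ and $\tilde{u} := (\Theta^* u)|_\Omega$, so that by definition $\Pi_+ \Theta C f = \tilde{u}$. The whole point is to identify the functions
\begin{equation*}
v := u_+ + \tilde{u}, \qquad w := u_+ - \tilde{u}
\end{equation*}
on $\Omega$, as $C_N f$ and $C_D f$ respectively, so that $\tilde{u} = \tfrac{1}{2}(v - w) = \tfrac{1}{2}(C_N - C_D) f$.

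The key observation is that since $\Theta$ is an isometry of $|\Omega|^2$, the equation $(\Delta+m^2) u = f$ pulls back to $(\Delta+m^2)(\Theta^* u) = \Theta^* f$. But $\mathrm{supp}(f) \subset \Omega^\circ$ implies $\mathrm{supp}(\Theta^* f) \subset (\Omega^*)^\circ$, so $\Theta^* u$ is $m^2$-harmonic on $\Omega$, whence $(\Delta+m^2)\tilde{u} = 0$ on $\Omega$ and $(\Delta+m^2)v = (\Delta+m^2)w = f$ on $\Omega$. For the boundary conditions, I will work in a collar $(-\varepsilon,\varepsilon) \times \partial\Omega$ adapted to $\Theta$, where $\Theta(t,y) = (-t,y)$ and $\Omega$ corresponds to $t \geq 0$. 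Writing $\tilde{u}(t,y) = u(-t, y)$ for $t \geq 0$ and using the $C^1$-regularity of $u$ across $\partial\Omega$, I get $\tilde{u}(0,y) = u_+(0,y)$ and $\partial_t \tilde{u}(0,y) = -\partial_t u_+(0,y)$. Since the outward normal to $\Omega$ at $\partial\Omega$ is $-\partial_t$, this yields
\begin{equation*}
\partial_\nu v\big|_{\partial\Omega} = 0, \qquad w\big|_{\partial\Omega} = 0,
\end{equation*}
so $v = C_N f$ and $w = C_D f$ by uniqueness for the two boundary value problems. This gives the identity on $C_c^\infty(\Omega^\circ)$; extension to $L^2(\Omega)$ is immediate from the boundedness of $C$, $C_N$, $C_D$, and $\Pi_+ \Theta$ on the respective $L^2$ spaces, together with density of $C_c^\infty(\Omega^\circ)$ in $L^2(\Omega)$.

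The only mildly delicate point is justifying the pointwise matching of traces and normal derivatives across $\partial\Omega$, which is what secures the Neumann/Dirichlet identification. This reduces to verifying that $u \in W^2(|\Omega|^2)$ (standard elliptic regularity since $f \in L^2$ and $\Delta_{|\Omega|^2} + m^2$ is elliptic and invertible) and that the decomposition of traces on $\partial\Omega$ into even and odd parts under $\Theta$ corresponds precisely to Neumann and Dirichlet data; once these are in place the argument collapses to the collar computation above.
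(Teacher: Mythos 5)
Your proof is correct, and it is essentially the standard method-of-images argument that the paper implicitly relies on by citing \cite{JR}: decompose $u=Cf$ on the double into the even part $u_++\Theta^*u|_\Omega$ (Neumann) and odd part $u_+-\Theta^*u|_\Omega$ (Dirichlet) under the reflection, and invoke uniqueness of the two boundary value problems (available since $m>0$). One small wording slip: $u\in W^2(|\Omega|^2)$ does not by itself give smoothness across $\partial\Omega$; the cleaner justification is that $f\in C_c^\infty(\Omega^\circ)$ vanishes near $\partial\Omega$, so interior elliptic regularity makes $u$ smooth in a neighborhood of $\partial\Omega$ (alternatively, $W^2$ regularity already suffices to make sense of the traces and to place $v$, $w$ in the domains of $\Delta_{\Omega,N}+m^2$ and $\Delta_{\Omega,D}+m^2$), after which your collar computation and the density argument go through.
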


In summary,

\begin{corr}
  [equivalent formulations of RP] \label{cor-equiv-form-rp} In the situation as above, we have
  \begin{equation}
    2\bank{f,\Theta Cf}_{L^2(|\Omega|^2)}=\bank{f,(C_N-C_D)f}_{L^2(\Omega)}=\bank{(\PI_{\Omega}^{\partial\Omega})^*f,(\DN_{\Omega}^{\partial\Omega})^{-1}(\PI_{\Omega}^{\partial\Omega})^* f}_{L^2(\partial\Omega)}
    \label{eqn-rp-equiv-form}
  \end{equation}
  for all~$f\in C_c^{\infty}(\Omega^{\circ})$, and all of the above quantities are nonnegative. \hfill~$\Box$
\end{corr}

\begin{def7}
  Note that the 3 quantities in (\ref{eqn-rp-equiv-form}) make sense respectively on the spaces~$L^2(|\Omega|^2)$, $L^2(\Omega)$ and~$L^2(\partial\Omega)$. While the first quantity is the original view of RP, the second quantity offers a \textit{one-sided} view and the third provides a view \textit{within the boundary}~$\partial\Omega$. Nevertheless, the map~$\DN_{\Omega}^{\partial\Omega}$ reflects the geometry of the bulk, see for example Paternain, Salo and Uhlmann \cite{PSU} section 11.5.
\end{def7}

\section{Markov Property and Consequences}\label{sec-markov-main}

\noindent Materials in sections \ref{sec-sobo-decomp}, \ref{sec-stoc-decomp-gff} are largely classical with an excellent source being Dimock \cite{Dimock} which we follow roughly. See also Simon \cite{Sim2} section III.3 and see Powell and Werner \cite{powell-werner} section 4.2 for a probabilistic point of view.

  \subsection{Decompositions of Sobolev Spaces}\label{sec-sobo-decomp}

   \noindent Let~$(M,g)$ be a closed Riemannian manifold. Recall from Appendix \ref{sec-app-sobo} the definitions of the spaces~$W^s_A(M)$ and~$W^s_U(M)$ for~$A\subset M$ closed and~$U\subset M$ open. Recall also (lemma \ref{lemm-sobo-inner-prod}) that we have the isometric isomorphism~$(\Delta+m^2):W^1(M)\xlongrightarrow{\sim}W^{-1}(M)$ as we endow~$W^1(M)$ and~$W^{-1}(M)$ respectively with the inner products~$\sank{-,(\Delta+m^2)-}_{L^2}$ and~$\sank{-,(\Delta+m^2)^{-1}-}_{L^2}$.

\begin{lemm}\label{lemm-sobo-decomp}
  Let~$A\subset M$ be a closed set. Then~$W^{1}(M)$ and~$W^{-1}(M)$ decompose orthogonally as
  \begin{equation}
  \begin{tikzcd}
    W^1(M) \ar[d,"(\Delta+m^2)"',"\sim"] \ar[rr, "p_{M\setminus A}^{\perp}" near start, bend left=60] \ar[rrrr, "p_{M\setminus A}" near end, bend left=30] &[-30pt] = &[-30pt] W^1_{M\setminus A}(M)^{\perp} \ar[d,"(\Delta+m^2)"',"\sim"] &[-30pt] \oplus &[-30pt] W^1_{M\setminus A}(M)\ar[d,"(\Delta+m^2)"',"\sim"]\\ [+20pt]
    W^{-1}(M) \ar[rr, "P_{ A}"' near start, bend right=60] \ar[rrrr, "P_{A}^{\perp}"' near end, bend right=30] &[-30pt] = &[-30pt] W^{-1}_{ A}(M) & \oplus & W^{-1}_{ A}(M)^{\perp} ,
  \end{tikzcd}
  \label{eqn-sobo-decomp}
\end{equation}
which is \textsf{preserved} by the isometric isomorphism~$\Delta+m^2$. In particular, we have
\begin{equation}
  P_{ A}(\Delta+m^2)=(\Delta+m^2)p_{M\setminus A}^{\perp},\quad\textrm{and}\quad P_{ A}^{\perp}(\Delta+m^2)=(\Delta+m^2)p_{M\setminus A},
  \label{eqn-sobo-proj-comm}
\end{equation}
where~$p_{M\setminus A}^{\perp}$,~$p_{M\setminus A}$,~$P_{ A}$, and~$P_{ A}^{\perp}$ are the corresponding orthogonal projections as indicated in the diagram.
\end{lemm}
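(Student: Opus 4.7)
\emph{Proof proposal.} The plan is to deduce everything from the single observation that the map $\Delta+m^2\colon W^1(M)\xrightarrow{\sim} W^{-1}(M)$ is an isometric isomorphism and converts the $W^1$-inner product with a test function into the $L^2$-distributional pairing. The orthogonal decomposition of $W^1(M)$ is automatic because $W^1_{M\setminus A}(M)$ is, by definition, a closed subspace of the Hilbert space $W^1(M)$, and the orthogonal decomposition of $W^{-1}(M)$ is automatic for the same reason. The nontrivial content is that $\Delta+m^2$ exchanges the two summands in the way depicted, and this is what I would prove first.

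The key step is the following identification: for $\phi\in W^1(M)$ one has $\phi\in W^1_{M\setminus A}(M)^{\perp}$ if and only if $(\Delta+m^2)\phi\in W^{-1}_A(M)$. Indeed, $\phi\in W^1_{M\setminus A}(M)^{\perp}$ means $\ank{\phi,f}_{W^1}=0$ for all $f\in C_c^\infty(M\setminus A)$, which by density extends to all of $W^1_{M\setminus A}(M)$. For such a test function $f$, however,
\begin{equation*}
\ank{\phi,f}_{W^1}=\bank{\phi,(\Delta+m^2)f}_{L^2}=\bank{(\Delta+m^2)\phi,f}_{L^2},
\end{equation*}
where on the right we read the pairing as the extension of the $L^2$-pairing to the duality between $W^{-1}(M)$ and $C_c^\infty(M\setminus A)$. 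Thus $\phi\in W^1_{M\setminus A}(M)^{\perp}$ iff the distribution $(\Delta+m^2)\phi\in W^{-1}(M)$ vanishes on every $f\in C_c^\infty(M\setminus A)$, iff $\supp((\Delta+m^2)\phi)\subset A$, iff $(\Delta+m^2)\phi\in W^{-1}_A(M)$.

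Since $\Delta+m^2$ is an isometric isomorphism (lemma \ref{lemm-sobo-inner-prod}), the above equivalence gives at once
\begin{equation*}
(\Delta+m^2)\bigl(W^1_{M\setminus A}(M)^{\perp}\bigr)=W^{-1}_A(M),
\end{equation*}
and by taking orthogonal complements on both sides — an operation preserved by any isometry — it also gives $(\Delta+m^2)(W^1_{M\setminus A}(M))=W^{-1}_A(M)^{\perp}$. In particular, the decomposition $W^{-1}(M)=W^{-1}_A(M)\oplus W^{-1}_A(M)^{\perp}$ is the image under $\Delta+m^2$ of the decomposition of $W^1(M)$, which is exactly what the diagram asserts. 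The commutation relations (\ref{eqn-sobo-proj-comm}) then reduce to observing that the orthogonal projections $P_A, P_A^\perp, p_{M\setminus A}, p_{M\setminus A}^\perp$ are uniquely determined by the subspace decompositions, and that $\Delta+m^2$ conjugates one pair into the other.

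The only real subtlety, and hence where I would expect a small amount of care, lies in step two: justifying that the integration-by-parts identity $\ank{\phi,(\Delta+m^2)f}_{L^2}=\ank{(\Delta+m^2)\phi,f}_{L^2}$ holds for $\phi\in W^1(M)$ and $f\in C_c^\infty(M\setminus A)$ when $(\Delta+m^2)\phi$ is understood as an element of $W^{-1}(M)$ rather than as a function. On a closed manifold this is simply the standard weak formulation of the Helmholtz operator acting between $W^1$ and $W^{-1}$, and no boundary terms appear, so no regularity hypothesis on $\partial A$ is needed; this is in fact why we can take $A$ to be an arbitrary closed subset.
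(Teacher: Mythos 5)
Your proof is correct and takes essentially the same route as the paper's: in both, the heart of the matter is the identity converting the $W^{1}$-inner product (equivalently, the $W^{-1}$-inner product composed with $\Delta+m^2$) into the $L^2$-duality pairing, combined with a support/annihilator characterization of the subspaces. The only cosmetic difference is that you establish $(\Delta+m^2)\bigl(W^1_{M\setminus A}(M)^{\perp}\bigr)=W^{-1}_{A}(M)$ directly from the support definition of $W^{-1}_A(M)$ and the density of $C_c^{\infty}(M\setminus A)$ in $W^1_{M\setminus A}(M)$, and then pass to orthogonal complements via the isometry, whereas the paper proves $(\Delta+m^2)\bigl(W^1_{M\setminus A}(M)\bigr)=W^{-1}_{A}(M)^{\perp}$ in one step by invoking the annihilator statement of lemma \ref{lemm-sobo-dual}; the content is identical.
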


\begin{proof}
  We just need to show that the image of~$W^1_{M\setminus A}(M)$ under~$\Delta+m^2$ is precisely~$W^{-1}_{ A}(M)^{\perp}$. Indeed, by our definition of the inner products we have
  \begin{equation}
    \sank{(\Delta+m^2)u,v}_{W^{-1}(M)}=\ank{u,v}_{L^2(M)}
    \label{eqn-equal-sobo-distri-pair}
  \end{equation}
  for all~$u\in W^1(M)$,~$v\in W^{-1}(M)$, where the RHS denotes the duality (distributional) pairing. However, the \textsf{annihilator} of~$W^{-1}_{ A}(M)$ under the duality pairing is exactly~$W^1_{M\setminus A}(M)$, see lemma \ref{lemm-sobo-dual}. This translates as
  \begin{equation}
    \big\{u\in W^1(M)~\big|~\sank{(\Delta+m^2)u,v}_{W^{-1}(M)}=0\textrm{ for all }v\in W^{-1}_{ A}(M)\big\}=W^1_{M\setminus A}(M),    \label{}
  \end{equation}
  which is what we desired.
\end{proof}

\begin{corr}
  (\ref{eqn-sobo-decomp}) and (\ref{eqn-sobo-proj-comm}) holds for~$A=\Sigma\subset M$ an embedded closed hypersurface (codimension one submanifold). \hfill~$\Box$
\end{corr}

Indeed, the crucial relation (\ref{eqn-equal-sobo-distri-pair}) together with (\ref{eqn-sobo-proj-comm}) gives

\begin{corr}
  [adjoints] \label{cor-adj-sobo-proj} We have
  \begin{equation}
    \bank{p_{M\setminus A}^{\perp}u,v}_{L^2}=\bank{u,P_A v}_{L^2},\quad\textrm{and}\quad \bank{p_{M\setminus A} u,v}_{L^2}=\bank{u,P_A^{\perp} v}_{L^2},
    \label{}
  \end{equation}
  for~$u\in W^1(M)$ and~$v\in W^{-1}(M)$. \hfill~$\Box$
\end{corr}

\begin{def7}
\label{rem-sobo-proj=dist-res}
Denote~$U:=M\setminus A$. Recall from (\ref{eqn-sobo-open-set-ortho}) that~$W^{-1}(U):=W^{-1}_{ A}(M)^{\perp}$. In fact, corollary \ref{cor-adj-sobo-proj} shows~$P_{M\setminus U}^{\perp}$ coincides with the restriction map~$\rho_{M|U}:\mathcal{D}'(M)\lto \mathcal{D}'(U)$, since if already~$u\in W^1_U(M)$ then~$p_U u=i_*u$,~$i_*:W^1_U(M)\lto W^1(M)$ being inclusion.
\end{def7}

We are thus lead naturally to the following and eventually corollary \ref{cor-sobo-decomp-boundary}.

\begin{corr}\label{cor-sobo-decomp-open}
  Let~$U\subset M$ be an open set and~$F\subset U$ be a closed set. Then~$W^{1}_U(M)$ and~$W^{-1}(U)$ decompose further as
  \begin{equation}
  \begin{tikzcd}
    W^1_U(M) \ar[d,"(\Delta+m^2)"',"\sim"] &[-30pt] = &[-30pt] W^1_{U\setminus F}(M)^{\perp} \ar[d,"(\Delta+m^2)"',"\sim"] &[-30pt] \oplus &[-30pt] W^1_{U\setminus F}(M)\ar[d,"(\Delta+m^2)"',"\sim"]\\ [+20pt]
    W^{-1}(U)&[-30pt] = &[-30pt] P_{M\setminus U}^{\perp}(W^{-1}_{ F}(M)) & \oplus & W^{-1}_{ F}(M)^{\perp} ,
  \end{tikzcd}
\end{equation}
where the orthogonal complements are taken respectively \textsf{inside}~$W^1_U(M)$ and~$W^{-1}(U)$, which is \textsf{preserved} by the isometric isomorphism~$\Delta+m^2$. Also, we have commutation relations similar to (\ref{eqn-sobo-proj-comm}).
\end{corr}

\begin{proof}
  The only point needing explanation is
  \begin{equation}
    (\Delta+m^2)( W^1_{U\setminus F}(M)^{\perp}\cap W_U^1(M))=P_{M\setminus U}^{\perp}(W^{-1}_{ F}(M)).
    \label{}
  \end{equation}
  Indeed, since~$\Delta+m^2$ is bijective, by (\ref{eqn-sobo-decomp}) we have
  \begin{align*}
    \textrm{LHS}&=W^{-1}_{F\cup(M\setminus U)}(M)\cap W_{M\setminus U}^{-1}(M)^{\perp}\\
    &=(W^{-1}_{M\setminus U}(M)\oplus W^{-1}_{F}(M)) \cap W_{M\setminus U}^{-1}(M)^{\perp} \tag{$F$ and~$M\setminus U$ disjoint}\\
    &=\textrm{RHS},
  \end{align*}
  where the direct sum in the second line is non-orthogonal.
\end{proof}

 Now let~$(\Omega,g)$ be compact with (smooth) boundary, smoothly and isometrically embedded in closed~$M$ (if $\Omega$ is a \textit{real tunnelling geometry}, then one choice for~$M$ is the ``isometric double'', see remark \ref{rem-real-tun-geo}) and~$\Sigma$ be embedded in~$ \Omega^{\circ}$. Then in particular

 \begin{corr}\label{cor-sobo-decomp-boundary}
  Corollary \ref{cor-sobo-decomp-open} holds for~$U=\Omega^{\circ}$ and~$\Sigma\subset \Omega^{\circ}$ an embedded closed hypersurface. \hfill~$\Box$
\end{corr}

 \subsection{The Markov Stochastic Decomposition of GFF}\label{sec-stoc-decomp-gff}
 
 \noindent Again suppose~$(M,g)$ is a closed Riemannian manifold and~$\Sigma\subset M$ an embedded closed hypersurface with induced metric. A probabilistic point of view of the results in this section is provided in Powell and Werner \cite{powell-werner} section 4.1.

 \begin{lemm}\label{lemm-sobo-proj-trace-harmonic}
  We have
  \begin{equation}
    p_{M\setminus\Sigma}^{\perp}=\PI_M^{\Sigma}\circ \tau_{\Sigma}
    \label{}
\end{equation}
on~$W^1(M)$, where $p_{M\setminus\Sigma}^{\perp}:W^1(M)\lto W^1_{M\setminus A}(M)^{\perp}$ is the orthogonal projection as in lemma \ref{lemm-sobo-decomp}.
\end{lemm}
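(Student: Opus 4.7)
The plan is to establish the identity via the uniqueness of the orthogonal decomposition in $W^1(M) = W^1_{M\setminus\Sigma}(M)^\perp \oplus W^1_{M\setminus\Sigma}(M)$ from lemma \ref{lemm-sobo-decomp}. Given $u \in W^1(M)$, set $v := \PI_M^\Sigma \tau_\Sigma u$, which lives in $W^1(M)$ by lemmas \ref{lemm-trace-prop}(i) and \ref{lemm-pi-hyp-reg}. I will verify (a) that $u - v \in W^1_{M\setminus\Sigma}(M)$ and (b) that $v \perp W^1_{M\setminus\Sigma}(M)$ with respect to the inner product $\langle\cdot,(\Delta+m^2)\cdot\rangle_{L^2}$; then $v$ must equal the orthogonal projection $p_{M\setminus\Sigma}^\perp u$.

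For (a), since $\tau_\Sigma \PI_M^\Sigma = \mathrm{id}_{W^{1/2}(\Sigma)}$ (by construction of $\PI_M^\Sigma$ as the solution operator to the Helmholtz problem with prescribed Dirichlet datum), I obtain $\tau_\Sigma(u-v) = 0$. I then invoke the standard characterization
\[
W^1_{M\setminus\Sigma}(M) \;=\; \ker\bigl(\tau_\Sigma : W^1(M) \lto W^{1/2}(\Sigma)\bigr),
\]
which is a classical trace-zero density result analogous to the identification $W^{1,2}_0(U) = \ker \tau_{\partial U}$ in Euclidean domains; it holds because $\Sigma$ is smooth and $C_c^\infty(M\setminus\Sigma)$ is dense in $\ker \tau_\Sigma$ for the $W^1$-norm.

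For (b), I approximate an arbitrary $w \in W^1_{M\setminus\Sigma}(M)$ by $w_n \in C_c^\infty(M\setminus\Sigma)$ in the $W^1$-norm. Using the Dirichlet-form expression of the inner product (lemma \ref{lemm-sobo-inner-prod}) and the fact that $v$ is smooth and satisfies $(\Delta+m^2)v = 0$ on $M\setminus\Sigma$ (by definition of $\PI_M^\Sigma$; see also the proof of lemma \ref{lemm-pi-hyp-reg} ensuring that distributional and classical gradients coincide across $\Sigma$), Green's formula on $M\setminus\Sigma$ with $w_n$ compactly supported away from $\Sigma$ gives
\[
\langle v, w_n\rangle_{W^1(M)} \;=\; \int_{M\setminus\Sigma} \bigl(\langle\nabla v,\nabla w_n\rangle_g + m^2 v w_n\bigr)\,\dd V \;=\; \int_{M\setminus\Sigma} \bigl((\Delta+m^2)v\bigr)\, w_n\,\dd V \;=\; 0,
\]
the boundary contributions along $\Sigma$ vanishing since $\supp w_n \cap \Sigma = \varnothing$. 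Passing to the limit $n\to\infty$ by continuity of the inner product yields $\langle v, w\rangle_{W^1(M)} = 0$.

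The main conceptual obstacle is item (a), specifically the density identification $W^1_{M\setminus\Sigma}(M) = \ker \tau_\Sigma$. While this is a textbook fact, its proof requires either a local flattening of $\Sigma$ and a Hardy-type cut-off argument, or direct appeal to a reference (e.g.\ \cite{Taylor1} or the appendix); I would simply invoke it. Everything else—the surjectivity of $\tau_\Sigma$ onto $W^{1/2}(\Sigma)$, membership of $\PI_M^\Sigma \tau_\Sigma u$ in $W^1(M)$, and the integration-by-parts identity—has been set up in the preceding sections, so the proof reduces to assembling these ingredients.
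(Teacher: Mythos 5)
Your proof is correct, but it runs in the opposite direction from the paper's and leans on a different auxiliary fact. The paper fixes $f\in C^{\infty}(M)$ and shows that $p_{M\setminus\Sigma}^{\perp}f$ solves the boundary value problem that defines $\PI_M^{\Sigma}\tau_{\Sigma}f$: the support statement of lemma \ref{lemm-sobo-decomp} gives $(\Delta+m^2)p_{M\setminus\Sigma}^{\perp}f\in W^{-1}_{\Sigma}(M)$, hence Helmholtz-harmonicity on $M\setminus\Sigma$, while the \emph{easy} inclusion $W^1_{M\setminus\Sigma}(M)\subset\ker\tau_{\Sigma}$ gives $\tau_{\Sigma}p_{M\setminus\Sigma}^{\perp}=\tau_{\Sigma}$; the identity then follows from uniqueness of the Dirichlet problem (and extends to $W^1(M)$ by continuity). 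You instead verify that $v=\PI_M^{\Sigma}\tau_{\Sigma}u$ has the two defining properties of the orthogonal projection: orthogonality to $W^1_{M\setminus\Sigma}(M)$ via Green's formula (equivalently, $(\Delta+m^2)v$ is supported on $\Sigma$ and so pairs to zero with $C_c^{\infty}(M\setminus\Sigma)$), and $u-v\in W^1_{M\setminus\Sigma}(M)$, for which you must invoke the nontrivial characterization $W^1_{M\setminus\Sigma}(M)=\ker\bigl(\tau_{\Sigma}:W^1(M)\lto W^{\frac{1}{2}}(\Sigma)\bigr)$. That trace-zero density theorem is true for a smooth closed hypersurface and citable (it is the genuinely hard containment, analogous to the $H^1_0$ characterization on smooth domains), so there is no gap; but note that the paper's route never needs it, trading it for well-posedness of the Helmholtz Dirichlet problem, which is already implicit in the construction of $\PI_M^{\Sigma}$ and lemma \ref{lemm-pi-hyp-reg}. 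In short: your argument buys an appeal to uniqueness of the orthogonal decomposition at the cost of an extra classical density input; the paper's buys an appeal to uniqueness of the boundary value problem while staying entirely inside the machinery of lemma \ref{lemm-sobo-decomp}. Your step (b) is fine as written provided you note, as you do, that by lemma \ref{lemm-pi-hyp-reg} the distributional gradient of $v$ over $M$ has no singular part on $\Sigma$, and that $(\Delta+m^2)v=0$ on $M\setminus\Sigma$ persists for $W^{\frac{1}{2}}$ data by continuity plus elliptic regularity.
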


\begin{proof}
  We show that for~$f\in C^{\infty}(M)$,~$p_{M\setminus\Sigma}^{\perp} f$ solves the boundary value problem
  \begin{equation}
    \left\{
    \begin{array}{ll}
      (\Delta+m^2)u=0,&\textrm{in }M\setminus\Sigma,\\
      u|_{\Sigma}=\tau_{\Sigma}f,&\textrm{on }\Sigma.
    \end{array}
    \right.
    \label{}
  \end{equation}
  Indeed, the first condition holds since~$\supp((\Delta+m^2)p_{M\setminus\Sigma}^{\perp} f)\subset \Sigma$ by lemma \ref{lemm-sobo-decomp}. To show the second, notice~$\tau_{\Sigma}=0$ on~$W_{M\setminus\Sigma}^1(M)$ by its definition (\ref{eqn-def-sobo-open-closure}). Hence
  \begin{equation}
    \tau_{\Sigma}p_{M\setminus\Sigma}^{\perp}=\tau_{\Sigma}(\one-p_{M\setminus\Sigma})=\tau_{\Sigma},
    \label{}
  \end{equation}
  on~$W^1(M)$. We conclude the proof.
\end{proof}

\begin{prop}\label{prop-stoc-decomp-closed}
  Suppose~$\phi\in \mathcal{D}'(M)$ follows~$\mu_{\mm{GFF}}^M$. Then there is a stochastic decomposition
  \begin{equation}
    \phi=\wh{p_{M\setminus\Sigma}^{\perp}}\phi+\wh{p_{M\setminus\Sigma}}\phi\defeq \phi_{\Sigma}+\phi_{M\setminus\Sigma}^D,
    \label{eqn-stoc-decomp-gff-closed}
  \end{equation}
  into independent random fields~$\phi_{\Sigma}$ and~$\phi_{M\setminus\Sigma}^D$. More precisely, for~$f\in C^{\infty}(M)$ we define the random variables
  \begin{equation}
    \phi_{\Sigma}(f)\defeq \phi(P_{\Sigma}f),\quad\textrm{and}\quad \phi_{M\setminus\Sigma}^D(f)\defeq \phi(P_{\Sigma}^{\perp}f).
    \label{eqn-def-markov-decomp-rv}
  \end{equation}
  Then,~$\phi_{M\setminus\Sigma}^D$ follows the law of~$\mu_{\mm{GFF}}^{M\setminus\Sigma,D}$, while~$\phi_{\Sigma}$ solves the boundary value problem
  \begin{equation}
    \left\{
    \begin{array}{ll}
      (\Delta+m^2)\phi_{\Sigma}=0,&\textrm{in }M\setminus\Sigma,\\
      \phi_{\Sigma}|_{\Sigma}=\tau_{\Sigma}\phi,&\textrm{on }\Sigma,
    \end{array}
    \right.
    \label{}
  \end{equation}
  almost surely.
\end{prop}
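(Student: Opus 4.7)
My plan is to build the decomposition directly from the Sobolev splitting of Lemma \ref{lemm-sobo-decomp} and the fact that the covariance of $\mu_{\mm{GFF}}^M$ is exactly the $W^{-1}(M)$ inner product, so that the splitting on $W^{-1}(M)$ induces a splitting on the Gaussian Hilbert space into two orthogonal (hence \emph{independent}) Gaussian subspaces. More precisely, I would first observe that the random variables $\phi_{\Sigma}(f):=\phi(P_{\Sigma}f)$ and $\phi^D_{M\setminus\Sigma}(f):=\phi(P_{\Sigma}^{\perp}f)$ make sense for any $f\in C^{\infty}(M)\subset W^{-1}(M)$ because $P_\Sigma,P_\Sigma^\perp$ are bounded on $W^{-1}(M)$ and the pairing $\phi(\cdot)$ extends to $W^{-1}(M)$ (the Gaussian Hilbert space, Lemma \ref{lemm-ghs-for-gff}). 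The key numerical input is
\[
\mb{E}\big[\phi_{\Sigma}(f)\phi^D_{M\setminus\Sigma}(h)\big]=\bank{P_{\Sigma}f,\,P_{\Sigma}^{\perp}h}_{W^{-1}(M)}=0,
\]
since the splitting $W^{-1}(M)=W^{-1}_{\Sigma}(M)\oplus W^{-1}_{\Sigma}(M)^{\perp}$ is $W^{-1}$-orthogonal. Joint Gaussianity then gives full independence between the two processes.

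Next, to identify the law of $\phi^D_{M\setminus\Sigma}$, I would compute
\[
\mb{E}\big[\phi^D_{M\setminus\Sigma}(f)\phi^D_{M\setminus\Sigma}(h)\big]=\bank{P_{\Sigma}^{\perp}f,\,P_{\Sigma}^{\perp}h}_{W^{-1}(M)}\quad\text{for }f,h\in C^{\infty}_c(M\setminus\Sigma),
\]
and compare to the characterization of $\mu_{\mm{GFF}}^{M\setminus\Sigma,D}$ recalled after the definition of the Dirichlet GFF (the $W^{-1}$-formula via $P_{M\setminus \Omega^{\circ}}^{\perp}$ from Lemma \ref{lemm-diri-green-op-quad-form} with $\Omega^{\circ}=M\setminus\Sigma$): the two covariances coincide. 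Taking a countable dense set of test functions to handle almost-sure statements simultaneously, we obtain $\phi^D_{M\setminus\Sigma}\sim \mu_{\mm{GFF}}^{M\setminus\Sigma,D}$.

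For the boundary value characterization of $\phi_{\Sigma}$, the identity (\ref{eqn-sobo-proj-comm}), namely $P_{\Sigma}(\Delta+m^2)=(\Delta+m^2)p_{M\setminus\Sigma}^{\perp}$, is the workhorse. Given $f\in C_c^{\infty}(M\setminus\Sigma)$ we have $f\in W^{1}_{M\setminus\Sigma}(M)$, hence $p_{M\setminus\Sigma}^{\perp}f=0$, and therefore
\[
\bank{(\Delta+m^2)\phi_{\Sigma},f}_{L^2}=\phi_{\Sigma}\big((\Delta+m^2)f\big)=\phi\big(P_{\Sigma}(\Delta+m^2)f\big)=0\quad\text{a.s.}
\]
Picking a countable dense family of such $f$ gives $(\Delta+m^2)\phi_{\Sigma}=0$ on $M\setminus\Sigma$ almost surely, and elliptic regularity yields that $\phi_{\Sigma}$ is smooth there and admits one-sided traces on $\Sigma$. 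For the Dirichlet datum, I would use Lemma \ref{lemm-DN-trick}: for $\varphi\in C^{\infty}(\Sigma)$, the random variable $\tau_{\Sigma}\phi(\varphi)$ is a.s.\ equal to $\phi(j_{\Sigma}\varphi)$. Since $j_{\Sigma}\varphi\in W^{-1}_{\Sigma}(M)$ we have $P_{\Sigma}(j_{\Sigma}\varphi)=j_{\Sigma}\varphi$, so $\phi(j_{\Sigma}\varphi)=\phi_{\Sigma}(j_{\Sigma}\varphi)=\tau_{\Sigma}\phi_{\Sigma}(\varphi)$, which is the asserted boundary condition.

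The main obstacle I anticipate is the measure-theoretic bookkeeping needed to promote all the ``for each test function, a.s.'' statements to a single almost sure decomposition of random distributions, together with a clean justification that $\phi_{\Sigma}$ really defines a random element of $\mathcal{D}'(M)$ whose restriction to $M\setminus\Sigma$ solves the stated boundary value problem pathwise. This should be handled by choosing countable dense families in $C^{\infty}_c(M\setminus\Sigma)$ and $C^{\infty}(\Sigma)$, exploiting continuity of the projections in the Sobolev topologies, and invoking the uniqueness of measurable linear extensions from the Cameron–Martin space (cf.\ \cite{Bogachev} theorem 2.10.11) as already used in Section \ref{sec-ind-law}.
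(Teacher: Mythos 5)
Your proposal is essentially correct, and for the first two assertions it coincides with the paper's argument: independence comes from the $W^{-1}$-orthogonality of $W^{-1}_{\Sigma}(M)$ and $W^{-1}_{\Sigma}(M)^{\perp}=W^{-1}(M\setminus\Sigma)$ (joint Gaussianity upgrading orthogonality to independence), and the law of $\phi^D_{M\setminus\Sigma}$ is read off by matching its covariance $\bank{P_{\Sigma}^{\perp}f,P_{\Sigma}^{\perp}h}_{W^{-1}}$ with the defining covariance of $\mu_{\mm{GFF}}^{M\setminus\Sigma,D}$ (Lemma \ref{lemm-diri-green-op-quad-form} with $\Omega^{\circ}=M\setminus\Sigma$); the paper phrases this as an identification of Gaussian Hilbert spaces, which is the same computation.

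Where you genuinely diverge is the boundary value characterization of $\phi_{\Sigma}$. The paper proves the single deterministic operator identity $p^{\perp}_{M\setminus\Sigma}=\PI_M^{\Sigma}\circ\tau_{\Sigma}$ on $W^1(M)$ (Lemma \ref{lemm-sobo-proj-trace-harmonic}) and then invokes uniqueness of measurable linear extensions once, so that $\phi_{\Sigma}$ is identified almost surely with the Poisson extension of $\tau_{\Sigma}\phi$ --- which is exactly what ``solves the boundary value problem'' means there. You instead verify the two conditions separately: the interior equation by testing against $f\in C^{\infty}_c(M\setminus\Sigma)$ via the commutation relation $P_{\Sigma}(\Delta+m^2)=(\Delta+m^2)p^{\perp}_{M\setminus\Sigma}$ and $p^{\perp}_{M\setminus\Sigma}f=0$ (this is correct, and is in substance the first half of the paper's Lemma \ref{lemm-sobo-proj-trace-harmonic}), and the Dirichlet datum through the $j_{\Sigma}$-pairing of Lemma \ref{lemm-DN-trick}, using $P_{\Sigma}j_{\Sigma}\varphi=j_{\Sigma}\varphi$. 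This is a legitimate route, but two caveats: (a) interior elliptic regularity gives smoothness of $\phi_{\Sigma}$ in $M\setminus\Sigma$ but does \emph{not} by itself give one-sided traces on $\Sigma$; that remark of yours is unjustified, though also unnecessary since you read the boundary datum off the $j_{\Sigma}$-pairing rather than from boundary limits. (b) If one wants the stronger conclusion that $\phi_{\Sigma}$ \emph{is} the harmonic extension $\PI_M^{\Sigma}(\tau_{\Sigma}\phi)$, your two properties do not immediately imply it, because uniqueness for the Helmholtz problem is not automatic in the rough class where the paths of $\phi_{\Sigma}$ live (distributions $u$ with $(\Delta+m^2)u$ supported on $\Sigma$ include double-layer type terms); the clean fix is exactly the measurable-extension argument you cite at the end, applied to the identity $p^{\perp}_{M\setminus\Sigma}=\PI_M^{\Sigma}\tau_{\Sigma}$ rather than to the two conditions separately, which is what the paper does. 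With that adjustment your argument closes completely.
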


\begin{def7}
  The term ``stochastic decomposition'' is borrowed from Bogachev \cite{Bogachev} remark 3.7.7.
\end{def7}

\begin{proof}
  Firstly,~$\phi_{\Sigma}$ and~$\phi_{M\setminus\Sigma}^D$ are independent since by (\ref{eqn-def-markov-decomp-rv}) their Gaussian Hilbert spaces are respectively~$W_{\Sigma}^{-1}(M)$ and~$W_{\Sigma}^{-1}(M)^{\perp}=W^{-1}(M\setminus\Sigma)$. The latter also means~$\phi_{M\setminus\Sigma}^D$ follows~$\mu_{\mm{GFF}}^{M\setminus\Sigma,D}$. The last fact about~$\phi_{\Sigma}$ follows from lemma \ref{lemm-sobo-proj-trace-harmonic} and the uniqueness of the measurable linear extension (\cite{Bogachev} theorem 3.7.6).
\end{proof}

In the same vein using corollary \ref{cor-sobo-decomp-boundary}, we have also a decomposition for the Dirichlet GFF on a domain~$\Omega$ with smooth boundary, with respect to a hypersurface~$\Sigma$ (isometrically) embedded in the interior~$\Omega^{\circ}$.

\begin{prop}\label{prop-stoc-decomp-domain}
  Suppose~$\phi_{\Omega}^D\in \mathcal{D}'(\Omega^{\circ})$ follows~$\mu_{\mm{GFF}}^{\Omega,D}$. Then there is a stochastic decomposition
  \begin{equation}
    \phi=\wh{p_{\Omega^{\circ}\setminus\Sigma}^{\perp}}\phi+\wh{p_{\Omega^{\circ}\setminus\Sigma}}\phi\defeq (\phi_{\Omega}^D)_{\Sigma}+\phi_{\Omega\setminus\Sigma}^D,
    \label{}
  \end{equation}
  into independent random fields~$(\phi_{\Omega}^D)_{\Sigma}$ and~$\phi_{\Omega\setminus\Sigma}^D$. More precisely, for~$f\in W^{-1}(\Omega^{\circ})$ we define the random variables
  \begin{equation}
    (\phi_{\Omega}^D)_{\Sigma}(f)\defeq \phi_{\Omega}^D(P_{\Sigma}f),\quad\textrm{and}\quad \phi_{\Omega\setminus\Sigma}^D(f)\defeq \phi_{\Omega}^D(P_{\Sigma}^{\perp}f).
    \label{eqn-def-markov-decomp-rv-domain-diri}
  \end{equation}
  Then,~$\phi_{\Omega\setminus\Sigma}^D$ follows the law of~$\mu_{\mm{GFF}}^{\Omega\setminus\Sigma,D}$, while~$(\phi_{\Omega}^D)_{\Sigma}$ solves the boundary value problem
  \begin{equation}
    \left\{
    \begin{array}{ll}
      (\Delta+m^2)(\phi_{\Omega}^D)_{\Sigma}=0,&\textrm{in }\Omega^{\circ}\setminus\Sigma,\\
      (\phi_{\Omega}^D)_{\Sigma}|_{\Sigma}=\tau_{\Sigma}\phi_{\Omega}^D,&\textrm{on }\Sigma,\\
      (\phi_{\Omega}^D)_{\Sigma}|_{\partial\Omega}=0, &\textrm{on }\partial\Omega,
    \end{array}
    \right.
    \label{}
  \end{equation}
  almost surely. \hfill~$\Box$
\end{prop}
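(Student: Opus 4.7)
The plan is to mirror the proof of Proposition \ref{prop-stoc-decomp-closed}, replacing the orthogonal decomposition from Lemma \ref{lemm-sobo-decomp} with the refined decomposition from Corollary \ref{cor-sobo-decomp-boundary}, which is tailored precisely to the case where one works inside $W^{-1}(\Omega^{\circ})$ relative to a hypersurface $\Sigma \subset \Omega^{\circ}$. The Gaussian Hilbert space of $\mu_{\mm{GFF}}^{\Omega,D}$ is $W^{-1}(\Omega^{\circ})$ (Lemma \ref{lemm-ghs-for-gff}), and the two test-function projections $P_{\Sigma}$ and $P_{\Sigma}^{\perp}$ appearing in (\ref{eqn-def-markov-decomp-rv-domain-diri}) land, respectively, in the orthogonal summands $P_{M\setminus\Omega^{\circ}}^{\perp}(W^{-1}_{\Sigma}(M))$ and $W^{-1}_{\Sigma}(M)^{\perp}$ provided by Corollary \ref{cor-sobo-decomp-boundary}. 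Independence of $(\phi_{\Omega}^D)_{\Sigma}$ and $\phi_{\Omega\setminus\Sigma}^D$ is then automatic, as centered jointly Gaussian random variables indexed by orthogonal subspaces of a Gaussian Hilbert space are independent.

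Next I would identify the law of the ``bulk'' part $\phi_{\Omega\setminus\Sigma}^D$. Its Gaussian Hilbert space is exactly $W^{-1}_{\Sigma}(M)^{\perp} \cap W^{-1}(\Omega^{\circ}) = W^{-1}(\Omega^{\circ}\setminus\Sigma)$, with the restricted inner product. Using the fact that $\Delta + m^2$ respects the decomposition of Corollary \ref{cor-sobo-decomp-boundary} and the identity (\ref{eqn-equal-sobo-distri-pair}) between the $W^{-1}$ inner product and the $L^2$ duality pairing, one verifies that for $f \in C_c^{\infty}(\Omega^{\circ}\setminus\Sigma)$,
\begin{equation}
\mathbb{E}\big[\phi_{\Omega\setminus\Sigma}^D(f)^2\big] = \bank{P_{\Sigma}^{\perp}f,\,(\Delta_{\Omega,D}+m^2)^{-1}P_{\Sigma}^{\perp}f}_{L^2(\Omega)} = \bank{f,\,(\Delta_{\Omega\setminus\Sigma,D}+m^2)^{-1}f}_{L^2(\Omega\setminus\Sigma)},
\end{equation}
so $\phi_{\Omega\setminus\Sigma}^D \sim \mu_{\mm{GFF}}^{\Omega\setminus\Sigma,D}$ by uniqueness of Gaussian laws with prescribed covariance.

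For the harmonic part $(\phi_{\Omega}^D)_{\Sigma}$, I would prove the analogue of Lemma \ref{lemm-sobo-proj-trace-harmonic} inside $W^1_{\Omega^{\circ}}(M)$: namely that on the Cameron-Martin space, the projection onto $W^1_{\Omega^{\circ}\setminus\Sigma}(M)^{\perp}$ coincides with $\PI_{\Omega}^{\Sigma,D} \circ \tau_{\Sigma}$, where $\PI_{\Omega}^{\Sigma,D}$ is the Poisson integral operator extending from $\Sigma$ into $\Omega$ with Dirichlet condition on $\partial\Omega$ (Definition \ref{def-pi-more-general}). The argument is identical to Lemma \ref{lemm-sobo-proj-trace-harmonic}: the image is supported away from $\Omega^{\circ}\setminus\Sigma$ for the $(\Delta+m^2)$-image (hence harmonic there), has the correct trace on $\Sigma$, and lies in $W^1_{\Omega^{\circ}}(M)$ (hence has vanishing trace on $\partial\Omega$ in the Cameron-Martin sense). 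Uniqueness of the measurable linear extension (\cite{Bogachev} theorem 3.7.6 invoked as in Proposition \ref{prop-stoc-decomp-closed}) then transfers the pointwise identity on $W^1_{\Omega^{\circ}}(M)$ to the almost-sure statement.

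The main obstacle I anticipate is making rigorous the Dirichlet condition $(\phi_{\Omega}^D)_{\Sigma}|_{\partial\Omega}=0$: since $\phi_{\Omega}^D$ is a random distribution of negative regularity, the trace on $\partial\Omega$ is not literally defined pointwise, and the statement must be interpreted via the inclusion $(\phi_{\Omega}^D)_{\Sigma} \in W^{s}_{\Omega^{\circ}}(M)$ for suitable $s < 0$, which encodes vanishing on $\partial\Omega$ at the level of the supporting Sobolev space. This is essentially a statement about the support of the law of $(\phi_{\Omega}^D)_{\Sigma}$, inherited from the fact that its defining test functionals $P_{\Sigma}f$ are dual-paired with elements of $W^1_{\Omega^{\circ}\setminus\Sigma}(M)^{\perp} \subset W^1_{\Omega^{\circ}}(M)$, which is preserved by $(\Delta+m^2)$ in the decomposition of Corollary \ref{cor-sobo-decomp-boundary}. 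All remaining verifications are routine adaptations of the closed-manifold proof.
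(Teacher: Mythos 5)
Your proposal is correct and is exactly the argument the paper intends: the proposition is stated with the proof omitted as being ``in the same vein'' as Proposition \ref{prop-stoc-decomp-closed}, and your write-up carries out precisely that adaptation --- independence from the orthogonal summands of Corollary \ref{cor-sobo-decomp-boundary}, identification of $W^{-1}_{\Sigma}(M)^{\perp}\cap W^{-1}(\Omega^{\circ})=W^{-1}(\Omega^{\circ}\setminus\Sigma)$ for the law of the bulk piece, the domain analogue of Lemma \ref{lemm-sobo-proj-trace-harmonic} with $\PI_{\Omega}^{\Sigma,D}\circ\tau_{\Sigma}$, and uniqueness of the measurable linear extension. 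Your closing remark on interpreting the boundary condition through membership in $W^{1}_{\Omega^{\circ}}(M)$ (at the Cameron--Martin level) is a reasonable, if slightly more scrupulous, reading of the same almost-sure statement the paper makes in the closed case.
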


 The following version of proposition \ref{prop-stoc-decomp-closed} in the case of a dissecting (smooth isometrically embedded) hypersurface~$\Sigma\subset M$ such that~$M\setminus \Sigma=M^{\circ}_+\sqcup M^{\circ}_-$, will be useful in definition \ref{def-inter-over-domain}.

 \begin{lemm}
    We have~$W^{-1}_{M_-}(M)^{\perp}\subset W^{-1}_{M_+}(M)$ and thus~$W^{-1}_{M_+}(M)=W^{-1}_{M_-}(M)^{\perp}\oplus W^{-1}_{\Sigma}(M)$. Similarly~$W^{-1}_{M_-}(M)=W^{-1}_{M_+}(M)^{\perp}\oplus W^{-1}_{\Sigma}(M)$.
  \end{lemm}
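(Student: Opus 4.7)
The plan is to derive the inclusion from the bijection $(\Delta+m^2):W^1(M)\xrightarrow{\sim}W^{-1}(M)$ combined with the decomposition already established in lemma \ref{lemm-sobo-decomp}. Applying that lemma with $A=M_-$ (so that $M\setminus A = M\setminus M_- = M_+^\circ$ since $\Sigma\subset M_-$), we obtain
\begin{equation*}
(\Delta+m^2)\big(W^1_{M_+^\circ}(M)\big) = W^{-1}_{M_-}(M)^\perp .
\end{equation*}
So every element of $W^{-1}_{M_-}(M)^\perp$ is of the form $(\Delta+m^2)u$ with $u$ in the $W^1$-closure of $C_c^\infty(M_+^\circ)$. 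For $\varphi\in C_c^\infty(M_+^\circ)$, the distribution $(\Delta+m^2)\varphi$ is $C^\infty$ with support inside $M_+^\circ\subset M_+$, hence lies in $W^{-1}_{M_+}(M)$. Since $W^{-1}_{M_+}(M)$ is closed in $W^{-1}(M)$ and $(\Delta+m^2):W^1(M)\to W^{-1}(M)$ is continuous (indeed isometric), passing to the $W^1$-closure gives
\begin{equation*}
W^{-1}_{M_-}(M)^\perp = (\Delta+m^2)\big(\overline{C_c^\infty(M_+^\circ)}^{W^1}\big) \subset W^{-1}_{M_+}(M),
\end{equation*}
which is the first claim.

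For the direct-sum decomposition, I would intersect the ambient orthogonal splitting $W^{-1}(M)=W^{-1}_{M_-}(M)\oplus W^{-1}_{M_-}(M)^\perp$ with $W^{-1}_{M_+}(M)$. Given $v\in W^{-1}_{M_+}(M)$, write $v = P_{M_-}v + P_{M_-}^\perp v$. The second summand lies in $W^{-1}_{M_-}(M)^\perp\subset W^{-1}_{M_+}(M)$ by the first part, hence the first summand $P_{M_-}v=v-P_{M_-}^\perp v$ also lies in $W^{-1}_{M_+}(M)$. Therefore $P_{M_-}v\in W^{-1}_{M_+}(M)\cap W^{-1}_{M_-}(M)$, and a distribution whose support sits in both closed sets $M_+$ and $M_-$ has support in $M_+\cap M_- = \Sigma$, so $P_{M_-}v\in W^{-1}_\Sigma(M)$. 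Conversely $W^{-1}_\Sigma(M)\subset W^{-1}_{M_+}(M)\cap W^{-1}_{M_-}(M)$ trivially. The sum is orthogonal because $W^{-1}_\Sigma(M)\subset W^{-1}_{M_-}(M)\perp W^{-1}_{M_-}(M)^\perp$. This yields
\begin{equation*}
W^{-1}_{M_+}(M) = W^{-1}_{M_-}(M)^\perp \oplus W^{-1}_\Sigma(M).
\end{equation*}

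The second statement follows by exchanging the roles of $M_+$ and $M_-$ throughout; the argument is entirely symmetric, since the setup $M\setminus\Sigma = M_+^\circ\sqcup M_-^\circ$ is invariant under this swap. No step is a real obstacle; the one small point to be attentive to is the identification $M\setminus M_- = M_+^\circ$, which uses that $M_-$ is the closure of $M_-^\circ$ and that $\Sigma=\partial M_\pm$ belongs to $M_-$ when we take $A=M_-$.
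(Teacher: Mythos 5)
Your proof is correct and follows essentially the same route as the paper: apply lemma \ref{lemm-sobo-decomp} with $A=M_-$ to identify $W^{-1}_{M_-}(M)^{\perp}=(\Delta+m^2)\big(W^1_{M_+^{\circ}}(M)\big)$ and then use locality of $\Delta+m^2$ (via density of $C_c^{\infty}(M_+^{\circ})$) to see these distributions are supported in $M_+$. The only difference is that you also write out the ``thus'' step (the orthogonal splitting $W^{-1}_{M_+}(M)=W^{-1}_{M_-}(M)^{\perp}\oplus W^{-1}_{\Sigma}(M)$ via support considerations), which the paper leaves implicit; that argument is fine.
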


  \begin{proof}
    By lemma \ref{lemm-sobo-decomp} with~$A=M_-$ we see that~$W^{-1}_{M_-}(M)^{\perp}=(\Delta+m^2)(W^1_{M_+^{\circ}}(M))$. These distributions are supported in~$M_+$ since~$\Delta+m^2$ is local.
  \end{proof}

\begin{corr}\label{cor-stoc-decomp-disec}
  Suppose~$\phi\in \mathcal{D}'(M)$ follows~$\mu_{\mm{GFF}}^M$. Then there is a stochastic decomposition
 \begin{equation}
   \phi=\wh{p_{M\setminus\Sigma}^{\perp}}\phi+\wh{p_{M_+^{\circ}}}\phi +\wh{p_{M_-^{\circ}}}\phi\defeq \phi_{\Sigma}+\phi_{M_+^{\circ}}^D +\phi_{M_-^{\circ}}^D,
    \label{}
  \end{equation} 
  into independent random fields. More precisely, for~$f\in C^{\infty}(M)$ the random variables are defined by
  \begin{equation}
    \phi_{\Sigma}(f)\defeq \phi(P_{\Sigma}f),\quad \phi_{M_+^{\circ}}^D(f)\defeq \phi(P_{M_-}^{\perp}f), \quad\textrm{and}\quad \phi_{M_-^{\circ}}^D(f)\defeq \phi(P_{M_+}^{\perp}f),
  \end{equation}
  and with~$\phi_{\Sigma}$ the same as in proposition \ref{prop-stoc-decomp-closed},~$\phi_{M_+^{\circ}}^D$ and~$\phi_{M_-^{\circ}}^D$ follows respectively~$\mu_{\mm{GFF}}^{M_+^{\circ},D}$ and~$\mu_{\mm{GFF}}^{M_-^{\circ},D}$. Moreover,~$(\phi_{\Sigma}+\phi_{M_+^{\circ}}^D)(f)=\phi(P_{M_+}f)$,~$(\phi_{\Sigma}+\phi_{M_-^{\circ}}^D)(f)=\phi(P_{M_-}f)$. \hfill~$\Box$
\end{corr}

\subsection{The Bayes Principle Applied to GFF}\label{sec-bayes-gff}

\noindent Let now~$M$ be a closed Riemannian manifold and~$\Sigma_1$,~$\Sigma_2\subset M$ \textit{non-intersecting} isometrically embedded smooth closed hypersurfaces. The goal of this section is to derive the Bayes principle relating the probability laws of the two random fields~$\tau_{\Sigma_1} \phi$ and~$\tau_{\Sigma_2}\phi$ where~$\phi$ is the GFF on~$M$. To avoid convolving with nomenclatures of conditional probabilities we prefer a direct measure theoretic argument, although these are clearly equivalent. Throughout this section we identify continuous linear maps of Cameron-Martin spaces with their measurable extensions to distributional~$Q$-spaces, as well as their induced actions on measures.

To begin with, by proposition \ref{prop-stoc-decomp-closed} we have a stochastic decomposition
\begin{equation}
  \phi=\phi_{\Sigma_1}+\phi_{M\setminus\Sigma_1}^D
  \label{}
\end{equation}
for~$\phi\sim \mu_{\mm{GFF}}^M$, the two components \textit{independent} of each other. Then, apply~$\tau_{\Sigma_2}$ we get
\begin{align}
  \tau_{\Sigma_2}\phi&=\tau_{\Sigma_2}\phi_{\Sigma_1}+\tau_{\Sigma_2}\phi_{M\setminus \Sigma_1}^D=\tau_{\Sigma_2}\PI_{M}^{\Sigma_1}\tau_{\Sigma_1}\phi+\tau_{\Sigma_2}\phi_{M\setminus \Sigma_1}^D \\
  &\defeq \mathcal{M}_{M,2}^1 \tau_{\Sigma_1}\phi +\tau_{\Sigma_2}\phi_{M\setminus \Sigma_1}^D.
  \label{eqn-stoc-decomp-with-M}
\end{align}
Here we define
\begin{equation}
  \mathcal{M}_{M,2}^1\defeq \tau_{\Sigma_2}\PI_{M}^{\Sigma_1}:\left\{
    \begin{array}{rcl}
      C^{\infty}(\Sigma_1)&\lto& C^{\infty}(\Sigma_2),\\
      W^{\frac{1}{2}}(\Sigma_1) &\lto &W^{\frac{1}{2}}(\Sigma_2),
    \end{array}
    \right.
  \label{eqn-trans-op-def}
\end{equation}
called the \textsf{transition operator/propagator}. Define accordingly
\begin{equation}
  \left.
  \begin{array}{rcl}
    \mathcal{G}_{M,2}^1:W^{\frac{1}{2}}(\Sigma_1)\times W^{\frac{1}{2}}(\Sigma_2) &\lto& W^{\frac{1}{2}}(\Sigma_1)\times W^{\frac{1}{2}}(\Sigma_2),\\
    (\varphi_1,h)&\longmapsto & (\varphi_1,h+\mathcal{M}_{M,2}^1\varphi_1).
  \end{array}
  \right.
  \label{}
\end{equation}
called the \textsf{graph operator}.
\begin{lemm}\label{lemm-pre-bayes}
  We have
  \begin{equation}
    \tau_{\Sigma_1\sqcup \Sigma_2}(\mu_{\mm{GFF}}^{M})=(\mathcal{G}_{M,2}^1)_*( \mu_{\DN}^{\Sigma_1,M} \otimes \mu_{\DN,D}^{\Sigma_2,M\setminus \Sigma_1}),
    \label{}
  \end{equation}
  where~$\mu_{\DN}^{\Sigma_1,M}$ and~$\mu_{\DN,D}^{\Sigma_2,M\setminus \Sigma_1}$ are Gaussian measures on $\mathcal{D}'(\Sigma_1)$ and $\mathcal{D}'(\Sigma_2)$ with covariances $(\DN_M^{\Sigma_1})^{-1}$ and $(\DN_{M\setminus\Sigma_1}^{\Sigma_2,D})^{-1}$, respectively.
\end{lemm}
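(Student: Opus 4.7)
The plan is to read off the joint law directly from the stochastic decomposition already set up in (\ref{eqn-stoc-decomp-with-M}), and identify each independent component with one of the factor measures on the right-hand side via the induced-law results of Section \ref{sec-ind-law}.

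First, I would use Proposition \ref{prop-stoc-decomp-closed} applied to the hypersurface $\Sigma_1\subset M$ to write $\phi=\phi_{\Sigma_1}+\phi_{M\setminus\Sigma_1}^D$ with the two summands independent and with laws given there. Taking traces (extended measurably to the $Q$-space, as in Proposition \ref{prop-induce-law-DN}) yields
\begin{equation}
\tau_{\Sigma_1}\phi=\tau_{\Sigma_1}\phi_{\Sigma_1},\qquad \tau_{\Sigma_2}\phi=\mathcal{M}_{M,2}^{1}\,\tau_{\Sigma_1}\phi+\tau_{\Sigma_2}\phi_{M\setminus\Sigma_1}^D,
\end{equation}
the first equality because $\phi_{M\setminus\Sigma_1}^D$ almost surely has vanishing trace on $\Sigma_1$ (its Cameron--Martin space is $W^1_{M\setminus\Sigma_1}(M)$, on which the $\Sigma_1$-trace vanishes by definition), and the second equality just by linearity and the fact that $\tau_{\Sigma_2}\circ \PI_M^{\Sigma_1}=\mathcal{M}_{M,2}^1$ on the Cameron--Martin space.

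Next, I would identify the two independent pieces. By Proposition \ref{prop-induce-law-DN}, $\tau_{\Sigma_1}\phi$ has law $\mu_{\DN}^{\Sigma_1,M}$. Viewing $M\setminus\Sigma_1$ as a manifold with boundary $\Sigma_1\sqcup\Sigma_1$ and $\Sigma_2\subset (M\setminus\Sigma_1)^\circ$ as an interior hypersurface, the same proof (now using the Dirichlet Poisson integral $\PI^{\Sigma_2,D}_{M\setminus\Sigma_1}$ from Definition \ref{def-pi-more-general} and the jumpy Dirichlet version of the DN operator from Definition \ref{def-DN-boundary}) shows that $\tau_{\Sigma_2}\phi^D_{M\setminus\Sigma_1}$ has law $\mu_{\DN,D}^{\Sigma_2,M\setminus\Sigma_1}$. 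Independence of the two trace variables is inherited from the independence of $\phi_{\Sigma_1}$ and $\phi^D_{M\setminus\Sigma_1}$, since each is a measurable function of a single component.

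Finally, writing $(\varphi_1,h):=(\tau_{\Sigma_1}\phi,\,\tau_{\Sigma_2}\phi^D_{M\setminus\Sigma_1})$, the display above becomes $\tau_{\Sigma_1\sqcup\Sigma_2}\phi=(\varphi_1,\,h+\mathcal{M}_{M,2}^1\varphi_1)=\mathcal{G}_{M,2}^1(\varphi_1,h)$, which is exactly the assertion of the lemma.

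The main obstacle, and the only delicate point, is to legitimize writing $\mathcal{M}_{M,2}^1\tau_{\Sigma_1}\phi$ in the distributional $Q$-space setting: the operator $\mathcal{M}_{M,2}^1$ is initially defined only on $W^{1/2}(\Sigma_1)$ as in (\ref{eqn-trans-op-def}), while samples of $\tau_{\Sigma_1}\phi$ live in a negative Sobolev space. This is circumvented by the crucial observation that $\Sigma_1\cap\Sigma_2=\varnothing$: the Poisson harmonic extension $\PI_M^{\Sigma_1}\varphi_1$ is smooth away from $\Sigma_1$ by elliptic regularity, so $\tau_{\Sigma_2}\circ \PI_M^{\Sigma_1}$ extends to a continuous linear map $W^{-s}(\Sigma_1)\lto C^\infty(\Sigma_2)$ for every $s\geq 0$ (via a parametrix supported away from $\Sigma_2$, or equivalently via the smooth Schwartz kernel $G_{(\Delta+m^2)}|_{\Sigma_2\times\Sigma_1}$). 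Almost-sure uniqueness of the measurable linear extension (\cite{Bogachev} theorem 3.7.6) then ensures the identity holds $\mu^M_{\mm{GFF}}$-almost surely, completing the argument.
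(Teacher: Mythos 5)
Your proof is correct and follows essentially the same route as the paper: decompose $\phi=\phi_{\Sigma_1}+\phi^D_{M\setminus\Sigma_1}$ via Proposition \ref{prop-stoc-decomp-closed}, apply $\tau_{\Sigma_1\sqcup\Sigma_2}$, and identify the laws of the two independent pieces with $\mu_{\DN}^{\Sigma_1,M}$ and $\mu_{\DN,D}^{\Sigma_2,M\setminus\Sigma_1}$ via Proposition \ref{prop-induce-law-DN}, so that the joint law is the pushforward under the graph map $\mathcal{G}_{M,2}^1$. Your extra care about extending $\mathcal{M}_{M,2}^1$ measurably to the distributional sample space (using smoothness of the Poisson extension away from $\Sigma_1$ and a.s.\ uniqueness of measurable linear extensions) is a correct elaboration of what the paper handles by its blanket convention at the start of Subsection \ref{sec-bayes-gff}.
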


\begin{proof}
  Following (\ref{eqn-stoc-decomp-with-M}),~$\tau_{\Sigma_2}\phi_{M\setminus \Sigma_1}^D=\tau_{\Sigma_2}\phi-\mathcal{M}_{M,2}^1 \tau_{\Sigma_1}\phi$ is independent of~$\tau_{\Sigma_1}\phi$, and follows the law~$\mu_{\DN,D}^{\Sigma_2,M\setminus \Sigma_1}$, while~$\tau_{\Sigma_1}\phi$ follows the law~$\mu_{\DN}^{\Sigma_1,M}$, both by proposition \ref{prop-induce-law-DN}.
\end{proof}

We could also define the operators~$\mathcal{M}_{M,1}^2$ and~$\mathcal{G}_{M,1}^2$ with the roles of~$\Sigma_1$ and~$\Sigma_2$ switched. Note that lemma \ref{lemm-pre-bayes} is entirely symmetric under the switching of~$\Sigma_1$ and~$\Sigma_2$. Recall the measures~$\mu_{2\mn{D}}^{\Sigma}$ from corollary \ref{corr-rad-niko-dense}.

  \begin{prop}[Bayes Principle for GFF]\label{prop-bayes-gff}
    Let now~$M$ be a closed Riemannian manifold and~$\Sigma_1$,~$\Sigma_2\subset M$ \textit{non-intersecting} isometrically embedded smooth closed hypersurfaces. We have equality of Radon-Nikodym densities
    \begin{align}
      \frac{\dd\tau_{\Sigma_1\sqcup \Sigma_2}(\mu_{\mm{GFF}}^{M}) }{\dd \mu_{2\mn{D}}^{\Sigma_1\sqcup \Sigma_2}}(\varphi_1,\varphi_2)
  &=\frac{\dd  \mu_{\DN}^{\Sigma_1,M}}{\dd\mu_{2\mn{D}}^{\Sigma_1}} (\varphi_1)
  \frac{\dd \big((\mathcal{M}_{M,2}^1 \varphi_1)_*\mu_{\DN,D}^{\Sigma_2,M\setminus \Sigma_1}\big)}{\dd\mu_{2\mn{D}}^{ \Sigma_2}}(\varphi_2) \label{eqn-bayes-gff-1}\\
  &=\frac{\dd  \mu_{\DN}^{\Sigma_2,M}}{\dd\mu_{2\mn{D}}^{\Sigma_2}} (\varphi_2)
  \frac{\dd \big((\mathcal{M}_{M,1}^2 \varphi_2)_*\mu_{\DN,D}^{\Sigma_1,M\setminus \Sigma_2}\big)}{\dd\mu_{2\mn{D}}^{ \Sigma_1}}(\varphi_1). \label{eqn-bayes-gff-2}
    \end{align}
    Here~$(\varphi)_*$ denotes the shift induced by~$\varphi$ on measures defined by $((\varphi)_*\mu)(A):=\mu(A-\varphi)$.
  \end{prop}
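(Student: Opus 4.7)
The plan is to derive the proposition as a direct consequence of Lemma \ref{lemm-pre-bayes}, which identifies the trace measure $\tau_{\Sigma_1\sqcup\Sigma_2}(\mu_{\mm{GFF}}^M)$ as the pushforward under the graph operator $\mathcal{G}_{M,2}^1$ of the product measure $\mu_{\DN}^{\Sigma_1,M} \otimes \mu_{\DN,D}^{\Sigma_2,M\setminus\Sigma_1}$. Since $\mathcal{G}_{M,2}^1$ acts trivially on the first coordinate and by a $\varphi_1$-dependent shift on the second, Fubini gives the disintegration
\begin{equation*}
\tau_{\Sigma_1\sqcup\Sigma_2}(\mu_{\mm{GFF}}^M)(\dd\varphi_1,\dd\varphi_2) = \dd\mu_{\DN}^{\Sigma_1,M}(\varphi_1)\,\dd\big[(\mathcal{M}_{M,2}^1\varphi_1)_*\mu_{\DN,D}^{\Sigma_2,M\setminus\Sigma_1}\big](\varphi_2).
\end{equation*}
Using independence of the GFFs on disjoint circles (lemma \ref{lemm-gaus-field-disj-indep}) to write the reference measure as $\mu_{2\mn{D}}^{\Sigma_1\sqcup\Sigma_2}=\mu_{2\mn{D}}^{\Sigma_1}\otimes\mu_{2\mn{D}}^{\Sigma_2}$, the identity (\ref{eqn-bayes-gff-1}) will follow once each factor in the disintegration is converted into a Radon-Nikodym density against its one-circle reference.

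For the first factor, this is exactly Corollary \ref{corr-rad-niko-dense}, which gives $\dd\mu_{\DN}^{\Sigma_1,M}/\dd\mu_{2\mn{D}}^{\Sigma_1}$. For the conditional factor, I need to show that, for $\mu_{\DN}^{\Sigma_1,M}$-almost every $\varphi_1$, the shifted measure $(\mathcal{M}_{M,2}^1\varphi_1)_*\mu_{\DN,D}^{\Sigma_2,M\setminus\Sigma_1}$ is absolutely continuous with respect to $\mu_{2\mn{D}}^{\Sigma_2}$ and compute the density. The argument has two steps: first, a variant of Corollary \ref{corr-rad-niko-dense} applied to $\Sigma_2$ inside $M\setminus\Sigma_1$ (with Dirichlet condition on $\Sigma_1$) yields mutual absolute continuity between $\mu_{\DN,D}^{\Sigma_2,M\setminus\Sigma_1}$ and $\mu_{2\mn{D}}^{\Sigma_2}$, with an explicit Gaussian density obtained from the corresponding quadratic perturbation; second, the Cameron-Martin theorem handles the shift, provided $\mathcal{M}_{M,2}^1\varphi_1 = \tau_{\Sigma_2}\PI_M^{\Sigma_1}\varphi_1$ lies in the Cameron-Martin space $W^{1/2}(\Sigma_2)$.

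This last point is in fact very favorable here: because $\Sigma_1$ and $\Sigma_2$ are disjoint, interior elliptic regularity applied to $(\Delta+m^2)\PI_M^{\Sigma_1}\varphi_1=0$ on $M\setminus\Sigma_1$ shows that $\PI_M^{\Sigma_1}\varphi_1$ is smooth on a neighborhood of $\Sigma_2$ for every tempered $\varphi_1$, so $\mathcal{M}_{M,2}^1\varphi_1\in C^\infty(\Sigma_2)$ almost surely, which lies comfortably in $W^{1/2}(\Sigma_2)$. Assembling the density of $\mu_{\DN,D}^{\Sigma_2,M\setminus\Sigma_1}$ against $\mu_{2\mn{D}}^{\Sigma_2}$ with the Cameron-Martin cocycle for the shift yields the second factor on the right-hand side of (\ref{eqn-bayes-gff-1}).

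The identity (\ref{eqn-bayes-gff-2}) then follows immediately by the symmetry of the left-hand side in $(\Sigma_1,\Sigma_2)$, re-running the whole argument with the roles of the two hypersurfaces exchanged (i.e.\ decomposing $\phi$ with respect to $\Sigma_2$ first in proposition \ref{prop-stoc-decomp-closed}). The main obstacle I anticipate is the bookkeeping in the second step above: one must carefully combine the Radon-Nikodym density coming from the quadratic perturbation $\DN_{M\setminus\Sigma_1}^{\Sigma_2,D}-2\mn{D}_{\Sigma_2}$ with the Cameron-Martin cocycle for the shift, and verify that the resulting conditional density pairs with $\dd\mu_{\DN}^{\Sigma_1,M}/\dd\mu_{2\mn{D}}^{\Sigma_1}$ through Fubini to reproduce the joint density against $\mu_{2\mn{D}}^{\Sigma_1\sqcup\Sigma_2}$; the rest of the argument is a compact assembly of results already established in the preceding sections.
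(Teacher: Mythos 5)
Your proposal is correct and takes essentially the same route as the paper: Lemma \ref{lemm-pre-bayes} combined with the change-of-variables formula and Fubini yields the disintegration of $\tau_{\Sigma_1\sqcup\Sigma_2}(\mu_{\mm{GFF}}^M)$ into $\mu_{\DN}^{\Sigma_1,M}$ and the shifted conditional measure, and (\ref{eqn-bayes-gff-2}) follows by exchanging the roles of $\Sigma_1$ and $\Sigma_2$. The additional verification you carry out that the shifted conditional measure is absolutely continuous with respect to $\mu_{2\mn{D}}^{\Sigma_2}$ (via the Dirichlet analogue of Corollary \ref{corr-rad-niko-dense} and Cameron-Martin, using elliptic regularity to see that $\mathcal{M}_{M,2}^1\varphi_1$ is smooth near $\Sigma_2$) is left implicit in the paper's proof but is correct and consistent with it.
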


  \begin{proof}
    We just need to prove (\ref{eqn-bayes-gff-1}). In other words,~$(\mathcal{M}_{M,2}^1 \varphi_1)_*\mu_{\DN,D}^{\Sigma_2,M\setminus \Sigma_1}$ is the conditional law of~$\tau_{\Sigma_2}\phi$ provided ``$\tau_{\Sigma_1}=\varphi_1$''. The proof is straightforward. For positive (Borel) measurable functionals~$F\in L^+(\mathcal{D}'(\Sigma_1))$,~$G\in L^+(\mathcal{D}'(\Sigma_2))$, we have by lemma \ref{lemm-pre-bayes}, the change of variables formula, and Fubini's theorem,
\begin{align*}
    \int_{}^{} (F\otimes G)(\varphi_1,\varphi_2)\dd\tau_{\Sigma_1\sqcup \Sigma_2}(\mu_{\mm{GFF}}^{M})&= \iint_{}^{}(F\otimes G)\circ \mathcal{G}_{M,2}^1(\varphi_1,\varphi_2)\dd\mu_{\DN}^{\Sigma_1,M} \otimes \dd \mu_{\DN,D}^{\Sigma_2,M\setminus \Sigma_1} \\
    &=\int_{}^{}F(\varphi_1)\dd  \mu_{\DN}^{\Sigma_1,M}(\varphi_1)\int G(\varphi_2+\mathcal{M}_{M,2}^1 \varphi_1)\dd \mu_{\DN,D}^{\Sigma_2,M\setminus \Sigma_1}(\varphi_2)  \\
  &=\int_{}^{}F(\varphi_1)\dd  \mu_{\DN}^{\Sigma_1,M}\int_{}^{}G(\varphi_2)\dd(\mathcal{M}_{M,2}^1 \varphi_1)_*\mu_{\DN,D}^{\Sigma_2,M\setminus \Sigma_1}. 
\end{align*}
as~$F$ and~$G$ range over all positive measurable functionals, we obtain the result.
  \end{proof}

\begin{def7}
  Applying corollary \ref{corr-rad-niko-dense} and the Cameron-Martin-Girsanov formula (\cite{Bogachev} corollary 2.4.3), one obtains a relation between quadratic forms (the corresponding relation for determinants also works out by BFK),
  \begin{align*}
    &\quad \Bank{\bnom{\varphi_1}{x},\DN_{M}^{\Sigma_1\sqcup\Sigma_2}\bnom{\varphi_1}{x}}_{L^2(\Sigma_1\sqcup\Sigma_2)}\\
    =&\quad \ank{\varphi_1,\DN_M^{\Sigma_1}\varphi_1}_{L^2(\Sigma_1)}+\ank{x,\DN_{M\setminus\Sigma_1}^{\Sigma_2,D}x}_{L^2(\Sigma_2)}\\
    &\quad -2
    \ank{x,\DN_{M\setminus\Sigma_1}^{\Sigma_2,D}\mathcal{M}_{M,2}^1\varphi_1}_{L^2(\Sigma_2)}+\ank{\mathcal{M}_{M,2}^1\varphi_1,\DN_{M\setminus\Sigma_1}^{\Sigma_2,D}\mathcal{M}_{M,2}^1\varphi_1}_{L^2(\Sigma_2)}
  \end{align*}
  which, of course, can also be obtained directly noting
  \begin{align*}
    \ank{\varphi_1,\DN_M^{\Sigma_1}\varphi_1}_{L^2(\Sigma_1)}&=\Bank{\bnom{\varphi_1}{\mathcal{M}_{M,2}^1\varphi_1},\DN_{M}^{\Sigma_1\sqcup\Sigma_2}\bnom{\varphi_1}{\mathcal{M}_{M,2}^1\varphi_1}}_{L^2(\Sigma_1\sqcup\Sigma_2)},\\
\ank{x,\DN_{M\setminus\Sigma_1}^{\Sigma_2,D}x}_{L^2(\Sigma_2)}&=\Bank{\bnom{0}{x},\DN_{M}^{\Sigma_1\sqcup\Sigma_2}\bnom{0}{x}}_{L^2(\Sigma_1\sqcup\Sigma_2)},
  \end{align*}
  et cetera, and expanding the left hand side by writing~$[\smx{\varphi_1\\x}]=[\smx{\varphi_1\\ \mathcal{M}_{M,2}^1\varphi_1}]+[\smx{0\\ x-\mathcal{M}_{M,2}^1\varphi_1}]$.
\end{def7}

\begin{def7}
  In the case~$M=\Sigma\times \mb{R}$ with~$\Sigma_1:=\Sigma\times \{0\}$,~$\Sigma_2:=\Sigma\times \{t\}$, and Dirichlet condition at~$\Sigma\times \{\pm \infty\}$ (namely decay at~$\infty$), we have the explicit expression
  \begin{equation}
    \mathcal{M}^1_{M,2}=\me^{-t\mn{D}_{\Sigma}},
    \label{}
  \end{equation}
  where~$\mn{D}_{\Sigma}=(\Delta_{\Sigma}+m^2)^{1/2}$. Also, in this case~$\mn{D}_{\Sigma}=\DN_{M_+}^{\Sigma}$,~$M_+=\Sigma\times [0,+\infty)$.
\end{def7}

\subsection{Locality of the $P(\phi)$ interaction and the Restricted GFF}\label{sec-locality}

In this subsection we pronounce what is meant by the \textsf{locality} of the $P(\phi)$-interaction and prove it. Let~$M$ be a closed Riemannian surface and~$\Omega^{\circ}\subset M$ a domain with smooth boundary~$\partial\Omega$. Denote~$\Omega=\ol{\Omega^{\circ}}$. The formulation will be based on the domain $\sigma$-algebras defined by the GFF random varibales $\ank{-,f}_{L^2}$, $f\in C^{\infty}(M)$. That is, for~$A\subset M$ a closed set, we define the~$\sigma$-algebra
\begin{equation}
  \mathcal{O}_A\defeq \sigma\left( \phi(f)~|~f\in W_A^{-1}(M) \right).
  \label{}
\end{equation}

\begin{def7}
  In general one could consider \textsf{regular} domains~$\Omega^{\circ}$. A domain is called \textsf{regular} if~$W^{-1}_{\Omega^{\circ}}(M)=W^{-1}_{\Omega}(M)$, the two spaces defined by (\ref{eqn-def-sobo-open-closure}) and (\ref{eqn-def-sobo-closed-support}). See also Simon \cite{Sim2} page 267.
\end{def7}
Suppose~$\chi \in C^{\infty}(M)$. We define
\begin{equation}
  S_{\Omega,\varepsilon,\chi}(\phi)\defeq \int_{M}^{}\chi(x) {:}P(\phi_{\Omega,\varepsilon}^{D}(x)+\phi_{\partial\Omega,\varepsilon}(x)){:}\dd V_M(x),
  \label{}
\end{equation}
where
\begin{equation}
  \phi_{\Omega,\varepsilon}^{D}(x)\defeq \phi(P_{\ol{\Omega^c}}^{\perp} E_{\varepsilon}\delta_x),\quad \phi_{\partial\Omega,\varepsilon}(x)\defeq\phi(P_{\partial\Omega}E_{\varepsilon}\delta_x),\quad \phi\sim \mu_{\mm{GFF}}^M.
  \label{}
\end{equation}

\begin{prop}[locality]\label{prop-locality}
  Whenever~$\chi\in C_c^{\infty}(\Omega^{\circ})$, we have, for any fixed Wick ordering~${:}\bullet{:}$,
  \begin{equation}
    S_{M,\chi}(\phi)=\int_{M}^{}\chi(x) {:}P(\phi(x)){:}\dd V_M(x)=\lim_{\varepsilon\to 0}S_{\Omega,\varepsilon,\chi}(\phi),
    \label{eqn-decoupling-eqv-rv}
  \end{equation}
  and hence~$S_{M,\chi}$ is~$\mathcal{O}_{\Omega}$-measurable. In particular,
  \begin{equation}
    \int_{\Omega}^{}{:}P(\phi(x)){:}\dd V_{\Omega}= \int_{\Omega}^{} {:}P(\phi^{D}_{\Omega}(x)+\phi_{\partial\Omega}(x)){:}\dd V_{\Omega}
    \label{}
  \end{equation}
  and is~$\mathcal{O}_{\Omega}$-measurable.
\end{prop}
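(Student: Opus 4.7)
The strategy is to exhibit, via the dissection Markov decomposition, an explicit small-$\varepsilon$ regime in which $S_{\Omega,\varepsilon,\chi}(\phi)$ coincides \emph{pointwise} with the bare regularization $S_{M,\varepsilon,\chi}(\phi) = \int_M \chi(x) {:}P(\phi_\varepsilon(x)){:}\dd V_M$, then pass to the $L^2$-limit via Proposition~\ref{prop-nelson-main}. Since $\chi \in C_c^\infty(\Omega^\circ)$, set $\eta := d(\supp \chi, \partial\Omega) > 0$.

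First I would observe that the Dyatlov--Zworski mollifier~(\ref{eqn-dyatlov-zwors-reg}) has kernel supported in the $\varepsilon$-tubular neighborhood of the diagonal, so $E_\varepsilon \delta_x$ is smooth and supported in the geodesic ball $B_\varepsilon(x)$. For $\varepsilon < \eta$ and $x \in \supp \chi$ this ball lies inside $\Omega^\circ$, and in particular $E_\varepsilon \delta_x$ is disjoint from $(\Omega^c)^\circ$. Applying the dissection Markov decomposition of Corollary~\ref{cor-stoc-decomp-disec} with $M_+ = \Omega$, $M_- = \ol{\Omega^c}$, $\Sigma = \partial\Omega$, the sample $\phi$ splits as a sum of three independent pieces $\phi_{\partial\Omega} + \phi_{\Omega^\circ}^D + \phi_{(\Omega^c)^\circ}^D$, and testing against $E_\varepsilon \delta_x$ gives
\begin{equation*}
\phi_\varepsilon(x) = \phi_{\partial\Omega,\varepsilon}(x) + \phi_{\Omega,\varepsilon}^D(x) + \phi_{(\Omega^c)^\circ}^D(E_\varepsilon \delta_x).
\end{equation*}
The third summand vanishes almost surely because $\phi_{(\Omega^c)^\circ}^D$ is the Dirichlet GFF on $(\Omega^c)^\circ$, whose covariance $(\Delta_{(\Omega^c)^\circ,D}+m^2)^{-1}$ only sees the restriction $E_\varepsilon \delta_x|_{(\Omega^c)^\circ} = 0$, so the $L^2$-norm of this term is zero.

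Hence $\phi_\varepsilon(x) = \phi_{\Omega,\varepsilon}^D(x) + \phi_{\partial\Omega,\varepsilon}(x)$ as random variables for $x \in \supp\chi$, $0 < \varepsilon < \eta$. Since any Wick-ordering subtraction depends only on the variance of the argument field at the given point, the Wick-ordered polynomials coincide pointwise on $\supp\chi$, and integrating against $\chi$ yields $S_{\Omega,\varepsilon,\chi}(\phi) = S_{M,\varepsilon,\chi}(\phi)$ for every $0 < \varepsilon < \eta$. Invoking Proposition~\ref{prop-nelson-main} gives the common $L^2$-limit $S_{M,\chi}(\phi)$ as $\varepsilon \to 0$, proving~(\ref{eqn-decoupling-eqv-rv}). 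The $\mathcal{O}_\Omega$-measurability of $S_{M,\chi}(\phi)$ follows because for $\varepsilon < \eta$ the test function $E_\varepsilon \delta_x$ lies in $C_c^\infty(\Omega^\circ) \subset W^{-1}_\Omega(M)$ for every $x \in \supp\chi$, whence $S_{M,\varepsilon,\chi}$ is $\mathcal{O}_\Omega$-measurable by construction, and this is preserved in the $L^2$-limit. The ``in particular'' statement with $\chi \equiv 1_\Omega$ is then obtained by approximating $1_\Omega$ by an increasing sequence $\chi_n \in C_c^\infty(\Omega^\circ)$ (with $\eta_n \searrow 0$) and using the uniform-in-$\chi$ $L^p$ bounds of Corollary~\ref{cor-nelson-all-Lp-cutoff-limit}.

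The main conceptual subtlety is the vanishing of the outer summand $\phi_{(\Omega^c)^\circ}^D(E_\varepsilon \delta_x)$ despite the individual pieces $\phi_{\Omega,\varepsilon}^D(x)$ and $\phi_{\partial\Omega,\varepsilon}(x)$ being \emph{both nonzero}: the identity works not because each matches its counterpart, but because the Markov decomposition has strictly three independent components, of which only the irrelevant one vanishes in the interior. Beyond this observation, the argument is a routine exercise in the orthogonality of the Markov decomposition, made painless by the strict positive separation $\eta > 0$, which decouples the proof from any delicate analysis at the interface $\partial\Omega$.
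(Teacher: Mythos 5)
Your proposal is correct and follows essentially the same route as the paper: observe that $\supp(E_\varepsilon\delta_x)\subset\Omega$ once $\varepsilon$ is below the separation distance, use the dissection decomposition of Corollary \ref{cor-stoc-decomp-disec} so that the regularized fields (and hence the fixed Wick-ordered polynomials) coincide on $\supp\chi$, then pass to the limit via Proposition \ref{prop-nelson-main} and handle $1_\Omega$ by $L^4$-approximation through Corollary \ref{cor-nelson-all-Lp-cutoff-limit}. The only cosmetic difference is that the paper notes the deterministic identity $P_{\Omega}^{\perp}E_\varepsilon\delta_x=0$, which makes the vanishing of the outer component sure rather than merely almost sure for each fixed $x$ — a sharpening your covariance argument in fact already yields, since zero variance forces the projected test function itself to vanish.
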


\begin{proof}
  Without loss of generality we suppose~$\supp \chi$ stays~$\delta$-away from~$\partial\Omega$ for some~$\delta>0$. Then corollary \ref{cor-nelson-all-Lp-cutoff-limit} and approximation in~$L^4$ of a general~$\chi$ as well as~$1_{\Omega}$ gives the result. We note that the support of~$E_{\varepsilon}\delta_x$ is contained in the~$\varepsilon$-ball around~$x$. Then whenever~$\varepsilon<\delta$ for~$x\in \supp\chi$ we have~$\supp(E_{\varepsilon}\delta_x)\subset \Omega$, hence
  \begin{equation}
    P_{\Omega}^{\perp}E_{\varepsilon}\delta_x =0,
    \label{}
  \end{equation}
  and
  \begin{equation}
    \phi_{\varepsilon}(x)\equiv \phi(P_{\ol{\Omega^c}}^{\perp} E_{\varepsilon}\delta_x) +\phi(P_{\partial\Omega}E_{\varepsilon}\delta_x),
    \label{}
  \end{equation}
  both~$\mathcal{O}_{\Omega}$-measurable. Thus
  \begin{equation}
    {:}P(\phi_{\Omega,\varepsilon}^{D}(x)+\phi_{\partial\Omega,\varepsilon}(x)){:}\equiv {:}P(\phi_{\varepsilon}(x)){:}
    \label{}
  \end{equation}
  for~$x\in \supp \chi$ and~$\varepsilon<\delta$, and is~$\mathcal{O}_{\Omega}$-measurable. Therefore (\ref{eqn-decoupling-eqv-rv}) holds by proposition \ref{prop-nelson-main}.
\end{proof}

We emphasize here that for our purposes we employ the Wick ordering~${:}\bullet{:}_0$ of section \ref{sec-change-wick} in defining~$S_{M,\chi}$. This is to say
\begin{equation}
  {:}\phi_{\varepsilon}(x)^{2n}{:}=\sum_{j=0}^{n} \frac{(-1)^j (2n)!}{(2n-2j)! j! 2^j}C_{\varepsilon}(x)^j \phi_{\varepsilon}(x)^{2n-2j},
  \label{}
\end{equation}
where
\begin{equation}
  C_{\varepsilon}(x)=\iint E_{\varepsilon}(x,y)C_0(y,z)E_{\varepsilon}(z,x)\dd V_M(y)\dd V_M(z).
  \label{}
\end{equation}
Since~$E_{\varepsilon}(x,-)$ is supported in an~$\varepsilon$-ball around~$x$, and~$C_0(y,z)$ depends only on the geometry of~$M$ resctricted to a convex neighborhood of~$y$,~$z$, the term~$C_{\varepsilon}(x)$ depends only on the geometry of~$M$ locally near~$x$. This means that under~${:}\bullet{:}_0$, once~$\supp \chi\subset \Omega$, the limiting (integrated) random variable~$S_{M,\chi}$, in addition to being~$\mathcal{O}_{\Omega}$-measurable, is in fact fully determined with knowledge of the metric~$g|_{\Omega}$ restricted to~$\Omega$. This allows the freedom of choosing the ambient manifold~$M$ where~$\Omega$ isometrically embeds in defining the interaction over~$\Omega$ (see definition \ref{def-inter-over-domain}).

Now we repeat proposition \ref{prop-locality} in a different way:
\begin{corr}\label{cor-interact-locality}
   Whenever~$\chi\in C_c^{\infty}(\Omega^{\circ})$, for any fixed Wick ordering~${:}\bullet{:}$,
   \begin{equation}
     \mb{E}_{\mm{GFF}}^M[S_{M,\chi}|\mathcal{O}_{\Omega}]=S_{M,\chi},
     \label{}
   \end{equation}
   and~$S_{M,\chi}$ is independent of~$\phi_{\Omega^c}^D$, that is, of~$\phi(P_{\Omega}^{\perp} f)$ for all~$f\in C^{\infty}(M)$ or~$\phi((\Delta+m^2)f)$ for all~$f\in C_c^{\infty}(\Omega^c)$. Moreover, since~$\Omega$ is regular,~$S_{M,\chi}$ is the limit in~$L^2(\mu_{\mm{GFF}}^M)$ of polynomials of random variables of the form~$\phi(f)$ with~$f\in C_c^{\infty}(\Omega^{\circ})$ via the It\^o-Wiener-Segal isomorphism.\hfill~$\Box$
\end{corr}

This result motivates

\begin{deef}\label{def-res-gff}
  Let~$\Omega^{\circ}\subset M$ be a regular domain. The \textsf{(massive) Gaussian Free Field} over~$\Omega^{\circ}$ \textsf{restricted from~$M$} is the centered Gaussian process indexed by~$C_c^{\infty}(\Omega^{\circ})$, with covariance
  \begin{equation}
    \mb{E}[\phi(f)\phi(h)]=\ank{f,h}_{W^{-1}_{\Omega}(M)}
    \label{}
  \end{equation}
  for any~$f$,~$h\in C_c^{\infty}(\Omega^{\circ})$. We denote it by $\phi|_{\Omega}$.
\end{deef}

Since the inner product of~$W^{-1}_{\Omega}(M)$ depends on~$M$, the restricted GFF does not make sense on~$\Omega^{\circ}$ alone. However, by corollary \ref{cor-stoc-decomp-disec}, it is equal in law to~$\phi_{\partial\Omega}+\phi_{\Omega}^D$, where~$\phi_{\Omega}^D$ does make sense over~$\Omega^{\circ}$ and the law of~$\phi_{\partial\Omega}=\PI_{\Omega}^{\partial\Omega}(\tau_{\partial\Omega}\phi)$ is determined via~$\PI_{\Omega}^{\partial\Omega}$ by that of a boundary data over~$\partial\Omega$.

Going back to corollary \ref{cor-interact-locality} we see that for~$\chi\in C_c^{\infty}(\Omega^{\circ})$, the interaction~$S_{M,\chi}$ can now be defined as a random variable of the sample~$\phi|_{\Omega}\in\mathcal{D}'(\Omega^{\circ})$. Equally, it is also a random variable over~$\mathcal{D}'(\Omega^{\circ})\times \mathcal{D}'(\partial\Omega)$ equipped with~$\mu_{\mm{GFF}}^{\Omega,D}\otimes \mu_C^{\partial\Omega}$ where~$\mu_C^{\partial\Omega}$ is some probability measure on~$\mathcal{D}'(\partial\Omega)$ mutually absolutely continuous with respect to~$\tau_{\partial\Omega}(\mu_{\mm{GFF}}^M)$ and on the same~$\sigma$-algebra. We consider only a few very specific candidates for~$\mu_C^{\partial\Omega}$.

Now suppose we start with a Riemannian surface~$(\Omega,g)$ with totally geodesic boundary~$\partial\Omega$. With the help of the isometric double~$|\Omega|^2$ discussed in remark \ref{rem-real-tun-geo} we define
\begin{deef}\label{def-inter-over-domain}
  The \textsf{interaction over~$\Omega$} is
  \begin{equation}
    S_{\Omega}(\phi_{\Omega}^D|\varphi)=S_{\Omega}(\phi_{\Omega}^D+\PI_{\Omega}^{\partial\Omega}\varphi)\defeq S_{|\Omega|^2,1_{\Omega}}(\phi_{\Omega}^D+\PI_{\Omega}^{\partial\Omega}\varphi)
    \label{}
  \end{equation}
  as an~$L^2$-random variable over~$\mathcal{D}'(\Omega^{\circ})\times \mathcal{D}'(\partial\Omega)$ equipped with~$\mu_{\mm{GFF}}^{\Omega,D}\otimes \mu_{2\DN}^{\partial\Omega,\Omega}$, where the latter is the Gaussian measure with covariance operator~$\frac{1}{2}(\DN_{\Omega}^{\partial\Omega})^{-1}$.
\end{deef}

\begin{def7}\label{rem-interp-inter-over-omega}
    We emphasize here that the variable~$S_{\Omega}(\phi_{\Omega}^D|\varphi)$ must be understood for~$\phi_{\Omega}^D$ and~$\varphi$ \textit{both} random, with the law of~$\varphi$ mutually absolutely continuous with respect to~$\mu_{2\DN}^{\partial\Omega,\Omega}$ (thus~$\mu_{2\mn{D}}^{\partial\Omega}$ is a valid candidate). Thus if one asks how much can one ``fix'' $\varphi$ as one interprets~$S_{\Omega}(\phi_{\Omega}^D|\varphi)$, the answer is that it makes sense as a random variable of~$\phi_{\Omega}^D$ alone for almost every fixed~$\varphi$ under $\mu_{2\mn{D}}^{\partial\Omega}$ (but this full-measure set with respect to $\mu_{2\mn{D}}^{\partial\Omega}$ is generally unknown). Said from a slightly different perspective, the expression~$S_{\Omega}(\phi)$ for a generic~$\phi\in \mathcal{D}'(\Omega^{\circ})$ makes sense as a random variable only when~$\phi$ follows a law mutually absolutely continuous with respect to~$\mu_{\mm{GFF}}^{\Omega,D} * (\PI_{\Omega}^{\partial\Omega})_*\mu_{2\mn{D}}^{\partial\Omega}$, where ``$*$'' denotes convolution product. Statements involving~$S_{\Omega}(\phi_{\Omega}^D|\varphi)$ in the sequel should be understood in this sense. 
  \end{def7}

  The following result will be useful in lemma \ref{lemm-amplitude-express}.

\begin{prop}\label{prop-decoupling}
  For any fixed Wick ordering~${:}\bullet{:}$ we have
  \begin{equation}
    \int_{M}^{}{:}P(\phi(x)){:}\dd V_M=\int_{\Omega}^{} {:}P(\phi^{D}_{\Omega}(x)+\phi_{\partial\Omega}(x)){:}\dd V_{\Omega} +\int_{\Omega^c}^{} {:}P(\phi_{\Omega^c}^{D}(x)+\phi_{\partial\Omega}(x)){:}\dd V_{\Omega^c}
    \label{eqn-decoup-inter}
  \end{equation}
  in~$L^p(\mu_{\mm{GFF}}^M)$ for all~$1\le p<\infty$ and in particular pointwise almost surely, as a result
  \begin{equation}
    \me^{- \int_{M}^{}{:}P(\phi(x)){:}\dd V_M}=\me^{-\int_{\Omega}^{} {:}P(\phi^{D}_{\Omega}(x)+\phi_{\partial\Omega}(x)){:}\dd V_{\Omega}} \me^{-\int_{\Omega^c}^{} {:}P(\phi_{\Omega^c}^{D}(x)+\phi_{\partial\Omega}(x)){:}\dd V_{\Omega^c}}
    \label{}
  \end{equation}
  pointwise almost surely and thus also in~$L^1(\mu_{\mm{GFF}}^M)$.
\end{prop}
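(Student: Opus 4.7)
The strategy is to regularize with the local smoothing $E_\varepsilon$ of (\ref{eqn-dyatlov-zwors-reg}) and exploit the stochastic decomposition $\phi = \phi_{\partial\Omega} + \phi_\Omega^D + \phi_{\Omega^c}^D$ provided by Corollary \ref{cor-stoc-decomp-disec} applied with $\Sigma = \partial\Omega$ and $M_\pm = \Omega^\circ,\,(\Omega^c)^\circ$. By Remark \ref{rem-sobo-proj=dist-res}, $\phi_{\Omega^c}^D$ is supported in $(\Omega^c)^\circ$ in the distributional sense; hence, for every $x$ in the inner stratum
\[
\Omega_\delta^- \defeq \{\,y \in \Omega^\circ : d_g(y,\partial\Omega) > \delta\,\}
\]
and every $\varepsilon < \delta$, the kernel $E_\varepsilon(x,\cdot)$ is supported inside $\Omega^\circ$ and is annihilated by $\phi_{\Omega^c}^D$. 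This yields the pointwise a.s.\ identity
\[
\phi_\varepsilon(x) = (\phi_\Omega^D)_\varepsilon(x) + (\phi_{\partial\Omega})_\varepsilon(x), \qquad x \in \Omega_\delta^-,
\]
and a mirror identity on $(\Omega^c)_\delta^-$. Since the local Wick counterterm $C_\varepsilon(x)$ associated with ${:}\bullet{:}_0$ (cf.\ Section \ref{sec-change-wick}) depends only on the metric near $x$, and the variances of the three independent components add, the same identity persists after Wick ordering.

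Integrating over $\Omega_\delta^-$ and its mirror, one obtains, for $\varepsilon < \delta$,
\[
\int_{\Omega_\delta^-}{:}P(\phi_\varepsilon){:}\,\dd V_M = \int_{\Omega_\delta^-}{:}P\bigl((\phi_\Omega^D)_\varepsilon + (\phi_{\partial\Omega})_\varepsilon\bigr){:}\,\dd V_M,
\]
and similarly on $(\Omega^c)_\delta^-$. Letting $\varepsilon \to 0$ with $\delta$ fixed, Proposition \ref{prop-nelson-main} and Corollary \ref{cor-nelson-all-Lp-cutoff-limit}, applied to the $L^4(M)$ test function $1_{\Omega_\delta^-}$, yield convergence in every $L^p(\mu_{\mm{GFF}}^M)$ of the left-hand side to the Nelson variable $S_{M,1_{\Omega_\delta^-}}(\phi)$. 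The right-hand side, because of the pointwise identity on $\Omega_\delta^-$, has the same $L^p$-limit, which we interpret as $\int_{\Omega_\delta^-}{:}P(\phi_\Omega^D + \phi_{\partial\Omega}){:}\,\dd V_\Omega$ in the sense of Definition \ref{def-inter-over-domain} and Remark \ref{rem-interp-inter-over-omega}; consistency of the two interpretations is guaranteed by the fact that the limit depends only on the metric near the support of $1_{\Omega_\delta^-}$, so changing the ambient manifold (to $|\Omega|^2$, say) produces the same random variable.

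It remains to let $\delta \to 0$. One has $1_{\Omega_\delta^-} \to 1_\Omega$, $1_{(\Omega^c)_\delta^-} \to 1_{\Omega^c}$, and $1_{N_\delta} \to 0$ in $L^4(M)$ as $\delta \to 0$, where $N_\delta \defeq M \setminus (\Omega_\delta^- \cup (\Omega^c)_\delta^-)$; indeed $\vol(\partial\Omega) = 0$ and $\vol(N_\delta) \to 0$. By the quantitative continuity in the test function built into Proposition \ref{prop-nelson-main}, the associated Nelson variables converge in every $L^p$, and in particular the error piece $\int_{N_\delta}{:}P(\phi){:}\,\dd V_M \to 0$. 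Summing the three contributions in
\[
\int_M {:}P(\phi_\varepsilon){:}\,\dd V_M = \int_{\Omega_\delta^-} + \int_{(\Omega^c)_\delta^-} + \int_{N_\delta},
\]
passing to the limit first in $\varepsilon$ and then in $\delta$, establishes (\ref{eqn-decoup-inter}) in every $L^p$. Extracting an almost-surely convergent subsequence gives the pointwise a.s.\ equality. Exponentiating the latter yields the multiplicative formula, and the $L^1$-integrability of the right-hand side follows either from pointwise a.s.\ equality with the left-hand side (which is $L^1$ by Theorem \ref{thrm-nelson}) or directly, since $\phi_\Omega^D$ and $\phi_{\Omega^c}^D$ are independent and the Nelson integrability argument of Theorem \ref{thrm-nelson} applies separately over $\Omega$ and $\Omega^c$.

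The principal obstacle is ensuring that the Nelson arguments are uniform in $\delta$ and survive the non-smoothness of the test functions $1_{\Omega_\delta^-}$; this is handled by working with the $L^4$-quantitative bound of Proposition \ref{prop-nelson-main} rather than smoothness, which is exactly what its statement provides. A second technical point is identifying the limit of $\int_{\Omega_\delta^-}{:}P((\phi_\Omega^D)_\varepsilon + (\phi_{\partial\Omega})_\varepsilon){:}$ with the intrinsic quantity of Definition \ref{def-inter-over-domain}: on $C_c^\infty(\Omega^\circ)$ the composite field $\phi_\Omega^D + \phi_{\partial\Omega}$ coincides in law with the restricted GFF of Definition \ref{def-res-gff}, so the short-distance singularity of its two-point function is still logarithmic and the integrability bounds of Lemma \ref{lemm-2d-log-blow-up-can} apply without modification.
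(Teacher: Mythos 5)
Your argument is correct and follows essentially the same route as the paper: split the spatial integral, use the quantitative $L^4(M)$-continuity of $\chi\mapsto S_{M,\chi}$ from Proposition \ref{prop-nelson-main} and Corollary \ref{cor-nelson-all-Lp-cutoff-limit}, and use locality of the regulator (i.e.\ $P_{\Omega}^{\perp}E_{\varepsilon}\delta_x=0$ for points away from $\partial\Omega$) to rewrite the field on each piece as $\phi^{D}+\phi_{\partial\Omega}$. The only difference is organizational: the paper splits $1_M$ by a smooth partition of unity and cites Proposition \ref{prop-locality} for the identification of each piece, whereas you work with the indicator strata $1_{\Omega_\delta^-}$ and re-derive that locality step inside a double limit in $\varepsilon$ and then $\delta$.
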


\begin{def7}
  Note that the equality (\ref{eqn-decoup-inter}) says nothing about pointwise almost sure convergence of the mollified sequences~$S_{M,\varepsilon}$,~$S_{\Omega,\varepsilon}$, and~$S_{\Omega^c,\varepsilon}$. In fact, only separate subsequences in~$\varepsilon$ will converge pointwise almost surely to the three respective terms in (\ref{eqn-decoup-inter}). 
\end{def7}

\begin{proof}
  Consider a smooth partition of unity
  \begin{equation}
    1_M=\chi_{\Omega}+\chi_{\partial\Omega}+\chi_{\Omega^c},
    \label{eqn-decoup-pf-part-unity}
  \end{equation}
  where~$\chi_{\Omega}$ and~$\chi_{\Omega^c}$ are supported in the interiors of~$\Omega$ and~$\Omega^c$ respectively and~$\chi_{\partial\Omega}$ is supported near~$\partial\Omega$. Note then that (\ref{eqn-decoup-pf-part-unity}) holds as well in~$L^4(M)$ so by corollary \ref{cor-nelson-all-Lp-cutoff-limit} we have
  \begin{equation}
    \int_{M}^{}{:}P(\phi(x)){:}\dd V_M= \int_{M}^{}\chi_{\Omega}{:}P(\phi(x)){:}\dd V_M+\int_{M}^{}\chi_{\partial\Omega}{:}P(\phi(x)){:}\dd V_M+\int_{M}^{}\chi_{\Omega^c}{:}P(\phi(x)){:}\dd V_M
    \label{}
  \end{equation}
  in~$L^p(\mu_{\mm{GFF}}^M)$ for~$1\le p<\infty$. Now take
  \begin{equation}
    \chi_{\Omega}\to 1_{\Omega},\quad \chi_{\Omega^c}\to 1_{\Omega^c},\quad\textrm{and }\chi_{\partial\Omega}\to 0,
    \label{}
  \end{equation}
  in~$L^4(M)$, keeping the equality (\ref{eqn-decoup-pf-part-unity}) in the process. Then
  \begin{align*}
    \int_{M}^{}{:}P(\phi(x)){:}\dd V_M &=\int_{M}^{}1_{\Omega}{:}P(\phi(x)){:}\dd V_M+\int_{M}^{}1_{\Omega^c}{:}P(\phi(x)){:}\dd V_M \\
    &=\int_{\Omega}^{} {:}P(\phi^{D}_{\Omega}(x)+\phi_{\partial\Omega}(x)){:}\dd V_{\Omega} +\int_{\Omega^c}^{} {:}P(\phi_{\Omega^c}^{D}(x)+\phi_{\partial\Omega}(x)){:}\dd V_{\Omega^c}
  \end{align*}
  in~$L^p(\mu_{\mm{GFF}}^M)$ for~$1\le p<\infty$, and we finish the proof.
\end{proof}

\section{Presenting the $P(\phi)_2$ Model as a Segal Theory}\label{sec-segal-main}

\subsection{Description of Segal's Rules Tailored to the Model}\label{sec-segal-descript}

First we describe the (semi-)category\footnote{There are no identity morphisms. We will not discuss this point here. Some discussion can be found on \cite{KS} pp.\ 687.}~$\mathcal{C}_2^{\mm{riem}}$ of 2-dimensional Riemannian cobordisms used in this paper. There is yet no definitive choice for the exact definition of this category in the literature, in the context of functorial QFTs. Different versions (by all means related) serve different needs. See e.g.\ \cite{KS} Section 3, \cite{KMW} Section 7.1, \cite{ST} Section 2.3, \cite{freed-2019-lecturesonfieldtheoryandtopology} Section 1.1-2, \cite{GKR} Section 2.6, to name a few. What we consider in this paper is a special case that can be thought of as lying in the intersection of these more general versions in the references. It is closest in spirit to the one in \cite{KMW}.
\begin{description}
  \item[Objects] in~$\mathcal{C}_2^{\mm{riem}}$ are closed oriented Riemannian 1-manifolds with two-sided collars,~$\mf{\Sigma}=\Sigma\times (-\varepsilon,\varepsilon)$, i.e.\ disjoint unions of thin annuli. We equip~$\mf{\Sigma}$ with the product orientation of the one on~$\Sigma$ and the positive orientation on~$(-\varepsilon,\varepsilon)$, and the \textit{flat} metric~$\dd \theta^2 +\dd t^2$, with~$\theta$ being the arc length parameter on a component of~$\Sigma$. We also include the empty manifold $\varnothing$ as a special object. 
  \textbf{We identify $\Sigma$ with} $\Sigma\times\{0\}$, called the \textsf{core}, when the context is clear.
  \item[Morphisms]  from~$\mf{\Sigma}_{1}:=\Sigma_1\times (-\varepsilon_1,\varepsilon_1)$ to~$\mf{\Sigma}_{2}:=\Sigma_2\times (-\varepsilon_2,\varepsilon_2)$ are triples~$(\Omega,\mathsf{I}_{\Omega},\mathsf{O}_{\Omega})$ where~$\Omega$ is an oriented Riemannian 2-manifold with boundary,~$\mathsf{I}_{\Omega}:\Sigma_1\times[0,\varepsilon_1)\lto \Omega$ and~$\mathsf{O}_{\Omega}:\Sigma_2\times (-\varepsilon_2,0]\lto \Omega$, respectively, orientation preserving isometries such that~$\mathsf{I}_{\Omega}(\Sigma_1)$,~$\mathsf{O}_{\Omega}(\Sigma_2)\subset \partial\Omega$ and~$\mathsf{I}_{\Omega}(\Sigma_1)\sqcup \mathsf{O}_{\Omega}(\Sigma_2)=\partial\Omega$. In particular, the metric on $\Omega$ is required to be \textit{flat at the boundary}. 
  If one of~$\mf{\Sigma}_{1}$ and~$\mf{\Sigma}_{2}$, or both, is~$\varnothing$, then we formally replace the corresponding~$\mathsf{I}_{\Omega}$ and/or~$\mathsf{O}_{\Omega}$ by~$\varnothing$. We write $(\Omega,\mathsf{I}_{\Omega},\mathsf{O}_{\Omega})\in\Mor(\mf{\Sigma}_{1}, \mf{\Sigma}_{2})$ or simply $\Omega\in\Mor(\mf{\Sigma}_{1}, \mf{\Sigma}_{2})$ where the parametrizations are implied.
  Thus $\Mor(\varnothing,\varnothing)$ are simply closed oriented Riemannian surfaces.
  \item[Composition]  of~$\Omega_{12}\in\Mor(\mf{\Sigma}_1,\mf{\Sigma}_2)$ with~$\Omega_{23}\in \Mor(\mf{\Sigma}_2,\mf{\Sigma}_3)$, where $\mf{\Sigma}_2\ne \varnothing$, is the surface~$\Omega_{23}\circ\Omega_{12}$ obtained by gluing~$\Omega_{12}$ and~$\Omega_{23}$ along~$\mf{\Sigma}_2$ using~$\mathsf{O}_{\Omega_{12}}$ and~$\mathsf{I}_{\Omega_{23}}$, with~$\mathsf{I}_{\Omega_{12}}$ and~$\mathsf{O}_{\Omega_{23}}$ as its parametrization data. If~$\mf{\Sigma}_2=\varnothing$, then~$\Omega_{23}\circ\Omega_{12}:=\Omega_{23}\sqcup\Omega_{12}$. 
  Moreover if~$\Omega\in\Mor(\varnothing,\mf{\Sigma}'\sqcup \mf{\Sigma}_i\sqcup\mf{\Sigma}_j)$ and~$\rho:\mf{\Sigma}_i\lto \mf{\Sigma}_j$ is an orientation \textit{reversing} isometry sending cores to cores, then
we obtain~$\Omega/\rho\in\Mor(\varnothing,\mf{\Sigma}')$ by gluing~$ \mf{\Sigma}_i$ with~$ \mf{\Sigma}_j$ along~$\rho$ (see figure \ref{fig-segal-gluing}).
Possibly $\mf{\Sigma}'=\varnothing$.
  Note that the Riemannian metric on~$\Omega_{23}\circ\Omega_{12}$ or $\Omega/\rho$ is automatically smooth since we glued along collars.
\end{description}

The following aspects of this set-up should be understood before proceeding.
\begin{description}
  \item[Different Collars.] Clearly any morphism~$\Omega\in\Mor(\mf{\Sigma}_1,\mf{\Sigma}_2)$ can be canonically identified with a morphism from~$\mf{\Sigma}_1':=\Sigma_1\times (-\varepsilon_1',\varepsilon_1')$ to~$\mf{\Sigma}_2':=\Sigma_2\times (-\varepsilon_2',\varepsilon_2')$ for any~$\varepsilon_i'\le \varepsilon_i$. We will apply this identification when convenient. Moreover, we define the disjoint union of objects by~$\mf{\Sigma}_1\sqcup \mf{\Sigma}_2:=(\Sigma_1\sqcup \Sigma_2)\times (-\varepsilon_{\min},\varepsilon_{\min})$ where~$\varepsilon_{\min}=\min\{\varepsilon_1,\varepsilon_2\}$. These points are related to the ``germs'' considered in \cite{KS} Section 3.
  \item[Crossing and Reflections.] On~$\mf{\Sigma}=\Sigma\times (-\varepsilon,\varepsilon)$ there is the isometric reflection~$\Theta_{\mf{\Sigma}}(\theta,t):=(\theta,-t)$. Denote by~$\mf{\Sigma}^*$ the same manifold as~$\mf{\Sigma}$ but with orientation reversed by~$\Theta_{\mf{\Sigma}}$. Then any morphism~$\Omega\in \Mor(\mf{\Sigma}_1,\mf{\Sigma}_2)$ is identified e.g.\ with a morphism~$\Omega\in\Mor(\mf{\Sigma}_1\setminus \mf{S},\mf{\Sigma}_2\sqcup \mf{S}^*)$ (same notation), for all~$\mf{S}\subset \mf{\Sigma}_1$ which is a union of components, in particular with one in~$\Mor(\varnothing,\mf{\Sigma}_1^*\sqcup \mf{\Sigma}_2)$ or~$\Mor(\mf{\Sigma}_1\sqcup \mf{\Sigma}_2^*,\varnothing)$. This is called \textsf{crossing} in \cite{Segal}. We use a particular notation for the ``adjoint''~$\Omega^*\in \Mor(\mf{\Sigma}_2^*,\mf{\Sigma}_1^*)$. 
  \item[Isometry of Morphisms.] Two morphisms~$\Omega$,~$\Omega'\in\Mor(\mf{\Sigma}_1,\mf{\Sigma}_2)$ are identified if there is an isometry~$\kappa:\Omega\lto \Omega'$ such that~$\kappa\circ\mathsf{I}_{\Omega}=\mathsf{I}_{\Omega'}$,~$\kappa\circ \mathsf{O}_{\Omega}=\mathsf{O}_{\Omega'}$.
\end{description}

Now we are ready to define \textbf{Segal's axioms}.

\begin{deef}
  [\cite{Segal} page 460] \label{def-segal-2} A 2d Riemannian (unitary) QFT is a correspondence where
  \begin{enumerate}[(i)]
    \item to each object $\mf{\Sigma}$ in~$\mathcal{C}_2^{\mm{riem}}$ we associate a real Hilbert space~$\mathcal{H}_{\mf{\Sigma}}$, such that $\mathcal{H}_{\mf{\Sigma}_1\sqcup\mf{\Sigma}_2}=\mathcal{H}_{\mf{\Sigma}_1}\otimes \mathcal{H}_{\mf{\Sigma}_2}$ (completed tensor product), and~$\mathcal{H}_{\varnothing}:=\mb{R}$;
    if~$\rho:\mf{\Sigma}\xlongrightarrow{\sim} \mf{\Sigma}'$ is an orientation and core preserving isometry, there is a corresponding isometry~$\rho_*:\mathcal{H}_{\mf{\Sigma}}\xlongrightarrow{\sim}\mathcal{H}_{\mf{\Sigma}'}$; if~$\rho$ is orientation reversing, then~$\rho_*:\mathcal{H}_{\mf{\Sigma}}\xlongrightarrow{\sim}\mathcal{H}_{\mf{\Sigma}'}^*\cong \mathcal{H}_{\mf{\Sigma}'}$;
    \item to each $\Omega\in\Mor(\varnothing,\mf{\Sigma})$ we associate an element~$\mathcal{A}_{\Omega}\in \mathcal{H}_{\mf{\Sigma}}$; if~$M\in\Mor(\varnothing,\varnothing)$ is closed, we associate a real number~$Z_{M}\in \mb{R}$, called the \textsf{partition function};
    \item let~$\Omega\in\Mor(\varnothing,\mf{\Sigma}'\sqcup \mf{\Sigma}_i\sqcup\mf{\Sigma}_j)$,~$\rho$ and $\Omega/\rho$ be as described in the ``composition'' item above. Then
      \begin{equation}
	\mathcal{A}_{\Omega/\rho}=\ttr_{\rho}(\mathcal{A}_{\Omega}),
	\label{eqn-trace-axiom-general-segal}
      \end{equation}
      where~$\ttr_{\rho}$ is (formally) the map
      \begin{equation}
	\left.
	\begin{array}{rcl}
	  \ttr_{\rho}:\mathcal{H}_{\mf{\Sigma}_i} \otimes \mathcal{H}_{\mf{\Sigma}_j} \otimes \mathcal{H}_{\mf{\Sigma}'}  &\lto&\mathcal{H}_{\mf{\Sigma}'}  ,\\
	  F\otimes G\otimes H&\longmapsto &\ank{\rho_* F, G}_{\mf{\Sigma}_j} H,
	\end{array}
	\right.
	\label{eqn-segal-general-trace}
      \end{equation}
      with~$\ank{-,-}_{\mf{\Sigma}_j} $ being the (real) inner product on~$\mathcal{H}_{\mf{\Sigma}_j} $.
    \end{enumerate}
\end{deef}

In our construction later, the map $\ttr_{\rho}$ must be \textbf{interpreted differently} than (\ref{eqn-segal-general-trace}). More precisely, each $\mathcal{H}_{\mf{\Sigma}}$ will be the $L^2$ space over some measure space with measure $\mu_{\Sigma}$, 
and~$\ttr_{\rho}$ should be understood as
  \begin{equation}
    \ttr_{\rho}:\mathcal{A}_{\Omega}(x,y,z) \longmapsto \int_{}^{}\mathcal{A}_{\Omega}(x,x,z)\dd \mu_{\Sigma_j}(x),
    \label{eqn-def-flat-trace}
  \end{equation}
  and we will show in sections \ref{sec-trace-axiom} and \ref{sec-gluing-final} that the latter integral converges and indeed gives (\ref{eqn-trace-axiom-general-segal}). On the other hand (\ref{eqn-segal-general-trace}) is an algebraic definition which in general may not equal (\ref{eqn-def-flat-trace}).

In Definition \ref{def-segal-2} we focused on morphisms from (or to)~$\varnothing$ since the general case can be deduced from this case via crossing. We re-state this fact as a lemma below, tailored to the case where the Hilbert spaces are~$L^2$ spaces and where it follows from Theorem 3.8.4 of \cite{Sim3}.
\begin{lemm}\label{lemm-segal-transfer}
  Consider a 2D Riemannian QFT in the sense of Definition \ref{def-segal-2} where the Hilbert spaces are~$L^2$ spaces of real measurable functions and where~$\ttr_{\rho}$ is interpreted as (\ref{eqn-def-flat-trace}). Given~$\Omega\in \Mor(\mf{\Sigma}_{1},\mf{\Sigma}_{2})$, we define the \textsf{Segal transfer operator}~$U_{\Omega}$ by
 \begin{equation}
    \left.\def\arraystretch{1.3}
    \begin{array}{rcl}
      U_{\Omega}: \mathcal{H}_{\mf{\Sigma}_{1}} &\lto& \mathcal{H}_{\mf{\Sigma}_{2}},\\
      F(x) &\longmapsto & \ddp\int_{}^{}\mathcal{A}_{\Omega}(y,x)
      F(x)\,\dd \mu_{\Sigma_1}(x),
    \end{array}
    \right.
    \label{eqn-def-segal-transfer}
  \end{equation} 
  where~$\mathcal{A}_{\Omega}\in \mathcal{H}_{\mf{\Sigma}_1}\otimes \mathcal{H}_{\mf{\Sigma}_2}$ viewing~$\Omega\in \Mor(\varnothing,\mf{\Sigma}_1^*\sqcup \mf{\Sigma}_2)$. Then $U_{\Omega}$ is Hilbert-Schmidt. Moreover,
  \begin{enumerate}[(i)]
    \item if~$\Omega_{12}\in\Mor(\mf{\Sigma}_1,\mf{\Sigma}_2)$,~$\Omega_{23}\in \Mor(\mf{\Sigma}_2,\mf{\Sigma}_3)$, then
      \begin{equation}
    U_{\Omega_{23}\circ  \Omega_{12}}=U_{\Omega_{23}}\circ U_{\Omega_{12}}.
    \label{eqn-def-segal-sew}
  \end{equation}
  \item If~$\Omega\in \Mor(\mf{\Sigma},\mf{\Sigma})$, $\mf{\Sigma}$ identified with itself along an isometry $\rho$, and let~$\check{\Omega}:=\Omega/\rho$ the closed surface, then
  \begin{equation}
    Z_{\check{\Omega}}=\ttr_{\rho} (U_{\Omega}).
    \label{}
  \end{equation}
\item If~$\Omega\in \Mor(\mf{\Sigma}_1,\mf{\Sigma}_2)$, recall from ``crossing and reflections'' the interpretation~$\Omega^*\in \Mor(\mf{\Sigma}_2^*,\mf{\Sigma}_1^*)$, then
  \begin{equation}
    U_{\Omega^*}=U_{\Omega}^{\dagger},
    \label{}
  \end{equation}
  with~$U_{\Omega}^{\dagger}$ denoting the (real) adjoint of~$U_{\Omega}$. \hfill~$\Box$
  \end{enumerate}
\end{lemm}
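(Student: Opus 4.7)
The plan is to derive everything from the amplitude-level Segal axioms of Definition \ref{def-segal-2} by reading the formula (\ref{eqn-def-flat-trace}) backwards, so that composition of operators becomes the trace axiom (\ref{eqn-trace-axiom-general-segal}) applied along a single gluing interface. First I would verify that $U_{\Omega}$ is Hilbert--Schmidt. Crossing identifies $\Omega\in \Mor(\mf{\Sigma}_1,\mf{\Sigma}_2)$ with an element of $\Mor(\varnothing,\mf{\Sigma}_1^*\sqcup\mf{\Sigma}_2)$, whose amplitude lies in $\mathcal{H}_{\mf{\Sigma}_1^*}\otimes \mathcal{H}_{\mf{\Sigma}_2}=L^2(\mu_{\Sigma_1}\otimes\mu_{\Sigma_2})$ under the unitarity assumption. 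So $\mathcal{A}_{\Omega}(y,x)$ is by definition a square-integrable integral kernel, and the standard correspondence between $L^2$-kernels and Hilbert--Schmidt operators (e.g.\ \cite{Sim3} Thm.~3.8.4) gives the result, with $\nrm{U_{\Omega}}_{\mm{HS}}=\nrm{\mathcal{A}_{\Omega}}_{L^2}$.

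For (i), the key step is to apply (\ref{eqn-trace-axiom-general-segal}) to the morphism $\Omega_{12}\sqcup\Omega_{23}\in\Mor(\varnothing,\mf{\Sigma}_1^*\sqcup\mf{\Sigma}_2\sqcup\mf{\Sigma}_2^*\sqcup\mf{\Sigma}_3)$ glued along $\rho=\mm{id}:\mf{\Sigma}_2\to\mf{\Sigma}_2$. The trace axiom together with the interpretation (\ref{eqn-def-flat-trace}) then states
\begin{equation}
 \mathcal{A}_{\Omega_{23}\circ\Omega_{12}}(x,z)=\int \mathcal{A}_{\Omega_{12}}(x,y)\mathcal{A}_{\Omega_{23}}(y,z)\,\dd\mu_{\Sigma_2}(y),
\end{equation}
which is exactly the Schwartz kernel of $U_{\Omega_{23}}\circ U_{\Omega_{12}}$. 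The only thing to check is that Fubini applies when plugging this identity into the definition (\ref{eqn-def-segal-transfer}) of $U_{\Omega_{23}\circ\Omega_{12}}F$; this follows from the $L^2$-integrability of the three kernels involved, together with Cauchy--Schwarz on each slice.

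For (ii), the claim is tautological once one interprets $\ttr_\rho$ correctly on both sides. Applying (\ref{eqn-trace-axiom-general-segal}) to $\Omega\in\Mor(\varnothing,\mf{\Sigma}\sqcup\mf{\Sigma}^*)$ glued along $\rho$ directly yields $Z_{\check\Omega}=\int \mathcal{A}_{\Omega}(x,x)\,\dd\mu_{\Sigma}(x)$; on the operator side this is precisely the \emph{flat (functional) trace} of the Hilbert--Schmidt integral operator $U_{\Omega}$ along the kernel's diagonal, which is what $\ttr_{\rho}$ means in this context via (\ref{eqn-def-flat-trace}). For (iii), swapping the two boundary components of $\mf{\Sigma}_1^*\sqcup\mf{\Sigma}_2$ via the symmetry of the tensor product exchanges the roles of the input and output variables, so $\mathcal{A}_{\Omega^*}(x,y)=\mathcal{A}_{\Omega}(y,x)$, which is the kernel identity defining the real adjoint $U_{\Omega}^{\dagger}$.

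The main obstacle I anticipate is purely conceptual rather than technical: one must be careful that the integral in (\ref{eqn-def-flat-trace}) is indeed what the abstract $\ttr_{\rho}$ of (\ref{eqn-segal-general-trace}) produces in our probabilistic $L^2$ setting. Unlike the algebraic definition, which requires trace-class operators or a clean orthonormal basis to pair $\mathcal{H}_{\mf{\Sigma}}$ with $\mathcal{H}_{\mf{\Sigma}}^*$, the kernel-theoretic definition works for any Hilbert--Schmidt operator whose kernel admits a meaningful diagonal restriction. Justifying that these two interpretations coincide in our construction, so that the composition and trace axioms are actually equivalent statements under (\ref{eqn-def-flat-trace}), is the piece of bookkeeping that requires the most care; once settled, all three conclusions reduce to direct applications of Fubini's theorem combined with the amplitude-level axioms.
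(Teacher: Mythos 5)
Your proposal is correct and follows essentially the same route the paper intends: the paper derives this lemma from the $L^2$-kernel/Hilbert--Schmidt correspondence (Theorem 3.8.4 of \cite{Sim3}) together with the amplitude-level axioms read through the flat-trace interpretation (\ref{eqn-def-flat-trace}), which is exactly your argument of viewing $\mathcal{A}_{\Omega}$ as a square-integrable kernel, obtaining composition from the trace axiom applied to $\Omega_{12}\sqcup\Omega_{23}$ glued along $\mf{\Sigma}_2$ (with Fubini/Cauchy--Schwarz), and the trace and adjoint statements from the diagonal integral and the crossing symmetry of the kernel. Your closing caution that (\ref{eqn-def-flat-trace}) rather than the algebraic pairing (\ref{eqn-segal-general-trace}) is what must be used is precisely the distinction the paper itself emphasizes.
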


\begin{figure}
    \centering
    \includegraphics[width=0.8\linewidth]{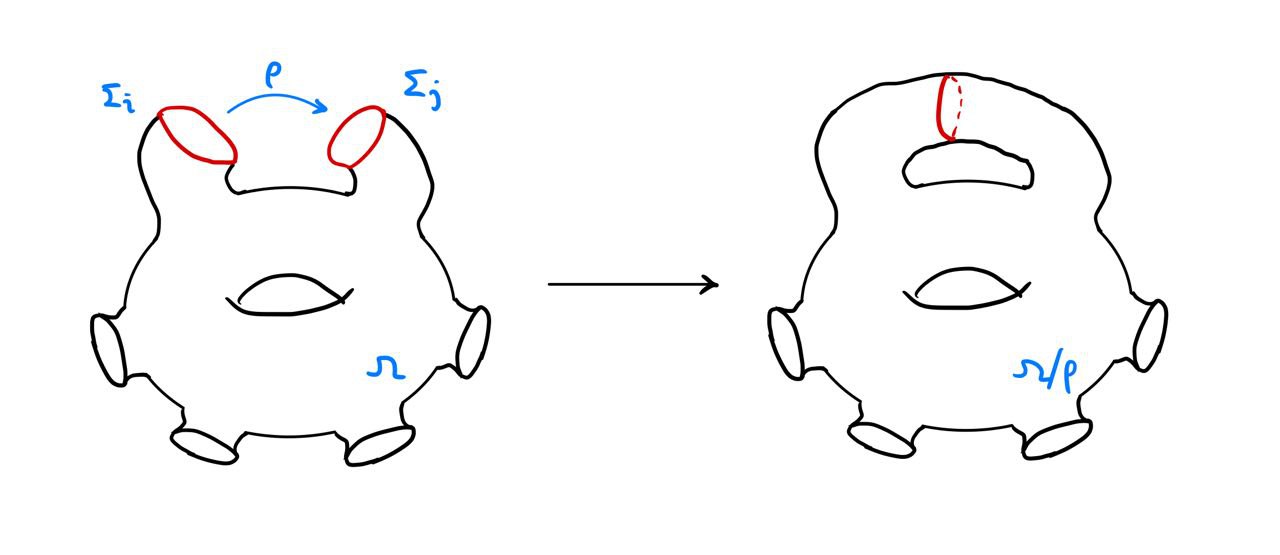}
    \caption{sewing}
    \label{fig-segal-gluing}
\end{figure}

All these said, we now restate theorem \ref{thrm-intro-main-1} in the introduction in the following precise manner.

\begin{thrmbis}{thrm-intro-main-1}
\label{thrm-main-1-precise}
  The Hilbert spaces~$\mathcal{H}_R$ given by (\ref{eqn-def-hilb-space-circ}), the amplitudes~$\mathcal{A}_{\Omega}^P$ given by (\ref{eqn-def-amp-omega}) and the partition functions~$Z_M$ given by (\ref{eqn-def-part-func-rigor}) together gives a 2d Riemannian QFT satisfying the requirements of Definition \ref{def-segal-2}.
\end{thrmbis}

The proof of this theorem constitutes the main body of subsections \ref{sec-trace-axiom} and \ref{sec-gluing-final}.

\begin{def7}
  [difference with CFT setting] Segal's axioms were originally intended to describe Conformal Field Theories (CFTs), to which the $P(\phi)_2$-model does not belong. Compared to Definition \ref{def-segal-2}, the \textit{main extra ingredient} required for a CFT is a transformation rule of the objects~$Z_M$ and~$\mathcal{A}_{\Omega}$ (resp.\ $U_{\Omega}$) under \textit{conformal change} of the underlying metrics on~$M$ and~$\Omega$, written in terms of the so-called \textit{Weyl anomaly} (see e.g.\ \cite{GKR} Sections 2.4, 2.6). With these rules one is allowed to scale the geometric objects and obtain relations between the corresponding~$\mathcal{A}_{\Omega}$s. See \cite{GKR} Section 2.7(1)-(2).
\end{def7}

\subsection{The Hilbert Spaces}\label{sec-segal-hilb-space}

\noindent Now we define the Hilbert space~$\mathcal{H}_{\mf{\Sigma}}$ associated to the object~$\mf{\Sigma}$ in $\mathcal{C}_2^{\mm{riem}}$.  
In fact, we will define a Hilbert space~$\mathcal{H}_{\Sigma}$ for the \textit{core}~$\Sigma$ of each~$\mf{\Sigma}$ and~$\mathcal{H}_{\mf{\Sigma}}:=\mathcal{H}_{\Sigma}$. The core~$\Sigma$ is a closed oriented Riemannian 1-manifold.
Let
\begin{equation}
    \mn{D}_{\Sigma}\defeq(\Delta_{\Sigma}+m^2)^{1/2}
\end{equation}
where~$\Delta_{\Sigma}:=-\partial_{\theta}^2$ is the positive Laplacian on $\Sigma$ with an arc length parameter~$\theta$ on each component. Denote by~$\mu_{(2\mn{D}_{\Sigma})^{-1}}$ the Gaussian measure on~$\mathcal{D}'(\Sigma)$ with covariance~$(2\mn{D}_{\Sigma})^{-1}$ (for convenience, we just write~$\mu_{2\mn{D}}$). Then we define
\begin{equation}
  \mathcal{H}_{\Sigma} \defeq L^2_{\mb{R}}(\mathcal{D}'(\Sigma),\wh{\mu}_{2\mn{D}}).
  \label{eqn-def-hilb-space-circ}
\end{equation}

 Here we take~$\wh{\mu}_{2\mn{D}}:=\mu_{2\mn{D}}\cdot\detz(2\mn{D}_{\Sigma})^{-\frac{1}{2}}$, namely the probability measure scaled by the positive finite constant~$\detz(2\mn{D}_{\Sigma})^{-\frac{1}{2}}$. We observe that
\begin{equation}
  \mathcal{D}'(\Sigma_1\sqcup \Sigma_2)=\mathcal{D}'(\Sigma_1)\oplus \mathcal{D}'(\Sigma_2)
  \label{}
\end{equation}
and the operator~$\mn{D}_{\Sigma}$ acts diagonally, and clearly the Gaussian fields on $\Sigma_1$ and $\Sigma_2$ are independent, that is,~$\mu_{2\mn{D}}^{\Sigma_1\sqcup\Sigma_2}=\mu_{2\mn{D}}^{\Sigma_1}\otimes \mu_{2\mn{D}}^{\Sigma_2}$. The real~$L^2$ space is hence the tensor product (\cite{RSim} page 52).

Finally, if~$\rho:\Sigma\lto \Sigma'$ is an isometry (of Riemannian 1-manifolds), then define~$\rho_*:\mathcal{H}_{\Sigma}\lto \mathcal{H}_{\Sigma'}$ to be the obvious map induced by the correspondence between~$\mathcal{D}'(\Sigma)$ and~$\mathcal{D}'(\Sigma')$. Up to this point we have verified (i) of Definition \ref{def-segal-2}. By the nature of our definitions, in the sequel we do not distinguish $\mathcal{H}_{\mf{\Sigma}}$ from $\mathcal{H}_{\Sigma}$.

\subsection{Amplitudes = Schwartz Kernels} \label{sec-amplitude}

\noindent Let $\Omega\in\Mor(\varnothing,\mf{\Sigma})$, now we define~$\mathcal{A}_{\Omega}\in \mathcal{H}_{\Sigma}$. Consider $\Omega^*\in \Mor(\mf{\Sigma}^*,\varnothing)$ as described in the paragraph ``crossing and reflections'' of subsection \ref{sec-segal-descript}, and let~$|\Omega|^2:=\Omega^*\circ\Omega$, a closed surface. Clearly $|\Omega|^2$ is also the \textit{isometric double} of $\Omega$, equipped with a \textit{reflection} $\Theta_{|\Omega|^2}$ with $\Theta_{|\Omega|^2}|_{\mf{\Sigma}}=\Theta_{\mf{\Sigma}}$ (\cite{JR08} Section 2). Now suppose we have defined~$\mathcal{A}_{\Omega}$ satisfying Definition \ref{def-segal-2}, then (iii) would imply
\begin{align*}
  Z_{|\Omega|^2}&=\int_{}^{}|\mathcal{A}_{\Omega}(\varphi)|^2 \dd\mu_{2\mn{D}}^{\Sigma}(\varphi)\detz (2\mn{D}_{\Sigma})^{-\frac{1}{2}} \\
  &=\int_{\mathcal{D}'(M)}^{}\me^{-S_{|\Omega|^2}(\phi)}\dd\mu_{\mm{GFF}}^{|\Omega|^2}(\phi)\detz(\Delta_{|\Omega|^2}+m^2)^{-\frac{1}{2}} \\
  &=\int_{\mathcal{D}'(\Sigma)}^{}\dd\tau_{\Sigma}\big(\me^{-S_{|\Omega|^2}}\cdot\mu_{\mm{GFF}}^{|\Omega|^2}\big)(\varphi) \detz(\Delta_{|\Omega|^2}+m^2)^{-\frac{1}{2}}.
\end{align*}
In the case~$\Omega\in \Mor(\mf{\Sigma}_1,\mf{\Sigma}_2)$ we have in other words~$Z_{|\Omega|^2}=\ttr_{\rho}(U_{\Omega^*}U_{\Omega})$. This motivates the following definition which is also inspired by related considerations in \cite{Pickrell}.

\begin{deef}\label{def-segal-amp-omega}
  Let~$\Omega\in\Mor(\varnothing,\mf{\Sigma})$ and $P$ the polynomial defining the interaction. We define the \textsf{amplitude} associated to~$\Omega$ to be the quantity
\begin{equation}
  \mathcal{A}^P_{\Omega}(\varphi)\defeq \bigg[\frac{\detz(\Delta_{|\Omega|^2}+m^2)^{-\frac{1}{2}}}{\detz (2\mn{D}_{\Sigma})^{-\frac{1}{2}}}\cdot \frac{\dd\big[\tau_{\Sigma}\big(\me^{-S_{|\Omega|^2}}\cdot\mu_{\mm{GFF}}^{|\Omega|^2}\big)\big]}{\dd\mu_{2\mn{D}}^{\Sigma}}(\varphi) \bigg] ^{\frac{1}{2}},
  \label{eqn-def-amp-omega}
\end{equation}
where the second ratio denotes the Radon-Nikodym density.
\end{deef}

This is well-defined since~$\tau_{\Sigma}(\me^{-S_{|\Omega|^2}}\cdot\mu_{\mm{GFF}}^{|\Omega|^2}) \ll \mu_{2\mn{D}}^{\Sigma}$ as~$\tau_{\Sigma}(\mu_{\mm{GFF}}^{|\Omega|^2}) =\mu_{2\DN}^{\Sigma,\Omega} \ll \mu_{2\mn{D}}^{\Sigma}$ by Corollary \ref{corr-rad-niko-dense}, and both are finite positive measures. We see also that~$\mathcal{A}_{\Omega}>0$ almost surely with respect to~$\wh{\mu}_{2\mn{D}}^{\Sigma}$. We have automatically~$\mathcal{A}_{\Omega}\in L^2(\mathcal{D}'(\Sigma),\wh{\mu}_{2\mn{D}}^{\Sigma})=\mathcal{H}_{\Sigma}$ since~$|\mathcal{A}_{\Omega}|^2\in L^1(\mathcal{D}'(\Sigma),\wh{\mu}_{2\mn{D}}^{\Sigma})$ by definition.

\begin{lemm}
  \label{lemm-amp-iso-inv} Let~$\Omega$,~$\Omega'\in\Mor(\varnothing,\mf{\Sigma})$ and~$\kappa:\Omega\lto \Omega'$ an isometry of morphisms in the sense of Subsection \ref{sec-segal-descript}. Then~$\mathcal{A}_{\Omega}^P=\mathcal{A}_{\Omega'}^P$.
\end{lemm}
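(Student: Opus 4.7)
The plan is to show that every ingredient on the right-hand side of (\ref{eqn-def-amp-omega}) is defined intrinsically from the Riemannian structure of $\Omega$ in a way that transports canonically under $\kappa$. The isometry $\kappa:\Omega\lto \Omega'$ satisfies $\kappa\circ\mathsf{O}_\Omega=\mathsf{O}_{\Omega'}$, so in particular $\kappa$ restricts to the identity on the core $\Sigma\subset\partial\Omega\equiv\partial\Omega'$ (identified via the parametrizations). The reflection $\Theta_{\mf{\Sigma}}$ used to form the isometric doubles is intrinsic to $\mf{\Sigma}$, hence $\kappa$ extends canonically to an isometry
\begin{equation*}
|\kappa|^2:|\Omega|^2\xlongrightarrow{\sim}|\Omega'|^2
\end{equation*}
of the doubled closed Riemannian surfaces, which restricts to the identity on the core $\Sigma$ and commutes with the reflections $\Theta_{|\Omega|^2}$, $\Theta_{|\Omega'|^2}$.

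Next, I would verify invariance of the three constituent objects. First, since $|\kappa|^2$ is an isometry, it intertwines the Laplacians, hence gives unitary equivalence $(\Delta_{|\Omega|^2}+m^2)\cong (\Delta_{|\Omega'|^2}+m^2)$, which in turn gives equal eigenvalues, equal zeta functions, and so equal zeta-regularized determinants. Likewise the operator $2\mn{D}_{\Sigma}$ is intrinsic to $\Sigma$ and unaffected. Second, because the covariance operator $(\Delta_{|\Omega|^2}+m^2)^{-1}$ is carried to $(\Delta_{|\Omega'|^2}+m^2)^{-1}$ by $|\kappa|^2_*$, the pushforward of $\mu_{\mm{GFF}}^{|\Omega|^2}$ under the induced map on $\mathcal{D}'(|\Omega|^2)$ coincides with $\mu_{\mm{GFF}}^{|\Omega'|^2}$ by Lemma \ref{lemm-gaus-mea-char-func}. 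Third, the interaction $S_{|\Omega|^2}$ is invariant: by Proposition \ref{prop-locality} and the discussion in Subsection \ref{sec-change-wick}, the Wick-ordered $P(\phi)_2$ integrand ${:}P(\phi(x)){:}_0$ depends only on the local Riemannian geometry near $x$ (through the distance function $d_g$ appearing in $C_0$), and the volume form is also intrinsic, so the random variable $S_{|\Omega|^2}(\phi)$ agrees almost surely with $S_{|\Omega'|^2}(|\kappa|^2_*\phi)$.

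Combining these three observations, the finite measures $\me^{-S_{|\Omega|^2}}\cdot\mu_{\mm{GFF}}^{|\Omega|^2}$ and $\me^{-S_{|\Omega'|^2}}\cdot\mu_{\mm{GFF}}^{|\Omega'|^2}$ correspond via the measurable isomorphism induced by $|\kappa|^2$ on distributions. Since $|\kappa|^2$ is the identity on $\Sigma$, it commutes with $\tau_\Sigma$ in the sense that $\tau_\Sigma\circ (|\kappa|^2)^*=\tau_\Sigma$, and therefore the two trace images on $\mathcal{D}'(\Sigma)$ coincide as finite measures. The Radon-Nikodym densities against the common reference measure $\mu_{2\mn{D}}^{\Sigma}$ are therefore equal $\mu_{2\mn{D}}^\Sigma$-almost surely, and taking square roots yields $\mathcal{A}^P_\Omega=\mathcal{A}^P_{\Omega'}$ in $\mathcal{H}_\Sigma$.

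The only genuinely delicate step is the third one, invariance of the interaction: strictly speaking one must check that the full regularization scheme (the mollifiers $E_\varepsilon$ of (\ref{eqn-dyatlov-zwors-reg}) and the local Wick counter-terms $C_\varepsilon(x)$) can be chosen to be compatible under $|\kappa|^2$, which is possible because (\ref{eqn-dyatlov-zwors-reg}) is built from the Riemannian distance and volume alone. Alternatively, one can bypass this by invoking Proposition \ref{prop-nelson-main}, which guarantees that the limiting random variable does not depend on the specific admissible family of regulators, so we may choose regulators on $|\Omega'|^2$ which are pushforwards of those on $|\Omega|^2$. This yields equality almost surely, completing the proof.
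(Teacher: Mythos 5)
Your proposal is correct and follows essentially the same route as the paper: extend $\kappa$ to an isometry of the doubles, then observe that every ingredient of (\ref{eqn-def-amp-omega}) — the zeta-determinants, the GFF measures, the trace map, and the interaction — is isometry invariant, the last point resting on the existence of isometry-invariant regulators (such as (\ref{eqn-dyatlov-zwors-reg}) or the heat regulator) together with regulator-independence. The paper's proof is just a terser version of your argument.
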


\begin{proof}
  The isometry~$\kappa$ may be extended to an isometry of the double~$|\Omega|^2$,~$\kappa^*\sqcup \kappa$, which fixes a collar of~$\Sigma$. The lemma follows because all objects involved in the definition (\ref{eqn-def-amp-omega}) are isometry invariant. In particular the interaction~$S_{|\Omega|^2}$ is so as there exist isometry invariant regularization procedures (e.g. heat or (\ref{eqn-dyatlov-zwors-reg})).
\end{proof}

\begin{exxx}
  Let us derive the amplitude~$\mathcal{A}_{\Omega}^0$ for the free field, that is, with $S_{|\Omega|^2}=0$. Indeed,
  \begin{equation}
    |\mathcal{A}^0_{\Omega}(\varphi)|^2=\frac{\dd\tau_{\Sigma}\big(\mu_{\mm{GFF}}^{|\Omega|^2}\detz(\Delta_{|\Omega|^2}+m^2)^{-\frac{1}{2}}\big)}{\dd\mu_{2\mn{D}}^{\Sigma}\detz (2\mn{D})^{-\frac{1}{2}}}(\varphi).
    \label{}
  \end{equation}
  Since
  \begin{equation}
    \tau_{\Sigma}(\mu_{\mm{GFF}}^{|\Omega|^2})=\mu_{2\DN}^{\Sigma,\Omega},
    \label{}
  \end{equation}
  namely the Gaussian measure on~$\mathcal{D}'(\Sigma)$ with covariance~$\frac{1}{2}(\DN_{\Omega}^{\Sigma})^{-1}$, taking into account the BFK formula
  \begin{equation}
    \detz(\Delta_{|\Omega|^2}+m^2)=[\detz(\Delta_{\Omega,D}+m^2)]^2\detz(2\DN_{\Omega}^{\Sigma}),
    \label{}
  \end{equation}
  we obtain, by Corollary \ref{corr-rad-niko-dense},
  \begin{equation} \mathcal{A}_{\Omega}^0(\varphi)=\detz(\Delta_{\Omega,D}+m^2)^{-\frac{1}{2}}\me^{-\frac{1}{2}\sank{\varphi,(\DN_{\Omega}^{\Sigma}-\mn{D})\varphi}_{L^2(\Sigma)}}.
    \label{}
  \end{equation}
\end{exxx}

\begin{lemm}\label{lemm-amplitude-express}
  We have
  \begin{equation}
    \mathcal{A}_{\Omega}^P(\varphi)=\mathcal{A}_{\Omega}^0(\varphi)\int_{}^{}\me^{-S_{\Omega}(\phi^{D}_{\Omega}|\varphi)}\dd\mu_{\mm{GFF}}^{\Omega,D}(\phi^{D}_{\Omega}),
    \label{}
  \end{equation}
  for almost every~$\varphi$ under~$\mu_{2\mn{D}}^{\Sigma}$. Here~$S_{\Omega}(\phi^{D}_{\Omega}|\varphi)$ is as defined in Definition \ref{def-inter-over-domain}.
\end{lemm}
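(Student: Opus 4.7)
The plan is to combine four ingredients: (1) the Markov stochastic decomposition of the GFF on the closed double $|\Omega|^2 = \Omega^*\circ\Omega$ with respect to the dissecting hypersurface $\Sigma$; (2) the decoupling (locality) of the $P(\phi)$-interaction across $\Sigma$; (3) the isometric reflection $\Theta_{|\Omega|^2}$ identifying $\Omega^*$ with $\Omega$; and (4) the BFK gluing formula together with Corollary~\ref{corr-rad-niko-dense} to collapse the determinant prefactors. More precisely, by Corollary~\ref{cor-stoc-decomp-disec} applied to $|\Omega|^2$ dissected by $\Sigma$, a sample $\phi\sim \mu_{\mm{GFF}}^{|\Omega|^2}$ decomposes as $\phi = \phi_\Sigma + \phi_\Omega^D + \phi_{\Omega^*}^D$ into three mutually independent components, with $\phi_\Omega^D\sim \mu_{\mm{GFF}}^{\Omega,D}$, $\phi_{\Omega^*}^D\sim \mu_{\mm{GFF}}^{\Omega^*,D}$, and $\tau_\Sigma\phi\sim \mu_{\DN}^{\Sigma,|\Omega|^2}$. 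Proposition~\ref{prop-decoupling} then rewrites the interaction pointwise almost surely as
\begin{equation*}
  S_{|\Omega|^2}(\phi) = S_\Omega(\phi_\Omega^D|\tau_\Sigma\phi) + S_{\Omega^*}(\phi_{\Omega^*}^D|\tau_\Sigma\phi),
\end{equation*}
where each summand is well-defined in the sense of Definition~\ref{def-inter-over-domain} because the relevant boundary law $\mu_{\DN}^{\Sigma,|\Omega|^2}$ is mutually absolutely continuous with respect to $\mu_{2\mn{D}}^{\Sigma}$ by Corollary~\ref{corr-rad-niko-dense}.

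Exponentiating and integrating against a bounded test functional $F(\varphi)$ on $\mathcal{D}'(\Sigma)$, the independence of the three components and Fubini give
\begin{equation*}
  \int F(\tau_\Sigma\phi)\me^{-S_{|\Omega|^2}(\phi)}\dd\mu_{\mm{GFF}}^{|\Omega|^2}(\phi) = \int F(\varphi)\, I_\Omega(\varphi)\, I_{\Omega^*}(\varphi)\,\dd\mu_{\DN}^{\Sigma,|\Omega|^2}(\varphi),
\end{equation*}
with $I_\Omega(\varphi):=\int \me^{-S_\Omega(\phi_\Omega^D|\varphi)}\dd\mu_{\mm{GFF}}^{\Omega,D}(\phi_\Omega^D)$ and $I_{\Omega^*}$ defined analogously. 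Because $\Theta_{|\Omega|^2}$ is an orientation-reversing isometry fixing $\Sigma$ pointwise and swapping $\Omega$ with $\Omega^*$, it sends $\mu_{\mm{GFF}}^{\Omega^*,D}$ to $\mu_{\mm{GFF}}^{\Omega,D}$, and by Lemma~\ref{lemm-amp-iso-inv}-style reasoning (relying on the isometry-covariance of the local Wick renormalization ${:}\bullet{:}_0$ from Section~\ref{sec-change-wick}) it also identifies $S_{\Omega^*}(\,\cdot\,|\varphi)$ with $S_\Omega(\,\cdot\,|\varphi)$. Hence $I_{\Omega^*}(\varphi) = I_\Omega(\varphi)$, yielding
\begin{equation*}
  \tau_{\Sigma*}\bigl(\me^{-S_{|\Omega|^2}}\cdot \mu_{\mm{GFF}}^{|\Omega|^2}\bigr) = I_\Omega(\varphi)^2 \cdot \mu_{\DN}^{\Sigma,|\Omega|^2}.
\end{equation*}

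It remains to match the determinant prefactors. By reflection symmetry the jumpy Dirichlet-to-Neumann operator satisfies $\DN_{|\Omega|^2}^{\Sigma} = 2\DN_\Omega^{\Sigma}$, and Corollary~\ref{corr-rad-niko-dense} gives
\begin{equation*}
  \frac{\dd\mu_{\DN}^{\Sigma,|\Omega|^2}}{\dd\mu_{2\mn{D}}^{\Sigma}}(\varphi) = \bigl({\tts\det_{\zeta}}(2\mn{D}_\Sigma)\bigr)^{-1/2}\bigl({\tts\det_{\zeta}}(2\DN_\Omega^{\Sigma})\bigr)^{1/2}\me^{-\frac{1}{2}\sank{\varphi,(2\DN_\Omega^{\Sigma}-2\mn{D}_\Sigma)\varphi}_{L^2(\Sigma)}}.
\end{equation*}
Substituting this into the definition (\ref{eqn-def-amp-omega}) of $\mathcal{A}_\Omega^P$ and invoking the BFK identity (\ref{eqn-bfk-disec}), ${\tts\det_{\zeta}}(\Delta_{|\Omega|^2}+m^2)={\tts\det_{\zeta}}(\Delta_{\Omega,D}+m^2)^2{\tts\det_{\zeta}}(2\DN_\Omega^{\Sigma})$, all powers of ${\tts\det_{\zeta}}(2\mn{D}_\Sigma)$ and ${\tts\det_{\zeta}}(2\DN_\Omega^{\Sigma})$ cancel, leaving the prefactor ${\tts\det_{\zeta}}(\Delta_{\Omega,D}+m^2)^{-1}$ and the Gaussian weight $\me^{-\sank{\varphi,(\DN_\Omega^{\Sigma}-\mn{D}_\Sigma)\varphi}}$, which together are exactly $|\mathcal{A}_\Omega^0(\varphi)|^2$. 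Thus $|\mathcal{A}_\Omega^P(\varphi)|^2 = |\mathcal{A}_\Omega^0(\varphi)|^2\cdot I_\Omega(\varphi)^2$, and taking positive square roots (both factors are positive by construction) yields the claim. The main obstacle is verifying step~(2): one must ensure that $S_\Omega(\phi_\Omega^D|\varphi)$ is well-defined as a random variable for $\mu_{2\mn{D}}^{\Sigma}$-almost every fixed $\varphi$, and that the Fubini rearrangement above is legitimate---precisely the content of Proposition~\ref{prop-decoupling} and Remark~\ref{rem-interp-inter-over-omega}.
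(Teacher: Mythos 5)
Your proposal is correct and follows essentially the same route as the paper: Markov decomposition of the GFF on the double $|\Omega|^2$, the decoupling of the interaction from Proposition \ref{prop-decoupling}, independence of the two Dirichlet components, and reflection symmetry to identify the two conditional integrals, then a square root. The only cosmetic difference is that you re-derive the determinant and Gaussian-weight bookkeeping via BFK and Corollary \ref{corr-rad-niko-dense}, whereas the paper absorbs this step by working directly with the ratio $\mathcal{A}_{\Omega}^P/\mathcal{A}_{\Omega}^0$, the explicit form of $\mathcal{A}_{\Omega}^0$ having already been computed in the example preceding the lemma.
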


\begin{proof}
  Indeed, by definition of the measure image,
  \begin{equation}
    \mathcal{A}_{\Omega}^P(\varphi)^2=\mathcal{A}_{\Omega}^0(\varphi)^2\iint_{}^{}\me^{-S_{|\Omega|^2}(\phi^{D}_{\Omega}+\phi^D_{\Omega^*}+\PI_{|\Omega|^2}^{\Sigma}\varphi)}\dd\mu_{\mm{GFF}}^{\Omega,D}(\phi^{D}_{\Omega})\otimes \dd\mu_{\mm{GFF}}^{\Omega^*,D}(\phi^{D}_{\Omega^*}).
    \label{}
  \end{equation}
  However, by Proposition \ref{prop-decoupling} which decouples the interaction into a sum over complementary regions,
  \begin{equation}
    \me^{-S_{|\Omega|^2}(\phi)}=\me^{-S_{\Omega}(\phi^{D}_{\Omega}|\varphi)}\me^{-S_{\Omega^*}(\phi^D_{\Omega^*}|\varphi)},
    \label{}
  \end{equation}
  almost surely againt~$\mu_{\mm{GFF}}^{\Omega,D}\otimes \mu_{\mm{GFF}}^{\Omega^*,D}\otimes \tau_{\Sigma}(\mu_{\mm{GFF}}^{|\Omega|^2})$, and hence also against~$\mu_{\mm{GFF}}^{\Omega,D}\otimes \mu_{\mm{GFF}}^{\Omega^*,D}\otimes \mu_{2\mn{D}}^{\Sigma}$. Thus by reflection symmetry between $\Omega$ and $\Omega^*$ and independence between~$\phi^{D}_{\Omega}$ and~$\phi^D_{\Omega^*}$ we obtain the result.
\end{proof}

\subsection{Trace Axiom and its Consequences}\label{sec-trace-axiom}

\noindent In this subsection we treat separately (ii) of Lemma \ref{lemm-segal-transfer} as part of (iii) of Definition \ref{def-segal-2}. Let~$\Omega\in\Mor(\mf{\Sigma},\mf{\Sigma})$, \textit{not necessarily connected}. We will consider two closed surfaces:~$\check{\Omega}$ (from Lemma \ref{lemm-segal-transfer}(ii)) and~$|\Omega|^2:=\Omega^*\circ\Omega$.
\begin{prop}
  [pre-trace]\label{prop-pre-trace}
  We have
  \begin{equation}
    \frac{\dd\tau_{\Sigma}(\mu_{\mm{GFF}}^{\check{\Omega}})}{\dd\mu_{2\mn{D}}^{\Sigma}}(\varphi)=\frac{\detz(\DN_{\check{\Omega}}^{\Sigma})^{\frac{1}{2}}}{\detz(2\DN_{\Omega}^{\Sigma\sqcup\Sigma})^{\frac{1}{4}}}\bigg( 
    \frac{\dd\tau_{\Sigma\sqcup\Sigma}(\mu_{\mm{GFF}}^{|\Omega|^2})}{\dd\mu_{2\mn{D}}^{\Sigma\sqcup\Sigma}}(\varphi,\varphi)
    \bigg)^{\frac{1}{2}}.
    \label{eqn-pre-trace}
  \end{equation}
\end{prop}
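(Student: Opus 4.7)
The plan is to unpack both sides of (\ref{eqn-pre-trace}) explicitly via Corollary \ref{corr-rad-niko-dense}, and then match the resulting Gaussian densities through two geometric identities: one about the jumpy Dirichlet--to--Neumann operator on $|\Omega|^2$, and one about pulling back the Poisson extension along the quotient $\Omega\to\check\Omega$. After this, the proposition reduces to straightforward bookkeeping of determinants and factors of two.

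For the unpacking step, Proposition \ref{prop-induce-law-DN} identifies $\tau_\Sigma(\mu_{\mm{GFF}}^{\check\Omega}) = \mu_{\DN}^{\Sigma,\check\Omega}$ and $\tau_{\Sigma\sqcup\Sigma}(\mu_{\mm{GFF}}^{|\Omega|^2}) = \mu_{\DN}^{\Sigma\sqcup\Sigma,|\Omega|^2}$, so Corollary \ref{corr-rad-niko-dense} rewrites each Radon--Nikodym density as an explicit Gaussian density of the schematic form $(\detz\DN/\detz 2\mn{D})^{1/2}\exp(-\tfrac{1}{2}\bank{\cdot,(\DN-2\mn{D})\cdot})$. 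The core geometric inputs are then: (1) the isometric double $|\Omega|^2$ carries an involutive reflection $\Theta$ that exchanges the two copies of $\Omega$ and fixes $\Sigma\sqcup\Sigma$ pointwise, which renders the Poisson extension $\PI_{|\Omega|^2}^{\Sigma\sqcup\Sigma}f$ $\Theta$-invariant, so that its one-sided normal derivatives on the two sides of $\Sigma\sqcup\Sigma$ coincide, giving by Definition \ref{def-jp-DN-map}
\begin{equation*}
\DN_{|\Omega|^2}^{\Sigma\sqcup\Sigma}\;=\;2\,\DN_\Omega^{\Sigma\sqcup\Sigma};
\end{equation*}
(2) the quotient map $q\colon\Omega\to\check\Omega$ is a local isometry that identifies the two boundary components, so $q^*\PI_{\check\Omega}^\Sigma\varphi$ is a harmonic (with mass $m$) function on $\Omega^\circ$ whose trace equals $\varphi$ on each component of $\partial\Omega$, and uniqueness of the boundary value problem (\ref{eqn-PI-pde-bdy}) gives $q^*\PI_{\check\Omega}^\Sigma\varphi = \PI_\Omega^{\Sigma\sqcup\Sigma}\bnom{\varphi}{\varphi}$. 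Combining (1) and (2) with the Dirichlet-energy formula (\ref{eqn-dn-qua-hyp}) and the fact that $q$ preserves the Riemannian volume yields the key quadratic identity
\begin{equation*}
\Bank{\bnom{\varphi}{\varphi},\DN_{|\Omega|^2}^{\Sigma\sqcup\Sigma}\bnom{\varphi}{\varphi}}_{L^2(\Sigma\sqcup\Sigma)}\;=\;2\bank{\varphi,\DN_{\check\Omega}^\Sigma\varphi}_{L^2(\Sigma)}.
\end{equation*}

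With these ingredients in hand the verification is algebraic. Since $2\mn{D}_{\Sigma\sqcup\Sigma}$ acts diagonally, $\Bank{\bnom{\varphi}{\varphi},2\mn{D}_{\Sigma\sqcup\Sigma}\bnom{\varphi}{\varphi}} = 2\bank{\varphi,2\mn{D}_\Sigma\varphi}$; substituting and taking the square root on the right-hand side of (\ref{eqn-pre-trace}) gives the exponent $-\tfrac{1}{2}\bank{\varphi,(\DN_{\check\Omega}^\Sigma-2\mn{D}_\Sigma)\varphi}$, matching the LHS. For the determinantal prefactors, diagonal action gives $\detz(2\mn{D}_{\Sigma\sqcup\Sigma}) = \detz(2\mn{D}_\Sigma)^2$, and (1) gives $\detz(\DN_{|\Omega|^2}^{\Sigma\sqcup\Sigma}) = \detz(2\DN_\Omega^{\Sigma\sqcup\Sigma})$; plugging these in, the prefactor $\detz(\DN_{\check\Omega}^\Sigma)^{1/2}\detz(\DN_{|\Omega|^2}^{\Sigma\sqcup\Sigma})^{1/4}\detz(2\DN_\Omega^{\Sigma\sqcup\Sigma})^{-1/4}\detz(2\mn{D}_{\Sigma\sqcup\Sigma})^{-1/4}$ collapses to $\detz(\DN_{\check\Omega}^\Sigma)^{1/2}\detz(2\mn{D}_\Sigma)^{-1/2}$, which is exactly the LHS prefactor. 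The main obstacle I anticipate is the rigorous justification of step (2): one must verify that $\PI_{\check\Omega}^\Sigma\varphi$, which is only one-sidedly smooth across the hypersurface $\Sigma\subset\check\Omega$, pulls back to an honest $W^1(\Omega)$ function whose boundary trace equals $\varphi$ on each component of $\partial\Omega$. This should follow from the $W^{1/2}(\Sigma)\to W^1(\check\Omega)$ mapping property of Lemma \ref{lemm-pi-hyp-reg} applied on $\check\Omega$, combined with the compatibility of the trace map with the local isometry $q$.
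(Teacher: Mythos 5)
Your proof is correct and is essentially the paper's own argument: both densities are unpacked via Corollary \ref{corr-rad-niko-dense}, and everything reduces to the identity $\Bank{\bnom{\varphi}{\varphi},(2\DN_{\Omega}^{\Sigma\sqcup\Sigma}-2\mn{D}_{\Sigma\sqcup\Sigma})\bnom{\varphi}{\varphi}}_{L^2(\Sigma\sqcup\Sigma)}=2\bank{\varphi,(\DN_{\check{\Omega}}^{\Sigma}-2\mn{D}_{\Sigma})\varphi}_{L^2(\Sigma)}$, which holds because both Dirichlet-to-Neumann forms compute the Dirichlet energy of the same harmonic extension over $\Omega$ with boundary data $(\varphi,\varphi)$ --- precisely your facts (1) and (2), which the paper uses implicitly. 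The one step you gloss over is extending this identity from $\varphi\in W^{\frac{1}{2}}(\Sigma)$ to $\mu_{2\mn{D}}^{\Sigma}$-almost every $\varphi$ (a distribution of negative regularity), which the paper obtains by continuity using the fact that the operators $\DN-2\mn{D}$ have order at most $-2$ (Lemma \ref{lemm-dn-prop}(iv)).
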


\begin{proof}
  This boils down to comparing the explicit expressions for the densities as given by Corollary \ref{corr-rad-niko-dense}. Indeed, the relation that we need is
  \begin{equation}
    \Bank{\bnom{\varphi}{\varphi},(2\DN_{\Omega}^{\Sigma\sqcup\Sigma} -2\mn{D}_{\Sigma\sqcup\Sigma})\bnom{\varphi}{\varphi}}_{L^2(\Sigma\sqcup\Sigma)}=
    2\bank{\varphi,(\DN_{\check{\Omega}}^{\Sigma}-2\mn{D}_{\Sigma})\varphi}_{L^2(\Sigma)}.
    \label{}
  \end{equation}
  This is true for~$\varphi\in W^{\frac{1}{2}}(\Sigma)$ because~$\mn{D}_{\Sigma\sqcup\Sigma}=\mn{D}_{\Sigma}\oplus \mn{D}_{\Sigma}$ and~$\sank{[\smx{\varphi\\ \varphi}], \DN_{\Omega}^{\Sigma\sqcup\Sigma}[\smx{\varphi\\ \varphi}]}_{L^2(\Sigma\sqcup\Sigma)}$ and~$\sank{\varphi,\DN_{\check{\Omega}}^{\Sigma}\varphi}_{L^2(\Sigma)}$ are both the Dirichlet energy of the harmonic extension over~$\Omega$ with boundary condition~$(\varphi,\varphi)$. The equality then extends to~$\varphi\in W^{-\delta}(\Sigma)$ for small enough~$\delta$ by continuity (see Lemma \ref{lemm-ghs-for-gff}(iii)).
\end{proof}

\begin{corr}
  [trace for free field] \label{cor-free-trace} In the situation as above, we have
  \begin{equation}
    \int_{}^{} \mathcal{A}_{\Omega}^0(\varphi,\varphi)\dd\mu_{2\mn{D}}^{\Sigma}(\varphi) \detz(2\mn{D}_{\Sigma})^{-\frac{1}{2}}=\detz(\Delta_{\check{\Omega}}+m^2)^{-\frac{1}{2}}.
    \label{}
  \end{equation}
\end{corr}

\begin{proof}
  Indeed,
  \begin{align*}
    \textrm{LHS}&=\frac{\detz(\Delta_{|\Omega|^2}+m^2)^{-\frac{1}{4}}}{\detz(2\mn{D}_{\Sigma}\oplus 2\mn{D}_{\Sigma})^{-\frac{1}{4}}}
    \int_{}^{}\bigg( 
    \frac{\dd\tau_{\Sigma\sqcup\Sigma}(\mu_{\mm{GFF}}^{|\Omega|^2})}{\dd\mu_{2\mn{D}}^{\Sigma\sqcup\Sigma}}(\varphi,\varphi)
    \bigg)^{\frac{1}{2}} \dd\mu_{2\mn{D}}^{\Sigma}(\varphi) \detz(2\mn{D}_{\Sigma})^{-\frac{1}{2}} \\
    &=\frac{\detz(2\DN_{\Omega}^{\Sigma\sqcup\Sigma})^{\frac{1}{4}}}{\detz(\DN_{\check{\Omega}}^{\Sigma})^{\frac{1}{2}}}
    \frac{\detz(\Delta_{|\Omega|^2}+m^2)^{-\frac{1}{4}}}{\detz(2\mn{D}_{\Sigma})^{-\frac{1}{2}}}
    \underbrace{\int_{}^{}\frac{\dd\tau_{\Sigma}(\mu_{\mm{GFF}}^{\check{\Omega}})}{\dd\mu_{2\mn{D}}^{\Sigma}}(\varphi) \dd\mu_{2\mn{D}}^{\Sigma}(\varphi)}_{=~ 1} \detz(2\mn{D}_{\Sigma})^{-\frac{1}{2}} \\
    &=\detz(\DN_{\check{\Omega}}^{\Sigma})^{-\frac{1}{2}}\detz(\Delta_{\Omega,D}+m^2)^{-\frac{1}{2}} \tag{BFK for~$|\Omega|^2$}\\
    &=\textrm{RHS}, \tag{BFK for~$\check{\Omega}$}
  \end{align*}
  finishing the proof.
\end{proof}

Note that we do not insist that~$\Omega$ be connected. This leads to the following important consequence of Proposition \ref{prop-pre-trace}.

\begin{corr}\label{cor-disec-dens}
  Let~$\Omega_1\in\Mor(\varnothing,\mf{\Sigma})$,~$\Omega_2\in \Mor(\mf{\Sigma},\varnothing)$. Denote~$M:=\Omega_2\circ\Omega_1\in \Mor(\varnothing,\varnothing)$. Then
  \begin{equation}
    \frac{\dd\tau_{\Sigma}(\mu_{\mm{GFF}}^{M})}{\dd\mu_{2\mn{D}}^{\Sigma}}(\varphi)=
    \frac{\detz(\DN_{M}^{\Sigma})^{\frac{1}{2}}}{\detz(2\DN_{\Omega_1}^{\Sigma})^{\frac{1}{4}}\detz(2\DN_{\Omega_2}^{\Sigma})^{\frac{1}{4}}}\bigg( 
    \frac{\dd\tau_{\Sigma}(\mu_{\mm{GFF}}^{|\Omega_1|^2})}{\dd\mu_{2\mn{D}}^{\Sigma}}(\varphi)
    \bigg)^{\frac{1}{2}}
    \bigg( 
    \frac{\dd\tau_{\Sigma}(\mu_{\mm{GFF}}^{|\Omega_2|^2})}{\dd\mu_{2\mn{D}}^{\Sigma}}(\varphi)
    \bigg)^{\frac{1}{2}}.
    \label{eqn-disect-density}
  \end{equation}
\end{corr}

\begin{proof}
  Indeed, in this case~$\Omega:=\Omega_1\sqcup\Omega_2=\Omega_1\circ\Omega_2$ can be seen as an element of~$\Mor(\mf{\Sigma},\mf{\Sigma})$. Then~$|\Omega|^2=|\Omega_1|^2\sqcup |\Omega_2|^2$ and the GFFs on the two components are independent. In addition,~$\DN_{\Omega}^{\Sigma\sqcup\Sigma}$ is the direct sum~$\DN_{\Omega_1}^{\Sigma}\oplus \DN_{\Omega_2}^{\Sigma}$. Thus (\ref{eqn-pre-trace}) gives (\ref{eqn-disect-density}).
\end{proof}

\begin{corr}
  [dissection gluing] \label{cor-disect-free} In the same situation as Corollary \ref{cor-disec-dens}, we have
  \begin{equation}
    \int_{}^{}\mathcal{A}_{\Omega_1}^0(\varphi)\mathcal{A}_{\Omega_2}^0(\varphi)\dd\mu_{2\mn{D}}^{\Sigma}(\varphi) \detz(2\mn{D}_{\Sigma})^{-\frac{1}{2}}=\detz(\Delta_M+m^2)^{-\frac{1}{2}}.
    \label{}
  \end{equation}
\end{corr}

\begin{proof}
  Again, we apply Corollary \ref{cor-free-trace} directly to the case~$\Omega=\Omega_1\sqcup\Omega_2$ and~$|\Omega|^2=|\Omega_1|^2\sqcup |\Omega_2|^2$. We note that~$\mathcal{A}_{\Omega_1}^0(\varphi)\mathcal{A}_{\Omega_2}^0(\varphi)= \mathcal{A}_{\Omega}^0(\varphi,\varphi)$ because
  \begin{equation}
    \frac{\dd\tau_{\Sigma\sqcup\Sigma}(\mu_{\mm{GFF}}^{|\Omega|^2})}{\dd\mu_{2\mn{D}}^{\Sigma\sqcup\Sigma}}(\varphi,\varphi)=\frac{\dd\tau_{\Sigma}(\mu_{\mm{GFF}}^{|\Omega_1|^2})}{\dd\mu_{2\mn{D}}^{\Sigma}}(\varphi) 
    \frac{\dd\tau_{\Sigma}(\mu_{\mm{GFF}}^{|\Omega_2|^2})}{\dd\mu_{2\mn{D}}^{\Sigma}}(\varphi)
    \label{}
  \end{equation}
  and~$\detz(\Delta_{|\Omega|^2}+m^2)=\detz(\Delta_{|\Omega_1|^2}+m^2)\detz(\Delta_{|\Omega_2|^2}+m^2)$.
\end{proof}

Next we deal with the trace axiom in the interacting case. Again let~$\Omega\in\Mor(\mf{\Sigma},\mf{\Sigma})$ and consider~$\check{\Omega}$. One has in this case the decomposition
\begin{equation}
  \mu_{\mm{GFF}}^{\check{\Omega}}=\mu_{\mm{GFF}}^{\Omega,D}\otimes \tau_{\Sigma}(\mu_{\mm{GFF}}^{\check{\Omega}})
  \label{}
\end{equation}
and against which
\begin{equation}
  S_{\check{\Omega}}(\phi)=S_{\check{\Omega}}(\phi^{D}_{\Omega}+\PI_{\check{\Omega}}^{\Sigma}\varphi)=S_{\Omega}(\phi^{D}_{\Omega}|\varphi,\varphi)
  \quad\textrm{and}\quad
  \me^{-S_{\check{\Omega}}(\phi^{D}_{\Omega}+\PI_{\check{\Omega}}^{\Sigma}\varphi)}=\me^{-S_{\Omega}(\phi^{D}_{\Omega}|\varphi,\varphi)}
  \label{}
\end{equation}
for~$\phi^{D}_{\Omega}\sim \mu_{\mm{GFF}}^{\Omega,D}$ and~$\varphi\sim \tau_{\Sigma}(\mu_{\mm{GFF}}^{\check{\Omega}})$ almost surely.

\begin{corr}
  [trace for~$P(\phi)$ field] \label{cor-P(phi)-trace} In the situation as above, we have
  \begin{equation}
    \int_{}^{} \mathcal{A}_{\Omega}^P(\varphi,\varphi)\dd\mu_{2\mn{D}}^{\Sigma}(\varphi) \detz(2\mn{D}_{\Sigma})^{-\frac{1}{2}}=Z_{\check{\Omega}}.
    \label{}
  \end{equation}
\end{corr}

\begin{proof}
  Indeed,
  \begin{equation}
    \mathcal{A}_{\Omega}^P(\varphi,\varphi)=\mathcal{A}_{\Omega}^0(\varphi,\varphi)\int_{}^{}\me^{-S_{\Omega}(\phi^{D}_{\Omega}|\varphi,\varphi)}\dd\mu_{\mm{GFF}}^{\Omega,D}(\phi^{D}_{\Omega}),
    \label{}
  \end{equation}
  therefore, with the constants involving determinants working out in exactly the same way as Corollary \ref{cor-free-trace}, one has
  \begin{align*}
    \textrm{LHS}&=\detz(\Delta_{\check{\Omega}}+m^2)^{-\frac{1}{2}} \int_{}^{}\frac{\dd\tau_{\Sigma}(\mu_{\mm{GFF}}^{\check{\Omega}})}{\dd\mu_{2\mn{D}}^{\Sigma}}(\varphi) \dd\mu_{2\mn{D}}^{\Sigma}(\varphi)\int_{}^{}\me^{-S_{\Omega}(\phi^{D}_{\Omega}|\varphi,\varphi)}\dd\mu_{\mm{GFF}}^{\Omega,D}(\phi^{D}_{\Omega}) \\
    &=\detz(\Delta_{\check{\Omega}}+m^2)^{-\frac{1}{2}}
    \int_{}^{}\me^{-S_{\Omega}(\phi^{D}_{\Omega}|\varphi,\varphi)} \dd\mu_{\mm{GFF}}^{\Omega,D}\otimes \dd\tau_{\Sigma}(\mu_{\mm{GFF}}^{\check{\Omega}})(\phi^{D}_{\Omega},\varphi)\\
    &=\detz(\Delta_{\check{\Omega}}+m^2)^{-\frac{1}{2}} \mb{E}_{\mm{GFF}}^{\check{\Omega}}[\me^{-S_{\check{\Omega}}(\phi)}]=\textrm{RHS}.
  \end{align*}
  We arrive at the proof.
\end{proof}

\begin{corr}
  [dissection gluing for $P(\phi)$ field] In the same situation as Corollary \ref{cor-disec-dens}, we have
  \begin{equation}
    \int_{}^{}\mathcal{A}_{\Omega_1}^P(\varphi)\mathcal{A}_{\Omega_2}^P(\varphi)\dd\mu_{2\mn{D}}^{\Sigma}(\varphi) \detz(2\mn{D}_{\Sigma})^{-\frac{1}{2}}=Z_M.
    \label{}
  \end{equation}
\end{corr}

\begin{proof}
  One verifies that for~$\Omega=\Omega_1\sqcup \Omega_2$,
  \begin{equation}
    \mathcal{A}_{\Omega}^P(\varphi,\varphi)=\mathcal{A}_{\Omega_1}^P(\varphi)\mathcal{A}_{\Omega_2}^P(\varphi).
    \label{}
  \end{equation}
  This in fact also comes directly from the definition as
  \begin{equation}
    S_{|\Omega|^2}(\phi_{\mm{GFF}}^{|\Omega|^2})=S_{|\Omega_1|^2}(\phi_{\mm{GFF}}^{|\Omega_1|^2})+S_{|\Omega_2|^2}(\phi_{\mm{GFF}}^{|\Omega_2|^2})
    \label{}
  \end{equation}
  pointwise almost surely with~$\phi_{\mm{GFF}}^{|\Omega|^2}=\phi_{\mm{GFF}}^{|\Omega_1|^2}+\phi_{\mm{GFF}}^{|\Omega_2|^2}$. One then follows a verbatim reasoning as for Corollary \ref{cor-disect-free}.
\end{proof}

\subsection{Gluing in the General Case}\label{sec-gluing-final}

\noindent Now let us put ourselves in the general situation of (iii) of Definition \ref{def-segal-2}. Suppose~$\Omega\in \Mor(\varnothing,\mf{\Sigma}_2\sqcup \mf{\Sigma}_2^*\sqcup\mf{\Sigma}_1)$ and that~$\rho:\mf{\Sigma}_2\lto \mf{\Sigma}_2^*$ is an orientation reversing isometry. Recall~$\Omega/\rho$ from the paragraph ``composition'' from subsection \ref{sec-segal-descript}. Denote by~$\Sigma_4$ the reflected copy of~$\Sigma_2$ (core) in~$|\Omega/\rho|^2$, see top-right picture in figure \ref{fig-general-trace}.

\begin{notation}
  In general, for~$M$ a closed manifold and~$\Sigma_1$, \dots,~$\Sigma_{\ell}\subset M$ finitely many nonintersecting closed embedded hypersurfaces, we denote
  \begin{align*}
    \tau_{i\sqcup j\sqcup\cdots}&\defeq \textrm{the trace map }C^{\infty}(M)\lto C^{\infty}(\Sigma_i)\times C^{\infty}(\Sigma_j)\times\cdots,\textrm{ and its extensions,}\\
    \PI_M^{i\sqcup j\sqcup\cdots}&\defeq \textrm{the Poisson integral operator }W^{\frac{1}{2}}(\Sigma_i)\times W^{\frac{1}{2}}(\Sigma_j)\times\cdots \lto W^1(M), \\
    \mathcal{M}_{M,k\sqcup\cdots}^{i\sqcup j\sqcup\cdots}&\defeq \tau_{k\sqcup\cdots}\PI_M^{i\sqcup j\sqcup\cdots},\textrm{ the transition operator }\mathcal{D}'(\Sigma_i)\times \mathcal{D}'(\Sigma_j)\times\cdots \lto \mathcal{D}'(\Sigma_k)\times\cdots\textrm{ as in (\ref{eqn-trans-op-def}).}
  \end{align*}
  We also remind the reader of the notations set up in definitions \ref{def-DN-boundary} and \ref{def-pi-more-general}.
\end{notation}

\begin{prop}
  [Bayes with free ends] \label{prop-bayes-free-end} We have
  \begin{equation}
    \frac{\dd\tau_{1}(\mu_{\mm{GFF}}^{|\Omega/\rho|^2})}{\dd \mu_{2\mn{D}}^{\Sigma_1}}(\varphi_1)
    \bigg( 
    \frac{\dd \big((\mathcal{M}_{|\Omega/\rho|^2,2}^{1}\varphi_1)_*\mu_{\DN}^{\Sigma_2,\Omega/\rho,D}\big)}{\dd\mu_{2\mn{D}}^{ \Sigma_2}}(x)
    \bigg)^2 =\frac{\detz(\DN_{|\Omega/\rho|^2}^{\Sigma_2\sqcup\Sigma_4})^{\frac{1}{2}}}{\detz(\DN_{|\Omega|^2}^{\Sigma_2\sqcup\Sigma_2^*})^{\frac{1}{2}}}
    \frac{\dd\tau_{2\sqcup 2^*\sqcup 1}(\mu_{\mm{GFF}}^{|\Omega|^2})}{\dd \mu_{2\mn{D}}^{2\sqcup 2^*\sqcup 1}}(x,x,\varphi_1)
  \end{equation}
\end{prop}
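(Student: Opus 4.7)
The plan is to apply Bayes principle for the GFF (Proposition~\ref{prop-bayes-gff}) inside $|\Omega/\rho|^2$ with the two non-intersecting hypersurfaces $\Sigma_1$ and $\Sigma_2\sqcup\Sigma_4$, use reflection symmetry to recognize the squared factor on the LHS as a joint density at a diagonal point, then compare this to the analogous joint density on $|\Omega|^2$ via Corollary~\ref{corr-rad-niko-dense} and the BFK gluing formula (Subsection~\ref{sec-BFK}). The key geometric observation is that both $|\Omega|^2$ and $|\Omega/\rho|^2$ are quotients of the common disconnected surface $\Omega\sqcup\Omega^*$ by \emph{different} boundary identifications of $\partial\Omega=\Sigma_1\sqcup\Sigma_2\sqcup\Sigma_2^*$.

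First, the reflection $\Theta$ of $|\Omega/\rho|^2$ across $\Sigma_1$ fixes $\Sigma_1$ and swaps $\Sigma_2\leftrightarrow\Sigma_4$, and $|\Omega/\rho|^2\setminus\Sigma_1$ separates into the two reflected copies $(\Omega/\rho)^{\circ}\sqcup(\Omega/\rho)^{*\circ}$. By Proposition~\ref{prop-stoc-decomp-closed} the Dirichlet GFF thus decomposes as an independent direct sum, and the Poisson extension $\PI_{|\Omega/\rho|^2}^{\Sigma_1}\varphi_1$ restricts on each side to $\PI_{\Omega/\rho}^{\Sigma_1}\varphi_1$ since $\Sigma_1$ is the full boundary of $\Omega/\rho$. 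Hence the second Bayes factor in
\begin{equation*}
\frac{\dd\tau_{\Sigma_1\sqcup\Sigma_2\sqcup\Sigma_4}(\mu_{\mm{GFF}}^{|\Omega/\rho|^2})}{\dd\mu_{2\mn{D}}^{\Sigma_1\sqcup\Sigma_2\sqcup\Sigma_4}}(\varphi_1,x,y) = \frac{\dd\tau_{\Sigma_1}(\mu_{\mm{GFF}}^{|\Omega/\rho|^2})}{\dd\mu_{2\mn{D}}^{\Sigma_1}}(\varphi_1)\cdot\frac{\dd((\mathcal{M}_{|\Omega/\rho|^2,2\sqcup 4}^1\varphi_1)_*\mu_{\DN,D}^{\Sigma_2\sqcup\Sigma_4,|\Omega/\rho|^2\setminus\Sigma_1})}{\dd\mu_{2\mn{D}}^{\Sigma_2\sqcup\Sigma_4}}(x,y)
\end{equation*}
factorizes as a tensor product over $\Sigma_2$ and $\Sigma_4$ of two $\Theta$-conjugate identical densities; evaluating at $(x,x)$ (using $\Theta$ to identify $\Sigma_4\cong\Sigma_2$) collapses it to the square of the one-sided density. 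Combined with the first Bayes factor, this shows that the LHS of the proposition equals $\dd\tau_{\Sigma_1\sqcup\Sigma_2\sqcup\Sigma_4}(\mu_{\mm{GFF}}^{|\Omega/\rho|^2})/\dd\mu_{2\mn{D}}^{\Sigma_1\sqcup\Sigma_2\sqcup\Sigma_4}$ at $(\varphi_1,x,x)$.

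Next, Corollary~\ref{corr-rad-niko-dense} expresses both this density and $\dd\tau_{\Sigma_1\sqcup\Sigma_2\sqcup\Sigma_2^*}(\mu_{\mm{GFF}}^{|\Omega|^2})/\dd\mu_{2\mn{D}}^{\Sigma_1\sqcup\Sigma_2\sqcup\Sigma_2^*}$ at $(\varphi_1,x,x)$ as $\detz(2\mn{D})^{-1/2}\detz(\DN)^{1/2}\me^{-\frac{1}{2}\langle\cdot,(\DN-2\mn{D})\cdot\rangle}$. The $\detz(2\mn{D})$ factors and the $\mn{D}$ parts of the quadratic forms agree since $\Sigma_4\cong\Sigma_2\cong\Sigma_2^*$ isometrically. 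The crucial step is the Dirichlet energy identity
\begin{equation*}
\bank{(\varphi_1,x,x),\DN_{|\Omega/\rho|^2}^{\Sigma_1\sqcup\Sigma_2\sqcup\Sigma_4}(\varphi_1,x,x)}_{L^2} = \bank{(\varphi_1,x,x),\DN_{|\Omega|^2}^{\Sigma_1\sqcup\Sigma_2\sqcup\Sigma_2^*}(\varphi_1,x,x)}_{L^2}.
\end{equation*}
To prove this, let $q_1:\Omega\sqcup\Omega^*\to|\Omega/\rho|^2$ and $q_2:\Omega\sqcup\Omega^*\to|\Omega|^2$ be the two natural quotient maps (both local isometries and 2D-measure-preserving bijections almost everywhere). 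Let $u$ and $v$ denote the corresponding harmonic extensions on either side. Then $q_1^*u$ and $q_2^*v$ are both harmonic on $\Omega^{\circ}\sqcup(\Omega^*)^\circ$ with the \emph{same} Dirichlet boundary data $(\varphi_1,x,x)$ on $(\Sigma_1,\Sigma_2,\Sigma_2^*)$ in each copy of $\Omega$—here the diagonal choice of boundary values on $\Sigma_2$ and $\Sigma_2^*$ is precisely what makes the data compatible with the $\rho$-identifications in both quotients. By uniqueness of the Dirichlet problem on $\Omega\sqcup\Omega^*$, we get $q_1^*u=q_2^*v$, and the energy characterization in Lemma~\ref{lemm-dn-prop}(i), together with the fact that each $q_i$ preserves Dirichlet energies, yields the identity.

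After this cancellation the proposition reduces to a determinant identity
\begin{equation*}
\frac{\detz(\DN_{|\Omega/\rho|^2}^{\Sigma_1\sqcup\Sigma_2\sqcup\Sigma_4})}{\detz(\DN_{|\Omega/\rho|^2}^{\Sigma_2\sqcup\Sigma_4})} = \frac{\detz(\DN_{|\Omega|^2}^{\Sigma_1\sqcup\Sigma_2\sqcup\Sigma_2^*})}{\detz(\DN_{|\Omega|^2}^{\Sigma_2\sqcup\Sigma_2^*})},
\end{equation*}
which follows from applying the BFK formula twice on each side. Concretely, each numerator-over-denominator equals $\detz(\Delta_D+m^2)_{M\setminus(\Sigma_2\sqcup\Sigma')}/\detz(\Delta_D+m^2)_{M\setminus(\Sigma_1\sqcup\Sigma_2\sqcup\Sigma')}$ for the appropriate $(M,\Sigma')$. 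The geometric observation, again via the common cover $\Omega\sqcup\Omega^*$, is that $|\Omega/\rho|^2\setminus(\Sigma_2\sqcup\Sigma_4)$ and $|\Omega|^2\setminus(\Sigma_2\sqcup\Sigma_2^*)$ are canonically isometric (both equal $\Omega\cup_{\Sigma_1}\Omega^*$), while both $|\Omega/\rho|^2\setminus(\Sigma_1\sqcup\Sigma_2\sqcup\Sigma_4)$ and $|\Omega|^2\setminus(\Sigma_1\sqcup\Sigma_2\sqcup\Sigma_2^*)$ are isometric to $\Omega^{\circ}\sqcup(\Omega^*)^{\circ}$. The Dirichlet Laplacian determinants agree pairwise and the identity follows. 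The main obstacle I anticipate is the careful verification of the Dirichlet uniqueness argument for $q_1^*u=q_2^*v$ and the Sobolev regularity of the pullbacks at the identifying hypersurfaces, both of which should be standard given that all gluings in $\mathcal{C}_2^{\mm{riem}}$ happen along smooth collars.
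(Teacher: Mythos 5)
Your proposal is correct, and it reaches the result by a route that differs from the paper's after the (shared) first step. Both proofs begin identically: by the Markov/Bayes factorization and the reflection symmetry across $\Sigma_1$, the left-hand side is recognized as the joint density $\dd\tau_{1\sqcup 2\sqcup 4}(\mu_{\mm{GFF}}^{|\Omega/\rho|^2})/\dd\mu_{2\mn{D}}$ at $(\varphi_1,x,x)$. From there the paper applies the Bayes principle a second time in the reversed order (conditioning on $\Sigma_2\sqcup\Sigma_4$ rather than $\Sigma_1$), uses the method-of-images identity $\mathcal{M}_{|\Omega/\rho|^2,1}^{2\sqcup 4}[\smx{x\\x}]=\mathcal{M}_{|\Omega|^2,1}^{2\sqcup 2^*}[\smx{x\\x}]$ to identify the conditional factor across the two closed surfaces, and then invokes Proposition \ref{prop-pre-trace} twice (through the doubly reflected surface $|\Omega\cup_1\Omega^*|^2$) to trade the $\tau_{2\sqcup 4}$ marginal on $|\Omega/\rho|^2$ for the $\tau_{2\sqcup 2^*}$ marginal on $|\Omega|^2$, which is where the determinant ratio appears. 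You instead compare the two triple densities directly via Corollary \ref{corr-rad-niko-dense}: the quadratic forms coincide because both jumpy DN forms at $(\varphi_1,x,x)$ compute the Helmholtz energy of the \emph{same} harmonic extension on $\Omega^{\circ}\sqcup\Omega^{*\circ}$ with data $(\varphi_1,x,x)$ on $(\Sigma_1,\Sigma_2,\Sigma_2^*)$ in each copy, and the remaining zeta-determinant cross-ratio is settled by BFK applied to both hypersurface configurations together with the isometries of the cut-open surfaces. This is in effect an "inlined" triple-hypersurface version of the pre-trace argument; what it buys is a shorter, more transparent proof in which the entire identity visibly reduces to "same cut surfaces, same Dirichlet data" plus one BFK identity, at the price of invoking BFK here (the paper defers BFK to the corollaries and instead keeps the conditional disintegration explicit, which is the form it reuses in the free-end gluing corollaries). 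Two small points you should make explicit to match the paper's standard of rigor: the Dirichlet-energy identity is first proved for regular data and must be extended to the distributional arguments by continuity, using that $\DN-2\mn{D}$ has order $\le -2$ (as in the proof of Proposition \ref{prop-pre-trace}); and the evaluation of Radon--Nikodym densities at the diagonal point $(x,x)$ (a null set) is legitimate only through the explicit continuous representatives supplied by Corollary \ref{corr-rad-niko-dense}, exactly as in the paper.
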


\begin{figure}
    \centering
    \includegraphics[width=0.8\linewidth]{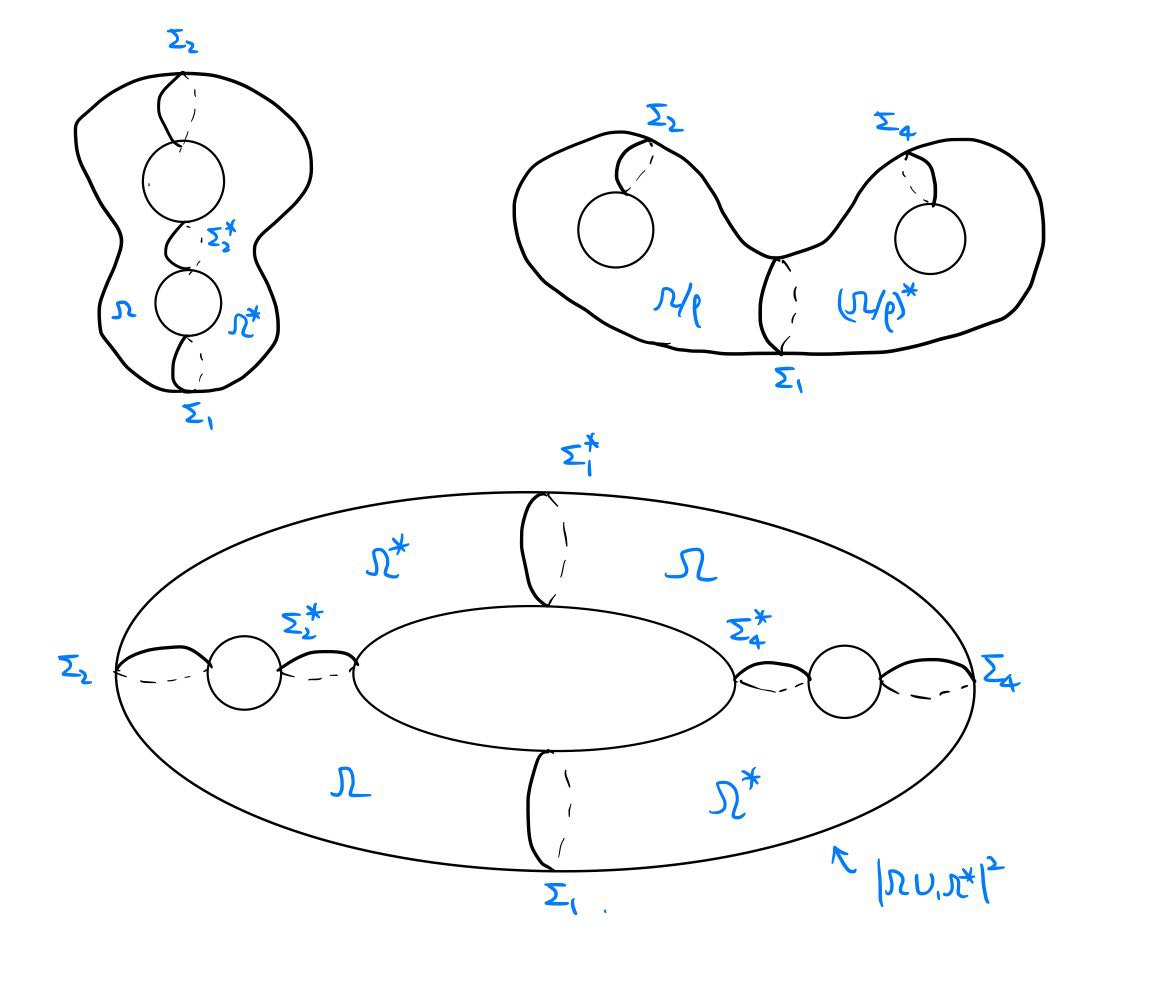}
    \caption{two ways of taking trace}
    \label{fig-general-trace}
\end{figure}

\begin{proof}
  Indeed, by reflection symmetry across~$\Sigma_1$ where~$\Sigma_2$ gets identified with~$\Sigma_4$, we have
  \begin{equation}
  \mathcal{M}_{|\Omega/\rho|^2,2}^{1}\varphi_1 =
\mathcal{M}_{|\Omega/\rho|^2,4}^{1}\varphi_1,\quad\textrm{and}\quad
\mu_{\DN}^{\Sigma_2,\Omega/\rho,D}=\mu_{\DN}^{\Sigma_4,\Omega/\rho,D},
    \label{}
  \end{equation}
  under the symmetry. Also by symmetry (method of images)   
 , we have
  \begin{equation}
  \mathcal{M}_{|\Omega/\rho|^2,1}^{2\sqcup 4}\bnom{x}{ x}
    =\mathcal{M}_{|\Omega|^2,1}^{2\sqcup 2^*}\bnom{x}{ x},
    \label{}
  \end{equation}
  because both of them expresses the associated Dirichlet data on~$\Sigma_1$ of the solution of the Helmholtz boundary value problem over $\Omega$ with Dirichlet data equal to~$x$ on both~$\Sigma_2$,~$\Sigma_2^*$ and Neumann condition (zero normal derivative) on~$\Sigma_1$. Thus,
  \begin{align*}
    \textrm{LHS}&=\frac{\dd\tau_{2\sqcup 4\sqcup 1}(\mu_{\mm{GFF}}^{|\Omega/\rho|^2})}{\dd \mu_{2\mn{D}}^{2\sqcup 4\sqcup 1}}(x,x,\varphi_1) \tag{formula (\ref{eqn-bayes-gff-1}) backward}\\
    &=\frac{\dd\tau_{2\sqcup 4}(\mu_{\mm{GFF}}^{|\Omega/\rho|^2})}{\dd \mu_{2\mn{D}}^{\Sigma_2\sqcup \Sigma_4}}(x,x)
    \frac{\dd \big((\mathcal{M}_{|\Omega/\rho|^2,1}^{2\sqcup 4}[\smx{x\\ x}])_*\mu_{2\DN}^{\Sigma_1,\Omega,D}\big)}{\dd\mu_{2\mn{D}}^{ \Sigma_1}}(\varphi_1) \tag{formula (\ref{eqn-bayes-gff-2}) forward}\\
    &=\frac{\detz(\DN_{|\Omega/\rho|^2}^{\Sigma_2\sqcup\Sigma_4})^{\frac{1}{2}}}{\detz(\DN_{|\Omega|^2}^{\Sigma_2\sqcup\Sigma_2^*})^{\frac{1}{2}}}
   \frac{\dd\tau_{2\sqcup 2^*}(\mu_{\mm{GFF}}^{|\Omega|^2})}{\dd \mu_{2\mn{D}}^{\Sigma_2\sqcup \Sigma_2^*}}(x,x)
   \frac{\dd \big((\mathcal{M}_{|\Omega|^2,1}^{2\sqcup 2^*}[\smx{x\\ x}])_*\mu_{2\DN}^{\Sigma_1,\Omega,D}\big)}{\dd\mu_{2\mn{D}}^{ \Sigma_1}}(\varphi_1) \tag{\#}\\
  &=\textrm{RHS}.
  \end{align*}
We see that with the help of Bayes principle we ``glued away'' the free end and we are reduced to the situation of Proposition \ref{prop-pre-trace}. Indeed,  at step (\#) we use Proposition \ref{prop-pre-trace} twice with respect to~$|\Omega\cup_1\Omega^*|^2$ which is~$\Omega$ ``reflected twice'' by first gluing~$\Omega$ with a reflected copy along~$\Sigma_1$ to get~$\Omega\cup_1\Omega^*$, and then reflect and glue again along~$\Sigma_2\sqcup\Sigma_2^*\sqcup\Sigma_4\sqcup\Sigma_4^*$, with~$\Sigma_4$,~$\Sigma_4^*$ being the reflected copies of~$\Sigma_2$,~$\Sigma_2^*$ across~$\Sigma_1$. From this we get
\begin{align*}
  \frac{\dd\tau_{2\sqcup 4}(\mu_{\mm{GFF}}^{|\Omega/\rho|^2})}{\dd \mu_{2\mn{D}}^{\Sigma_2\sqcup \Sigma_4}}(x,x)&=
  \frac{\detz(\DN_{|\Omega/\rho|^2}^{\Sigma_2\sqcup\Sigma_4})^{\frac{1}{2}}}{\detz(2\DN_{|\Omega\cup_1 \Omega^*|^2}^{2\sqcup 2^*\sqcup 4\sqcup 4^*})^{\frac{1}{4}}}
  \bigg( 
    \frac{\dd\tau_{2\sqcup 2^*\sqcup 4\sqcup 4^*}(\mu_{\mm{GFF}}^{|\Omega\cup_1 \Omega^*|^2})}{\dd\mu_{2\mn{D}}^{2\sqcup 2^*\sqcup 4\sqcup 4^*}}(x,x,x,x)
    \bigg)^{\frac{1}{2}}\\ 
    &=\frac{\detz(\DN_{|\Omega/\rho|^2}^{\Sigma_2\sqcup\Sigma_4})^{\frac{1}{2}}}{\detz(\DN_{|\Omega|^2}^{\Sigma_2\sqcup\Sigma_2^*})^{\frac{1}{2}}}
   \frac{\dd\tau_{2\sqcup 2^*}(\mu_{\mm{GFF}}^{|\Omega|^2})}{\dd \mu_{2\mn{D}}^{\Sigma_2\sqcup \Sigma_2^*}}(x,x),
\end{align*}
  finishing the proof.
\end{proof}

\begin{corr}
  [free-end gluing for free field] \label{cor-real-glue-free} We have
  \begin{equation}
    \int_{}^{}\mathcal{A}_{\Omega}^0(\varphi,\varphi,\psi)\dd\mu_{2\mn{D}}^{\Sigma_2}(\varphi)\detz(2\mn{D}_{\Sigma_2})^{-\frac{1}{2}}=\mathcal{A}_{\Omega/\rho}^0(\psi),
    \label{}
  \end{equation}
  for almost every $\psi\sim \mu_{2\mn{D}}^{\Sigma_1}$.
\end{corr}

\begin{proof}
  The key point is to disintegrate $\mathcal{A}_{\Omega}^0(\varphi,\varphi,\psi)$ into a part involving only $\varphi$, which one could ``integrate out'' cleanly, multiplied by a part independent of $\varphi$, using Proposition \ref{prop-bayes-free-end}. Indeed,
  \begin{align*}
    \textrm{LHS}&=\frac{\detz(\Delta_{|\Omega|^2}+m^2)^{-\frac{1}{4}}}{\detz(2\mn{D}_{\Sigma_2})^{-\frac{1}{2}}\detz( 2\mn{D}_{\Sigma_1})^{-\frac{1}{4}}}
    \int_{}^{}\bigg( 
    \frac{\dd\tau_{2\sqcup 2^*\sqcup 1}(\mu_{\mm{GFF}}^{|\Omega|^2})}{\dd \mu_{2\mn{D}}^{2\sqcup 2^*\sqcup 1}}(\varphi,\varphi,\psi)
    \bigg)^{\frac{1}{2}} \dd\mu_{2\mn{D}}^{\Sigma_2}(\varphi) \detz(2\mn{D}_{\Sigma_2})^{-\frac{1}{2}} \\
    &=\frac{\detz(\DN_{|\Omega|^2}^{\Sigma_2\sqcup\Sigma_2^*})^{\frac{1}{4}}}{\detz(\DN_{|\Omega/\rho|^2}^{\Sigma_2\sqcup\Sigma_4})^{\frac{1}{4}}}
    \frac{\detz(\Delta_{|\Omega|^2}+m^2)^{-\frac{1}{4}}}{\detz(2\mn{D}_{\Sigma_1})^{-\frac{1}{4}}}
    \bigg( \frac{\dd\tau_{1}(\mu_{\mm{GFF}}^{|\Omega/\rho|^2})}{\dd \mu_{2\mn{D}}^{\Sigma_1}}(\psi) \bigg)^{\frac{1}{2}}
    \underbrace{\int_{}^{}\frac{\dd \big((\mathcal{M}_{|\Omega/\rho|^2,2}^{1}\psi)_*\mu_{\DN}^{\Sigma_2,\Omega/\rho,D}\big)}{\dd\mu_{2\mn{D}}^{ \Sigma_2}}(\varphi)
    \dd\mu_{2\mn{D}}^{\Sigma_2}(\varphi)}_{=~ 1}  \\
    &=\detz(\DN_{|\Omega/\rho|^2}^{\Sigma_2\sqcup\Sigma_4})^{-\frac{1}{4}} \detz(\Delta_{|\Omega|^2\setminus \Sigma_2\sqcup\Sigma_2^*,D}+m^2)^{-\frac{1}{4}}
    \detz(2\mn{D}_{\Sigma_1})^{\frac{1}{4}} 
    \bigg( \frac{\dd\tau_{1}(\mu_{\mm{GFF}}^{|\Omega/\rho|^2})}{\dd \mu_{2\mn{D}}^{\Sigma_1}}(\psi) \bigg)^{\frac{1}{2}}\tag{BFK for~$|\Omega|^2$ and~$\Sigma_2\sqcup\Sigma_2^*$}\\
    &=\textrm{RHS}, \tag{BFK for~$|\Omega/\rho|^2$ and~$\Sigma_2\sqcup\Sigma_4$}
  \end{align*}
  finishing the proof.
\end{proof}

We proceed to the interacting case.

\begin{lemm}\label{lemm-equal-law-2-PI}
  For fixed~$\psi\in W^{\frac{1}{2}}(\Sigma_1)$, then the random fields
  \begin{equation}
    \phi\defeq\PI_{\Omega/\rho}^{1}\psi+\PI_{\Omega/\rho}^{2,D}\tau_{2}\phi_{\Omega/\rho}^D,\quad\textrm{and}\quad \tilde{\phi}\defeq\PI_{\Omega}^{2\sqcup 2^*\sqcup 1}[\varphi,\varphi,\psi]
    \label{}
  \end{equation}
  with~$\phi_{\Omega/\rho}^D\sim \mu_{\mm{GFF}}^{\Omega/\rho,D}$ and~$\varphi\sim (\mathcal{M}_{|\Omega/\rho|^2,2}^{1}\psi)_*\mu_{\DN}^{\Sigma_2,\Omega/\rho,D}$, follow the same law on~$\mathcal{D}'(\Omega^{\circ})$.
\end{lemm}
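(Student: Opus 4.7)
The plan is to check that both $\phi$ and $\tilde\phi$ are Gaussian random fields on $\Omega^\circ$ (for fixed $\psi\in W^{\frac{1}{2}}(\Sigma_1)$), and to verify separately that they share the same mean and the same covariance. Since the two sides depend linearly and measurably on the underlying Gaussian data, Gaussianity is immediate. The substance lies in Steps 2 and 3 below.

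\textbf{Step 1 (identifying the deterministic shifts).} Since $\phi^D_{\Omega/\rho}$ is centered, $\mb{E}\phi=\PI_{\Omega/\rho}^{1}\psi$ on $\Omega^\circ$. For $\tilde\phi$, the law of $\varphi$ is the translate of $\mu_{\DN}^{\Sigma_2,\Omega/\rho,D}$ by $\mathcal{M}_{|\Omega/\rho|^2,2}^{1}\psi=\tau_2\PI_{|\Omega/\rho|^2}^{\Sigma_1}\psi$, so
$$\mb{E}\tilde\phi=\PI_\Omega^{2\sqcup 2^*\sqcup 1}\big[\tau_2\PI_{|\Omega/\rho|^2}^{\Sigma_1}\psi,\,\tau_2\PI_{|\Omega/\rho|^2}^{\Sigma_1}\psi,\,\psi\big].$$
The key observation is that the reflection symmetry of $|\Omega/\rho|^2$ across $\Sigma_1$ (which fixes $\psi$ since $\Sigma_1$ is the fixed locus) forces $\PI_{|\Omega/\rho|^2}^{\Sigma_1}\psi$ to be symmetric under the involution, hence its restriction to the half $\Omega/\rho$ solves the Helmholtz BVP with datum $\psi$ on its sole boundary $\Sigma_1$, so equals $\PI_{\Omega/\rho}^{1}\psi$ by uniqueness. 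Consequently $\tau_2\PI_{|\Omega/\rho|^2}^{\Sigma_1}\psi=\tau_2\PI_{\Omega/\rho}^{1}\psi$, and then viewing $\PI_{\Omega/\rho}^{1}\psi$ as a function on $\Omega^\circ$ (via the cutting identification $\Omega^\circ\simeq(\Omega/\rho)^\circ\setminus\Sigma_2$), it solves the same Helmholtz BVP on $\Omega$ with data $\tau_2\PI_{\Omega/\rho}^{1}\psi$ on both $\Sigma_2$, $\Sigma_2^*$ and $\psi$ on $\Sigma_1$. Uniqueness of the BVP gives $\mb{E}\tilde\phi=\PI_{\Omega/\rho}^{1}\psi=\mb{E}\phi$ on $\Omega^\circ$.

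\textbf{Step 2 (matching the centered fields).} Writing $\xi:=\varphi-\mb{E}\varphi\sim\mu_{\DN}^{\Sigma_2,\Omega/\rho,D}$, the centered fields are
$$\phi-\mb{E}\phi=\PI_{\Omega/\rho}^{2,D}\,\tau_2\phi^D_{\Omega/\rho},\qquad \tilde\phi-\mb{E}\tilde\phi=\PI_\Omega^{2\sqcup 2^*,D}[\xi,\xi]\big|_{\Omega^\circ}.$$
The Dirichlet analogue of Proposition~\ref{prop-induce-law-DN} applies on $\Omega/\rho$ to the interior hypersurface $\Sigma_2$: the argument of Lemma~\ref{lemm-DN-trick} carries over with the replacement $(\Delta+m^2)^{-1}\rightsquigarrow(\Delta_{\Omega/\rho,D}+m^2)^{-1}$ and $\PI_M^\Sigma\rightsquigarrow\PI_{\Omega/\rho}^{\Sigma_2,D}$, giving $\tau_2\phi^D_{\Omega/\rho}\sim\mu_{\DN}^{\Sigma_2,\Omega/\rho,D}$. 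Hence $\tau_2\phi^D_{\Omega/\rho}$ and $\xi$ have the same law on $\mathcal{D}'(\Sigma_2)$, and it suffices to prove the deterministic identity
$$\PI_{\Omega/\rho}^{2,D}\zeta=\PI_\Omega^{2\sqcup 2^*,D}[\zeta,\zeta]\big|_{\Omega^\circ}$$
for $\zeta\in C^\infty(\Sigma_2)$. This is the Dirichlet version of (\ref{eqn-two-PI-same-law}) (see Remark~\ref{rem-induce-law-equality}): both sides solve the same Helmholtz problem on $\Omega^\circ$ with value $\zeta$ on $\Sigma_2$ and $\Sigma_2^*$ (i.e.\ on both sides of the cut) and zero on $\Sigma_1$, so they agree by uniqueness. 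The identity extends from $W^{\frac{1}{2}}(\Sigma_2)$ to distributional $\zeta$ via the uniqueness of measurable linear extensions (\cite{Bogachev} Theorem~2.10.11), transferring the equality of laws to the random setting.

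\textbf{Main obstacle.} The only genuinely non-trivial point is the Dirichlet analogue of Proposition~\ref{prop-induce-law-DN}, which requires spelling out a variant of Lemma~\ref{lemm-DN-trick} on $\Omega/\rho$ (with Dirichlet condition on the outer boundary $\Sigma_1$). All other steps reduce to uniqueness of elliptic BVPs and linear measurable extensions. Once these pieces are in place, the verification above shows that $\phi$ and $\tilde\phi$ are centered Gaussians on $\Omega^\circ$ with identical mean $\PI_{\Omega/\rho}^{1}\psi$ and identical covariance, hence have the same law.
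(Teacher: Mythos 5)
Your proposal is correct and follows essentially the same route as the paper: the paper's proof simply observes that $\phi$ and $\tilde{\phi}$ both solve the same stochastic Helmholtz boundary value problem on $\Omega^{\circ}$ (value $\psi$ on $\Sigma_1$, identical data on the two sides of the cut distributed as $(\mathcal{M}_{|\Omega/\rho|^2,2}^{1}\psi)_*\mu_{\DN}^{\Sigma_2,\Omega/\rho,D}$), which is exactly what your mean/centered decomposition verifies. Your Steps 1 and 2 just make explicit the two facts the paper leaves implicit, namely $\mathcal{M}_{|\Omega/\rho|^2,2}^{1}\psi=\tau_2\PI_{\Omega/\rho}^{1}\psi$ and $\tau_2\phi^D_{\Omega/\rho}\sim\mu_{\DN}^{\Sigma_2,\Omega/\rho,D}$ (the Dirichlet-boundary analogue of Proposition \ref{prop-induce-law-DN}, which the paper invokes under its blanket remark that such parallel results hold).
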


\begin{proof}
  Here~$\Omega^{\circ}$ is also~$(\Omega/\rho)\setminus\Sigma_2$. Indeed, both~$\phi$ and~$\tilde{\phi}$ solves the stochastic boundary value problem
  \begin{equation}
    \left\{
    \begin{array}{ll}
      (\Delta+m^2)u=0&\textrm{in }\Omega^{\circ},\\
      u|_{\Sigma_2}=u|_{\Sigma_2^*}\sim(\mathcal{M}_{|\Omega/\rho|^2,2}^{1}\psi)_*\mu_{\DN}^{\Sigma_2,\Omega/\rho,D}, & \textrm{on }\Sigma_2\textrm{ or }\Sigma_2\sqcup\Sigma_2^* \\
      u|_{\Sigma_1}=\psi,&\textrm{on }\Sigma_1,
    \end{array}
    \right.
    \label{}
  \end{equation}
  with equalities holding almost surely.
\end{proof}

\begin{corr}
  [free-end gluing for~$P(\phi)$ field]  In the situation as above, we have
  \begin{equation}
    \int_{}^{} \mathcal{A}_{\Omega}^P(\varphi,\varphi,\psi)\dd\mu_{2\mn{D}}^{\Sigma_2}(\varphi) \detz(2\mn{D}_{\Sigma_2})^{-\frac{1}{2}}=\mathcal{A}_{\Omega/\rho}^P(\psi),
    \label{}
  \end{equation}
  for almost every $\psi\sim \mu_{2\mn{D}}^{\Sigma_1}$.
\end{corr}

\begin{proof}
  Indeed,
  \begin{equation}
    \mathcal{A}_{\Omega}^P(\varphi,\varphi,\psi)=\mathcal{A}_{\Omega}^0(\varphi,\varphi,\psi)\int_{}^{}\me^{-S_{\Omega}(\phi^{D}_{\Omega}|\varphi,\varphi,\psi)}\dd\mu_{\mm{GFF}}^{\Omega,D}(\phi^{D}_{\Omega}),
    \label{}
  \end{equation}
  therefore, with the constants involving determinants working out in exactly the same way as Corollary \ref{cor-real-glue-free}, one has
  \begin{align*}
    \textrm{LHS}&=\frac{\detz(\Delta_{|\Omega/\rho|^2}+m^2)^{-\frac{1}{4}}}{\detz(2\mn{D}_{\Sigma_1})^{\frac{1}{4}}}
    \bigg( \frac{\dd\tau_{1}(\mu_{\mm{GFF}}^{|\Omega/\rho|^2})}{\dd \mu_{2\mn{D}}^{\Sigma_1}}(\psi) \bigg)^{\frac{1}{2}}
    \int_{}^{}\dd \big((\mathcal{M}_{|\Omega/\rho|^2,2}^{1}\psi)_*\mu_{\DN}^{\Sigma_2,\Omega/\rho,D}\big)(\varphi) 
    \int_{}^{} \me^{-S_{\Omega}(\phi^{D}_{\Omega}|\varphi,\varphi,\psi)}\dd\mu_{\mm{GFF}}^{\Omega,D}(\phi^{D}_{\Omega}) \\
    &=\mathcal{A}_{\Omega/\rho}^0(\psi)
    \int_{}^{}\me^{-S_{\Omega/\rho}(\phi^{D}_{\Omega}+(\phi_{\Omega/\rho}^D)_{\Sigma_2}|\psi)} \dd\mu_{\mm{GFF}}^{\Omega,D}\otimes \dd \big[(\PI_{\Omega/\rho}^{2,D}\tau_{2})_*(\mu_{\mm{GFF}}^{\Omega/\rho,D})\big](\phi^{D}_{\Omega},(\phi_{\Omega/\rho}^D)_{\Sigma_2}) \tag{Lemma \ref{lemm-equal-law-2-PI}}\\
    &=\mathcal{A}_{\Omega/\rho}^0(\psi)
    \int_{}^{}\me^{-S_{\Omega/\rho}(\phi_{\Omega/\rho}^D|\psi)} \dd\mu_{\mm{GFF}}^{\Omega/\rho,D}(\phi_{\Omega/\rho}^D)=\textrm{RHS},
  \end{align*}
  where the notation $(\phi_{\Omega/\rho}^D)_{\Sigma_2}$ is as in Proposition \ref{prop-stoc-decomp-domain}. We arrive at the proof.
\end{proof}

Now let~$\Omega_{12}\in\Mor(\mf{\Sigma}_1,\mf{\Sigma}_2)$,~$\Omega_{23}\in\Mor(\mf{\Sigma}_2,\mf{\Sigma}_3)$ and consider the double~$|\Omega_{23}\circ\Omega_{12}|^2=:|\Omega_{23}\Omega_{12}|^2$. In this case the result could be seen as a special case of the previous one, where $\Omega=\Omega_{12}\sqcup\Omega_{23}\in\Mor(\varnothing, \mf{\Sigma}_1^*\sqcup \mf{\Sigma}_2\sqcup \mf{\Sigma}_2^*\sqcup \mf{\Sigma}_3)$ has two disjoint components. We shall re-state these results without proofs.

\begin{corr}\label{prop-pre-glu}
  We have
  \begin{align*}
    &\quad\frac{\dd\tau_{1\sqcup 3}(\mu_{\mm{GFF}}^{|\Omega_{23}\Omega_{12}|^2})}{\dd \mu_{2\mn{D}}^{\Sigma_1\sqcup \Sigma_3}}(\varphi_1,\varphi_3)
    \left( \frac{\dd \big((\mathcal{M}_{|\Omega_{23}\Omega_{12}|^2,2}^{1\sqcup 3}[\smx{\varphi_1\\ \varphi_3}])_*\mu_{2\DN}^{\Sigma_2,\Omega_{23}\Omega_{12},D}\big)}{\dd\mu_{2\mn{D}}^{ \Sigma_2}}(x) \right)^2 \\
    =&\quad \frac{\detz(\DN_{|\Omega_{23}\Omega_{12}|^2}^{\Sigma_2\sqcup\Sigma_4})^{\frac{1}{2}}}{\detz(2\DN_{\Omega_{12}}^{\Sigma_2,N})^{\frac{1}{2}}\detz(2\DN_{\Omega_{23}}^{\Sigma_2,N})^{\frac{1}{2}}}
    \frac{\dd\tau_{\Sigma_1\sqcup \Sigma_2}(\mu_{\mm{GFF}}^{|\Omega_{12}|^2}) }{\dd \mu_{2\mn{D}}^{\Sigma_1\sqcup \Sigma_2}}(\varphi_1,x) 
    \frac{\dd\tau_{\Sigma_2\sqcup \Sigma_3}(\mu_{\mm{GFF}}^{|\Omega_{23}|^2}) }{\dd \mu_{2\mn{D}}^{\Sigma_2\sqcup \Sigma_3}}(x,\varphi_3). \quad\Box
  \end{align*}
\end{corr}

\begin{corr}
  [composition for~$P(\phi)$ field]  In the situation as above, we have
  \begin{equation}
    \int_{}^{} \mathcal{A}_{\Omega_{12}}^P(\psi_1,\varphi)\mathcal{A}_{\Omega_{23}}^P(\varphi,\psi_3)\dd\mu_{2\mn{D}}^{\Sigma_2}(\varphi) \detz(2\mn{D}_{\Sigma_2})^{-\frac{1}{2}}=\mathcal{A}_{\Omega_{23}\circ\Omega_{12}}^P(\psi_1,\psi_3),
    \label{}
  \end{equation}
  for almost every $(\psi_1,\psi_3)\sim \mu_{2\mn{D}}^{\Sigma_1\sqcup\Sigma_3}$. \hfill~$\Box$
\end{corr}

\section{Periodic Cover = Spin Chain} \label{sec-period-cover}

\subsection{Geometric setting}
\label{s:abeliancovers}

\noindent Our set-up corresponds to example 1 in Bergeron \cite{Bergeron}. Let~$M$ be a closed oriented Riemannian surface\footnote{In particular its~$\mb{Z}$-homology groups are non-torsion.} of genus~$g\ge 1$ and~$\Sigma\subset M$ an embedded \textit{primitive} closed geodesic whose $\mb{Z}$-homology class is non-trivial (exists by a classical theorem of E. Cartan). Necessarily,~$\Sigma$ is nondissecting.\footnote{If~$\Sigma$ dissects~$M$ into~$M_+^{\circ}\sqcup M_-^{\circ}$, then~$\Sigma=\partial M_+$; now closed 1-forms integrate to zero over $\Sigma$ by Stokes theorem, thus $\Sigma$ is null-homologous via de Rham's theorem. Alternatively, note~$\partial:H_2(M_+,\Sigma;\mb{Z})\lto H_1(\Sigma;\mb{Z})$ in the exact sequence of the pair $(M_+,\Sigma)$ is surjective, producing the fundamental class.} We consider the covering space~$M_{\infty}^{\Sigma}\lto M$ over~$M$ given by a (normal) subgroup~$\ker\rho$ of the fundamental group~$\pi_1(M)$ where~$\rho$ is the map
\begin{equation}
  \begin{tikzcd}
    \rho:\pi_1(M) \ar[r,"\mm{Ab}"] &[+10pt] H_1(M;\mb{Z}) \ar[r, " I(-{,}{[}\Sigma{]})"] &[+10pt] \mb{Z},
  \end{tikzcd}
  \label{}
\end{equation}
where the first is Abelianization and~$I(-,[\Sigma])$ is the \textsf{oriented intersection number}\footnote{$I([\gamma],[\Sigma])=D(D^{-1}[\gamma]\smile D^{-1}[\Sigma])=\int_{\gamma}^{}\eta_{\Sigma}\in\mb{Z}$, where~$D^{-1}$ is the Poincar\'e dual map and~$\eta_{\Sigma}$ is a smooth bump 1-form supported in a tubular neighborhood of~$\Sigma$ such that~$\int_{\Sigma}^{}\alpha=\int_{M}^{}\alpha\wedge\eta_{\Sigma}$ for any 1-form~$\alpha$.} with~$\Sigma$, which is surjective (since~$\Sigma$ is primitive). In other words, we put~$M_{\infty}^{\Sigma}=\ker \rho\setminus \tilde{M}$ where~$\tilde{M}$ is the universal cover of~$M$ and~$\ker\rho$ acts on~$\tilde{M}$ as deck transformations. Equip~$M_{\infty}^{\Sigma}$ with the covering metric (thus deck transformations act by isometries).

Geometrically,~$M_{\infty}^{\Sigma}$ can be understood as first cutting~$M$ along~$\Sigma$ and obtaining the surface~$\Omega:=M\setminus\Sigma$ with boundaries~$\Sigma_{\mm{in}}\sqcup \Sigma_{\mm{out}}$ where~$\Sigma_{\mm{in}} \cong \Sigma_{\mm{out}}\cong \Sigma$, and gluing~$\Omega$ periodically where each~$\Sigma_{\mm{out}}$ is glued to the ``next''~$\Sigma_{\mm{in}}$. Indeed, the class~$[\gamma]\in \pi_1(M)$ of a loop~$\gamma$ is in~$\ker \rho$ iff~$I([\gamma],[\Sigma])=0$; in other words, these loops are exactly those which are ``not cut'', i.e. lifts to a loop on~$M_{\infty}^{\Sigma}$, and loops which do intersect~$\Sigma$ are lifted to segments whose end points are related by a deck transformation, i.e. they are ``cut''.

Now, for~$N\in\mb{N}$, compose~$\rho$ further with the mod~$N$ map~$\mb{Z}\lto \mb{Z}_N=:\mb{Z}/N\mb{Z}$ and denote it by~$\rho_N$, and let the covering space of~$M$ corresponding to~$\ker\rho_N$ be~$M_N^{\Sigma}$. Since~$\ker\rho\subset \ker\rho_N$,~$M_{\infty}^{\Sigma}$ also covers~$M_N^{\Sigma}$. Geometrically, this corresponds to closing the surface after gluing~$N$ copies of~$\Omega$ --- loops that intersect~$\Sigma$~$N$-times are now lifted to a ``big loop'' in~$M_N^{\Sigma}$.

\begin{figure}
    \centering
    \includegraphics[width=0.8\linewidth]{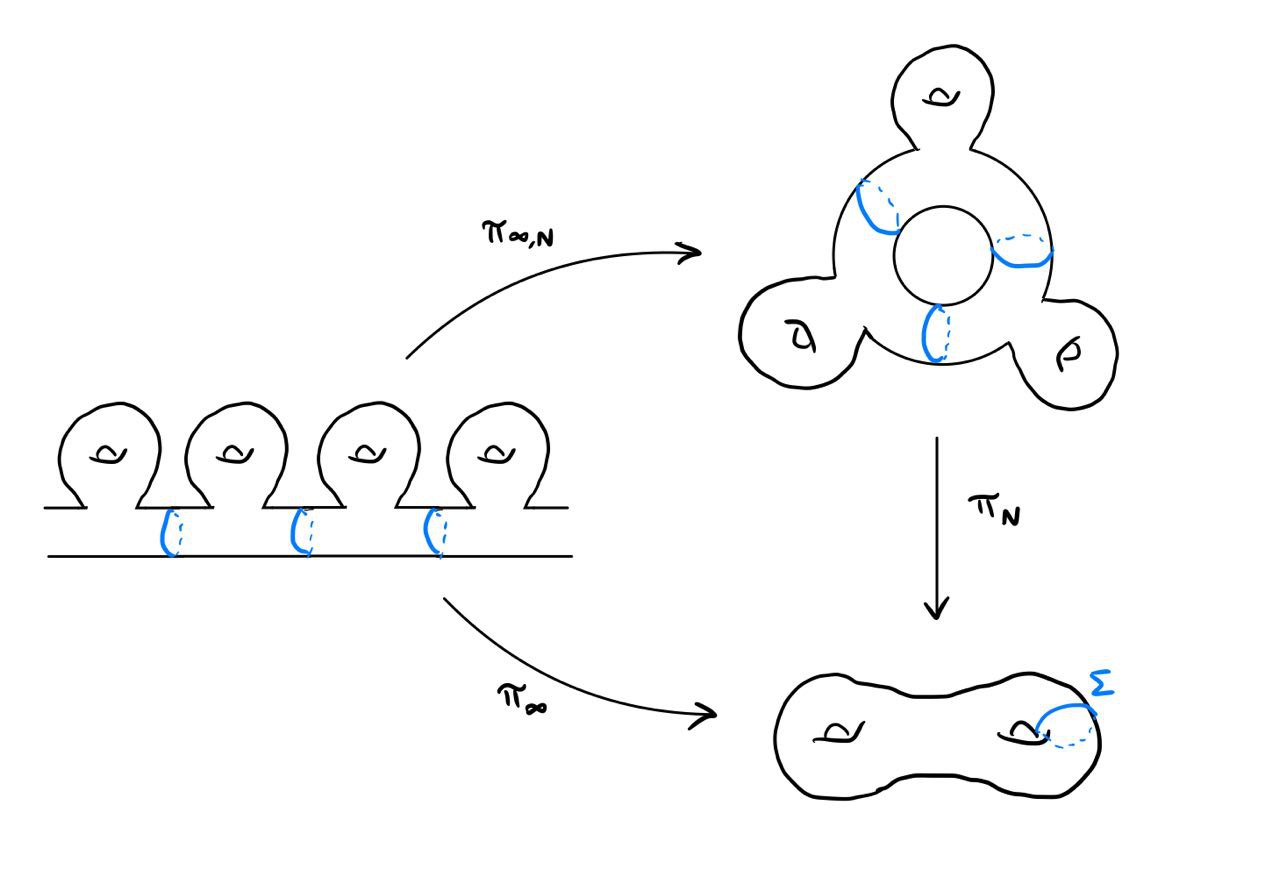}
    \caption{periodic covering and cyclic covering}
    \label{fig-period-cover}
\end{figure}

\begin{def7}
  We also say that the sequence of covers~$(M_N^{\Sigma})_N$ \textsf{converges} to~$M_{\infty}^{\Sigma}$.
\end{def7}

That~$\Omega:=M\setminus\Sigma$ should be understood for what follows.

\subsection{Continued Introduction of Spin Chain Example}\label{sec-spin-chain-cont}

\noindent Here we continue our discussion of the circular spin chain proposed in section \ref{sec-intro-spin-chain}, in particular the equation (\ref{eqn-spin-chain-part-func-trace}).  More generally, one can insert ``nice'' functionals~$F_1$, \dots,~$F_k$ in between at the sites~$1\le i_1<\cdots <i_k\le N$, then
\begin{equation}
  \int_{\mb{R}^N}^{}F_k(\sigma(i_k))\cdots F_1(\sigma(i_1))\me^{-S(\sigma)} \dd^N\sigma =\ttr_{L^2(\mb{R})}\left( T^{N+1-i_k}F_k T^{i_k-i_{k-1}}\cdots F_1 T^{i_1-1} \right),
  \label{}
\end{equation}
where~$F_j$ are thought of as multiplication operators. The evaluation
\begin{equation}
  \left.
  \def\arraystretch{2.5}
  \begin{array}{rcl}
     F_k\otimes\cdots\otimes F_1&\longmapsto& \ddp \frac{1}{\mathcal{Z}(N)}\int_{\mb{R}^N}^{}F_k(\sigma(i_k))\cdots F_1(\sigma(i_1))\me^{-S(\sigma)} \dd^N\sigma \\
     &= & \ddp \frac{\ttr_{L^2(\mb{R})}\left( T^{N+1-i_k}F_k T^{i_k-i_{k-1}}\cdots F_1 T^{i_1-1} \right)}{\ttr_{L^2(\mathbb{R})}( T^{N} )}
  \end{array}
  \right.
  \label{eqn-spin-chain-gibbs-state}
\end{equation}
is said to define a \textsf{Gibbs state} of our spin chain on~$\mb{Z}_N$. Alternatively this is the expected value functional under the discrete Gibbs measure (\ref{eqn-spin-chain-gibbs-part-func}).

\begin{deef}\label{def-thermo-limit}
  We say that a Gibbs state exists in the \textsf{thermodynamic limit} if the second expression in (\ref{eqn-spin-chain-gibbs-state}) tends to a limit as~$N\to \infty$ for bounded functionals~$F_1$, \dots,~$F_k\in L^{\infty}(\mb{R})$.
\end{deef}

We can see from (\ref{eqn-spin-chain-trans-kernel}) that the operator~$T$ is not just bounded, smoothing, but its kernel is strictly positive. Such an operator has a special property which we call the \textsf{Perron-Frobenius} property, referring to the consequence of proposition \ref{prop-perron-frob} below. In particular, this would ensure that the thermodynamic limit does exist, as we shall see in corollaries \ref{cor-segal-trans-exp-decay}, \ref{cor-log-trace-asymp}, and \ref{cor-gibbs-state-limit}.

 As announced in the introduction,
   Corollary \ref{cor-segal-trans-exp-decay} could be equivalently understood as saying that for compactly supported observables $(F,G)$, and $(\tau\sigma)(i):=\sigma(i+1) $ the shift operator, then
\begin{eqnarray*}
  \mathbb{E}[F(\tau^k \sigma) G(\sigma)]= \mathbb{E}[F] \mathbb{E}[G] + \mathcal{O}(\alpha^k)  
\end{eqnarray*}
with $\alpha<1$ the same as in corollary \ref{cor-segal-trans-exp-decay}. Here the expected value should be thought of as coming from a Gibbs measure on the infinite path space~$\mb{R}^{\mb{Z}}$ over~$\mb{Z}$. Indeed, by definition \ref{def-thermo-limit}, this is exactly the vague limit of the finite dimensional (periodic) Gibbs measures over~$\mb{Z}_N$. We say that with respect to this Gibbs measure the shift operator  is \textsf{exponentially mixing} which is exactly saying the $P(\phi)_2$ Gibbs state constructed has a \textbf{mass gap}.

With transfer operator being Perron-Frobenius, the partition functions as in (\ref{eqn-spin-chain-gibbs-part-func}) also enjoy explicit asymptotics. We will explore a consequence in the case of periodic surfaces in the last section.

\begin{exxx}
  In the case $P(\sigma)=m^2\sigma^2$, the spin chain is the discrete massive GFF. We have an exact formula for the partition function $\mathcal{Z}(N)=\prod_{k=0}^{N-1} (1+m^2 -\cos(2\pi\frac{k}{N}  ))^{-\frac{1}{2}}$
hence
\begin{eqnarray*}
  \lim_{N\rightarrow+\infty}    \frac{1}{N} \log \left( \mathcal{Z}(N)\right) = -\frac{1}{2} \int_0^1 \log(1+m^2-\cos(2\pi x))\dd x.
\end{eqnarray*}
\end{exxx}

\subsection{Perron-Frobenius Property and Gibbs State}\label{sec-perron-frob-gibbs}

  A large part of this section follow from the ``properties of a Perron-Frobenius operator''. But we include them here as they form an integral part of the discussion of physical phenomena.

\begin{deef}
  An operator~$A$ on~$L^2(Q,\mu)$ of some measure space~$(Q,\mu)$ has \textsf{strictly positive kernel} if for any nonnegative~$F\in L^2(Q,\mu)$ such that~$\nrm{F}_{L^2}\ne 0$ we have~$AF>0$ almost surely.
\end{deef}

\begin{prop}
  [Perron-Frobenius, \cite{GJ} page 51] \label{prop-perron-frob} If~$A$ on~$L^2(Q,\mu)$ has strictly positive kernel, and~$\lambda=\nrm{A}$ is an eigenvalue of~$A$, then~$\lambda$ is simple, and the corresponding eigenvector can be chosen to be strictly positive almost surely. \hfill~$\Box$
\end{prop}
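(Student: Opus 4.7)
\noindent\emph{Proof proposal.} The plan is to follow the classical three-step strategy for Perron-Frobenius: (a) upgrade the given eigenvector to a non-negative one, (b) use the strict positivity of the kernel to show that non-negative eigenvectors are in fact strictly positive a.s., and (c) deduce simplicity by a sign-change argument.

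For step (a), I would start from any $F\in L^2(Q,\mu)$, $F\not\equiv 0$, with $AF=\lambda F$. Writing $A$ as an integral operator with positive kernel $K(x,y)>0$, the triangle inequality under the integral gives the pointwise a.s.\ estimate
\begin{equation}
  \lambda\,|F|(x)=|AF(x)|\le A|F|(x).
\end{equation}
Set $G\defeq A|F|-\lambda|F|\ge 0$. The hard step is to argue $G\equiv 0$ \emph{without} assuming self-adjointness of $A$. My idea is: if $G\ne 0$, then by strict positivity of the kernel, $AG>0$ everywhere. Setting $H\defeq A|F|$, which is itself strictly positive a.s., this rewrites as $AH>\lambda H$ pointwise. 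Pairing with $H$ and using Cauchy-Schwarz together with $\|A\|=\lambda$ gives the contradiction
\begin{equation}
\lambda\nrm{H}_{L^2}^2<\bank{H,AH}_{L^2}\le \nrm{H}_{L^2}\nrm{AH}_{L^2}\le \lambda \nrm{H}_{L^2}^2.
\end{equation}
Hence $A|F|=\lambda|F|$.

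Step (b) is then immediate: $\lambda|F|=A|F|$ is the image of the nonzero non-negative function $|F|$ under $A$, so by strict positivity of $K$ it is strictly positive a.s., giving $|F|>0$ a.s. In particular, if $F$ is real-valued and $F=F_+-F_-$, then applying step (a) to $F$ and to $|F|=F_++F_-$, and adding/subtracting the two resulting eigenvector identities, I get that $F_+$ and $F_-$ are \emph{both} eigenvectors for $\lambda$; if both were nonzero, step (b) would force each of them to be strictly positive a.s., contradicting their disjoint supports. Therefore every real eigenvector is of constant sign a.s.; and every complex eigenvector, by applying this to its real and imaginary parts, is a complex scalar multiple of a strictly positive real eigenvector.

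For step (c), simplicity, suppose $F_1,F_2>0$ a.s.\ were two linearly independent positive eigenvectors. I would set $c\defeq \operatornamewithlimits{ess\,inf}(F_1/F_2)$, which is finite and positive since $F_1,F_2>0$ a.s.\ and both are $L^2$. Then $F_1-cF_2$ is a non-negative eigenvector whose essential infimum is $0$; by step (b) applied to this eigenvector, the only way to reconcile these is $F_1-cF_2\equiv 0$, contradicting linear independence. The main obstacle I anticipate is in step (a), namely pushing the pointwise inequality $A|F|\ge \lambda|F|$ to an equality; this is where the combination of the Cauchy-Schwarz bound $\bank{H,AH}_{L^2}\le \|A\|\nrm{H}_{L^2}^2$ with the strict kernel positivity of $A$ (used twice, once to get $H>0$ and once to strictly inflate $G$) is really doing the work.
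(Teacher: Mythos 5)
Your steps (a) and (b) are correct, and they are in fact a nice variant of the argument the paper is citing (the paper gives no proof of its own here, it refers to Glimm--Jaffe): the classical route obtains $A|F|=\lambda|F|$ from equality in the variational/Cauchy--Schwarz characterization of $\|A\|$, whereas your device of inflating $G\defeq A|F|-\lambda|F|$ by one more application of $A$ and then contradicting $\langle H,AH\rangle\le\|A\|\,\|H\|^2$ works verbatim without any self-adjointness, which is a genuine (small) gain. One presentational caveat: the paper's ``strictly positive kernel'' is the positivity-improving property (nonnegative $F\ne0$ implies $AF>0$ a.s.), not literally an integral representation, so the pointwise bound $|AF|\le A|F|$ should be derived from positivity preservation (split a real $F$ into $F_\pm$, or take a supremum of $\mathrm{Re}(e^{\ii\theta}AF)=A\,\mathrm{Re}(e^{\ii\theta}F)$ over a countable set of phases) rather than read off a pointwise kernel; this is routine but should be said.

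The genuine gap is in step (c). With $c\defeq\mathrm{ess\,inf}(F_1/F_2)$ (which is only $\ge 0$, not necessarily positive, though that is harmless) you do get $F_1-cF_2\ge 0$, and if it is nonzero it is an eigenvector, hence strictly positive a.e.\ by (a)--(b). But ``strictly positive a.e.'' does not contradict $\mathrm{ess\,inf}\big((F_1-cF_2)/F_2\big)=0$: a strictly positive function need not dominate $\varepsilon F_2$ for any $\varepsilon>0$, so no violation of the maximality of $c$ is produced and the lattice argument stalls exactly at this point (the classical completion of this route needs compactness or a quantitative kernel lower bound, neither of which is assumed). The gap is easy to close with tools you already have: if the eigenspace of $\lambda$ had dimension $\ge 2$, Gram--Schmidt \emph{inside the eigenspace} (no self-adjointness needed, since the eigenspace is a linear subspace) yields two nonzero orthogonal eigenvectors; by your (a)--(b) each of them is, after multiplication by $\pm1$, strictly positive a.s., so their inner product is strictly positive --- contradicting orthogonality. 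Alternatively, note that for every $t\in\mb{R}$ the vector $F_1-tF_2$ is either zero or of constant a.e.\ sign, which forces $F_1/F_2$ to be a.e.\ constant (otherwise pick $t$ strictly between its essential infimum and supremum). Finally, the sentence claiming that every complex eigenvector is a scalar multiple of a positive one already presupposes simplicity and should be dropped or deferred; in the paper's setting the Hilbert spaces are real, and for the positivity claim it suffices to observe that $|F|$ itself is a strictly positive eigenvector.
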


By our definition of the amplitudes~$\mathcal{A}_{\Omega}$ as the square-root of the Radon-Nikodym density between two mutually absolutely continuous positive finite measures ($\me^{-S_{\Omega}}$ is almost surely positive since~$S_{\Omega}$ is a real-valued random variable, recall also remark \ref{rem-zeta-det-positive} that the zeta-determinants are positive), we get immediately

\begin{lemm}
  For any traceable cobordism~$\Omega\in \Mor(\Sigma,\Sigma)$, the Segal transfer operator~$U_{\Omega}:L^2(\mathcal{D}'(\Sigma),\mu_{2\mn{D}})\lto L^2(\mathcal{D}'(\Sigma),\mu_{2\mn{D}})$ has strictly positive kernel. \hfill~$\Box$
\end{lemm}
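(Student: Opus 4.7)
The plan is to unwind the definition of $\mathcal{A}_{\Omega}^P$ and reduce strict positivity of the kernel to the mutual absolute continuity of two finite positive measures.

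First, I would re-interpret $\Omega\in\Mor(\mf{\Sigma},\mf{\Sigma})$ as a morphism in $\Mor(\varnothing,\mf{\Sigma}^*\sqcup\mf{\Sigma})$ via the crossing procedure of subsection \ref{sec-segal-descript}, so that its amplitude $\mathcal{A}_{\Omega}^P\in\mathcal{H}_{\Sigma}\otimes\mathcal{H}_{\Sigma}$ becomes a function $\mathcal{A}_{\Omega}^P(y,x)$ on $\mathcal{D}'(\Sigma)\times \mathcal{D}'(\Sigma)$, and the Segal transfer operator $U_{\Omega}$ is exactly the integral operator with this kernel against $\mu_{2\mn{D}}$ (Lemma \ref{lemm-segal-transfer} and (\ref{eqn-def-segal-transfer})).

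Second, I would show that $\mathcal{A}_{\Omega}^P(y,x)>0$ for $(\mu_{2\mn{D}}\otimes\mu_{2\mn{D}})$-almost every $(y,x)$. Starting from (\ref{eqn-def-amp-omega}), the determinant prefactor is a strictly positive constant by remark \ref{rem-zeta-det-positive}, so it suffices to prove that the Radon--Nikodym density
\[
\frac{\dd\,\tau_{\Sigma^*\sqcup\Sigma}\bigl(\me^{-S_{|\Omega|^2}}\!\cdot\mu_{\mm{GFF}}^{|\Omega|^2}\bigr)}{\dd \mu_{2\mn{D}}^{\Sigma^*\sqcup\Sigma}}(y,x)
\]
is strictly positive almost surely. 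The argument is a chain of mutual absolute continuities. Since $S_{|\Omega|^2}$ is a real-valued random variable $\mu_{\mm{GFF}}^{|\Omega|^2}$-a.s.\ (Theorem \ref{thrm-nelson} and Proposition \ref{prop-nelson-main}), the weight $\me^{-S_{|\Omega|^2}}$ is strictly positive a.s., so $\me^{-S_{|\Omega|^2}}\!\cdot\mu_{\mm{GFF}}^{|\Omega|^2}\sim \mu_{\mm{GFF}}^{|\Omega|^2}$. Next, mutual absolute continuity is preserved under push-forward: if $\mu\sim \nu$ on a measurable space and $\pi$ is a measurable map, then $\pi_*\mu\sim \pi_*\nu$ (since $\pi_*\nu(A)=0\Leftrightarrow \nu(\pi^{-1}A)=0$). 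Applying this with $\pi=\tau_{\Sigma^*\sqcup\Sigma}$ yields $\tau_{\Sigma^*\sqcup\Sigma}(\me^{-S_{|\Omega|^2}}\!\cdot\mu_{\mm{GFF}}^{|\Omega|^2})\sim\tau_{\Sigma^*\sqcup\Sigma}(\mu_{\mm{GFF}}^{|\Omega|^2})=\mu_{2\DN}^{\Sigma^*\sqcup\Sigma,\Omega}$ (Proposition \ref{prop-induce-law-DN}), and the latter is $\sim\mu_{2\mn{D}}^{\Sigma^*\sqcup\Sigma}$ by Corollary \ref{corr-rad-niko-dense}. Chaining these equivalences gives a strictly positive Radon--Nikodym derivative almost surely, hence $\mathcal{A}_{\Omega}^P(y,x)>0$ $(\mu_{2\mn{D}}\otimes\mu_{2\mn{D}})$-a.s.

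Third, I would conclude by Fubini. From the joint a.s.\ positivity, for $\mu_{2\mn{D}}$-a.e.\ $y\in \mathcal{D}'(\Sigma)$ the section $x\mapsto \mathcal{A}_{\Omega}^P(y,x)$ is strictly positive $\mu_{2\mn{D}}$-a.s. Given any nonnegative $F\in L^2(\mathcal{D}'(\Sigma),\mu_{2\mn{D}})$ with $\nrm{F}_{L^2}\ne 0$, the set $\{F>0\}$ has positive $\mu_{2\mn{D}}$-measure, so for such a.e.\ $y$ the integrand $\mathcal{A}_{\Omega}^P(y,x)F(x)$ is nonnegative and strictly positive on a set of positive measure; hence $(U_{\Omega}F)(y)>0$ almost surely, which is the claim. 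There is no genuine obstacle here: the only points demanding care are the measure-theoretic bookkeeping (preservation of $\sim$ under push-forward) and the verification, already secured in Section \ref{sec-nelson-main}, that $S_{|\Omega|^2}$ is a.s.\ finite so that $\me^{-S_{|\Omega|^2}}>0$ a.s.
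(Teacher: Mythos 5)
Your proposal is correct and follows essentially the same route as the paper: the paper's (very compressed) justification is precisely that $\mathcal{A}_{\Omega}^P$ is by definition the square root of a Radon--Nikodym density between mutually absolutely continuous finite positive measures (using that $S_{|\Omega|^2}$ is a.s.\ real-valued, so $\me^{-S_{|\Omega|^2}}>0$ a.s., and that the zeta-determinant prefactors are positive by remark \ref{rem-zeta-det-positive}), hence a.s.\ strictly positive, which gives the strictly positive kernel. Your write-up simply makes explicit the intermediate steps (push-forward preserving mutual absolute continuity, Corollary \ref{corr-rad-niko-dense}, and the final Fubini argument) that the paper leaves implicit.
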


We deduce that~$U_{\Omega}$ has a simple top eigenvalue~$\lambda_0=\nrm{U_{\Omega}}$ spanned by a normalized, almost surely strictly positive eigenvector~$\Omega_0\in L^2(\mathcal{D}'(\Sigma),\mu_{2\mn{D}})$. Alternatively speaking $U_{\Omega}$ has a \textsf{spectral gap}. We get
\begin{corr}\label{cor-segal-trans-exp-decay}
  Denote~$\wh{U}_{\Omega}:=\lambda_0^{-1}U_{\Omega}$, let~$\lambda_1$ be the eigenvalue of~$U_{\Omega}$ with next largest modulus, thus~$\lambda_0>|\lambda_1|$, and put~$\alpha:=|\lambda_1|/\lambda_0 <1$. Then for any~$F$,~$G\in L^2(\mathcal{D}'(\Sigma),\mu_{2\mn{D}})$, we have
  \begin{equation}
    \big|\bank{F,\wh{U}_{\Omega}^N G}-\bank{F,\Omega_0}\bank{\Omega_0,G}\big| \le \alpha^N \nrm{F}\nrm{G}.
    \label{}
  \end{equation}
\end{corr}

\begin{proof}
  Note~$\sank{F,\wh{U}_{\Omega}^N G}-\ank{F,\Omega_0}\ank{\Omega_0,G}=\sank{\Pi_0^{\perp}F,\wh{U}_{\Omega}^N \Pi_0^{\perp}G}$ where~$\Pi_0^{\perp}$ is the orthogonal projection onto the complement of~$\spn\{\Omega_0\}$, where~$\wh{U}_{\Omega}$ has norm~$\alpha<1$.
\end{proof}

The above corollary is exactly stating that the Gibbs state we defined has a mass gap, in perfect analogy with $1$D spin chains with local interactions which are well known to have no phase transitions.

\begin{corr}\label{cor-log-trace-asymp}
  We have
  \begin{equation}
    \lim_{N\to\infty}\frac{1}{N}\log\ttr (U_{\Omega}^N) =\log\lambda_0.
    \label{}
  \end{equation}
\end{corr}

\begin{proof}
  Without loss of generality let~$N\ge 2$ so each~$U_{\Omega}^N$ is trace class. On one hand
  \begin{equation}
    \ttr(U_{\Omega}^N)\le \bnrm{U_{\Omega}^2}_{\mm{tr}}\bnrm{U_{\Omega}^{N-2}}\le \bnrm{U_{\Omega}^2}_{\mm{tr}}\lambda_0^{N-2}.
    \label{eqn-log-trace-upper-bound}
  \end{equation}
  On the other, we decompose~$\mathcal{H}_{\Sigma}=\spn\{\Omega_0\}\oplus \spn\{\Omega_0\}^{\perp}$ where
  \begin{align*}
    \ttr(U_{\Omega}^N)&=\lambda_0^N +\ttr(U_{\Omega}^N|_{\spn\{\Omega_0\}^{\perp}}) \\
    &\ge \lambda_0^N -\bnrm{U_{\Omega}^2|_{\spn\{\Omega_0\}^{\perp}}}_{\mm{tr}}|\lambda_1|^{N-2}\\
    &=\lambda_0^N\left( 1-C_1 \alpha^{N-2}\lambda_0^{-2} \right).
  \end{align*}
  This and (\ref{eqn-log-trace-upper-bound}) gives the result after taking~$N\to\infty$.
\end{proof}

Corollary \ref{cor-segal-trans-exp-decay} and the proof of corollary \ref{cor-log-trace-asymp} implies

\begin{corr}\label{cor-gibbs-state-limit}
  For any bounded operator~$F\in \mathcal{L}(\mathcal{H}_{\Sigma})$ we have
  \begin{equation}
    \lim_{N\to \infty}\frac{\ttr(U_{\Omega}^{N-L}F)}{\ttr(U_{\Omega}^N)}=\frac{1}{\lambda_0^L}\ank{\Omega_0,F\Omega_0}=\frac{\ank{\Omega_0,F\Omega_0}}{\ank{\Omega_0,U_{\Omega}^L \Omega_0}}.
    \label{}
  \end{equation}
\end{corr}
\begin{proof}
  Remember that~$N\gg L$. Thus by~$\mathcal{H}_{\Sigma}=\spn\{\Omega_0\}\oplus \spn\{\Omega_0\}^{\perp}$ and \cite{Sim1} theorem 2.14 we have
  \begin{equation}
    \lambda_0^{-N+L} \ttr\big(U_{\Omega}^{N-L}F\big)=\bank{\Omega_0,\wh{U}_{\Omega}^{N-L}F\Omega_0}+ \ttr\big(\wh{U}_{\Omega}^{N-L}F|_{\spn\{\Omega_0\}^{\perp}}\big)
  \end{equation}
  with
  \begin{equation}
    \big|\ttr\big(\wh{U}_{\Omega}^{N-L}F|_{\spn\{\Omega_0\}^{\perp}}\big)\big| \le \nrm{F} \bnrm{\wh{U}_{\Omega}^{N-L}|_{\spn\{\Omega_0\}^{\perp}}} \to 0.
    \label{}
  \end{equation}
  Since $\sank{\Omega_0,\wh{U}_{\Omega}^{N-L}F\Omega_0}\to \sank{\Omega_0,F\Omega_0}$ by corollary \ref{cor-segal-trans-exp-decay} and~$\ttr(U_{\Omega}^N)\asymp \lambda_0^N$ by corollary \ref{cor-log-trace-asymp}, we obtain the result.
\end{proof}

We arrive at the conclusion that, for functionals~$F_1$, \dots,~$F_k\in L^{\infty}(\mathcal{D}'(\Sigma),\mu_{2\mn{D}})$ and integers~$1\le i_1<\cdots<i_k$, the evaluation
\begin{equation}
  \left.
  \def\arraystretch{2.5}
  \begin{array}{rcl}
     F_k\otimes\cdots\otimes F_1 &\longmapsto& \ddp \lim_{N\to \infty}\frac{\ttr\big( U_{\Omega}^{N+1-i_k}F_k U_{\Omega}^{i_k-i_{k-1}}\cdots F_1 U_{\Omega}^{i_1-1} \big)}{\ttr\big( U_{\Omega}^{N} \big)} \\
     &= & \ddp \frac{\bank{\Omega_0,F_k U_{\Omega}^{i_k-i_{k-1}}\cdots F_1 U_{\Omega}^{i_1-1}\Omega_0}}{\bank{\Omega_0,U_{\Omega}^{i_k-1} \Omega_0}}
  \end{array}
  \right.
  \label{}
\end{equation}
defines a Gibbs state in the thermodynamic limit on a~$\mathcal{D}'(\Sigma)$-valued~$\mb{Z}$-spin chain.

\begin{def7}
  In fact, a valid functional~$F\in L^{\infty}(\mathcal{D}'(\Sigma),\mu_{2\mn{D}})$ could be given by
  \begin{equation}
    F(\psi)\defeq \int_{}^{}\mathcal{A}_{\Omega}^0(\psi,\varphi)\dd\mu_{2\mn{D}}^{\Sigma}(\varphi)\int_{}^{}\tilde{F}(\phi^{D}_{\Omega},\psi,\varphi)\me^{-S_{\Omega}(\phi^{D}_{\Omega}|\psi, \varphi)}\dd\mu_{\mm{GFF}}^{\Omega,D}(\phi^{D}_{\Omega}),
    \label{}
  \end{equation}
  for~$\tilde{F}\in L^{\infty}(\mathcal{D}'(\Omega^{\circ}),\mu_{\mm{GFF}}^{\Omega,D}\otimes \mu_{2\mn{D}}^{\Sigma\sqcup\Sigma})$, with~$\mu_{2\mn{D}}^{\Sigma\sqcup\Sigma}$ considered as the induced measure on~$\mathcal{D}'(\Omega^{\circ})$ via~$\PI_{\Omega}^{\Sigma\sqcup\Sigma}$. Hence the Gibbs state actually extends to the continuum~$M_{\infty}^{\Sigma}$.
\end{def7}

\subsection{Asymptotic of Partition Function}\label{sec-segal-end-asymp-part-func}

\noindent Now we assume the surface~$\Omega\in\Mor(\mf{\Sigma},\mf{\Sigma})$ be \textsf{reflection symmetric}, which means~$\Omega=\tilde{\Omega}^*\circ\tilde{\Omega}$ for some~$\tilde{\Omega}\in\Mor(\mf{\Sigma},\mf{\Sigma'})$, in the notations of Lemma \ref{lemm-segal-transfer}. In other words there is an isometric involution~$\Theta:\Omega\lto \Omega$ whose fixed point set is exactly~$\Sigma'$, and exchanges the two components of~$\partial\Omega$. In this case~$U_{\Omega}=U_{\tilde{\Omega}^*}\circ U_{\tilde{\Omega}}$ and \cite{Sim3} theorem 3.8.5 applies, namely~$\ttr(U_{\Omega})=\ttr_{\rho}(U_{\Omega})$, as well as for~$U_{\Omega}^N$. From Corollary \ref{cor-P(phi)-trace} and Corollary \ref{cor-log-trace-asymp} we then deduce that
\begin{equation}
  \lim_{N\to\infty}\frac{1}{N}\log(Z_{M_N^{\Sigma}})=\log \lambda_0.
  \label{}
\end{equation}
Thus we have proved theorem \ref{thrm-intro-main-2}. This applies in particular to the free case where~$Z_{M_N^{\Sigma}}=\detz(\Delta_{M_N^{\Sigma}}+m^2)^{-\frac{1}{2}}$.

\begin{def7}\label{rem-final-open-ques}
  There arises the interesting question of how~$\lambda_0$ (or~$\log \lambda_0$) would depend on the geometry of~$\Omega$. Very crudely one would expect~$\log\lambda_0 \propto \vol(\Omega)$, since in our case the corresponding~$\tilde{\lambda}_0$ for~$N\Omega$ ($N$ copies of~$\Omega$ glued) is just~$\lambda_0^N$. A more precise formula in the general, non-periodic case seems desirable.
\end{def7}

\chapter[Spectral Cut-off, Locality, and RP]{Spectral Cut-off, Locality, and Reflection Positivity}\label{chap-cut-off}

This Chapter is adapted form \cite{BDFL2} joint with Ismael Bailleul, Nguyen Viet Dang, Léonard Ferdinand, Gaëtan Leclerc.

\section{Introduction and context}
\subsection{Markov property and reflection positivity of a random field}
Let $(M,g)$ be a smooth, closed, compact Riemannian manifold or $M=\mb{R}\times \Sigma$ where $\Sigma$ is a complete Riemannian manifold endowed with the product metric $\dd  t^2+g_\Sigma$. We  denote by $\Delta_g$ the positive Laplace-Beltrami operator acting on $C^\infty(M)$. In case $M$ is compact, we write $(e_\lambda)_{\lambda\in \mathrm{spec}(\Delta_g)}$ for the $L^2$ basis of eigenfunctions of $\Delta_g$ and 
$$
\mathrm E_{\leqslant \Lambda} \defeq \mathrm{span}(e_\lambda)_{\lambda\leqslant\Lambda}\subset C^\infty(M)\,.
$$
The massive Gaussian Free Field (GFF in the sequel) $\Phi$ of law $\mu_{\mathrm{GFF}}$ is defined
on $M$ compact as
the random series
$$ \Phi\defeq\sum_{\lambda\in\mathrm{spec}(\Delta_g)}\frac{c_\lambda}{\sqrt{1+\lambda}}  e_\lambda\,,  $$
where the coefficients $c_\lambda$ are independent, identically distributed, random variables with Gaussian distribution $\CB{N}(0,1)$. On a cylinder $M=\mb{R}\times \Sigma$ the massive Gaussian free field is defined as the unique Gaussian process $\Phi$ indexed by $H^{-1}(M)$ with covariance
$$\mb{E}\big[\Phi(f)\Phi(h)\big] = \left\langle f,h\right\rangle_{H^{-1}(M)}.$$
For every $\varepsilon>0$, one can realize the Gaussian process $\Phi$ as a random variable with values in $H^{({2-d})/{2}-\varepsilon}(M)$. 
This follows essentially from the Weyl law by an argument analogous to \cite{Berestycki-Powell-2025} Theorem 1.45.
In the sequel, we thus fix $\varepsilon>0$ and work with the canonical probability space $\big(H^{(2-d)/2-\varepsilon}(M), \CB{O}, \mu_{\text{GFF}}\big)$, where $\CB{O}$ stands for the Borel $\sigma$-algebra of $\Omega=H^{(2-d)/2-\varepsilon}(M)$. One can take $\Phi(\omega)=\omega$ for every $\omega\in\Omega$.

\begin{deef}
In case of $M$ compact or a cylinder, $\Delta_g:C^\infty_c(M)\subset L^2(M)\rightarrow L^2(M) $ is essentially self-adjoint, and therefore admits a well-defined functional calculus. 
For $\Lambda>0$ we write
$$
\Pi_\Lambda\defeq1_{[0,\Lambda]}(\Delta_g)
$$ 
for the sharp spectral projector. In particular, when $M$ is compact, one has $\Pi_\Lambda:\CB D'(M)\rightarrow \EL$. Note that $\Pi_\Lambda$ is self-adjoint. Finally, we define the spectrally cut-off GFF $\Phi_\Lambda$ GFF 
by
$$\Phi_\Lambda\defeq\Pi_\Lambda(\Phi)\,, $$ 
which is a random smooth function. 
\end{deef}
Except in Section \ref{sec2}, throughout the rest of the article, without loss of generality, in our choice of cut-off free field we could replace $\Pi_\Lambda$ by $\Psi(  \Delta_g/\Lambda)$ where $\Psi$ is smooth and compactly supported. 

We will use the notion of smooth functionals  on a locally convex space such as $\CB{D}^\prime(M),C^\infty(M)$ or on some Sobolev space $H^s(M)$ as discussed in~\cite{DBLR}, along with the notion of support of a smooth functional. Both notions are recalled hereafter, we start with smooth functionals~:
\begin{deef}
Let $E$ be a locally convex space and $U\subset E$ an open subset. A function $F:U\rightarrow \mb{R}$ is \textbf{smooth} if for every $k\in \mb{N}$, $x\in U$, every $(h_1,\dots,h_k)\in E^k$, the limit
$$  \frac{\partial^kF\big(x+t_1h_1+\dots+t_kh_k\big)}{\partial t_1 \cdots\partial t_k}\Big|_{(t_1,\dots,t_k)=0}=\D^k_xF\big(h_1,\cdots,h_k\big) $$ exists and $\D^kF:U\times E^k\rightarrow \mb{R}$ is jointly continuous as a function of its arguments and linear in $h_1,\dots,h_k$.   
\end{deef}
Next, we discuss the notion of support:
\begin{deef}\label{def:suppfonc}
Take $E$ to be $C^\infty(M)$, $\CB{D}^\prime(M)$, or some Sobolev space $H^s(M)$, and let $U\subset E$ be an open subset. 
The \textbf{support of some smooth functional} $F: U\rightarrow \mb R$, denoted by $\mathrm{supp}(F)$, is the set containing those points $x\in M$ such that for every neighbourhood $A_x\subset M$ of $x$, there exists $\phi,\psi\in U$ that verify $F(\phi)\neq F(\psi)$ and $\mathrm{supp}(\phi-\psi)\subset A_x$.
\end{deef} 

We next introduce the notion of reflection positive manifold  following the work of Jaffe \& Ritter~\cite{JR2,JR08,JR}:
\begin{deef}  \label{def-ref-positive-manifold}
We say that the $d$ dimensional manifold $M$ is \textbf{reflection positive} if there exists an \textbf{isometric involution} $\Theta\colon M\rightarrow M$ which admits as invariant subset a submanifold $\Sigma$ of dimension $d-1$ such that $M\setminus \Sigma$ is the disjoint union of two open manifolds $M_+$ and $M_-$ that both verify $\partial \overline{M_+}=\partial \overline{M_-}=\Sigma$. One can therefore think of $M$ as the disjoint union
\begin{align*}
    M=M_+\cup\Sigma\cup M_-\,,\,
    \Theta\upharpoonright\Sigma=\mathrm{Id}_\Sigma\,,\, \Theta (M_\pm) = M_\mp\,.
\end{align*}
\end{deef}

\begin{figure}[ht]
\centering
\begin{tikzpicture}[scale=1.5]
\draw[black] (-3.5,2.5)--(0.5,2.5);
\draw[teal] (0,0)..controls(.3,.3)and(.3,.7)..(0,1);
\draw[teal] (0,1)..controls(-.2,1.15)and(-.8,1.15)..(-1,1);

\draw[blue] (-2,1)..controls(-2.2,1.15)and(-2.8,1.15)..(-3,1);
\draw[blue] (-3,1)..controls(-3.3,.7)and(-3.3,.3)..(-3,0);
\draw[blue] (-2,0)..controls(-2.2,-.15)and(-2.8,-.15)..(-3,0);

\draw[teal] (-0,0)..controls(-.2,-.15)and(-.8,-.15)..(-1,0);
\draw[red] (-1.5,.22)..controls(-1.61,.4)and(-1.61,.6)..(-1.5,.78);
\draw[red,densely dotted] (-1.5,.22)..controls(-1.39,.4)and(-1.39,.6)..(-1.5,.78);
\node[red] at (-1.5,.0){$\Sigma$};
\draw[teal] (-.8,.5)..controls(-.6,.61)and(-.4,.61)..(-.2,.5);
\node[blue] at (-2.5,.8){$M_-$};
\node[teal] at (-.5,.8){$M_+$};
\draw[teal] (-.8,.5)..controls(-.6,.39)and(-.4,.39)..(-.2,.5);
\draw[blue] (-2.8,.5)..controls(-2.6,.61)and(-2.4,.61)..(-2.2,.5);
\draw[blue] (-2.8,.5)..controls(-2.6,.39)and(-2.4,.39)..(-2.2,.5);
\draw[red] (-1.5,2.22)..controls(-1.61,2.4)and(-1.61,2.6)..(-1.5,2.78);
\draw[red,densely dotted] (-1.5,2.22)..controls(-1.39,2.4)and(-1.39,2.6)..(-1.5,2.78);

\draw[teal] (-1,0)..controls(-1.2,.15)and(-1.35,.25)..(-1.5,0.22);
\draw[blue] (-2,0)..controls(-1.8,.15)and(-1.65,.25)..(-1.5,0.22);

\draw[teal] (-1,1)..controls(-1.2,.85)and(-1.35,.75)..(-1.5,0.78);
\draw[blue] (-2,1)..controls(-1.8,.85)and(-1.65,.75)..(-1.5,0.78);

\draw[gray] (0,2.22)..controls(-.11,2.4)and(-.11,2.6)..(-0,2.78);
\draw[gray] (-0,2.22)..controls(.11,2.4)and(.11,2.6)..(0,2.78);
\draw[gray] (-3,2.22)..controls(-3.11,2.4)and(-3.11,2.6)..(-3,2.78);
\draw[gray,densely dotted] (-3,2.22)..controls(-2.89,2.4)and(-2.89,2.6)..(-3,2.78);
\draw[gray] (-3,2.22)--(0,2.22);
\draw[gray] (-3,2.78)--(0,2.78);
\node[red] at (-1.5,2.0){$\Sigma$};
\node[black] at (.7,2.5){$\mb R$};
\draw[black] (.5,2.5)--(.45,2.45);
\draw[black] (.5,2.5)--(.45,2.55);

\draw[black] (-2,2)..controls(-1.8,1.7)and(-1.2,1.7)..(-1,2);
\draw[black] (-2,1.2)..controls(-1.8,1.5)and(-1.2,1.5)..(-1,1.2);
\draw[black] (-2,1.2)--(-1.93,1.2);
\draw[black] (-2,1.2)--(-2,1.27);
\draw[black] (-1,1.2)--(-1.07,1.2);
\draw[black] (-1,1.2)--(-1,1.27);
\draw[black] (-2,2)--(-1.93,2);
\draw[black] (-2,2)--(-2,1.93);
\draw[black] (-1,2)--(-1.07,2);
\draw[black] (-1,2)--(-1,1.93);

\node[black] at (-1.5,1.6){$\Theta$};
\end{tikzpicture}
\caption{A cylinder $\mb R\times\color{red}{\Sigma}$ and a reflection positive manifold $M=\color{teal}{M_+}\color{black}{\;\cup\;}\color{red}{\Sigma} \color{black}{\;\cup\;} \color{blue}{M_-}$.}
\label{f:RP_space}
\end{figure} 
\noindent For clarity we write below $\overline{M}_\pm$ for $\overline{M_\pm}$ and refer to Figure~\ref{f:RP_space}. The spaces of constant curvature $\mb R^d$, $\mb S^d$, and $\mb H^d$ are reflection positive. In the sequel, we will mostly be interested in the cases $M=\mb R\times\Sigma$ and $M=\mb S^d$, for which we precise the setting. The cylinder $\mb R\times\Sigma$ is reflection positive with respect to $\{0\}\times\Sigma$. The sphere $\mb S^d$ is reflection positive with respect to any of its equators.

Whenever a manifold $M$ is reflection positive with isometric involution $\Theta$, the map $\Theta$ acts via pull-back on $C^\infty(M)$ and $\CB{D}'(M)$ and hence on functionals on $M$ and $\CB{D}'(M)$-valued random variables. Note that for any distribution $\omega\in\CB D'(M)$,
\begin{equation} \label{EqSymmetrySpectralProjectors}
\Pi_\Lambda(\omega)\upharpoonright
{\overline{M}_-} = \big(\Pi_\Lambda(\omega) \circ\Theta\big)\upharpoonright{\overline{M}_+} = \Pi_\Lambda(\omega\circ\Theta)\upharpoonright{\overline{M}_+}\,.
\end{equation}


Let $A$ be a closed subset of $M$. We denote by $W^{-1}_A(M) $ the subset of elements in the Sobolev space $W^{-1}(M)$ which are supported in $A$. Let $\Psi$ be a random distribution defined on our probability space $\Omega$, with law $\mu_\Psi$ and associated expectation operator $\mb{E}_\Psi[\bigcdot]$. We introduce a sub $\sigma$-algebra of $\CB{O}$
\begin{align*}    
\sigma(\Psi;A) \defeq \sigma\big(  \Psi(f) ; f\in W^{-1}_A(M)\big)\,.
\end{align*}
We implicitly assume here that we only consider those functions $f\in  W^{-1}_A(M)$ for which $\Psi(f)$ is well-defined. One can think of $\sigma(\Psi;A)$ as the $\sigma$-algebra generated by the observables that are only sensitive to the fluctuation of the field $\Psi$ on the set $A$.

\begin{deef}
A random process $\Psi$ as above has the \textbf{Markov property} if for \textit{any} closed subsets $A, B$ of $M$ such that $A^{\circ}\cap B=\varnothing $, namely $B$ does not intersect the interior of $A$ (Figure~\ref{f:Markov} below),  and any random variable $F\in L^2$ that is $\sigma(\Psi;B)$-measurable, one has
\begin{equation*}
  \mb{E}\big[F | \sigma(\Psi;A)\big] = \mb{E}\big[F | \sigma(\Psi;\partial A)\big].
  \label{}
\end{equation*}
\end{deef}

\begin{figure}[ht]
\centering
\begin{tikzpicture}[scale=1.5]
\draw[gray] (0,0)..controls(.3,.3)and(.3,.7)..(0,1);
\draw[gray] (0,1)..controls(-.2,1.15)and(-.8,1.15)..(-1,1);

\draw[blue] (-2,1)..controls(-2.2,1.15)and(-2.8,1.15)..(-3,1);
\draw[blue] (-3,1)..controls(-3.3,.7)and(-3.3,.3)..(-3,0);
\draw[blue] (-2,0)..controls(-2.2,-.15)and(-2.8,-.15)..(-3,0);

\draw[gray] (-1,0)..controls(-1.2,.15)and(-1.35,.25)..(-1.5,0.22);
\draw[blue] (-2,0)..controls(-1.8,.15)and(-1.65,.25)..(-1.5,0.22);

\draw[gray] (-1,1)..controls(-1.2,.85)and(-1.35,.75)..(-1.5,0.78);
\draw[blue] (-2,1)..controls(-1.8,.85)and(-1.65,.75)..(-1.5,0.78);

\draw[gray] (-0,0)..controls(-.2,-.15)and(-.8,-.15)..(-1,0);
\draw[blue] (-1.5,.22)..controls(-1.61,.4)and(-1.61,.6)..(-1.5,.78);
\draw[blue,densely dotted] (-1.5,.22)..controls(-1.39,.4)and(-1.39,.6)..(-1.5,.78);
\draw[gray] (-.8,.5)..controls(-.6,.61)and(-.4,.61)..(-.2,.5);
\node[black] at (-2.5,.8){$A$};
\node[black] at (-.5,.8){$B$};
\node[blue] at (-3.5,.5){$\partial A$};
\draw[gray] (-.8,.5)..controls(-.6,.39)and(-.4,.39)..(-.2,.5);
\draw[blue] (-2.8,.5)..controls(-2.6,.61)and(-2.4,.61)..(-2.2,.5);
\draw[blue] (-2.8,.5)..controls(-2.6,.39)and(-2.4,.39)..(-2.2,.5);
\end{tikzpicture}
\caption{Two closed subsets $A$ and $B$ of $M$ such that $A^{\circ}\cap B=\varnothing$.}
\label{f:Markov}
\end{figure}

The Markov property was introduced by Nelson in \cite{Nel73a}, \cite{Nel2} (pp.\ 224-225). He also proved that $\mu_{\text{GFF}}$ verifies this property ~\cite{Nel2} (pp.\ 225).

\begin{deef}
A random process $\Psi$ on a reflection positive manifold $M$ is \textbf{reflection positive} if for any $\mb{C}$-valued function $F$ in $ L^2\big(\CB{D}^\prime(M),\sigma(\Psi;\overline{M}_+),\mu_\Psi\big)$ it holds
\begin{align*}
    \mb E_\Psi\big[(\overline{\Theta F}) F\big] \geqslant 0\,.
\end{align*}
\end{deef}

\noindent Reflection positivity is one of the axioms introduced by Osterwalder and Schrader \cite{OS}; it is a crucial condition to recover a Lorentzian Quantum Field Theory from a Euclidean Quantum Field Theory. 

Lastly we indicate the relation between the Markov property and reflection positivity. Let $M$ be a reflection positive manifold with reflection $\Theta$ and decomposition $M=M_+\cup\Sigma\cup M_-$ as in Definition \ref{def-ref-positive-manifold}. The random process $\Psi$ as above with law $\mu_{\Psi}$ is called \textbf{reflection invariant} if $\mb E_\Psi [\ol{F}G]=\mb E_\Psi[\ol{\Theta F}\Theta G]$ for any $F$, $G\in L^2(\CB{D}^\prime(M),\mu_\Psi)$.
\begin{lemm} [\cite{Dimock}, Theorem 2] \label{LemMarkovImpliesRP}
On a reflection positive Riemannian manifold $(M,g)$, a random field $\Psi$ indexed by $W^{-1}(M)$ that is reflection invariant and has the Markov property is reflection positive.
\end{lemm}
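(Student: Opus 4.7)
The plan is to use the Markov property to shrink the conditioning from a half-space $\sigma$-algebra down to the interface $\sigma$-algebra $\sigma(\Psi;\Sigma)$, and then exploit reflection invariance to rewrite $\mb E_\Psi[(\overline{\Theta F})F]$ as a manifest squared $L^2$-norm. This is the higher-dimensional counterpart of the Ornstein--Uhlenbeck calculation sketched in the Chapter~1 introduction.

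Concretely, given $F \in L^2\big(\CB{D}'(M), \sigma(\Psi; \overline{M}_+), \mu_\Psi\big)$, since $\Theta$ swaps $\overline{M}_+$ and $\overline{M}_-$ the random variable $\overline{\Theta F}$ is $\sigma(\Psi; \overline{M}_-)$-measurable. By the tower property, pulling out this $\sigma(\Psi; \overline{M}_-)$-measurable factor and then invoking the Markov property with $A = \overline{M}_-$ and ``generating set'' $B = \overline{M}_+$ (which satisfies $A^\circ \cap B = M_- \cap \overline{M}_+ = \varnothing$ and $\partial A = \Sigma$), I would write
$$
\mb E_\Psi\big[(\overline{\Theta F})\, F\big]
= \mb E_\Psi\big[\overline{\Theta F}\cdot \mb E_\Psi[F \mid \sigma(\Psi; \overline{M}_-)]\big]
= \mb E_\Psi\big[\overline{\Theta F}\cdot \mb E_\Psi[F \mid \sigma(\Psi; \Sigma)]\big].
$$
Setting $G := \mb E_\Psi[F \mid \sigma(\Psi; \Sigma)]$, observe that $G$ is $\sigma(\Psi; \Sigma)$-measurable and that $\Theta$ acts trivially on such functions: for any test function $f$ with $\mathrm{supp}(f) \subset \Sigma$ we have $\Psi(f) \circ \Theta = \Psi(f \circ \Theta) = \Psi(f)$, so $\Theta G = G$. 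Reflection invariance applied to the pair $(\Theta F, G)$, together with $\Theta^2 = \mathrm{Id}$, then yields
$$
\mb E_\Psi\big[\overline{\Theta F}\cdot G\big]
= \mb E_\Psi\big[\overline{\Theta(\Theta F)}\cdot \Theta G\big]
= \mb E_\Psi\big[\overline{F}\cdot G\big].
$$
A final application of the tower property gives $\mb E_\Psi[\overline{F}\cdot G] = \mb E_\Psi[\,|G|^2\,] \geqslant 0$, which is the desired reflection positivity.

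The main conceptual point is the interplay between the one-sided Markov property, which collapses the half-space conditioning onto the interface $\Sigma$, and reflection invariance, which identifies the resulting inner product with a manifest squared norm. The only delicate checks are the triviality of the $\Theta$-action on $\sigma(\Psi; \Sigma)$-measurable functions, and the verification that the pair $(A,B) = (\overline{M}_-, \overline{M}_+)$ satisfies the geometric hypothesis $A^\circ \cap B = \varnothing$ required by the Markov property as stated; both are essentially tautological from the reflection positive decomposition $M = M_+ \cup \Sigma \cup M_-$. No regularity subtleties arise since $F$ is assumed in $L^2$, so every conditional expectation and Fubini-type manipulation in the chain above is well-defined.
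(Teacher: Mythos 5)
Your proof is correct and follows essentially the same route as the paper's: collapse the half-space conditioning to $\sigma(\Psi;\Sigma)$ via the Markov property, use that $\Theta$ fixes $\sigma(\Psi;\Sigma)$-measurable functionals (the paper makes your generator argument rigorous by a density/cylinder-functional step), and invoke reflection invariance to exhibit the expectation as $\mb E_\Psi\bigl|\mb E_\Psi[F\mid\sigma(\Psi;\Sigma)]\bigr|^2\geqslant 0$. The only cosmetic difference is that you condition $F$ on $\sigma(\Psi;\overline{M}_-)$ while the paper conditions $\overline{\Theta F}$ on $\sigma(\Psi;\overline{M}_+)$ — a mirror image of the same argument.
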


\begin{proof}
     First observe that if $G\in L^2\big(\CB{D}^\prime(M),\sigma(\Psi;\Sigma),\mu_\Psi\big)$ then $\Theta G=G$, since functionals of the form $\varphi(\Psi(f_1),\cdots,\Psi(f_n))$, for $f_j\in W^{-1}_\Sigma(M)$, $\varphi\in \CB{S}(\mb{R}^n)$, $n\in\mb{N}$, are dense in $L^2\big(\CB{D}^\prime(M),\sigma(\Psi;\Sigma),\mu_\Psi\big)$ (\cite{Sim2} Lemma I.5), and,
    \begin{equation}
        \Theta\cdot\varphi(\Psi(f_1),\cdots)=\varphi(\Theta\Psi(f_1),\cdots)=\varphi(\Psi(\Theta f_1),\cdots)=\varphi(\Psi(f_1),\cdots)\,,
    \end{equation}
    as each $f_j\in W^{-1}_\Sigma(M)$ is invariant by $\Theta$. This with reflection invariance tells that $\mb E_\Psi[H|\sigma(\Psi;\Sigma)]=\mb E_\Psi[\Theta H|\sigma(\Psi;\Sigma)]$ for any $H\in L^2$. Indeed, for any $G\in L^2\big(\CB{D}^\prime(M),\sigma(\Psi;\Sigma),\mu_\Psi\big)$,
    \begin{align*}
       \mb E_\Psi[\mb E_\Psi[\Theta H|\sigma(\Psi;\Sigma)]\cdot G]&=
       \mb E_\Psi[\mb E_\Psi[\Theta H\cdot \Theta G|\sigma(\Psi;\Sigma)]]=
       \mb E_\Psi[\Theta H\cdot \Theta G]\\
       &\overset{\mathrm{reflection}}{=} \mb E_\Psi[ H G]
       =\mb E_\Psi[\mb E_\Psi[H|\sigma(\Psi;\Sigma)]\cdot G].
    \end{align*}
    Finally we proceed with the main argument: pick $F\in L^2\big(\CB{D}^\prime(M),\sigma(\Psi;\overline{M}_+),\mu_\Psi\big)$. It is clear that the functional $\Theta F$ is then $\sigma(\Psi;\overline{M}_-)$-measurable. Hence we have
    \begin{equation*}
        \mb E_\Psi [\ol{\Theta F}\cdot F]=
        \mb E_\Psi [\mb E_\Psi[\ol{\Theta F}|\sigma(\Psi;\ol{M}_+)]\cdot F]
        \overset{\mathrm{Markov}}{=}
        \mb E_\Psi [\mb E_\Psi[\ol{\Theta F}|\sigma(\Psi;\Sigma)]\cdot F]=
        \mb E_\Psi \big|\mb E_\Psi[F|\sigma(\Psi;\Sigma)]\big|^2\geqslant 0,
    \end{equation*}
    proving the result.
\end{proof}

\subsection{The regularized $\Phi^4_3$ measure built from the cut-off interaction}
 The $\Phi^4_3$ measure $\nu$ is a probability measure on $\CB D'(M)$ with $(M,g,\Theta)$ a closed three dimensional reflection positive manifold that heuristically reads as
\begin{align} \label{EqPhi43Density}
\nu(\dd \Phi)\propto e^{- c\Vert\Phi\Vert^4_{L^4(M)}} \, \mu_{\text{GFF}}(\dd \Phi)\,.
\end{align}
 We use the non-conventional notation $c$ for the coupling constant. Since the dimension of $M$ is greater then or equal to 2 the GFF is not supported on $L^p$ functions and the $L^4$ norm of $\Phi$ is almost surely infinite. The formal expression \eqref{EqPhi43Density} is thus meaningless. The probability measure $\nu$ can therefore only be defined as the weak limit of a sequence of approximations $( \nu_{\rho,\Lambda})_{\Lambda\geqslant0}$. A choice which is often made is to consider
\begin{align}\label{eqnu}     
 \nu_{\rho,\Lambda}(\dd \Phi)\propto e^{- c\Vert \rho\Phi_\Lambda\Vert^4_{L^4(M)} - c a_\Lambda\Vert \rho\Phi_\Lambda\Vert^2_{L^2(M)}} \, \mu_{\text{GFF}}(\dd \Phi)\,,
\end{align}
where $\rho\in C^\infty_c(M)$ is identically equal to $1$ on a large compact set of $M$ and $(a_\Lambda)_{\Lambda\geqslant0}$ is a suitably chosen sequence of real number that is divergent at large $\Lambda$. 

On $M$, the $\Phi^4_3$ measure $\nu$ is a good starting point to construct a EQFT, provided $M$ is reflection positive and $\nu$ is proved to be reflection positive too. When $\nu$ is constructed as a limit of a sequence of approximations, reflection positivity of $\nu$ is in general obtained by showing that the measures $\nu_{\rho,\Lambda}$ are reflection positive for every $\Lambda$, in which case $\nu$ inherits the reflection positivity property of its approximations.

 However, in any attempt to prove that property for $\nu_{\rho,\Lambda}$, we quickly face a problem that comes from the following fact, which is proved in Section \ref{sec1}.

\begin{thrm} \label{thm1}
Let $M$ be a smooth complete Riemannian manifold of dimension $d$. Let $A\subsetneq M$ be a closed subset with non-empty interior, and $G$ a smooth functional on $C^\infty(M)$ such that the random variable $G\circ \Pi_\Lambda(\Phi)$ is $\sigma(\Phi;A)$-measurable. Then the function $G\circ \Pi_\Lambda$ on $H^{{(2-d)/}{2}-\varepsilon}(M)$ is constant.
\end{thrm}

 This fact prevents us from transfering the reflection positivity property of the GFF to its spectrally cutoff version $\Phi_\Lambda$. As a matter of fact, we prove in Theorem \ref{ThmNotRFRegularizedPhi43} in Section \ref{SectionCounterexample} that the regularized $\Phi^4_3$ measures $\nu_{\rho,\Lambda}$ are not reflection positive for a small enough coupling constant $c>0$ in \eqref{eqnu}. Moreover we give hereafter three arguments that show that the cut-off free field $\Phi_\Lambda$ can by no means be neither Markov nor reflection positive, neither on cylinders like $\mb R^d$ nor on compact manifolds. This also confirms that any approach involving spectral cut-offs is unlikely to be efficient in order to prove the reflection positivity of the $\Phi^4_3$ measure.

To conclude this introduction, we would like to mention several works 
where the authors are able to construct reflection positive interacting EQFT measures.
In the seminal work of Glimm-Jaffe, RP of the regularized theory comes from the RP property of the lattice approximations~\cite{GJ}.  The Wick rotation of the $P(\Phi)_2$ model on the cylinder and the sphere was first treated in the seminal articles \cite{HoeghKrohn-1974,Figari-HoeghKrohn-Nappi-1975}. We also refer to \cite{Gerard-Jakel-2005a,Gerard-Jakel-2005b,Barata-Jakel-Mund-2023} for detailed constructions of some quantum fields in 2 space-time dimensions.
 
In~\cite{Abdesselam} (pp.\ 237), the author suggests to mollify the field by convolution with some compactly supported function and put a position dependent coupling in front of the interaction $\tilde{g}_r\phi^4$ where the coupling function $\tilde{g}_r$, which depends on the cut--off $r>0$, vanishes in some neighborhood of the reflection hyperplane. This ensures that the regularized measures are RP. To implement this rigorously on say the $\Phi^4_3$ model, the difficulty would be to show that one recovers the same EQFT measure as limit EQFT measure where the coupling function is constant.
  In the very interesting
recent preprint~\cite{Duch}, the authors use a very similar strategy to construct a reflection positive $P(\Phi)_2$ measure on $\mb{R}^2$. Finally, a similar strategy is used by the last author~\cite{Lin} to prove that the $P(\Phi)_2$ theory on every Riemannian surface satisfies both the spatial Markov property and the Segal gluing axioms which is stronger than just RP.

\medskip

 \subsection{Organization and Comments}
 We prove Theorem~\ref{thm1} in Section~\ref{sec1}. This somehow shows that the spatial Markov property for the cut-off GFF is not well-posed since we prove that the only $\sigma(\Phi;\overline{M}_+)$-measurable random variables of the form $G(\Phi_\Lambda)$ where $G$ is smooth are constants. In Section \ref{sec2} we give a counter-example showing that the cut-off free field on the cylinder $\mb R\times \Sigma$ is not reflection positive, by constructing a function on which the bilinear form associated to the cut-off covariance acts negatively, see Theorem~\ref{thm2} below. In Section \ref{sec3} we show that the cut-off free field on compact manifolds also has a covariance which is not reflection positive, by constructing a counter-example, see Theorem~\ref{thm41} below. Given Lemma \ref{LemMarkovImpliesRP} the results of these sections show that the cut-off regularized Gaussian free field does not have the Markov property in these settings.

\begin{def7}
 We point out that we may also see that the cut-off GFF cannot be reflection positive non-constructi- vely, using the Källén–Lehmann representation (\cite{GJ} Theorem 6.2.4), which states that the covariance must be a weighted sum of inverse massive laplacians of different masses. The cut-off GFF cannot have a Källén–Lehmann representation since, if so, it would have non-zero covariance at arbitrarily high frequency, i.e. one would have $\mb E|\Phi_{\Lambda}(e_\lambda)|^2>0$ for $\lambda$ arbitrarily large. This comment was suggested to us by an anonymous referee.
\end{def7}

 Note that in the flat case $M=\mb R^3$, it is a well-established fact that the $\Phi^4_3$ measure is reflection positive, since it can be constructed as the limit of a sequence of Gibbs measure on the lattice~\cite{GH}, where regularized measures are reflection positive. However this construction does not generalize easily to the case of a compact Riemannian manifold $M$ where lattice regularization is a more difficult option. On the other hand, we prove in Section~\ref{SectionCounterexample} that the regularized $\Phi^4_3$ measures $\nu_{\rho,\Lambda}$ are not reflection positive for a small enough coupling constant $c>0$. Hence the proof of the reflection positivity of the $\Phi^4_3$ measure on the sphere $\mb S^3$ is still an open problem.

 Regarding the Markov property, contrary to the two dimensional case, not much is known about the $\Phi^4_3$ measure. Note that the original construction of the $\Phi^4_3$ QFT by Glimm and Jaffe relies on the study of the Hamiltonian \cite{Glimm-1968,Glimm-Jaffe-1973}, which indicates that the $\Phi^4_3$ measure should have a semi-group property on the plane.

\medskip

\subsection{Acknowledgements}

We would like to thank A. Abdesselam, D. Benedetti, M. Gubinelli, A. Mouzard, T.D.T\^o, M. Wrochna for their interest and comments on the present note. N.V.D acknowledges the support of the Institut Universitaire de France.
The authors would like to
thank the ANR grant SMOOTH ”ANR-22-CE40-0017” and QFG ”ANR-20-CE40-0018” for support.

\section[Local Observables under Cut-off]{Triviality of Smooth Localized Observables under Cut-off}\label{sec1}

We prove Theorem \ref{thm1} in this section. Our proof is based on the following result due to Lebeau-Robbiano~\cite{LR95}, Jerison-Lebeau~\cite{JL99}, Lebeau-Zuazua~\cite{LZ98} which asserts that the zero set (also called \textbf{nodal set}) of linear combinations of Laplace eigenfunctions always has empty interior:
\begin{lemm}[Nodal sets of linear combinations of eigenfunctions]\label{thmnodal}
Let $(M,g)$ be a smooth closed compact Riemannian manifold.
Then for any $\Lambda\in(0,\infty)$ and any non-trivial finite 
linear combination of the eigenfunctions of $\Delta_g$ that we denote by $f\in\EL$, the zero set $Z_f=\{x\in M|f(x)=0\}$ of the function $f$ has \textbf{empty interior}. 
\end{lemm}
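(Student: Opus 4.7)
The plan is to argue by contradiction: suppose that a nontrivial finite linear combination $f = \sum_{j} c_{j} e_{j} \in E_{\leqslant \Lambda}$ vanishes on some nonempty open set $U \subset M$, and derive that $f \equiv 0$. The key device will be the \emph{elliptic lift} of $f$ to $\mathbb{R}_{t} \times M$ equipped with the product metric $\dd t^{2} + g$, namely
\begin{equation*}
F(t,x) \defeq \sum_{j} c_{j} \cosh\bigl(\sqrt{\lambda_{j}}\, t\bigr)\, e_{j}(x).
\end{equation*}
Since each summand satisfies $(-\partial_{t}^{2} + \Delta_{g})[\cosh(\sqrt{\lambda_{j}} t) e_{j}(x)] = (-\lambda_{j} + \lambda_{j}) \cosh(\sqrt{\lambda_{j}} t) e_{j}(x) = 0$, the finite sum $F$ is harmonic on the smooth Riemannian manifold $\mathbb{R}_{t} \times M$, i.e.\ $\Delta_{\mathbb{R}\times M} F = 0$, which is an elliptic equation with smooth coefficients.

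Next I would read off the Cauchy data of $F$ at $\{t=0\}$. A direct computation gives
\begin{equation*}
\partial_{t}^{2k} F(0,x) = \sum_{j} c_{j}\, \lambda_{j}^{k}\, e_{j}(x) = \Delta_{g}^{k} f(x), \qquad \partial_{t}^{2k+1} F(0,x) = 0.
\end{equation*}
Fixing $x_{0} \in U$, since $f$ vanishes identically on a neighborhood of $x_{0}$ all spatial derivatives $\Delta_{g}^{k} f(x_{0})$ vanish; combined with the relation above this shows that every $t$-derivative of $F$ at $(0, x_{0})$ is zero. Combined with the fact that $F(0,\cdot)$ is smooth and vanishes near $x_0$ in the $x$-variables, one deduces that $F$ vanishes to infinite order at the point $(0, x_{0}) \in \mathbb{R} \times M$.

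At this stage I would invoke the unique continuation theorem of Aronszajn for second-order elliptic operators with smooth coefficients: a solution of $\Delta_{\mathbb{R}\times M} F = 0$ that vanishes to infinite order at a single interior point of the connected manifold $\mathbb{R} \times M$ must vanish identically on it. This yields $F \equiv 0$, and restricting to $t = 0$ gives $f \equiv 0$, contradicting the assumption that $f$ was nontrivial. Hence $Z_{f}$ must have empty interior.

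The main conceptual step is the choice of the elliptic lift $F$; once this is set up the argument is a standard application of Aronszajn's unique continuation, which is where essentially all the analytic heavy lifting lives. The only point requiring a bit of care is to justify that the formal $t$-derivative computation is really valid for the finite sum $F$ (trivial here since the sum is finite, but this is precisely where the hypothesis $f \in E_{\leqslant \Lambda}$ enters, as the $\cosh$ series would otherwise raise convergence issues). No real-analyticity of $g$ is used, so the conclusion holds on arbitrary smooth closed Riemannian manifolds.
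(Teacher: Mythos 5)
Your proof is correct, but it takes a genuinely different route from the paper. The paper deduces the lemma in two lines from the quantitative Lebeau--Robbiano/Jerison--Lebeau spectral inequality $\Vert \varphi\Vert_{L^2(M)}\leqslant C e^{K\Lambda}\Vert\varphi\Vert_{L^2(U)}$ for $\varphi\in \mathrm{E}_{\leqslant\Lambda}$: if $f$ vanished on $U$ the right-hand side would be zero. You instead re-prove the qualitative core directly, via the elliptic lift $F(t,x)=\sum_j c_j\cosh(\sqrt{\lambda_j}\,t)e_j(x)$, which solves $(-\partial_t^2+\Delta_g)F=0$ on $\mathbb{R}\times M$, vanishes to infinite order at $(0,x_0)$ (here you implicitly use locality of $\Delta_g$, i.e.\ that $\Delta_g^k f\equiv 0$ on a whole neighborhood of $x_0$, so that all \emph{mixed} derivatives vanish --- worth stating explicitly), and is then killed by Aronszajn's strong unique continuation theorem, so $f=F(0,\cdot)\equiv 0$. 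This lift is in fact the same device that underlies the proof of the spectral inequality the paper cites, so your argument is in a sense more self-contained and more elementary: it needs only classical qualitative unique continuation for smooth (indeed non-analytic) metrics, whereas the paper's route imports a much stronger quantitative input but makes the deduction immediate and keeps available the explicit $e^{K\Lambda}$ dependence, which is what the surrounding observability literature actually uses. One shared implicit hypothesis: both arguments (yours via connectedness of $\mathbb{R}\times M$, the paper's via the spectral inequality) require $M$ connected; otherwise an eigenfunction supported on one component is a counterexample to the statement as written.
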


\begin{proof}
Assume there exists $f\in \EL$ such that $\Vert f\Vert_{L^2(M)}=1$ and $Z_f$ contains an open subset $U$.
Then the Lebeau-Robbiano spectral inequality~\cite{JL99} Theorem 14.6,
states that given $U\subset M$, there exists constants $C,K>0$ such that for all $\varphi\in \EL$, we have an inequality of the form:
$$ \Vert \varphi\Vert_{L^2(M)} \leqslant Ce^{K\Lambda}\Vert \varphi\Vert_{L^2(U)}\,, $$
where $C,K$ do not depend on $\varphi$.
Setting $\varphi=f$ yields
$$\Vert f\Vert_{L^2(M)} \leqslant Ce^{K\Lambda}\Vert f\Vert_{L^2(U)}=0\,, $$
since we assumed $f\upharpoonright U=0$. This yields a contradiction with the assumption $\Vert f\Vert_{L^2(M)}=1$.
\end{proof}

\begin{corr} \label{Lemm21}
We assume $(M,g)$ is a smooth closed compact Riemannian manifold or $M=\mb{R}^d$ with the flat metric and $B$ is some open subset $B\subsetneq M $.
Let $T\in \CB{D}^\prime(M)$ be a distribution such that for every $f\in C_c^\infty(B)$, $T(\Pi_\Lambda f)=0$. Then $\Pi_\Lambda T=0$.
\end{corr}

\begin{proof}
    Since $\Pi_\Lambda$ is self-adjoint, we have
\begin{align*}
    T(\Pi_\Lambda f)=\langle \Pi_\Lambda T,f\rangle_{L^2(M)}.
\end{align*}
Therefore, $\Pi_\Lambda T$ is a smooth function in  $ \mathrm E_{\leqslant \Lambda}$ which vanishes on the interior of $B$. Hence, in view of Lemma~\ref{thmnodal}, it is identically equal to zero. 

In the case $M=\mb{R}^d$, we repeat the exact same argument and conclude using the fact that $\Pi_\Lambda T$ is analytic by the Paley-Wiener Theorem hence null if it vanishes on the open set $B$.
\end{proof}

We are now ready to give the proof of Theorem~\ref{thm1}. 

\begin{proof}[Proof of Theorem~\ref{thm1}]
We work by contradiction. Set $F=G\circ \Pi_\Lambda$ which is a smooth functional on $H^{{(2-d)/}{2}-\varepsilon}(M)$. Assuming that $F$ is $\sigma(\Phi;A)$-measurable, for every direction $h\in C^\infty_c(A^c)$, we have the identity
$$ 
F(\Phi+h)=F(\Phi) 
$$
for $\Phi$ in a set $\Omega_h\subset\Omega$ of probability $1$ that depends on $h$. Now we would like to go from an $h$-dependent almost sure statement to a deterministic statement, namely we aim to prove that $$\tau_hF(\Phi)\defeq F(\Phi+h)-F(\Phi)=0$$ for every distribution $\Phi \in H^{{(2-d)/}{2}-\varepsilon}(M)$. 

Assume by contradiction that there exists some $\varphi\in H^{{(2-d)/}{2}-\varepsilon}(M)$ such that $\tau_h F(\varphi)\neq 0$, and assume without loss of generality that $\tau_h F(\varphi)=L >0$. The functional $$ H^{{(2-d)/}{2}-\varepsilon}(M)\ni\Phi\mapsto\tau_h F(\Phi)$$ is smooth. Hence, by continuity, there is some open subset $U_\varphi\ni\varphi $ of $H^{{(2-d)/}{2}-\varepsilon}(M)$ such that $0<\frac{L}{2}\leqslant  F \upharpoonright{U_\varphi}\leqslant \frac{3L}{2}$. There, there exists $R>0$ such that $B_\varphi(R)\subset U_\varphi$ for the $H^{{(2-d)/}{2}-\varepsilon}(M)$ topology. One then has
$$
0=\mb{E}_{\mathrm{GFF}}\left[ \tau_hF(\bigcdot)1_{B_\varphi(R)} \right] \geqslant \frac{L}{2}\mu_{\mathrm{GFF}}\left( B_\varphi(R)\right) \,.
$$
The first equality follows from the almost sure vanishing of $\tau_hF(\bigcdot)$, which implies that 
$$
\mu\big( B_\varphi(R)\big) = 0\,.
$$ 
This is a contradiction, because $\mu_{\text{GFF}}$ has \textbf{full support} in $H^{{(2-d)}{/2}-\varepsilon}(M)$ since the Cameron-Martin space $H^1(M)$ is everywhere dense in $H^{{(2-d)/}{2}-\varepsilon}(M)$ for the $H^{{(2-d)/}{2}-\varepsilon}(M)$ topology (\cite{Bogachev} Theorem 3.6.1, \cite{HairerSPDE} Proposition 3.68). Therefore, $F(\Phi+h)=F(\Phi) $ for every distributions $\Phi\in H^{{(2-d)/}{2}-\varepsilon}(M)$ and every $h\in C^\infty_c(A^c)$ which is much stronger than the almost sure statement. It means that $G\circ\Pi_\Lambda$ should not depend on $h\in C_c^\infty(A^c)$. More precisely, by definition of the support of a functional (see Definition~\ref{def:suppfonc}), for any such $h$ we have $G(\Phi_\Lambda+\Pi_\Lambda h)=G(\Phi_\Lambda)$, which implies that $\D_\Phi(G\circ\Pi_\Lambda)(h)=0$. Observing that 
\begin{align*}
    \D_\Phi(G\circ\Pi_\Lambda)(h)=\lim_{t\downarrow0}\frac{G(\Phi_\Lambda+t\Pi_\Lambda h)-G(\Phi_\Lambda)}{t}=\D_{\Phi_\Lambda}(G)(\Pi_\Lambda h)\,,
\end{align*}
we conclude that $\D_{\Phi_\Lambda}(G)(\Pi_\Lambda h)=0$. In view of Corollary~\ref{Lemm21} taking $B=A^c$ and $T=\D_{\Phi_\Lambda}(G)$, we thus have that $\Pi_\Lambda\D_{\Phi_\Lambda}(G)=0$. In particular, for all $f\in C^\infty(M)$, 
\begin{align*}
   \Pi_\Lambda\D_{\Phi_\Lambda}(G)(f)=\D_{\Phi_\Lambda}(G)(\Pi_\Lambda f)=\D_\Phi(G\circ\Pi_\Lambda)(f)=0\,,
\end{align*}
so $G\circ\Pi_\Lambda $ is indeed constant. 
\end{proof}

\section{Non-Reflection Positivity}

\subsection{The cut-off GFF on Cylinders}\label{sec2}

In this section, we give an explicit counter-example contradicting the assertion that $\Phi_\Lambda$ could be reflection positive on $\mb R^d$ and more generally on any Riemannian cylinder $M$ of the form $M=\mb{R}\times \Sigma$ where $\Sigma$ is complete Riemannian and the cylinder is endowed with the split metric $\dd  t^2+g_\Sigma$. This is based on the following lemma.
\begin{lemm}[\cite{GJ}, Theorem 6.2.2]\label{lemm31}
A Gaussian random field $\Psi$ with covariance $C$ is reflection positive if and only if its covariance is reflection positive, in the sense that for any $f\in C_c^\infty(\mb{R}_{\geqslant 0}\times\Sigma)$,
\begin{align*}
    \langle \Theta f,Cf\rangle_{L^2(M)}\geqslant0\,.
\end{align*}
\end{lemm}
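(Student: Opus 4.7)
My approach would be to establish both directions of the equivalence using the standard toolkit for Gaussian measures, with the key observation that exponential characters of the field form a dense subfamily of $L^2$.

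For the easier direction ``reflection positivity of the field implies reflection positivity of the covariance'', I would choose the test functional $F := \Psi(f)$ for $f \in C_c^\infty(\overline{M}_+)$. Since $f \in W^{-1}_{\overline{M}_+}(M)$, the variable $F$ is $\sigma(\Psi;\overline{M}_+)$-measurable and lies in $L^2(\mu_\Psi)$. Then by definition of the covariance and of the $\Theta$-action on functionals,
\[
\mathbb{E}_\Psi[\overline{\Theta F}\, F] = \mathbb{E}_\Psi[\Psi(\Theta f)\Psi(f)] = \langle \Theta f, Cf\rangle_{L^2(M)},
\]
so the hypothesis yields $\langle \Theta f, Cf\rangle \geqslant 0$.

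For the converse, the plan is to verify $\mathbb{E}_\Psi[\overline{\Theta F}\,F] \geqslant 0$ on a dense subspace of $L^2(\sigma(\Psi;\overline{M}_+),\mu_\Psi)$ and conclude by continuity. A convenient dense family is given by finite linear combinations $F = \sum_{j=1}^N c_j\, e^{i\Psi(f_j)}$ with $c_j \in \mathbb{C}$ and real $f_j \in C_c^\infty(\overline{M}_+)$; density of such exponential characters in the Gaussian $L^2$-space is standard (cf.\ \cite{Sim2} Lemma I.5). Using that $\Psi$ is a centered Gaussian with covariance $C$, together with the reflection invariance $\langle \Theta f, C\Theta f\rangle = \langle f, Cf\rangle$ coming from the fact that $\Theta$ is an isometry commuting with $C$, a direct computation based on the characteristic function $\mathbb{E}[e^{i\Psi(h)}] = e^{-\frac{1}{2}\langle h,Ch\rangle}$ gives
\[
\mathbb{E}_\Psi[\overline{\Theta F}\,F] = \sum_{j,k} \overline{\tilde{c}_j}\, \tilde{c}_k\, \exp\bigl(\langle \Theta f_j, C f_k\rangle\bigr), \qquad \tilde{c}_j := c_j\, e^{-\frac{1}{2}\langle f_j, Cf_j\rangle}.
\]
Set $A_{jk} := \langle \Theta f_j, C f_k\rangle$. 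Applying the covariance-level hypothesis to the real linear combination $f = \sum_j a_j f_j \in C_c^\infty(\overline{M}_+)$ gives $\sum_{j,k} a_j a_k A_{jk} \geqslant 0$, and combined with the symmetry $A_{jk} = A_{kj}$ (itself a consequence of the self-adjointness of $C$ and $[\Theta,C]=0$), one checks that $A$ is a real symmetric positive semi-definite matrix, and that the imaginary cross terms from complex coefficients cancel by antisymmetry.

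The final ingredient is the Schur product theorem: if $A$ is positive semi-definite, so is each Hadamard power $A^{\circ n}$ with entries $A_{jk}^n$. Expanding the exponential entrywise, $\exp(A_{jk}) = \sum_{n\geqslant 0} A_{jk}^n/n!$, expresses the matrix $(\exp(A_{jk}))_{j,k}$ as a convergent sum of positive semi-definite matrices, hence positive semi-definite, so the displayed quantity is $\geqslant 0$. The main technical points I expect to require care are (i) checking that the exponential characters really are dense in $L^2(\sigma(\Psi;\overline{M}_+),\mu_\Psi)$ and not merely in the full $L^2(\mu_\Psi)$, which follows from the fact that the relevant $\sigma$-algebra is generated by the $\Psi(f)$ for $f \in W^{-1}_{\overline{M}_+}(M)$, and (ii) justifying the tacit reflection invariance needed for the simplification of the quadratic form; both are standard in the Gaussian framework.
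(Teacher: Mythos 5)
Your argument is correct and is essentially the proof of the cited result: the paper itself gives no proof of this lemma (it is quoted from Glimm--Jaffe, Theorem 6.2.2), and your two directions --- testing with $F=\Psi(f)$ for one implication, and for the converse expanding $\mb{E}_\Psi[\overline{\Theta F}\,F]$ over exponential characters, reducing to positive semi-definiteness of $A_{jk}=\langle \Theta f_j, C f_k\rangle$ and of its entrywise exponential via the Schur product theorem, then concluding by density --- follow exactly the standard route of that reference. The only point worth making explicit is the one you already flag: both the quadratic-form reduction (needing $\langle \Theta f, C\Theta f\rangle=\langle f,Cf\rangle$ and symmetry of $A$) and the final density/continuity step (needing $\Theta$ to act isometrically on $L^2(\mu_\Psi)$, i.e.\ $\Theta$-invariance of the Gaussian measure) use $[\Theta,C]=0$, which is tacit in the lemma as stated but automatic in the paper's applications, since $\Theta$ is an isometry and the covariances considered, such as $\Pi_\Lambda(\Delta_g+1)^{-1}$, are functions of $\Delta_g$.
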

We construct hereafter such a function $f$ such that $\sank{ \Theta f,\frac{\Pi_\Lambda}{\Delta_g+1}f}_{L^2(M)}<0$, which implies the following result.
\begin{thrm}\label{thm2}
    The spectrally cut-off massive Gaussian free field on $\mb R^d$ or on any Riemannian cylinder $\Phi_\Lambda$ is not reflection positive. In particular, it is not Markov either. 
\end{thrm}
To construct our counter-example, we need the following intermediate lemma.
\begin{lemm}\label{lemm2}
Fix $\kappa>0$. There exists a function~$h\in C_c^{\infty}(\mb{R}_{\geqslant 0})$ such that, denoting by $\CB F h(\xi)\defeq\int_{\mb R} h(x) e^{\ii\xi x}\dd  x$ the Fourier transform of $h$, it holds
  \begin{equation*}
    \int_{-1}^{1}\ol{\CB F{h}(-\xi)} \,\CB F{h}(\xi) \, \frac{\dd \xi}{\xi^2+\kappa} < 0\,.
    \label{}
  \end{equation*}
\end{lemm}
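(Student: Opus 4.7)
The plan is to choose $h$ as a real, smooth bump supported far from the origin, so that Fourier conjugation symmetry collapses the integrand into a single cosine-modulated expression whose sign can be pinned down by a large-parameter asymptotic.

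Fix an even, nonnegative $\varphi \in C_c^\infty((-1,1))$ with $\int\varphi = 1$, and for $\delta \in (0,1)$ set $\varphi_\delta(x)\defeq\delta^{-1}\varphi(x/\delta)$. Take $x_0 > \delta$ and define $h(x)\defeq\varphi_\delta(x-x_0)$, so that $h \in C_c^\infty(\mb{R}_{\geqslant 0})$. Since $h$ is real, $\overline{\CB Fh(-\xi)} = \CB Fh(\xi)$; since $\varphi_\delta$ is real and even, $\CB F\varphi_\delta(\xi) = \CB F\varphi(\delta\xi)$ is real; and $\CB Fh(\xi) = e^{i\xi x_0}\CB F\varphi(\delta\xi)$. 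The integrand is therefore $e^{2i\xi x_0}\CB F\varphi(\delta\xi)^2/(\xi^2+\kappa)$, whose imaginary part is odd in $\xi$, so integration over $[-1,1]$ leaves
\begin{equation*}
J(\delta,x_0) \defeq 2\int_0^1 \cos(2\xi x_0)\,\frac{\CB F\varphi(\delta\xi)^2}{\xi^2+\kappa}\,\dd\xi.
\end{equation*}
As $\delta \to 0^+$, the bound $|\CB F\varphi(\delta\xi)^2 - 1| \lesssim \delta^2$ on $[-1,1]$ combined with dominated convergence yields $J(\delta,x_0) \to I(x_0)\defeq 2\int_0^1 \cos(2\xi x_0)/(\xi^2+\kappa)\,\dd\xi$. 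The problem thus reduces to finding $x_0$ with $I(x_0) < 0$, and then shrinking $\delta$.

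To locate such $x_0$, I decompose $I(x_0) = 2\int_0^\infty - 2\int_1^\infty$. The full integral is classical, $2\int_0^\infty \cos(2\xi x_0)/(\xi^2+\kappa)\,\dd\xi = (\pi/\sqrt\kappa)\,e^{-2\sqrt{\kappa}\,x_0}$, which is exponentially small in $x_0$. Two successive integrations by parts give, as $x_0 \to \infty$,
\begin{equation*}
2\int_1^\infty \frac{\cos(2\xi x_0)}{\xi^2+\kappa}\,\dd\xi = -\frac{\sin(2x_0)}{x_0(1+\kappa)} + O(x_0^{-2}),
\end{equation*}
so that $I(x_0) = \sin(2x_0)/(x_0(1+\kappa)) + O(x_0^{-2}) + O(e^{-2\sqrt{\kappa}\,x_0})$. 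Choosing $x_0 = 3\pi/4 + k\pi$ with $k \in \mb{N}$ large forces $\sin(2x_0) = -1$, so $I(x_0) = -1/(x_0(1+\kappa)) + o(x_0^{-1}) < 0$. Finally, pick $\delta \in (0,x_0)$ small enough that $|J(\delta,x_0) - I(x_0)| < |I(x_0)|/2$; then $h=\varphi_\delta(\cdot - x_0)$ does the job.

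The only real obstacle is a bookkeeping one: ensuring that the $O(x_0^{-2})$ tail and the exponentially small head do not swamp the leading $O(x_0^{-1})$ oscillatory term. This is handled by fixing the sign of $\sin(2x_0)$ first and then letting $x_0$ grow along the chosen arithmetic progression.
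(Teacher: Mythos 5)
Your proof is correct, and it takes a genuinely different route from the paper's. The paper also begins by using reality of $h$ to reduce to the sign of $\int_{-1}^1 \CB F h(\xi)^2\,(\xi^2+\kappa)^{-1}\dd\xi$, but then takes $h=\varphi^{(2n)}$ for a bump $\varphi$ supported near $\pi/2$: the factor $\xi^{4n}$ kills the nonnegative contribution at $\xi=0$ and concentrates the weight at the endpoint $\xi=1$, where the sign is $A_\varphi(1)^2-B_\varphi(1)^2=\bigl(\int\varphi\cos\bigr)^2-\bigl(\int\varphi\sin\bigr)^2<0$ by the placement of the support; one then takes $n$ large but finite. You instead suppress the low-frequency positivity by translation rather than differentiation: with $h=\varphi_\delta(\cdot-x_0)$ the phase $e^{2\ii\xi x_0}$ oscillates rapidly, the full-line integral is exponentially small (analyticity of the Lorentzian weight, i.e.\ the formula $2\int_0^\infty\cos(2\xi x_0)(\xi^2+\kappa)^{-1}\dd\xi=\pi\kappa^{-1/2}e^{-2\sqrt{\kappa}x_0}$ recorded in the paper), and the sharp cutoff at $\xi=1$ produces, via integration by parts of the tail, the leading term $\sin(2x_0)/\bigl(x_0(1+\kappa)\bigr)$, whose sign you fix by choosing $x_0$ along $3\pi/4+\pi\mb{Z}$. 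It is worth noting that both arguments ultimately read off the sign from the endpoint $\xi=1$ of the truncated frequency window — the paper by piling the measure there with $\xi^{4n}$, you by endpoint asymptotics of an oscillatory integral — but yours buys an explicit quantitative asymptotic in the translation parameter with a plain bump, while the paper's buys a cleaner algebraic sign criterion ($\cos\approx 0$, $\sin\approx 1$ at the support) without any stationary/endpoint analysis. Your bookkeeping is sound: the $O(x_0^{-2})$ and $O(e^{-2\sqrt{\kappa}x_0})$ errors are dominated by the $x_0^{-1}$ term, the bound $|\CB F\varphi(\delta\xi)^2-1|\lesssim\delta^2$ on $[-1,1]$ justifies $J(\delta,x_0)\to I(x_0)$, and $x_0\ge 3\pi/4>\delta$ keeps $\supp h\subset\mb{R}_{\geqslant 0}$.
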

\begin{proof}
  Note that because $h$ is real, one has~$\ol{\CB F{h}(-\xi)}=\CB F{h}(\xi)$. Thus, $h$ needs to satisfy
  \begin{equation*}
    \int_{-1}^{1}\CB F{h}^2(\xi) \frac{\dd \xi}{\xi^2+\kappa}<0\,.
    \label{}
  \end{equation*}
  Now write
  \begin{equation*}
    \CB F{h}(\xi)=A_h(\xi)+\ii B_h(\xi)\,,
    \label{}
  \end{equation*}
  where
  \begin{equation*}
    A_h(\xi)\defeq\int_{\mb{R}} h(x)\cos(\xi x)\dd  x\,,\quad B_h(\xi)\defeq\int_{\mb{R}}h(x)\sin(\xi x)\dd  x\,,
    \label{}
  \end{equation*}
  so that~$A_h$ is even and~$B_h$ is odd. This gives
  \begin{equation*}
   \CB F{h}^2(\xi)=A^2_h(\xi)-B^2_h(\xi)+C_h(\xi)\,,\quad\text{where}\quad C_h(\xi)\defeq2\ii {A_h(\xi)B_h(\xi)}\,.
    \label{}
  \end{equation*}
Here, observe that $C_h(\xi)$ is an odd function, which does not contribute to the integral. 
  
Moreover, the important idea is that since~$\CB F{h}^2(0)=\big(\int_{\mb{R}} h(x)\dd  x\big)^2\geqslant0$, one needs to shift the weight of integration~$(\xi^2+\kappa)^{-1}$ ``away from zero''. This can be done by taking derivatives: setting $h=\varphi^{(2n)}$, we have
  \begin{align*}
    \int_{-1}^{1}\big(\CB F{\varphi^{(2n)}}(\xi)\big)^2\frac{\dd \xi}{\xi^2+\kappa}&=2\int_{0}^{1}\frac{(\ii \xi)^{4n}}{\xi^2+\kappa} \big(A^2_{\varphi}(\xi)-B^2_{\varphi}(\xi)\big) \dd \xi  \,.
  \end{align*}
Note that we have taken care of choosing a number of derivatives that enforces $\ii ^{4n}=1$. Now, observe that
  \begin{equation*} \frac{ \int_{0}^{1}\frac{\xi^{4n}}{\xi^2+\kappa} \big(A^2_{\varphi}(\xi)-B^2_{\varphi}(\xi)\big) \dd \xi}{\int_{0}^{1}\frac{\xi^{4n}}{\xi^2+\kappa} \dd \xi} \lto A^2_{\varphi}(1)-B^2_{\varphi}(1)\,\,\, \textrm{as }n\uparrow \infty\,.
  \end{equation*}
Choosing $\varphi$ supported near $\frac{\pi}{2}$ ensures that $\cos(x)\approx 0$ and $\sin(x)\approx 1$, so that $A^2_{\varphi}(1)-B^2_{\varphi}(1)<0$. Taking $n$ large enough but finite and rescaling $\varphi$, one finally obtains the desired result. Moreover, since $\varphi$ is compactly supported on $\mb{R}_{\geqslant 0}$, so is $h=\varphi^{(2n)}$. 
\end{proof}
\begin{proof}[Proof of Theorem \ref{thm2}]
As alluded to, the proof follows from the fact that we provide a counter-example of the positivity of the covariance, that is to say, we construct $f\in  C_c^\infty(\mb{R}_{\geqslant 0}\times\Sigma)$ such that  $ \langle \Theta f,\frac{\Pi_\Lambda}{\Delta_g+1}f\rangle_{L^2(M)}<0$. 

We denote by $\CB E_\lambda\dd \lambda$ the projection valued measure of the slice Laplacian $\Delta_\Sigma$ which is well--known to be self-adjoint by the completeness of $\Sigma$. For any function $f\in L^2(M)$ 
on the cylinder $M=\mb{R}_t\times \Sigma_x$, we denote by 
$$
\CB T_{(\tau,\lambda)}f(\bigcdot)\defeq \int_{\mb{R}} e^{-\ii t\tau} \CB E_\lambda f(t,\bigcdot) \dd  t  
$$ 
its Fourier transform w.r.t.\ the time variable $t$ and its spectral transform with respect to the spectral measure of $\Delta_\Sigma$. It is a convenient expression because we know that the action of the Laplacian on $\CB T_{(\tau,\lambda)}f$ boils down to the multiplication by $\tau^2+\lambda$.

Then, denoting by $\mathrm{v}_\Sigma$ the volume form on $\Sigma$, we rewrite the pairing as
\begin{align}\nonumber\label{eq:step}
\big\langle \Theta f,\frac{\Pi_\Lambda}{\Delta_g+1}f\big\rangle_{L^2(M)}
&=\int_{\{\tau^2\vee\lambda\leqslant\Lambda\}\subset\mb{R}\times \mb{R}_{\geqslant 0}}\Big(\int_\Sigma \overline{\CB T_{(-\tau,\lambda)}{f}}\CB T_{(\tau,\lambda)}{f}\dd  \mathrm{v}_\Sigma \Big) \frac{\dd \tau \dd \lambda}{\tau^2+\lambda+1}   \\
&=\int_{0}^{\Lambda} \int_{-\sqrt\Lambda}^{\sqrt\Lambda}\Big(\int_\Sigma \overline{\CB T_{(-\tau,\lambda)}{f}}  \CB T_{(\tau,\lambda)}{f}  \dd \mathrm v_\Sigma \Big)
\frac{ \dd \tau\dd \lambda }{\tau^2+\lambda+1}\,.
\end{align}
We will reduce to the one variable case by a scaling argument. First, fix $\varphi\in C^\infty_c(\mb{R}_{\geqslant 0})$ such that
$ \int_{-1}^1\CB F{\varphi}^2(\tau) (\tau^2+\lambda + \Lambda^{-1})^{-1}\dd \tau<0 $ for every $\lambda\in [0,1]$. The existence
of such $\varphi$ comes from Lemma~\ref{lemm2}. 
Then, choose $\chi\in L^2(\Sigma)$ such that $\Vert\chi\Vert_{L^2(\Sigma)}^2=1$ and $\chi$ has non trivial spectral measure in the interval $[0,1]$, that is to say
$$
\int_0^1 \Vert \CB  E_\lambda\chi\Vert^2_{L^2(\Sigma)} \dd \lambda > 0\,.
$$ 
Finally, we set
$$
f_{\varphi,\chi}(t,x)\defeq\int_0^\infty \sqrt\Lambda \varphi(\sqrt\Lambda t)\CB E_{{\lambda}/{\Lambda}}\chi(x)\dd \lambda\,,
$$
which verifies 
\begin{align*}
    \CB T_{(\tau,\lambda)}f_{\varphi,\chi}(x)=\CB F \varphi(\tau/\sqrt{\Lambda})\CB E_{\lambda/\Lambda}\chi(x)\,.
\end{align*}
Plugging this expression inside \eqref{eq:step}, we obtain
\begin{align*}
\big\langle \Theta f_{\varphi,\chi},\frac{\Pi_\Lambda}{\Delta_g+1}f_{\varphi,\chi}\big\rangle_{L^2(M)} &= \int_{0}^{\Lambda}  \int_{-\sqrt\Lambda}^{\sqrt\Lambda}\Vert \CB E_{\lambda/\Lambda} \chi\Vert_{L^2(\Sigma)}^2    \CB F{\varphi}^2({\tau}/{\sqrt\Lambda})\frac{\dd \tau   \dd \lambda}{\tau^2+\lambda+1}    \\
&=\Lambda^{3/2} \int_{0}^{1} \int_{-1}^1\Vert\CB E_\lambda \chi\Vert_{L^2(\Sigma)}^2  \CB F{\varphi}^2(\tau)\frac{\dd \tau \dd \lambda}{\Lambda(\tau^2+\lambda)+1}   \\
&=\sqrt{\Lambda}\int_{0}^{1}  \int_{-1}^1 \Vert \CB E_\lambda \chi\Vert_{L^2(\Sigma)}^2  \CB F {\varphi}^2(\tau) \frac{ \dd \tau   \dd \lambda}{\tau^2+\lambda+\frac{1}{\Lambda}}   \\
&<0.
\end{align*}
We have thus proven that the cut-off covariance is not reflection positive, and by Lemma~\ref{lemm31}, $\Phi_\Lambda$ is therefore not reflection positive.
\end{proof}

\subsection{The cut-off GFF on Compact Manifolds}\label{sec3}

We now turn to the proof that the cut-off GFF on compact manifolds is not reflection positive. This will also imply that it is not Markov.
\begin{thrm}\label{thm41}
Assume that $(M,g)$ is a smooth compact Riemannian manifold which is reflection positive. 
Then there exists $L>0$ such that for any $\Lambda\geqslant L$, there exists a function $f\in C^\infty_c(M_+)$
verifying 
\begin{eqnarray*}
\big\langle \Theta f, \frac{\Pi_\Lambda}{\Delta_g+1}f \big\rangle_{L^2(M)}<0\,.
\end{eqnarray*}
\end{thrm} 
\begin{proof}
First, because $\Delta_g$ and $\Theta$ commute, we can choose the $L^2$ basis of $\Delta_g$ in such a way that any eigenfunction $e_\lambda$ of the Laplacian $\Delta_g$ satisfies either 
$\Theta e_\lambda=e_\lambda$ (we say that $e_\lambda$ is even) or $\Theta e_\lambda=-e_\lambda$ (we say that it is odd). Observe that the set of odd eigenfunctions is not
empty, since the span of eigenfunctions is dense in $C^\infty(M)$, which contains
odd functions. We choose $L$ large enough so that there exists $\lambda\leqslant L$ such that $e_\lambda$ is odd. Given $\Lambda \geqslant L$, we denote by $\lambda_*$ the largest eigenvalue smaller than $ \Lambda$ with odd eigenfunction $e_{\lambda_*}$.

Next, we consider the linear map
\begin{eqnarray*}
\CB P_\Lambda\colon  C^\infty(M_+)\ni f\mapsto \left(\left\langle f,e_{\lambda} 
\right\rangle \right)_{\lambda\leqslant \Lambda}\in \mb{R}^{\dim(\EL)}\,.
\end{eqnarray*}
We have to show that $\CB P_\Lambda$ is onto. Assume by contradiction that it is not onto. Then $\CB P_\Lambda(C^\infty(M_+) )$ is a strict vector subspace of $\mb{R}^{\dim(\EL)}$, and we can choose some vector $(c_\lambda)_{\lambda\leqslant \Lambda}$ in $\CB P_\Lambda(C^\infty(M_+) )^\perp$. In other words there exists a non-trivial linear combination $\varphi=\sum_{\lambda\leqslant \Lambda} c_\lambda e_\lambda \in \EL$ such that for every $f\in C^\infty_c(M_+)$ it holds
$$
\sum_{\lambda\leqslant \Lambda} c_\lambda \left\langle e_\lambda,f\right\rangle = 0\,. 
$$
This entails that $\varphi\in\EL$ vanishes on $M_+$. However, since $M_+$ has non-empty interior, this would contradict Lemma~\ref{thmnodal}, which concludes the proof that the map $\CB P_\Lambda$ is surjective. The surjectivity of $\CB P_\Lambda$ allows us to find $f_*\in C^\infty_c(M_+)$ such that $\left\langle f_*,e_\lambda\right\rangle = 0$ if $\lambda\neq \lambda_*$ and $\left\langle f_*,e_{\lambda_*}\right\rangle=1$. For this $f_*$, we have
\begin{align*}
\big\langle \Theta f_*, \frac{\Pi_\Lambda}{\Delta_g+1}f_* \big\rangle_{L^2(M)}&=
\sum_{\lambda\leqslant \Lambda} \left\langle f_*,\Theta e_\lambda\right\rangle \frac{1}{\lambda+1} \left\langle f_*,e_\lambda\right\rangle\\
&=
-\left\langle f_*,e_{\lambda_*}\right\rangle \frac{1}{\lambda_*+1} \left\langle f_*,e_{\lambda_*}\right\rangle\\
&=-\frac{1}{\lambda_*+1}<0\,,
\end{align*}
which concludes the proof.
\end{proof}

\subsection{The Regularized $\Phi_3^4$ Measure on $\mb{R}^{3}$}
\label{SectionCounterexample}

We conclude by giving a counterexample on flat space. To do so, given $f\in L^2(\mb{R}^d)$, we write
$$
(\rho\Pi_\Lambda\Phi)(f) \defeq\int_{\mb{R}^d} f(x)  (\Pi_\Lambda\Phi)(x) \rho(x) \dd  x\,.
$$ 
We would like to summarize in one key lemma the central idea behind our counterexample.
\begin{lemm}\label{keylemma}
Let $A\subset \mb{R}^d$ be a closed subset with nonempty interior and 
$B\subset \mb{R}^{d*}$ a closed compact ball in Fourier space. Denote by $L^2_A(\mb{R}^d)$ the subspace of $L^2$ functions whose supported is contained in $A$.
Then the Fourier restriction map defined as 
$$\CB R_B: L^2_A(\mb{R}^d)\ni\varphi \longmapsto \CB F{\varphi}\upharpoonright{B}\in L^2(B)$$ 
has everywhere dense image.
\end{lemm}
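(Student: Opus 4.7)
My plan is to prove the density of the image by a duality argument combined with the Paley--Wiener theorem. Suppose by contradiction that $\mathcal{R}_B(L^2_A(\mb{R}^d))$ is not dense in $L^2(B)$. Then by the Hahn--Banach theorem (or directly by the projection theorem for Hilbert spaces), there exists a non-zero $g \in L^2(B)$ such that
\[
\langle g, \mathcal{F}\varphi\rangle_{L^2(B)} = 0 \quad \text{for all } \varphi \in L^2_A(\mb{R}^d).
\]
I would extend $g$ by zero outside $B$ to obtain $\tilde g \in L^2(\mb{R}^{d*})$ with $\supp(\tilde g) \subset B$, so that the above orthogonality relation reads $\langle \tilde g, \mathcal{F}\varphi\rangle_{L^2(\mb{R}^{d*})} = 0$ for all $\varphi \in L^2_A(\mb{R}^d)$.

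Next, I would apply the Plancherel theorem to transfer this to the direct side: $\langle \mathcal{F}^{-1}\tilde g, \varphi\rangle_{L^2(\mb{R}^d)} = 0$ for all $\varphi \in L^2_A(\mb{R}^d)$. Since test functions $\varphi \in C_c^\infty(A^\circ)$ form a subspace of $L^2_A(\mb{R}^d)$, this forces $\mathcal{F}^{-1}\tilde g$ to vanish almost everywhere on the open set $A^\circ$, which is non-empty by hypothesis.

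The key step is now the Paley--Wiener theorem: since $\tilde g \in L^2(\mb{R}^{d*})$ has compact support in the ball $B$, its inverse Fourier transform $\mathcal{F}^{-1}\tilde g$ is (almost everywhere equal to) the restriction to $\mb{R}^d$ of an entire function on $\mb{C}^d$ of exponential type. An entire function (in particular a real-analytic function on $\mb{R}^d$) that vanishes on the non-empty open set $A^\circ$ must be identically zero. This gives $\mathcal{F}^{-1}\tilde g \equiv 0$, hence $\tilde g = 0$ and therefore $g = 0$, contradicting $g \neq 0$.

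The main (minor) obstacle is essentially notational---ensuring that the orthogonality inherited from $L^2(B)$ extends cleanly to an $L^2(\mb{R}^{d*})$ statement after extension by zero, and that Plancherel applies in the form needed. Once this is set up, the Paley--Wiener identity theorem closes the argument immediately. No deeper machinery is required.
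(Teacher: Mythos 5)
Your proposal is correct and follows essentially the same route as the paper: a Hahn--Banach/orthogonality argument, transfer to the physical side via Plancherel (with the extension of $g$ by zero, which the paper writes as $1_B g$), vanishing of $\mathcal{F}^{-1}(1_B g)$ on the nonempty interior of $A$, and the Paley--Wiener theorem to conclude that this analytic function, hence $g$, is zero. The only cosmetic difference is that you test against $C_c^\infty(A^\circ)$ while the paper tests against all of $L^2_A(\mb{R}^d)$; both yield the same conclusion.
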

The idea of proof is very similar to the one of Theorem~\ref{thm41}. The simple but powerful idea is that any function on the ball $B$ can be approximated by the Fourier transform of some function supported over $A$.

\begin{proof}
By contradiction, if $\text{Ran}(\CB R_B)$ were not dense, then $\text{Ran}(\CB R_B)^\perp$ would be a non-empty closed vector subspace and there would be some non-null element $g\in L^2(B)$ such that for all $f\in L^2_A$
\begin{eqnarray*}
0 = \big\langle \CB F{f},g\big\rangle _{L^2(B)} = \big\langle \CB F{f},1_{B}g\big\rangle_{L^2(\mb{R}^d)} = \left\langle  f, \CB{F}^{-1}\left( 1_{B}g\right)\right\rangle_{L^2(\mb{R}^d)},
\end{eqnarray*}
from Plancherel identity. The function $\CB{F}^{-1}\left( 1_{B}g\right)$ is therefore not supported on $A$ hence vanishes on some open subset. This contradicts the fact that $\CB{F}^{-1}\left( 1_{B}g\right)$ is a non-trivial analytic function by the Paley-Wiener theorem.
\end{proof}

We now fix $d=3$, since it is the most interesting case. Recall the definition \eqref{eqnu} of the approximate $\Phi_3^4$ measure $\nu_{\rho,\Lambda}$. We emphasize here the dependence of the measures $\nu_{\rho,\Lambda}$ on the coupling constant $c$ by writing $\nu(c,\rho,\Lambda)$, and we set
$$
G_{c,\rho,\Lambda} \defeq e^{- c\Vert\rho\Phi_\Lambda\Vert^4_{L^4} -  c a_\Lambda\Vert\rho\Phi_\Lambda\Vert^2_{L^2}} \leqslant 1.
$$

Denoting by $(x_1,x_2,x_3)$ the canonical coordinates of a point $x\in\mb{R}^3$, we write $f\in L^2(\mb{R}^3_+)$ to say that $f\in L^2(\mb{R}^3)$ and $\text{supp}(f)\subset\{x_1>0\}$.

\begin{thrm} \label{ThmNotRFRegularizedPhi43}
For $c>0$ small enough, there exists a function $f\in L^2(\mb{R}^{3}_+)$ such that
\begin{eqnarray}
\mb{E}_{\nu(c,\rho,\Lambda)}\big[  \overline{(\rho\Pi_\Lambda\Phi)(\Theta f)} \, (\rho\Pi_\Lambda\Phi)(f) \big] < 0.
\end{eqnarray}
This implies that the regularized $\Phi_3^4$ measures $\nu(c,\rho,\Lambda)$ are not reflection positive for $c>0$ small enough, depending on $\Lambda$ and $\rho$.
\end{thrm}

\begin{proof}
Since $0\leqslant G_{c,\rho,\Lambda}(\varphi)$ goes to $1$ as $c\downarrow0$ for every distributions $\varphi$, one has by dominated convergence
\begin{equation} \label{EqConvergenceSmallCoupling}
\mb{E}_{\nu(c,\rho,\Lambda)}\big[  \overline{(\rho\Pi_\Lambda\Phi)(\Theta f)} \, (\rho\Pi_\Lambda\Phi)(f) \big] \overset{c\downarrow0}{\longrightarrow} \mb{E}_{\text{GFF}}\big[ \overline{(\rho\Pi_\Lambda\Phi)(\Theta f)} \, (\rho\Pi_\Lambda\Phi)(f) \big].
\end{equation}
To justify the use of the dominated convergence, at some fixed cut-offs $\rho$ and $\Lambda$, we use the lower bound on the interaction
$$ 
c\Vert\rho\Phi_\Lambda \Vert^4_{L^4(\mb{R}^3)}+ca_\Lambda \Vert\rho\Phi_\Lambda \Vert^2_{L^2(\mb{R}^3)} \geqslant c\int_{\mb{R}^3} \rho^4\Phi_\Lambda^4\,\dd x -c\frac{\vert a_\Lambda \vert\delta^2}{2}\int_{\mb{R}^3} \Phi_\Lambda^4\rho^4\,\dd x- c\frac{\vert a_\Lambda\vert}{2\delta^2} \vol(\text{supp}(\rho)),  
$$
for any $\delta>0$, using Young's inequality and the compactness of the support of $\rho$. Choosing $\delta$ small enough yields a lower bound of the form
$$
c\Vert\rho\Phi_\Lambda \Vert^4_{L^4(\mb{R}^3)}+ca_\Lambda \Vert\rho\Phi_\Lambda \Vert^2_{L^2(\mb{R}^3)}  \geqslant c \frac{2}{3} \Vert\rho\Phi_\Lambda \Vert^4_{L^4(\mb{R}^3)}-cK 
$$
for some $K>0$.

From the convergence result in \eqref{EqConvergenceSmallCoupling} it suffices to find $f\in L^2(\mb{R}^{3}_+)$ such that
$$ 
\mb{E}_{\text{GFF}}\big[ \overline{(\rho\Pi_\Lambda\Phi)(\Theta f)} \, (\rho\Pi_\Lambda\Phi)(f) \big] =  \left\langle \Pi_\Lambda (\rho \overline{\Theta f}) \, , \, (\Delta+1)^{-1}\Pi_\Lambda (\rho f) \right\rangle_{L^2} < 0
$$ 
and choose $c>0$ small enough. Set the Fourier restriction map 
\begin{eqnarray*}
\CB R_\Lambda:  L^2(\mb{R}^{3}_+)\ni f\longmapsto \CB F(\rho f) 1_{B_0(\Lambda)}\in L^2(B_0(\Lambda))\,,
\end{eqnarray*}
where on the right hand side we consider the restriction of the Fourier transform $\CB F(\rho f)$ to the centered Fourier ball of radius $\Lambda$. By a similar argument as in Lemma~\ref{keylemma}, the key idea is to prove that the image of $\CB T$ is dense in $L^2(B_0(\Lambda))$. If it were not true, then $\text{Ran}(\CB R_\Lambda)^\perp$ would be a non-empty closed vector subspace and there would be some non-null element $g\in L^2(B_0(\Lambda))$ such that for every $f\in L^2(\mb{R}^{3}_+)$
\begin{eqnarray*}
0 = \big\langle \CB F(\rho f),g\big\rangle _{L^2(B_0(\Lambda))} = \big\langle \CB F(\rho f),1_{B_0(\Lambda)}g\big\rangle_{L^2(\mb{R}^3)} = \left\langle  f, \rho\CB{F}^{-1}( 1_{B_0(\Lambda)}g)\right\rangle_{L^2(\mb{R}^3)},
\end{eqnarray*}
from Plancherel identity. The function $\rho\CB{F}^{-1}( 1_{B_0(\Lambda)}g)$ is therefore supported outside the half-space $\{x_1>0\}$. Since $\rho>0$ on some non-empty region of $\{x_1>0\}$ this implies that the function $\CB{F}^{-1}( 1_{B_0(\Lambda)}g)$ vanishes in some open subset contained in the half-space $\{x_1>0\}$. This contradicts the fact that $\CB{F}^{-1}( 1_{B_0(\Lambda)}g)$ is a non-trivial analytic function by the Paley-Wiener theorem.

Therefore, the provisional conclusion is that there exists a sequence $f_n$ in $ L^2(\mb{R}^{3}_+)$ such that the sequence $\CB R_\Lambda f_n$ converges in $L^2(B_0(\Lambda))$ to $\xi_1 1_{B_0(\Lambda)}(\xi)$. Therefore for this sequence $(f_n)_{n\geqslant0}$ we find that
\begin{align*} 
{}&\left\langle\Pi_\Lambda (\rho \overline{\Theta f_n})  ,  (\Delta+1)^{-1}\Pi_\Lambda(\rho f_n) \right\rangle_{L^2} = \int_{\mb{R}^{3*}} 1_{[0,\Lambda]}(|\xi|^2) \langle\xi\rangle^{-2} \, \overline{\CB R_\Lambda f_n}(-\xi_1,\xi_2,\xi_3) \CB R_\Lambda f_n(\xi_1,\xi_2,\xi_3) \dd \xi   
\end{align*} 
converges as $n\uparrow\infty$ to
$$
-\int_{{B_0(\Lambda)}} 1_{[0,\Lambda]}(|\xi|^2) \langle\xi\rangle^{-2}\xi_1^2\dd \xi < 0\,.
$$
This shows that the heat regularized (and space localized) Gaussian free field measure is not reflection positive. Of course the Gaussian free field measure itself is reflection positive and the above proof breaks down when $\Lambda\uparrow\infty$ as the function $\xi_1$ alone is not an element of $L^2$.
\end{proof}
Note that Theorem \ref{ThmNotRFRegularizedPhi43} \textbf{does not exclude} the fact that $\nu(c,\rho,\Lambda)$ may be reflection positive for some large enough coupling constant $c$.

\chapter[Entanglement Entropy and Polyakov]{Entanglement Entropy and Cauchy-Hadamard Renormalization}\label{chap-entangle}

This chapter is adapted from \cite{BL} joint with Benoit Estienne.

\section{Introduction}

An (Euclidean) Conformal Field Theory (CFT) is a theory which, in one way, emerges as the \textit{scaling limit} of a lattice statistical mechanics model with \textit{critical parameters} (at phase transition), as the lattice spacing goes to zero. Due to its conformal covariance properties, such a theory can, in two dimensions, naturally be defined on a surface with its conformal structure ($=$ complex structure for oriented surfaces). 
Broadly speaking and for the purposes of this paper,  the data of such a theory consists of assigning to each (closed) Riemann surface~$\Sigma$ with metric~$g$ a number~$\mathcal{Z}(\Sigma,g)$ (the partition function) and a collection of correlation functions defined on~$\Sigma^n_{\ne}$ (subspace of the~$n$-fold product~$\Sigma^n$ with non-coincident points), which satisfy well-defined transformation rules under conformal changes of the metric~$g$ (see definition \ref{def-cft}). In comparison to the original lattice problem, the number~$\mathcal{Z}(\Sigma,g)$ corresponds to the normalization constant for the Gibbs measure, while the functions on~$\Sigma^n_{\ne}$ describe the scaling limit of the probabilistic correlation functions of~$n$ local observables of the spin configurations under the Gibbs measure. 

It has long been recognized in the physics literature that the presence of \textsf{conical singularities} in the underlying metric alters the conformal covariance properties of the partition function $\mathcal{Z}(\Sigma,g)$ in a way that mimics the behavior of correlation functions.  This paper aims to provide a rigorous framework for these observations, with a particular emphasis on partition functions on branched coverings, motivated by considerations of entanglement entropy (which we introduce in Section \ref{sec-ent-ent-main}). We remark that a CFT in the sense of the previous paragraph makes sense, \textit{a priori}, only for smooth metrics and smooth conformal changes. The first task of this paper thus lies in defining in a simple and natural manner the number~$\mathcal{Z}(\Sigma,\tilde{g})$ when the metric~$\tilde{g}$ admits finitely many isolated \textsf{conical singularities}. Then, applying this definition to metrics coming from pull-backs of ramified holomorphic maps, we observe that certain ratios of these numbers behave exactly like what is expected of correlation functions on the target surface where the variables are the critical values. More precisely, the main result is the following.

\begin{thrm}
  [proposition \ref{prop-main-app-ram-cov}] \footnote{See also ``Riemann surface terminologies'' on the next page. Moreover, we will assume that all Riemann surfaces in this paper are connected, unless otherwise specified.} Let~$\Sigma_d$,~$\Sigma$ be closed Riemann surfaces, with a smooth conformal metric~$g$ on~$\Sigma$, and~$f:\Sigma_d\lto \Sigma$ a ramified $d$-sheeted holomorphic map, whose critical values are~$w_1$, \dots,~$w_p$. Consider a conformal field theory with central charge~$c$ whose partition function is denoted~$\mathcal{Z}$. Pick~$h\in C^{\infty}(\Sigma)$ on~$\Sigma$. Then under the definition \ref{def-renom-part-func} for $\mathcal{Z}(\Sigma_d, f^* e^{2h} g)$ and $\mathcal{Z}(\Sigma_d, f^* g)$ we have
	  \begin{equation}
	    \frac{\mathcal{Z}(\Sigma_d, f^* e^{2h} g)}{\mathcal{Z}(\Sigma, e^{2h} g)^d} = e^{- \sum_j  h(w_j)\Delta_{j}} \frac{\mathcal{Z}(\Sigma_d, f^* g)}{\mathcal{Z}(\Sigma, g)^d}, 
	    \label{eqn-intro-conf-weights}
	  \end{equation}
	  where
	  \begin{equation}
	    \Delta_{j} \defeq \frac{c}{12}  \sum_{z \in f^{-1}(w_j)}\Big(\ord_f(z) - \frac{1}{\ord_f(z)} \Big)
	    \label{}
	  \end{equation}
	  are the \textsf{conformal weights} or \textsf{scaling dimensions}. Here $\ord_f(z)$ denotes the \textsf{order} or \textsf{multiplicity} of $f$ at $z\in \Sigma_d$.
\end{thrm}

The relation between CFT and conical singularities has been explored in various physical contexts. Early investigations, particularly within string theory, focused on the construction of orbifold CFTs \cite{Dixon}. In this framework, Knizhnik \cite{Knizhnik} examined the behavior of CFTs defined on flat branched coverings of $\mathbb{CP}^1$. More generally, the emergence of universal logarithmic divergences in the free energy due to conical singularities was recognized by Cardy and Peschel \cite{Cardy_Peschel}. Motivated by both black hole entropy and quantum information theory, entanglement entropy in CFTs was explored in seminal works \cite{Holzhey} and \cite{Cardy_Calabrese}, where it was shown that the universal contributions from conical singularities play a crucial role in understanding the scaling behavior of entanglement entropy in one-dimensional quantum critical systems. In this context, the Rényi entropy was identified as the free energy on $d$-sheeted branched covering of certain flat Riemann surface (typically a cylinder or a torus). Further studies, such as in \cite{Wiegmann}, extended the analysis of conical singularities to hyperbolic surfaces, with applications in the context of the quantum Hall effect \cite{KMMW}. 

Conical singularities is also a relatively well-studied subject mathematically. They are one of the simplest types of singularities that can appear on a Riemann surface. The main motivations include spectral geometry (hearing the shape of a drum) \cite{Cheeger} and the Berger-Nirenberg problem of finding metrics with prescribed curvature \cite{HT, Troy}. Motivated by its higher dimensional analogue in complex geometry called K\"ahler-Einstein edge metrics, people have also studied the Ricci flow on surfaces with such singularities \cite{MRS}. Particularly relevant to the present work are recent investigations on the~$\zeta$-determinant of the Laplacian under these conical metrics and the Polyakov formulas \cite{AKR, Kalvin}, on which we shall make a more detailed comment in remarks \ref{rem-rel-kal} and \ref{rem-rel-akr}. We would also like to point out the interesting recent work \cite{JV} on Coulomb gas and the Grunsky operator where conical singularities (more precisely, ``corners'' on the boundary) also play a role. For more literature from these various perspectives we refer to the introductions of \cite{AKR, Kalvin}, section 1.2 of \cite{JV}, and section 2.E of \cite{MRS}. Finally, one precise relation between conical singularities and the probabilistically constructed Liouville CFT has been discussed in \cite{HRV,TNA}.

One important feature of the present work lies in its \textit{simplicity} and \textit{naturalness}. Essentially, the method involves only a close look at the geometry near the cone points and integration by parts (Green-Stokes formula). Consequently, the geometric meaning of each term that shows up in the result is transparent (see also remark \ref{rem-extra-quad-term}). Moreover, we only need rather weak regularity to be imposed on the regular metric potential at the cone points compared to other related works in the literature (e.g.\ to \cite{Kalvin} definition 2.1), and this is basically the assumption adopted in \cite{Troy} (see also remark \ref{rem-reg-met-pot-cone}). 

\paragraph{Organization.} In section \ref{sec-def} we define the three main objects dealt with by this paper: the CFT correlation functions, conical singularities, and the renormalized Polyakov anomaly; these are accompanied by a few pivotal lemmas. Section \ref{sec-ent-ent-main} explains the physical motivation which leads us to (\ref{eqn-intro-conf-weights}). In these sections (and hence the whole paper) no prior knowledge of QFT, CFT or statistical mechanics is assumed, as this paper serves also to introduce these ideas to the mathematics community. Only some notions of quantum mechanics are required to make sense of entanglement entropy. Then in section \ref{sec-geo-lemm}, we collect a few known facts about conical metrics (subsection \ref{sec-conic-remark}), prove a crucial scaling lemma (subsection \ref{sec-geo-scaling}), and compute the asymptotics of several logarithmically divergent integrals using integration by parts (subsection \ref{sec-int-by-part}). They are important as pointed out by remarks \ref{rem-well-def-renorm-anom}, \ref{rem-const-def-part-func}, and also in obtaining the final result. In section \ref{sec-renorm-tech} we tie up some loose ends around the definitions of the renormalized anomaly and partition function for conical metrics. In subsection \ref{sec-main-proof} we prove the main result and apply it to entanglement entropies, and in \ref{sec-literature} we comment on the two closely related work \cite{AKR, Kalvin}. Finally in appendix \ref{sec-app-poin-lelong} we recall a few things around the so-called ``Poincar\'e-Lelong lemma'' concerning the Laplacian of the log of the distance function on a Riemannian surface.

\paragraph{Future work.} In the present version of the work we have focused on the simplest case dealing only with partition functions to illustrate our (already simple) methods. It is clear that the same arguments could apply with minor modifications to obtain analogous results for Segal's amplitudes on surfaces with boundary (see \cite{Gaw} section 2.6) when the metric is ``flat-at-the-boundary'', and for the case of general boundaries with possible presence of ``corners'' (i.e.\ ``polygons'') by including the boundary term in the anomaly, as considered for example by \cite{AKR}. These considerations may be included in a future update of the present manuscript. However, another main aim of that work will be to present an equivalent rigorous construction of the quantities~$\mathcal{Z}(\Sigma_d,f^*g)/\mathcal{Z}(\Sigma,g)^d$ in (\ref{eqn-intro-conf-weights})  for the GFF by constructing the so-called ``twist fields'' in the physics literature. This is related to a singular version of the ``twisted Laplacians'' used e.g.\ in Phillips and Sarnak \cite{PSa}. Lastly, less apparent but interesting future investigations include obtaining the precise relation between this work and \cite{Kalvin} in light of Segal's gluing axioms, and exploring further connections with the Quantum Hall Effect and Coulomb gas in a rigorous manner.

\paragraph{Riemann surface terminology and asymptotic notations.} Let~$f:\Sigma'\lto \Sigma$ be a holomorphic map of closed (connected\footnote{We will assume that all Riemann surfaces in this paper are connected, unless otherwise specified.}) Riemann surfaces. If at~$z\in \Sigma'$ we have~$\dd f|_z=0$ we say that~$z$ is a \textsf{critical point} of~$f$ and~$w=f(z)$ is a \textsf{critical value} of~$f$. 
If in local holomorphic charts around~$z$ and~$w$ respectively~$f$ looks like a~$k$-th power, then we say~$k$ is the \textsf{order} of~$f$ at~$z\in \Sigma'$ (sometimes referred to as the \textsf{local degree} or \textsf{multiplicity}),  denoted~$k=:\ord_f(z)$. 
We say that~$f$ is a \textsf{ramified} (or \textsf{branched}) holomorphic map when we want to emphasize that it is not a strict covering map.
The number~$\sum_{z\in f^{-1}(w)}\ord_f(z)=:d$, which is same for every~$w\in \Sigma$, is called the \textsf{number of sheets} of~$f$ and we say~$f$ is~\textsf{$d$-sheeted}. Finally, the notation $u\asymp v$ under some limit process means $u/v\to C>0$ in that limit, and we write $u\sim v$ for the case $C=1$.

  \section{Definition of Main Objects}\label{sec-def}

 \subsection{Conformal Field Theory and Correlation Functions}

\begin{deef}\label{def-cft}
  In this paper, by a \textsf{2d Conformal Field Theory} we mean a rule that associates to each compact Riemannian surface~$\Sigma$ with metric~$g$ (a priori smooth) a complex number~$\mathcal{Z}(\Sigma,g)$ called the \textsf{partition function}, and a family~$\{\ank{\phi_{\alpha_1}(\cdot)\cdots}_{\Sigma,g}\}$ of functions of finite tuples of non-coincident points on~$\Sigma$, called \textsf{correlation functions of primary fields} (labelled by the~$\alpha$'s), such that the following two conditions hold:
  \begin{enumerate}[(i)]
    \item \textsf{diffeomorphism invariance:} if~$\Psi:\Sigma'\lto \Sigma$ is a diffeomorphism of smooth surfaces, then
      \begin{align}
	\mathcal{Z}(\Sigma',\Psi^*g)&=\mathcal{Z}(\Sigma,g), \label{eqn-part-func-diffeo-inv}\\
	\ank{\phi_{\alpha_1}(\Psi(x_1))\cdots \phi_{\alpha_n}(\Psi(x_n))}_{\Sigma,g}&=\ank{\phi_{\alpha_1}(x_1)\cdots \phi_{\alpha_n}(x_n)}_{\Sigma',\Psi^*g},
	\label{eqn-main-diffeo-inv}
      \end{align}
      for any~$x_1$, \dots,~$x_n\in \Sigma$ non-coincident, and as well
    \item \textsf{local scale (conformal) covariance:} if~$\sigma\in C^{\infty}(\Sigma)$ then
      \begin{align}
	\mathcal{Z}(\Sigma,\me^{2\sigma}g)&=\exp\Big( \frac{c}{24\pi}\int_{\Sigma}^{}(|\nabla_g \sigma|_g^2+2K_g\cdot\sigma)\dd V_g \Big)\cdot \mathcal{Z}(\Sigma,g),\label{eqn-ord-polya-anom}\\
	\ank{\phi_{\alpha_1}(x_1)\cdots \phi_{\alpha_n}(x_n)}_{\Sigma,\me^{2\sigma}g}&=\prod_{j=1}^n \me^{-\sigma(x_j)\Delta_{\alpha_j}}\ank{\phi_{\alpha_1}(x_1)\cdots \phi_{\alpha_n}(x_n)}_{\Sigma,g},
	\label{eqn-main-conformal-cov}
      \end{align}
      where~$K_g$ denotes the Gauss curvature of~$g$ (half the scalar curvature), the constant~$c\in \mb{R}$ is called the \textsf{central charge}, charateristic of the specific theory at hand, and constants~$\Delta_{\alpha}\in\mb{R}$ the \textsf{conformal weights}, charateristic of the theory as well as the fields~$\phi_{\alpha}$. 
      \end{enumerate}
\end{deef}

\begin{deef}
  The quantity
  \begin{equation}
    A_{\Sigma}(\me^{2\sigma}g,g) \defeq  \frac{1}{24\pi}\int_{\Sigma}^{}(|\nabla_g \sigma|_g^2+2K_g\cdot\sigma)\,\dd V_g 
    \label{eqn-def-ord-poly-anomaly}
  \end{equation}
  for ~$\sigma\in C^{\infty}(\Sigma)$, that appears in the exponential in (\ref{eqn-ord-polya-anom}) is usually called the \textsf{Weyl} or \textsf{Polyakov Anomaly} of $\me^{2\sigma}g$ against $g$. 
\end{deef}

\begin{def7}\label{rem-ord-cocycle}
  The quantities~$A_{\Sigma}$ has the so-called \textsf{cocycle property}, that is, if~$g_3=\me^{2h}g_2$ and~$g_2=\me^{2\sigma}g_1$ where~$h$,~$\sigma\in C^{\infty}(\Sigma)$, then
  \begin{equation}
    A_{\Sigma}(g_3,g_2)+A_{\Sigma}(g_2,g_1)=A_{\Sigma}(g_3,g_1),
    \label{}
  \end{equation}
  as one can check using the relations (\ref{eqn-scale-vol}) --- (\ref{eqn-scale-lap}). One essential ingredient of this work is to define a ``renormalized'' version of the anomaly of a conically singular metric against a smooth one (definition \ref{def-renorm-polya-conic}). With these definitions the cocycle property will be modified accordingly when involving the conical metrics, see proposition \ref{prop-main-conical-scaling}. 
\end{def7}

\begin{exxx}[Gaussian Free Field] \label{exp-gff}
  Let~$(\Sigma,g)$ be a Riemannian surface with smooth metric~$g$ whose Laplacian is denoted~$\Delta_g$ (negative). Consider
  \begin{equation}
    \mathcal{Z}(\Sigma,g)\defeq \detz'(-\Delta_g)^{-\frac{1}{2}},
    \label{}
  \end{equation}
  where
  \begin{equation}
    \detz'(-\Delta_g)\defeq \exp\Big( -\frac{\dd}{\dd s}\Big|_{s=0}\sum_{j=1}^{\infty}\lambda_j^{-s} \Big),
    \label{}
  \end{equation}
  with~$0=\lambda_0<\lambda_1\le \lambda_2\le\cdots$ being the eigenvalues of~$-\Delta_g$ counted with multiplicity, called the~\textsf{$\zeta$-regularized determinant} \cite{RS}. Then~$\mathcal{Z}$ satisfies (\ref{eqn-ord-polya-anom}) (and (\ref{eqn-part-func-diffeo-inv}) trivially) with~$c=1$, by the \textsf{Polyakov formula} (see \cite{peltola-wang} appendix B). This~$\mathcal{Z}$ corresponds to the ``total mass'' of the formal measure~$\exp(-\frac{1}{2}\int_{\Sigma}^{}|\nabla_g \phi|^2 \dd V_g)\dd \mathcal{L}(\phi)$ on the space of zero-average distributions~$\mathcal{D}'_0(\Sigma)$ on~$\Sigma$, with~$\mathcal{L}$ being the non-existent Lebesgue measure there. In this case, an actual Gaussian probability measure~$\mu_{\mm{GFF}}^{\Sigma}=:\mu$ can indeed be defined on~$\mathcal{D}'_0(\Sigma)$ that corrresponds to~$\mathcal{Z}(\Sigma,g)^{-1}\exp(-\frac{1}{2}\int_{\Sigma}^{}|\nabla_g \phi|^2 \dd V_g)\dd \mathcal{L}(\phi)$, called the (massless) \textsf{Gaussian Free Field} \cite{powell-werner}. This is characterized by the formal covariance property (pretending that the point values $\phi(x)$, $x\in\Sigma$, are legitimate real random variables)
  \begin{equation}
    \mb{E}_{\mu}\big[\phi(x)\phi(y)\big]=G_{\Sigma}(x,y),\quad\textrm{and}\quad\mb{E}_{\mu}\big[\phi(x)\big]\equiv0,\quad\quad x,y\in \Sigma,
    \label{}
  \end{equation}
  with~$G_{\Sigma}$ being the Green function which is the integral kernel of~$(-\Delta_g)^{-1}P_{\ker \Delta_g}^{\perp}$. Now, after an appropriate renormalization process which we do not detail here (related to obtaining a finite value for~$G_{\Sigma}(x,x)$), denoted~``$\mathcal{R}$'', the quantities
  \begin{equation}
    \bank{\phi_{\alpha_1}(x_1)\cdots \phi_{\alpha_n}(x_n)}_{\Sigma,g}\defeq \textrm{``}\mathcal{R}\textrm{''}\mb{E}_{\mu}\big[\me^{\ii \alpha_1\phi(x_1) }\cdots \me^{\ii \alpha_n\phi(x_n) }\big],
    \label{}
  \end{equation}
  where now~$\alpha_j\in\mb{R}$, transform according to (\ref{eqn-main-conformal-cov}) with~$\Delta_{\alpha_j}:=\alpha_j^2/4\pi$.
\end{exxx}

The rules (i), (ii) actually determine, say, the two-point function up to a constant for some simple geometries.

\begin{lemm}\label{lemm-two-pt-func}
  Consider a CFT defined on the Riemann sphere~$\mb{S}^2=\mb{C}\cup\{\infty\}$, equipped with the Fubini-Study metric~$g_{\mm{FS}}(z):=4(1+|z|^2)^{-2}|\dd z|^2$. Take two primary fields~$\phi_1$,~$\phi_2$ with conformal weights~$\Delta_1$,~$\Delta_2$. Then
  \begin{equation}
  \bank{\phi_1(u)\phi_2(v)}_{\mb{S}^2,g_{\mm{FS}}}=
  \left\{
  \begin{array}{ll}
     C\sin\big( \frac{1}{2}d_{\mm{FS}}(u,v) \big)^{-2\Delta},&\textrm{when }\Delta_1=\Delta_2=\Delta,\\
    0 &\textrm{otherwise,}
  \end{array}
  \right.
  \label{}
\end{equation}
where $u\ne v$ and~$C$ is a non-zero constant that cannot be determined from the rules (i) and (ii) alone.
\end{lemm}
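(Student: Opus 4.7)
The plan is to exploit the conformal covariance of the theory under the Möbius group of the Riemann sphere, of which the isometry group $SO(3)$ of $(\mb{S}^2, g_{\mm{FS}})$ is a subgroup. The argument proceeds in three steps: (1) reduce the two-point function to a function of a single variable using isometries; (2) use a one-parameter family of dilations to determine its form up to a single constant; (3) use translations to force the selection rule $\Delta_1 = \Delta_2$.

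First, $SO(3)$ acts transitively on the set of ordered pairs of points of $\mb{S}^2$ at any prescribed geodesic distance. By the diffeomorphism invariance axiom (\ref{eqn-main-diffeo-inv}) applied to such isometries, $\bank{\phi_1(u)\phi_2(v)}_{\mb{S}^2, g_{\mm{FS}}}$ depends only on $d := d_{\mm{FS}}(u,v)$; call this value $F(d)$. Next, I would consider the Möbius map $\Psi_\lambda(z) := \lambda z$ for $\lambda > 0$, for which a direct calculation gives $\Psi_\lambda^* g_{\mm{FS}} = e^{2\sigma_\lambda} g_{\mm{FS}}$ with $\sigma_\lambda(z) = \log\lambda + \log(1+|z|^2) - \log(1+\lambda^2|z|^2)$. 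Applying (\ref{eqn-main-diffeo-inv}) and then (\ref{eqn-main-conformal-cov}) to the insertion pair $(0, r)$ with $r > 0$, and using $d_{\mm{FS}}(0, s) = 2\arctan s$, one obtains the functional equation
\begin{equation*}
    F\big(2\arctan(\lambda r)\big) = \lambda^{-\Delta_1 - \Delta_2}\left(\frac{1+\lambda^2 r^2}{1+r^2}\right)^{\Delta_2} F\big(2\arctan r\big).
\end{equation*}
Writing $r = \tan(d/2)$ and $\lambda r = \tan(d'/2)$, this is equivalent to the statement that $F(d)\sin(d/2)^{\Delta_1+\Delta_2}\cos(d/2)^{\Delta_2-\Delta_1}$ is independent of $d$; hence $F(d) = C\sin(d/2)^{-(\Delta_1+\Delta_2)}\cos(d/2)^{\Delta_1-\Delta_2}$ for some constant $C$.

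Finally, I would apply the translation $\Psi_a(z) := z + a$ with $a > 0$, for which $\sigma_a(z) = \log(1+|z|^2) - \log(1+|z+a|^2)$. Using the chordal formulas
\begin{equation*}
    \sin\bigl(d_{\mm{FS}}(z_1, z_2)/2\bigr) = \frac{|z_1 - z_2|}{\sqrt{(1+|z_1|^2)(1+|z_2|^2)}},\qquad \cos\bigl(d_{\mm{FS}}(z_1, z_2)/2\bigr) = \frac{|1 + z_1\bar z_2|}{\sqrt{(1+|z_1|^2)(1+|z_2|^2)}},
\end{equation*}
substituting the ansatz from the previous step into the resulting functional equation for the pair $(0, r)$ and cancelling common factors leaves the single identity
\begin{equation*}
    \big(1 + a(a+r)\big)^{\Delta_1 - \Delta_2} = (1 + a^2)^{\Delta_1 - \Delta_2},
\end{equation*}
which must hold for all $a, r > 0$. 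This forces either $C = 0$ (when $\Delta_1 \ne \Delta_2$) or $\Delta_1 = \Delta_2$, yielding the stated dichotomy. The residual freedom in $C$ is expected, since the axioms (i)--(ii) are invariant under rescaling each primary field by a nonzero constant. The main routine part is the bookkeeping of conformal factors at the insertion points under the two Möbius transformations; conceptually nothing beyond the two axioms is required.
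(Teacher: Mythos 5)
Your computations check out: the dilation functional equation does reduce, after $r=\tan(d/2)$, to constancy of $F(d)\sin(d/2)^{\Delta_1+\Delta_2}\cos(d/2)^{\Delta_2-\Delta_1}$, and the translation step does collapse, after cancellation, to $(1+a(a+r))^{\Delta_1-\Delta_2}=(1+a^2)^{\Delta_1-\Delta_2}$ for all $a,r>0$, which forces $\Delta_1=\Delta_2$ or $C=0$. Conceptually this is the same proof as the paper's, since everything is extracted from axioms (i) and (ii) applied to M\"obius maps; the difference is organizational. The paper applies the two axioms once to the family of M\"obius maps sending $(0,\infty)$ to $(u,v)$ and gets both the functional form and the selection rule from the requirement that the result be independent of the residual parameter $z_0=\psi^{-1}(\infty)$, with $C$ defined as the value at $(0,\infty)$; you instead split the group into rotations (radial reduction), dilations (functional form), and translations (selection rule), never inserting a field at $\infty$. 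The one loose end in your write-up: since your dilation and translation steps are only ever applied to pairs $(0,r)$ with $r$ finite, antipodal configurations $d_{\mathrm{FS}}(u,v)=\pi$ — which rotations carry to $(0,\infty)$, not to any $(0,r)$ — are never constrained, so the claimed formula at exactly antipodal points is not established as written. This is closed in one line: apply the translation covariance to the pair $(0,\infty)$; since $\sigma_a(\infty)=0$ and $\sigma_a(0)=-\log(1+a^2)$, one gets $F(\pi-2\arctan a)=(1+a^2)^{\Delta_1}F(\pi)$, which together with the formula already proved on $(0,\pi)$ gives $F(\pi)=C$ when $\Delta_1=\Delta_2$ and $F(\pi)=0$ otherwise.
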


\begin{proof}
Without loss of generality we suppose $u$, $v\ne \infty$. Given a M\"obius transformation~$\psi\in \mm{PSL}(2,\mb{C})$ that sends~$0\mapsto u$,~$\infty\mapsto v$, we have by (\ref{eqn-main-diffeo-inv}) and (\ref{eqn-main-conformal-cov}),
	\begin{align}
	  \bank{\phi_1(u)\phi_2(v)}_{\mb{S}^2,g_{\mm{FS}}}
	  &=\bank{\phi_1(0)\phi_2(\infty)}_{\mb{S}^2,\psi^* g_{\mm{FS}}}\nonumber\\
	  &=\me^{-\Delta_1 \sigma(0)-\Delta_2\sigma(\infty)}\bank{\phi_1(0)\phi_2(\infty)}_{\mb{S}^2,g_{\mm{FS}}},\label{eqn-two-point-examp-calc}
	\end{align}
	where~$\psi^*g_{\mm{FS}}=\me^{2\sigma}g_{\mm{FS}}$. The above must hold for \textit{all} such M\"obius maps. These are of the form
	\begin{equation}
	\psi(z)=\frac{z v - z_0 u}{z - z_0}.
	\label{}
      \end{equation} 
      where $z_0  = \psi^{-1}(\infty) \in \mathbb{C}\setminus \{0 \}$ parametrizes the preimage of infinity. 
      This gives
      \begin{equation}
	(\psi^* g_{\mm{FS}})(z)=\frac{4|\psi'(z)|^2}{(1+|\psi(z)|^2)^2}|\dd z|^2=\frac{(1+|z|^2)^2}{(1+|\psi(z)|^2)^2}\frac{|v-u|^2 |z_0|^2}{|z - z_0|^4} g_{\mm{FS}}(z).
	  \label{}
	\end{equation}
 Therefore, according to (\ref{eqn-two-point-examp-calc}),
\begin{equation}
  \bank{\phi_1(u)\phi_2(v)}_{\mb{S}^2,g_{\mm{FS}}}=C\Big(\frac{1+|u|^2}{|u -v|}\Big)^{\Delta_1} \Big(\frac{1+|v|^2}{|u -v|}\Big)^{\Delta_2} |z_0|^{\Delta_2- \Delta_1}
\end{equation}
where $C =: \ank{\phi_1(0)\phi_2(\infty)}_{\mb{S}^2,g_{\mm{FS}}}$.
Since the correlation function should not depend on $z_0$, one finds 
\begin{equation}
  \textrm{either } \Delta_1 = \Delta_2 \quad \textrm{or} \quad C=0.
  \label{}
\end{equation}
Summing up,
\begin{equation}
  \bank{\phi_1(u)\phi_2(v)}_{\mb{S}^2,g_{\mm{FS}}}=
  \left\{
  \begin{array}{ll}
    C \left(1+|u|^2\right)^{\Delta} \left(1+|v|^2\right)^{\Delta}  |u - v|^{-2 \Delta}
    ,&\textrm{when }\Delta_1=\Delta_2=\Delta,\\
    0 &\textrm{otherwise.}
  \end{array}
  \right.
  \label{}
\end{equation}
where the constant $C  \in \mathbb{C}$ is $C= \bank{\phi_1(0)\phi_2(\infty)}_{\mb{S}^2,g_{\mm{FS}}}$.
We note here that the quantity
	\begin{equation}
	  d_{\mm{ch}}(u,v)\defeq \frac{2|u-v|}{\sqrt{(1+|u|^2)(1+|v|^2)}}
	  \label{}
	\end{equation}
	is the so-called \textsf{spherical chordal distance} between~$u$ and~$v$ (see \cite{Ahlfors} page 20), and is related to the actual spherical distance~$d_{\mm{FS}}(u,v)$ by
	\begin{equation}
	  d_{\mm{ch}}(u,v)=2\sin\Big( \frac{d_{\mm{FS}}(u,v)}{2} \Big).
	  \label{}
	\end{equation}
 This gives us the result.
\end{proof}

      \subsection{Conical Metrics}
\label{sec-singular-conic}

In this section we describe precisely our geometric set-up by adopting some of the terminologies of Troyanov \cite{Troy}.
	\begin{deef}
	  Let~$\Sigma$ be a closed Riemann surface. A \textsf{generalized conformal metric} on~$\Sigma$ is a distributional Riemannian metric~$\tilde{g}$ on~$\Sigma$ such that for any local complex coordinate~$z_U:U\lto \mb{C}$ defined on~$U\subset \Sigma$ we have
	  \begin{equation}
	    \tilde{g}=\varrho_U(z_U)\,|\dd z_U|^2
	    \label{}
	  \end{equation}
	  for some positive measurable function~$\varrho_U$ on~$U$. We say~$\tilde{g}$ is a \textsf{smooth conformal metric} if the functions~$\rho_U$ are all smooth and $0<c_{U,z}\le \varrho_U\le C_{U,z}$, for some constants $c_{U,z}$, $C_{U,z}$ depending on $U$ and the coordinate.
	\end{deef}

 Smooth conformal metrics exist on any Riemann surface by simple constructions using partitions of unity.

	\begin{deef}
	  A \textsf{real divisor} on a closed Riemann surface~$\Sigma$ is a finite formal sum
	  \begin{equation}
	    D=\sum_{j=1}^p \gamma_j z_j,\quad\quad \gamma_j\in\mb{R},~z_j\in\Sigma,~z_i\ne z_j\textrm{ for }i\ne j.
	    \label{}
	  \end{equation}
	  The number~$|D|:=\sum_j \gamma_j$ is the \textsf{degree} of~$D$. The set $\supp D:=\{z_j~|~\gamma_j\ne 0\}$ is the \textsf{support} of $D$.
	\end{deef}

    	\begin{deef}\label{def-singularities}
	  Let~$\Sigma$ be a closed Riemann surface. We say that a generalized conformal metric~$\tilde{g}$ has an (admissible, isolated) \textsf{conical singularity} (or \textsf{cone point}) of \textsf{order}~$d>0$ or \textsf{exponent} $\gamma:=d-1>-1$ at~$z_0\in\Sigma$, if~$z_0$ has a neighborhood~$U\subset \Sigma$, such that
\begin{equation}
  \tilde{g}|_{U\setminus z_0} =d_g(\bullet,z_0)^{2\gamma}\cdot \me^{2\varphi_{U,g}}g,
  \label{eqn-conic-sing-local-form}
\end{equation}
for some smooth conformal metric~$g$ on~$\Sigma$ and some function~$\varphi_{U,g}$ on~$U$, called the \textsf{(regular) metric potential} (against $g$), which is continuous on $U$, smooth on $U\setminus z_0$, and $\Delta_g \varphi_{U,g}\in L^1(U,g)$. Moreover, we require that the Gauss curvature of~$\tilde{g}$ on~$U\setminus z_0$ is bounded.
	\end{deef}

    \begin{def7}
  If~$\gamma=0$ (resp.\ $d=1$) then the point is called \textsf{regular}. In a neighborhood~$U$ of a regular point, the regularity of the metric potential~$\varphi_{U,g}$ is closely related to that of the Gauss curvature (given by (\ref{eqn-liouville-eqn}) as an~$L^1(U,g)$ function) of~$\tilde{g}$ on~$U$. For example, if the Gauss curvature is smooth then one could deduce that~$\varphi_{U,g}$ is also smooth. For details see \cite{Troy2} proposition 1.2. Our results will be consistent with the presence of regular points upon setting the corresponding exponents to zero, as one could check. 
\end{def7}

	\begin{deef}
	  Let~$\Sigma$ be a closed Riemann surface,~$D=\sum_{j=1}^p \gamma_j z_j$ be a real divisor with $\gamma_j>-1$. We say that a generalized conformal metric~$\tilde{g}$ \textsf{represents} the divisor~$D$ if it has an isolated conical singularity of exponent $\gamma_j$ at $z_j$ respectively for each $j$, and smooth on $\Sigma\setminus \supp D$.
	\end{deef}

\begin{def7}
	  Our presentation differs from what usually happens in the literature where the background metric~$g$ is the local flat one coming from some compatible complex coordinate on the neighborhood~$U$. We are faced with the natural question of whether~$\tilde{g}$ would have the same form (\ref{eqn-conic-sing-local-form}) with the same regularity on the metric potential if we switched to another smooth conformal background metric~$g_1=\me^{2h}g$,~$h\in C^{\infty}(\Sigma)$. Indeed, in this case
	  \begin{equation}
	    d_g(\bullet,z_0)^{2\gamma}\cdot \me^{2(\varphi_{U,g}-h)}g_1=d_{g_1}(\bullet,z_0)^{2\gamma}\Big( \frac{d_{g}(\bullet,z_0)}{d_{g_1}(\bullet,z_0)} \Big)^{2\gamma}\cdot \me^{2(\varphi_{U,g}-h)}g_1,
	    \label{}
	  \end{equation}
	  and the ratio of distances~$d_{g}(\bullet,z_0)/d_{g_1}(\bullet,z_0)$ is continuous on~$U$ with a limit~$\me^{-h(z_0)}$ at~$z_0$, smooth on~$U\setminus z_0$ and~$\Delta_{g_1}\log d_g -\Delta_{g_1}\log d_{g_1}$ is in $L^1(U,g_1)$ as a distribution by lemma \ref{lemm-bound-lap-log-ratio}. In the traditional setting if both~$g$ and~$g_1$ are locally~$|\dd z|^2$ and~$|\dd w|^2$ for some complex coordinates~$z$ and~$w$, then the regular metric potentials differ by a harmonic function, as is well known.
In particular, if we do test against such a coordinate metric then our conditions on the regular metric potential satisfy what Troyanov calls that of an \textsf{admissible metric} in \cite{Troy2}. 
\end{def7}

The following lemma describes the source of the conical singularities that the main application of the main result of this article tries to target.

\begin{lemm}
     Let~$\Sigma'$,~$\Sigma$ be closed Riemann surfaces and~$f:\Sigma'\lto \Sigma$ a holomorphic map. Let~$z_0\in \Sigma'$ be a critical point of order~$k$ and~$w_0=f(z_0)\in \Sigma$ the corresponding critical value. Let~$g$ be a smooth conformal metric on~$\Sigma$. Then~$f^*g$ has a conical singularity at~$z_0\in \Sigma'$ with order~$k$.
  \end{lemm}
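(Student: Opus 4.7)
The plan is to work in suitable local holomorphic coordinates to reduce the computation to a concrete form, and then to verify the regularity conditions in Definition \ref{def-singularities} using the appendix's Poincaré-Lelong type lemma.

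First, I would choose normal form coordinates. Since $z_0$ is a critical point of $f$ of order $k$, there exist holomorphic charts $z : U \to \mathbb{C}$ centered at $z_0$ and $w : V \to \mathbb{C}$ centered at $w_0$ such that in these coordinates $f(z) = z^k$. In the chart $V$ the smooth conformal metric $g$ takes the form $g = \rho(w)\,|\dd w|^2$ for some smooth positive function $\rho$. A direct computation then yields
\begin{equation}
f^*g = \rho(z^k)\cdot k^2 |z|^{2(k-1)}\,|\dd z|^2 \qquad \text{on } U\setminus\{z_0\}.
\end{equation}
This already shows $f^*g$ is a generalized conformal metric, smooth on $U\setminus\{z_0\}$, with the correct power-type blow-up at $z_0$. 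Next, I fix any smooth conformal background metric $g'$ on $\Sigma'$; on $U$ we may write $g' = \sigma(z)\,|\dd z|^2$ with $\sigma$ smooth and positive.

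The next step is to exhibit the regular metric potential $\varphi_{U,g'}$ such that $f^*g|_{U\setminus z_0} = d_{g'}(\bullet,z_0)^{2(k-1)}\me^{2\varphi_{U,g'}} g'$. Equating the two expressions and solving gives
\begin{equation}
\varphi_{U,g'}(z) = \tfrac{1}{2}\log\!\left(\frac{k^2\rho(z^k)}{\sigma(z)}\right) + (k-1)\log\!\left(\frac{|z|}{d_{g'}(z,z_0)}\right).
\end{equation}
The first summand is smooth on $U$ because $\rho$ and $\sigma$ are smooth and strictly positive. For the second summand, the ratio $|z|/d_{g'}(z,z_0)$ extends continuously to $z_0$ with limit $\sigma(z_0)^{-1/2}>0$ (since $g'$ is locally $\sigma(z_0)|\dd z|^2$ up to higher order), so $\varphi_{U,g'}$ is continuous on $U$ and smooth on $U\setminus\{z_0\}$.

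It remains to check the $L^1$ Laplacian condition and the curvature bound. For the Laplacian, write $\Delta_{g'}\varphi_{U,g'} = \Delta_{g'}(\text{smooth part}) + (k-1)\bigl(\Delta_{g'}\log|z| - \Delta_{g'}\log d_{g'}(\bullet,z_0)\bigr)$. Both $\log|z|$ and $\log d_{g'}(\bullet,z_0)$ carry the same $-2\pi\delta_{z_0}$ distributional singularity at $z_0$ (up to the standard sign convention), so the singular parts cancel. What remains is an $L^1(U,g')$ function by the appendix's Poincaré-Lelong lemma (appealed to via lemma \ref{lemm-bound-lap-log-ratio} in the remark preceding the statement). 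Finally, since $f$ is holomorphic, the Gauss curvature satisfies $K_{f^*g}(z) = K_g(f(z))$ on $U\setminus\{z_0\}$, which is bounded because $K_g$ is smooth on the compact surface $\Sigma$.

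The only delicate point is the $L^1$ integrability of $\Delta_{g'}\log(|z|/d_{g'}(\bullet,z_0))$; this is precisely the content of the Poincaré-Lelong type statement recalled in the appendix, so the argument goes through cleanly. Altogether, $f^*g$ satisfies all the requirements of Definition \ref{def-singularities} with exponent $\gamma = k-1$, i.e.\ order $k$, as claimed.
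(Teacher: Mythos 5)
Your proof is correct, and the core computation (normal form $f(z)=z^k$, pull-back $f^*g=\rho(z^k)k^2|z|^{2(k-1)}|\dd z|^2$) is the same as the paper's. Where you diverge is in the choice of reference metric on $\Sigma'$: the paper exploits the freedom in Definition \ref{def-singularities}, which only asks for \emph{some} smooth conformal metric $g'$, and manufactures one (via bump functions, shrinking $U$ if needed) whose restriction to $U$ is exactly $|\dd z|^2$. With that choice $d_{g'}(\bullet,z_0)=|z|$ on $U$, so the regular metric potential is simply $\varphi=f^*h+\log k$, which is visibly smooth, and no regularity check is needed at all. You instead work against an arbitrary smooth conformal $g'=\sigma(z)|\dd z|^2$, which forces you to handle the extra term $(k-1)\log\bigl(|z|/d_{g'}(\bullet,z_0)\bigr)$ and to invoke Lemma \ref{lemm-bound-lap-log-ratio} (cancellation of the $\delta_{z_0}$ singularities) for the $L^1$ condition on $\Delta_{g'}\varphi$, plus the curvature identity $K_{f^*g}=K_g\circ f$ off the critical point. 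All of these steps are valid, and in effect you are re-deriving the paper's remark that the notion of conical singularity is independent of the conformal reference metric; the trade-off is that your argument is longer and leans on the appendix, while the paper's choice of adapted reference metric reduces the proof to a one-line identification of a smooth potential.
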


 \begin{proof}
    There exists holomorphic charts $(U,z)$ and $(V,w)$ around $z_0$ and $w_0 = f(z_0)$ in which $f$ is represented  by $z\mapsto z^k$. The metric $g$ is locally of the form $g\big|_{V} = e^{2 h} |\dd w|^2$ for some $h \in C^{\infty}(V)$, and 
     \begin{equation}
f^*g\big|_{U} =  e^{2 f^*h} \,  |\dd f|^2 = e^{2 f^*h} \, k^2  |z|^{2(k-1)} |\dd z|^2  = d_{g'}(\bullet,z_0)^{2(k-1)} \, k^2e^{2 f^* h}  \, g'\big|_{U}
 \label{eqn-proof-pull-back-met-con}
    \end{equation}  
where $g'$ is any smooth conformal metric on $\Sigma'$ whose restriction to $U$ is $g'\big|_{U} = |\dd z|^2$ (such a metric can be manufactured using smooth bump functions, and taking $U$ smaller if necessary). Now (\ref{eqn-proof-pull-back-met-con})  is of the form (\ref{eqn-conic-sing-local-form}) with metric potential~$\varphi= f^* h + \log k$. Note that $\varphi$ is actually smooth across~$U$ in this situation.
  \end{proof}

      \subsection{Renormalized Anomaly}\label{sec-def-renorm-anom-part}

\begin{deef}\label{def-main-renorm-anomaly}
    Let~$\Sigma$ be a closed Riemann surface and~$\tilde{g}$ a generalized conformal metric representing~$D=\sum_{j=1}^p \gamma_j z_j$ with~$\gamma_j>-1$. Suppose~$g$ is a smooth conformal metric on~$\Sigma$ and~$\tilde{g}=\me^{2\sigma}g$ for some~$\sigma\in C^{\infty}(\Sigma\setminus \supp D)$. We define the \textsf{renormalized Polyakov anomaly} against metric~$g$ to be
  \begin{equation}
   \mathcal{R}A_{\Sigma}(\tilde{g},g)\defeq \frac{1}{24\pi}\lim_{\varepsilon\to 0^+}\Big[ 
      \int_{\Sigma\setminus \bigcup_{i=1}^p B_{\varepsilon}(z_i,\tilde{g})}(|\nabla_g \sigma|_g^2+2K_g\sigma)\dd V_g +2\pi\sum_{i=1}^p \frac{\gamma_i^2}{1+\gamma_i}\log(\varepsilon)
    \Big],
    \label{eqn-def-renorm-polya}
  \end{equation}
where~$B_{\varepsilon}(z_i,\tilde{g})$ is the metric disk of radius~$\varepsilon$ centered at~$z_i$ under the metric~$\tilde{g}$.
\end{deef} 

\begin{def7} \label{rem-well-def-renorm-anom} To show that $\mathcal{R}A_{\Sigma}(\tilde{g},g)<\infty$ we first show that it is true for~$g$ locally equal to~$|\dd w_j|^2$ near each~$z_j$ with the coordinate~$w_j$ coming from lemma \ref{lemm-reg-metric-pot} (Troyanov form). In this case the result follows from corollary \ref{cor-quad-blow-tilde}. Then we use lemma \ref{lemm-consist} with~$g_0$ being the special Troyanov form and~$g_1$ generic, to extend to the case of a generic smooth~$g$.
\end{def7}

\begin{def7}
    This method of renormalization dates back to Cauchy and Hadamard as they defined ``principal values'' of divergent integrals.
  \end{def7}

	\begin{deef}\label{def-renom-part-func}
	  Consider a conformal field theory with central charge~$c\in\mb{R}$ on the Riemann surface~$\Sigma$ and let~$\tilde{g}$ be a generalized conformal metric representing~$D=\sum_{j=1}^p \gamma_j z_j$ with~$\gamma_j>-1$. We define the \textsf{renormalized partition function} of $(\Sigma,\tilde{g})$ to be
	  \begin{equation}
	    \mathcal{Z}(\Sigma,\tilde{g})\defeq \exp\big(c\mathcal{R}A_{\Sigma}(\tilde{g},g)\big)\mathcal{Z}(\Sigma,g),
	    \label{eqn-def-renom-part-func}
	  \end{equation}
	  where~$g$ is any smooth conformal metric on~$\Sigma$ such that~$\tilde{g}=\me^{2\sigma}g$ for some~$\sigma\in C^{\infty}(\Sigma\setminus \supp D)$. 
	\end{deef}

\begin{def7}\label{rem-const-def-part-func} Lemma \ref{lemm-consist} ensures that $\mathcal{Z}(\Sigma,\tilde{g})$ is independent from the choice of reference metric $g$.
\end{def7}

      \section{Physical Interpretation}\label{sec-ent-ent-main}

In this section we explain in detail the so-called \textsf{entanglement entropies} of quantum systems and a technique for computing them using \textsf{path integrals}, called the \textsf{replica trick} in the physics community \cite{Cardy_Calabrese, Hea, RT, Witten}. This leads us to consider precisely the ratios of the form shown in (\ref{eqn-intro-conf-weights}) and to expect that they behave like correlation functions. We remark that what we present here is more of a conjectural, heursitic framework than rigorous proofs.

\subsection{Partial Trace and Entanglement}
In this subsection we discuss some generalities on describing entanglement between quantum systems. We start with a generic quantum system associated to a Hilbert space~$\mathcal{H}$. Recall that a \textsf{(quantum) statistical ensemble} (or \textsf{statistical state}) on~$\mathcal{H}$ is represented by a nonnegative trace class operator~$\rho$ on~$\mathcal{H}$ with~$\ttr_{\mathcal{H}}(\rho)=1$. This is usually called a \textsf{density operator}. Denote by~$\mathcal{J}_1(\mathcal{H})$,~$\mathcal{J}_{\infty}(\mathcal{H})$ and~$\mathcal{L}(\mathcal{H})$ the trace class, compact and bounded operators on~$\mathcal{H}$.

Now suppose the quantum system can be decomposed into two subsystems~$A$ and~$B$. In other words, assume that the Hilbert space~$\mathcal{H}$ is a tensor product:
      \begin{equation}
	\mathcal{H}=\mathcal{H}_A\otimes \mathcal{H}_B.
	\label{eqn-hilb-tens-decomp-gen}
      \end{equation}
      In order to describe the relation of the subsystems to the whole system, a useful operation is called the \textsf{partial trace}. This corresponds to taking trace ``over one component'' and one is left with an operator acting on the ``remainder component''. It could be defined rigorously as follows.
For each~$\rho\in \mathcal{J}_1(\mathcal{H})$, consider the linear functional
    \begin{equation}
      C\longmapsto \ttr_{\mathcal{H}}(\rho(C\otimes \one_B))
      \label{}
    \end{equation}
    for~$C\in \mathcal{J}_{\infty}(\mathcal{H}_A)$. We have
    \begin{equation}
      \big|\ttr_{\mathcal{H}}\big(\rho(C\otimes \one_B)\big)\big|\le \bnrm{\rho}_{\mathcal{J}_1(\mathcal{H})}\bnrm{C\otimes \one_B}_{\mathcal{L}(\mathcal{H})}= \bnrm{\rho}_{\mathcal{J}_1(\mathcal{H})}\bnrm{C}_{\mathcal{L}(\mathcal{H}_A)}.
      \label{eqn-gen-part-trace-bound}
    \end{equation}
    Thus (\ref{eqn-gen-part-trace-bound}) defines a bounded linear functional on~$\mathcal{J}_{\infty}(\mathcal{H}_A)$ with norm~$\nrm{\rho}_{\mathcal{J}_1(\mathcal{H})}$. Since~$\mathcal{J}_1(\mathcal{H}_A)$ is the Schatten dual of the ideal~$\mathcal{J}_{\infty}(\mathcal{H}_A)$ we obtain a unique representative~$\ttr_B(\rho)\in \mathcal{J}_1(\mathcal{H}_A)$.

    \begin{deef}
	  Let the Hilbert space~$\mathcal{H}$ be a tensor product as in (\ref{eqn-hilb-tens-decomp-gen}). Then the \textsf{partial trace over~$\mathcal{H}_B$} is the linear map
	  \begin{equation}
	    \left.
	    \begin{array}{rcl}
	       \ttr_B:\mathcal{J}_1(\mathcal{H}) &\lto& \mathcal{J}_1(\mathcal{H}_A),\\
	       \rho &\longmapsto &\ttr_B(\rho),
	    \end{array}
	    \right.
	    \label{}
	  \end{equation}
	  such that~$\ttr_B(\rho)$ is the unique operator on~$\mathcal{H}_A$ satisfying
	  \begin{equation}
	    \ttr_{\mathcal{H}_A}\big(\ttr_B(\rho)C\big) =\ttr_{\mathcal{H}}\big(\rho(C\otimes \one_B)\big)
	    \label{eqn-part-trace-dual-def}
	  \end{equation}
	  for all bounded operators~$C\in \mathcal{L}(\mathcal{H}_A)$.
	\end{deef}

    	\begin{def7}\label{rem-property-partial-trace}
	  For more details see Simon \cite{Sim4} pages 269-270, and also \cite{Sim1} theorem 3.2. In particular, from (\ref{eqn-gen-part-trace-bound}) one can see that the partial trace is continuous with respect to the~$\mathcal{J}_1(\mathcal{H})$- and~$\mathcal{J}_1(\mathcal{H}_A)$-norms (trace norms), and by taking~$C=|\psi\rangle\langle \psi|$,~$\psi\in \mathcal{H}_A$, in (\ref{eqn-part-trace-dual-def}) one could also see that~$\ttr_B(\rho)$ is nonnegative if~$\rho$ is.
	\end{def7}
      
      \begin{def7}
  Restricted to rank-one operators, we have
  \begin{equation}
    \ttr_B\big( |\varphi_A\otimes \varphi_B\rangle\langle \psi_A\otimes \psi_B| \big)=\langle \psi_B|\varphi_B\rangle |\varphi_A\rangle\langle \psi_A|,
    \label{}
  \end{equation}
  with~$|\varphi_A\rangle$,~$|\psi_A\rangle\in \mathcal{H}_A$,~$|\varphi_B\rangle$,~$|\psi_B\rangle\in \mathcal{H}_B$. Here the notation~$|w\rangle\langle v|$ for~$w$,~$v\in \mathcal{H}$ denotes the operator mapping any~$u\in \mathcal{H}$ to~$\ank{v,u}_{\mathcal{H}}w$. See also example \ref{exp-mat-part-trace}. 
\end{def7}

\begin{deef}
	Suppose~$\rho$ is a density operator over~$\mathcal{H}=\mathcal{H}_A\otimes \mathcal{H}_B$. Then the \textsf{reduced density operator over~$A$} is defined by
	\begin{equation}
	  \rho_A\defeq \ttr_B(\rho).
	  \label{}
	\end{equation}
      \end{deef}
It follows from remark \ref{rem-property-partial-trace} that $\rho_A$ is a density operator over~$\mathcal{H}_A$.
      \begin{deef}
Suppose~$\rho$ is a density operator over~$\mathcal{H}=\mathcal{H}_A\otimes \mathcal{H}_B$. We define
\begin{align}
  \textrm{\textsf{the entanglement entropy (over }}A\textrm{\textsf{)}} && \mathcal{S}_A&\defeq -\ttr_{\mathcal{H}_A}(\rho_A \log \rho_A), \\
  \textrm{\textsf{the }}n\textrm{\textsf{-th R\'enyi entropy (over }}A\textrm{\textsf{)}} && \mathcal{S}^{(n)}_A &\defeq \frac{1}{1-n}\log\ttr_{\mathcal{H}_A}(\rho_A^n),\quad \textrm{for }n>1.
  \label{eqn-def-renyi-ent}
\end{align}
      \end{deef}

\begin{def7}
	In general the quantities~$\mathcal{S}_A$ and~$\mathcal{S}_A^{(n)}$ defined above do not equal~$\mathcal{S}_B$ and~$\mathcal{S}_B^{(n)}$, the corresponding quantities defined the other way round by first taking partial trace over~$A$. However,~$\mathcal{S}_A=\mathcal{S}_B$ and~$\mathcal{S}_A^{(n)}=\mathcal{S}_B^{(n)}$ do hold if~$\rho$ is a \textsf{pure state}, namely a rank-one projector written as~$\rho=|\psi\rangle\langle \psi|$ for some~$|\psi\rangle\in \mathcal{H}$. Therefore, for pure states it makes sense to define \textit{the} entanglement and R\'enyi entropies~$\mathcal{S}:=\mathcal{S}_A=\mathcal{S}_B$ and~$\mathcal{S}^{(n)}:=\mathcal{S}_A^{(n)}=\mathcal{S}_B^{(n)}$. In the general case, one could then consider another quantity called the \textsf{mutual information}, defined as
	\begin{equation}
	  \mathcal{I}_{A,B}^{(n)}\defeq \mathcal{S}_A^{(n)}+\mathcal{S}_B^{(n)}-\mathcal{S}_{\mathcal{H}}^{(n)},
	  \label{}
	\end{equation}
	where~$\mathcal{S}_{\mathcal{H}}^{(n)}$ is defined the same way as (\ref{eqn-def-renyi-ent}) by just removing the subscript~$A$. 
      \end{def7}

\begin{def7}
	  If~$\rho$ is trace class, then~$\rho_A$ is trace class and the complex power~$\rho_A^z$ is well-defined for~$\fk{Re}(z)>1$. If moreover~$\mathcal{S}_A<\infty$, one then has the relation
	  \begin{equation}
	    \mathcal{S}_A=\lim_{z\to 1^+}\mathcal{S}_A^{(z)}=-\lim_{z\to 1^+}\frac{\partial}{\partial z}\ttr_{\mathcal{H}_A}(\rho_A^z).
	    \label{}
	  \end{equation}
	  However, for continuum QFTs the tensor product decomposition~$\mathcal{H}=\mathcal{H}_A\otimes \mathcal{H}_B$ is generally not well-defined, and the only approach available which produces reasonable answers seems to be employing the so-called \textsf{replica} interpretation to be explained in the next subsection, or the equivalent approach by \textsf{twist fields} (both avoid defining the tensor product). These concerns the quantities~$\ttr_{\mathcal{H}_A}(\rho_A^n)$ for positive integral~$n$ and is also what we will focus on for this paper. 
	\end{def7}

    \begin{exxx}\label{exp-mat-part-trace}
  Let~$\mathcal{H}=\mb{C}^2\otimes \mb{C}^2 =:\mathcal{H}_A\otimes \mathcal{H}_B$, in this order. Consider the standard basis~$e_1\otimes e_1$,~$e_1\otimes e_2$,~$e_2\otimes e_1$ and~$e_2\otimes e_2$ (in this order) of~$\mathcal{H}$ where~$e_1$,~$e_2$ are standard base vectors of~$\mb{C}^2$. Then for~$\rho\in \End(\mb{C}^2\otimes \mb{C}^2)$ written
  \begin{equation}
    \rho=\left(
    \begin{array}{cccc}
      a_{11} &a_{12} &a_{13} &a_{14} \\
      a_{21} &a_{22} &a_{23} &a_{24} \\ 
      a_{31} &a_{32} &a_{33} &a_{34} \\
      a_{41} &a_{42} &a_{43} &a_{44} 
    \end{array}
    \right),
    \label{}
  \end{equation}
  we have
  \begin{equation}
    \ttr_B( \rho)=\left(
    \begin{array}{cc}
      a_{11}+a_{22} & a_{13}+a_{24} \\
      a_{31}+a_{42} & a_{33}+a_{44}
    \end{array}
    \right),\quad\quad 
    \ttr_A( \rho)=\left(
    \begin{array}{cc}
      a_{11}+a_{33} & a_{12}+a_{34} \\
      a_{21}+a_{43} & a_{22}+a_{44}
    \end{array}
    \right).
    \label{}
  \end{equation}
  Also, if~$\rho=\rho_A\otimes \rho_B$ is a Kronecker product, we have~$\ttr_B(\rho)=\ttr(\rho_B)\rho_A$,~$\ttr_A(\rho)=\ttr(\rho_A)\rho_B$.
\end{exxx}

\subsection{Path Integrals and Replica}\label{sec-main-replica}
      Now let us describe the geometric framework where the quantities~$\mathcal{S}_A$ and~$\mathcal{S}_A^{(n)}$ get represented by \textit{path integrals}, as presented in \cite{Cardy_Calabrese}. Here we shall describe the pictures heuristically and \textbf{not attempt at any rigor}. For this paper we focus on~$1+1$ dimensional field theories where the space is either the real line or a circle with a specific perimeter (as a Riemannian manifold), denoted generally by~$X$, and space-time a 2-dimensional Riemannian surface with or without boundary denoted~$\Sigma$ (for example,~$\Sigma=X\times [0,T]$ or~$X\times \mb{R}$). Each field theory comes with specified \textsf{field configuration spaces} over space and space-time, denoted~$\conf(X)$ and~$\conf(\Sigma)$, as well as over their sub-regions. Typically~$\conf(X)=\mm{Map}(X,\mss{V})$, the space of \textit{maps} from~$X$ to a \textsf{spin value space}/\textsf{target space}~$\mss{V}$. Such configuration spaces should allow
    \begin{enumerate}[(a)]
      \item restriction of a configuration onto a subregion~$A\subset X$ (or~$\Sigma$), heuristically a map
	\begin{equation}
	  \left.
	  \begin{array}{rcl}
	     \conf(X) &\lto &\conf(A),\\
	     \phi &\longmapsto &\phi|_A.
	  \end{array}
	  \right.
	  \label{}
	\end{equation}
	In other words one is allowed to \textit{localize} the field;
      \item each configuration~$\phi\in \conf(X)$ to be recovered from the pair~$(\phi|_A,\phi|_{A^c})$ of its restrictions onto~$A$ and~$A^c=X\setminus A$, and moreover any pair of configurations on complementary subregions should combine into a global configuration; this is to say heuristically,
	\begin{equation}
	  \conf(X)\cong \conf(A)\times \conf(A^c).
	  \label{eqn-decomp-local-complemen}
	\end{equation}
    \end{enumerate}
    
    \begin{def7}
	  The above statements are rigorous in the case~$X=\Lambda$ is a discrete lattice, and~$\conf(\Lambda)=\mm{Map}(\Lambda,\mss{V})$. Here ``subregions'' correspond to subsets of lattice sites.
	\end{def7}

     \begin{def7}
    Suppose~$\Sigma=X\times [0,T]$. Then~$\conf(\Sigma)$ could be considered as the space of ``paths'' through which a configuration over~$X$ evolves across the time interval~$[0,T]$. Indeed, if we have taken~$\conf(\Sigma)$ to be~$\mm{Map}(\Sigma,\mss{V})$, then we would have simply~$\mm{Map}(X\times[0,T],\mss{V})\cong \mm{Map}([0,T],\mm{Map}(X,\mss{V}))$. The integral of (\ref{eqn-heu-gen-amp}) below is thus an integral over a space of ``paths''.
  \end{def7}

    We now describe heuristically the Hilbert space and the time evolution. For the Hilbert space one takes
	\begin{equation}
	  \mathcal{H}_X\defeq L^2(\conf(X),\mathcal{L}),
	  \label{}
	\end{equation}
	where~$\mathcal{L}$ denotes the non-existent \textit{Lebesgue measure} on the configuration space~$\conf(X)$. Under the path-integral formalism (Euclidean\footnote{See the first paragraph of subsection \ref{sec-dens-op-imag-time}.}) time evolution across~$\Sigma=X\times [0,T]$, namely over time~$T$, is represented by the integral operator
	\begin{equation}
	  \left.
	  \def\arraystretch{1.3}
	  \begin{array}{rcl}
	     U_T: \mathcal{H}_X &\lto &\mathcal{H}_X,\\
	     F&\longmapsto & (U_T F)(\psi)\defeq \ddp\int_{\conf(X)}^{}\mathcal{A}_T(\psi,\varphi)F(\varphi)\,\dd \mathcal{L}(\varphi),
	  \end{array}
	  \right.
	  \label{}
	\end{equation}
	with the integral kernel
	\begin{equation}
	  \mathcal{A}_T(\psi,\varphi)\defeq \int_{
	     \left\{  \phi\in \conf(\Sigma)~\middle|~ \substack{
	     \phi|_{X\times\{0\}}=\varphi,\\ 
	     \phi|_{X\times \{T\}}=\psi
	   }
      \right\}} \me^{-S_{\mm{EQFT}}(\phi)}\,\dd \mathcal{L}(\phi),
	  \label{eqn-heu-gen-amp}
	\end{equation}
	where now we integrate against the still non-existent Lebesgue measure on~$\conf(\Sigma)$ (with the indicated boundary conditions), and where~$S_{\mm{EQFT}}$ is the \textsf{action functional} (``E'' stands for ``Euclidean'') of the specific theory at hand.

    The above recipe for time evolution does not involve the fact that~$\Sigma$ is~$X\times [0,T]$, and indeed it makes sense for space-times~$\Sigma$ having any kind of geometry as long as~$\partial \Sigma=\partial_{\mm{ini}}\Sigma\sqcup \partial_{\mm{ter}}\Sigma$ and~$\partial_{\mm{ini}}\Sigma\cong \partial_{\mm{ter}}\Sigma\cong X$, namely its boundary has two components (called \textsf{initial} and \textsf{terminal}) both isometric to~$X$. Generalizing still further, Segal \cite{Segal} proposed a set of axioms that defines a QFT abstractly as a rule that associates Hilbert spaces to space manifolds and evolution operators to space-time manifolds that ``connects'' them (cobordisms). 
    
    Now denote by~$U_{\Sigma}$ and~$\mathcal{A}_{\Sigma}$ the evolution operator and its integral kernel corresponding to the space-time piece~$\Sigma$. Two important axioms of Segal (and Atiyah) are written as
	\begin{description}
	  \item[(composition)] if one has two space-time pieces~$\Sigma_1$,~$\Sigma_2$ and one glues them together by identifying~$\partial_{\mm{ter}}\Sigma_1$ with~$\partial_{\mm{ini}}\Sigma_2$ via~$X$, obtaining the piece~$\Sigma_2\circ \Sigma_1$, then we have
	    \begin{equation}
	      U_{\Sigma_2\circ \Sigma_1}=U_{\Sigma_2}\circ U_{\Sigma_1};
	      \label{}
	    \end{equation}
	  \item[(trace)] if one glues the space-time~$\Sigma$ with itself by identifying~$\partial_{\mm{ini}}\Sigma$ and~$\partial_{\mm{ter}}\Sigma$ via~$X$, obtaining~$\check{\Sigma}$, then
	    \begin{equation}
	      \ttr_{\mathcal{H}}(U_{\Sigma})=\mathcal{Z}(\check{\Sigma}),
	      \label{}
	    \end{equation}
	    where the number
	    \begin{equation}
	      \mathcal{Z}(\check{\Sigma})\defeq \int_{\conf(\check{\Sigma})}^{}\me^{-S_{\mm{EQFT}}(\phi)}\,\dd \mathcal{L}(\phi)
	      \label{eqn-heu-gen-part-func}
	    \end{equation}
	    is the \textsf{partition function} (same as what appears in definition \ref{def-cft}).
	\end{description}

\begin{def7}
	  Comparing (\ref{eqn-heu-gen-part-func}) with (\ref{eqn-heu-gen-amp}), one sees that the trace axiom corresponds formally to the fact that the ``trace'' of an integral operator is the integral along the diagonal of its kernel. For rigorous results concerning this statement see Simon \cite{Sim3} section 3.11.
	\end{def7}

    Now we try to incorporate considerations of subregions~$A\subset X$ into the framework described above. For simplicity we will take~$A$ to be a finite interval (remember~$X$ is either a circle or the real line). First of all, following (\ref{eqn-decomp-local-complemen}), we have formally
	\begin{equation}
	  \mathcal{H}_X \overset{\mm{heu}}{\cong} L^2\big(\conf(A)\times \conf(A^c),\mathcal{L}_{\conf(A)}\otimes \mathcal{L}_{\conf(A^c)}\big)\overset{\mm{heu}}{\cong} \mathcal{H}_{A}\otimes \mathcal{H}_{A^c}.
	  \label{}
	\end{equation}
	To include the partial trace, the trace axiom is now slightly extended by adding that taking the partial trace (over~$A^c$) corresponds to gluing ``partially'' (along~$A^c$ but not~$A$), leaving out a ``slit'' with two sides denoted~$A_-$ and~$A_+$. In terms of the integral kernels, if we denote by~$\mathcal{A}_{\Sigma}^A$ the kernel of~$\ttr_{A^c}(U_{\Sigma})$ acting on~$\mathcal{H}_{A}$, then we have
	\begin{equation}
	  \mathcal{A}_{\Sigma}^A(\psi_{A},\varphi_A)\heueq \int_{\conf(A^c)}^{} \mathcal{A}_{\Sigma}(\psi_A,\sigma_{A^c},\varphi_A,\sigma_{A^c})\,\dd \mathcal{L}(\sigma_{A^c}).
	  \label{}
	\end{equation}
    Here~$\ttr_{A^c}(U_{\Sigma})$ as an operator acting on~$\mathcal{H}_A$, the Hilbert space associated to an interval, is represented by a surface (space-time) with a ``slit'' (or \textsf{branch cut}/\textsf{defect line}) where the two ``sides'' of the slit are identified with two copies of the interval. Accordingly, we must also extend the composition axiom to incorporate this situation, namely to allow gluing of surfaces with slits along sides which represents composition of operators acting on Hilbert spaces over intervals. We shall assume that this has been done in the obvious manner.

    Finally we arrive at an interpretation of a quantity of the form~$\ttr_{\mathcal{H}_A}(\ttr_{A^c}(U_{\Sigma})^n)$ which appears in the expression for the R\'enyi entropy (\ref{eqn-def-renyi-ent}). Indeed, one starts with the surface~$\Sigma$ and glue its two ends ``partially'' along the parts corresponding to~$A^c$ in~$X$. We denote the resulting ``surface with slit'' by~$\check{\Sigma}\setminus A$. 
	Then the two sides of the slit gets identified with two copies of~$A$, which we denote by~$A_{\pm}$, that corresponds to approaching~$A$ from the two sides within~$\check{\Sigma}\setminus A$. Next, we take~$n$ copies of~$\check{\Sigma}\setminus A$ and glue them in a \textit{cyclic manner}, that is, we glue~$A_+^{(j)}$ on the~$j$-th copy to~$A_-^{(j+1)}$ on the~$(j+1)$-th copy,~$1\le j<n$, and finally~$A_+^{(n)}$ to~$A_-^{(1)}$. We denote the surface thus obtained by~$\check{\Sigma}_n$, called the $n$-th \textsf{replica}. 
    Importantly, $\check{\Sigma}_n$ comes equipped with a \textbf{metric} which is induced from the metric on the original space-time $\Sigma$. Equivalently, there is an obvious map~$f_n:\check{\Sigma}_n\lto \check{\Sigma}$, sending the end points of~$A$ to themselves, and any other point to its counterpart on the original copy. 
    Then~$f_n$ is a branched~$n$-sheeted cyclic covering with critical points being the two end points of the copies of~$A$, and the metric on~$\check{\Sigma}_n$ is the pull-back under~$f_n$ of the original metric on~$\check{\Sigma}$ (on~$\Sigma$). This will be a metric with \textit{conical singularities} at the two end points of $A$ which are not duplicated. If we assume the \textit{extended} versions of the composition and trace axioms discussed above, then we find
    \begin{equation}
	  \ttr_{\mathcal{H}_A}\big(\ttr_{A^c}(U_{\Sigma})^n\big) =\mathcal{Z}\big(\check{\Sigma}_n\big).
	  \label{}
	\end{equation}
The geometric correspondence is summarized in the table below.
    
      \begin{center}
  \def\arraystretch{1.3}
  \begin{tabular}[]{|c|c|}
    \hline
   Operation & Picture \\ \hline
    evolution operator $U_{\Sigma}$ & 
	 \raisebox{-0.5\height}{\includegraphics[width=5cm]{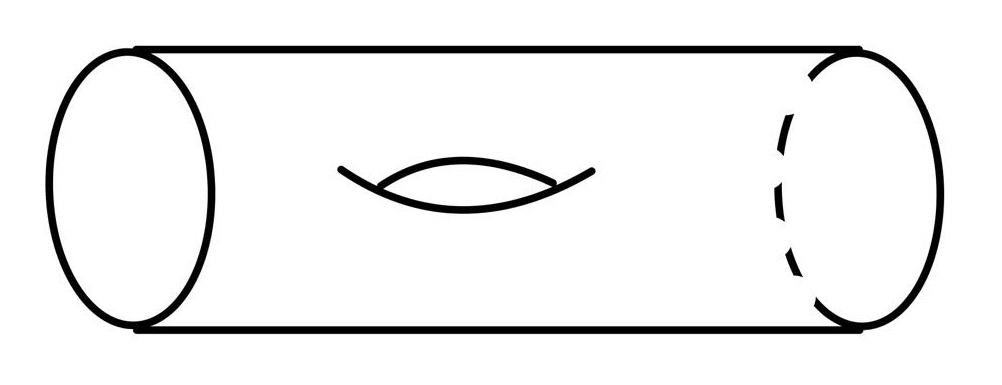}} 
    \\ \hline
	  $\ttr_{\mathcal{H}}(U_{\Sigma})$ & 
      \raisebox{-0.5\height}{\includegraphics[width=5cm]{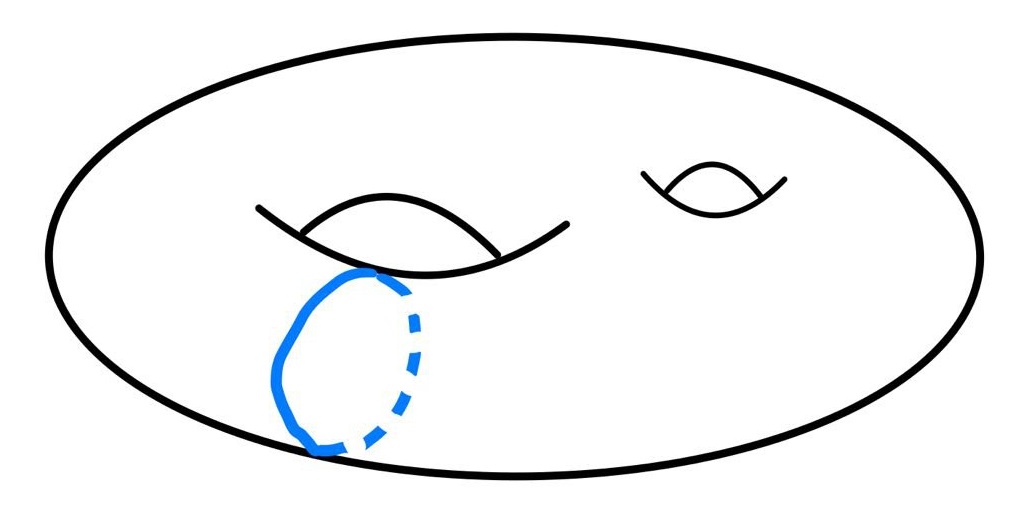}}
      \\ \hline
	   $\ttr_{A^c}(U_{\Sigma})$ & 
       \raisebox{-0.5\height}{\includegraphics[width=5cm]{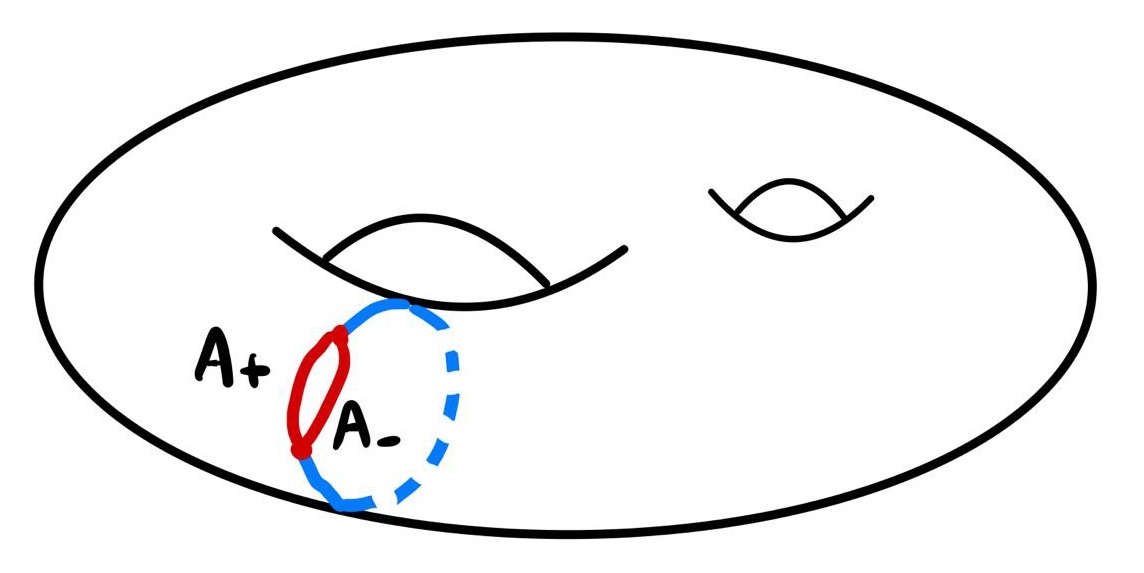}}
       \\ \hline
       $\ttr_A\big(\ttr_{A^c}(U_{\Sigma})^n\big)$ & 
       \raisebox{-0.5\height}{\includegraphics[width=5cm]{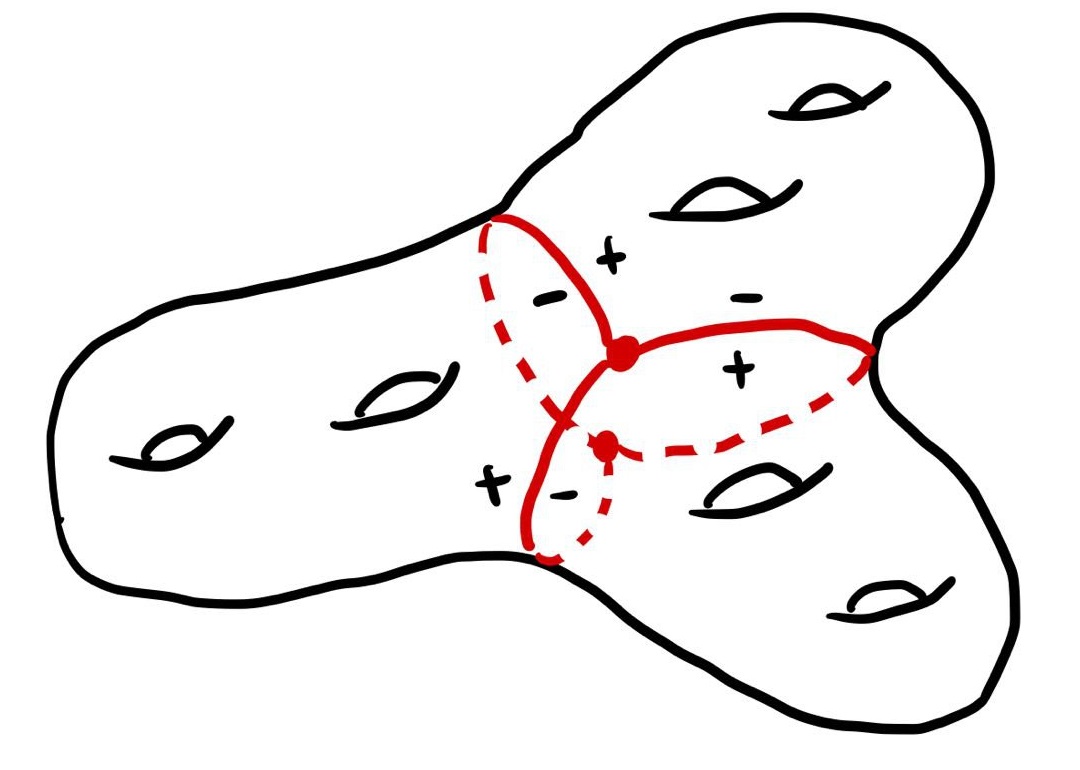}}
       \\ \hline
	\end{tabular}\end{center}

\subsection{Density Operators Represented by Imaginary-time Evolution}\label{sec-dens-op-imag-time}

We are left with the question: which density operators are also evolution operators across some space-time? To begin with, we reemphasize: our space-time metrics all have Euclidean signature and we work with \textsf{Wick rotated}/\textsf{Euclidean} QFT. More precisely, the action functional~$S_{\mm{EQFT}}(\phi)$, which usually involves itself the underlying space-time metric, is real-valued and the exponential weight~$\exp(-S_{\mm{EQFT}}(\phi))$ positive in (\ref{eqn-heu-gen-amp}) and (\ref{eqn-heu-gen-part-func}). The evolution operators defined in this way is said to be in \textsf{imaginary time}. For actual Lorentzian QFT one must use~$\exp(\ii S_{\mm{QFT}})$ instead of~$\exp(-S_{\mm{EQFT}})$, where the action~$S_{\mm{QFT}}$ is also written in terms of the Lorentzian metric.

Next, which imaginary-time evolution operators are trace class and nonnegative, serving as candidates for a density operator? In this regard it would be instructive to introduce the third axiom of Segal's:
	\begin{description}
	  \item[(adjoint)] let~$\Sigma^*$ denote the same space-time as~$\Sigma$ but with the identification of the initial and terminal boundaries \textit{reversed}, then
	    \begin{equation}
	      U_{\Sigma^*}=U_{\Sigma}^{\dagger},
	      \label{}
	    \end{equation}
	    where~$\dagger$ denotes the operator adjoint\footnote{More precisely, this is a requirement for \textit{unitary} QFTs, and we restrict to this case throughout our discussion.}.
	\end{description}
	From this, together with the two previous axioms, one could see that Euclidean evolution across a cobordism naturally corresponds to \textit{Hilbert-Schmidt} operators. Indeed,
	\begin{equation}
	  \ttr_{\mathcal{H}}\big(U_{\Sigma}^{\dagger}U_{\Sigma}\big)=\ttr_{\mathcal{H}}\big(U_{\Sigma^*\circ \Sigma}\big)=\mathcal{Z}\big((\Sigma^*\circ \Sigma)^{\vee}\big)\quad (<\infty),
	  \label{}
	\end{equation}
	where~$(\Sigma^*\circ \Sigma)^{\vee}$ is the ``double'' of~$\Sigma$. This ``shows'' that~$U_{\Sigma}$ is Hilbert-Schmidt. Immediately from the same expression one also sees that a cobordism of the form~$\Sigma^*\circ \Sigma$ naturally gives a nonnegative trace class operator which is what we wanted. Such a cobordism has, in other words, a \textit{reflection symmetry} exchanging its initial and terminal boundaries, and whose invariant set is exactly isomorphic to a copy of~$X$ as a Riemannian submanifold.

\begin{def7}
	  A somewhat technical requirement is that the boundaries of~$\Sigma$ be geodesic so that its double would be smooth.
	\end{def7}

	\begin{def7}
	  For certain models the (ordinary) Segal axioms have been fully and rigorously constructed. See, for example, the work \cite{Lin} of the second author and also \cite{GKRV}. In these cases~$U_{\Sigma}$ is rigorously shown to be Hilbert-Schmidt. The main difficulty lies in defining (\ref{eqn-heu-gen-amp}) and (\ref{eqn-heu-gen-part-func}) rigorously, in which case it implies~$\mathcal{Z}(\check{\Sigma})<\infty$, and also in showing the generic composition axiom out of these definitions.
	\end{def7}

    Now we give some examples.

    \begin{exxx}
	  Assume that one is given a \textsf{Hamiltonian operator}~$H$ acting on~$\mathcal{H}_X$, and let~$\beta>0$. Then the so-called \textsf{thermal state} (or \textsf{Gibbs state}) at \textsf{inverse temperature}~$\beta$ is given by the density operator
	  \begin{equation}
	    \rho_{\beta}\defeq \frac{\me^{-\beta H}}{\ttr_{\mathcal{H}}(\me^{-\beta H})}.
	    \label{}
	  \end{equation}
	  The operator~$\me^{-\beta H}$ gives the evolution over imaginary time~$\ii\beta$ and could be represented by a path integral (\ref{eqn-heu-gen-amp}) over the cylinder~$\Sigma_{\beta}:=X\times [0,\beta]$ with a specific action related to~$H$. Now we denote by~$\check{\Sigma}_{n,\beta}$ the closed surface obtained by gluing~$n$ copies of~$\check{\Sigma}_{\beta}\setminus A$ cyclicly along the slit~$A$, as explained at the end of the last subsection. Then the replica trick yields
	  \begin{equation}
	    \ttr_A\big(\ttr_{A^c}(\rho_{\beta})^n\big)=\frac{\mathcal{Z}(\check{\Sigma}_{n,\beta})}{\mathcal{Z}(\check{\Sigma}_{\beta})^n}.
	    \label{}
	  \end{equation}
	  We remark again that~$\check{\Sigma}_{n,\beta}$ comes equipped with the ``replica metric'' which is induced from~$\Sigma_{\beta}$.
	\end{exxx}

    \begin{exxx}\label{exp-ground-state-disj-cob}
	  Here we consider cobordisms which are disconnected,~$\Sigma=\Sigma_1\sqcup \Sigma_2$, with~$\partial\Sigma_1\cong \partial\Sigma_2\cong X$. We mark~$\partial\Sigma_1$ as initial and~$\partial\Sigma_2$ as terminal. In this case
	  \begin{equation}
	    \conf(\Sigma)=\conf(\Sigma_1)\times \conf(\Sigma_2),\quad\quad \mathcal{L}_{\conf(\Sigma)}=\mathcal{L}_{\conf(\Sigma_1)}\otimes \mathcal{L}_{\conf(\Sigma_2)},
	    \label{}
	  \end{equation}
	  and assume (a very weak form of) \textsf{locality} of the action (no interaction between disjoint space-times), that is,
	  \begin{equation}
	    S_{\Sigma}(\phi)=S_{\Sigma_1}(\phi_1)+S_{\Sigma_2}(\phi_2),
	    \label{}
	  \end{equation}
	  with~$\phi=(\phi_1,\phi_2)$, then we have formally
	  \begin{align}
	    \mathcal{A}_{\Sigma}(\psi,\varphi)&=\iint_{ \left\{ \substack{\phi_1\in\conf(\Sigma_1), \\ \phi_1|_{\partial\Sigma_1}=\varphi} \right\}\times \left\{ \substack{\phi_2\in\conf(\Sigma_2), \\ \phi_2|_{\partial\Sigma_2}=\psi} \right\}}
	    \me^{-S_{\Sigma_1}(\phi_1)}\me^{-S_{\Sigma_2}(\phi_2)}
	    \,\dd\mathcal{L}_{\conf(\Sigma_1)}(\phi_1)\,\dd \mathcal{L}_{\conf(\Sigma_2)}(\phi_2)\nonumber \\
	    &=\mathcal{A}_{\Sigma_1}(\varphi)\mathcal{A}_{\Sigma_2}(\psi), \label{eqn-disj-union-axiom}
	  \end{align}
	  where~$\mathcal{A}_{\Sigma_1}$ and~$\mathcal{A}_{\Sigma_2}$ are the amplitudes associated to~$\Sigma_1$ and~$\Sigma_2$ individually, defined by the same formula (\ref{eqn-heu-gen-amp}) in the case of having just one boundary component (sometimes (\ref{eqn-disj-union-axiom}) itself is taken as an axiom, see \cite{Gaw} page 766). We see that in this case~$U_{\Sigma}$ is a rank-one operator. Written more suggestively,
	\begin{equation}
	  U_{\Sigma}=\big|\mathcal{A}_{\Sigma_2}\big\rangle \big\langle \mathcal{A}_{\Sigma_1}\big|.
	  \label{}
	\end{equation}
    	We normalize by putting~$\rho_{\Sigma}:=U_{\Sigma}/\ttr(U_{\Sigma})$, then following the replica trick we have
	\begin{equation}
	  \ttr_A\big(\ttr_{A^c}(\rho_{\Sigma})^n\big)=\frac{\mathcal{Z}(\check{\Sigma}_n)}{\mathcal{Z}(\check{\Sigma})^n},
	  \label{}
	\end{equation}
	where~$\check{\Sigma}$ is the closed surface obtained by gluing~$\Sigma_1$ with~$\Sigma_2$ along~$X$, and~$\check{\Sigma}_n$ by gluing~$n$ copies of~$\check{\Sigma}\setminus A$ along the slit~$A$, equipped with the induced metric. In particular, if~$\Sigma_1=\Sigma_2=\mb{D}$, the unit disk, equipped with a flat-at-the-boundary metric (namely it could be written~$|z|^{-2}|\dd z|^2$ in some complex coordinate on an annulus around~$\partial\mb{D}$), then~$\rho_{\Sigma}$ defined as above is metric independent (provided flat at the boundary) and represents (projection onto) the \textsf{vacuum state}, usually written~$|0\rangle\langle 0|$ (see \cite{Gaw} page 768).
	\end{exxx}

\newpage
\section{Geometric Lemmas}\label{sec-geo-lemm}
	To begin with, we record some basic conformal relations that will be used repeatedly throughout the paper. These relations are local in nature as they concern only what happens in an arbitrarily small neighborhood around each point. Namely, let~$\Sigma$ be a smooth \textit{surface} and denote by~$\dd V_g$,~$\nabla_g$,~$K_g$,~$\Delta_g$ and $\nabla_v^g$ respectively the volume (area) form, gradient, Gauss curvature (half the scalar curvature), Laplacian and the Levi-Civita covariant derivative in the direction~$v\in T_z\Sigma$ under the Riemannian metric~$g$, and let~$h\in C^{\infty}(\Sigma)$, then we have
\begin{align}
      \dd V_{\me^{2h}g}&=\me^{2h}\dd V_g, \label{eqn-scale-vol}\\
      \nabla_{\me^{2h}g}&=\me^{-2h}\nabla_g, \label{eqn-scale-nabla}\\
      K_{\me^{2h}g}&=\me^{-2h}\left( -\Delta_g h+K_g \right),
      \label{eqn-liouville-eqn}\\
      \Delta_{\me^{2h}g}&=\me^{-2h}\Delta_g. \label{eqn-scale-lap} \\
      \nabla_v^{\me^{2h}g} X&=\nabla_v^g X+ (vh)X+(Xh)v-\ank{v,X}_g \nabla_g h. \label{eqn-scale-cov-deri}
    \end{align}
The relation (\ref{eqn-liouville-eqn}) is also called \textsf{Liouville's equation}.

\subsection{Technical Remarks on Conical Singularities}\label{sec-conic-remark}

The principal local regularity result around a conical singularity in our sense (definition \ref{def-singularities}) is the following obtained by Troyanov \cite{Troy2}.

  \begin{lemm}
    [\cite{Troy2} proposition 3.2] \label{lemm-reg-metric-pot} Let~$\Sigma$ be a closed Riemann surface and~$\tilde{g}$ a generalized conformal metric on~$\Sigma$ with a conical singularity at~$z_0\in \Sigma$ of exponent~$\gamma$. Then there exists a complex coordinate~$w$ defined on a neighborhood~$U\ni z_0$, with~$w(z_0)=0$, such that
    \begin{equation}
      \tilde{g}|_{U\setminus z_0}=|w|^{2\gamma}\me^{2 \varphi_{U,w}}|\dd w|^2,
      \label{}
    \end{equation}
    such that~$\varphi_{U,w}$ satisfies the conditions of definition \ref{def-singularities} as well as
    \begin{equation}
      \varphi_{U,w}(z)=\varphi_{U,w}(z_0)+\mathcal{O}(|w|^{2\gamma+2}),\quad\textrm{and}\quad \partial_z \varphi_{U,w},~\partial_{\ol{z}} \varphi_{U,w}=\mathcal{O}(|w|^{2\gamma+1}).
      \label{eqn-scale-metric-pot}
    \end{equation}
    In particular, $\partial_r \varphi_{U,w}=\mathcal{O}(|w|^{2\gamma+1})$ where $r$ is the radial coordinate~$r(z)=|w(z)|$ under $|\dd w|^2$. \hfill~$\Box$
  \end{lemm}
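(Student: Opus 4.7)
The plan is to produce the desired coordinate in three stages: first rewrite $\tilde g$ in an arbitrary isothermal coordinate, then use the Liouville-type curvature equation to decompose the conformal potential into harmonic plus high-order-vanishing parts, and finally absorb the harmonic piece into a holomorphic change of variable.

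First, I would fix an arbitrary holomorphic chart $z$ near $z_0$ with $z(z_0)=0$ and write the smooth reference metric as $g=e^{2h_0}|\dd z|^2$ with $h_0\in C^\infty$. Because $d_g(\bullet,z_0)/|z|$ has a strictly positive continuous limit at $z_0$ (as can be seen from the normal-coordinate expansion of $g$), the hypothesis of Definition~\ref{def-singularities} yields
\[
\tilde g\big|_{U\setminus z_0}=|z|^{2\gamma}\,e^{2\tilde\varphi(z)}\,|\dd z|^2
\]
with $\tilde\varphi$ continuous on $U$, smooth on $U\setminus z_0$, and $\Delta_{|\dd z|^2}\tilde\varphi\in L^1_{\mm{loc}}$ in the distributional sense. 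Applying Liouville's equation \eqref{eqn-liouville-eqn} twice (once for the factor $|z|^{2\gamma}$, whose logarithm is harmonic off the origin, and once for $e^{2\tilde\varphi}$) gives on the punctured disk
\[
\Delta_{|\dd z|^2}\tilde\varphi=-K_{\tilde g}\,e^{2\tilde\varphi}\,|z|^{2\gamma}.
\]
By assumption $K_{\tilde g}$ is bounded near $z_0$ and $\tilde\varphi$ is continuous, so the right-hand side is an $L^\infty$ function of size $\mathcal{O}(|z|^{2\gamma})$.

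Next, on a small disk $B_r(0)$ I would represent $\tilde\varphi$ via the Newtonian potential of this source: set $u(z)\defeq (2\pi)^{-1}\int_{B_r}\log|z-w|\bigl[-K_{\tilde g}(w)e^{2\tilde\varphi(w)}|w|^{2\gamma}\bigr]\dd w$, so that $\tilde\varphi=c+\psi+u$ with $\psi$ harmonic on $B_r$, $c=\tilde\varphi(0)$, and $\psi(0)=u(0)=0$. A sharp weighted estimate for the logarithmic kernel against an $\mathcal{O}(|w|^{2\gamma})$ source — essentially a Schauder estimate in the weighted H\"older space $C^{2,\alpha}_{2\gamma}$ — gives $|u(z)|\le C|z|^{2\gamma+2}$ together with $|\partial_z u|,|\partial_{\bar z}u|\le C|z|^{2\gamma+1}$. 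This ``gain of two radial powers'' is precisely the regularity bottleneck of the whole argument and is the main technical step; concretely, one proves it by decomposing $K_{\tilde g}e^{2\tilde\varphi}|w|^{2\gamma}$ into its radial average plus a zero-mean remainder, integrating the radial part explicitly, and estimating the remainder via rotational cancellation in the Poisson kernel.

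Finally, I would absorb $\psi$ into a holomorphic coordinate. Writing $\psi=\mathrm{Re}(F)$ with $F$ holomorphic, $F(0)=0$, I would look for a holomorphic function $G$ with $G(0)=0$ such that the implicit ODE
\[
1+z\,G'(z)=\exp\bigl(F(z)-(\gamma+1)G(z)\bigr)
\]
holds on $B_r$; since $F$ is analytic, the Taylor coefficients of $G$ are determined recursively and the series converges on a possibly smaller disk. Define $w(z)\defeq z\,e^{G(z)}$, which is a valid holomorphic coordinate since $w'(0)=1$. A direct calculation using $|w|^{2\gamma}=|z|^{2\gamma}e^{2\gamma\,\mathrm{Re}(G)}$ and $|\dd w|^2=|1+zG'|^2e^{2\mathrm{Re}(G)}|\dd z|^2$ shows that the chosen ODE is exactly the condition that all of $\psi$ and the Jacobian-log terms cancel, leaving
\[
\tilde g=|w|^{2\gamma}e^{2\varphi_{U,w}(w)}|\dd w|^2,\qquad \varphi_{U,w}(w)=c+u(z(w)).
\]
Since $z(w)=w+\mathcal{O}(|w|^2)$ is biholomorphic, the estimates on $u$ from the previous step transfer verbatim to $\varphi_{U,w}$, yielding $\varphi_{U,w}(w)-\varphi_{U,w}(0)=\mathcal{O}(|w|^{2\gamma+2})$ and $\partial_w\varphi_{U,w},\partial_{\bar w}\varphi_{U,w}=\mathcal{O}(|w|^{2\gamma+1})$. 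The pointwise bound on $\partial_r\varphi_{U,w}$ is then immediate from $\partial_r=e^{\ii\theta}\partial_w+e^{-\ii\theta}\partial_{\bar w}$.
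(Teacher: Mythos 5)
Your overall strategy---reduce to a flat background chart, use Liouville's equation \eqref{eqn-liouville-eqn} to get $\Delta\tilde\varphi=\mathcal{O}(|z|^{2\gamma})$ off the puncture, split off a harmonic part, and absorb it through the holomorphic change of variable $w=ze^{G}$ with $1+zG'=\exp(F-(\gamma+1)G)$ (a Briot--Bouquet equation, solvable precisely because $-(\gamma+1)$ is not a positive integer)---is sound, and it is essentially the route behind the cited result; the thesis itself offers no proof beyond the reference to Troyanov. The genuine gap is in what you yourself call the main technical step: the claim that the plain Newtonian potential $u$ of a source $f$ with $|f(w)|\le C|w|^{2\gamma}$ satisfies $|u(z)-u(0)|\le C|z|^{2\gamma+2}$ and $|\partial u|\le C|z|^{2\gamma+1}$. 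This is false whenever $2\gamma+1>0$: already for $\gamma\ge 0$ the source is bounded, $u$ is only $C^{1,\alpha}$, and generically $\partial_z u(0)=-\tfrac{1}{4\pi}\int_{B_r}\frac{f(w)}{w}\,dA(w)\neq 0$, so $u-u(0)$ is merely $\mathcal{O}(|z|)$. More precisely, expanding $\log|z-w|$ for $|w|\gg|z|$ shows the far field contributes, in each angular mode $1\le n<2\gamma+2$, a term $\asymp\mathrm{Re}(c_nz^n)$ which dominates $|z|^{2\gamma+2}$; your ``radial average plus rotational cancellation'' handles only the $n=0$ mode. These offending terms are harmonic, so the argument can be repaired: estimate not $u$ but $u$ minus its harmonic expansion of degree $<2\gamma+2$, and transfer that harmonic polynomial into $\psi$ before solving for $G$. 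As written, however, the decomposition $\tilde\varphi=c+\psi+u$ with the stated bounds on $u$ does not exist (nor is $u(0)=0$ automatic for your formula).

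There is a second, subtler obstruction showing that the bound $|f|\lesssim|w|^{2\gamma}$ alone cannot give the claimed two-power gain: when $2(\gamma+1)$ equals a positive integer $n_0$ (this includes every branched-cover exponent $\gamma=k-1$), the mode $n_0$ is resonant. For instance $\tilde\varphi=\epsilon\,\mathrm{Re}(z^{n_0})\log(1/|z|)$ has $\Delta\tilde\varphi=-2\epsilon n_0|z|^{2\gamma}\cos(n_0\theta)$, hence bounded curvature and $\Delta\tilde\varphi\in L^1$, yet no harmonic correction---hence no holomorphic coordinate change, which only shifts $\tilde\varphi$ by $\mathrm{Re}$ of a holomorphic function---removes the $|z|^{2\gamma+2}\log(1/|z|)$ behaviour. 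So your weighted-Schauder step must use more than the pointwise bound on the source (continuity or H\"older control of the curvature at the cone point, i.e.\ the finer hypotheses/structure actually used in Troyanov's argument), and this is exactly the point where the sketch cannot be completed as stated.
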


   Troyanov showed further that a neighborhood (still denoted~$U$) of the conical singularity could in fact be equivalently and more intrinsically described by a set of \textsf{polar coordinates}. That is, there is a map
   \begin{equation}
	  h:[0,\varrho]\times \mb{S}_{\Theta}^1 \lto U,
	  \label{}
	\end{equation}
	where~$\mb{S}_{\Theta}^1$ denotes the Riemannian circle with perimeter~$\Theta=2\pi(\gamma+1)$, such that
	\begin{enumerate}[(a)]
	  \item $h(r,\theta)=z_0$ iff~$r=0$;
	\item $h|_{(0,\varrho]\times \mb{S}^1_{\Theta}}$ is a locally bi-Lipschitz homeomorphism onto~$U\setminus z_0$; 
	\item we have
	  \begin{equation}
	    h^*(\tilde{g}|_{U\setminus z_0})=\dd r^2+\omega(r,\theta)^2\dd \theta^2,
	    \label{}
	  \end{equation}
	where~$\omega:(0,\varrho]\times \mb{S}_{\Theta}^1\lto \mb{R}$ is a function such that~$0<c_1\le \omega(r,\theta)\le c_2$ for some constants~$c_1$,~$c_2$ for all~$(r,\theta)$ and~$\lim_{r\to 0}\omega(r,\theta)/r =1$ for all~$\theta\in \mb{S}_{\Theta}^1$.
	\end{enumerate}
By the last requirement we have
\begin{equation}
  \Theta(r)\defeq \int_{\mb{S}^1}^{}\omega(r,\theta)\,\dd\theta\asymp r\Theta=2\pi r(\gamma+1),\quad\quad r\to 0,
  \label{}
\end{equation}
and thus~$2\pi(\gamma+1)$ is sometimes called the \textsf{cone angle}. Moreover, the regularity of the function~$\omega$ as well as~$h$ itself could be deduced from the regularity of the curvature of~$\tilde{g}$ on~$U\setminus z_0$ (\cite{Troy2} theorem 4.1). There are certain recent works in the literature which begin naturally with this latter point of view (\cite{AKR} for example). But we stick to definition \ref{def-singularities} throughout this article although utilization of polar coordinates may (or may not) aide with certain proofs.

\begin{def7}\label{rem-reg-met-pot-cone}
  In the literature people have considered more restrictive classes of allowable conical metrics by assuming more regularity on the regular metric potential (see \cite{Kalvin} section 2.1). The principle is to respect the correct scaling property under dilations centered at the cone points. Especially for the conical metrics with \textit{constant curvature}, the corresponding regular metric potentials (against coordinate metrics) are shown to be \textsf{dilation analytic}. This translates, in our notations, roughly into saying that~sufficiently near~$z_0$, $\varphi_{U,w}$ would be a real analytic function of~$|w|^{\gamma+1}$, extendable over a neighborhood of zero, in each direction respectively (this is clearly an enhancement of (\ref{eqn-scale-metric-pot})). However, due to the simplicity of our method we only need a very rough scaling property that is already deducible from lemma \ref{lemm-reg-metric-pot}, which we treat in the section below.
\end{def7}

\subsection{Dilation Properties at the Cone Point}\label{sec-geo-scaling}

     \begin{lemm}\label{lemm-scale-dist-sing}
Let~$\Sigma$ be a closed Riemann surface and~$\tilde{g}$ a generalized conformal metric on~$\Sigma$ with a conical singularity at~$z_0\in \Sigma$ of exponent~$\gamma$. Then 
	  \begin{equation}
	    \tilde{r}(z)=\frac{\me^{\varphi(z_0)}}{\gamma+1}r(z)^{\gamma+1}+\mathcal{O}\left(r^{3\gamma+3}\right),\quad\quad  z\to z_0,
	    \label{}
	  \end{equation}
	  where ~$\tilde{r}(z):=d_{\tilde{g}}(z,z_0)$,~$r(z):=|w(z)|$,~$z\ne z_0$, $\varphi:=\varphi_{U,w}$ as in lemma \ref{lemm-reg-metric-pot} and $w$ is the complex coordinate from lemma \ref{lemm-reg-metric-pot}. In particular, we see that~$\tilde{r}(z)\asymp r(z)^{\gamma+1}$ as~$z\to z_0$.
	\end{lemm}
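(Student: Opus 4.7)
The plan is to work entirely in the Troyanov complex coordinate $w$ provided by Lemma \ref{lemm-reg-metric-pot}, in which $\tilde{g} = |w|^{2\gamma} \me^{2\varphi(w)} |\dd w|^2$ near $z_0$ and the regular metric potential satisfies $\varphi(w) = \varphi(z_0) + \mathcal{O}(|w|^{2\gamma+2})$ by \eqref{eqn-scale-metric-pot}. The goal is to obtain matching upper and lower bounds on $\tilde r(z)\defeq d_{\tilde g}(z_0, z)$ that agree up to $\mathcal{O}(r(z)^{3\gamma+3})$.

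For the upper bound I would test the distance against the straight radial path $\gamma(t) = t\, w(z)/r(z)$, $t \in [0, r(z)]$. Its $\tilde g$-length is
\[
\int_0^{r(z)} t^{\gamma}\, \me^{\varphi(\gamma(t))}\, \dd t \;=\; \me^{\varphi(z_0)}\int_0^{r(z)} t^{\gamma}\bigl(1 + \mathcal{O}(t^{2\gamma+2})\bigr)\, \dd t \;=\; \frac{\me^{\varphi(z_0)}}{\gamma+1}\, r(z)^{\gamma+1} + \mathcal{O}(r(z)^{3\gamma+3}),
\]
which immediately yields the claimed upper bound on $\tilde r(z)$.

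For the lower bound I would parametrize an arbitrary piecewise smooth path $\gamma \colon [0,L]\to U$ from $z_0$ to $z$ in polar form, writing $w\circ\gamma(s) = \rho(s)\, \me^{\ii\theta(s)}$. In these coordinates $\tilde g = \rho^{2\gamma}\me^{2\varphi}(\dd \rho^2 + \rho^2\, \dd\theta^2)$, so
\[
L_{\tilde g}(\gamma) \;=\; \int_0^L \me^{\varphi(\gamma(s))}\, \rho(s)^{\gamma}\, \sqrt{(\rho')^2 + \rho^2(\theta')^2}\, \dd s \;\geq\; \Bigl(\inf_s \me^{\varphi(\gamma(s))}\Bigr) \int_0^L \rho^{\gamma}\, |\rho'|\, \dd s.
\]
The remaining integral is the total variation of $\rho^{\gamma+1}/(\gamma+1)$ on $[0,L]$, which is at least $r(z)^{\gamma+1}/(\gamma+1)$ since $\rho(0) = 0$ and $\rho(L) = r(z)$.

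The only real obstacle is to convert $\inf_s \me^{\varphi(\gamma(s))}$ into $\me^{\varphi(z_0)}\bigl(1 - \mathcal{O}(r(z)^{2\gamma+2})\bigr)$ via \eqref{eqn-scale-metric-pot}, which requires that the candidate minimizer stay inside a ball $\{|w| \leq r_0\}$ with $r_0$ comparable to $r(z)$ itself, not merely small. I would close this gap by a bootstrap argument: if the path attains maximum radius $R \geq r(z)$, then the same fundamental-theorem-of-calculus argument upgrades the total-variation estimate to $(2R^{\gamma+1} - r(z)^{\gamma+1})/(\gamma+1)$. Since any near-minimizer must have length bounded by (a constant multiple of) the radial path's length, comparing the two forces $R \leq C\, r(z)$ for an absolute constant $C$. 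The lower bound then reads
\[
\tilde r(z) \;\geq\; \frac{\me^{\varphi(z_0)}}{\gamma+1}\, r(z)^{\gamma+1}\bigl(1 - \mathcal{O}(r(z)^{2\gamma+2})\bigr),
\]
and together with the upper bound this produces the stated asymptotic expansion.
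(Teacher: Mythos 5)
Your proof is correct, and the upper bound is exactly the paper's: measure the straight radial segment in the Troyanov coordinate and integrate $t^{\gamma}\me^{\varphi}$ using the decay $\varphi-\varphi(z_0)=\mathcal{O}(|w|^{2\gamma+2})$ from (\ref{eqn-scale-metric-pot}). Where you diverge is the lower bound. The paper compares $\tilde g$ pointwise with the rotationally symmetric minorant metric $|w|^{2\gamma}\bigl(\me^{\varphi(z_0)}-\alpha |w|^{2\gamma+2}\bigr)^2|\dd w|^2$ on a small coordinate ball; since radial segments minimize for such a radial metric, the distance in the minorant is computed by a single explicit integral, and the error $-\alpha s^{2\gamma+2}$ is absorbed pointwise along the radius, integrating directly to $\mathcal{O}(r^{3\gamma+3})$ with no need to know where a competing path wanders. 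You instead freeze $\inf_s\me^{\varphi(\gamma(s))}$ over the whole competitor and run the total-variation estimate $\int\rho^{\gamma}|\rho'|\,\dd s\ge r^{\gamma+1}/(\gamma+1)$, which then forces the extra bootstrap step confining near-minimizers to radius $R\le Cr(z)$ before the sandwich on $\me^{\varphi}$ gives the right error order; your bootstrap (upgrading the variation bound to $(2R^{\gamma+1}-r^{\gamma+1})/(\gamma+1)$ and comparing with the radial path's length) does close this correctly. In effect your total-variation computation re-proves the radial-minimality fact the paper invokes, at the cost of the localization detour; the paper's minorant trick buys a shorter argument, while yours is more self-contained about why radial paths are optimal. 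One small point common to both write-ups: paths that leave the coordinate neighbourhood must be ruled out separately (their length inside the chart is bounded below by a fixed positive constant, which exceeds $\tilde r(z)$ once $z$ is close to $z_0$); it is worth saying this explicitly.
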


	\begin{proof}
	  By (\ref{eqn-scale-metric-pot}) we have
	  \begin{equation}
	    0<\me^{\varphi(z_0)}-\alpha r(z)^{2\gamma+2}\le \me^{\varphi(z)}\le \me^{\varphi(z_0)}+\alpha r(z)^{2\gamma+2}
	    \label{eqn-conf-factor-sandwich}
	  \end{equation}
	  for some~$\alpha>0$, in some smaller neighborhood~$U_1\subset U$. Now fix~$z\in U_1$, let~$c(s)=w^{-1}(s w(z)/|w(z)|)$ be defined for~$s\in [0,r(z)]$ (the unit speed geodesic under~$|\dd w|^2$). Then~$\tilde{r}(z)$ is majorized by the length of~$c$ under~$\tilde{g}$, namely
	  \begin{equation}
	    \tilde{r}(z)\le \int_{0}^{r}|c'(s)|_{\tilde{g}}\,\dd s=\int_{0}^{r} s^{\gamma}\me^{\varphi(c(s))}\,\dd s\le \frac{\me^{\varphi(z_0)}r^{\gamma+1}}{\gamma+1}+\alpha \frac{ r^{3\gamma+3}}{3\gamma+3},
	    \label{eqn-con-dist-finite}
	  \end{equation}
	  by (\ref{eqn-conf-factor-sandwich}). On the other hand, the length under~$\tilde{g}$ of any curve inside~$U_1$ is bounded below by its length under the metric
	  \begin{equation}
	    g_1\defeq r^{2\gamma}\big( \me^{\varphi(z_0)}-\alpha r^{2\gamma+2} \big)^2\cdot |\dd w|^2 =f(r)^2\cdot |\dd w|^2.
	    \label{}
	  \end{equation}
	  defined on~$U_1$. By (\ref{eqn-scale-cov-deri}) we know that~$c$ would now reparametrize into a geodesic under~$g_1$ (minimizing since $g_1$ is radial with respect to the geodesic coordinates of $g$). Therefore we obtain
	  \begin{equation}
	    \tilde{r}(z)\ge \int_{0}^{r} |c'(s)|_{g_1}\,\dd s=\frac{\me^{\varphi(z_0)}r^{\gamma+1}}{\gamma+1}-\frac{\alpha r^{3\gamma+3}}{3\gamma+3}.
	    \label{}
	  \end{equation}
	  This gives the result.
	\end{proof}

    \begin{def7}
  In particular, (\ref{eqn-con-dist-finite}) also shows~$\tilde{r}(z)<\infty$, which is not necessarily true \textit{a priori}.
\end{def7}

 From this lemma it follows immediately the next two corollaries.

 	\begin{corr}
	  With the same set-up and notation as lemma \ref{lemm-scale-dist-sing}, we have
	  \begin{equation}
	    r(z)=(\gamma+1)^{\frac{1}{\gamma+1}}\me^{-\frac{\varphi(z_0)}{\gamma+1}}\tilde{r}(z)^{\frac{1}{\gamma+1}} +\mathcal{O}(\tilde{r}^{2+\frac{1}{\gamma+1}}),
	    \label{}
	  \end{equation}
	  as~$z\to z_0$.
	\end{corr}

    \begin{proof}
        This follows from a simple order analysis using Newton's binomial formula.
    \end{proof}

	\begin{corr}\label{cor-ratio-radius}
	  Pick~$h\in C^{\infty}(\Sigma)$ and denote by~$\tilde{r}_h(z)$ the distance from~$z$ to~$z_0$ under the scaled metric
	  \begin{equation}
	    \tilde{g}_h|_U\defeq d_g(\bullet,z_0)^{2\gamma}\me^{2(\varphi+h)}\cdot g|_U.
	    \label{}
	  \end{equation}
	  Denote by~$\delta(\varepsilon)$ and~$\delta_h(\varepsilon)$ metric radii under~$g$ of the~$\varepsilon$-metric disks under~$\tilde{g}$ and~$\tilde{g}_h$ respectively centered at~$z_0$. Then we have
	  \begin{equation}
	    \frac{\delta(\varepsilon)}{\delta_h(\varepsilon)}\lto \exp\left(\frac{h(z_0)}{\gamma+1}\right),
	    \label{}
	  \end{equation}
	  as~$\varepsilon\to 0^+$. \hfill~$\Box$
	\end{corr}

\subsection{Log-divergent Integrals}\label{sec-int-by-part}
	In this section we collect some basic computations of ``toy integrals'' involving functions with log-divergent singularities, which will nevertheless play a fundamental role in the proofs of the main results of this paper.

	Throughout this section, we assume~$(\Sigma,g)$ is a Riemannian surface with smooth metric~$g$, let~$z_0\in \Sigma$ and let~$\sigma\in C^{\infty}(\Sigma\setminus\{z_0\})$ such that in a neighbourhood $U$ of $z_0$,
	\begin{equation}
	  \sigma(z)=\gamma \log d_g(z,z_0) +\varphi(z)
	  \label{eqn-local-form-conf-factor}
	\end{equation}
	for some~$\gamma>-1$, where $\varphi$ is a function satisfying the requirements for the metric potential in definition \ref{def-singularities} as well as (\ref{eqn-scale-metric-pot}) with~$|\dd w|^2$ replaced by~$g$. Formula (\ref{eqn-limit-lap-smooth-against-con}), however, needs much weaker assumptions. We start by recalling the following basic fact.
	\begin{lemm}\label{lemm-bounded-lap-log}
	  In the set-up as above, the function~$\Delta_g\log d_g(\bullet,z_0)$ is smooth and uniformly bounded near but not coincident with $z_0$.
	\end{lemm}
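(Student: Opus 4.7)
The plan is to reduce the statement to a direct computation in geodesic polar coordinates centered at $z_0$. First I would address smoothness: since $g$ is smooth, the exponential map $\exp_{z_0}: T_{z_0}\Sigma \to \Sigma$ is a diffeomorphism from a neighborhood of the origin onto a neighborhood $U$ of $z_0$, and on $U \setminus \{z_0\}$ the distance $d_g(\bullet, z_0) = |\exp_{z_0}^{-1}(\bullet)|$ is smooth and strictly positive, hence so is $\log d_g(\bullet, z_0)$ and its Laplacian. The only real issue is the uniform boundedness as $z \to z_0$.

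To handle boundedness, I would introduce geodesic polar coordinates $(r,\theta) \in (0,\varrho) \times \mathbb{S}^1$ on a punctured neighborhood of $z_0$, in which $r(z) = d_g(z,z_0)$ and the metric assumes the standard normal form
\begin{equation*}
g = \dd r^2 + J(r,\theta)^2 \,\dd\theta^2,
\end{equation*}
with $J$ smooth up to $r=0$, satisfying $J(0,\theta)=0$ and $\partial_r J(0,\theta)=1$. Using $\sqrt{\det g} = J$ and $g^{rr}=1$, the formula $\Delta_g = -\frac{1}{\sqrt{\det g}} \partial_i(\sqrt{\det g}\, g^{ij}\partial_j)$ applied to $f = \log r$ gives
\begin{equation*}
\Delta_g \log r = \frac{1}{r^2} - \frac{\partial_r J}{r J}.
\end{equation*}

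The key step is then to show the two singular terms cancel to leading order. The Jacobi equation along radial geodesics on a surface reads $\partial_r^2 J + K_g J = 0$, so the initial conditions yield the expansion $J(r,\theta) = r + O(r^3)$ and $\partial_r J(r,\theta) = 1 + O(r^2)$ as $r \to 0$, uniformly in $\theta$. Substituting,
\begin{equation*}
\frac{\partial_r J}{rJ} = \frac{1+O(r^2)}{r^2(1+O(r^2))} = \frac{1}{r^2} + O(1),
\end{equation*}
and the $r^{-2}$ singularities cancel, leaving $\Delta_g \log d_g(\bullet,z_0) = O(1)$ as $z \to z_0$. The estimate is uniform in $\theta$ because the coefficients in the Taylor expansion of $J$ depend smoothly on $\theta$ on the compact circle, giving the claimed uniform boundedness on a punctured neighborhood of $z_0$.

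The main (and only) obstacle is keeping the book-keeping on the expansion of $J$ clean enough to see the cancellation transparently; everything else is essentially a direct verification once the normal form of $g$ is invoked. No integration by parts or distributional argument is needed at this stage — those enter in the subsequent computations where the $\Delta_g$ of $\log d_g$ is integrated against test functions, as handled via the Poincaré–Lelong type statements recalled in Appendix \ref{sec-app-poin-lelong}.
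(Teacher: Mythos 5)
Your proof is correct and follows essentially the same route as the paper: geodesic polar coordinates around $z_0$, the polar-coordinate formula for the Laplacian applied to $\log r$, and the Jacobi-field expansion $J(r,\theta)=r+\mathcal{O}(r^3)$ (uniform in $\theta$) to cancel the $r^{-2}$ singularities. The only cosmetic difference is your sign convention for $\Delta_g$ (the chapter uses $\Delta_g=\ddiv\nabla$ here), which is irrelevant for the boundedness claim.
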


    \begin{proof}
        See appendix \ref{sec-app-poin-lelong}.
    \end{proof}

 \begin{def7}
	  A related classical result says that the perimeter~$\ell_g(\partial B_r(z_0))$ of a geodesic circle centered at~$z_0\in \Sigma$ of radius~$r$ has an asymptotics
	  \begin{equation}
	    \ell_g(\partial B_r(z_0))=2\pi r-\frac{\pi}{3}r^3 K_g(z_0)+o(r^3),\quad r\to 0^+.
	    \label{eqn-asymp-peri-circle}
	  \end{equation}
	  Here as above,~$K_g(z_0)$ is the Gauss curvature of~$g$ at~$z_0$. See, for example, \cite{dC2} page 296.
	\end{def7}

 \begin{lemm}\label{lemm-basic-renorm-log}
	  In the set-up as above, let~$\delta>0$. Then we have
	  \begin{equation}
	    \int_{\Sigma\setminus B_{\delta}(z_0,g)}^{} |\nabla_g \sigma|_g^2\,\dd V_g=-2\pi\gamma^2\log(\delta)-2\pi\gamma \varphi(z_0)
        -\int_{\Sigma\setminus B_{\delta}}^{} \sigma\Delta_g \sigma\,\dd V_g
        +\mathcal{O}(\delta^{2\min\{\gamma,0\}+2}\log \delta)
	    \label{}
	  \end{equation}
	  as~$\delta\to 0^+$, where~$B_{\delta}(z_0,g)$ denotes the geodesic disk around~$z_0$ of radius~$\delta$ under the metric~$g$.
	\end{lemm}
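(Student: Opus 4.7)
The plan is to apply integration by parts (the Green--Stokes identity) on the punctured domain $D_\delta := \Sigma \setminus B_\delta(z_0, g)$. For small enough $\delta$ this is a smooth compact surface with boundary $\partial B_\delta$, and $\sigma$ is smooth on a neighbourhood of $\overline{D_\delta}$. With the paper's sign convention $\Delta_g = -\mathrm{div}\,\nabla_g$, integration by parts yields
$$\int_{D_\delta} |\nabla_g\sigma|_g^2\,\dd V_g \;=\; \int_{D_\delta}\sigma\,\Delta_g\sigma\,\dd V_g \;+\; \int_{\partial B_\delta}\sigma\,\partial_\nu\sigma\,\dd\ell_g,$$
where $\nu$ is the outward unit normal to $D_\delta$ along $\partial B_\delta$; since $\nu$ points toward $z_0$, one has $\partial_\nu = -\partial_r$ with $r := d_g(\cdot, z_0)$. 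The lemma thereby reduces to extracting the precise asymptotic expansion of the boundary integral $\int_{\partial B_\delta}\sigma\,\partial_r\sigma\,\dd\ell_g$ as $\delta \to 0^+$, with the remainder as claimed.

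For the boundary expansion, I plug the local form $\sigma = \gamma\log r + \varphi$ and $\partial_r\sigma = \gamma/r + \partial_r\varphi$ into $\sigma\,\partial_r\sigma$ and evaluate at $r = \delta$, obtaining four terms. Troyanov's bounds (\ref{eqn-scale-metric-pot}) from lemma \ref{lemm-reg-metric-pot} give $\varphi = \varphi(z_0) + O(\delta^{2\gamma+2})$ and $\partial_r\varphi = O(\delta^{2\gamma+1})$ uniformly on $\partial B_\delta$. Combining these with the perimeter asymptotic $\ell_g(\partial B_\delta) = 2\pi\delta + O(\delta^3)$ from (\ref{eqn-asymp-peri-circle}), the two leading contributions $(\gamma^2\log\delta/\delta)\cdot 2\pi\delta$ and $(\gamma\varphi(z_0)/\delta)\cdot 2\pi\delta$ produce $2\pi\gamma^2\log\delta$ and $2\pi\gamma\varphi(z_0)$ respectively. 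Every remaining contribution is bounded either by the length correction acting on the singular term $\gamma^2\log\delta/\delta$ (which gives $O(\delta^2\log\delta)$, dominant when $\gamma \ge 0$) or by $\partial_r\varphi$ multiplied against the $\gamma\log\delta$ factor (which gives $O(\delta^{2\gamma+2}\log\delta)$, dominant when $\gamma < 0$); merging these two regimes yields the advertised error $O(\delta^{2\min\{\gamma,0\}+2}\log\delta)$. Substitution into the Green identity above produces the stated formula.

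The main point requiring care is the translation of Troyanov's Taylor-type estimates (\ref{eqn-scale-metric-pot}), which are stated in the preferred complex chart $w$ of lemma \ref{lemm-reg-metric-pot} with $|w|$ as radial variable, into estimates in the geodesic radial coordinate $r = d_g(\cdot, z_0)$ that appears in the cut-off. Since $g$ is smooth and conformally equivalent to $|\dd w|^2$ near $z_0$, we have $r \asymp |w|$ with ratio tending to $\me^{\varphi(z_0)}$, so the order bounds on $\varphi$ and $\partial_r\varphi$ transfer with only harmless multiplicative constants. No finer scaling such as lemma \ref{lemm-scale-dist-sing} is needed here, since the cut-off is performed using the smooth reference metric $g$ rather than the conical $\tilde g$.
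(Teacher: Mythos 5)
Your route is exactly the paper's: apply Green--Stokes on $\Sigma\setminus B_\delta(z_0,g)$, expand the boundary integral using the local form $\sigma=\gamma\log r+\varphi$, the Troyanov bounds (\ref{eqn-scale-metric-pot}) and the perimeter asymptotic (\ref{eqn-asymp-peri-circle}), and your error bookkeeping (the $\mathcal{O}(\delta^2\log\delta)$ term from the length correction versus the $\mathcal{O}(\delta^{2\gamma+2}\log\delta)$ term from $\partial_r\varphi$, merged into $\mathcal{O}(\delta^{2\min\{\gamma,0\}+2}\log\delta)$) matches the paper's.

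One concrete slip: you assert that ``the paper's sign convention'' is $\Delta_g=-\mathrm{div}\,\nabla_g$, but in this chapter $\Delta_g$ denotes $\mathrm{div}\,\nabla_g$, the negative (geometer's) Laplacian --- see Liouville's equation (\ref{eqn-liouville-eqn}) and the computation in lemma \ref{lemm-bounded-lap-log}. With the convention you state, your Green identity $\int_{D_\delta}|\nabla_g\sigma|_g^2 = \int_{D_\delta}\sigma\Delta_g\sigma + \int_{\partial B_\delta}\sigma\,\partial_\nu\sigma$ is internally consistent, but substituting the boundary expansion then yields $+\int\sigma\Delta_g\sigma$ in the final formula, whereas the lemma has $-\int\sigma\Delta_g\sigma$; so as written the last sentence of your second paragraph does not produce the stated identity. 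The fix is only to use the chapter's convention, under which the identity reads $\int_{D_\delta}|\nabla_g\sigma|_g^2 = -\int_{D_\delta}\sigma\Delta_g\sigma + \int_{\partial B_\delta}\sigma\,\partial_\nu\sigma$, and everything else in your argument goes through verbatim. Finally, the chart-transfer discussion in your last paragraph is not needed: the standing assumption of this section is that $\varphi$ satisfies (\ref{eqn-scale-metric-pot}) ``with $|\dd w|^2$ replaced by $g$'', i.e.\ the estimates are already posited in the geodesic radial variable $r=d_g(\cdot,z_0)$ used in the cut-off (your transfer argument via conformal equivalence is correct, just superfluous here).
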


	\begin{proof}
	  Denote~$r(z):=d_g(z,z_0)$. Integrating by parts (Green-Stokes formula) we have
	  \begin{align*}
	    \textrm{LHS}&=\int_{\partial B_{\delta}}^{} \sigma(-\partial_r \sigma)\,\dd\ell_{g} -\int_{\Sigma\setminus B_{\delta}}^{} \sigma\Delta_g \sigma\,\dd V_g \\
	    &=-\gamma^2\underbrace{\int_{\partial B_{\delta}}^{} \log r(\partial_r \log r)\,\dd\ell_{g}}_{A}
	    -\gamma\underbrace{\int_{\partial B_{\delta}}^{} \varphi(\partial_r \log r)\,\dd\ell_{g}}_{B}
	    -\gamma\underbrace{\int_{\partial B_{\delta}}^{} \log r(\partial_r \varphi)\,\dd\ell_{g}}_{C} \\
	    &\quad\quad -\underbrace{\int_{\partial B_{\delta}}^{} \varphi(\partial_r \varphi)\,\dd\ell_{g}}_{D}-\int_{\Sigma\setminus B_{\delta}}^{} \sigma\Delta_g \sigma\,\dd V_g.
	  \end{align*}
	  By the assumptions (\ref{eqn-scale-metric-pot}) on $\varphi$ and by (\ref{eqn-asymp-peri-circle}), we have
	  \begin{align}
	    A&=\log\delta\cdot\frac{1}{\delta}(2\pi \delta+\mathcal{O}(\delta^3))=2\pi\log\delta+\mathcal{O}(\delta^2\log \delta),\\
	    B&=\frac{1}{\delta}\int_{\partial B_{\delta}}^{} \varphi\,\dd\ell_{g} = 2\pi \varphi(z_0)+\mathcal{O}(\delta^2),\\
	    C&=\log\delta\int_{\partial B_{\delta}}^{} \partial_r\varphi\,\dd\ell_{g}=\mathcal{O}(\delta^{2\gamma+2}\log \delta),\\
	    D&=\mathcal{O}(\delta^{2\gamma+2}).
	    \label{}
	  \end{align}
	  Adding them all up, we obtain the result. Note $2\gamma+2$ may be small but is positive, so $\mathcal{O}(\delta^{2\gamma+2}\log \delta)=o(1)$.
	\end{proof}

  \begin{corr}\label{cor-quad-blow-tilde}
  In the set-up as above, let~$\tilde{g}$ be a generalised conformal metric representing $D=\gamma z_0$ with $\gamma>-1$, written in the form (\ref{eqn-conic-sing-local-form}). Suppose~$g$ is a smooth conformal metric on~$\Sigma$ and~$\tilde{g}=\me^{2\sigma}g$ for some~$\sigma\in C^{\infty}(\Sigma\setminus \{z_0\})$, so that $\sigma$ is of the form (\ref{eqn-local-form-conf-factor}). Then
  \begin{align}
    \int_{\Sigma\setminus B_{\varepsilon}(z_0,\tilde{g})}^{} |\nabla_g \sigma|^2\,\dd V_g&= -\frac{2\pi \gamma^2}{\gamma+1}\big[\log(\varepsilon)+\log(\gamma+1)-\varphi(z_0)\big]
    -2\pi\gamma \varphi(z_0)
        -\int_{\Sigma\setminus z_0}^{} \sigma\Delta_g \sigma\,\dd V_g
    +o(1)
    \label{}
  \end{align}
  as~$\varepsilon\to 0$, where~$B_{\varepsilon}(z_0,\tilde{g})$ denotes the geodesic ball around~$z_0$ of radius~$\varepsilon$ under the metric~$\tilde{g}$.
\end{corr}

	\begin{proof}
	  This is because~$B_{\varepsilon}(z_0,\tilde{g})$ has radius~$\delta(\varepsilon)\sim((\gamma+1)\varepsilon \me^{-\varphi(z_0)})^{1/(\gamma+1)}$ under the metric~$g$ by lemma \ref{lemm-scale-dist-sing}.
	\end{proof}

\begin{lemm}\label{lemm-linear-sing-green-sto}
	  In the set-up as lemma \ref{lemm-basic-renorm-log}, pick further~$h\in C^{\infty}(\Sigma)$. Then we have
	  \begin{align}
	    \lim_{\delta\to 0^+}\int_{\Sigma\setminus  B_{\delta}(z_0,g)}^{} \big(
    \ank{\nabla_g h,\nabla_g\sigma}_g +(\Delta_g h)\sigma\big)\,\dd V_g &=0, \label{eqn-limit-lap-smooth-against-con}\\
\lim_{\delta\to 0^+}\int_{\Sigma\setminus  B_{\delta}(z_0,g)}^{} \big(
    \ank{\nabla_g h,\nabla_g\sigma}_g +h(\Delta_g\sigma)\big)\,\dd V_g &=-2\pi\gamma h(z_0).
	    \label{}
	  \end{align}
   Here for (\ref{eqn-limit-lap-smooth-against-con}) we only need the weaker assumption that~$\varphi$ be bounded over~$U$ in the expression (\ref{eqn-local-form-conf-factor}) for~$\sigma$. Moreover, the 2-form~$(
    \ank{\nabla_g h,\nabla_g\sigma}_g +h(\Delta_g\sigma))\dd V_g$ is conformally invariant, that is,
    \begin{equation}
      \big(
      \sank{\nabla_{\me^{2\varphi}g} h,\nabla_{\me^{2\varphi}g}\sigma}_{\me^{2\varphi}g} +h(\Delta_{\me^{2\varphi}g}\sigma)\big)\dd V_{\me^{2\varphi}g} =\big(
    \ank{\nabla_g h,\nabla_g\sigma}_g +h(\Delta_g\sigma)\big)\dd V_g
      \label{eqn-conf-inv-2-form}
    \end{equation}
    for any~$\varphi\in C^{\infty}(\Sigma\setminus \{z_0\})$. Therefore we have in particular
    \begin{equation}
      \lim_{\varepsilon\to 0^+}\int_{\Sigma\setminus  B_{\delta}(z_0,g)}^{} \big(
      \sank{\nabla_{\tilde{g}} h,\nabla_{\tilde{g}}\sigma}_{\tilde{g}} +h(\Delta_{\tilde{g}}\sigma)\big)\,\dd V_{\tilde{g}} =-2\pi\gamma h(z_0),
      \label{}
    \end{equation}
    where~$\tilde{g}$ is as defined in corollary \ref{cor-quad-blow-tilde}.
	\end{lemm}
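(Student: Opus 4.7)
My plan is to reduce all three assertions to a single application of the Green--Stokes formula on the truncated domain $\Omega_\delta := \Sigma\setminus B_\delta(z_0,g)$, whose boundary $\partial B_\delta(z_0,g)$ is oriented by the inward normal $\nu = -\partial_r$ with $r(z):= d_g(z,z_0)$. Using the sign convention of this chapter, $\Delta_g = \mathrm{div}_g \nabla_g$ (consistent with the Liouville equation (\ref{eqn-liouville-eqn}) and the signs inside the proof of lemma \ref{lemm-basic-renorm-log}), the divergence theorem applied to $\sigma\nabla_g h$ and $h\nabla_g\sigma$ will yield the two identities
\begin{align*}
\int_{\Omega_\delta}\bigl(\langle\nabla_g h,\nabla_g \sigma\rangle_g+(\Delta_g h)\sigma\bigr)\,dV_g &= -\int_{\partial B_\delta} \sigma\,\partial_r h\,d\ell_g, \\
\int_{\Omega_\delta}\bigl(\langle\nabla_g h,\nabla_g \sigma\rangle_g+h(\Delta_g\sigma)\bigr)\,dV_g &= -\int_{\partial B_\delta} h\,\partial_r\sigma\,d\ell_g.
\end{align*}

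For the first identity, only the pointwise bound $\sigma = \gamma\log r + O(1)$ is needed (which holds under the weaker hypothesis that $\varphi$ is bounded near $z_0$); combined with the smoothness of $h$ and the perimeter asymptotic $\ell_g(\partial B_\delta)=2\pi\delta+O(\delta^3)$ from (\ref{eqn-asymp-peri-circle}), the boundary integral is controlled by $C\delta|\log\delta| \to 0$. For the second identity, I split $\partial_r\sigma = \gamma/r + \partial_r\varphi$ and treat each piece separately. The lemma \ref{lemm-reg-metric-pot} estimate $\partial_r\varphi = O(r^{2\gamma+1})$ makes the contribution of the regular part of size $O(\delta^{2\gamma+2}) = o(1)$ (recall $\gamma>-1$), while for the singular part, a Taylor expansion of $h$ at $z_0$ combined with the perimeter asymptotic gives
\begin{equation*}
    -\frac{\gamma}{\delta}\int_{\partial B_\delta} h\,d\ell_g = -\frac{\gamma}{\delta}\bigl(2\pi\delta\,h(z_0)+O(\delta^2)\bigr) = -2\pi\gamma\,h(z_0)+O(\delta),
\end{equation*}
producing the announced limit. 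The subtle point here, which I consider the conceptual heart of the computation, is the exact cancellation between the $1/\delta$ blow-up of $\partial_r\log r$ and the vanishing perimeter, leaving a finite residue proportional to $h(z_0)$, precisely in the spirit of a distributional Poincar\'e--Lelong-type identity.

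For the conformal invariance (\ref{eqn-conf-inv-2-form}), I will verify it term by term on $\Sigma\setminus\{z_0\}$ via the scaling relations (\ref{eqn-scale-vol}), (\ref{eqn-scale-nabla}), (\ref{eqn-scale-lap}). The first term is conformally invariant by the standard two-dimensional Dirichlet invariance: the two factors of $e^{-2\varphi}$ from the inverse metric acting on the one-forms $dh,d\sigma$ cancel the $e^{+2\varphi}$ coming from $dV_{e^{2\varphi}g}$. The second term's invariance is equally direct: $\Delta_{e^{2\varphi}g}\sigma = e^{-2\varphi}\Delta_g\sigma$ against $dV_{e^{2\varphi}g}=e^{2\varphi}dV_g$, and the $h$ factor is inert. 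The final assertion then follows at once by substituting $\tilde g$ for $g$ on the left-hand side (legitimate because the integration domain $\Omega_\delta\subset\Sigma\setminus\{z_0\}$ avoids the singular set), exchanging with the $g$-integrand via pointwise conformal invariance, and invoking the limit just established. I expect no serious obstacle beyond these order-of-magnitude bookkeeping steps.
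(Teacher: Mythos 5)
Your proof is correct and follows essentially the same route as the paper: Green--Stokes on the truncated domain to reduce both limits to boundary integrals over $\partial B_\delta$, the splitting $\sigma=\gamma\log r+\varphi$ estimated via the Troyanov bounds (\ref{eqn-scale-metric-pot}) and the perimeter asymptotic (\ref{eqn-asymp-peri-circle}), and a term-by-term check of (\ref{eqn-conf-inv-2-form}) from the scaling relations. The observations about the weaker hypothesis on $\varphi$ for the first limit and the sign convention for $\Delta_g$ are accurate, so nothing further is needed.
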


	\begin{proof}
	  We apply Green-Stokes. The first integral boils down to
	  \begin{align*}
	    \int_{\partial B_{\delta}}^{}(-\partial_r h)\sigma\,\dd\ell_g &=\gamma\int_{\partial B_{\delta}}^{}(-\partial_r h)\log r\,\dd\ell_g+\int_{\partial B_{\delta}}^{}(-\partial_r h)\varphi\,\dd\ell_g= \mathcal{O}(\delta\log \delta)+\mathcal{O}(\delta),
	  \end{align*}
	  and the second to
	  \begin{align*}
	    \int_{\partial B_{\delta}}^{}h(-\partial_r \sigma)\,\dd\ell_g &=-\gamma\int_{\partial B_{\delta}}^{}h\cdot\frac{1}{r}\,\dd\ell_g+\int_{\partial B_{\delta}}^{}h(-\partial_r \varphi)\,\dd\ell_g\\
	    &=-2\pi\gamma h(z_0)+\mathcal{O}(\delta^{2\min\{\gamma,0\}+2}),
	  \end{align*}
	  as~$\delta\to 0^+$, by (\ref{eqn-scale-metric-pot}) and (\ref{eqn-asymp-peri-circle}). Equality (\ref{eqn-conf-inv-2-form}) follows directly from (\ref{eqn-scale-vol}), (\ref{eqn-scale-nabla}) and (\ref{eqn-scale-lap}) and we obtain the rest of the lemma.
	\end{proof}

The following lemma is not used in the main proof, but is important regarding remark \ref{rem-extra-quad-term}.

\begin{lemm}\label{lemm-quadratic-sing}
    Under the assumptions at the beginning of this section, let~$0<\delta_1(\varepsilon)<\delta_2(\varepsilon)$ be two positive functions of~$\varepsilon$ such that~$\delta_{i}(\varepsilon)\to 0$ as~$\varepsilon\to 0^+$,~$i=1$,~$2$, and~$\delta_2/\delta_1\to Q>0$.  Then
    \begin{equation}
      \lim_{\varepsilon\to 0^+}\int_{B_{\delta_2(\varepsilon)}(z_0,g)\setminus B_{\delta_1(\varepsilon)}(z_0,g)}^{}|\nabla_g \sigma|_g^2\,\dd V_g=2\pi \gamma^2\log Q.
      \label{}
    \end{equation}
  \end{lemm}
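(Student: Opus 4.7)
The plan is to compute the integral by decomposing $|\nabla_g\sigma|_g^2$ according to the splitting $\sigma=\gamma\log r+\varphi$ (with $r(z):=d_g(z,z_0)$), and isolating the one term that gives the logarithmic contribution while showing the others vanish in the limit. All the analysis will be carried out in geodesic polar coordinates $(r,\theta)$ for $g$ centered at $z_0$, in which
\[
  g=\dd r^2+f(r,\theta)^2\dd\theta^2,\quad f(r,\theta)=r+\mathcal{O}(r^3),
\]
so that $\dd V_g=f(r,\theta)\,\dd r\,\dd\theta$ (the expansion of $f$ follows from the same Jacobi-field computation underlying (\ref{eqn-asymp-peri-circle})).

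First I would write
\[
  |\nabla_g\sigma|_g^2=\gamma^2|\nabla_g\log r|_g^2+2\gamma\bank{\nabla_g\log r,\nabla_g\varphi}_g+|\nabla_g\varphi|_g^2
\]
and handle each piece separately. For the main term, since in geodesic polar coordinates $\log r$ depends only on $r$ and $g$ is block-diagonal, $\nabla_g\log r=r^{-1}\partial_r$, whence $|\nabla_g\log r|_g^2=r^{-2}$. Integrating,
\[
  \gamma^2\int_{\delta_1}^{\delta_2}\int_0^{2\pi}\frac{f(r,\theta)}{r^2}\,\dd\theta\,\dd r
  =\gamma^2\int_{\delta_1}^{\delta_2}\frac{2\pi r+\mathcal{O}(r^3)}{r^2}\,\dd r
  =2\pi\gamma^2\log(\delta_2/\delta_1)+\mathcal{O}(\delta_2^2),
\]
which tends to $2\pi\gamma^2\log Q$ as $\varepsilon\to 0^+$ by the hypothesis $\delta_2/\delta_1\to Q$.

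Next I would control the cross and pure-$\varphi$ terms using the scaling (\ref{eqn-scale-metric-pot}) (translated from $|\dd w|^2$ to $g$, which is legitimate since the two metrics are boundedly equivalent near $z_0$ and the comparison of distances is a bounded smooth bi-Lipschitz change of coordinates): we have $\partial_r\varphi=\mathcal{O}(r^{2\gamma+1})$ and $|\nabla_g\varphi|_g=\mathcal{O}(r^{2\gamma+1})$. Then
\[
  \bigl|\bank{\nabla_g\log r,\nabla_g\varphi}_g\bigr|=\tfrac{1}{r}|\partial_r\varphi|=\mathcal{O}(r^{2\gamma}),\qquad
  |\nabla_g\varphi|_g^2=\mathcal{O}(r^{4\gamma+2}),
\]
so, integrating against $\dd V_g\lesssim r\,\dd r\,\dd\theta$ over the annulus,
\[
  \int_{B_{\delta_2}\setminus B_{\delta_1}}\bigl|\bank{\nabla_g\log r,\nabla_g\varphi}_g\bigr|\,\dd V_g=\mathcal{O}(\delta_2^{2\gamma+2}),\quad
  \int_{B_{\delta_2}\setminus B_{\delta_1}}|\nabla_g\varphi|_g^2\,\dd V_g=\mathcal{O}(\delta_2^{4\gamma+4}),
\]
both of which tend to $0$ as $\varepsilon\to 0^+$ because $\gamma>-1$. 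Summing the three contributions yields the claimed limit $2\pi\gamma^2\log Q$.

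No step here is genuinely difficult; if there is a mild subtlety it is in recording explicitly that the bounds (\ref{eqn-scale-metric-pot}), originally formulated against the special Troyanov coordinate metric $|\dd w|^2$, transfer without loss to the generic smooth background $g$ used in this section. This follows because the two are conformally related by a factor bounded away from $0$ and $\infty$ on $U$, and distances/gradients differ by corresponding uniformly bounded factors that do not interfere with the $r\to 0$ powers driving the error estimates above.
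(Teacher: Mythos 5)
Your proof is correct, but it takes a different route from the paper's. The paper proves this lemma by Green--Stokes on the annulus: the Dirichlet energy is rewritten as $-\int \sigma\Delta_g\sigma\,\dd V_g$ (which vanishes in the limit because $\sigma$ is locally integrable and $\Delta_g\sigma$ is bounded away from $z_0$, cf.\ lemma \ref{lemm-bounded-lap-log}) plus boundary terms $\pm\int_{\partial B(\delta_i)}\sigma(\partial_r\sigma)\,\dd\ell_g$; the finite pieces $2\pi\varphi(z_0)$ on the two circles cancel, and the surviving divergent pieces $\int \frac{\log r}{r}\,\dd\ell_g$ produce exactly $2\pi\gamma^2\log(\delta_2/\delta_1)\to 2\pi\gamma^2\log Q$. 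You instead expand $|\nabla_g\sigma|_g^2$ directly and compute in geodesic polar coordinates, using the exact identity $|\nabla_g\log r|_g^2=r^{-2}$ (Gauss lemma) together with the volume expansion, and kill the cross and pure-$\varphi$ terms with the gradient bound $\mathcal{O}(r^{2\gamma+1})$. Your route is more elementary (no integration by parts, no use of the boundedness of $\Delta_g\sigma$), at the mild cost of invoking the full gradient bound on $\varphi$ rather than only the radial derivative and circle averages that the paper's boundary-term computation needs; both are available, since the standing assumption of section \ref{sec-int-by-part} imposes (\ref{eqn-scale-metric-pot}) with $|\dd w|^2$ replaced by $g$. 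For the same reason your closing remark about transferring (\ref{eqn-scale-metric-pot}) from the Troyanov coordinate to $g$ is not actually needed here (and, as stated, a bi-Lipschitz comparison alone would not obviously carry pointwise derivative bounds); it is harmless but can be dropped.
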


\begin{proof}
    We abbreviate the disks~$B_{\delta_i(\varepsilon)}(z_0,g)$ as~$B(\delta_i)$,~$i=1$,~$2$. Again integrating by parts we have,
    \begin{equation}
      \int_{B(\delta_2)\setminus B(\delta_1)}^{}|\nabla_g \sigma|_g^2\,\dd V_g =-\int_{B(\delta_2)\setminus B(\delta_1)}^{} \sigma \Delta_g \sigma\,\dd V_g
      +\int_{\partial B(\delta_2)}^{}\sigma(\partial_r \sigma)\,\dd\ell_g
      -\int_{\partial B(\delta_1)}^{}\sigma(\partial_r \sigma)\,\dd\ell_g.
      \label{}
    \end{equation}
    Since~$\sigma$ is locally integrable and~$\Delta_g \sigma$ is bounded on~$\Sigma\setminus\{z_0\}$, we have
    \begin{equation}
      \int_{B(\delta_2)\setminus B(\delta_1)}^{} \sigma \Delta_g \sigma\,\dd V_g \quad\xlongrightarrow{\varepsilon\to 0^+}\quad 0.
      \label{}
    \end{equation}
    Next by (\ref{eqn-local-form-conf-factor}) and (\ref{eqn-scale-metric-pot}),
    \begin{align*}
      &\int_{\partial B(\delta_2)}^{}\varphi(\partial_r \varphi)\,\dd\ell_g,~
      \int_{\partial B(\delta_1)}^{}\varphi(\partial_r \varphi)\,\dd\ell_g= \mathcal{O}(\delta^{2\gamma+2})&&\xlongrightarrow{\varepsilon\to 0^+}\quad 0,\\
      &\int_{\partial B(\delta_2)}^{}\varphi(\partial_r \log r)\,\dd\ell_g,~
      \int_{\partial B(\delta_1)}^{}\varphi(\partial_r \log r)\,\dd\ell_g&&\xlongrightarrow{\varepsilon\to 0^+}\quad 2\pi \varphi(z_0).\\
      &\int_{\partial B(\delta_2)}^{}\log r(\partial_r \varphi)\,\dd\ell_g,~
      \int_{\partial B(\delta_1)}^{}\log r(\partial_r \varphi)\,\dd\ell_g
      = \mathcal{O}(\delta^{2\gamma+2}\log \delta)&&\xlongrightarrow{\varepsilon\to 0^+}\quad 0.
    \end{align*}
    Therefore the only thing left is
    \begin{equation}
     \int_{\partial B(\delta_2)}^{}\frac{\log r}{r}\,\dd\ell_g -\int_{\partial B(\delta_1)}^{} \frac{\log r}{r}\,\dd\ell_g   \sim 2\pi \log\Big(\frac{\delta_2}{\delta_1}\Big)
      \quad\xlongrightarrow{\varepsilon\to 0^+}\quad 2\pi \log Q.
      \label{}
    \end{equation}
    Adding up all the above, we obtain the result. 
  \end{proof}

\section{Renormalization Procedure}\label{sec-renorm-tech}

\subsection{Consistency}

As we could see from definitions \ref{def-main-renorm-anomaly} and \ref{def-renom-part-func} that a particular reference smooth metric $g$ was chosen to make the definition. Now we show that a different conformal reference metric would in fact give the same partition function for the target metric $\tilde{g}$ and hence $\mathcal{Z}(\Sigma,\tilde{g})$ is invariantly defined.

\begin{lemm}\label{lemm-consist}
	  Consider a conformal field theory with central charge~$c\in\mb{R}$ on the Riemann surface~$\Sigma$ and let~$\tilde{g}$ be a generalized conformal metric representing~$D=\sum_{j=1}^p \gamma_j z_j$ with~$\gamma_j>-1$. Also let~$g_1$ and~$g_0$ be two smooth conformal metrics such that~$g_1=\me^{2h}g_0$ for some~$h\in C^{\infty}(\Sigma)$, and~$\tilde{g}=\me^{2\sigma}g_1$ for~$\sigma\in C^{\infty}(\Sigma\setminus \supp D)$. Then if one of $\mathcal{R}A_{\Sigma}(\tilde{g},g_0)$ and $\mathcal{R}A_{\Sigma}(\tilde{g},g_1)$ defined by (\ref{eqn-def-renorm-polya}) is finite, so is the other, and we have
	  \begin{equation}
	    \mathcal{R}A_{\Sigma}(\tilde{g},g_0)-\mathcal{R}A_{\Sigma}(\tilde{g},g_1)=   A_{\Sigma}(g_1,g_0),
	    \label{}
	  \end{equation}
   	 where $A_{\Sigma}(g_1,g_0)$ is defined by (\ref{eqn-def-ord-poly-anomaly}). Hence definition (\ref{eqn-def-renom-part-func}) is independent of the reference metric chosen.
	\end{lemm}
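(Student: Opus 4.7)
The plan is to directly compute the difference $\mathcal{R}A_{\Sigma}(\tilde{g},g_0)-\mathcal{R}A_{\Sigma}(\tilde{g},g_1)$ from the definition (\ref{eqn-def-renorm-polya}) and match it with the smooth Polyakov anomaly $A_\Sigma(g_1,g_0)$. The essential ingredients are conformal in nature: writing $\tilde{g}=e^{2(\sigma+h)}g_0=e^{2\sigma}g_1$, by the two-dimensional conformal invariance of the Dirichlet energy density ((\ref{eqn-scale-vol}), (\ref{eqn-scale-nabla})) we have $|\nabla_{g_1}\sigma|_{g_1}^2\, dV_{g_1}=|\nabla_{g_0}\sigma|_{g_0}^2\, dV_{g_0}$, and by the Liouville equation (\ref{eqn-liouville-eqn}), $2K_{g_1}\sigma\, dV_{g_1}=2K_{g_0}\sigma\, dV_{g_0}-2(\Delta_{g_0}h)\sigma\, dV_{g_0}$.

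First I would subtract the two defining expressions for the renormalized anomaly. Crucially, the regularizing disks $B_\varepsilon(z_j,\tilde{g})$ are intrinsic to $\tilde{g}$ and hence the same in both expressions, and the counter-terms $2\pi\sum_j \gamma_j^2/(1+\gamma_j)\log\varepsilon$ depend only on the divisor $D$, so both cancel exactly in the difference. After expanding
\begin{equation*}
|\nabla_{g_0}(\sigma+h)|_{g_0}^2=|\nabla_{g_0}\sigma|_{g_0}^2+2\langle\nabla_{g_0}\sigma,\nabla_{g_0}h\rangle_{g_0}+|\nabla_{g_0}h|_{g_0}^2
\end{equation*}
and applying the two conformal identities above, the bracketed expression in $24\pi\big(\mathcal{R}A_{\Sigma}(\tilde{g},g_0)-\mathcal{R}A_{\Sigma}(\tilde{g},g_1)\big)$ reduces to
\begin{equation*}
\lim_{\varepsilon\to 0^+}\int_{\Sigma\setminus\bigcup_j B_\varepsilon(z_j,\tilde{g})}\Big[\big(|\nabla_{g_0}h|_{g_0}^2+2K_{g_0}h\big)+2\big(\langle\nabla_{g_0}\sigma,\nabla_{g_0}h\rangle_{g_0}+(\Delta_{g_0}h)\sigma\big)\Big]\, dV_{g_0}.
\end{equation*}

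Second, I would take the limit termwise. Because the disks $B_\varepsilon(z_j,\tilde{g})$ shrink to the isolated points $z_j$ as $\varepsilon\to 0^+$ (their $g_0$-radii tending to $0$ by lemma \ref{lemm-scale-dist-sing}) and because $|\nabla_{g_0}h|_{g_0}^2+2K_{g_0}h$ is smooth on all of $\Sigma$, the first group of terms converges to $\int_\Sigma(|\nabla_{g_0}h|_{g_0}^2+2K_{g_0}h)\, dV_{g_0}=24\pi\cdot A_\Sigma(g_1,g_0)$. For the cross-term group, the integrand $\langle\nabla_{g_0}\sigma,\nabla_{g_0}h\rangle_{g_0}+(\Delta_{g_0}h)\sigma$ is in $L^1(\Sigma,g_0)$: near each $z_j$ the gradient $\nabla_{g_0}\sigma$ has only a $d_{g_0}(\cdot,z_j)^{-1}$-type singularity paired against the bounded $\nabla_{g_0}h$ and integrated against an area form vanishing to first order, while $(\Delta_{g_0}h)\sigma=O(\log d_{g_0})$ is locally integrable. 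By dominated convergence the integrals over $\Sigma\setminus\bigcup_j B_\varepsilon(z_j,\tilde{g})$ converge to the integral over $\Sigma\setminus\{z_j\}_j$; localising with a smooth partition of unity near each cone point and invoking formula (\ref{eqn-limit-lap-smooth-against-con}) of lemma \ref{lemm-linear-sing-green-sto} shows that this remaining integral vanishes.

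Assembling the two pieces gives $\mathcal{R}A_{\Sigma}(\tilde{g},g_0)-\mathcal{R}A_{\Sigma}(\tilde{g},g_1)=A_\Sigma(g_1,g_0)$, which extends the cocycle property of remark \ref{rem-ord-cocycle} across the conical locus; the very same computation shows simultaneously that if one side is finite then so is the other. The independence of definition \ref{def-renom-part-func} from the reference metric then follows at once by combining this identity with the ordinary Polyakov anomaly $\mathcal{Z}(\Sigma,g_1)=e^{cA_\Sigma(g_1,g_0)}\mathcal{Z}(\Sigma,g_0)$, yielding $e^{c\mathcal{R}A_\Sigma(\tilde{g},g_1)}\mathcal{Z}(\Sigma,g_1)=e^{c\mathcal{R}A_\Sigma(\tilde{g},g_0)}\mathcal{Z}(\Sigma,g_0)$. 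The main subtlety will lie in the cross-term step: the algebraic decomposition of $|\nabla_{g_0}(\sigma+h)|_{g_0}^2$ is only meaningful pointwise on $\Sigma\setminus\bigcup_j B_\varepsilon(z_j,\tilde{g})$ where $\sigma$ is smooth, and one must check that swapping $\tilde{g}$-balls for $g_0$-balls in the exhausting family produces only boundary corrections of order $O(r|\log r|)$ at each cone point, exactly the estimates driving the proof of lemma \ref{lemm-linear-sing-green-sto}.
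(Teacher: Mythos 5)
Your proposal is correct and follows essentially the same route as the paper's own proof: you cancel the common $\tilde{g}$-balls and counterterms, use the conformal identities (\ref{eqn-scale-vol})--(\ref{eqn-liouville-eqn}) to rewrite everything over $g_0$, expand $|\nabla_{g_0}(\sigma+h)|_{g_0}^2$, and identify $A_\Sigma(g_1,g_0)$ plus a cross-term killed by (\ref{eqn-limit-lap-smooth-against-con}) of lemma \ref{lemm-linear-sing-green-sto}. The only (harmless) difference is that you justify transferring the cross-term limit from $\tilde{g}$-balls to the punctured surface via an $L^1$/dominated-convergence argument, whereas the paper cites the lemma directly, relying on the boundedness of $\varphi$ with respect to $g_0$.
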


 \begin{proof}
	  Without loss of generality suppose~$D=\gamma z_0$ with~$\gamma>-1$. Denote, for simplicity, by~$\nabla_i$,~$|\cdot|_i$,~$K_i$ and~$\dd V_i$ respectively the gradient, metric norm, Gauss curvature and area form under the metric~$g_i$,~$i=0$,~$1$. Then by definition and the relations (\ref{eqn-scale-vol}) --- (\ref{eqn-liouville-eqn}), valid on~$\Sigma\setminus  B_{\varepsilon}(z_0,\tilde{g})$ for any~$\varepsilon>0$, we have
	  \begin{align}
	    \mathcal{R}A_{\Sigma}(\tilde{g},g_1)&=\frac{1}{24\pi}\lim_{\varepsilon\to 0^+}\Big[ 
	      \int_{\Sigma\setminus  B_{\varepsilon}(z_0,\tilde{g})}^{}(|\nabla_{1} \sigma|_{1}^2+2K_{1}\sigma)\dd V_{1} +\frac{2\pi\gamma^2}{\gamma+1}\log(\varepsilon)
    \Big]\\
    &=\frac{1}{24\pi}\lim_{\varepsilon\to 0^+}\Big[ 
      \int_{\Sigma\setminus  B_{\varepsilon}(z_0,\tilde{g})}^{}( |\nabla_0 \sigma|_0^2 -2(\Delta_0 h)\sigma +2K_0 \cdot\sigma
      )\dd V_{0} +\frac{2\pi\gamma^2}{\gamma+1}\log(\varepsilon)
    \Big],\\
    \mathcal{R}A_{\Sigma}(\tilde{g},g_0)&=\frac{1}{24\pi}\lim_{\varepsilon\to 0^+}\Big[ 
      \int_{\Sigma\setminus  B_{\varepsilon}(z_0,\tilde{g})}^{}\big(|\nabla_{0}(h+ \sigma)|_{0}^2+2K_{0}(h+\sigma)\big)\dd V_{0} +\frac{2\pi\gamma^2}{\gamma+1}\log(\varepsilon)
    \Big].
  \end{align}
  Therefore
  \begin{align*}
    \mathcal{R}A_{\Sigma}(\tilde{g},g_0)-\mathcal{R}A_{\Sigma}(\tilde{g},g_1)&=\frac{1}{24\pi}\int_{\Sigma}^{}(|\nabla_{0}h|_{0}^2+2K_{0}\cdot h)\,\dd V_{0}\\
    &\quad\quad +\frac{1}{12\pi}\underbrace{\lim_{\varepsilon\to 0^+}\int_{\Sigma\setminus  B_{\varepsilon}(z_0,\tilde{g})}^{} \big(
    \ank{\nabla_0 h,\nabla_0\sigma}_0 +(\Delta_0 h)\sigma\big)\,\dd V_0}_{=~0}\\
    &=\frac{1}{24\pi}\int_{\Sigma}^{}(|\nabla_{0}h|_{0}^2+2K_{0}\cdot h)\,\dd V_{0}
  \end{align*}
  by (\ref{eqn-limit-lap-smooth-against-con}) since $\sigma$ has the form (\ref{eqn-local-form-conf-factor}) with $\varphi$ bounded with respect to $g_0$. 
	\end{proof}

\subsection{Regularized Curvature and Anomaly}
In this subsection we note another kind of regularized anomaly which concerns smooth (bounded) scaling of a singular metric, rather than singular scaling of a smooth metric which was in some sense what we did above. This quantity will also play a role in the main result proposition \ref{prop-main-conical-scaling}.

	\begin{lemm}\label{lemm-smooth-scale-con}
	  Let~$\Sigma$ be a closed Riemann surface and~$\tilde{g}$ a generalized conformal metric representing~$D=\sum_{j=1}^p \gamma_j z_j$ with~$\gamma_j>-1$. Now suppose~$h\in C^{\infty}(\Sigma)$ and consider the scaled metric~$\me^{2h}\tilde{g}$ on~$\Sigma\setminus\supp D$. Then
  \begin{equation}
    \mathcal{R}A_{\Sigma}(\me^{2h}\tilde{g},\tilde{g})\defeq \frac{1}{24\pi}
      \int_{\Sigma\setminus \supp D}\big(|\nabla_{\tilde{g}} h|_{\tilde{g}}^2+2K_{\tilde{g}}h\big)\,\dd V_{\tilde{g}} <\infty.
    \label{eqn-def-renorm-polya-conic}
  \end{equation}
	\end{lemm}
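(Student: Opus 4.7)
The plan is to show that each of the two terms in $\mathcal{R}A_{\Sigma}(e^{2h}\tilde{g},\tilde{g})$ is separately finite, by picking an auxiliary smooth conformal metric $g$ on $\Sigma$ and writing $\tilde{g} = e^{2\sigma} g$ on $\Sigma \setminus \supp D$, thereby transferring the integration onto the well-behaved smooth volume form $dV_g$.

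For the gradient term I would invoke the usual conformal invariance of the Dirichlet energy in two dimensions: combining (\ref{eqn-scale-vol}) and (\ref{eqn-scale-nabla}) gives the pointwise identity $|\nabla_{\tilde{g}} h|_{\tilde{g}}^2 \, dV_{\tilde{g}} = |\nabla_g h|_g^2 \, dV_g$ on $\Sigma \setminus \supp D$. Since $h \in C^\infty(\Sigma)$ and $g$ is smooth, the right-hand side extends to a smooth bounded $2$-form on all of $\Sigma$, so its integral is trivially finite.

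The real content is in the curvature term. Here the strategy is to apply Liouville's equation (\ref{eqn-liouville-eqn}) to get the pointwise identity $K_{\tilde{g}} \, dV_{\tilde{g}} = (K_g - \Delta_g \sigma) \, dV_g$ on $\Sigma \setminus \supp D$, and then to verify that this right-hand side is an $L^1$ density with respect to $dV_g$; together with $h \in L^\infty(\Sigma)$ this finishes the job. Near each cone point $z_j$ one has the decomposition $\sigma = \gamma_j \log d_g(\cdot, z_j) + \varphi_{U_j, g}$ read off from (\ref{eqn-conic-sing-local-form}), where $\Delta_g \varphi_{U_j, g} \in L^1(U_j, g)$ is part of the very definition \ref{def-singularities} of an admissible conical singularity.

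The step requiring the most care is the logarithmic piece $\gamma_j \Delta_g \log d_g(\cdot, z_j)$: naively this term looks like it could blow up at $z_j$, but Lemma \ref{lemm-bounded-lap-log} asserts that it is in fact pointwise bounded on a punctured neighbourhood of $z_j$. (As a distribution on all of $\Sigma$ there would be an additional delta-type contribution supported at $z_j$, but this is irrelevant here since we integrate a pointwise density against $dV_g$ and $\{z_j\}$ has $g$-measure zero.) Summing over $j$ gives $\Delta_g \sigma \in L^1(\Sigma \setminus \supp D, g)$, which combined with $K_g \in L^\infty(\Sigma)$ and $h \in L^\infty(\Sigma)$ closes the argument.
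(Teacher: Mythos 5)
Your proposal is correct and follows essentially the same route as the paper: choose a smooth conformal reference metric $g$, use conformal invariance of $|\nabla h|^2\,\dd V$ together with Liouville's equation (\ref{eqn-liouville-eqn}) to rewrite the integral against $\dd V_g$, and conclude via the boundedness of $\Delta_g\log d_g(\cdot,z_j)$ from Lemma \ref{lemm-bounded-lap-log}. The only difference is that you spell out the decomposition $\sigma=\gamma_j\log d_g(\cdot,z_j)+\varphi_{U_j,g}$ and the $L^1$ hypothesis on $\Delta_g\varphi_{U_j,g}$ from Definition \ref{def-singularities}, which the paper leaves implicit in its one-line citation of the same lemma.
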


	\begin{proof}
	  Choose a smooth background conformal metric~$g$ and write~$\tilde{g}=\me^{2\sigma}g$,~$\sigma\in C^{\infty}(\Sigma\setminus\supp D)$. As in lemma \ref{lemm-linear-sing-green-sto}, the 2-form~$|\nabla_{\bullet}h|_{\bullet}^2\,\dd V_{\bullet}$ is conformally invariant, and the Gauss curvature transforms as
	  \begin{equation}
	    K_{\tilde{g}}=\me^{-2\sigma}(-\Delta_g \sigma+K_g),\quad \textrm{on }\Sigma\setminus\supp D.
	    \label{}
	  \end{equation}
	  Therefore
	  \begin{align}
	    \int_{\Sigma\setminus \supp D}(|\nabla_{\tilde{g}} h|_{\tilde{g}}^2+2K_{\tilde{g}}h)\,\dd V_{\tilde{g}} =
	    \int_{\Sigma\setminus \supp D}(|\nabla_{g}h|_{g}^2+2(-\Delta_g \sigma+K_g)h)\,\dd V_{g}<\infty, \label{eqn-conf-change-to-smooth}
	  \end{align}
	  because of lemma \ref{lemm-bounded-lap-log}.
	\end{proof}

\begin{def7}
	  As a by-product, we also see that the RHS of (\ref{eqn-conf-change-to-smooth}), which is expressed in terms of the background smooth metric~$g$, is independent of~$g$.
	\end{def7}

\begin{deef}\label{def-renorm-polya-conic}
	  In the situation as lemma \ref{lemm-smooth-scale-con}, we call (\ref{eqn-def-renorm-polya-conic}) the \textsf{regular Polyakov anomaly} of~$\me^{2h}\tilde{g}$ against~$\tilde{g}$.
	\end{deef}

\begin{def7}
  The distribution (or current, more precisely) on~$\Sigma$ which we denote by~$K_{\tilde{g}}\dd V_{\tilde{g}}$, defined by setting
  \begin{equation}
    \ank{K_{\tilde{g}}\dd V_{\tilde{g}}, \psi}\defeq \int_{\Sigma\setminus \supp D}^{}\psi K_{\tilde{g}}\,\dd V_{\tilde{g}}
    \label{}
  \end{equation}
  for all~$\psi\in C^{\infty}(\Sigma)$, is usually called the \textsf{regularized Gauss curvature}. This is in $L^1(\Sigma, g)\,\dd V_g$ for any smooth conformal metric $g$ for just the same reason as in (\ref{eqn-conf-change-to-smooth}).
\end{def7}

\subsection{Comparison with Cut-off Methods}

   There are existing methods in the literature that essentially define the same quantity as (\ref{eqn-def-renorm-polya}), but instead of removing the balls~$B_{\varepsilon}(z_i,\tilde{g})$ entirely, they introduce a regularized metric inside. Such ideas are found in the incomplete notes of Zamolodchikov et al. \cite{ZZ} (page 86), as well as in Eskin, Kontsevich, and Zorich \cite{EKZ}, section 3.6. The cut-off method used in the latter is more mathematically rigorous, being smooth, and it avoids generating additional singularities along each~$\partial B_{\varepsilon}$ for~$\nabla \sigma$. In this subsection, we note that our definition \ref{def-renom-part-func} yields the same result as in \cite{EKZ}, and arguably, it also aligns with other regularization methods discussed in \cite{EKZ}, section 3.6.\\

For each fixed~$\delta> \delta'> 0$ we consider a smooth function~$\sreg_{\delta,\delta'}(r)$ such that
	\begin{equation}
	  \sreg_{\delta,\delta'}(r)=\left\{
	  \begin{array}{ll}
	   \log  r &\textrm{for }r\ge \delta,\\
	    \log \delta &\textrm{for }0\le r\le \delta',
	  \end{array}
	  \right.
	  \label{}
	\end{equation}
as well as
 	\begin{equation}
 \log \delta' \le  \sreg_{\delta,\delta'}(r) \le \log \delta, \quad \quad\textrm{for }0\le r\le \delta.
\end{equation}
and 
 	\begin{equation}
 \left| \partial_r \sreg_{\delta,\delta'}(r)  \right| \le C \delta'^{-1}, \quad\quad\textrm{for }0\le r\le \delta,
\end{equation}
for some constant $C$ independent of $\delta, \delta'$. 
Such a function can be constructed using a smooth cut-off function~$f:\mb{R}\lto\mb{R}$ for which~$f(t)\equiv 0$ for~$t\le 0$,~$0<f(t)<1$ for~$0<t<1$,~$f(t)\equiv 1$ for~$t\ge 1$, and posing
 	\begin{equation}
\sreg_{\delta,\delta'}(r) \defeq  \log \delta + f \Big(\frac{r- \delta'}{\delta - \delta'} \Big) \log \frac{r}{\delta} .
\end{equation}
Now given a closed Riemann surface~$\Sigma$ and~$\tilde{g}$ a generalized conformal metric representing~$D=\sum_{j=1}^p \gamma_j z_j$ with~$\gamma_j>-1$, as well as a smooth background conformal metric~$g$ such that around each singularity~$z_j$ holds (\ref{eqn-conic-sing-local-form}), and the regular metric potentials~$\varphi_j$ satisfy the Troyanov conditions (\ref{eqn-scale-metric-pot}), we introduce the regularized smooth metric~$g_{\varepsilon,\varepsilon'}$ (for~$\varepsilon> \varepsilon'> 0$ sufficiently small) 
   	\begin{equation}
	  g_{\varepsilon,\varepsilon'}\defeq \left\{
	  \begin{array}{ll}
	    \tilde{g}&\textrm{on }\Sigma\setminus\bigcup_1^p U_j,\\
	   \exp \big( 2 \gamma_j \cdot \sreg_{\delta_j,\delta'_j}(r_j) \big) \me^{2\varphi_j}g& \textrm{on }U_j,
	  \end{array}
	  \right.
	  \label{}
	\end{equation}
    where $r_j = d_g(\bullet,z_j)$, the~$U_j$'s given in definition \ref{def-singularities}, and we choose~$\delta_j = \delta_j(\varepsilon)$ to be the radius of the ball~$B_{\varepsilon}(z_j,\tilde{g})$ under the metric~$g$, and likewise $\delta'_j = \delta_j(\varepsilon')$.
	\begin{lemm}
	  In the setting as above, write~$g_{\varepsilon,\varepsilon'}=\me^{2\sigma_{\varepsilon,\varepsilon'}}g$ now with~$\sigma_{\varepsilon,\varepsilon'}\in C^{\infty}(\Sigma)$ for each~$\varepsilon>0$, and assuming $\varepsilon' \sim \varepsilon$ as $\varepsilon\to 0^+$, we have
	  \begin{equation}
	     \frac{1}{24\pi}\lim_{\varepsilon\to 0^+}\Big[ 
	     \int_{\Sigma}(|\nabla_g \sigma_{\varepsilon,\varepsilon'}|_g^2+2K_g\sigma_{\varepsilon,\varepsilon'})\,\dd V_g 
         +2\pi\sum_{i=1}^p \frac{\gamma_i^2}{1+\gamma_i}\log(\varepsilon)
         \Big] =\mathcal{R}A_{\Sigma}(\tilde{g},g).
	    \label{}
	  \end{equation}
	\end{lemm}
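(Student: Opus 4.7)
The plan is to show that the difference between the two expressions vanishes in the limit, so that the proposed cut-off definition and the metric-disk-removal definition \eqref{eqn-def-renorm-polya} produce the same renormalized anomaly. Since the two counterterms $2\pi \sum_i \gamma_i^2/(1+\gamma_i) \log(\varepsilon)$ are identical, and since by construction $B_\varepsilon(z_i,\tilde{g}) = B_{\delta_i}(z_i,g)$ while $\sigma_{\varepsilon,\varepsilon'} \equiv \sigma$ on $\Sigma \setminus \bigcup_i B_{\delta_i}(z_i,g)$, the difference reduces to
$$\sum_{i=1}^p \int_{B_{\delta_i}(z_i,g)} \big(|\nabla_g \sigma_{\varepsilon,\varepsilon'}|_g^2 + 2K_g \sigma_{\varepsilon,\varepsilon'}\big)\,\dd V_g,$$
and it suffices to prove that each summand is $o(1)$ as $\varepsilon \to 0^+$. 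Throughout, I will use that under the hypothesis $\varepsilon'\sim\varepsilon$ combined with lemma \ref{lemm-scale-dist-sing}, we have $\delta_i'/\delta_i = (\varepsilon'/\varepsilon)^{1/(1+\gamma_i)}(1+o(1)) \to 1$, so in particular $\delta_i \sim \delta_i' \to 0$.

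The curvature contribution is immediate: since $K_g$ is bounded on the compact surface $\Sigma$ and $|\sigma_{\varepsilon,\varepsilon'}| \leq |\gamma_i||\sreg_{\delta_i,\delta_i'}| + \|\varphi_i\|_{L^\infty(U_i)} = O(|\log\delta_i|)$ on $B_{\delta_i}$, while $\vol_g(B_{\delta_i}) = O(\delta_i^2)$, we get
$$\bigg|\int_{B_{\delta_i}} 2K_g \sigma_{\varepsilon,\varepsilon'}\,\dd V_g\bigg| = O\big(\delta_i^2 |\log\delta_i|\big) \longrightarrow 0.$$

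For the gradient term, I would expand $\sigma_{\varepsilon,\varepsilon'}|_{U_i} = \gamma_i \sreg_{\delta_i,\delta_i'}(r_i) + \varphi_i$ and use $|\nabla_g r_i|_g = 1$ to write
$$|\nabla_g \sigma_{\varepsilon,\varepsilon'}|_g^2 = \gamma_i^2 (\partial_r \sreg_{\delta_i,\delta_i'}(r_i))^2 + 2\gamma_i (\partial_r \sreg_{\delta_i,\delta_i'}(r_i)) \langle \nabla_g r_i, \nabla_g \varphi_i\rangle_g + |\nabla_g \varphi_i|_g^2.$$
The $\varphi_i$-terms are controlled by the Troyanov estimate \eqref{eqn-scale-metric-pot}: $|\nabla_g \varphi_i|_g = O(r_i^{2\gamma_i+1})$, so $\int_{B_{\delta_i}} |\nabla_g \varphi_i|_g^2\,\dd V_g = O(\delta_i^{4\gamma_i+4}) \to 0$. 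Combining with the bound $|\partial_r \sreg_{\delta_i,\delta_i'}| \leq C/\delta_i'$ on the annulus $\{\delta_i' \leq r_i \leq \delta_i\}$ and the fact that $\sreg_{\delta_i,\delta_i'}$ is constant inside $B_{\delta_i'}$, the cross term is bounded by $O(\delta_i^{2\gamma_i+2})$ (using $\delta_i'\sim\delta_i$), which also tends to zero since $\gamma_i > -1$.

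The main (and characteristic) term is $\gamma_i^2 \int_{B_{\delta_i}} (\partial_r \sreg_{\delta_i,\delta_i'}(r_i))^2 \,\dd V_g$. Since $\partial_r \sreg$ vanishes on $[0,\delta_i']$, the integration is effectively over the annulus $\{\delta_i' \leq r_i \leq \delta_i\}$, whose $g$-volume is $\pi(\delta_i^2 - (\delta_i')^2)(1+o(1))$. The bound $|\partial_r \sreg| \leq C/\delta_i'$ therefore yields
$$\int_{B_{\delta_i}} \gamma_i^2 (\partial_r \sreg_{\delta_i,\delta_i'}(r_i))^2 \,\dd V_g \leq \pi C^2 \gamma_i^2 \bigg[\Big(\frac{\delta_i}{\delta_i'}\Big)^2 - 1\bigg](1+o(1)),$$
and the hypothesis $\varepsilon' \sim \varepsilon$ is precisely what forces $(\delta_i/\delta_i')^2 - 1 \to 0$. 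This is the unique point at which the relation $\varepsilon'\sim\varepsilon$ is used in an essential way. The hard part — or rather, the point that requires most care — is nothing subtle beyond recognizing that this comparability hypothesis on the two cut-offs controls the gradient integral, while Troyanov's estimates on $\varphi_i$ take care of all the lower-order contributions through purely scaling considerations.
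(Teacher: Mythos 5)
Your proposal is correct and follows essentially the same route as the paper's own proof: reduce to showing the integral over each shrinking ball vanishes, split $\sigma_{\varepsilon,\varepsilon'}$ into the $\gamma_i\sreg_{\delta_i,\delta_i'}$ part plus the Troyanov potential $\varphi_i$, kill the curvature and $\varphi_i$-terms by the bounds $\sigma_{\varepsilon,\varepsilon'}=\mathcal{O}(\log\varepsilon)$ and $|\nabla_g\varphi_i|_g=\mathcal{O}(r^{2\gamma_i+1})$, and control the $\sreg$-gradient term by $|\partial_r\sreg|\le C/\delta_i'$ on the thin annulus, where $\varepsilon'\sim\varepsilon$ (hence $\delta_i'/\delta_i\to 1$) is exactly what makes it $o(1)$. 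Your bookkeeping via $(\delta_i/\delta_i')^2-1\to 0$ is equivalent to the paper's $\delta_i^{-1}\mathcal{O}(\delta_i-\delta_i')\to 0$.
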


\begin{proof}
	  Since~$g_{\varepsilon,\varepsilon'}=\tilde{g}$ on~$\Sigma\setminus \bigcup_{i=1}^p B_{\varepsilon}(z_i,\tilde{g})$ we just need to show that for each~$j$,
	  \begin{equation}
	    \int_{B_{\varepsilon}(z_j,\tilde{g})}(|\nabla_g \sigma_{\varepsilon,\varepsilon'}|_g^2+2K_g\sigma_{\varepsilon,\varepsilon'})\,\dd V_g \quad\xlongrightarrow{\varepsilon\to 0^+}\quad 0.
	    \label{}
	  \end{equation}
  We suppress the subscript~$j$ as we treat each ball individually. Now
	  \begin{equation}
	    \sigma_{\varepsilon,\varepsilon'}(r)= \gamma \cdot\sreg_{\delta,\delta'}(r) +\varphi,
	    \label{}
	  \end{equation}
	  with~$r=d_g(\bullet,z_j)$. From the fact that 
 	\begin{equation}
 \log \delta' \le  \sreg_{\delta,\delta'}(r) \le \log \delta 
\end{equation}
one finds that 
 	\begin{equation}
\sigma_{\epsilon,\epsilon'}(r) = \mathcal{O}( \log \epsilon)
\end{equation}
  and hence the integral of~$K_g \sigma_{\varepsilon,\varepsilon'}$ over the ball of radius $\epsilon$ goes to zero.  Since~$\varphi$ satisfies the Troyanov conditions (\ref{eqn-scale-metric-pot}) we have
  \begin{equation}
    \int_{B_{\varepsilon}(z_j,\tilde{g})}|\nabla_g \varphi|_g^2\,\dd V_g \lesssim \int_0^{2\pi}\int_{0}^{\delta}r^{4\gamma+2}\cdot r\,\dd r\,\dd \theta=\mathcal{O}(\delta^{4\gamma+4})\quad\xlongrightarrow{\varepsilon\to 0^+}\quad 0,
    \label{}
  \end{equation}
  taking $(r,\theta)$ to be the geodesic polar coordinates with respect to $g$. Now it is straightforward to check that 
  	 \begin{equation}
    \sup_{r\le \delta}\big|\nabla_g \sreg_{\delta,\delta'}(r)\big|_{g} =\mathcal{O}(\delta^{-1}),\quad\quad \delta\to 0^+.
    \label{}
  \end{equation}
  provided $\delta' \sim \delta$.  Now together with (\ref{eqn-scale-metric-pot}) we have
  \begin{equation}
    \int_{B_{\varepsilon}(z_j,\tilde{g})}\ank{\nabla_g \sreg_{\delta,\delta'},\nabla_g\varphi}_g\,\dd V_g \lesssim \delta^{-1}\int_0^{2\pi}\int_{0}^{\delta}r^{2\gamma+1}\cdot r\,\dd r\,\dd \theta=\mathcal{O}(\delta^{2\gamma+2})\quad\xlongrightarrow{\varepsilon\to 0^+}\quad 0,
    \label{}
  \end{equation}
	  and finally, again by the fact that $\delta'\sim\delta$,
	  \begin{equation}
	    \int_{B_{\varepsilon}(z_j,\tilde{g})} \big|\nabla_g \sreg_{\delta,\delta'}(r)\big|^2\,\dd V_g \lesssim \frac{1}{\delta^2}\int_{\{\delta'\le r\le \delta\}}^{} \dd V_g
	    =\frac{1}{\delta}\mathcal{O}(\delta - \delta')\quad\xlongrightarrow{\delta \to 0^+}\quad 0. \label{eqn-comparison-cut-off-conv}
	  \end{equation}
	  which concludes the proof.
	\end{proof}

    \begin{def7}
    This lemma specifically explains why no boundary terms are included in the anomaly formula from definition \ref{def-renorm-polya-conic}. We are working over~$\Sigma\setminus \bigcup_{i=1}^p B_{\varepsilon}(z_i,\tilde{g})$, which is a surface with boundary, and the boundary is generally not geodesic. The reason we don’t include boundary terms is that we could have alternatively opted for the method described in this subsection, which only considers the closed surface~$\Sigma$, and still arrived at the same outcome. Moreover, as mentioned in the "future work" section of the introduction, we intend to investigate various settings involving surfaces with boundary in a future revision of the manuscript.
	\end{def7}

\begin{def7}
	  Imaginably, one could allow even more flexibility in the choice of the cut-off function~$\sreg_{\delta,\delta'}$ to gain even better convergence in (\ref{eqn-comparison-cut-off-conv}). Clearly one just needs to make the infinitesimal annuli with nonzero gradient~$\nabla \sigma_{\varepsilon,\varepsilon'}$ thin enough. Nevertheless our choice suffices.
	\end{def7}

\section{Main Proof}

\subsection{Proof}\label{sec-main-proof}

\begin{prop}\label{prop-main-conical-scaling}
	  Let~$\Sigma$ be a closed Riemann surface and~$\tilde{g}$ a generalized conformal metric representing~$D=\sum_{j=1}^p \gamma_j z_j$ with~$\gamma_j>-1$. Now suppose~$h\in C^{\infty}(\Sigma)$ and consider the scaled metric~$\me^{2h}\tilde{g}$ on~$\Sigma\setminus\supp D$. Then 
      \begin{equation}
	 \mathcal{R}A_{\Sigma}(e^{2h}\tilde{g},g) -\mathcal{R}A_{\Sigma}\left(\tilde{g},g\right)= \mathcal{R}A_{\Sigma}(e^{2h}\tilde{g},\tilde{g})  - \frac{1}{12} \sum_j   \frac{\gamma_j(\gamma_j+2)}{\gamma_j+1} h(z_j),
     \label{eqn-main-gen-conic-interp-cocycle}
	  \end{equation}
	  where $\mathcal{R}A_{\Sigma}(e^{2h}\tilde{g},\tilde{g}) $ is defined by (\ref{eqn-def-renorm-polya-conic}) and $\mathcal{R}A_{\Sigma}(e^{2h}\tilde{g},g) $, $\mathcal{R}A_{\Sigma}\left(\tilde{g},g\right)$ by definition \ref{def-main-renorm-anomaly}.
	\end{prop}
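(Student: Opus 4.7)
The strategy is to unravel the three renormalized anomalies in terms of integrals against the smooth background $g$, carefully track the mismatch between the cut-off regions used in the two definitions $\mathcal{R}A_{\Sigma}(e^{2h}\tilde g,g)$ and $\mathcal{R}A_{\Sigma}(\tilde g,g)$, and compare the result to $\mathcal{R}A_{\Sigma}(e^{2h}\tilde g,\tilde g)$ rewritten in the $g$-frame via Liouville's equation \eqref{eqn-liouville-eqn} and the conformal invariance of $|\nabla_{\bullet}h|_{\bullet}^{2}\,dV_{\bullet}$.

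First, write $\tilde g=e^{2\sigma}g$ on $\Sigma\setminus\supp D$ so that $e^{2h}\tilde g=e^{2(\sigma+h)}g$, and note that near each $z_j$ the function $\sigma$ has the form $\sigma=\gamma_j\log d_g(\cdot,z_j)+\varphi_j$ satisfying the hypotheses of Section \ref{sec-int-by-part}. Denote $\tilde D_j^{\varepsilon}:=B_{\varepsilon}(z_j,\tilde g)$ and $D_j^{\varepsilon}:=B_{\varepsilon}(z_j,e^{2h}\tilde g)$, and expand
\[
|\nabla_g(\sigma+h)|_g^{2}+2K_g(\sigma+h)=\bigl[|\nabla_g\sigma|_g^{2}+2K_g\sigma\bigr]+\bigl[|\nabla_g h|_g^{2}+2K_gh\bigr]+2\langle\nabla_g\sigma,\nabla_g h\rangle_g.
\]
Integrating over $\Sigma\setminus\bigcup_jD_j^{\varepsilon}$ and adding the prescribed counterterm $2\pi\sum_j\tfrac{\gamma_j^{2}}{\gamma_j+1}\log\varepsilon$ as in Definition \ref{def-main-renorm-anomaly}, each of the three pieces will be treated in turn.

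The key step concerns the first bracket. The region of integration differs from the one used in $\mathcal{R}A_{\Sigma}(\tilde g,g)$ by the annular region between $\tilde D_j^{\varepsilon}$ and $D_j^{\varepsilon}$; by Corollary \ref{cor-ratio-radius} these two disks have $g$-radii $\delta(\varepsilon)$ and $\delta_h(\varepsilon)$ with $\delta(\varepsilon)/\delta_h(\varepsilon)\to \exp(h(z_j)/(\gamma_j+1))$. The smooth part $\int 2K_g\sigma\,dV_g$ over a shrinking annulus vanishes because $\sigma$ is locally integrable, while the singular part $\int|\nabla_g\sigma|_g^{2}\,dV_g$ over such an annulus is handled by Lemma \ref{lemm-quadratic-sing}, yielding a finite contribution $\tfrac{2\pi\gamma_j^{2}}{\gamma_j+1}h(z_j)$ at each $z_j$ (with a sign which is automatically correct since Lemma \ref{lemm-quadratic-sing} is symmetric in the roles of the two radii). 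Thus
\[
24\pi\,\mathcal{R}A_{\Sigma}(e^{2h}\tilde g,g)-24\pi\,\mathcal{R}A_{\Sigma}(\tilde g,g)=\int_{\Sigma}\bigl[|\nabla_g h|_g^{2}+2K_gh\bigr]dV_g+2\lim_{\varepsilon\to0}\int_{\Sigma\setminus\cup D_j^{\varepsilon}}\langle\nabla_g\sigma,\nabla_g h\rangle_g\,dV_g+\sum_j\frac{2\pi\gamma_j^{2}}{\gamma_j+1}h(z_j).
\]

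For the cross term, Lemma \ref{lemm-linear-sing-green-sto} (which applies regardless of whether the removed disks are metric disks for $g$, $\tilde g$, or $e^{2h}\tilde g$, since only the shrinking of their $g$-radii to zero matters) gives the identity
\[
\lim_{\varepsilon\to0}\int_{\Sigma\setminus\cup D_j^{\varepsilon}}\langle\nabla_g\sigma,\nabla_g h\rangle_g\,dV_g=-\int_{\Sigma\setminus\supp D}h\,\Delta_g\sigma\,dV_g-2\pi\sum_j\gamma_j h(z_j),
\]
where the bulk integral converges absolutely by Lemma \ref{lemm-bounded-lap-log}. On the other hand, starting from Definition \ref{def-renorm-polya-conic} and applying the conformal invariance of $|\nabla_{\bullet}h|_{\bullet}^{2}\,dV_{\bullet}$ together with Liouville's equation $K_{\tilde g}=e^{-2\sigma}(-\Delta_g\sigma+K_g)$, one obtains
\[
24\pi\,\mathcal{R}A_{\Sigma}(e^{2h}\tilde g,\tilde g)=\int_{\Sigma}|\nabla_g h|_g^{2}\,dV_g+2\int_{\Sigma}K_gh\,dV_g-2\int_{\Sigma\setminus\supp D}h\,\Delta_g\sigma\,dV_g.
\]
Subtracting these two displayed expressions produces
\[
24\pi\bigl[\mathcal{R}A_{\Sigma}(e^{2h}\tilde g,g)-\mathcal{R}A_{\Sigma}(\tilde g,g)-\mathcal{R}A_{\Sigma}(e^{2h}\tilde g,\tilde g)\bigr]=\sum_j\Bigl(\frac{2\pi\gamma_j^{2}}{\gamma_j+1}-4\pi\gamma_j\Bigr)h(z_j),
\]
and the elementary algebraic simplification $\frac{\gamma_j^{2}}{\gamma_j+1}-2\gamma_j=-\frac{\gamma_j(\gamma_j+2)}{\gamma_j+1}$ gives exactly the coefficient $-\tfrac{1}{12}\tfrac{\gamma_j(\gamma_j+2)}{\gamma_j+1}$ claimed in \eqref{eqn-main-gen-conic-interp-cocycle}.

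The only delicate point in this plan is the bookkeeping around the annular correction in the first step: one must confirm that the renormalization counterterm $2\pi\sum_j\tfrac{\gamma_j^{2}}{\gamma_j+1}\log\varepsilon$ is \emph{identical} in both $\mathcal{R}A_{\Sigma}(e^{2h}\tilde g,g)$ and $\mathcal{R}A_{\Sigma}(\tilde g,g)$ (which is the case by Definition \ref{def-main-renorm-anomaly}, since the coefficient depends only on $\gamma_j$), so that the entire difference of counterterms vanishes and the annular contribution obtained from Lemma \ref{lemm-quadratic-sing} survives unshielded into the final answer. Once this is in place, every other step reduces to applying the lemmas of Section \ref{sec-geo-lemm} in the order described.
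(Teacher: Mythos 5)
Your argument is correct, but it takes a genuinely different route from the paper. The paper first reduces to a convenient reference metric via Lemma \ref{lemm-consist}, then evaluates each renormalized anomaly in closed form through Lemma \ref{lemm-comp-renorm-anom} (i.e.\ Corollary \ref{cor-quad-blow-tilde}), subtracts, and converts the leftover bulk term $\int_{\Sigma\setminus\supp D}\left(h\,\Delta_g\sigma-\sigma\,\Delta_g h\right)\dd V_g$ into $-2\pi\sum_j\gamma_j h(z_j)$ by the Poincar\'e--Lelong lemma (Lemma \ref{lemm-app-poincare-lel}). You instead work directly at the level of the cut-off integrals: the difference of the two renormalizations is traced to the annular mismatch between the $\tilde g$-disks and the $\me^{2h}\tilde g$-disks, whose contribution $\tfrac{2\pi\gamma_j^2}{\gamma_j+1}h(z_j)$ you extract from Corollary \ref{cor-ratio-radius} together with Lemma \ref{lemm-quadratic-sing} (your sign bookkeeping for $h(z_j)\lessgtr 0$ is right, since the signed difference always equals $2\pi\gamma_j^2\log(\delta/\delta_h)$), while the cross term is handled by Lemma \ref{lemm-linear-sing-green-sto}; this bypasses both Lemma \ref{lemm-comp-renorm-anom} and Poincar\'e--Lelong, and your cross-term/absolute-integrability observation correctly justifies replacing the $\me^{2h}\tilde g$-disks by $g$-disks there. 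What the paper's route buys is compactness and the explicit values of each anomaly (including the $\log(\gamma_j+1)$ and $\varphi_j(z_j)$ terms, which are needed elsewhere, e.g.\ to compare with Kalvin's formula); what your route buys is a more transparent geometric origin of the quadratic term $\tfrac{\gamma_j^2}{\gamma_j+1}h(z_j)$ --- indeed it is exactly the mechanism the paper only describes heuristically in Remark \ref{rem-extra-quad-term}, here promoted to the actual proof. Your final algebra and the cancellation of the counterterms are correct.

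One point you should make explicit: you assert that near each $z_j$ the potential $\varphi_j$ "satisfies the hypotheses of Section \ref{sec-int-by-part}", but the sharpened estimates (\ref{eqn-scale-metric-pot}) --- on which Corollary \ref{cor-ratio-radius}, Lemma \ref{lemm-quadratic-sing} and the second part of Lemma \ref{lemm-linear-sing-green-sto} all rely --- are not automatic for an arbitrary smooth conformal reference $g$; Definition \ref{def-singularities} only gives continuity of $\varphi_j$ and $\Delta_g\varphi_j\in L^1$, and the Troyanov form is guaranteed only in the special coordinates of Lemma \ref{lemm-reg-metric-pot}. The fix is the paper's own first line: since by Lemma \ref{lemm-consist} both sides of (\ref{eqn-main-gen-conic-interp-cocycle}) are unchanged when the reference metric $g$ is replaced by another smooth conformal metric (the smooth cocycle $A_\Sigma(g_1,g_0)$ cancels in the difference $\mathcal{R}A_{\Sigma}(\me^{2h}\tilde g,g)-\mathcal{R}A_{\Sigma}(\tilde g,g)$, and $\mathcal{R}A_{\Sigma}(\me^{2h}\tilde g,\tilde g)$ does not depend on $g$ at all), you may assume without loss of generality that $g$ is chosen so that (\ref{eqn-scale-metric-pot}) holds near each cone point. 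With that reduction inserted at the start, your proof is complete.
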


    \begin{def7}\label{rem-conic-cocycle}
  From this we deduce that if we consider a conformal field theory with central charge~$c$ defined on $\Sigma$ equipped with the metric $\tilde{g}$ as above, whose partition function $\mathcal{Z}(\Sigma,\tilde{g})$ is defined in definition \ref{def-renom-part-func}, then we have
  \begin{equation}
	    \frac{\mathcal{Z}(\Sigma, e^{2h} \tilde{g})}{\mathcal{Z}(\Sigma, \tilde{g})} = 
	    \exp\Big(c \, \mathcal{R}A_{\Sigma}(\me^{2h}\tilde{g},\tilde{g}) -\sum_{j=1}^p h(z_j)\Delta_{j} \Big), 
	    \label{eqn-main-general-conic}
	  \end{equation}
	  where
	  \begin{equation}
	    \Delta_{j} \defeq \frac{c}{12} \frac{\gamma_j(\gamma_j+2)}{\gamma_j+1}
	    \label{}
	  \end{equation}
	  are the \textsf{conformal weights} or \textsf{scaling dimensions} associated to conical singularities.
\end{def7}

    First we point out (directly from corollary \ref{cor-quad-blow-tilde})

\begin{lemm}\label{lemm-comp-renorm-anom}
  Let~$g$ be a smooth conformal metric on~$\Sigma$ so that near the cone points~$z_j$ the regular metric potentials~$\varphi_j$ of~$\tilde{g}$ against~$g$ satisfy (\ref{eqn-scale-metric-pot}).  Then
  \begin{equation}
    \mathcal{R}A_{\Sigma}(\tilde{g},g)=\frac{1}{12}\Big[-\sum_{j=1}^p\frac{ \gamma_j^2\log(\gamma_j+1)}{\gamma_j+1}+\sum_{j=1}^p \Big( \frac{\gamma_j^2}{\gamma_j+1}-\gamma_j \Big)\varphi_j(z_j)\Big]
    +\frac{1}{24\pi}\int_{\Sigma\setminus \supp D}^{} \sigma(-\Delta_g \sigma+2K_g)\,\dd V_g,
    \label{}
  \end{equation}
  where~$\tilde{g}=\me^{2\sigma}g$ with~$\sigma\in C^{\infty}(\Sigma\setminus \supp D)$.\hfill $\Box$
\end{lemm}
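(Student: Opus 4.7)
The plan is to derive the formula as a direct algebraic consequence of corollary \ref{cor-quad-blow-tilde}, summed over the several cone points, together with a separate (trivial) treatment of the Gauss curvature term. The only work beyond bookkeeping is to verify that the bulk integral $\int_{\Sigma\setminus\supp D}\sigma(-\Delta_g\sigma+2K_g)\,\dd V_g$ is absolutely convergent, so that no further limit is required on it.

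I would first split the defining expression for $\mathcal{R}A_\Sigma(\tilde g,g)$ into a gradient piece and a curvature piece. Since $K_g$ is smooth and $\sigma = \gamma_j\log d_g(\bullet,z_j)+\varphi_j$ near each $z_j$ is locally $L^p$ for every $p<\infty$, the curvature piece $\frac{1}{12\pi}\int_\Sigma K_g\sigma\,\dd V_g$ converges absolutely and the $\varepsilon$-limit may be dropped on this term at once. Moreover $\Delta_g\sigma=\gamma_j\Delta_g\log d_g(\bullet,z_j)+\Delta_g\varphi_j$ near $z_j$ is $L^1_{\mathrm{loc}}$ by lemma \ref{lemm-bounded-lap-log} and the admissibility condition in definition \ref{def-singularities}, so the same reasoning gives $\sigma\Delta_g\sigma\in L^1(\Sigma,g)$. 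This justifies writing the right-hand side of the claimed identity as an honest convergent integral.

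Next I would treat the gradient piece. The one-cone-point identity of corollary \ref{cor-quad-blow-tilde} is proved by applying lemma \ref{lemm-basic-renorm-log} and is purely local around the chosen cone point; the boundary term it produces on $\partial B_{\varepsilon}(z_0,\tilde g)$ depends only on the local Troyanov data at $z_0$. Running the same Green–Stokes computation on $\Sigma_\varepsilon\defeq\Sigma\setminus\bigcup_j B_\varepsilon(z_j,\tilde g)$ — whose boundary is the disjoint union of the small $\tilde g$-geodesic circles around the $z_j$'s — simply sums the individual contributions, yielding
\begin{equation*}
\int_{\Sigma_\varepsilon}|\nabla_g\sigma|_g^2\,\dd V_g = -\sum_j\frac{2\pi\gamma_j^2}{\gamma_j+1}\bigl[\log\varepsilon+\log(\gamma_j+1)-\varphi_j(z_j)\bigr]-\sum_j 2\pi\gamma_j\varphi_j(z_j)-\int_{\Sigma\setminus\supp D}\sigma\Delta_g\sigma\,\dd V_g+o(1).
\end{equation*}

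Finally I would add the counterterm $2\pi\sum_j\frac{\gamma_j^2}{\gamma_j+1}\log\varepsilon$ from the definition of $\mathcal{R}A_\Sigma(\tilde g,g)$, which exactly cancels the $\log\varepsilon$ divergences, divide by $24\pi$, and collect the $\varphi_j(z_j)$ coefficients using $\frac{\gamma_j^2}{\gamma_j+1}\varphi_j(z_j)-\gamma_j\varphi_j(z_j)=\bigl(\frac{\gamma_j^2}{\gamma_j+1}-\gamma_j\bigr)\varphi_j(z_j)$. Adding the curvature piece combines $-\int\sigma\Delta_g\sigma\,\dd V_g$ and $\int 2K_g\sigma\,\dd V_g$ into the single term $\int\sigma(-\Delta_g\sigma+2K_g)\,\dd V_g$, matching the stated formula. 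There is no genuine obstacle here beyond keeping track of the two places $\varphi_j(z_j)$ enters (once inside the Troyanov distance asymptotics through $\log(\gamma_j+1)$, and once through the lower-order boundary term $-2\pi\gamma\varphi(z_0)$ of lemma \ref{lemm-basic-renorm-log}); the slight subtlety to highlight in the write-up is simply this cancellation pattern.
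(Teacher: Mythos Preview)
Your proposal is correct and follows exactly the approach the paper intends: the lemma is stated in the paper as a direct consequence of corollary \ref{cor-quad-blow-tilde}, and you have spelled out the multi-cone-point summation and the bookkeeping that the paper leaves implicit. One small sharpening: to conclude $\sigma\Delta_g\sigma\in L^1$ you need slightly more than $\Delta_g\sigma\in L^1_{\mathrm{loc}}$, since $\sigma$ is unbounded; the clean way is to use the bounded-curvature hypothesis in definition \ref{def-singularities}, which via Liouville's equation gives $\Delta_g\sigma=K_g-\me^{2\sigma}K_{\tilde g}=\mathcal{O}(r^{2\gamma_j})$ near $z_j$, whence $\sigma\Delta_g\sigma=\mathcal{O}(r^{2\gamma_j}\log r)\in L^1$.
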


\begin{proof}
  [Proof of proposition \ref{prop-main-conical-scaling}.]  It follows from lemma \ref{lemm-consist} that it is sufficient to prove (\ref{eqn-main-gen-conic-interp-cocycle}) for a specific reference smooth metric $g$. Choose a smooth conformal metric~$g$ with the same conditions as in lemma \ref{lemm-comp-renorm-anom}. Write~$\tilde{g}=\me^{2\sigma}g$ then~$\me^{2h}\tilde{g}=\me^{2(h+\sigma)}g$,~$\sigma\in C^{\infty}(\Sigma\setminus \supp D)$. Using lemma \ref{lemm-comp-renorm-anom} and relation (\ref{eqn-conf-change-to-smooth}) one obtains immediately 
  \begin{equation}
	 \mathcal{R}A_{\Sigma}(e^{2h}\tilde{g},g) - \mathcal{R}A_{\Sigma}(e^{2h}\tilde{g},\tilde{g}) - \mathcal{R}A_{\Sigma}\left(\tilde{g},g\right)  = - \frac{1}{12}\sum_j   \frac{\gamma_j}{\gamma_j+1} h(z_j) + \frac{1}{24 \pi} \int_{\Sigma\setminus \supp D} \left( h \Delta_g \sigma -  \sigma \Delta_g h \right) \,\dd V_g.
	  \end{equation}
What we desire follows then from applying lemma \ref{lemm-app-poincare-lel} to $\sigma$ which has the local form (\ref{eqn-local-form-conf-factor}).
\end{proof}

\begin{def7}\label{rem-extra-quad-term}
  In view of the Poincar\'e-Lelong lemma it would be instructive to compare the result with a ``naive guess'' for the Polyakov formula by pretending that the conical singularities did no more than just putting a log term into the metric potential and hence creating a delta function in the curvature. This is to say we suppose
  \begin{equation}
    \log\Big(
    \frac{\mathcal{Z}(\Sigma, e^{2h} \tilde{g})}{\mathcal{Z}(\Sigma, \tilde{g})} \Big)\textrm{``}=\textrm{''}\frac{c}{24\pi}\int_{\Sigma}(|\nabla_{\tilde{g}} h|_{\tilde{g}}^2+2K_{\tilde{g}}h)\,\dd V_{\tilde{g}}.
    \label{}
  \end{equation}
  Now taking into account the delta functions produced by~$-\Delta_g \sigma$ in (\ref{eqn-conf-change-to-smooth}) over~$\supp D$, we find
  \begin{equation}
    \int_{\Sigma}(|\nabla_{\tilde{g}} h|_{\tilde{g}}^2+2K_{\tilde{g}}h)\,\dd V_{\tilde{g}} =\int_{\Sigma\setminus\supp D}(|\nabla_{\tilde{g}} h|_{\tilde{g}}^2+2K_{\tilde{g}}h)\,\dd V_{\tilde{g}} -4\pi\sum_{j=1}^p \gamma_j h(z_j).
    \label{}
  \end{equation}
  Therefore the extra piece that we get in the actual formula is the quadratic term~$\sum_j \frac{2\pi\gamma_j^2}{\gamma_j+1}h(z_j)$. Going back to the definitions, we could see that this term originates as the asymptotic Dirichlet energy of the metric potential~$\sigma$ on a shrinking ``scaled annulus'', which we singled out as lemma \ref{lemm-quadratic-sing} above (cf.\ corollary \ref{cor-ratio-radius}).
\end{def7}

\begin{prop}\label{prop-main-app-ram-cov}
	  Let~$\Sigma_d$,~$\Sigma$ be closed Riemann surfaces, with a smooth conformal metric~$g$ on~$\Sigma$, and~$f:\Sigma_d\lto \Sigma$ a ramified $d$-sheeted holomorphic map, whose critical values are~$w_1$, \dots,~$w_p$. Consider a conformal field theory with central charge~$c$ whose partition function is denoted~$\mathcal{Z}$. Pick~$h\in C^{\infty}(\Sigma)$ on~$\Sigma$. Then under the definition \ref{def-renom-part-func} for $\mathcal{Z}(\Sigma_d, f^* e^{2h} g)$ and $\mathcal{Z}(\Sigma_d, f^* g)$ we have
	  \begin{equation}
	    \frac{\mathcal{Z}(\Sigma_d, f^* e^{2h} g)}{\mathcal{Z}(\Sigma, e^{2h} g)^d} = e^{- \sum_j  h(w_j)\Delta_{j}} \frac{\mathcal{Z}(\Sigma_d, f^* g)}{\mathcal{Z}(\Sigma, g)^d}, 
	    \label{}
	  \end{equation}
	  where
	  \begin{equation}
	    \Delta_{j} \defeq \frac{c}{12}  \sum_{z \in f^{-1}(w_j)}\Big(\ord_f(z) - \frac{1}{\ord_f(z)} \Big)
	    \label{}
	  \end{equation}
	  is the \textsf{conformal weight} or \textsf{scaling dimension} associated to the point $w_j$.
	\end{prop}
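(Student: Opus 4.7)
The plan is to apply the general conical anomaly formula from Remark \ref{rem-conic-cocycle} (equation (\ref{eqn-main-general-conic})) on $\Sigma_d$ to the pulled-back conical metric $\tilde{g}:=f^{*}g$ with smooth conformal factor $e^{2 f^{*}h}$, and combine with the ordinary Polyakov formula (\ref{eqn-ord-polya-anom}) on the base $\Sigma$. First I identify the divisor carried by $f^{*}g$. At a critical point $z \in \Sigma_d$ with $k:=\ord_f(z)>1$ and $f(z)=w_j$, choosing holomorphic charts in which $f$ is $z \mapsto z^k$ and $g = e^{2H(w)}|dw|^{2}$ gives $f^{*}g = |z|^{2(k-1)} e^{2(H\circ f + \log k)}|dz|^{2}$, so $f^{*}g$ represents the divisor $D_d := \sum_{z \in \mathrm{Crit}(f)}(\ord_f(z)-1)\,z$ with exponent $\gamma_z = k-1$ at $z$. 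A direct computation gives
\begin{equation*}
\frac{\gamma_z(\gamma_z+2)}{\gamma_z+1} \;=\; \frac{(k-1)(k+1)}{k} \;=\; k-\tfrac{1}{k},
\end{equation*}
and since regular preimages contribute $0$, summing over $z \in f^{-1}(w_j)$ reproduces exactly the weight $\Delta_j$ of the statement.

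Next I would compute $\mathcal{R}A_{\Sigma_d}(e^{2f^{*}h}f^{*}g,\,f^{*}g)$, which by Definition~\ref{def-renorm-polya-conic} is an absolutely convergent integral of $|\nabla_{f^{*}g} f^{*}h|^{2}_{f^{*}g} + 2 K_{f^{*}g}\cdot f^{*}h$ against $dV_{f^{*}g}$ over $\Sigma_d \setminus \mathrm{Crit}(f)$. Away from $\mathrm{Crit}(f)$ the map $f$ is a local biholomorphism (hence a local isometry from $(\Sigma_d, f^{*}g)$ to $(\Sigma, g)$), so the pointwise identities $|\nabla_{f^{*}g} f^{*}h|^{2}_{f^{*}g} = f^{*}(|\nabla_{g} h|^{2}_{g})$, $K_{f^{*}g} = f^{*}K_g$, and $dV_{f^{*}g} = f^{*}dV_g$ all hold. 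Since $f$ is $d$-sheeted (counting with multiplicity) and the ramification locus has measure zero, the standard change-of-variables formula for branched covers gives $\int_{\Sigma_d \setminus \mathrm{Crit}(f)} f^{*}\omega = d\int_{\Sigma} \omega$ for any integrable 2-form $\omega$ on $\Sigma$. Applied to $\omega = (|\nabla_g h|^{2}_g + 2 K_g h)\,dV_g$, this yields the key identity
\begin{equation*}
\mathcal{R}A_{\Sigma_d}(e^{2f^{*}h}f^{*}g,\,f^{*}g) \;=\; d\cdot A_{\Sigma}(e^{2h}g,\,g).
\end{equation*}

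Finally, (\ref{eqn-main-general-conic}) applied to $(\Sigma_d, f^{*}g, f^{*}h)$ gives
\begin{equation*}
\frac{\mathcal{Z}(\Sigma_d, f^{*}e^{2h}g)}{\mathcal{Z}(\Sigma_d, f^{*}g)} \;=\; \exp\Big(c\,\mathcal{R}A_{\Sigma_d}(e^{2f^{*}h}f^{*}g, f^{*}g) \;-\; \sum_{j} h(w_j)\,\Delta_j\Big),
\end{equation*}
after grouping critical points by their image. Meanwhile the smooth Polyakov anomaly on the base, raised to the $d$-th power, gives $\mathcal{Z}(\Sigma, e^{2h}g)^d/\mathcal{Z}(\Sigma, g)^d = \exp(c\,d\,A_{\Sigma}(e^{2h}g, g))$. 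By the identity of the previous paragraph the two exponentials containing $A_\Sigma$ cancel when one forms the doubly relative ratio, leaving precisely the claimed formula.

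The calculation is thus essentially a bookkeeping exercise once one has Remark~\ref{rem-conic-cocycle} in hand; the only real point requiring care is justifying that $\mathcal{R}A_{\Sigma_d}$ splits cleanly as $d$ times the smooth anomaly on $\Sigma$ despite the conical locus. I expect this to be the main (though mild) obstacle: one must confirm that the integrand on $\Sigma_d$ is genuinely the pointwise pullback away from the critical set and that the neighborhoods of the critical points contribute nothing extra, which follows because $|\nabla_g h|^{2}_g$ and $K_g h$ are smooth on $\Sigma$ while $dV_{f^{*}g} \sim |z|^{2(k-1)}|dz|^2$ near each critical point is integrable, so removing the finite critical set of measure zero before pulling back is harmless.
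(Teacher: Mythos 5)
Your proposal is correct and follows essentially the same route as the paper: apply the conical Polyakov formula of Remark \ref{rem-conic-cocycle} (i.e.\ Proposition \ref{prop-main-conical-scaling}) on $(\Sigma_d, f^*g)$ with the smooth factor $f^*h$, and reduce everything to the identity $\mathcal{R}A_{\Sigma_d}(f^*\me^{2h}g, f^*g)=d\cdot A_{\Sigma}(\me^{2h}g,g)$, proved by pulling back the conformally covariant integrand through the unbranched $d$-fold covering away from the finitely many preimages of the critical values. Your explicit check that $\gamma_z=\ord_f(z)-1$ gives $\frac{\gamma_z(\gamma_z+2)}{\gamma_z+1}=\ord_f(z)-\frac{1}{\ord_f(z)}$ and the grouping of cone points by their image under $f$ is exactly the bookkeeping the paper leaves implicit.
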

\begin{proof}
	  Following proposition \ref{prop-main-conical-scaling}, the only thing we need to check is
	  \begin{equation}
      \mathcal{R}A_{\Sigma_d}\big(f^*\me^{2h}g,f^*g\big)= d \cdot A_{\Sigma}\big(\me^{2h}g,g \big),
	    \label{}
	  \end{equation}
      where we recall that the left hand side is defined by (\ref{eqn-def-renorm-polya-conic}). Indeed, let $\{z_1,\dots,z_q\}$ denote the preimage of the critical values $\{w_1,\dots,w_p\}$ of $f$. Then on~$\Sigma_d\setminus \{z_1,...,z_q\}$ the map~$f$ is an unbranched~$d$-fold covering, we have
\begin{equation}
  \int_{\Sigma_d\setminus \{z_1,...,z_q\}}^{}
  \big(|\nabla_{f^*g}f^*h|_{f^*g}^2 +2K_{f^*g}\cdot f^*h \big)\,\dd V_{f^*g}=d\cdot \int_{\Sigma\setminus \{w_1,...,w_p\}}^{}
\big(|\nabla_{g}h|_{g}^2 +2K_{g}\cdot h \big)\,\dd V_{g},
  \label{}
\end{equation}
 and we finish the proof.
	\end{proof}

    \begin{def7}\label{rem-diffeo-inv}
  Note that~$\mathcal{Z}(\Sigma_d, f^* g)/\mathcal{Z}(\Sigma, g)^d$ is diffeomorphism invariant as a function of the critical values of~$f$ because~$\mathcal{Z}(\Sigma_d, f^*(\Psi^* g))/\mathcal{Z}(\Sigma,\Psi^* g)^d=\mathcal{Z}(\Sigma_d, (\Psi\circ f)^* g)/\mathcal{Z}(\Sigma, g)^d$ if~$\Psi:\Sigma\lto \Sigma$ is a diffeomorphism, for the ordinary~$\mathcal{Z}(\Sigma,g)$ is so invariant by definition.
\end{def7}

    \begin{corr}
  Consider a quantum system in the sense of subsection \ref{sec-main-replica} defined on a circle~$\mb{S}^1_L$ of perimeter~$L$, with central charge~$c$. Let~$A\subset \mb{S}^1_L$ be an interval of length~$\ell$ and let~$\rho$ be the \textsf{vacuum state} associated to this system. Then under the \textsf{replica interpretation} we have
  \begin{equation}
    \ttr_{\mathcal{H}_A}\big( \ttr_{A^c}(\rho)^d\big) =C \Big( \frac{L}{\pi}\sin \frac{\pi\ell}{L} \Big)^{-\frac{c}{6}\left( d-\frac{1}{d} \right)},
    \label{}
  \end{equation}
  where~$C$ is the same constant as in lemma \ref{lemm-two-pt-func}.
\end{corr}

\begin{proof}
  We adopt the interpretation of example \ref{exp-ground-state-disj-cob}. Then the surface~$\check{\Sigma}$ corresponds to a re-scaled Fubini-Study Riemann sphere (of radius~$L/2\pi$) where~$\mb{S}^1_L$ embeds isometrically as the equator. By proposition \ref{prop-main-app-ram-cov} we could first assume~$L=2\pi$. Suppose the end points of~$A$ correspond to~$0$ and~$z_0\in \wh{\mb{C}}$. Then the surface~$\check{\Sigma}_d$ is topologically still~$\wh{\mb{C}}$ but equipped with the pull-back of the Fubini-Study metric under a holomorphic map~$f_d:\wh{\mb{C}}\lto \wh{\mb{C}}$ which ramifies exactly at~$0$ and~$z_0$ both with order~$d$. Such a map could be given by (but not related to final result)
  \begin{equation}
    f_d(z)\defeq \frac{z_0\big( \frac{z}{z-z_0} \big)^d}{\big( \frac{z}{z-z_0} \big)^d -1}=\frac{z_0 z^d}{z^d-(z-z_0)^d}.
    \label{}
  \end{equation}
  Now we apply a smooth scaling~$\me^{2h}$ with~$h=\log(L/ 2\pi)$ on the equator, while making $\me^{2h}g_{\mm{FS}}$ flat in a thin neighborhood of the equator, in accordance with the requirement for vacuum state explained in example \ref{exp-ground-state-disj-cob}. The latter is always possible for Fubini-Study since~$4(1+|z|^2)^{-2}=|z|^{-2}+\mathcal{O}((1-|z|)^2)$. Then by proposition \ref{prop-main-app-ram-cov}, lemma \ref{lemm-two-pt-func} (and remark \ref{rem-diffeo-inv}) we obtain
  \begin{align*}
    \ttr_{\mathcal{H}_A}\big( \ttr_{A^c}(\rho)^d\big)&=\frac{\mathcal{Z}(\check{\Sigma}_d)}{\mathcal{Z}(\check{\Sigma})^d}=C\me^{-h(0)\Delta_d}\me^{-h(z_0)\Delta_d}\sin\Big( \frac{1}{2}d_{\mm{FS}}(0,z_0) \Big)^{-2\Delta_d}\\
    &=C\Big( \frac{L}{\pi}\sin \frac{\pi\ell}{L} \Big)^{-2\Delta_d},
  \end{align*}
  with~$\Delta_d=\frac{c}{12}(d-\frac{1}{d})$, giving us the result.
\end{proof}

\subsection{Comments on Relation to Literature}\label{sec-literature}
\begin{def7}
  [comparison with a result of V.\ Kalvin \cite{Kalvin}] \label{rem-rel-kal} We note here that V.\ Kalvin has obtained a closely related result in \cite{Kalvin} corollary 1.3(1) for the~$\zeta$-determinant of the Friederichs Laplacian under the conical metric with zero ``boundary condition'' at the cone points. As far as the Polyakov formula is concerned, the~$\zeta$-determinant is no different from a general CFT partition function in the sense of definition \ref{def-cft}, except for a constant. So our proposition \ref{prop-main-conical-scaling} coincides with the case of his result when the two divisors coincide. If they don't coincide, one could readily derive the corresponding result out of our method following lemma \ref{lemm-comp-renorm-anom}. However, in that case, the constants (last term in \cite{Kalvin} eqn.\ (1.8)) would be different, as our lemma \ref{lemm-comp-renorm-anom} only agrees with \cite{Kalvin} eqn.\ (1.1) upto the integral term and the term involving the regular metric potential. Supposedly, the extra constants in \cite{Kalvin} come from the specific Friederichs extension.

  On a deeper level, the proof of \cite{Kalvin} proceeds also by removing disks but then using the BFK decomposition formula. One key point is to show that the blow-up order of~$\log\detz \Delta_{D_{\varepsilon}(z_j)}$ of the Friederichs Dirichlet Laplacian~$\Delta_{D_{\varepsilon}(z_j)}$ on the shrinking disks~$D_{\varepsilon}(z_j)$ centered at the cone points with the singular metric, added up, matches exactly with the blow-up order of the~$\log\detz$ of the Dirichlet Laplacian on their complement, as the disks shrink (as the contributions from D-to-N maps cancel). In this regard we keenly observe that the cancellation mechanism as shown in \cite{Kalvin} eqn.\ (2.28) is \textit{different} from ours. The radii of Kalvin's disks were measured under the background smooth metric. For us, on the other hand, it is important to align the rate of shrinking with the singular metric itself (i.e.\ measure the disks using the singular metric) in two respects. First to obtain the invariance of the definition of the renormalized partition function as in lemma \ref{lemm-consist}, and secondly to obtain the correct scaling dimensions in the final result. A more detailed understanding of the relation of the two methods seems desirable, especially considering that the BFK formula is closely related to Segal's gluing axioms for QFT/CFT \cite{Lin, GKRV}, and that the correlation functions are (conjecturally) the ``limits'' of propagators associated to the complements of the shrinking disks (\cite{KS} section 3).
\end{def7}

\begin{def7}
  [relation to Aldana, Kirsten and Rowlett \cite{AKR}] \label{rem-rel-akr} As the authors have themselves remarked in relation to \cite{Kalvin}, they also obtained the same result as above for the~$\zeta$-determinant (\cite{AKR} eqn.\ (1.5)). Their method is based on asymptotic analysis of the heat trace for which related results were obtained earlier \cite{MR}. For the latter, they employed a blow-up (geometric) technique and the cone contribution term (main interest of this article) seems to ultimately come from computations done to an infinite exact cone which has a longer history (see their appendix A for further references). Last but not least, we note that \cite{AKR} seems to have restricted their cone angles to~$(0,2\pi)$, which is in the complement of the main concern of this article (proposition \ref{prop-main-app-ram-cov}).
\end{def7}

\chapter[Zeta Determinants on Cyclic Covers]{Zeta Determinants of Laplacians on Cyclic Covers}\label{chap-covers}

This Chapter is based on joint work with Nguyen Viet Dang and Fr\'ed\'eric Naud.

\section{Introduction}

In Euclidean Quantum Field Theory over some compact Riemannian 
manifold $M$, for $m\geqslant 0$, the partition function of the massive Gaussian free field is formally denoted by
\begin{align*}
Z_M=\int_{\mm{Map}(M,\mb{R})} \exp\Big( -\int_M \vert \nabla \varphi \vert^2 + m^2\varphi^2 \,\dd V_g \Big) \dd\mathcal{L}(\varphi)   
\end{align*}
where $\mm{Map}(\mb{S}^1,\mb{R})$ denotes loosely functions on which we deliberately do not specify the regularity, and $\mathcal{L}$ is non-existent Lebesgue measure on $\mm{Map}(\mb{S}^1,\mb{R})$, see section \ref{sec-intro-class-field-theory}. This formula is analogous to the case of discrete Gaussian free field which is some ferromagnetic spin system (cf.\ example \ref{exam-intro-discre-gff}) on a discrete box $\Lambda\subset \mathbb{Z}^d$ of finite size whose partition function
reads $$ Z_\Lambda= \int_{ \mathbb{R}^\Lambda}  \exp\Big( -\sum_{i\sim j} \vert \sigma_i-\sigma_j\vert^2 -\sum_{i\in \Lambda} m^2\sigma_i^2 \Big) \,\dd^\Lambda\sigma. $$

In the discrete case, the Dirichlet action functional $S(\sigma)$ is a quadratic form on $\mathbb{R}^\Lambda$, $S(\sigma)=\sum_{i\in \Lambda} \sigma_i(\Delta_\Lambda \sigma)_i+m^2\sigma_i^2 $ where $\Delta_\Lambda$ is the discrete Laplacian and therefore, the partition function $Z_\Lambda$ of the discrete GFF is a simple Gaussian integral which is easily calculated and equals
\begin{align*}
 Z_\Lambda= \det(\Delta_\Lambda+m^2)^{-\frac{1}{2}}   
\end{align*}
where $\Delta_\Lambda$ is the discrete Laplacian for discrete functions on $\Lambda$.
On the manifold $M$, since the space of functions is infinite and the Laplace Beltrami operator is no longer a finite dimensional matrix, so one has to regularize the  determinant or equivalently one has to regularize the infinite product $\prod_{\lambda\in \sigma(\Delta)\setminus \{0\}} (\lambda+m^2)$.
Inspired by the seminal work of Ray--Singer~\cite{RS} and also by the more physically oriented paper
of Hawking~\cite{Hawking},
a common choice in theoretical physics is to use zeta regularization to give a mathematical meaning to free fields partition functions. So $Z_M$ is defined to be 
\begin{align*}
Z_M:=\text{det}_\zeta\left(\Delta_g+m^2 \right)^{-\frac{1}{2}}    
\end{align*}
where we refer to paragraph~\ref{sss:statement} for a precise definition of 
zeta determinants.

From the point of view of statistical physics, 
the free energy $\log Z_M$ is expected to grow as the size of the system, here the natural system being our Riemannian manifold $M$. So the natural \textbf{intensive quantity} should be the 
density of free energy defined as~\cite[eq (24.28) p.~144]{Schwarz}
\begin{equation}\label{eq:freeenergy}
 F\defeq \frac{\log (Z_M)}{\text{Vol}(M)} .
\end{equation}

It is expected, as the volume of our Riemannian manifold gets larger in a certain limiting sense, that 
the above quantity converges as the volume of $M$ goes to infinity. 
It means there exists a \textbf{density of free energy in the thermodynamic limit}, which is also interpreted as the energy density of the ground state of a quantum system described by the GFF. 
Inspired by the recent results of the third author~\cite{Nauddet} as well as several studies on the spectral theory of Laplacians on random surfaces~\cite{MonkMarklov} and motivated by problems of taking large volume limits in
quantum field theory~\cite{Lin},
the goal of the present paper is to establish the existence of a limit for the quantity $F$ defined by equation~(\ref{eq:freeenergy}) for suitable sequences of cyclic covers of large degree that \emph{converge} in some appropriate sense to some $\mathbb{Z}^p$-periodic Riemannian manifolds obtained as abelian cover of some initial closed compact Riemannian manifold $M$.

\subsubsection{Geometric Setting}
\label{sss:setting}
Our geometric set-up here will be a direct generalization of that in section \ref{s:abeliancovers} to general dimensions.
Let $(M,g)$ denotes a smooth, closed, compact Riemannian manifold. 
We shall start by giving some way to manufacture towers of abelian covers of $M$ following~\cite{JaNaSo}.
Consider any group homomorphism of the form
$$\rho: \pi_1(M)\lto H_1(M,\mathbb{Z})\lto \mathbb{Z}$$
where the last arrow $H_1(M,\mathbb{Z})\lto \mathbb{Z}$ has infinite elements in its image.
Then $\ker(\rho)$ is a subgroup of the fundamental group
$\pi_1(M)$. We denote by $\widehat{M}$
the universal cover of $M$, $\ker(\rho)$ is a subgroup of the deck group of $\widehat{M}\lto M$ (deck group are automorphisms of the cover which descend trivially on the base) and the quotient 
$M_\infty:= \widehat{M}/\ker(\rho) $
is an abelian cover of $M$ with deck group $\mathbb{Z}$, in fact the quotient
$  \widehat{M}/\ker(\rho) $ is a periodic Riemannian manifold endowed with an 
isometric action of $\mathbb{Z}$.
We denote by $\rho_N:\pi_1(M)\lto \mathbb{Z}\lto \mathbb{Z}/N\mathbb{Z}$
the composition of $\rho$ with mod $N$ reduction.
Then the quotient $M_N:= \widehat{M}/\ker(\rho_N)$ is an abelian cover of $M$ with deck group $\mathbb{Z}/N\mathbb{Z}$ and when $N\rightarrow +\infty$ we can think of the sequence $(M_N)_N$ as an increasing sequence of abelian covers that \emph{converges} to the periodic cover $M_\infty$. Denoting by $\gamma$ the generator of the $\mathbb{Z}$-action on $M_\infty$, one can also think of $M_N$ as the quotient $M_\infty/\left\langle \gamma^N \right\rangle$ where $\left\langle \gamma^N \right\rangle$ denotes the subgroup of $\mathbb{Z}$ generated by $N$.

Like in \ref{s:abeliancovers}, we now make a \textbf{crucial assumption} which will also related our set-up to Segal's composition picture as described in \ref{s:abeliancovers}. Let us concretely describe how the abelian cover appears from the manifold $M$ we started with. 
We suppose a closed hypersurface $\Sigma\subset M$ whose dual class
$[\Sigma]\in H^1(M,\mathbb{Z})$ is non-trivial and non-torsion which means that the class $[\Sigma]\in H^1(M,\mathbb{R})$ with coefficients over the reals is non-trivial. 
The dual class is defined by the intersection of cycles
$$ [\Sigma]:= [p]\in H_1(M)\longmapsto \left\langle [p],[\Sigma] \right\rangle \in \mathbb{Z}$$
which we assume to be \textit{surjective}. 
The intersection pairing defines a morphism
$\rho:\pi_1(M)\lto H_1(M,\mathbb{Z})\lto \mathbb{Z}  $ 
whose kernel $\ker(\rho)$ generates a $\mathbb{Z}$-cover of $M$ as $\widehat{M}/\ker(\rho)$. Geometrically, 
cut the manifold $M$ along $\Sigma$, this creates a smooth manifold with even number of boundary components, identify one part as ingoing and the other part as outgoing yielding a cobordism $U$. Then make an infinite iterate 
$U\circ U\circ\dots \circ U$ of this cobordism generates the $\mathbb{Z}$-cover $\widehat{M}/\ker(\rho)$.    
%
To generate cyclic covers, just repeat the previous operation
and compose the previous $\rho$ with reduction mod $N$.

\subsubsection{Statement of the Main Result}
\label{sss:statement}

In this chapter we will focus on the limits as $N\to +\infty$ of quantities (\ref{eq:freeenergy}) \textbf{in the case} $m=0$. The case $m>0$ was in a sense (for surfaces) obtained in subsection \ref{sec-segal-end-asymp-part-func} by exploiting the composition axiom in the context of a Segal QFT. Moreover the $m>0$ case may also be proved following a similar argument as in subsection \ref{sec-covers-dom-converge} of the present chapter without resorting further to the lemmas. However for $m=0$ the accumulation of the eigenvalues near $0$ of the Laplacian on the cover as the degree grows large creates an extra difficulty, and a main part of the present chapter, therefore, is to deal with this difficulty. 

We keep the notations from the previous paragraph, denote by $\Delta_N$ (resp.\ $\Delta_\infty$) the Laplace-Beltrami operator on $M_N$ (resp.\ $M_\infty$) we can state the main results of our note. We consider $M_N\lto M$ a sequence of cyclic cover of degree $N$ as defined in the introduction~\ref{s:abeliancovers}. One would like to think that the sequence of covers $M_N$ converges in a suitable sense to a limit cover $M_\infty$. We denote by $\Delta_{N}$ the natural Laplace-Beltrami operator on the cover $M_N$, $\det_\zeta(\Delta_{N})$ denotes the zeta regularized determinant of $\Delta_{N}$ defined
as in \ref{exp-gff} and also~\cite{RS},
\begin{eqnarray}
\text{det}'_\zeta(\Delta_{M_N})\defeq\exp(-\zeta^\prime(0)),\quad\quad  \zeta_{\Delta_N}(s)\defeq\sum_{\lambda\in\sigma(\Delta_N)\setminus \{0\}}\lambda^{-s}
\end{eqnarray}
where the spectral zeta function $\zeta_{\Delta_N}(s)
$ has holomorphic continuation near $s=0$.
 We show in the present note the following:
\begin{thrm}\label{thm:convergencewithmass}
Under the above assumptions, we find that the sequence
\begin{eqnarray}
\boxed{\frac{\log\left(\det'_\zeta\left( \Delta_{M_N}\right) \right)}{\mm{vol}(M_N)}}
\end{eqnarray}
has a limit when $N\rightarrow +\infty$ and we give a formula for the limit in terms of the heat kernel on $M_\infty$.
\end{thrm}

\section{Main Strategy and Proof}
We start by recalling the Mellin transform formula for the zeta function of a positive semi-definite elliptic differential operator~$P$, with strictly positive principal symbol\footnote{This means the principal symbol~$\sigma_P(x,\xi)$ is positive definite for~$\xi\ne 0$ (\cite{Gilkey} section 1.6.2).}, acting on a Hermitian vector bundle over a Riemannian manifold (\cite{Gilkey} section 1.12):
\begin{equation}
  \zeta_P(s)=\frac{1}{\Gamma(s)}\int_{0}^{\infty}t^{s-1}\big(\ttr_{L^2}(\me^{-t P})-\dim\ker P \big)\,\dd t.
  \label{eqn-general-mellin-zeta-func}
\end{equation}
We denote by~$\me^{-tP}(x,y)$ the heat kernel which is smooth on~$M\times M$. In the first part of this paper we consider~$P=\Delta_{\infty}$,~$\Delta_N$, or~$\Delta$, the Laplacians on~$M_{\infty}$,~$M_N$ or~$M$ respectively, acting on functions.

By definition of the~$\zeta$-determinant, to prove the existence of the limit in theorem \ref{thm:convergencewithmass} it suffices to prove that~$\{\zeta_{\Delta_N}(s)/\vol(M_N)\}$ is a sequence of holomorphic functions converging in the compact-open topology on a neighborhood of~$s=0$ as~$N\to +\infty$. To do this, we use formula (\ref{eqn-general-mellin-zeta-func}) and then express the heat trace as the integral along diagonal of the heat kernel. 
We use the crucial fact that the heat kernels on~$M_N$ and~$M_{\infty}$ are related in a simple way by summing over the deck orbits (\cite{Buser} section 7.5). 
Applying a Li-Yau type estimate on the heat kernel over $M_{\infty}$, we obtain the convergence of~$N^{-1}\ttr_{L^2(M_N)}(\me^{-t\Delta_N})$ as~$N\to +\infty$ for fixed~$t>0$. 
The main technical part then boils down to obtaining a bound on the heat trace in~$t$, uniform in~$N$, to justify dominated convergence.

\subsection{Relation of Heat Kernels and a Li-Yau Estimate}

We will use in a crucial way the $\mathbb{Z}$-cover $M_\infty \lto M_{N}  \lto M$ which covers all the degree $N$ cyclic covers $M_N$.
The group $\mathbb{Z}$ acts on $M_\infty$ by isometries (the reader should think translations), the generator is denoted by $\gamma:M_\infty\lto M_\infty$ that we identify with an isometry of $M_\infty$.
The infinite cover $M_\infty$ endowed with the metric induced from $(M,g)$ is a geodesically complete Riemannian manifold by the Hopf-Rinow Theorem since every closed bounded subset in it is compact.

Denote by $\Delta_{\infty}$ the Laplacian on $M_\infty$, $\pi_N:M_\infty\lto M_N$. The geodesic completeness implies that $\Delta_{\infty}:C^\infty_c(M_\infty)\subset L^2(M_\infty)\lto L^2(M_\infty) $ has a self-adjoint extension by the work of Gaffney. 
Then observe we have the identity
\begin{eqnarray*}
\left(\pi_N\times \pi_N\right)^* e^{-t\Delta_N}=\sum_{k\in \mathbb{Z}} e^{-t\Delta_\infty}(\cdot,\gamma^{kN}\cdot).
\end{eqnarray*}

We make the crucial observation that
\begin{eqnarray*}
\ttr_{L^2(M_N)}(e^{-t\Delta_N})=\sum_{k\in \mathbb{Z}}\int_{\Omega_N}e^{-t\Delta_\infty}(x,\gamma^{kN}x)\,\dd V_g(x)\,=
N\sum_{k\in \mathbb{Z}}\int_{\Omega}e^{-t\Delta_\infty}(x,\gamma^{kN}x)\,\dd V_g(x)\,
\end{eqnarray*}
where $\Omega$, $\Omega_N$ are the respective fundamental domains of $M,M_N$ on the cover $M_\infty$, we also used the invariance equation $e^{-t\Delta_\infty}(x,x)=e^{-t\Delta_\infty}(\gamma(x),\gamma(x))$.

In the sequel, we will make an extensive use of the following Lemma
which is a particular case of Milnor-Schwarz~\cite[Prop 8.19 p.~140]{BrHa}:
\begin{lemm}[Milnor-Schwarz type Lemma]
Let $(M,g)$ be a smooth compact Riemannian manifold and $M_\infty\lto M$ a $\mathbb{Z}$ cover of $M$. Then
there exists $L>0$ such that for all $x\in M_\infty$:
$$ \textbf{d}(x,\gamma^px)\geqslant pL. $$
A similar kind of estimate holds true in the case of $\mathbb{Z}^d$ covers.
\end{lemm}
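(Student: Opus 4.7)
The plan is to exhibit, via de Rham cohomology, a smooth real-valued height function $f : M_\infty \to \mathbb{R}$ whose gradient is uniformly bounded and whose increment under the generator $\gamma$ of the deck group is a nonzero constant; the linear lower bound $d(x,\gamma^p x) \geq pL$ will then drop out of these two properties. The homomorphism $\rho : \pi_1(M) \to \mathbb{Z}$ from \S\ref{sss:setting} corresponds by the universal coefficient theorem to a non-torsion class in $H^1(M;\mathbb{Z})$ whose image in $H^1(M;\mathbb{R})$ is nontrivial. The first step would be to invoke de Rham to pick a smooth closed real $1$-form $\omega$ on $M$ representing this class, so that $\int_\alpha \omega = \rho([\alpha])$ for every loop $\alpha$ in $M$.

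Next I would transfer $\omega$ up to $M_\infty$. Since $\pi_*\pi_1(M_\infty) = \ker\rho$, every loop in $M_\infty$ has vanishing $\pi^*\omega$-period, so there exists $f : M_\infty \to \mathbb{R}$, unique up to an additive constant, with $df = \pi^*\omega$. Computing the increment along any path from $x$ to $\gamma x$ and projecting to $M$ gives $f(\gamma x) - f(x) = \rho(\gamma) =: c_0 \neq 0$, independent of $x$, hence $f(\gamma^p x) - f(x) = p\,c_0$ by iteration. On the other hand, $K := \sup_M |\omega|_g < \infty$ by compactness of $M$, and $|\nabla f|_g = |\pi^*\omega|_g \leq K$ on all of $M_\infty$. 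Since $(M_\infty,g)$ is geodesically complete, $f$ is globally $K$-Lipschitz, so $|c_0|\,p \leq K\, d(x,\gamma^p x)$ for every $x$ and every $p\geq 1$, and one takes $L := |c_0|/K$.

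The only delicate point, to the extent there is one, is ensuring a smooth closed representative $\omega$ whose periods realize the integer homomorphism $\rho$ exactly; this is where the non-torsion assumption on the dual class is indispensable, since otherwise $c_0$ would vanish and the argument would collapse. The $\mathbb{Z}^d$-case is identical in spirit: one picks $d$ homomorphisms $\rho_1,\ldots,\rho_d$ whose images span a complement to the torsion part of the quotient of $\pi_1(M)$ defining the cover, assembles the resulting functions $f_1,\ldots,f_d$ into an equivariant map $F : M_\infty \to \mathbb{R}^d$ intertwining deck transformations with integer translations, and reads off from the corresponding componentwise Lipschitz bound the estimate $d(x, g\cdot x) \geq L\,\|g\|$ for any fixed norm on the deck group $\mathbb{Z}^d$.
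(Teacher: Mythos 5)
Your proof is correct, but it follows a genuinely different route from the paper, which does not argue directly at all: it simply invokes the \v{S}varc--Milnor lemma (a properly discontinuous, cocompact, isometric action of a finitely generated group makes the orbit map a quasi-isometry with the word metric), citing Bridson--Haefliger. Your argument instead builds an explicit deck-equivariant height function: a closed $1$-form $\omega$ on $M$ realizing $\rho\otimes\mathbb{R}$, whose pullback is exact on $M_\infty$ because $\pi_*\pi_1(M_\infty)=\ker\rho$, so that $f$ with $df=\pi^*\omega$ satisfies $f(\gamma^p x)-f(x)=p\,c_0$ with $c_0=\pm1$ (strictly speaking $c_0$ is the image of the generator $\gamma$ under the identification of the deck group with $\mathbb{Z}$, not ``$\rho(\gamma)$'', but this is harmless), while $|\nabla f|\le K:=\sup_M|\omega|_g<\infty$ and $K>0$ since $\omega$ has a nonzero period; the Lipschitz estimate along paths then gives $d(x,\gamma^p x)\ge p|c_0|/K$ uniformly in $x$. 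What your approach buys is exactly this uniformity with no additive constant: the quasi-isometry statement a priori concerns a fixed orbit and carries an additive error, and upgrading it to the clean bound $d(x,\gamma^p x)\ge pL$ for \emph{all} $x$ requires an extra (easy, cocompactness-based) step that the paper leaves implicit. What the \v{S}varc--Milnor route buys is generality: it needs no abelianness or cohomological input and applies verbatim to any cover with finitely generated deck group. Note also that your $f$ is essentially the function $\psi_\Sigma$ the paper constructs later (from the harmonic representative $\alpha_\Sigma$) for the twisted Laplacians, so your argument fits naturally into the machinery already present in the chapter; the $\mathbb{Z}^d$ extension you sketch, assembling the coordinate homomorphisms into an equivariant map $F:M_\infty\to\mathbb{R}^d$, is likewise sound.
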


The second ingredient we shall need are Gaussian bounds on the heat kernel $e^{-t\Delta_\infty}$ which are due to Li-Yau~\cite[Thm 4.6 p.~169]{SchoenYau}:
\begin{thrm}[Li-Yau heat kernel bounds]
Let $(M_\infty,g)$ be a complete Riemannian manifold of dimension $d$ with $\mm{Ric}(M)\ge -K$, $K\ge 0$. Then for all $0<t\le 1$, $\delta\in (0,1)$ and $(x,y)\in M_\infty\times M_{\infty}$, the heat kernel $e^{-t\Delta_\infty}$
satisfies the bound of the form
\begin{eqnarray}
\vert e^{-t\Delta_\infty}(x,y)\vert\leqslant C_1(\delta,d)t^{-\frac{d}{2}} e^{-\textbf{d}^2(x,y)/(4+\delta)t}\me^{C_3 K\delta t}
\label{eqn-li-yau-bound}
\end{eqnarray}
where $\textbf{d}$ is the Riemannian distance and $C_1(\delta,d)\rightarrow +\infty$ when $\delta\rightarrow 0$, and $C_3$ depends only on the dimension.
\end{thrm}

\subsection{Dominated Convergence}\label{sec-covers-dom-converge}
Here we prove theorem \ref{thm:convergencewithmass} while leaving details on the estimate of the heat trace to section \ref{sec-long-time-control}.

\begin{proof}
    (Proof of theorem \ref{thm:convergencewithmass}.) Using (\ref{eqn-general-mellin-zeta-func}), we decompose
    \begin{equation}
        \frac{\zeta_{\Delta_N}(s)}{\vol(M_N)}=\frac{1}{\Gamma(s)\vol(M_N)}\Big( \underbrace{\int_0^1}_A +\underbrace{\int_1^{\infty}}_B\Big) \big(\ttr_{L^2(M_N)}(e^{-t\Delta_N})-1\big)t^{s-1}\,\dd t.
    \end{equation}
    For part $A$ we have
     \begin{equation}
    \frac{1}{N}\int_{0}^{1}\big(\ttr_{L^2(M_N)}(e^{-t\Delta_N})-1\big)t^{s-1}\,\dd t =
    \int_{0}^{1}\Big[\int_{\Omega}^{} \sum_{k\in\mb{Z}} \me^{-t\Delta_{\infty}}(x,\gamma^{kN}x)\,\dd V_g(x) -\frac{1}{N}\Big] t^{s-1}\,\dd t.
    \label{}
  \end{equation}
  We observe by the Milnor-Schwarz lemma and Li-Yau estimate (and positivity of the heat kernel) that
    \begin{align*}
|\mathcal{T}_{\ne 0}(t,s)|&\defeq \bigg| t^{s-1} \int_{\Omega} 
\sum_{\substack{k\in \mathbb{Z}\\ k\neq 0}}e^{-t\Delta_\infty}(x,\gamma^{kN}x)\,\dd V_g(x)\bigg| \leqslant 
   \vert t^{s-1} \vert \,\int_{\Omega}
  \sum_{\substack{k\in \mathbb{Z}\\ k\neq 0}}e^{-t\Delta_\infty}(x,\gamma^{kN}x)\,\dd V_g(x)\\
  &\lesssim 
  t^{\fk{Re}(s)-1-\frac{d}{2}} \int_{\Omega}  \sum_{\substack{k\in \mathbb{Z}\\ k\neq 0}}e^{-C_2 \mathbf{d}^2(x,\gamma^{kN}(x))/t} \,\dd V_g(x) \\
& \lesssim  \vol(M) \cdot t^{\fk{Re}(s)-1-\frac{d}{2}}\Big(\sum_{\substack{k\in \mathbb{Z}\\ k\neq 0}} e^{-C_2 (kNL)^2 /t}\Big)    \le  \vol(M) \cdot t^{\fk{Re}(s)-1-\frac{d}{2}}   \Big(2\sum_{k=1}^{\infty} e^{-C_2 kNL /t}\Big)   \\
& = \vol(M) \frac{ 2e^{-C_2 NL/t}}{1-e^{-C_2 NL/t}}\cdot t^{\fk{Re}(s)-1-\frac{d}{2}}  
\lesssim e^{-C_2 NL/t} \cdot t^{\fk{Re}(s)-1-\frac{d}{2}}.\tag{\#}
\end{align*}
 Since $ e^{-\frac{C}{t}} t^{\fk{Re}(s)-1-\frac{d}{2}} =\mathcal{O}(t^\infty)$ for $C>0$ as $t\downarrow 0$, the r.h.s.\ of the above is obviously integrable on $(0,1)$, regardless of $s$. This shows firstly $\int_0^1 \mathcal{T}_{\ne 0}(t,s)\,\dd t$ in fact defines an entire function of $s$, and secondly from a simple change of variables we have $\int_0^1 e^{-C_2 NL/t} \cdot t^{\fk{Re}(s)-1-\frac{d}{2}}\,\dd t \lesssim N^{\fk{Re}(s)-1-\frac{d}{2}}$, and we conclude that
$$  \int_0^1 \mathcal{T}_{\ne 0}(t,s) \,\dd t
\lesssim N^{\fk{Re}(s)-1-\frac{d}{2}}\quad \xlongrightarrow{N\to +\infty} \quad 0$$
for $\fk{Re}(s)\ll 0$ and hence as entire functions of $s$ (Montel-Vitali theorem). Since
  \begin{equation}
    \frac{1}{N\Gamma(s)}\int_{0}^{1}t^{s-1}\,\dd t=\frac{1}{N\Gamma(s+1)}\quad \xlongrightarrow{N\to +\infty} \quad 0,
    \label{}
  \end{equation}
  we deduce
  \begin{equation}
    \frac{1}{N\Gamma(s)}\int_{0}^{1}\big(\ttr_{L^2(M_N)}(e^{-t\Delta_N})-1\big)t^{s-1}\,\dd t\quad \xlongrightarrow{N\to +\infty} \quad  
    \frac{1}{\Gamma(s)}\int_{0}^{1}t^{s-1}\int_{\Omega}^{} \me^{-t\Delta_{\infty}}(x,x) \,\dd V_g(x)\,\dd t,
    \label{}
  \end{equation}
  the latter defining a meromorphic function of~$s$ which is holomorphic near~$s=0$ by the asymptotic expansion of the heat kernel (\cite{BGV} theorem 2.30) and that~$\vol(\Omega)=\vol(M)<\infty$.

  To treat part~$B$, we first note that by the same estimates (\#) we have shown
\begin{equation}
  \lim_{N\to +\infty} \frac{1}{N}\big(\ttr_{L^2(M_N)}(e^{-t\Delta_N})-1\big) =\int_{\Omega}^{} \me^{-t\Delta_{\infty}}(x,x) \,\dd V_g(x)
  \label{eqn-heat-trace-pointwise-limit}
\end{equation}
for each fixed~$t>0$. Now by lemma \ref{lemm-long-time-bound-heat-trace} and dominated convergence we obtain
\begin{equation}
     \lim_{N\to +\infty}\frac{1}{N\Gamma(s)}\int_{1}^{\infty}\big(\ttr_{L^2(M_N)}(e^{-t\Delta_N})-1\big)t^{s-1}\,\dd t=  \frac{1}{\Gamma(s)}\int_{1}^{\infty}t^{s-1}\int_{\Omega}^{} \me^{-t\Delta_{\infty}}(x,x) \,\dd V_g(x)\,\dd t,
\end{equation}
as holomorphic functions of $s$ for $\fk{Re}(s)<1/2p$ and hence in particular near $s=0$. Combining the two parts we obtain the result with limit
\begin{equation}
  \lim_{N\to +\infty}\frac{\zeta_{\Delta_N}(s)}{\vol(M_N)}=\frac{\zeta_{\Delta_{\infty},\Omega}(s)}{\vol(M)}
  \label{}
\end{equation}
where
\begin{equation}
  \zeta_{\Delta_{\infty},\Omega}(s)\defeq \frac{1}{\Gamma(s)}\int_{0}^{\infty}t^{s-1}\int_{\Omega}^{} \me^{-t\Delta_{\infty}}(x,x) \,\dd V_g(x)\,\dd t.
  \label{}
\end{equation}
Note that we know a posteriori after (\ref{eqn-heat-trace-pointwise-limit}) and lemma \ref{lemm-long-time-bound-heat-trace} that~$\zeta_{\Delta_{\infty},\Omega}(s)$ is a holomorphic function of~$s$ near~$s=0$ as the sum of two such functions, one being a meromorphic continuation.
\end{proof}

\section[Accumulation of Eigenvalues and Heat Trace]{Accumulation of Eigenvalues and Long-time Behavior of Heat Trace}\label{sec-long-time-control}

The goal of this section is to prove
\begin{lemm}\label{lemm-long-time-bound-heat-trace}
  There exists a positive integer~$p$, a number~$\varepsilon_0>0$, and constants~$C_4$,~$C_5>0$ independent of~$N$, such that 
  \begin{equation}
    \frac{1}{N}\big(\ttr_{L^2(M_N)}(e^{-t\Delta_N})-1\big) \le C_4 t^{-1/2p} +C_5 \me^{-t \varepsilon_0},
    \label{}
  \end{equation}
  for all $N\in\mb{N}$ and all $t\ge 1$.
\end{lemm}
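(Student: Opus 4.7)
My approach is to decompose the heat trace on $M_N$ via characters of the deck group $\mb{Z}/N\mb{Z}$, thereby reducing the problem to uniform estimates on a family of twisted Laplacians on the base $M$. Concretely, let $\alpha$ be a smooth closed real $1$-form on $M$ whose periods realize the homomorphism $\rho: \pi_1(M) \to \mb{Z}$ defining our cover, and set $\Delta_\theta \defeq (\dd - \ii\theta\alpha)^*(\dd - \ii\theta\alpha)$ acting on $C^\infty(M)$. This family is gauge-equivalent to the canonical Laplacian on the flat Hermitian line bundle $L_\theta \to M$ whose holonomy around the cycle Poincar\'e-dual to $\Sigma$ is $\me^{\ii\theta}$. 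Deck-group Fourier decomposition identifies $L^2(M_N) \cong \bigoplus_{k=0}^{N-1} L^2(M, L_{\theta_k})$ with $\theta_k = 2\pi k/N$, under which $\Delta_N$ becomes $\bigoplus_k \Delta_{\theta_k}$, so that
\begin{equation}
\ttr_{L^2(M_N)}(\me^{-t\Delta_N}) - 1 = \big(\ttr_{L^2(M)}(\me^{-t\Delta_M}) - 1\big) + \sum_{k=1}^{N-1}\ttr(\me^{-t\Delta_{\theta_k}}),
\end{equation}
with the subtracted $1$ absorbed by the simple zero eigenvalue sitting in the trivial-character sector $k=0$. The first term is bounded by $C\me^{-t\lambda_1(M)}$ for $t \ge 1$ and feeds directly into the $C_5\me^{-t\varepsilon_0}$ of the lemma.

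The essential spectral input is a polynomial lower bound on the bottom eigenvalue $\lambda_1(\theta) \defeq \inf\sigma(\Delta_\theta)$ near $\theta = 0$. Since $M_\infty$ is connected, $L_\theta$ carries no nonzero global flat section for $\theta \not\equiv 0 \pmod{2\pi}$, hence $\lambda_1(\theta) > 0$ there. Because $\theta \mapsto \Delta_\theta$ is a self-adjoint holomorphic family of type (A) on the fixed Hilbert space $L^2(M)$ and $\lambda_1(0)=0$ is simple, Kato's analytic perturbation theory yields a real-analytic extension of $\lambda_1$ through $0$, which is moreover even by the symmetry $\Delta_\theta = \overline{\Delta_{-\theta}}$. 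Thus $\lambda_1(\theta) = c_{2p}\theta^{2p} + O(\theta^{2p+2})$ for some integer $p \ge 1$ with $c_{2p} > 0$, and compactness upgrades this to a global bound $\lambda_1(\theta) \ge c\,|\theta|^{2p}$ on $[-\pi,\pi]$. For the full heat trace, the elementary inequality
\begin{equation}
\me^{-t\lambda_j(\theta)} \le \me^{-t\lambda_1(\theta)}\me^{\lambda_1(\theta)-\lambda_j(\theta)}, \qquad t \ge 1,
\end{equation}
summed over $j$ yields $\ttr(\me^{-t\Delta_\theta}) \le \me^{-t\lambda_1(\theta)}\me^{\lambda_1(\theta)}\ttr(\me^{-\Delta_\theta}) \le C\me^{-t\lambda_1(\theta)}$ with $C$ uniform in $\theta$, since $\ttr(\me^{-\Delta_\theta})$ is uniformly bounded on the compact parameter circle, either via the diamagnetic inequality $|\me^{-\Delta_\theta}(x,y)| \le \me^{-\Delta_M}(x,y)$ or directly via Gaussian heat-kernel bounds for magnetic Laplacians.

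The proof then finishes with a Riemann-sum estimate. Using the symmetry $\lambda_1(\theta_{N-k}) = \lambda_1(\theta_k)$ and splitting at a fixed threshold $\delta > 0$,
\begin{equation}
\frac{1}{N}\sum_{k=1}^{N-1}\me^{-ct|\theta_k|^{2p}} \le \frac{2}{N}\sum_{1\le k \le \frac{N\delta}{2\pi}}\me^{-ct(2\pi k/N)^{2p}} + \me^{-c\delta^{2p}t}.
\end{equation}
Monotone comparison of the head with an integral yields
\begin{equation}
\sum_{k\ge 1}\me^{-ct(2\pi k/N)^{2p}} \le \int_0^\infty \me^{-ct(2\pi x/N)^{2p}}\,\dd x \;\asymp\; N\,t^{-1/2p},
\end{equation}
so the head contributes $\lesssim t^{-1/2p}$ after dividing by $N$. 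Setting $\varepsilon_0 \defeq \min\{\lambda_1(M),\,c\delta^{2p}\}$ yields the claimed inequality.

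The main obstacle is the polynomial lower bound $\lambda_1(\theta)\ge c|\theta|^{2p}$ near $\theta=0$. While real-analyticity through $\theta=0$ is standard once the family is written in the magnetic-Laplacian gauge on a fixed Hilbert space, determining the vanishing order $2p$ and verifying $c_{2p}>0$ requires a Rayleigh--Schr\"odinger-type computation: generically one finds $p=1$ with $c_2 = \|\alpha_{\mathrm{harm}}\|_{L^2}^2 / \vol(M)$, which is strictly positive precisely because $[\Sigma]$ is assumed non-torsion, but we state the bound for general $p$ to accommodate possible degeneracies. A secondary technical point is the uniformity in $\theta$ of both the Gaussian heat-kernel constants and the bound on $\ttr(\me^{-\Delta_\theta})$, which follows from compactness of the parameter circle together with continuity of the heat kernel in $\theta$ but should be made quantitatively precise in the proof itself.
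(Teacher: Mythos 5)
Your proof is correct, and its core machinery is the same as the paper's: the deck-group Fourier decomposition into twisted Laplacians $\Delta_{\theta_k}$, Kato analytic perturbation of the simple zero eigenvalue at $\theta=0$ giving $\lambda_0(\theta)\asymp\theta^{2p}$ (in fact with the harmonic representative one gets $p=1$, since the first-order term vanishes on constants and the second-order correction is $\|\alpha_{\mathrm{harm}}\|_{L^2}^2/\vol(M)>0$), and the Riemann-sum/Gamma-integral comparison producing the $t^{-1/2p}$ term. Where you genuinely diverge is in how the rest of the spectrum is controlled. The paper isolates the eigenvalues below a uniform threshold $\varepsilon_0$ using a uniform spectral gap above the bottom Kato branch (part (i) of its perturbation lemma), and then disposes of all eigenvalues $\ge\varepsilon_0$ with a crude Weyl-type counting bound $\sharp\{\lambda\le\Lambda\}\le C N\Lambda^{d/2}$ obtained from the Li--Yau heat-kernel estimate on $M_N$, summing over unit spectral windows. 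You instead bound the \emph{entire} heat trace of each twisted sector by $\ttr(\me^{-t\Delta_\theta})\le \me^{\lambda_0(\theta)}\ttr(\me^{-\Delta_\theta})\,\me^{-t\lambda_0(\theta)}$ for $t\ge1$, with the prefactor uniform in $\theta$ via the diamagnetic inequality (or continuity plus compactness of the parameter circle). This buys you a more streamlined argument: you need neither the uniform gap above the bottom branch nor the eigenvalue-counting lemma, so the only spectral input is the bottom-eigenvalue lower bound. The trade-off is that you import the diamagnetic inequality (or an equivalent uniform trace bound on the twisted heat semigroups), whereas the paper's route reuses the Li--Yau bounds it already needs elsewhere (for the small-$t$ part and the pointwise convergence of heat traces) and stays entirely within elementary counting; both yield constants independent of $N$, as the lemma requires.
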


\begin{proof}
  We obtain from corollary \ref{cor-small-eigen-estima} the threshold~$\varepsilon_0>0$,~$p$ and~$C_4$ independent of~$N$ so that
  \begin{equation}
    \frac{1}{N}\sum_{\substack{\lambda\in \sigma(\Delta_{N}) \\ 0<\lambda<\varepsilon_0}} \me^{-t\lambda} \le C_4 t^{-1/2p}.
    \label{}
  \end{equation}
  Now since~$\vol(M_N)=N\vol (M)$, and~$\mm{Ric}(M_N)\ge \mm{Ric}(M)$, by lemma \ref{lemm-rough-counting} we have some~$C_6$ independent of~$N$ such that
  \begin{equation}
  \sharp\{\lambda\le \Lambda|\lambda\in \sigma(\Delta_N)\}\le C_6 N\Lambda^{d/2}.
  \label{}
\end{equation}
Now pick~$\varepsilon_0\ll\Lambda_0$, and put $\Lambda_j:=\Lambda_0+j$. Then we have for~$t\ge 1$,
\begin{align*}
  \sum_{\substack{\lambda\in \sigma(\Delta_N) \\ \lambda\ge \varepsilon_0}} \me^{-t\lambda}&\le \sharp\big([\varepsilon_0,\Lambda_0)\cap \sigma(\Delta_N)\big) \,\me^{-t\varepsilon_0}+\sum_{j=0}^{\infty}\sharp\big([\Lambda_j,\Lambda_{j+1})\cap \sigma(\Delta_N)\big)\, \me^{-t\Lambda_j}\\
  &\le C_6 N\Lambda_0^{d/2}\me^{-t\varepsilon_0}+\me^{-t\Lambda_0}\sum_{j=0}^{\infty} C_6 N\Lambda_{j+1}^{d/2} \me^{-tj}\\
  &\le C_6 N\Lambda_0^{d/2}\me^{-t\varepsilon_0}+ C_6 N \me^{-t\Lambda_0} \underbrace{\sum_{j=0}^{\infty} (\textrm{poly. in }j)\cdot  \me^{-j}}_{\textrm{converge}} \\
  &\le C_5 N\me^{-t\varepsilon_0}.
\end{align*}
We obtain the result.
\end{proof}

\subsection{Twisted Laplacians and Decomposition of Spectra}

The next step is to identify the action of the Laplace operator $\Delta_N$ on $M_N$ with the action of some family of Laplacians acting on functions on the base manifold $M$.

\subsubsection{Fourier Transform in the Fibers}

Consider $M_N\lto M
$ the $N$-th degree cyclic cover, we may think of the cover $M_N\lto M$ as a $\mathbb{Z}/N\mathbb{Z}$--principal bundle over $M$ so that we can use the discrete Fourier transform over the fibers. Then we
observe that
given any function $f$ on $M_N$, the deck group of $M_N$ acts on the function $f$ as some discrete rotation of the fibers. Therefore,
we can define the \textsf{fiberwise Fourier transform} $\wh{\bullet}:C^{\infty}(M_N)\lto C^{\infty}(M_N)^N$ by
\begin{eqnarray*}
\widehat{f}_p(x)\defeq \sum_{k=1}^N f(\gamma_0^k.x)e^{-2\pi \ii\frac{ p}{N}k},\quad\quad\textrm{for }0\le p\le N-1.
\end{eqnarray*}
where $\gamma_0$ is the \textit{generator} of the deck shifts. The Fourier inversion formula subsequently reads
\begin{eqnarray}
f(x)=\frac{1}{N}\sum_{p=1}^N \widehat{f}_p(x), \quad\quad x\in M_N.
\label{eqn-automorphic-fiber-fourier}
\end{eqnarray}
Now we define several function spaces as follows.
\begin{align}
  L^2_{p,N}(M_N)&\defeq \{u\in L^2(M_N)~|~ \gamma_0^* u=\me^{2\pi\ii \frac{p}{N}}u\},\\
  C^{\infty}_{p,N}(M_N)&\defeq \{u\in C^{\infty}(M_N)~|~ \gamma_0^* u=\me^{2\pi\ii \frac{p}{N}}u\},\\
  C^{\infty}_{p,N}(\Omega)&\defeq \{u|_{\Omega}~|~ u\in C^{\infty}_{p,N}(M_N)\},
  \label{}
\end{align}
where~$\Omega$ is the fundamental domain fixed in the introduction. Observe that $\wh{f}_p\in L_{p,N}^2(M_N)$ and
\begin{lemm}
  We have~$L^2_{p,N}(M_N)\perp L^2_{q,N}(M_N)$ in~$L^2(M_N)$ for~$p\ne q$.
\end{lemm}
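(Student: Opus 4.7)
The plan is to exploit the fact that $L^2_{p,N}(M_N)$ and $L^2_{q,N}(M_N)$ are eigenspaces of a \emph{unitary} operator corresponding to distinct eigenvalues, hence automatically orthogonal. Concretely, I would observe first that since the deck transformation $\gamma_0 \colon M_N \to M_N$ is an isometry of the Riemannian covering (this is built into the construction of $M_N$ with the pulled-back metric), its pullback $\gamma_0^* \colon L^2(M_N) \to L^2(M_N)$ is unitary. The spaces $L^2_{p,N}(M_N)$ are by definition the eigenspaces of $\gamma_0^*$ corresponding to the distinct eigenvalues $\omega_p \defeq e^{2\pi i p/N}$ as $p$ ranges over $0, \dots, N-1$.

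The computation is then one line. Given $u \in L^2_{p,N}(M_N)$ and $v \in L^2_{q,N}(M_N)$, by unitarity of $\gamma_0^*$,
\begin{equation*}
\langle u, v \rangle_{L^2(M_N)} = \langle \gamma_0^* u, \gamma_0^* v \rangle_{L^2(M_N)} = \langle \omega_p u, \omega_q v \rangle_{L^2(M_N)} = \overline{\omega_p}\,\omega_q \langle u, v \rangle_{L^2(M_N)} = \omega_{q-p} \langle u, v \rangle_{L^2(M_N)}.
\end{equation*}
Hence $(1 - \omega_{q-p})\langle u, v\rangle_{L^2(M_N)} = 0$, and since $p \ne q$ with $0 \le p, q \le N-1$ forces $\omega_{q-p} \ne 1$, we conclude $\langle u, v\rangle_{L^2(M_N)} = 0$.

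There is no real obstacle here; the only thing to be careful about is verifying that $\gamma_0^*$ is well-defined as an isometry of $L^2(M_N)$, which follows from the fact that $\gamma_0$ preserves the Riemannian volume form (being a Riemannian isometry), so $\int_{M_N} |\gamma_0^* u|^2 \, \dd V_g = \int_{M_N} |u|^2 \, \dd V_g$ by change of variables. This argument in fact shows more: the full orthogonal decomposition $L^2(M_N) = \bigoplus_{p=0}^{N-1} L^2_{p,N}(M_N)$ holds, with the projections onto each summand implemented precisely by the fiberwise Fourier transform formula~\eqref{eqn-automorphic-fiber-fourier}, since $\gamma_0^*$ has finite order $N$ and its spectrum is contained in the $N$-th roots of unity.
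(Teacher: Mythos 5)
Your proof is correct. The points it rests on are all available in the paper's setting: the deck transformation $\gamma_0$ acts by isometries on $M_N$ (the covering metric is pulled back from $M$), so $\gamma_0^*$ is unitary on $L^2(M_N)$, the spaces $L^2_{p,N}(M_N)$ are precisely its eigenspaces for the distinct $N$-th roots of unity $e^{2\pi \ii p/N}$, and the one-line identity $\langle u,v\rangle = \overline{\omega_p}\,\omega_q\,\langle u,v\rangle$ forces the inner product to vanish when $p\ne q$ with $0\le p,q\le N-1$.

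The route is genuinely different from the paper's, though both are elementary. The paper does not invoke unitarity of $\gamma_0^*$ abstractly; instead it writes the $L^2(M_N)$ inner product as a sum over the $N$ deck-translates of a fundamental domain $\Omega$, uses the automorphy condition to extract the phase $e^{2\pi\ii (q-p)k/N}$ from the $k$-th translate, and concludes from the vanishing of the geometric sum $\sum_{k=1}^N e^{2\pi\ii (q-p)k/N}=0$. In effect, the paper's computation is your single identity averaged over $k=1,\dots,N$, with the fundamental-domain decomposition made explicit — machinery that is reused later in the chapter (e.g.\ for relating $\ttr_{L^2(M_N)}(\me^{-t\Delta_N})$ to integrals over $\Omega$ and for the operators $\Delta_{p,N}$ on $\Omega$). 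Your spectral argument is shorter and does not need a fundamental domain at all, and, as you note, it immediately gives the full orthogonal decomposition $L^2(M_N)=\bigoplus_{p=0}^{N-1}L^2_{p,N}(M_N)$ since $(\gamma_0^*)^N=\one$; the paper obtains that decomposition separately from the Fourier inversion formula. No gap in either direction.
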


\begin{proof}
    We have for $u\in L^2_{p,N}$ and $v\in L^2_{q,N}$, since the deck shifts are transitive,
    \begin{align*}
\left\langle u,v\right\rangle_{L^2(M_N)}&= \sum_{k=1}^N  \int_{\Omega} \overline{u}(\gamma^k.x)v(\gamma^k.x)\,\dd V_g(x)\,\\
&=
\sum_{k=1}^N e^{-2\pi \ii\frac{ p}{N}k}e^{2\pi \ii\frac{ q}{N}k} \int_{\Omega} \overline{u}(
x)v( x)\,\dd V_g(x)=0,
\end{align*}
by the automorphy condition and since $\sum_{k=1}^N e^{2\pi \ii\frac{ p-q}{N}k}=0 $.
\end{proof}

The Fourier inversion formula thus gives the orthogonal decomposition
\begin{equation}
  L^2(M_N)=\bigoplus_{p=0}^{N-1}L_{p,N}^2(M_N).
  \label{}
\end{equation}
Note that since~$\gamma_0$ acts by isometry we have~$\Delta_N:C^{\infty}_{p,N}(M_N)\lto C^{\infty}_{p,N}(M_N)$ for each~$p$. Since~$\Delta_N$ is essentially self-adjoint on~$C^{\infty}(M_N)$, we have also decomposed
\begin{equation}
  \Delta_N=\bigoplus_{p=0}^{N-1} \Delta_N|_{L_{p,N}^2(M_N)}.
  \label{}
\end{equation}

Next we observe that by the automorphy condition, the action of~$\Delta_N$ on each~$L_{p,N}^2(M_N)$ boils down to its action on~$L^2(\Omega)$ with a specified boundary condition. More precisely, given~$f\in L^2(\Omega)$, define its \textsf{(quasi-)periodization}~$\mathcal{P}_{p,N}f\in L^2(M_N)$, such that as a distribution
\begin{equation}
  \bank{\mathcal{P}_{p,N}f, \varphi}_{L^2(M_N)}\defeq \bank{f,\wh{\varphi}_p|_{\Omega}}_{L^2(\Omega)},
  \label{}
\end{equation}
for test functions~$\varphi\in C^{\infty}(M_N)$. One can subsequently verify that indeed~$\mathcal{P}_{p,N}f\in L^2_{p,N}$ by resorting to the adjoint of~$\gamma_0^*$. Moreover,~$\mathcal{P}_{p,N}:C^{\infty}_{p,N}(\Omega)\lto C^{\infty}_{p,N}(M_N)$. We arrive at the main observation of this subsection.
\begin{lemm}\label{lemm-unit-equiv-domain-bdy-cond}
  For each integer~$0\le p\le N-1$ we have the commutative diagram
  \begin{equation}
    \begin{tikzcd}
      C^{\infty}_{p,N}(\Omega) \ar[r," \Delta_{\Omega}"] \ar[d,"\mathcal{P}_{p,N}"'] & L^2(\Omega) \ar[d,"\mathcal{P}_{p,N}"] \\[+10pt]
      C^{\infty}_{p,N}(M_N) \ar[r," \Delta_N"] & L^2_{p,N}(M_N).
    \end{tikzcd}
    \label{}
  \end{equation}
  Denote by~$\Delta_{p,N}$ the self-adjoint extension of the Laplacian on~$L^2(\Omega)$ with core~$C^{\infty}_{p,N}(\Omega)$. Then~$\Delta_{p,N}$ is unitarily equivalent to~$\Delta_N|_{L_{p,N}^2(M_N)}$ through~$\frac{1}{N}\mathcal{P}_{p,N}$ for each~$0\le p\le N-1$. 
\end{lemm}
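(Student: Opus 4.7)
The plan is to establish the commutative diagram and the unitary equivalence essentially by a direct verification exploiting the fact that, by definition, $C^{\infty}_{p,N}(\Omega)$ is nothing but the space of restrictions to $\Omega$ of smooth $p$--quasi-periodic functions on $M_N$; thus the periodization $\mathcal{P}_{p,N}$ is nothing but a normalized inverse of the restriction map on these subspaces, and everything reduces to a careful bookkeeping of scaling constants and to invoking locality of the Laplacian.

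First I would check that for $u\in C^{\infty}_{p,N}(M_N)$ and $f:=u|_{\Omega}$ one has $\mathcal{P}_{p,N}f=u$. This is immediate from the definition~: for any test $\varphi\in C^{\infty}(M_N)$, breaking the integral over $M_N$ into $N$ copies of $\Omega$ via the deck action and using the quasi-periodicity $u(\gamma_0^k x)=e^{2\pi\ii kp/N}u(x)$ gives
\begin{equation*}
  \bank{u,\varphi}_{L^2(M_N)}=\int_{\Omega}\overline{u(x)}\Big[\sum_{k=1}^{N} e^{-2\pi\ii kp/N}\varphi(\gamma_0^k x)\Big]\dd V_g(x)=\bank{f,\wh{\varphi}_p|_{\Omega}}_{L^2(\Omega)},
\end{equation*}
which matches the defining identity of $\mathcal{P}_{p,N}$. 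Hence $\mathcal{P}_{p,N}\colon C^{\infty}_{p,N}(\Omega)\xrightarrow{\sim} C^{\infty}_{p,N}(M_N)$, inverse to restriction.

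Second, the commutative diagram follows from the \emph{locality} of $\Delta$. Given $f\in C^{\infty}_{p,N}(\Omega)$, its extension $\mathcal{P}_{p,N}f$ is a genuinely smooth function on $M_N$ (not merely a distribution, since by construction it comes from an element of $C^{\infty}_{p,N}(M_N)$). Therefore $\Delta_N(\mathcal{P}_{p,N}f)$ is smooth, $p$--quasi-periodic, and coincides with $\Delta_\Omega f$ on $\Omega$ because the Laplacian is a local differential operator and $\mathcal{P}_{p,N}f|_{\Omega}=f$. Restricting and then re-periodizing yields $\Delta_N\circ\mathcal{P}_{p,N}=\mathcal{P}_{p,N}\circ\Delta_\Omega$ on $C^{\infty}_{p,N}(\Omega)$, which is the commutativity of the diagram.

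Third, I compute the $L^2$ norm. Decomposing $M_N=\bigsqcup_{k=1}^{N}\gamma_0^k\Omega$ and using that $\gamma_0$ acts by isometries together with quasi-periodicity (which has modulus one),
\begin{equation*}
  \nrm{\mathcal{P}_{p,N}f}_{L^2(M_N)}^2=\sum_{k=1}^{N}\int_{\Omega}|\mathcal{P}_{p,N}f(\gamma_0^k x)|^2\dd V_g(x)=N\nrm{f}_{L^2(\Omega)}^2,
\end{equation*}
so that $N^{-1/2}\mathcal{P}_{p,N}$ is an isometry from $L^2(\Omega)$ onto $L^2_{p,N}(M_N)$; surjectivity onto a dense subspace in $L^2_{p,N}(M_N)$ at the level of $C^{\infty}_{p,N}(M_N)$ is ensured by the inverse-to-restriction property, and this dense subspace is exactly a core for $\Delta_N|_{L^2_{p,N}(M_N)}$.

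Finally, one passes from the cores to the self-adjoint extensions. On the core $C^{\infty}_{p,N}(\Omega)$, the operator $\Delta_{p,N}$ is intertwined (up to the scalar $N^{-1/2}$, which is harmless as it commutes with $\Delta$) with $\Delta_N|_{L^2_{p,N}(M_N)}$ on the core $C^{\infty}_{p,N}(M_N)$. Since $\Delta_N$ on $M_N$ is essentially self-adjoint on $C^{\infty}(M_N)$ and preserves the spectral subspace $L^2_{p,N}(M_N)$, the restriction $\Delta_N|_{L^2_{p,N}(M_N)}$ is essentially self-adjoint on $C^{\infty}_{p,N}(M_N)$; by the intertwining, $\Delta_{p,N}$ on the transported core is also essentially self-adjoint, and the unique self-adjoint extensions are unitarily equivalent through the extension of the isometry. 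The only step requiring a touch of care is showing that $C^{\infty}_{p,N}(M_N)$ really is a core for $\Delta_N|_{L^2_{p,N}(M_N)}$, which follows by applying the fiberwise Fourier transform to a mollifying sequence in $C^{\infty}(M_N)$ and noting that $\wh{(\bigcdot)}_p$ commutes with $\Delta_N$. This last point is the main place where one should double-check conventions, but it is not a serious obstacle.
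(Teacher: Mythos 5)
Your argument is correct and takes essentially the same route as the paper, whose proof is just a one-line transfer of existence and uniqueness of the self-adjoint extension through the periodization map from the corresponding result for $\Delta_N|_{L^2_{p,N}(M_N)}$; you have simply filled in the details (the inverse-to-restriction property, locality of $\Delta$ for the diagram, the norm computation, and the core argument via the bounded fiberwise projection commuting with $\Delta_N$). Note only that your computation correctly identifies the isometry as $N^{-1/2}\mathcal{P}_{p,N}$ rather than the $\tfrac{1}{N}\mathcal{P}_{p,N}$ appearing in the statement; since these differ by a positive scalar, the induced conjugation, and hence the asserted unitary equivalence of the two operators, is unaffected.
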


\begin{proof}
  We only point out that in this case the existence and uniqueness of the self-adjoint extension~$\Delta_{p,N}$ can be obtained thanks to the unitary map~$\frac{1}{N}\mathcal{P}_{p,N}$ and the corresponding result for~$\Delta_N|_{L_{p,N}^2(M_N)}$. A more general dituation is considered in lemma \ref{lemm-twist-self-adjoint-domain}.
\end{proof}

\subsubsection{Characters and Unitary Equivalence}

We remember that our cover~$M_{\infty}$ was constructed to correspond to the normal subgroup of~$\pi_1(M)$ which is the kernel of the map
  \begin{equation}
  \begin{tikzcd}
    \rho:\pi_1(M) \ar[r,"\mm{Ab}"] &[+10pt] H_1(M;\mb{Z}) \ar[r, " I(-{,}{[}\Sigma{]})"] &[+10pt] \mb{Z},
  \end{tikzcd}
  \label{}
\end{equation}
where the first is Abelianization and~$I(-,[\Sigma])$ is the \textsf{oriented intersection number} with a compact oriented hypersurface~$\Sigma$, which is surjective. 

By Poincar\'e duality and Hodge theory on the closed oriented manifold~$M$, there exists a unique harmonic 1-form~$\alpha_{\Sigma}$ such that
\begin{equation}
  I([\gamma],[\Sigma])=\int_{\gamma}^{}\alpha_{\Sigma}
  \label{eqn-def-1-form-inter-number}
\end{equation}
for all (smooth) loops~$\gamma$ in~$M$. By the construction of the cover~$\pi_{\infty}:M_{\infty}\lto M$, there is the relation
\begin{equation}
  (\pi_{\infty})_*(\pi_1(M_{\infty},x_0))=\ker \rho,
  \label{}
\end{equation}
upon picking a base point~$x_0\in M_{\infty}$ and interpreting~$\pi_1(M)$ as~$\pi_1(M,\pi_{\infty}(x_0))$. This translates into saying that
\begin{equation}
  \int_{(\pi_{\infty})_*\tilde{\gamma}}^{}\alpha_{\Sigma}=\int_{\tilde{\gamma}}^{} \pi_{\infty}^* \alpha_{\Sigma}=0
  \label{}
\end{equation}
for all loops~$\tilde{\gamma}$ in~$M_{\infty}$ based at~$x_0$. This means~$\pi_{\infty}^* \alpha_{\Sigma}\in \Omega^1(M_{\infty})$ is exact ($M_{\infty}$ is connected), and a primitive is given by
\begin{equation}
  \psi_{\Sigma}(x)\defeq \int_{x_0}^{x} \pi_{\infty}^* \alpha_{\Sigma}.
  \label{eqn-def-harmonic-primitive}
\end{equation}
Observe moreover that~$\psi_{\Sigma}$ is harmonic since~$\alpha_{\Sigma}$ is harmonic (thus co-closed). The most important property of~$\psi_{\Sigma}$ for us is the following which we single out as a lemma.

\begin{lemm}
  For~$\gamma\in \pi_1(M)$, use the same notation to denote the deck action of the class $\ol{\gamma}\in \pi_1(M)/\ker\rho$ on~$M_{\infty}$. Then we have
  \begin{equation}
    \psi_{\Sigma}(\gamma.x)=I([\gamma],[\Sigma])+\psi_{\Sigma}(x)=\rho(\gamma)+\psi_{\Sigma}(x),
    \label{eqn-deck-auto-primitive-potential}
  \end{equation}
  for any~$x\in M_{\infty}$.
\end{lemm}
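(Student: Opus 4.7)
The plan is to exploit deck-invariance of $\pi_\infty^*\alpha_\Sigma$ together with the path-independence of the primitive $\psi_\Sigma$. The first observation is that $\pi_\infty \circ \gamma = \pi_\infty$ (as $\gamma$ is a deck transformation), so
\begin{equation*}
\gamma^*(\pi_\infty^*\alpha_\Sigma) = (\pi_\infty \circ \gamma)^* \alpha_\Sigma = \pi_\infty^*\alpha_\Sigma.
\end{equation*}
In particular, $\pi_\infty^*\alpha_\Sigma$ descends consistently along the deck orbit. The second observation is that since $\pi_\infty^*\alpha_\Sigma$ is exact (as recalled just before the statement), the line integral $\int_a^b \pi_\infty^*\alpha_\Sigma$ depends only on the endpoints $a,b \in M_\infty$ and not on the chosen path.

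Given $x \in M_\infty$, I would compute $\psi_\Sigma(\gamma.x)$ by splitting a path from $x_0$ to $\gamma.x$ as the concatenation of two legs: first a path $\widetilde{\gamma}$ from $x_0$ to $\gamma.x_0$, and then the $\gamma$-image of a fixed path $c$ from $x_0$ to $x$, which goes from $\gamma.x_0$ to $\gamma.x$. By path-independence,
\begin{equation*}
\psi_\Sigma(\gamma.x) = \int_{\widetilde{\gamma}} \pi_\infty^*\alpha_\Sigma + \int_{\gamma \circ c} \pi_\infty^*\alpha_\Sigma.
\end{equation*}
The second leg can be handled using the deck-invariance established above: a change of variable formula for line integrals gives $\int_{\gamma \circ c} \pi_\infty^*\alpha_\Sigma = \int_c \gamma^*(\pi_\infty^*\alpha_\Sigma) = \int_c \pi_\infty^*\alpha_\Sigma = \psi_\Sigma(x)$.

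For the first leg, the path $\widetilde{\gamma}$ from $x_0$ to $\gamma.x_0$ projects down under $\pi_\infty$ to a loop $\ell$ in $M$ based at $\pi_\infty(x_0)$, and by the very construction of the cover this loop represents the class $\gamma \in \pi_1(M)/\ker\rho$ (i.e., $[\ell] \equiv \gamma \pmod{\ker\rho}$). Then the change of variable under $\pi_\infty$ gives
\begin{equation*}
\int_{\widetilde{\gamma}} \pi_\infty^*\alpha_\Sigma = \int_\ell \alpha_\Sigma = I([\ell],[\Sigma]) = \rho(\gamma),
\end{equation*}
where the middle equality is the defining property \eqref{eqn-def-1-form-inter-number} of $\alpha_\Sigma$, and the last one uses that the Abelianization step in $\rho$ ensures the value depends only on $\gamma$ modulo $\ker\rho$. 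Combining the two legs yields the claim. The main conceptual point — which I expect to be the only thing requiring care — is verifying cleanly that the integral along the lift $\widetilde{\gamma}$ equals the integral of $\alpha_\Sigma$ over any loop representative of $\gamma$; this follows from the fact that changing $\widetilde\gamma$ changes the projected loop $\ell$ within its homology class in $M$, and $\alpha_\Sigma$ being closed makes the integral a homological invariant.
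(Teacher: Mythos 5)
Your argument is correct and is essentially an expansion of the paper's one-line proof (lifting loops, the definition (\ref{eqn-def-harmonic-primitive}) of $\psi_{\Sigma}$, and the defining property (\ref{eqn-def-1-form-inter-number}) of $\alpha_{\Sigma}$). One small caveat: your closing claim that changing $\widetilde{\gamma}$ moves the projected loop $\ell$ only within its homology class is not quite right---different choices change $\ell$ by loops in $\ker\rho$, which need not be null-homologous---but this is harmless, since the exactness of $\pi_{\infty}^*\alpha_{\Sigma}$ already makes $\int_{\widetilde{\gamma}}\pi_{\infty}^*\alpha_{\Sigma}$ independent of the choice, and $\int_{\ell}\alpha_{\Sigma}$ is unchanged under composition with loops in $\ker\rho$ because $\alpha_{\Sigma}$ integrates to $\rho$ of the class by (\ref{eqn-def-1-form-inter-number}).
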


\begin{proof}
  This comes from lifting loops, (\ref{eqn-def-harmonic-primitive}) and the property (\ref{eqn-def-1-form-inter-number}).
\end{proof}

Now we introduce several helpful functions. Let~$\theta\in [-\pi,\pi)$ (identified with the torus) be a parameter. Define the \textsf{character}
\begin{equation}
  \left.
  \begin{array}{rcl}
    \chi_{\theta}: \pi_1(M)&\lto & \mb{C},\\
    \gamma &\longmapsto & \me^{\ii \theta \int_{\gamma}^{}\alpha_{\Sigma}}= \me^{\ii \theta\cdot  \rho(\gamma)}.
  \end{array}
  \right.
  \label{}
\end{equation}
Next we define for $\theta\in [-\pi,\pi)$ the simple automorphic function~$G_{\theta}:M_{\infty}\lto \mb{C}$,
\begin{equation}
  G_{\theta}(x)\defeq\me^{\ii \theta\psi_{\Sigma}}.
  \label{}
\end{equation}
By (\ref{eqn-deck-auto-primitive-potential}) we find that
\begin{equation}
  G_{\theta}(\gamma.x)=\chi_{\theta}(\gamma)G_{\theta}(x),
  \label{}
\end{equation}
for deck shifts~$\gamma\in \pi_1(M)$,~$x\in M_{\infty}$. Now we restrict~$G_{\theta}$ to the fundamental domain~$\Omega$. Similarly as before we define
\begin{equation}
  C_{\theta}^{\infty}(\Omega)\defeq \left\{\tilde{f}|_{\Omega}~ \middle|~ \tilde{f}\in C^{\infty}(M_{\infty}),~\tilde{f}(\gamma.x)=\chi_{\theta}(\gamma)\tilde{f}(x)\textrm{ for }\gamma\in \pi_1(M) \right\}.
  \label{}
\end{equation}
Denote by~$\Delta_{\Omega,\theta}$ the self-adjoint Laplacian acting on~$L^2(\Omega)$ with core~$C_{\theta}^{\infty}(\Omega)$, sometimes called a \textsf{Born-von K\'arm\'an Laplacian} \cite{Lewin}. Note that~$C_0^{\infty}(\Omega)$ is identified with~$C^{\infty}(M)$, and trivially~$L^2(\Omega)$ with~$L^2(M)$. 

\begin{lemm}\label{lemm-twist-self-adjoint-domain}
  The Laplacian is indeed essentially self-adjoint on~$C_{\theta}^{\infty}(\Omega)$. The map~$G_{\theta}\mn{\cdot}:L^2(\Omega)\lto L^2(\Omega)$,~$f\mapsto G_{\theta}f$ is a unitary isomorphism. The operator~$\Delta_{\theta}$ on~$L^2(M)$ defined by~$\Delta_{\theta}f:=G_{\theta}^{-1}\Delta_{\Omega,\theta}(G_{\theta}f)$ is essentially self-adjoint on the core~$C^{\infty}(M)$ and unitarily equivalent to~$\Delta_{\Omega,\theta}$.
\end{lemm}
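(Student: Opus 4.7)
The three assertions are logically linked, so the strategy is to establish the unitarity of $G_\theta\cdot$ first, then use it to transport the operator $\Delta_{\Omega,\theta}$ onto $L^2(M)$, recognize the transported operator as a standard elliptic operator on a closed compact manifold, and finally transfer essential self-adjointness back to $\Delta_{\Omega,\theta}$.

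\emph{Step 1 (unitarity of $G_\theta\cdot$).} Since $\alpha_\Sigma$ is a real harmonic 1-form, the primitive $\psi_\Sigma$ defined in (\ref{eqn-def-harmonic-primitive}) is a \emph{real-valued} smooth function on $M_\infty$, so $|G_\theta|=|\me^{\ii\theta\psi_\Sigma}|\equiv 1$. Hence pointwise multiplication by $G_\theta$ preserves the $L^2(\Omega)$-norm, and the inverse is multiplication by $G_{-\theta}=\overline{G_\theta}$. This gives the unitary isomorphism asserted in the middle sentence.

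\emph{Step 2 (identification of cores and computation of $\Delta_\theta$).} I claim that pointwise multiplication by $G_\theta$ sends $C^\infty(M)$, viewed as the subspace of $\chi_0$-automorphic smooth functions on $M_\infty$ restricted to $\Omega$, bijectively onto $C_\theta^\infty(\Omega)$. This is an immediate consequence of the transformation rule $G_\theta(\gamma.x)=\chi_\theta(\gamma) G_\theta(x)$ proved above. For $f\in C^\infty(M)$ I then compute $\Delta_\theta f = G_\theta^{-1}\Delta(G_\theta f)$ by the Leibniz rule for the (positive) Laplacian. Using $\nabla G_\theta = \ii\theta G_\theta\nabla\psi_\Sigma$ and the harmonicity $\Delta\psi_\Sigma=0$, the calculation yields
\begin{equation*}
  \Delta_\theta f \;=\; \Delta f - 2\ii\theta\langle\nabla\psi_\Sigma,\nabla f\rangle_g + \theta^2|\nabla\psi_\Sigma|_g^2\, f \;=\; (d-\ii\theta\alpha_\Sigma)^{*}(d-\ii\theta\alpha_\Sigma)f,
\end{equation*}
where on the right-hand side $\alpha_\Sigma$ and $f$ are considered as living on $M$ (this is well-defined because $\pi_\infty^*\alpha_\Sigma=\dd\psi_\Sigma$ and the deck group acts by isometries). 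Thus $\Delta_\theta$ is the magnetic/twisted Laplacian on $M$ with connection form $-\ii\theta\alpha_\Sigma$.

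\emph{Step 3 (essential self-adjointness).} The operator $\Delta_\theta$ is a formally self-adjoint second-order elliptic differential operator with smooth coefficients on the closed compact manifold $M$, of the form $\Delta + \textit{lower order}$ with the same principal symbol. It is therefore essentially self-adjoint on the core $C^\infty(M)$ by the standard result for elliptic operators on closed manifolds (for instance via the classical Chernoff or Wolf argument, or directly from Gaffney/Kato on a compact manifold without boundary). Finally, essential self-adjointness transfers through the unitary conjugation of Step 1–2: if $\overline{\Delta_\theta}$ is self-adjoint on $L^2(M)$ with domain the closure of $C^\infty(M)$, then $G_\theta\,\overline{\Delta_\theta}\,G_\theta^{-1}$ is self-adjoint on $L^2(\Omega)$ with domain $G_\theta\cdot\overline{C^\infty(M)}=\overline{C_\theta^\infty(\Omega)}$, and this is exactly the closure of $\Delta_{\Omega,\theta}$. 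This simultaneously yields the first assertion of the lemma and the unitary equivalence in the third.

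\emph{Main obstacle.} There is no serious analytic obstacle; the only subtlety is bookkeeping, namely checking carefully that $G_\theta\cdot$ maps $C^\infty(M)$ onto $C_\theta^\infty(\Omega)$ and that the computation of $\Delta_\theta$ produces a \emph{globally} defined smooth operator on $M$ (rather than merely on $\Omega$). Both rely on the fact that $\dd\psi_\Sigma=\pi_\infty^*\alpha_\Sigma$ descends to the closed form $\alpha_\Sigma$ on $M$, so that the coefficients $|\nabla\psi_\Sigma|_g^2$ and $\nabla\psi_\Sigma$ are deck-invariant and hence well-defined on $M$.
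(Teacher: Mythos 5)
Your proof is correct and takes essentially the same route as the paper's: unitarity of multiplication by $G_\theta$, the explicit computation yielding the operator (\ref{eqn-express-twist-lap}) on the closed manifold $M$, essential self-adjointness there by standard elliptic theory, and transfer back through the unitary conjugation; the paper merely records in addition the explicit $H^2(\Omega)$ domain with quasi-periodic boundary conditions. (One harmless slip: with the conventions $G_\theta=\me^{\ii\theta\psi_\Sigma}$ and $\Delta=-\ddiv\nabla$, the cross term $-2\ii\theta\,\alpha_\Sigma(\nabla f)$ corresponds to the connection $d+\ii\theta\alpha_\Sigma$ rather than $d-\ii\theta\alpha_\Sigma$, but formal self-adjointness follows either way from $d^*\alpha_\Sigma=0$, so your argument is unaffected.)
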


Note that if we identify~$M\setminus \Sigma$ with~$\Omega^{\circ}$, then~$G_{\theta}$ has monodromy crossing the cut~$\Sigma$. However, for~$f\in C^{\infty}(M)$ it follows from local computations with~$\Delta=-\ddiv \nabla$ that
\begin{equation}
  \Delta_{\theta}f=\Delta_M f -2\ii\theta \alpha_{\Sigma}( \nabla f)+\theta^2|\alpha_{\Sigma}|_g^2 f,
  \label{eqn-express-twist-lap}
\end{equation}
where~$\alpha_{\Sigma}=\dd \psi_{\Sigma}$ is the harmonic 1-form on~$M$ obtained in (\ref{eqn-def-1-form-inter-number}). Thus we do have~$\Delta_{\theta}f\in C^{\infty}(M)$.

\begin{proof}(Proof of lemma \ref{lemm-twist-self-adjoint-domain}.)
  We point out that~$\Delta_{\Omega,\theta}$ will have domain
  \begin{equation}
    \dom(\Delta_{\Omega,\theta})=\left\{ 
      f\in H^2(\Omega)~\middle| 
      \left.
      \begin{array}{l}
	f|_{\Sigma_+}=\me^{\ii\theta}f|_{\Sigma_-},\\
	(\partial_{\nu_+}f)|_{\Sigma_+}=\me^{\ii\theta}(\partial_{\nu_-}f)|_{\Sigma_-}
      \end{array}
      \right.
    \right\},
    \label{eqn-domain-boundary-cond-von-karman}
  \end{equation}
  where~$H^2(\Omega)$ denotes the second order Sobolev space (\cite{Lewin} page 85 for the torus case). We also see from (\ref{eqn-express-twist-lap}) that~$\dom(\Delta_{\theta})=H^2(M)$. Consequently it's clear that~$G_{\theta}\cdot \dom(\Delta_{\theta})=\dom(\Delta_{\Omega,\theta})$. The rest of the lemma follows.
\end{proof}

Combining lemmas \ref{lemm-unit-equiv-domain-bdy-cond} and \ref{lemm-twist-self-adjoint-domain}, we obtain
\begin{equation}
  \sigma(\Delta_N)=\bigcup_{p=0}^{N-1}\sigma(\Delta_{2\pi p/N})=
  \bigcup_{p=0}^{N-1}\sigma(\Delta_{p,N}).
  \label{eqn-main-spectral-decomp}
\end{equation}

Such operators already appeared in the works~\cite{PS, katsuda-sunada-90} whose purpose was to count geodesics with homological constraints, as well as in the more recent works~\cite[Appendix]{AnETDS}, \cite[Remark 1.10 p.~600]{AnGAFA} where $\Delta_{p,N}$ is called \emph{twisted Laplacians}~\cite[p.~599, 610]{AnGAFA}.

\subsubsection{Vanishing of Bottom of Spectrum}

%
%
%
%

\begin{lemm}\label{lemm:minimaslambda0}
  Let~$\Delta_{\theta}$ be the operator defined in lemma \ref{lemm-twist-self-adjoint-domain} with~$\theta\in\mb{R}$. Then $\ker(\Delta_\theta)\neq \{0\}$ iff $\theta\in 2\pi\mathbb{Z}$.
\end{lemm}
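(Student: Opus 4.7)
The plan is to exploit the unitary equivalence established in lemma \ref{lemm-twist-self-adjoint-domain}, which allows us to trade $\Delta_\theta$ acting on $C^\infty(M)$ for the quasi-periodic Laplacian $\Delta_{\Omega,\theta}$ acting on $L^2(\Omega)$. Since dimensions of kernels are preserved by unitary equivalence, it suffices to analyze $\ker(\Delta_{\Omega,\theta})$. The first elementary observation is that $\Delta_{\Omega,\theta}$ depends on $\theta$ only through the phase $e^{i\theta}$, as is apparent from the description of its domain in (\ref{eqn-domain-boundary-cond-von-karman}); thus the question reduces to determining the phases $e^{i\theta}\in U(1)$ for which the kernel is non-trivial, and showing that this phase must equal $1$ will pin down $\theta\in 2\pi\mathbb{Z}$.

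The central step is to pass to the quadratic form. The operator $\Delta_{\Omega,\theta}$ is the Friedrichs-type realization of the Dirichlet energy
\begin{equation*}
  q_\theta(f) \defeq \int_{\Omega}|\nabla f|_g^2 \, \dd V_g,
\end{equation*}
with form domain $H^1_\theta(\Omega) \defeq \{f\in H^1(\Omega) : f|_{\Sigma_+} = e^{i\theta} f|_{\Sigma_-}\}$. A Green--Stokes integration by parts on $\Omega$ applied to $f\in \dom(\Delta_{\Omega,\theta})$ shows that the boundary contribution vanishes: the quasi-periodic conditions on $f$ and on its normal derivative in (\ref{eqn-domain-boundary-cond-von-karman}) combine so that the integrand on $\Sigma_+$ exactly cancels that on $\Sigma_-$ via a factor $e^{-i\theta}\cdot e^{i\theta}=1$. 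This yields $\langle f,\Delta_{\Omega,\theta}f\rangle_{L^2(\Omega)} = q_\theta(f)\ge 0$, and in particular $\ker(\Delta_{\Omega,\theta}) = \{f\in H^1_\theta(\Omega) : \nabla f \equiv 0\}$. I will carry this computation out carefully as a short verification, since it is essentially the same kind of IBP used implicitly in the proof of self-adjointness.

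Once this quadratic-form characterization is in hand, the conclusion follows from the geometry. Any element of $\ker(\Delta_{\Omega,\theta})$ is locally constant on $\Omega$. The key topological input is that $\Omega$ is \emph{connected}: this follows from our standing assumption (cf.\ Subsection~\ref{sss:setting}) that $\rho:\pi_1(M)\lto \mathbb{Z}$ is surjective, which forces $\Sigma$ to be non-separating, so that $M\setminus \Sigma$ (and hence its closure $\Omega$) is connected. Thus a kernel element is a single constant $c\in\mathbb{C}$, and the quasi-periodic condition forces $c = e^{i\theta} c$, giving $c=0$ unless $e^{i\theta}=1$, i.e.\ $\theta\in 2\pi\mathbb{Z}$. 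The converse direction is immediate: if $\theta\in 2\pi\mathbb{Z}$, nonzero constants lie in $H^2_\theta(\Omega)$, satisfy both quasi-periodicity conditions trivially, and are annihilated by the Laplacian, so they belong to $\ker(\Delta_{\Omega,\theta})$, and by unitary equivalence $\ker(\Delta_\theta)$ is also nontrivial.

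The main (and very mild) obstacle is the bookkeeping in the IBP, in particular getting the sign conventions for $\nu_\pm$ compatible with the phase conventions appearing in (\ref{eqn-domain-boundary-cond-von-karman}), so that the natural boundary condition on $\partial_\nu f$ is indeed absorbed. Beyond this, the argument relies only on standard elliptic theory on $\Omega$ and the non-separating topological input on $\Sigma$; no delicate spectral theory of twisted Laplacians (in contrast to what will be needed later for the bulk of the Li--Yau / heat-trace estimates) enters at this stage.
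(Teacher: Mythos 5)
Your proof is correct and follows essentially the same route as the paper's: pass to $\Delta_{\Omega,\theta}$ on the fundamental domain, use Green--Stokes with the quasi-periodic boundary conditions (\ref{eqn-domain-boundary-cond-von-karman}) so the boundary terms cancel, conclude kernel elements are constants, and let the phase condition force $\theta\in 2\pi\mb{Z}$. The only differences are cosmetic: the paper exhibits the kernel element for $\theta=2\pi k$ directly as $G_\theta^{-1}=\me^{-2\pi\ii k\psi_\Sigma}$ on $M$ rather than via constants on $\Omega$, and your explicit appeal to the connectedness of $\Omega$ (from $\Sigma$ being non-separating) makes precise a step the paper leaves implicit.
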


\begin{proof}
  If~$\theta=2\pi k$, then~$G_{\theta}^{-1}=\me^{-2\pi\ii k\psi_{\Sigma}}\in C^{\infty}(M)\subset \dom(\Delta_{\theta})$. One could verify directly with (\ref{eqn-express-twist-lap}) that in this case~$G_{\theta}^{-1}\in\ker(\Delta_{\theta})$. Conversely, suppose now~$\ker(\Delta_{\theta})\ne \{0\}$. Here it turns out more convenient to work with the original operators~$\Delta_{\Omega,\theta}$ on the fundamental domain~$\Omega$. Indeed, for general~$s\in H^2(\Omega)$ we have by Green-Stokes formula ($\ol{s}$ denoting the complex conjugate of $s$)
  \begin{equation}
    \int_{\Omega}^{}\ol{s}\Delta_{\Omega}s\,\dd V_g=\int_{\Omega}^{}\sank{\ol{\nabla s},\nabla s}_g\,\dd V_g -\int_{\Sigma_+}^{} \ol{s}(\partial_{\nu_+}s)\,\dd V_g -\int_{\Sigma_-}^{}\ol{s}(-\partial_{\nu_-} s)\,\dd V_g,
    \label{eqn-green-stokes-funda-domain}
  \end{equation}
  where we remember that~$\nu_+$ is \textit{outward} pointing whereas~$\nu_{-}$ is \textit{inward} pointing. In particular, for~$s$ satisfying the boundary conditions in (\ref{eqn-domain-boundary-cond-von-karman}), the boundary terms of (\ref{eqn-green-stokes-funda-domain}) vanish. This shows that, under \textit{these boundary conditions},~$s\in \ker(\Delta_{\Omega,\theta})$ iff~$\nabla s=0$, which implies~$s$ is a constant. However, looking at the boundary conditions again, this could only happen with $s\ne 0$ if~$\theta\in 2\pi\mb{Z}$, proving the result.
\end{proof}

From (\ref{eqn-green-stokes-funda-domain}) and the boundary conditions one can also see that~$\Delta_{\Omega,\theta}$ are all nonnegative. On the other hand, from the expression (\ref{eqn-express-twist-lap}) for~$\Delta_{\theta}$ we see that it is a second order elliptic differential operator which by the usual methods has compact resolvent and hence discrete spectrum. Thus we denote the spectrum of~$\Delta_{\theta}$ by~$0\le \lambda_0(\theta)\le \lambda_1(\theta)\le \cdots$, counting multiplicity. Lemma \ref{lemm:minimaslambda0} then translates into saying that~$\lambda_0(\theta)=0$ iff~$\theta\in 2\pi\mb{Z}$, and~$\lambda_0(\theta)>0$ otherwise.

\subsection{Kato Perturbations and Uniform Spectral Gap}

Our goal is to control the bottom of the spectrum of the sequence $\Delta_N$ uniformly in $N$ when $N$ gets large enough. We have followed the approach of Phillips-Sarnak~\cite{PS} and obtained a family of operators $\Delta_{\theta}$ indexed by the parameter $\theta\in [-\pi,\pi)$ on the ``Jacobian torus'', which are related to the spectrum of $\Delta_N$ by (\ref{eqn-main-spectral-decomp}).

We have also shown that lowest eigenvalue $\lambda_0(\theta)$ of $\Delta_{\theta}$ vanishes exactly when $\theta=0$ for $\theta\in [-\pi,\pi)$. In this subsection we will apply Kato's perturbation theory to see that $\lambda_0(\theta)$ in fact depends analytically on $\theta$ near~$\theta=0$ and $\lambda_0(0)=0$ is a \textbf{strict minimum}.

\subsubsection{The Family of Twisted Laplacians Parametrized by the Angle}

\begin{deef}
    An (unbounded) operator-valued function~$T(\beta)$ on a complex (resp.\ real) domain~$\mathcal{R}$ is called an \textsf{analytic family (in the sense of Kato)} if
\begin{enumerate}[(i)]
  \item for every~$\beta\in \mathcal{R}$,~$T(\beta)$ is closed and has non-empty resolvent set~$\mb{C}\setminus \sigma(T(\beta))$;
  \item for every~$\beta_0\in \mathcal{R}$ there exists~$z_0\in \mb{C}\setminus \sigma(T(\beta_0))$ and~$\delta>0$ such that~$z_0$ remains in the resolvent of~$T(\beta)$ for~$|\beta-\beta_0|<\delta$, and~$(T(\beta)-z_0)^{-1}$ is a (resp.\ real) analytic operator-valued function of~$\beta$ for~$|\beta-\beta_0|<\delta$.
\end{enumerate}
\end{deef}

\begin{lemm}
  The function~$\theta\mapsto \Delta_{\theta}$ is a real analytic family on~$\mb{R}$.
\end{lemm}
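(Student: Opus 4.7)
The plan is to recognize $\{\Delta_\theta\}_{\theta\in\mathbb{R}}$ as a self-adjoint \emph{holomorphic family of type (A)} in Kato's sense (see Reed--Simon Vol.~IV Ch.~XII, or Kato's \emph{Perturbation theory for linear operators}, Ch.~VII), from which the required real analytic Kato family property follows immediately by restriction to $\mathbb{R}$.

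The starting point, which I would invoke right away, is lemma \ref{lemm-twist-self-adjoint-domain}: all $\Delta_\theta$ share the common self-adjoint domain $\dom(\Delta_\theta)=H^2(M)$, independent of $\theta$. With formula (\ref{eqn-express-twist-lap}) in hand I would rewrite
\begin{equation*}
  \Delta_\theta = \Delta_M + \theta A + \theta^2 B,
\end{equation*}
where $A := -2\ii\,\alpha_\Sigma(\nabla\,\cdot\,)$ is a first-order differential operator with smooth coefficients and $B$ is multiplication by the bounded smooth function $|\alpha_\Sigma|_g^2$. For each fixed $f\in H^2(M)$ the vector $\Delta_\theta f \in L^2(M)$ is a quadratic polynomial in $\theta$, so $\theta\mapsto \Delta_\theta f$ is entire as an $L^2$-valued function --- this is the defining property of a type (A) family.

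To verify the resolvent axiom, I would extend the parameter to a complex tube around $\mathbb{R}$. The required relative-boundedness estimates are standard: $B$ is bounded on $L^2(M)$, and by elliptic Sobolev interpolation one has $\|\nabla f\|_{L^2} \le \varepsilon \|\Delta_M f\|_{L^2} + C_\varepsilon \|f\|_{L^2}$ for any $\varepsilon>0$, so $A$ is $\Delta_M$-bounded with relative bound $0$. The Kato--Rellich theorem then guarantees that $\Delta_M + \theta A + \theta^2 B$ remains closed on $H^2(M)$ for every $\theta\in\mathbb{C}$. Fixing $\theta_0\in\mathbb{R}$ and choosing $z_0\in\mathbb{C}\setminus\mathbb{R}$ in the resolvent set of the self-adjoint operator $\Delta_{\theta_0}$, a Neumann series in the perturbation $(\theta-\theta_0)(A+(\theta+\theta_0)B)$ would show that $z_0$ remains in the resolvent of $\Delta_\theta$ and that $(\Delta_\theta - z_0)^{-1}$ depends holomorphically on $\theta$ on a small complex neighborhood of $\theta_0$, in the operator-norm topology.

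Restricting back to the real line, this verifies both axioms of a Kato analytic family. I do not expect any real obstacle beyond writing out the relative-bound estimate for $A$ carefully; the only conceptual input is the common-domain fact from lemma \ref{lemm-twist-self-adjoint-domain}, which reduces the lemma to a mechanical application of the Kato type (A) machinery.
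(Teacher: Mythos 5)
Your proof is correct and takes essentially the same route as the paper: both recognize $(\Delta_\theta)_{\theta}$ as a Kato analytic family of type (A) via the common domain $H^2(M)$ and the polynomial dependence in (\ref{eqn-express-twist-lap}), and both then obtain analyticity of the resolvent from a Neumann series exploiting relative boundedness of the first-order term --- you expand around $\Delta_{\theta_0}$ at a non-real spectral point using the elementary bound $\Vert\nabla f\Vert_{L^2}\le\varepsilon\Vert\Delta_M f\Vert_{L^2}+C_\varepsilon\Vert f\Vert_{L^2}$, while the paper expands around $\Delta_M$ at $-L$ with $L$ large and controls $(L+\Delta_M)^{-1}A_\theta$ via the pseudodifferential calculus and Calder\'on--Vaillancourt, a purely cosmetic difference. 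One small terminological point: for complex $\theta$ the perturbation $\theta A+\theta^2 B$ is not symmetric, so the closedness of $\Delta_M+\theta A+\theta^2 B$ on $H^2(M)$ should be attributed to the standard stability of closedness under relatively bounded perturbations with bound $<1$ rather than to Kato--Rellich proper; this does not affect the lemma, which concerns real $\theta$ where self-adjointness on $H^2(M)$ is already known.
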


For every $s\in \mathbb{R}$, we will denote by $\Psi^s(M)$ the pseudodifferential operators of order $s$ acting on distributions on the manifold $M$.

\begin{proof}
    Essentially, we see from the expression (\ref{eqn-express-twist-lap}) that~$H^2(M)$ is a \textit{common domain} for all of~$\Delta_{\theta}$,~$\theta\in \mb{R}$. We could verify with that expression also that for~$u\in H^2(M)$,~$\Delta_{\theta}u$ is a~$L^2(M)$-valued (real) analytic function of~$\theta$. These ensure that~$(\Delta_{\theta})_{\theta}$ is an \textit{analytic family of type (A)} in the sense of \cite{RS4} page 16 and is consequently a Kato-analytic family (see \cite{RS4} and Kato \cite{Kato} pages 375-381). 
    
    Nevertheless, we indicate the argument for our case to make the paper more self-contained. We would like to prove, say, the real analyticity near a fixed $\theta_0\in\mb{R}$. Clearly for any $L>0$, $-L\notin \sigma(\Delta_\theta)$ for all $\theta$ since $\Delta_\theta:H^2(M)\lto L^2(M)$ are self-adjoint with nonnegative spectrum. By definition of the operators $\Delta_\theta$, we have an identity of the form
\begin{eqnarray*}
\Delta_\theta=\Delta_M-\ii \theta X+\theta^2 V\defeq \Delta_M+A_\theta
\end{eqnarray*}
where $X$ is a smooth vector field (differential operator of order $1$) and $V$ a smooth potential on $M$. We would have
\begin{align*}
(\Delta_\theta+L)^{-1}=(\one+(L+\Delta)^{-1}A_\theta)^{-1}(L+\Delta)^{-1}
\end{align*} 
provided $\Vert
(L+\Delta)^{-1}A_\theta \Vert_{L^2\to L^2}<1$. Indeed, by the composition theorem for pseudodifferential operators, $(L+\Delta)^{-1}\circ A_\theta$ is bounded in $\Psi^{-1}(M)$ uniformly in $\theta$ near $\theta_0$ and therefore as operators in $\mathcal{L}(L^2(M))$ since $\Psi^{-1}(M)\subset \Psi^0(M)\subset \mathcal{L}(L^2(M))$ by the Calder\'on-Vaillancourt theorem. Thus by choosing $L$ large enough (possibly depending on $\theta_0$), we can make sure that
$ \Vert
(L+\Delta)^{-1}A_\theta \Vert_{L^2\to L^2}\leqslant \frac{1}{2}$, uniformly in $\theta$ near $\theta_0$ and the result follows by explicit power series expansion of $(\Delta_{\theta}+L(\theta_0))^{-1}$ in $\theta$ near $\theta_0$.
\end{proof}

\subsubsection{Analytic Vanishing of Bottom of Spectrum}

We use the following two results which can be found in \cite{RS4} theorems XII.8 and XII.13.

\begin{prop}
  [Kato-Rellich, partial statement] \label{prop-kato-rellich} Let~$T(\beta)$ be an analytic family in the sense of Kato on a domain~$\mathcal{R}$. Pick~$\beta_0\in \mathcal{R}$ and let~$E(\beta_0)$ be an isolated \textsf{simple} eigenvalue of~$T(\beta_0)$. Then there exists~$\eta(\beta_0)>0$ such that for~$|\beta-\beta_0|<\eta(\beta_0)$, there is exactly one point~$E(\beta)$ of~$\sigma(T(\beta))$ near~$E(\beta_0)$ which is an isolated simple eigenvalue. Moreover~$E(\beta)$ is an analytic function of~$\beta$ in the neighborhood~$|\beta-\beta_0|<\eta(\beta_0)$.
\end{prop}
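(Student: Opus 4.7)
The plan is to use Riesz spectral projectors built from contour integrals of the resolvent, which is the standard device in analytic perturbation theory. Since $E(\beta_0)$ is an isolated simple eigenvalue of $T(\beta_0)$, choose a positively oriented circle $\Gamma\subset \mb{C}$ around $E(\beta_0)$ enclosing no other point of $\sigma(T(\beta_0))$, and set
\[
P(\beta_0)\defeq -\frac{1}{2\pi\ii}\oint_\Gamma (T(\beta_0)-z)^{-1}\,\dd z,
\]
a rank-one projection onto $\ker(T(\beta_0)-E(\beta_0))$. The goal is to define an analogous $P(\beta)$ for $\beta$ near $\beta_0$, verify it remains a rank-one projection, and read off $E(\beta)$ from it.

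The first and most delicate step is to show that for $\beta$ close enough to $\beta_0$, the \emph{entire} contour $\Gamma$ lies in the resolvent set of $T(\beta)$ with the resolvent jointly analytic in $(\beta,z)$. By the Kato-analyticity hypothesis, for each fixed $z_0\in \mb{C}\setminus\sigma(T(\beta_0))$ we have that $(T(\beta)-z_0)^{-1}$ exists and is operator-norm analytic in $\beta$ on some neighborhood of $\beta_0$. From the factorization
\[
T(\beta)-z=(T(\beta)-z_0)\big[\one -(z-z_0)(T(\beta)-z_0)^{-1}\big],
\]
a Neumann series argument gives joint analyticity of $(\beta,z)\mapsto (T(\beta)-z)^{-1}$ on a small bi-disk around $(\beta_0,z_0)$, provided $|z-z_0|\cdot \nrm{(T(\beta)-z_0)^{-1}}<1$; varying $z_0$ along $\Gamma$ and invoking compactness of $\Gamma$ then produces a single $\eta(\beta_0)>0$ for which the resolvent is jointly analytic on $\{|\beta-\beta_0|<\eta(\beta_0)\}\times \Gamma$. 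Define $P(\beta)$ by the same contour integral with $T(\beta_0)$ replaced by $T(\beta)$; this $P(\beta)$ is then operator-norm analytic in $\beta$. A standard contour-splitting calculation based on the first resolvent identity shows $P(\beta)^2=P(\beta)$, and by continuity $\nrm{P(\beta)-P(\beta_0)}<1$ on a possibly smaller neighborhood, forcing $\dim\ran P(\beta)=\dim\ran P(\beta_0)=1$. Hence exactly one point $E(\beta)\in \sigma(T(\beta))$ lies inside $\Gamma$, and it is a simple eigenvalue.

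To obtain analyticity of the scalar function $E(\beta)$ itself, pick $\psi_0\in \ran P(\beta_0)$ with $\psi_0\ne 0$ and set $\psi(\beta)\defeq P(\beta)\psi_0$; this is Hilbert-space-valued analytic in $\beta$, nonzero near $\beta_0$, and by construction $T(\beta)\psi(\beta)=E(\beta)\psi(\beta)$. Fixing any bounded linear functional $\ell$ with $\ell(\psi_0)\ne 0$,
\[
E(\beta)=\frac{\ell(T(\beta)\psi(\beta))}{\ell(\psi(\beta))},
\]
exhibits $E(\beta)$ as a ratio of two scalar analytic functions with nonvanishing denominator on $|\beta-\beta_0|<\eta(\beta_0)$ (after further shrinking if necessary), hence analytic. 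The main obstacle is the uniform existence and analyticity of the resolvent along the full contour: because $T(\beta)$ is unbounded one cannot manipulate $T(\beta)-T(\beta_0)$ directly as a bounded perturbation, and all estimates must be routed through the analytic family $(T(\beta)-z_0)^{-1}$ of \emph{bounded} operators supplied by the Kato hypothesis, combined with compactness of $\Gamma$.
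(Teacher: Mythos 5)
The paper does not actually prove this proposition: it is quoted from Reed--Simon ([RS4], Theorems XII.8 and XII.13), and your Riesz-projection argument is precisely the standard proof given there, so in substance you have reproduced the intended route, and it is sound. The one step worth tightening is the very first: the definition of a Kato analytic family only supplies analyticity of $(T(\beta)-z_0)^{-1}$ at a \emph{single} point $z_0$ per base parameter, so asserting it "for each fixed $z_0\in\mb{C}\setminus\sigma(T(\beta_0))$" (in particular at every point of $\Gamma$) is not literally the hypothesis; it follows from it via the factorization $T(\beta)-z_1=\bigl[\one-(z_1-z_0)(T(\beta)-z_0)^{-1}\bigr](T(\beta)-z_0)$ together with openness of invertibility in the bounded operators, after which your Neumann-series-plus-compactness argument along $\Gamma$ goes through. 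Likewise, analyticity of $\beta\mapsto \ell(T(\beta)\psi(\beta))$ is not immediate from analyticity of $\psi(\beta)$ alone because $T(\beta)$ is unbounded; it is cleanest to note
\begin{equation}
  T(\beta)P(\beta)=-\frac{1}{2\pi\ii}\oint_{\Gamma} z\,(T(\beta)-z)^{-1}\,\dd z,
\end{equation}
which is manifestly analytic in $\beta$ and closes this last small gap.
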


\begin{prop}
  [Kato-Rellich, finite multiplicity version] \label{prop-kato-rellich-fini-mult} Let~$T(\beta)$ be an real analytic family of \textsf{self-adjoint operators} in the sense of Kato on a domain~$\mathcal{R}\subset \mb{R}$. Pick~$\beta_0\in \mathcal{R}$ and let~$E(\beta_0)$ be an isolated eigenvalue of~$T(\beta_0)$ with finite multiplicity $m$. Then there exists~$\eta'(\beta_0)>0$ and $m$ not necessarily distinct (real) analytic functions $E^{(1)}(\beta)$, \dots, $E^{(m)}(\beta)$, with $E^{(k)}(0)=E_0$, such that for~$|\beta-\beta_0|<\eta'(\beta_0)$, $E^{(1)}(\beta)$, \dots, $E^{(m)}(\beta)$ are the only eigenvalues of $T(\beta)$ near $E_0$.
\end{prop}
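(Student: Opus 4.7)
The plan is to use the Riesz projection to reduce the infinite-dimensional spectral perturbation problem to a finite-dimensional Hermitian-matrix problem on a transported subspace. First I would fix a small positively-oriented circle $\Gamma\subset\mb{C}$ enclosing $E_0$ but no other point of $\sigma(T(\beta_0))$, which is possible since $E_0$ is isolated. Because $(T(\beta_0)-z)^{-1}$ exists for $z\in\Gamma$ and the Kato analyticity transfers the existence of the resolvent to small real perturbations of $\beta_0$, there exists $\eta_1>0$ such that $\Gamma\subset \mb{C}\setminus\sigma(T(\beta))$ for all $|\beta-\beta_0|<\eta_1$ and the resolvent $(z-T(\beta))^{-1}$ is jointly analytic in $(\beta,z)$ for $z\in\Gamma$. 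This justifies defining the Riesz projection
\begin{equation*}
  P(\beta)\defeq -\frac{1}{2\pi\ii}\oint_{\Gamma}(T(\beta)-z)^{-1}\dd z,
\end{equation*}
which is a real analytic family of orthogonal projections (self-adjointness of $T(\beta)$ ensures orthogonality) whose range is exactly the spectral subspace of $T(\beta)$ corresponding to eigenvalues enclosed by $\Gamma$.

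Next I would argue that $\dim\ran P(\beta)=m$ for all $\beta$ near $\beta_0$. The map $\beta\mapsto P(\beta)$ is norm-continuous, and a classical fact says that two projections $P$, $Q$ with $\nrm{P-Q}<1$ have the same (possibly infinite) rank; since $\dim\ran P(\beta_0)=m<\infty$, this integer-valued quantity is locally constant. Hence for $|\beta-\beta_0|$ small, the eigenvalues of $T(\beta)$ inside $\Gamma$ together have total multiplicity exactly $m$. To bring all these eigenvalues onto a single fixed finite-dimensional space, I would use Kato's transformation function: solve the linear ODE $W'(\beta)=[P'(\beta),P(\beta)]W(\beta)$ with $W(\beta_0)=\one$, producing a real analytic family of invertible operators $W(\beta)$ with $W(\beta)P(\beta_0)W(\beta)^{-1}=P(\beta)$. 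Then
\begin{equation*}
  A(\beta)\defeq W(\beta)^{-1} T(\beta) P(\beta) W(\beta)\big|_{\ran P(\beta_0)}
\end{equation*}
is a real analytic family of Hermitian operators on the fixed $m$-dimensional Hilbert space $\ran P(\beta_0)$, whose eigenvalues (with multiplicity) coincide precisely with the eigenvalues of $T(\beta)$ near $E_0$.

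The main obstacle, and the heart of the matter, is the purely finite-dimensional statement due to Rellich: a real analytic family of $m\times m$ Hermitian matrices admits $m$ real analytic eigenvalue branches globally on any connected interval of real parameters where the family is analytic. The subtlety is that for a general analytic family of (non-self-adjoint) matrices, the eigenvalues are only Puiseux-algebraic functions of the parameter and may exhibit branch points where several eigenvalues coalesce; self-adjointness on the real axis rules out such branching because the eigenvalues must stay real. The standard proof proceeds by induction on $m$, using iterated resolvent integration on small circles around each distinct eigenvalue of $A(\beta_0)$ to split off analytic blocks, together with the fact that a non-constant real analytic function has only isolated zeros to ensure that apparent crossings between analytic branches are genuine crossings rather than Puiseux branch points. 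Applying Rellich's theorem to $A(\beta)$ yields the desired $m$ real analytic functions $E^{(1)}(\beta),\dots,E^{(m)}(\beta)$, each equal to $E_0$ at $\beta=\beta_0$, with $\eta'(\beta_0)$ taken as the minimum of the radii of validity of the constructions above.
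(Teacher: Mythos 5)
Your proof is correct, and in fact the paper offers no proof of this proposition at all—it simply cites Reed–Simon, Theorem XII.13—so your argument is essentially a reconstruction of the standard proof found there and in Kato's book (Riesz projection around the isolated eigenvalue, local constancy of its rank, Kato's transformation function $W(\beta)$ solving $W'=[P',P]W$, and Rellich's finite-dimensional theorem for analytic Hermitian families). The one step worth making explicit is that $W(\beta)$ is unitary—since $P'(\beta)$ is self-adjoint, the generator $[P'(\beta),P(\beta)]$ is skew-adjoint—which is precisely what guarantees that the reduced operator $A(\beta)$ on the range of $P(\beta_0)$ is Hermitian, so that Rellich's theorem applies.
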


\begin{lemm}\label{lemm:Kato}
Let $\Delta_\theta$ be the analytic family of operators defined in lemma \ref{lemm-twist-self-adjoint-domain}. There exists a threshold~$\varepsilon_0>0$ and an integer~$p>0$ such that
\begin{enumerate}[(i)]
  \item $[0,\varepsilon_0)$ contains no~$\lambda_i(\theta)$ for any~$i\ge 1$ and~$\theta\in[-\pi,\pi]$;
  \item there exists $a>b>0$ as well as~$\eta>0$ such that~$|\lambda_0(\theta)|=|\lambda_0(\theta)-\lambda_0(0)|<\varepsilon_0$ implies
  \begin{equation}
      b\theta^{2p}\le \lambda_0(\theta)=a\theta^{2p}+\mathcal{O}(\theta^{2p+1}), 
      \label{eqn-ground-energy-anal-local}
  \end{equation}
  as well as $|\theta|<\eta$, with $\lambda_0(0)=0$ being a strict minimum of $\lambda_0(\theta)$.
\end{enumerate}
\end{lemm}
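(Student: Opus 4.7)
The argument has two halves: an analytic perturbation analysis near $\theta = 0$, where $\lambda_0(0)=0$ is a simple eigenvalue of $\Delta_M$, and a compactness argument on the parameter space $[-\pi, \pi]$.

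First, I will establish the local analytic behavior of $\lambda_0(\theta)$ near $\theta = 0$. Since $M$ is connected, $\ker \Delta_M$ is spanned by the constants, so $\lambda_0(0) = 0$ is a simple isolated eigenvalue of $\Delta_0 = \Delta_M$. The family $(\Delta_\theta)_{\theta \in \mb{R}}$ being real analytic in the sense of Kato, Proposition \ref{prop-kato-rellich} yields a neighborhood of $0$ on which $\lambda_0(\theta)$ remains a simple isolated eigenvalue and a real analytic function of $\theta$.

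For (ii), Lemma \ref{lemm:minimaslambda0} gives $\lambda_0(\theta) > 0$ for $\theta \in (-\pi, \pi) \setminus \{0\}$, so the real analytic function $\lambda_0$ is not identically zero and has an isolated zero at $\theta = 0$. It therefore admits a Taylor expansion $\lambda_0(\theta) = \sum_{k \geq k_0} c_k \theta^k$ near $0$ with $c_{k_0} \neq 0$ and $k_0 \geq 1$. Non-negativity on a real interval around $0$ forces $k_0$ to be even and $c_{k_0} > 0$; setting $k_0 = 2p$ and $a = c_{2p}$ yields the expansion $\lambda_0(\theta) = a\theta^{2p} + \mathcal{O}(\theta^{2p+1})$. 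Choosing any $b \in (0, a)$ and $\eta > 0$ small enough that the remainder is dominated by $(a-b)\theta^{2p}$ on $|\theta| < \eta$ gives the lower bound $b\theta^{2p} \leq \lambda_0(\theta)$.

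For (i), one must show $\inf_{\theta \in [-\pi,\pi],\, i \geq 1} \lambda_i(\theta) > 0$. I will cover $[-\pi, \pi]$ by finitely many open neighborhoods each admitting a local positive lower bound, then take the minimum. Near $\theta_0 = 0$, the first nonzero eigenvalue $\lambda_1(0)$ of $\Delta_M$ is strictly positive; applying Proposition \ref{prop-kato-rellich-fini-mult} to $\lambda_1(0)$ (which has finite multiplicity, $\Delta_\theta$ being elliptic with compact resolvent) shows that all eigenvalue branches of $\Delta_\theta$ distinct from the simple branch $\lambda_0(\theta)$ remain $\geq \lambda_1(0)/2$ on some neighborhood $I_0$ of $0$. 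Near any $\theta_0 \in [-\pi, \pi] \setminus \{0\}$, Lemma \ref{lemm:minimaslambda0} gives $\lambda_0(\theta_0) > 0$; by continuity of the bottom of the spectrum for analytic families with compact resolvent, every eigenvalue of $\Delta_\theta$ lies above $\lambda_0(\theta_0)/2$ on some neighborhood $I_{\theta_0}$. Compactness of $[-\pi, \pi]$ yields a finite subcover; setting $\varepsilon_0 > 0$ as the minimum of these local bounds, and shrinking it further if needed so that the window $|\theta| < \eta$ in (ii) corresponds to $\lambda_0(\theta) < \varepsilon_0$, establishes (i) and pins down the $\varepsilon_0$ in (ii).

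The main obstacle is the uniform gap in (i): one must simultaneously control the simple branch $\lambda_0(\theta)$ collapsing to $0$ at $\theta = 0$ and all higher eigenvalues, which may have non-trivial multiplicities at $\theta = 0$ and split under perturbation. The two versions of Kato--Rellich handle these two regimes separately, and compactness glues them into a single threshold $\varepsilon_0$.
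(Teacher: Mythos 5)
Your proposal is correct and follows essentially the same route as the paper: simple Kato--Rellich at $\theta=0$ combined with Lemma \ref{lemm:minimaslambda0} to get the even-order expansion $\lambda_0(\theta)=a\theta^{2p}+\mathcal{O}(\theta^{2p+1})$ with strict minimum, the finite-multiplicity Kato--Rellich to control $\lambda_1(\theta)$ near $0$, and positivity plus continuity of $\lambda_0$ on the compact set away from $0$ for the uniform gap, then a common choice of $\varepsilon_0$ and $\eta$. The only cosmetic difference is that the paper proves the continuity of $\theta\mapsto\lambda_0(\theta)$ explicitly via $(\lambda_0(\theta)+L)^{-1}=\lVert(\Delta_\theta+L)^{-1}\rVert_{L^2\to L^2}$ and takes the infimum over the compact set directly, whereas you cite this continuity as known and run an open-cover/finite-subcover argument.
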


\begin{proof}
  First of all we remark that~$\lambda_0(\theta)=\inf \sigma(\Delta_{\theta})$ is continuous in~$\theta$ since at every~$\theta_0$ we can choose~$L>0$ large enough so that~$-L$ is in the resolvent set of~$\Delta_{\theta}$ for all~$\theta$ near~$\theta_0$ and, as each $(\Delta_{\theta}+L)^{-1}$ is compact,~$(\lambda_0(\theta)+L)^{-1}=\sup \sigma((\Delta_{\theta}+L)^{-1})=\nrm{(\Delta_{\theta}+L)^{-1}}_{L^2\to L^2}$, the latter being continuous in~$\theta$ since $\Delta_{\theta}$ is a Kato-analytic family. This tells us that in fact
  \begin{equation}
    \inf_{\theta\in [-\pi,\pi]\setminus (-\delta,\delta)}\lambda_0(\theta)=\varepsilon_1>0
    \label{}
  \end{equation}
  for any~$0<\delta<\pi$ since otherwise it would contradict lemma \ref{lemm:minimaslambda0}. 
  Since we know that $\lambda_0(0)=0$ is a simple eigenvalue, each~$\Delta_{\theta}$ is nonnegative and we have lemma \ref{lemm:minimaslambda0}, we deduce by the Kato-Rellich theorem \ref{prop-kato-rellich} that in some neighborhood~$|\theta|<\eta_1$,~$\theta=0$ is a strict minimum of the function~$\theta\mapsto \lambda_0(\theta)$, which is analytic in the neighborhood.
  Therefore in this neighborhood $\lambda_0(\theta)$ will have the local form on the r.h.s.\ of (\ref{eqn-ground-energy-anal-local}). Consequently for any $b<a$ the inequality on the l.h.s.\ of (\ref{eqn-ground-energy-anal-local}) will be satisfied when $|\theta|<\eta_3$ for some $\eta_3<\eta_1$.
  Next, since we also know that~$\lambda_1(0)>0$ is an isolated eigenvalue of finite multiplicity, by the finite multiplicity version of Kato-Rellich \ref{prop-kato-rellich-fini-mult} we know that for some other~$\eta_2>0$,~$\lambda_1(\theta)$ is continuous in~$\theta$ for~$|\theta|<\eta_2$. Summing up,
   by necessarily choosing some
   \begin{equation}
       0<\varepsilon_0<\min\Big\{ \varepsilon_1, \inf_{|\theta|\le \eta_2}\lambda_1(\theta)\Big\},\quad \textrm{and}\quad
       \eta=\min\{\eta_2,\eta_3\},
   \end{equation}
   we will have (i) and (ii) satisfied for some~$p>0$ and we obtain the result.
\end{proof}

The spectral decomposition (\ref{eqn-main-spectral-decomp}) together with lemma \ref{lemm:Kato} tells us that for all~$N\in\mb{N}$, every~$\lambda\in [0,\varepsilon_0)\cap \sigma(\Delta_N)$ must be~$\lambda_0(2\pi k/N)$ for some integer~$k<\frac{1}{2}N$. This will give us the following important corollary.

\begin{corr}\label{cor-small-eigen-estima}
      Let~$\varepsilon_0$ and~$p$ be the numbers obtained in lemma \ref{lemm:Kato}. Then there exists~$C_4>0$ independent of~$N$, such that
  \begin{equation}
    \frac{1}{N}\sum_{\substack{\lambda\in \sigma(\Delta_{N}) \\ 0<\lambda<\varepsilon_0}} \me^{-t\lambda} \le C_4 t^{-1/2p}
    \label{}
  \end{equation}
  for all~$N\in\mb{N}$ and all $t>0$.
\end{corr}

 We point out the following formula related to the Gamma function:
  \begin{lemm}
    Let~$\alpha$,~$\beta>0$ then we have
    \begin{equation}
      \int_{0}^{\infty}\me^{-\beta x^{\alpha}}\,\dd x=\frac{1}{\alpha}\Gamma\Big( \frac{1}{\alpha}\Big) \beta^{-\frac{1}{\alpha}}.
      \label{}
    \end{equation}
  \end{lemm}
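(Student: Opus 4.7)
The plan is to reduce the integral to the standard definition of the Gamma function via a single substitution. Specifically, I would set $u = \beta x^{\alpha}$, which for $\alpha, \beta > 0$ defines a smooth bijection $(0, \infty) \to (0, \infty)$. Solving for $x$ gives $x = (u/\beta)^{1/\alpha}$, and differentiating yields
\begin{equation}
\dd x = \frac{1}{\alpha}\beta^{-1/\alpha} u^{1/\alpha - 1}\,\dd u.
\end{equation}

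Substituting into the original integral transforms it as
\begin{equation}
\int_{0}^{\infty}\me^{-\beta x^{\alpha}}\,\dd x = \frac{1}{\alpha}\beta^{-1/\alpha}\int_{0}^{\infty}\me^{-u} u^{1/\alpha - 1}\,\dd u.
\end{equation}
By the standard integral representation $\Gamma(s) = \int_0^\infty e^{-u} u^{s-1}\,\dd u$ (valid for $\fk{Re}(s) > 0$, and in particular for $s = 1/\alpha$ since $\alpha > 0$), the remaining integral equals $\Gamma(1/\alpha)$, giving the desired identity. No obstacle arises: the substitution is elementary and the Gamma integral converges absolutely at both endpoints.

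As a brief sanity check, setting $\alpha = 2$ and $\beta = 1$ recovers the classical Gaussian $\int_0^\infty e^{-x^2}\,\dd x = \frac{1}{2}\Gamma(1/2) = \frac{\sqrt{\pi}}{2}$, and setting $\alpha = 1$ recovers the elementary $\int_0^\infty e^{-\beta x}\,\dd x = 1/\beta = \Gamma(1)/\beta$, consistent with the stated formula.
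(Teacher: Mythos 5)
Your substitution $u=\beta x^{\alpha}$ is correct, the Jacobian is computed properly, and the reduction to $\Gamma(1/\alpha)=\int_0^\infty e^{-u}u^{1/\alpha-1}\,\dd u$ is valid since $\alpha>0$. The paper states this lemma without proof as a standard fact, and your argument is exactly the canonical verification it implicitly relies on, so there is nothing to add.
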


  \begin{proof}
      (Proof of corollary \ref{cor-small-eigen-estima}.) 
      By (ii) of lemma \ref{lemm:Kato} and the above formula, we have 
\begin{align*}
\frac{1}{N} \sum_{\substack{\lambda\in \sigma(\Delta_{N}) \\ 0<\lambda<\varepsilon_0}}  e^{-t \lambda}&\leqslant \frac{2\pi}{N} \sum_{0<2\pi|k|\leqslant \eta N}  e^{-t b(2\pi k/N)^{2p}}
\leqslant 2\int_{0}^\eta e^{-tb\theta^{2p}}\,\dd\theta \le \frac{1}{p}\Gamma\Big( \frac{1}{2p}\Big) (tb)^{-\frac{1}{2p}},
\end{align*}
the second inequality holds true since $e^{-u}$ is decreasing. We obtain the result.
  \end{proof}

\subsection{Rough Eigenvalue Counting}

The author sincerely acknowledge Junrong Yan for suggesting the following elegant estimate.

\begin{lemm}\label{lemm-rough-counting}
  Let~$\mathcal{M}$ be a smooth compact Riemannian manifold of dimension~$d$ with~$\mm{Ric}(\mathcal{M})\ge -K$,~$K\ge 0$. Let~$\Delta_{\mathcal{M}}$ be the Laplacian on~$\mathcal{M}$. Then for all~$\Lambda\ge 1$ we have
  \begin{equation}
    \sharp\{\lambda\le \Lambda|\lambda\in \sigma(\Delta_{\mathcal{M}})\} \le 
    C_6\me^{1+C_3 K/2}\vol(\mathcal{M}) \Lambda^{d/2},
    \label{}
  \end{equation}
  where~$C_6$ is a constant depending only on dimension, and~$C_3$ the same constant appearing in (\ref{eqn-li-yau-bound}).
\end{lemm}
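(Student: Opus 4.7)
The plan is to bound the eigenvalue counting function via the heat trace, using the diagonal Li-Yau estimate, which is already stated in the excerpt as (\ref{eqn-li-yau-bound}) and is directly applicable since $\mathcal{M}$ is a smooth compact (hence complete) Riemannian manifold with $\mathrm{Ric}(\mathcal{M})\geq -K$.

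Denote $N(\Lambda) \defeq \sharp\{\lambda \le \Lambda \mid \lambda \in \sigma(\Delta_{\mathcal{M}})\}$. The first step is the elementary trick
\begin{equation*}
N(\Lambda) = \sum_{\lambda \le \Lambda} 1 \le \sum_{\lambda \le \Lambda} e^{t(\Lambda - \lambda)} \le e^{t\Lambda} \sum_{\lambda \in \sigma(\Delta_{\mathcal{M}})} e^{-t\lambda} = e^{t\Lambda} \, \ttr_{L^2}(e^{-t\Delta_{\mathcal{M}}}),
\end{equation*}
valid for any $t>0$, where the heat trace can be written as the integral of the heat kernel along the diagonal,
\begin{equation*}
\ttr_{L^2}(e^{-t\Delta_{\mathcal{M}}}) = \int_{\mathcal{M}} e^{-t\Delta_{\mathcal{M}}}(x,x) \, \dd V_g(x).
\end{equation*}

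Next, I would apply the Li-Yau bound (\ref{eqn-li-yau-bound}) to $\mathcal{M}$ with $x=y$ (so the Gaussian distance factor equals $1$): for every $\delta \in (0,1)$ and every $t \in (0,1]$,
\begin{equation*}
e^{-t\Delta_{\mathcal{M}}}(x,x) \le C_1(\delta,d)\, t^{-d/2}\, e^{C_3 K \delta t}.
\end{equation*}
Integrating over $\mathcal{M}$ and combining with the first display yields
\begin{equation*}
N(\Lambda) \le C_1(\delta,d)\, \mathrm{vol}(\mathcal{M})\, t^{-d/2}\, e^{t\Lambda + C_3 K \delta t}, \qquad t \in (0,1],\ \delta \in (0,1).
\end{equation*}

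The final step is to optimize in $t$ and fix a convenient $\delta$. Since $\Lambda \ge 1$, the choice $t = 1/\Lambda$ is admissible, and fixing $\delta = 1/2$ gives $t\Lambda = 1$, $t^{-d/2} = \Lambda^{d/2}$, and $C_3 K \delta t \le C_3 K/2$. Setting $C_6 \defeq C_1(1/2,d)$, which depends only on the dimension, one obtains
\begin{equation*}
N(\Lambda) \le C_6\, e^{1 + C_3 K/2}\, \mathrm{vol}(\mathcal{M})\, \Lambda^{d/2},
\end{equation*}
which is the claimed bound. There is no real obstacle here; the only small care needed is to check that the admissible range $t \in (0,1]$ of Li-Yau accommodates the optimizing choice $t=1/\Lambda$, which it does precisely because the lemma assumes $\Lambda \ge 1$.
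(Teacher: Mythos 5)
Your proof is correct and follows essentially the same route as the paper: bound the counting function by $e\cdot\ttr_{L^2}(e^{-\Delta_{\mathcal{M}}/\Lambda})$ (your general $t$ optimized at $t=1/\Lambda$ is exactly the paper's choice) and then integrate the diagonal Li--Yau bound with $\delta=1/2$, using $\Lambda\ge 1$ so that $t=1/\Lambda$ lies in the admissible range. No gaps.
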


\begin{proof}
  Indeed, we have,
  \begin{equation}
    \sharp\{\lambda\le \Lambda|\lambda\in \sigma(\Delta_{\mathcal{M}})\}  =\sum_{\substack{\lambda\in \sigma(\Delta_{\mathcal{M}}) \\ 0\le \lambda\le \Lambda}} 1 \le \sum_{\substack{\lambda\in \sigma(\Delta_{\mathcal{M}}) \\ 0\le \lambda\le \Lambda}} \me^{1-\frac{1}{\Lambda}\lambda}
    \le \me \cdot \ttr_{L^2(\mathcal{M})}(\me^{-\Delta_{\mathcal{M}}/\Lambda}).
  \end{equation}
  Thus the result follows by integrating the Li-Yau estimate (\ref{eqn-li-yau-bound}) choosing $\delta=1/2$.
\end{proof}

\newpage

\appendix

\chapter{Probability and Gaussian Measures}

In this appendix we collect some more background materials in probability theory and Gaussian measures. Let us start with the following remark.

Let~$\mathcal{X}$ be an infinite dimensional vector space equipped with a translation invariant metric~$d_{\mathcal{X}}$. In most practical situations, since~$\mathcal{X}$ is infinite dimensional, there exists a sequence~$\{x_n\}_{n=1}^{\infty}$ on the unit ball~$\mb{B}(0,1)$ such that~$d_{\mathcal{X}}(x_i,x_j)\ge \frac{1}{2}$ for any~$i\ne j$. This implies that~$\mb{B}(0,2)$ contains infinitely many disjoint balls of radii~$\frac{1}{6}$, namely~$\mb{B}(x_n,\frac{1}{6})$. Therefore, if we would like a translation invariant Borel measure~$\mu$ on~$\mathcal{X}$ such that~$\mu(\mb{B}(0,\frac{1}{6}))>0$, then~$\mu(\mb{B}(0,2))=\infty$! By a scaling, we see that if the balls were to have positive measure, then necessarily~$\mu(U)=\infty$ for all open set~$U$. So a \textit{working} Lebesgue measure cannot really exist in infinite dimensions! This creates the basic problem for expressions such as (\ref{eqn-1d-free-path-int-heu-kernel}). The substitute for a nicely behaved measure in infinite dimensions turns out to be a \textit{Gaussian probability measure}, which we have discussed in subsection \ref{sec-intro-general-gauss-meas} and continue to discuss in this appendix.

\section{$\sigma$-algebras and Conditioning}\label{sec-app-proba-background}

This appendix discuss some basic probability notion centered around the concept of \textit{conditioning}. We refer to Dudley \cite{Dudley} chapter 10 for more details.

A \textsf{probability space}~$(Q,\mathcal{O},\mb{P})$ consists of a set~$Q$, a family~$\mathcal{O}$ of subsets of~$Q$ called the \textsf{$\sigma$-algebra}, and a function~$\mb{P}:\mathcal{O}\lto [0,1]$ called the \textsf{probability measure}. The $\sigma$-algebra is required by definition, roughly speaking, to preserve the ``natural'' set operations (union, intersection, etc.) and hence could be thought of as an algebra made of subsets. A function~$X:Q\lto \mb{R}$ (or~$\mb{C}$) is called $\mathcal{O}$\textsf{-measurable} if~$X^{-1}(U)\in \mathcal{O}$ for all~$U$ open, and such is also called a \textsf{random variable} on $(Q,\mathcal{O},\mb{P})$. With~$X$, we could always let it \textsf{generate} a sub-$\sigma$-algebra~$\sigma(X)\subset \mathcal{O}$ which is the smallest one containing every~$X^{-1}(U)$. We give below a basic example illustrating the intuition behind the $\sigma$-algebra.

\begin{exxx}
    We drop a pin randomly onto the square~$[0,1]\times [0,1]$ and assume that it always hits a unique point. In this model we take~$Q=[0,1]\times [0,1]$, which is the ``collection of outcomes'' of the experiment, and let $\mathcal{O}$ be the $\sigma$-algebra generated by the open sets in $[0,1]^2$. In the usual circumstance where one has no trouble in observing precisely the outcome of the experiment, one can, in principle, ``measure'' or estimate the probability $\mb{P}(A)$ that the pin falls into any open set $A\subset[0,1]^2$ via a large number of trails.
    
    Now let us divide the square~$[0,1]^2$ into 4 parts~$[0,\frac{1}{2}]\times[0,\frac{1}{2}]$,~$[0,\frac{1}{2}]\times[\frac{1}{2},1]$,~$[\frac{1}{2},1]\times[0,\frac{1}{2}]$, and~$[\frac{1}{2},1]\times[\frac{1}{2},1]$, and label them by~$1$,~$2$,~$3$ and~$4$. Suppose now that somebody has placed this square at the bottom of a dark well so that we could no longer see ourselves the results of our pin-dropping experiment, but there is a robot at the bottom telling us which of the 4 smaller squares our pin has fallen into. The ``robot observer'' thus defines a random variable~$X_{\mm{Rob}}:Q\lto \{1,2,3,4\}\subset\mb{R}$. One could see in this case that the~$\sigma$-algebra~$\sigma(X_{\mm{Rob}})$ generated by~$X_{\mm{Rob}}$ would then consist only of unions of the 4 smaller squares as subsets of~$Q$, and nothing more. In a sense, therefore, a~$\sigma$-algebra represents a certain ``resolution'' at which one observes the experiment: a ``finer''~$\sigma$-algebra gives higher resolution, and a more ``coarse'' one, lower resolution. 
\end{exxx}

  Let~$(Q,\mathcal{O},\mb{P})$ be a probability space and let~$X\in L^1(Q,\mathcal{O},\mb{P})$, that is, assume~$X$ is integrable against~$\mb{P}$. Let~$\mathcal{A}\subset \mathcal{O}$ be a sub-$\sigma$-algebra. Then the \textsf{conditional expectation} of~$X$ with respect to~$\mathcal{A}$ is an~$\mathcal{A}$-measurable (integrable) random variable denoted~$\mb{E}[X|\mathcal{A}]$, defined uniquely by the property that
 \begin{equation}
   \mb{E}[XY]=\mb{E}[\mb{E}[X|\mathcal{A}]Y]
   \label{}
 \end{equation}
 for any other~$\mathcal{A}$-measurable bounded random variable~$Y$. For~$B\in \mathcal{O}$ we define~$\mb{P}(B|\mathcal{A}):=\mb{E}[1_B|\mathcal{A}]$.

 \begin{exxx}
   [continued] If~$X:[0,1]^2\lto\mb{R}$ is a random variable then one can show
   \begin{equation}
     \mb{E}[X|\sigma(X_{\mm{Rob}})]=\sum_{j=1}^4 \frac{\mb{E}[X 1_{A_j}]}{\mb{P}(A_j)} 1_{A_j},
     \label{}
   \end{equation}
   where~$A_j$,~$1\le j\le 4$, are the 4 smaller sub-squares constructed previously. Here~$\mb{E}[X 1_{A_j}]/\mb{P}(A_j)$ is in other words just the average value of~$X$ on~$A_j$.
 \end{exxx}

   \begin{lemm}[\cite{Dudley} page 338]\label{lemm-total-prob}
   Let~$(Q,\mathcal{O},\mb{P})$ be a probability space and~$\mathcal{A}_1\subset \mathcal{A}_2\subset \mathcal{O}$ two sub-$\sigma$-algebras. Then for any~$X\in L^1(Q,\mathcal{O},\mb{P})$ one has
   \begin{equation}
     \mb{E}[~\mb{E}[X|\mathcal{A}_2]~|\mathcal{A}_1]=\mb{E}[X|\mathcal{A}_1].
     \label{}
   \end{equation}
 \end{lemm}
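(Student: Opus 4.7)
My plan is to verify the identity by checking that the left-hand side satisfies the defining property of the right-hand side, namely $\mb{E}[X|\mathcal{A}_1]$. Recall from the excerpt that $\mb{E}[X|\mathcal{A}_1]$ is characterized uniquely (up to almost sure equality) as the $\mathcal{A}_1$-measurable integrable random variable $Z$ such that $\mb{E}[ZY] = \mb{E}[XY]$ for every bounded $\mathcal{A}_1$-measurable $Y$. So I would set $Z := \mb{E}[\mb{E}[X|\mathcal{A}_2]|\mathcal{A}_1]$ and check these two properties.

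That $Z$ is $\mathcal{A}_1$-measurable and integrable is built into the very definition of the outer conditional expectation, so the only substantive step is the integral identity. The key observation I would exploit is the hypothesis $\mathcal{A}_1 \subset \mathcal{A}_2$: any bounded $\mathcal{A}_1$-measurable $Y$ is \emph{a fortiori} bounded and $\mathcal{A}_2$-measurable. Fixing such a $Y$, I would first apply the defining property of $\mb{E}[\bigcdot|\mathcal{A}_1]$ to the integrable variable $\mb{E}[X|\mathcal{A}_2]$ to get
\begin{equation*}
  \mb{E}[Z Y] = \mb{E}\big[\mb{E}[X|\mathcal{A}_2]\cdot Y\big],
\end{equation*}
and then apply the defining property of $\mb{E}[\bigcdot|\mathcal{A}_2]$ to $X$, legitimately using $Y$ as a test variable thanks to the inclusion of $\sigma$-algebras, to obtain
\begin{equation*}
  \mb{E}\big[\mb{E}[X|\mathcal{A}_2]\cdot Y\big] = \mb{E}[XY].
\end{equation*}
Chaining these gives $\mb{E}[ZY] = \mb{E}[XY]$ for every bounded $\mathcal{A}_1$-measurable $Y$, whence $Z = \mb{E}[X|\mathcal{A}_1]$ almost surely by uniqueness.

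There is really no main obstacle here; the proof is essentially a clean ``unwinding of the definition,'' and the only thing one has to watch is that the inclusion $\mathcal{A}_1\subset\mathcal{A}_2$ is used precisely to allow an $\mathcal{A}_1$-measurable test function to serve as an $\mathcal{A}_2$-measurable test function in the second step. If I wanted to be fully careful about the uniqueness statement invoked at the end, I would note that the equality of all integrals against bounded $\mathcal{A}_1$-measurable $Y$ (e.g.\ indicators $1_A$ for $A\in\mathcal{A}_1$) forces equality a.s.\ of two $\mathcal{A}_1$-measurable integrable random variables by a standard monotone-class/approximation argument, which is a routine point I would not grind through.
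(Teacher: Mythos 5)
Your proof is correct: it is the standard verification that $\mb{E}[\mb{E}[X|\mathcal{A}_2]|\mathcal{A}_1]$ satisfies the defining property of $\mb{E}[X|\mathcal{A}_1]$, using $\mathcal{A}_1\subset\mathcal{A}_2$ exactly where needed to let an $\mathcal{A}_1$-measurable test variable serve as an $\mathcal{A}_2$-measurable one. The paper gives no proof of its own (it simply cites Dudley), and your argument coincides with the standard tower-property proof found there, so there is nothing to add.
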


The above lemma is sometimes called the \textsf{law of total probability}. The conditional expectations of $L^2$ random variables will be important for us in the context of the It\^o-Wiener-Segal theorem (proposition \ref{prop-ito-wiener-segal-iso}).
 
 \begin{lemm}
    Let~$(Q,\mathcal{O},\mb{P})$ be a probability space and~$\mathcal{A}\subset \mathcal{O}$ a sub-$\sigma$-algebra. Then~$L^2(\Omega,\mathcal{A},\mb{P})$ is a closed subspace of~$L^2(Q,\mathcal{O},\mb{P})$ and for~$X\in L^2(Q,\mathcal{O},\mb{P})$,~$\mb{E}[X~|~\mathcal{A}]$ is the orthogonal projection of~$X$ onto~$L^2(\Omega,\mathcal{A},\mb{P})$.
  \end{lemm}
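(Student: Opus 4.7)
The plan is to first verify that $L^2(\Omega,\mathcal{A},\mathbb{P})$ is a closed subspace of $L^2(Q,\mathcal{O},\mathbb{P})$, and then to identify $\mathbb{E}[X\,|\,\mathcal{A}]$ with the orthogonal projection onto this subspace by checking the two characterizing properties: membership in the subspace, and orthogonality of the residual to the subspace.

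For closedness, the inclusion is immediate since every $\mathcal{A}$-measurable function is $\mathcal{O}$-measurable, and the $L^2$ norm is the same. To show $L^2(\Omega,\mathcal{A},\mathbb{P})$ is closed inside $L^2(Q,\mathcal{O},\mathbb{P})$, I take an $L^2$-Cauchy sequence $\{X_n\}$ with each $X_n$ being $\mathcal{A}$-measurable, extract a subsequence converging almost surely (a standard consequence of $L^2$-convergence on a probability space), and observe that an almost sure limit of $\mathcal{A}$-measurable functions is, up to modification on a $\mathbb{P}$-null set, $\mathcal{A}$-measurable.

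For the projection statement, fix $X \in L^2(Q,\mathcal{O},\mathbb{P}) \subset L^1(Q,\mathcal{O},\mathbb{P})$, so that $\mathbb{E}[X\,|\,\mathcal{A}]$ is defined as in the excerpt. I first verify that $\mathbb{E}[X\,|\,\mathcal{A}] \in L^2(\Omega,\mathcal{A},\mathbb{P})$ via the conditional Jensen inequality $|\mathbb{E}[X\,|\,\mathcal{A}]|^2 \le \mathbb{E}[X^2\,|\,\mathcal{A}]$ and then take expectation to bound by $\mathbb{E}[X^2] < \infty$; the conditional Jensen inequality itself can be quickly derived from the defining property applied to $X^2$ and to $\pm X$ together with bounded $\mathcal{A}$-measurable test random variables. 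Next, for any $\mathcal{A}$-measurable $Y \in L^\infty$, the defining identity gives
\begin{equation}
\mathbb{E}\big[(X - \mathbb{E}[X\,|\,\mathcal{A}])\,Y\big] = \mathbb{E}[XY] - \mathbb{E}\big[\mathbb{E}[X\,|\,\mathcal{A}]\,Y\big] = 0.
\end{equation}
The final step is to extend this orthogonality to arbitrary $Y \in L^2(\Omega,\mathcal{A},\mathbb{P})$ by approximating $Y$ by the truncations $Y_n := Y \cdot \mathbf{1}_{|Y|\le n}$, which are bounded and $\mathcal{A}$-measurable, and noting that $Y_n \to Y$ in $L^2$ while $X - \mathbb{E}[X\,|\,\mathcal{A}] \in L^2$, so Cauchy–Schwarz lets us pass to the limit inside the inner product.

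The main subtlety—rather than a true obstacle—is that the excerpt only stated the defining relation of conditional expectation against bounded $\mathcal{A}$-measurable test variables, so I must not use an $L^2$ version of the defining identity prematurely: the extension from bounded to $L^2$ test functions is precisely what the truncation argument supplies. Once orthogonality to all of $L^2(\Omega,\mathcal{A},\mathbb{P})$ is established together with $\mathbb{E}[X\,|\,\mathcal{A}] \in L^2(\Omega,\mathcal{A},\mathbb{P})$, the uniqueness of orthogonal projection onto a closed subspace of a Hilbert space identifies $\mathbb{E}[X\,|\,\mathcal{A}]$ with that projection.
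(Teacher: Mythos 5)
Your proof is correct. Note, though, that the paper itself gives no argument for this lemma: it is stated in the probability appendix as a standard fact (the neighbouring results are simply cited to Dudley and Janson), so there is no "paper proof" to compare against — your write-up fills a gap the author left to the reader. Your route is the classical one: closedness of $L^2(\Omega,\mathcal{A},\mb{P})$ via an a.s.\ convergent subsequence of an $L^2$-Cauchy sequence (equivalently one could just say the subspace is itself complete), then $\mb{E}[X\,|\,\mathcal{A}]\in L^2$ via conditional Jensen, orthogonality of $X-\mb{E}[X\,|\,\mathcal{A}]$ against bounded $\mathcal{A}$-measurable $Y$, and the truncation $Y_n=Y\mathbf{1}_{|Y|\le n}$ to upgrade to all of $L^2(\Omega,\mathcal{A},\mb{P})$ — you correctly respect that the paper's defining identity is only stated for bounded test variables. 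Two small remarks. First, your derivation of conditional Jensen is only sketched; if you want to avoid it entirely, observe that for $A_n=\{|\mb{E}[X\,|\,\mathcal{A}]|\le n\}$ the variable $\mb{E}[X\,|\,\mathcal{A}]\mathbf{1}_{A_n}$ is bounded and $\mathcal{A}$-measurable, so the defining identity gives $\mb{E}\big[\mb{E}[X\,|\,\mathcal{A}]^2\mathbf{1}_{A_n}\big]=\mb{E}\big[X\,\mb{E}[X\,|\,\mathcal{A}]\mathbf{1}_{A_n}\big]\le \|X\|_{L^2}\,\mb{E}\big[\mb{E}[X\,|\,\mathcal{A}]^2\mathbf{1}_{A_n}\big]^{1/2}$, and monotone convergence yields $\|\mb{E}[X\,|\,\mathcal{A}]\|_{L^2}\le\|X\|_{L^2}$ directly. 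Alternatively, since the paper's definition already asserts existence and uniqueness of the conditional expectation, one can run the argument backwards: the Hilbert-space projection $PX$ onto the closed subspace satisfies $\mb{E}[XY]=\mb{E}[(PX)Y]$ for bounded $\mathcal{A}$-measurable $Y$, hence $PX=\mb{E}[X\,|\,\mathcal{A}]$ by uniqueness, with no integrability lemma needed. Second, the $\Omega$ appearing in $L^2(\Omega,\mathcal{A},\mb{P})$ is a typo in the paper for the sample space $Q$; you inherited it, which is harmless but worth noticing.
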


 \begin{corr}[\cite{Janson} theorem 4.9]\label{cor5.7}
   Let~$(Q,\mathcal{O},\mb{P})$ be a probability space and~$\mathcal{H}$,~$\mathcal{K}\subset L^2(Q,\mathcal{O},\mb{P})$ two Gaussian Hilbert spaces. Denote by~$P_{\mathcal{K}|\mathcal{H}}$ the restriction of the orthogonal projection~$L^2(Q,\mathcal{O},\mb{P})\to \mathcal{K}$ to~$\mathcal{H}$, and by~$\mathcal{O}(\mathcal{H})$, $\mathcal{O}(\mathcal{K})$ the~$\sigma$-algebras generated by variables in~$\mathcal{H}$ and $\mathcal{K}$ respectively. Then
   \begin{equation}
     \left.
     \begin{array}{rcl}
       \Gamma(P_{\mathcal{K}|\mathcal{H}}):L^2(\Omega,\mathcal{O}(\mathcal{H}),\mb{P}) &\lto & L^2(\Omega,\mathcal{O}(\mathcal{K}),\mb{P}),\\
       X &\longmapsto & \mb{E}\big[ X\big|\mathcal{O}(\mathcal{K})\big],
     \end{array}
     \right.
     \label{}
   \end{equation}
   where~$\mb{E}\big[ X\big|\mathcal{O}(\mathcal{K})\big]$ is the conditional expectation of~$X$ with respect to~$\mathcal{O}(\mathcal{K})$.
 \end{corr}

 \section{Cameron-Martin Space}

 \begin{deef}\label{def3.7}
  Let~$\mathcal{X}$ be a real separable Fr\'echet (or Banach) space, and~$\gamma$ a Borel Gaussian measure on~$\mathcal{X}$. Then the closure of~$\mathcal{X}^*$ in~$L^2(\mathcal{X},\gamma)$ is called the \textsf{Gaussian Hilbert space of~$\gamma$}, denoted~$\mathcal{H}_{\gamma}$. 
\end{deef}

 Let~$\mathcal{X}$ be reflexive and equipped with a Borel Gaussian measure~$\gamma$. Starting from the Gaussian Hilbert space~$\mathcal{H}_{\gamma}$ we define the \textsf{covariance operator}
 \begin{equation}
   \left.
   \begin{array}{rcl}
     C: \mathcal{H}_{\gamma} &\lto & \mathcal{X}^{**}= \mathcal{X},\\
     f &\longmapsto & \big(C(f):h\longmapsto \mb{E}_{\gamma}[\phi(f)\phi(h)] \big).
   \end{array}
   \right.
   \label{eqn-def-cov-op-general}
 \end{equation}
 In other words~$C(f)$ is the linear functional~$\ank{f,-}_{L^2(\mathcal{X},\gamma)}$ on~$\mathcal{X}^*$. Then the image~$C(\mathcal{H}_{\gamma})\subset \mathcal{X}$, which is a Hilbert space with norm denoted~$\nrm{\cdot}_{C(\mathcal{H}_{\gamma})}$, is called the \textsf{Cameron-Martin space}. It does the following job. For two measures~$\mu$ and~$\nu$, we write ``$\mu\approx \nu$'' to mean~$\mu$ and~$\nu$ are \textsf{mutually absolutely continuous}, and ``$\mu\perp \nu$'' to mean~$\mu$ and~$\nu$ are mutually singular.

\begin{prop}
  [Cameron-Martin-Girsanov, \cite{Bogachev} corollary 2.4.3] \label{prop-cam-mar} Let~$\mathcal{X}$ be a real separable Fr\'echet space,~$\gamma$ a Gaussian measure on~$(\mathcal{X},\mathcal{B}_{\mathcal{X}})$, and~$C(\mathcal{H}_{\gamma})\subset \mathcal{X}$ the Cameron-Martin space. For any~$\phi_0\in \mathcal{X}$, define the measure~$(\phi_0)_* \gamma$ on~$(\mathcal{X},\mathcal{B}_{\mathcal{X}})$ by setting
  \begin{equation}
    (\phi_0)_* \gamma(B)\defeq\gamma(B-\phi_0),
    \label{}
  \end{equation}
  where~$B-\phi_0=\{\phi-\phi_0~|~\phi\in B\}$. Then
  \begin{equation}
    \left\{\def\arraystretch{1.2}
    \begin{array}{ll}
      (\phi_0)_* \gamma \approx \gamma,&\textrm{if }\phi_0\in C(\mathcal{H}_{\gamma}),\\
      (\phi_0)_* \gamma \perp \gamma,&\textrm{if }\phi_0\not\in C(\mathcal{H}_{\gamma}).
    \end{array}
    \right.
    \label{}
  \end{equation}
  Moreover, in the first case, one has the Radon-Nikodym density
  \begin{equation}
    \frac{\dd (\phi_0)_* \gamma}{\dd \gamma}(\phi)=\me^{(C^{-1}\phi_0)(\phi)-\frac{1}{2}\nrm{\phi_0}_{C(\mathcal{H}_{\gamma})}^2},
    \label{}
  \end{equation}
  where~$C$ is the operator defined in (\ref{eqn-def-cov-op-general}).
\end{prop}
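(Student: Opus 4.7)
The plan is to treat the two cases separately, using that Gaussian measures on a separable Fr\'echet space are determined by their characteristic functions (Lemma \ref{lemm-gaus-mea-char-func}). For the absolutely continuous case I will verify directly that the measure $\nu \defeq \rho_{\phi_0}\cdot \gamma$, with the proposed density $\rho_{\phi_0}(\phi) \defeq \exp((C^{-1}\phi_0)(\phi) - \tfrac{1}{2}\nrm{\phi_0}^2_{C(\mathcal{H}_{\gamma})})$, coincides with $(\phi_0)_*\gamma$. The element $f_0 \defeq C^{-1}\phi_0 \in \mathcal{H}_\gamma$ is well-defined because $\phi_0 \in C(\mathcal{H}_\gamma)$ and $C$ is injective on $\mathcal{H}_\gamma$ (by definition \ref{def3.7}, $\mathcal{H}_\gamma$ is Gaussian, so $Cf=0$ forces $f=0$ in $L^2$). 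Then $\phi \mapsto f_0(\phi)$ is centered Gaussian under $\gamma$ with variance $\mb{E}_\gamma[f_0(\phi)^2] = \nrm{\phi_0}^2_{C(\mathcal{H}_\gamma)}$, so $\rho_{\phi_0} \in L^1(\mathcal{X},\gamma)$ and $\nu$ is a finite measure.

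For any $h \in \mathcal{X}^*$, a direct change of variables gives $\widehat{(\phi_0)_*\gamma}(h) = \me^{\ii \phi_0(h)}\widehat{\gamma}(h) = \exp(\ii\phi_0(h) - \tfrac{1}{2}\mb{E}_\gamma[\phi(h)^2])$. On the other side, since $(f_0(\phi), \phi(h))$ is jointly centered Gaussian under $\gamma$, the standard complex Gaussian moment generating identity applied to $f_0(\phi) + \ii\phi(h)$ yields
\begin{equation}
\widehat{\nu}(h) = \me^{-\frac{1}{2}\nrm{\phi_0}^2_{C(\mathcal{H}_\gamma)}}\mb{E}_\gamma\bigl[\me^{f_0(\phi) + \ii\phi(h)}\bigr] = \exp\bigl(\ii\,\mb{E}_\gamma[f_0(\phi)\phi(h)] - \tfrac{1}{2}\mb{E}_\gamma[\phi(h)^2]\bigr).
\end{equation}
By the very definition of $C$ in (\ref{eqn-def-cov-op-general}), $\mb{E}_\gamma[f_0(\phi)\phi(h)] = (Cf_0)(h) = \phi_0(h)$, so the two characteristic functions match. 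Lemma \ref{lemm-gaus-mea-char-func} then forces $\nu = (\phi_0)_*\gamma$, and since $\rho_{\phi_0}>0$ pointwise the two measures are mutually absolutely continuous.

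The harder direction is the singular case, and this will be the main obstacle. The strategy is to use the characterization of the Cameron-Martin space dual to (\ref{eqn-def-cov-op-general}): $\phi_0 \in C(\mathcal{H}_\gamma)$ if and only if there exists $K>0$ with $|f(\phi_0)| \le K\,\mb{E}_\gamma[\phi(f)^2]^{1/2}$ for every $f \in \mathcal{X}^*$, because this is precisely the condition that the functional $f \mapsto f(\phi_0)$ on $\mathcal{X}^*$ extends to a bounded linear form on $\mathcal{H}_\gamma$, and hence is represented via $C$ by an element of $\mathcal{X}$ lying in $C(\mathcal{H}_\gamma)$. If $\phi_0 \notin C(\mathcal{H}_\gamma)$, we may therefore pick $f_n \in \mathcal{X}^*$ with $\mb{E}_\gamma[\phi(f_n)^2] \le 2^{-n}$ and $f_n(\phi_0) \ge 1$. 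The plan is to separate the two measures by a Borel-Cantelli argument: under $\gamma$, Chebyshev gives $\sum_n \gamma(\{|\phi(f_n)| \ge 1/2\}) < \infty$, so $\phi(f_n) \to 0$ almost surely, whereas under $(\phi_0)_*\gamma$ the variable $\phi(f_n)$ translates to $\phi(f_n) + f_n(\phi_0) \ge 1 + \phi(f_n)$ with $\phi(f_n) \to 0$ almost surely by the same estimate. Setting
\begin{equation}
A \defeq \{\phi \in \mathcal{X} \mid \textstyle\limsup_n |\phi(f_n)| \le 1/2\},
\end{equation}
we get $\gamma(A) = 1$ while $(\phi_0)_*\gamma(A) = \gamma(A - \phi_0) = 0$, yielding $(\phi_0)_*\gamma \perp \gamma$. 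The main subtlety I anticipate is verifying that the Cameron-Martin characterization above genuinely holds for the non-reflexive Fr\'echet setting; this should follow from the fact that $C:\mathcal{H}_\gamma \to \mathcal{X}$ factors through the closure of $C(\mathcal{X}^*)$ and from the Riesz representation on $\mathcal{H}_\gamma$ applied to a continuous linear extension of $f\mapsto f(\phi_0)$.
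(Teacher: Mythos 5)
The paper does not prove this proposition at all: it is quoted verbatim from the literature (Bogachev, Corollary 2.4.3), so there is no internal argument to compare against. Your proof is correct and is essentially the standard textbook argument. The absolutely continuous case via matching characteristic functions is sound: $f_0=C^{-1}\phi_0$ is well defined since $Cf=0$ forces $f\perp\mathcal{X}^*$ in $L^2(\gamma)$ and $\mathcal{X}^*$ is dense in $\mathcal{H}_\gamma$; the joint Gaussianity of $(f_0(\phi),\phi(h))$ and the identity $\mb{E}_\gamma[e^{X+\ii Y}]=\exp(\tfrac12\mathrm{Var}(X)-\tfrac12\mathrm{Var}(Y)+\ii\,\mathrm{Cov}(X,Y))$ give exactly the cancellation you use, and Lemma \ref{lemm-gaus-mea-char-func} applies because $\nu$ is a (probability) Borel measure. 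The only point worth making explicit is that $(C^{-1}\phi_0)(\phi)$ in the density must be read as the $\gamma$-a.e.\ defined measurable linear functional represented by $f_0\in\mathcal{H}_\gamma\subset L^2(\gamma)$, not as a continuous functional on $\mathcal{X}$. The singular case via the dual characterization of $C(\mathcal{H}_\gamma)$ (boundedness of $f\mapsto f(\phi_0)$ on $(\mathcal{X}^*,\|\cdot\|_{L^2(\gamma)})$) plus normalization $f_n(\phi_0)=1$, Chebyshev and Borel--Cantelli is also the classical route, and the separating set $A$ is Borel since each $\phi\mapsto\phi(f_n)$ is continuous. Your anticipated obstacle about non-reflexivity is not a real gap here: the paper's own definition (\ref{eqn-def-cov-op-general}) of $C$ already assumes $\mathcal{X}$ reflexive (and the spaces actually used, duals of nuclear spaces, are), while in the general separable Fr\'echet setting the same characterization holds for Radon Gaussian measures, which is exactly the generality of the cited Bogachev result. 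So your write-up in effect supplies the proof the paper delegates to the reference.
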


  \section{More Wick Calculus and Feynman Graphs}
  
  This appendix is a sequel to subsection \ref{sec-intro-fock-wiener-chaos}. We list more Feynman rules, define Feynman diagrams and give an example of computation using Feynman diagrams.

\begin{lemm}
  [\cite{Sim2} propositions I.2, I.3, I.4, \cite{Janson} theorems 1.28, 3.9, 3.19] \label{lemm-wick-feyn} ~
  \begin{enumerate}[(i)]
  \item For~$X_1$, \dots,~$X_n\in \mathcal{H}$ (not necessarily distinct) jointly Gaussian random variables,
  \begin{equation}
    \mb{E}[X_1\cdots X_n]=\sum_{\mss{P}}\prod_k \mb{E}[X_{i_k}X_{j_k}],
    \label{}
  \end{equation}
  where the sum is over all partitions~$\mss{P}$ of the set~$\{1,\dots,n\}$ into disjoint pairs~$\{i_k,j_k\}$,~$1\le k\le n/2$. In particular, the result is zero if~$n$ is odd.
    \item For~$X\in \mathcal{H}$,
      \begin{equation}
	{:}X^n{:}=\sum_{j=0}^{\lfloor n/2 \rfloor} \frac{(-1)^j n!}{(n-2j)! j! 2^j}\mb{E}[X^2]^j X^{n-2j}=\mb{E}[X^2]^{\frac{n}{2}} h_n\big(X\big/\mb{E}[X^2]^{\frac{1}{2}}\big).
	\label{}
      \end{equation}
    \item Let~$X_1$, \dots,~$X_n\in \mathcal{H}$ and~$Y_1$, \dots,~$Y_m\in \mathcal{H}$ be jointly Gaussian random variables. Then
   \begin{equation}
     \mb{E}\big[{:}X_1\dots X_n{:}~{:}Y_1\dots Y_m{:}\big]=\left\{
       \begin{aligned}
	 &\sum_{\sigma\in\fk{S}_n}\prod_{i=1}^n \mb{E}[X_i Y_{\sigma(i)}], && m=n,\\
	 &~~~0, && m\ne n.
       \end{aligned}
     \right.
     \label{eqn-wick-theorem-exp}
   \end{equation}
   In particular,~$\mb{E}[{:}X^n{:}~{:}Y^m{:}]=\delta_{nm}n!\mb{E}[XY]^n$.
  \end{enumerate}
\end{lemm}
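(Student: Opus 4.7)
The unifying tool I would use for all three parts is the \emph{Wick exponential}. For a centered Gaussian variable $X$, I would first verify the identity
\begin{equation*}
{:}e^{tX}{:}\defeq \sum_{n=0}^{\infty}\frac{t^n}{n!}{:}X^n{:} = \frac{e^{tX}}{\mb{E}[e^{tX}]} = \exp\bigl(tX - \tfrac{1}{2}t^2\mb{E}[X^2]\bigr),
\end{equation*}
working in the formal power series sense. Since ${:}X^n{:}$ is defined as the projection of $X^n$ onto $\mathcal{H}^{{:}n{:}}=\ol{\mathcal{P}}_n(\mathcal{H})\cap\ol{\mathcal{P}}_{n-1}(\mathcal{H})^{\perp}$ (Proposition \ref{prop-wiener-chaos}), this identity reduces to checking that $e^{tX}-e^{tX}/\mb{E}[e^{tX}]$ lies, order by order in $t$, in the span of lower Wick powers; equivalently, that $\mb{E}[e^{tX}\cdot Y]/\mb{E}[e^{tX}] = \mb{E}[Y\cdot e^{tX-t^2\mb{E}[X^2]/2}]$ for every lower-order polynomial test $Y$, which is a Gaussian computation.

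For part (i), the quickest route is to use the multivariate moment generating function of a jointly Gaussian vector,
\begin{equation*}
\mb{E}\bigl[e^{\sum t_i X_i}\bigr]=\exp\Bigl(\tfrac{1}{2}\sum_{i,j}t_it_j\mb{E}[X_iX_j]\Bigr),
\end{equation*}
expand the right-hand side as a power series in the $t_i$'s, and match the coefficient of $t_1\cdots t_n$ on both sides. The combinatorics on the right-hand side groups the variables into unordered pairs, producing exactly the sum over pair partitions $\mss{P}$; if $n$ is odd there are no such partitions, so the expectation vanishes.

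For part (ii), I would compare two generating-function expansions. On the probabilistic side, the Wick exponential gives $\sum_n \frac{t^n}{n!}{:}X^n{:} = \exp(tX-\tfrac{1}{2}t^2\sigma^2)$ where $\sigma^2:=\mb{E}[X^2]$. On the Hermite side, (\ref{eqn-gen-func-hermite}) gives $\exp(uy-\tfrac{1}{2}u^2)=\sum_n \frac{u^n}{n!}h_n(y)$. Setting $u=t\sigma$ and $y=X/\sigma$ makes the two generating functions coincide, so matching coefficients of $t^n$ yields ${:}X^n{:}=\sigma^n h_n(X/\sigma)$. The alternative explicit binomial expression in the statement then follows by expanding $\exp(tX-\tfrac{1}{2}t^2\sigma^2)=e^{tX}\cdot e^{-t^2\sigma^2/2}$ as a product of two exponential series and matching the coefficient of $t^n/n!$.

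For part (iii), I would apply the same strategy to the two-point object $\mb{E}[{:}e^{\sum s_iX_i}{:}\,{:}e^{\sum t_jY_j}{:}]$. Using the identity from the first paragraph twice,
\begin{equation*}
\mb{E}\bigl[{:}e^{\sum s_iX_i}{:}\,{:}e^{\sum t_jY_j}{:}\bigr] = \frac{\mb{E}\bigl[e^{\sum s_iX_i+\sum t_jY_j}\bigr]}{\mb{E}\bigl[e^{\sum s_iX_i}\bigr]\mb{E}\bigl[e^{\sum t_jY_j}\bigr]} = \exp\Bigl(\sum_{i,j}s_it_j\mb{E}[X_iY_j]\Bigr),
\end{equation*}
after the Gaussian cross terms collapse. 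Extracting the coefficient of $s_1\cdots s_n t_1\cdots t_m$: the exponential on the right is a sum of products where each $s_i$ must be paired with some $t_j$, which forces $m=n$ and produces exactly the sum over bijections $\sigma\in\fk{S}_n$. The real work is therefore hidden in the opening identity for ${:}e^{tX}{:}$, which I expect to be the only delicate point; everything else is bookkeeping on generating functions and combinatorial matching of coefficients.
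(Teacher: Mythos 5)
The paper itself gives no proof of this lemma — it is quoted with citations to Simon and Janson — so there is no internal argument to compare against; your generating-function route is essentially the standard proof in those references, and its skeleton is sound: part (i) by matching the coefficient of $t_1\cdots t_n$ in the joint moment generating function, part (ii) by matching the Wick-exponential series against the Hermite generating function (\ref{eqn-gen-func-hermite}), and part (iii) by computing $\mb{E}\big[{:}e^{\sum_i s_iX_i}{:}\,{:}e^{\sum_j t_jY_j}{:}\big]=\exp\big(\sum_{i,j}s_it_j\mb{E}[X_iY_j]\big)$ and extracting the coefficient of $s_1\cdots s_n\,t_1\cdots t_m$, which forces $m=n$ and produces the sum over bijections.

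The one step that does not yet stand as written is the justification of ${:}e^{tX}{:}=\exp(tX-\frac{1}{2}t^2\mb{E}[X^2])$, which you yourself flag as the delicate point. Your displayed ``equivalent'' identity $\mb{E}[e^{tX}Y]/\mb{E}[e^{tX}]=\mb{E}\big[Y\,e^{tX-t^2\mb{E}[X^2]/2}\big]$ is trivially true (the denominator is the deterministic constant $e^{t^2\mb{E}[X^2]/2}$), so it carries no information about Wick ordering. Since ${:}X^n{:}$ is by definition the projection of $X^n$ onto $\mathcal{H}^{{:}n{:}}=\ol{\mathcal{P}}_n(\mathcal{H})\cap\ol{\mathcal{P}}_{n-1}(\mathcal{H})^{\perp}$, what must be checked is: (a) the coefficient $P_n(X)$ of $t^n/n!$ in $\exp(tX-\frac{1}{2}t^2\mb{E}[X^2])$ differs from $X^n$ by a polynomial of degree at most $n-1$ (clear), and (b) $P_n(X)$ is orthogonal to \emph{all} of $\ol{\mathcal{P}}_{n-1}(\mathcal{H})$, i.e.\ to monomials $Y_1\cdots Y_k$ with $k\le n-1$ in arbitrary elements of $\mathcal{H}$, not merely to lower powers of $X$. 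Point (b) is the genuine content, and it does follow from a Gaussian computation: by the joint moment generating function, $\mb{E}\big[e^{\sum_j s_jY_j}\,e^{tX-\frac{1}{2}t^2\mb{E}[X^2]}\big]=\exp\big(\frac{1}{2}\sum_{i,j}s_is_j\mb{E}[Y_iY_j]+t\sum_j s_j\mb{E}[XY_j]\big)$, whose coefficient of $s_1\cdots s_k$ is a polynomial in $t$ of degree at most $k$, so the $t^n$-coefficient of $\mb{E}\big[Y_1\cdots Y_k\,e^{tX-\frac{1}{2}t^2\mb{E}[X^2]}\big]$ vanishes for $k\le n-1$. With (a) and (b) inserted, together with the routine remark that these exponentials lie in every $L^p$ so Taylor coefficients may legitimately be matched, your argument is complete; linearity of the projection then gives the multilinear expansion of ${:}(\sum_i s_iX_i)^n{:}$ needed to run part (iii) exactly as you describe.
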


 There is a general way of associating random variables (or numbers) to \textsf{Feynman diagrams}. A \textsf{Feynman diagram} consists of vertices, legs (segments with only one end attached to a vertex), and edges (contracted legs). Two legs are \textsf{contracted} means they are connected to form an edge. A Feynman diagram is called \textsf{fully contracted} if there is no unconnected legs. Given~$n$ random variables~$X_1$, \dots,~$X_n$, a Feynman diagram \textsf{labelled by}~$(X_1,\dots,X_n)$ is simply any Feynman diagram whose vertices are in bijection with~$\{X_1,\dots,X_n\}$. A Feynman diagram labelled by~$(X_1,\dots,X_n)$ can now be associated with \textit{either} a random variable \textit{or} a number using the following rules:
\begin{enumerate}[(i)]
  \item for each leg attached to a vertex~$j$, write down the corresponding random variable~$X_j$, and multiply them all together;
  \item whenever two legs are contracted, enclose the corresponding two random variables by~$\mb{E}[\bullet]$.
\end{enumerate}
Thus fully contracted diagrams are always associated with numbers. If~$\gamma$ is a Feynman diagram labelled by~$X_1$, \dots,~$X_n$, denote by~$v(\gamma)$ the associated object following the above rules.

\begin{exxx}[\cite{PS} section 4.4]
  In a physical context Feynman diagrams are used in rather formal calculations. We would like to find
  \begin{equation}
    \mb{E}_{\mm{GFF}}^M\big[\phi(x)\phi(y)\me^{-\frac{\lambda}{4!}\int_{M}^{}\phi(x)^4\dd V_M(x)}\big]
    \label{eqn-feynman-calc}
  \end{equation}
  by Taylor expanding~$\exp(-\frac{\lambda}{4!}\int_{M}^{}\phi(x)^4\dd V_M(x))$. This gives
  \begin{align*}
    (\textrm{\ref{eqn-feynman-calc}})&\heueq \mb{E}_{\mm{GFF}}^M\big[\phi(x)\phi(y)\big] +
    \mb{E}_{\mm{GFF}}^M\Big[\phi(x)\phi(y)\Big( -\frac{\lambda}{4!} \Big)\int_{}^{}\phi(z)^4\dd V_M(z) \Big] \\
    &\quad\quad + \mb{E}_{\mm{GFF}}^M\Big[\phi(x)\phi(y)\frac{1}{2}\Big( -\frac{\lambda}{4!} \Big)^2\iint_{}^{}\phi(z)^4\phi(w)^4\dd V_M(z)\dd V_M(w) \Big]  +\cdots\\
    &=\mb{E}_{\mm{GFF}}^M\big[\phi(x)\phi(y)\big]  -\frac{\lambda}{4!}
   \underbrace{ \int_{}^{}\mb{E}_{\mm{GFF}}^M\big[\phi(x)\phi(y)\phi(z)^4\big]\dd V_M(z) }_{A} \\
    &\quad\quad + \frac{1}{2}\Big( -\frac{\lambda}{4!} \Big)^2 \iint_{}^{} \mb{E}_{\mm{GFF}}^M\big[\phi(x)\phi(y)\phi(z)^4\phi(w)^4\big]\dd V_M(z)\dd V_M(w) +\cdots
  \end{align*}
  Treating~$\phi(x)$,~$\phi(y)$,~$\phi(z)$ as (jointly Gaussian!) random variables, by (i) of lemma \ref{lemm-wick-feyn} and the above rules we write
  \begin{align*}
    A&\heueq 3\int_{}^{} \dd V_M(z) v\bigg( \raisebox{-.5\height}{\includegraphics[height=2cm]{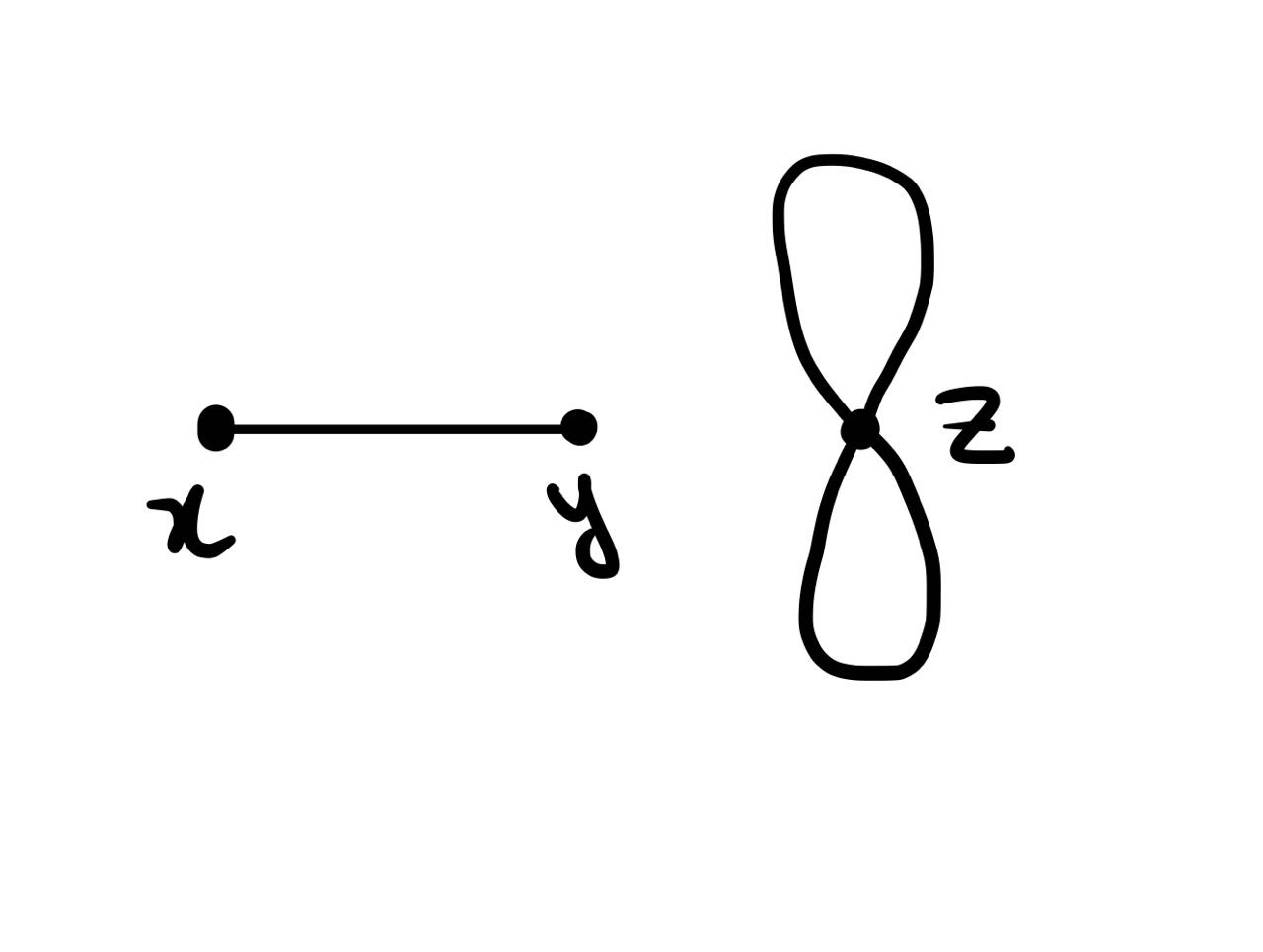}} \bigg) +12 \int_{}^{}\dd V_M(z) v\bigg( \raisebox{-.5\height}{\includegraphics[height=2cm]{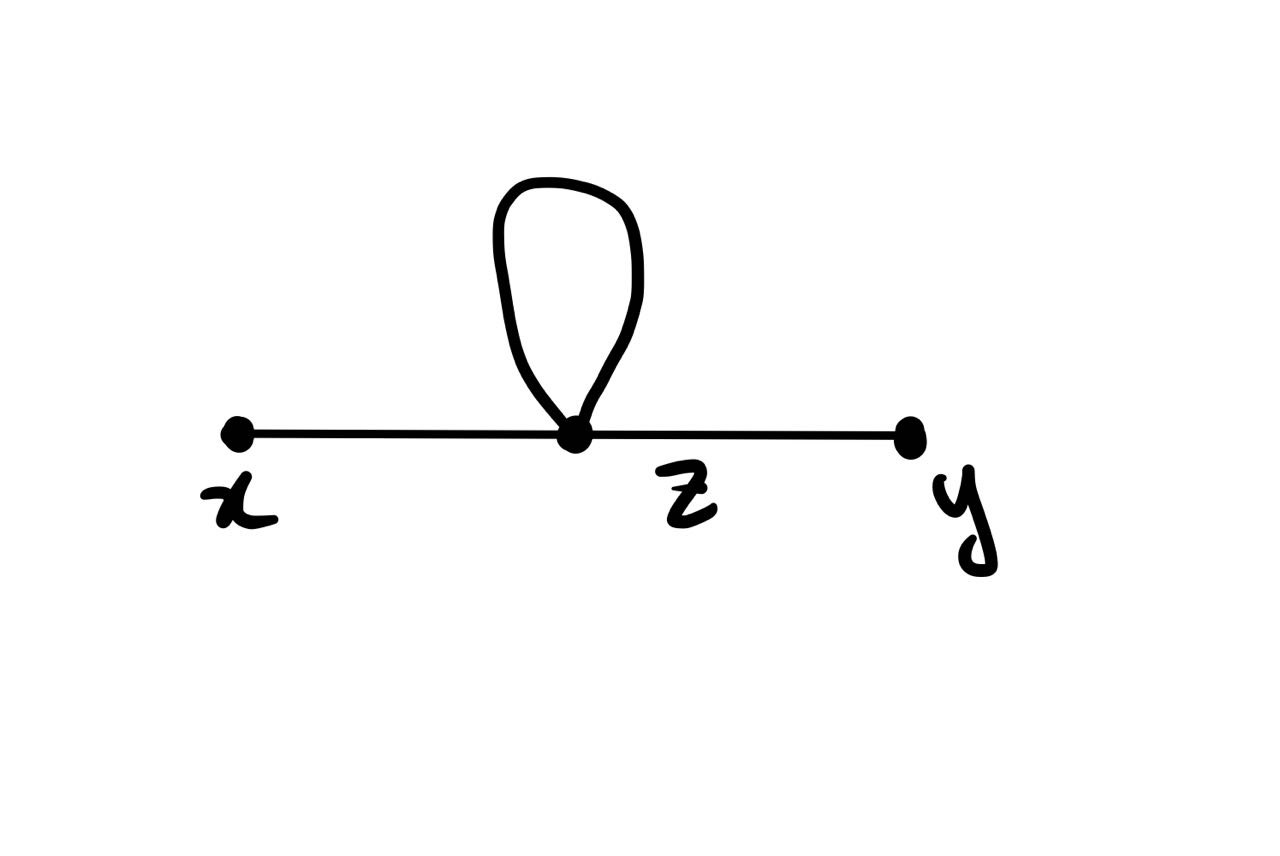}} \bigg) \\
    &=3\int_{}^{}G(x,y)G(z,z)^2 \dd V_M(z) +12 \int_{}^{} G(x,z)G(y,z)G(z,z)\dd V_M(z) ,
  \end{align*}
  where the factors correspond to the number of ways of getting the same contraction starting from 6 legs (4 on~$z$, 1 on~$x$,~$y$ each), and~$G$ denotes the Green function (of~$\Delta+m^2$). One can represent higher order terms using these diagrams in a similar manner. More than that, the diagrams also represent actual physical processes. See Peskin and Schroeder \cite{PS}.
\end{exxx}

\section{Remarks on $Q$-spaces for Random Fields}\label{app-q-space}
\noindent A~$Q$\textsf{-space} is in other words a probability sample space. The GFF on~$M$, say, can be seen (abstractly) as a way of associating a Gaussian random variable~$\phi(f)$ to each~$f\in W^{-1}(M)$ so that (\ref{eqn-gff-cov-closed}) holds. This says nothing about the sample space on which these random variables are actually defined, and naturally there exist many choices. For example,
\begin{enumerate}[(i)]
  \item the Bochner-Minlos construction, mentioned as proposition \ref{prop-boch-min};
  \item the formal Fourier series construction, mentioned in remark \ref{rem-mass-gau-field-four}; this can be identified with the previous one by appealing to the condition under which a formal Fourier series represents an actual distribution, see Shubin \cite{Shu} page 92 proposition 10.2;
  \item the \textsf{abstract Wiener space} construction. While this is not used essentially in this paper, it is a way of constructing (recovering) a separable Banach~$Q$-space~$\mathcal{X}$ starting from (knowing) the Cameron-Martin space, and taking closure with respect to a carefully defined norm weaker than~$\nrm{\cdot}_{C(\mathcal{H}_{\gamma})}$. See Sheffield \cite{Shef} and Bogachev \cite{Bogachev} section 3.9 for details.
\end{enumerate}
Some other models are discussed in Simon \cite{Sim2} section I.2, to which we refer for details in general. We shall discuss the question of in what sense two models of~$Q$-spaces are equivalent, though all the three examples above could eventually by realized in~$\mathcal{D}'(M)$. This is useful concerning the decomposition (\ref{eqn-stoc-decomp-gff-closed}), and we give the precise sense in which the original~$\mu_{\mm{GFF}}^M$ could be recovered from the decomposed measure~$\mu_{\DN}^{\Sigma,M}\otimes \mu_{\mm{GFF}}^{M\setminus\Sigma,D}$.

\begin{deef}[\cite{Sim2} page 4]
  Two probability measure spaces~$(Q,\mathcal{O},\mu)$ and~$(Q',\mathcal{O}',\mu')$ are called \textsf{isomorphic} if there is an isomorphism of \textsf{measure algebras}
  \begin{equation}
    T:\mathcal{O}/\mathcal{I}_{\mu}\overset{\sim}{\lto} \mathcal{O}'/\mathcal{I}_{\mu'},
    \label{}
  \end{equation}
  here~$\mathcal{I}_{\mu}$,~$\mathcal{I}_{\mu'}$ being the ideals of measure zero sets of~$\mu$ and~$\mu'$, such that~$\mu'(T(A))=\mu(A)$ for all~$A\in \mathcal{O}/\mathcal{I}_{\mu}$ (we do not distinguish an event~$A$ from its class in~$\mathcal{O}/\mathcal{I}_{\mu}$).
\end{deef}
\begin{def7}
  Mutually absolutely continuous measures~$\mu$ on~$\mathcal{O}$ will define the same measure algebra~$\mathcal{O}/\mathcal{I}_{\mu}$ as they have the same measure zero sets.
\end{def7}

\begin{deef}[\cite{Sim2} page 5]
  If~$(Q,\mathcal{O},\mu)$ and~$(Q',\mathcal{O}',\mu')$ are isomorphic under~$T$, then two random variables~$f:Q\lto\mb{R}$ and~$f':Q'\lto \mb{R}$ \textsf{correspond under the isomorphism} if
  \begin{equation}
    T(f^{-1}(B))=(f')^{-1}(B)
    \label{}
  \end{equation}
  for all Borel sets~$B\subset \mb{R}$.
\end{deef}

For a real Hilbert space~$\mathcal{H}$, a (centered) \textsf{Gaussian process indexed by}~$\mathcal{H}$ is a family of centered Gaussian random variables~$\{\phi(f)~|~f\in \mathcal{H}\}$ so that
\begin{equation}
  \mb{E}[\phi(f)\phi(h)]=\ank{f,h}_{\mathcal{H}}
  \label{}
\end{equation}
for all~$f$,~$h\in \mathcal{H}$. Such a process is \textsf{defined on} the probability space~$(Q,\mathcal{O},\mu)$ if each~$\phi(f)$ is a random variable from~$Q$ and~$\{\phi(f)~|~f\in \mathcal{H}\}$ generates~$\mathcal{O}$.

\begin{prop}[\cite{Sim2} theorem I.6]
  Let~$\{\phi(f)\}$ and~$\{\phi'(f)\}$ be two Gaussian processes indexed by~$\mathcal{H}$ defined respectively on~$(Q,\mathcal{O},\mu)$ and~$(Q',\mathcal{O}',\mu')$. Then there is an isomorphism between the two probability spaces so that~$\phi(f)$ corresponds to~$\phi'(f)$ under the isomorphism for each~$f\in \mathcal{H}$.
\end{prop}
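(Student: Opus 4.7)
The plan is to show that the two Gaussian processes, having the same finite-dimensional distributions (prescribed entirely by the inner product on $\mathcal{H}$), induce isomorphic measure algebras on $Q$ and $Q'$, with the isomorphism intertwining $\phi(f)$ and $\phi'(f)$.

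First, I would establish the key ``finite-dimensional'' match: for any $n$-tuple $f_1,\dots,f_n \in \mathcal{H}$, the pushforward of $\mu$ under the measurable map $\Phi_{\mathbf f}=(\phi(f_1),\dots,\phi(f_n)):Q\to \mathbb R^n$ and the pushforward of $\mu'$ under $\Phi'_{\mathbf f}=(\phi'(f_1),\dots,\phi'(f_n))$ coincide, because both are the centered Gaussian on $\mathbb R^n$ with covariance matrix $[\langle f_i,f_j\rangle_\mathcal{H}]_{ij}$ (which, being determined by $\mathcal H$, is the same on both sides). Lemma \ref{lemm-gaus-mea-char-func} guarantees this by matching characteristic functions.

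Second, I would construct the isomorphism on the \emph{algebra of cylinder sets} $\mathcal{C}\subset \mathcal O$, defined as the sets of the form $A=\Phi_{\mathbf f}^{-1}(B)$ with $B\in \mathcal B_{\mathbb R^n}$. Define $T_0(A):=(\Phi'_{\mathbf f})^{-1}(B)\in \mathcal O'$. The work here is to check that $T_0$ is well-defined modulo null sets: if two different cylinder representations $\Phi_{\mathbf f}^{-1}(B)$ and $\Phi_{\mathbf g}^{-1}(B')$ coincide in $Q$ up to a $\mu$-null set, then by enlarging the tuple to contain both $\mathbf f$ and $\mathbf g$, the matching of finite-dimensional laws forces the corresponding primed cylinder sets to coincide up to a $\mu'$-null set. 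Once well-definedness is secured, $T_0$ is a Boolean algebra homomorphism (it preserves finite unions, intersections, complements because $\Phi_{\mathbf f}$ and $\Phi'_{\mathbf f}$ are set-theoretic maps on the same Borel $\sigma$-algebra of $\mathbb R^n$) and it is measure-preserving by the finite-dimensional matching.

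Third, I would extend $T_0$ to the full measure algebras. Here I use that the hypothesis ``$\{\phi(f)\mid f\in\mathcal H\}$ generates $\mathcal O$'' implies the cylinder sets are dense in $(\mathcal O/\mathcal I_\mu, d_\mu)$, where $d_\mu(A,B)=\mu(A\triangle B)$ is the standard complete metric on the measure algebra (see \cite{Sim2} Section I.2 or \cite{Dudley}). Since $T_0$ is an isometry between the cylinder subalgebras (as measure-preserving Boolean homomorphism), and both target and source are complete, $T_0$ extends uniquely to an isometric Boolean homomorphism $T:\mathcal O/\mathcal I_\mu\to \mathcal O'/\mathcal I_{\mu'}$. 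Applying the same construction in reverse produces an inverse, so $T$ is an isomorphism. By construction, $T$ sends $\phi(f)^{-1}(B)$ to $\phi'(f)^{-1}(B)$ for every Borel $B\subset\mathbb R$, which is exactly the statement that $\phi(f)$ corresponds to $\phi'(f)$ under the isomorphism.

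The main obstacle is the well-definedness step: different cylinder representations of the same equivalence class must map to the same equivalence class. The fix is the ``common refinement'' trick mentioned above, together with the crucial consistency property of the finite-dimensional Gaussian laws under marginalization (projecting out extra coordinates takes the Gaussian law on $\mathbb R^{n+k}$ to the correct one on $\mathbb R^n$ on both sides). A secondary subtlety, which I would handle by the metric-completion argument rather than a direct monotone class theorem, is making sure countable operations are respected after extending past the cylinder algebra; working in the complete metric space $(\mathcal O/\mathcal I_\mu,d_\mu)$ circumvents the need to check $\sigma$-algebra axioms by hand.
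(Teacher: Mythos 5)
Your proof is correct, but it follows a different route from the paper's. The paper disposes of the statement in one line using the machinery it has already set up: the prescription $U:\phi(f)\mapsto\phi'(f)$ is a unitary between the two Gaussian Hilbert spaces (both isometric to $\mathcal{H}$), and applying the second quantization functor of Proposition \ref{cor-IWS-induced} gives a unitary $\Gamma(U):L^2(Q,\mathcal{O},\mu)\to L^2(Q',\mathcal{O}',\mu')$ intertwining the two processes; since $L^2\supset L^\infty\supset\{\text{indicators}\}$ and $\Gamma(U)$ respects the multiplicative structure, tracking where indicators go yields the measure-algebra isomorphism. You instead argue by bare hands: the finite-dimensional laws agree because both are centered Gaussians with covariance $[\langle f_i,f_j\rangle_{\mathcal{H}}]$, you define the map on the cylinder algebra, check well-definedness modulo null sets by the common-refinement trick (which is exactly the right fix, relying on consistency of the Gaussian marginals), and extend by isometry with respect to $d_\mu(A,B)=\mu(A\triangle B)$ using that the cylinder algebra is $d_\mu$-dense because $\{\phi(f)\}$ generates $\mathcal{O}$; countable operations survive the extension since they are $d_\mu$-limits of finite ones. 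What each approach buys: yours is more elementary, avoids the Wiener chaos/Fock apparatus entirely, and in fact proves the stronger statement that any two processes with the same finite-dimensional distributions whose variables generate the respective $\sigma$-algebras have isomorphic measure algebras (Gaussianity enters only through the identification of those distributions); the paper's is shorter given Proposition \ref{cor-IWS-induced} and hands you in addition the explicit $L^2$-unitary $\Gamma(U)$, which is the object actually used elsewhere in the text (e.g.\ for Markov-property and conditioning arguments), at the mild cost of the implicit verification that $\Gamma(U)$ carries indicators to indicators.
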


\begin{proof}
  [Remark for proof.] Note for probability spaces~$L^2(Q,\mathcal{O},\mu)\supset L^{\infty}(Q,\mathcal{O},\mu)\supset \{\textrm{indicators}\}$. Similarly for $(Q',\mathcal{O}',\mu')$. Thus~$\Gamma(U)$ where~$U$ takes each~$\phi(f)$ to~$\phi'(f)$ gives the isomorphism (corollary \ref{cor-IWS-induced}).
\end{proof}

\begin{prop}[\cite{Sim2} proposition I.7] \label{prop-gauss-hilb-space-cons-q-space}
  Let~$\{\phi_1(f)\}$ and~$\{\phi_2(f)\}$ be Gaussian processes, respectively, indexed by~$\mathcal{H}_1$ and~$\mathcal{H}_2$, defined on~$(Q_1,\mathcal{O}_1,\mu_1)$ and~$(Q_2,\mathcal{O}_2,\mu_2)$. Then a Gaussian process~$\{\phi(h)\}$ indexed by~$\mathcal{H}=\mathcal{H}_1\oplus \mathcal{H}_2$ can be defined on~$Q:=Q_1\times Q_2$ equipped with~$\mathcal{O}_1\wh{\otimes} \mathcal{O}_2$ and~$\mu_1\otimes \mu_2$, by putting
  \begin{equation}
    \phi(f_1\oplus f_2)\defeq \phi_1(f_1)+\phi_2(f_2),
    \label{}
  \end{equation}
  for all~$f_1\in \mathcal{H}_1$,~$f_2\in \mathcal{H}_2$.
\end{prop}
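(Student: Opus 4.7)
The plan is to build the process on $Q = Q_1 \times Q_2$ by lifting each $\phi_j$ via the coordinate projection $\pi_j \colon Q \to Q_j$. More precisely, I would set $\tilde{\phi}_j(f_j) \defeq \phi_j(f_j) \circ \pi_j$ for $f_j \in \mathcal{H}_j$, so that $\tilde{\phi}_j(f_j)$ is a random variable on $(Q, \mathcal{O}_1 \wh{\otimes} \mathcal{O}_2, \mu_1 \otimes \mu_2)$. The crucial elementary observation is that, under the product measure, $\tilde{\phi}_1(f_1)$ and $\tilde{\phi}_2(f_2)$ are independent for every $f_1, f_2$, because they depend on disjoint coordinates. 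One then defines $\phi(f_1 \oplus f_2) \defeq \tilde{\phi}_1(f_1) + \tilde{\phi}_2(f_2)$, which exhausts all of $\mathcal{H} = \mathcal{H}_1 \oplus \mathcal{H}_2$.

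Next I would verify that $\{\phi(h)\}_{h \in \mathcal{H}}$ is a centered Gaussian process with the right covariance. Gaussianity of a single $\phi(f_1 \oplus f_2)$ is immediate since it is the sum of two independent centered Gaussians. For joint Gaussianity, I would observe that any finite collection $(\phi(h^{(k)}))_{k=1}^n$ is a linear image of the family $(\tilde{\phi}_1(f_1^{(k)}))_k \cup (\tilde{\phi}_2(f_2^{(k)}))_k$, and the latter is jointly Gaussian because its two halves are independent and each half is jointly Gaussian (being a linear image of the Gaussian process $\phi_j$ through the measure-preserving map $\pi_j$). For the covariance, I would expand
\[
\mathbb{E}\big[\phi(f_1 \oplus f_2)\phi(g_1 \oplus g_2)\big]
\]
by bilinearity into four terms; the two cross terms $\mathbb{E}[\tilde{\phi}_1(f_1)\tilde{\phi}_2(g_2)]$ and $\mathbb{E}[\tilde{\phi}_2(f_2)\tilde{\phi}_1(g_1)]$ vanish by independence and centeredness, while the two diagonal terms equal $\ank{f_1,g_1}_{\mathcal{H}_1} + \ank{f_2,g_2}_{\mathcal{H}_2}$, which is by definition $\ank{f_1 \oplus f_2, g_1 \oplus g_2}_{\mathcal{H}}$.

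Finally, I would check the generation condition for the $\sigma$-algebra. Specializing to $h = f_1 \oplus 0$ and $h = 0 \oplus f_2$ recovers $\tilde{\phi}_1(f_1)$ and $\tilde{\phi}_2(f_2)$ respectively, so $\sigma(\phi(h)\mid h \in \mathcal{H})$ contains both $\pi_1^{-1}(\mathcal{O}_1)$ and $\pi_2^{-1}(\mathcal{O}_2)$, since by hypothesis $\{\phi_j(f_j)\}_{f_j}$ generates $\mathcal{O}_j$. These two pulled-back sub-$\sigma$-algebras together generate $\mathcal{O}_1 \wh{\otimes} \mathcal{O}_2$ by definition of the product $\sigma$-algebra, and the reverse inclusion is automatic, so the generation requirement is met.

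The argument is genuinely routine bookkeeping, so I do not expect any serious obstacle; the one mildly subtle point is insisting that the collection $\{\phi(h)\}$ is jointly Gaussian rather than just marginally so, which is handled once and for all by the fact that each finite-dimensional marginal is a linear image of independent Gaussian blocks.
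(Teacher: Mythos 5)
Your proof is correct and is essentially the standard argument: the paper gives no proof of this proposition, citing only Simon (proposition I.7), and the argument there is the same product-space construction you use, with independence of the two lifted blocks giving joint Gaussianity, the vanishing of the cross covariances, and generation of the product $\sigma$-algebra $\mathcal{O}_1\wh{\otimes}\mathcal{O}_2$.
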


\chapter{Functional Analysis}

 \section{Operator Ideals}

 Let~$A:\mathcal{H}\lto \mathcal{K}$ be a compact operator and~$A^*:\mathcal{K}\lto \mathcal{H}$ its adjoint. Then~$A^*A$ is a compact self-adjoint nonnegative operator whose spectrum consists of positive eigenvalues~$\mu_0(A)^2\ge\mu_1(A)^2\ge \cdots\ge 0$ with the only possible accumulation point being zero, which may or may not be in the spectrum or an eigenvalue. The numbers~$\mu_n(A)$ are called \textsf{singular values} of~$A$. Define
 \begin{align}
   \nrm{A}_{\mathcal{J}_p}&\defeq \Big( \sum_n \mu_n(A)^p \Big)^{1/p},&\quad \textrm{for }1\le p<\infty,\\
   \nrm{A}_{\mathcal{J}_{\infty}}&\defeq \nrm{A},&\quad\textrm{for }p=\infty.
 \end{align}
 Put~$\mathcal{J}_p(\mathcal{H},\mathcal{K}):=\{A\textrm{ compact}~|~\nrm{A}_{\mathcal{J}_p}<\infty\}$. We shall be concerned only with~$\mathcal{J}_1$, where the norm is also denoted~$\nrm{\cdot}_{\mm{tr}}$, called the \textsf{trace class} operators and~$\mathcal{J}_2$, called the \textsf{Hilbert-Schmidt} operators, where the norm is also written $\nrm{\cdot}_{\mm{HS}}$. When~$\mathcal{K}=\mathcal{H}$, define also
 \begin{equation}
   \ttr_{\mathcal{H}}(A)\defeq \sum_n\ank{e_n, A e_n}_{\mathcal{H}},
   \label{eqn-def-trace-H}
 \end{equation}
 where~$\{e_n\}$ is an orthonormal basis of~$\mathcal{H}$. Its convergence and (in)dependence on basis is discussed below. When~$\mathcal{H}=\mathcal{K}$ we use the notation~$\mathcal{J}_p(\mathcal{H})$.
 \begin{lemm}[\cite{Sim1} theorems 2.7, 2.8, 2.14, 3.1, \cite{GJ} page 132-133] \label{lemm-trace-ideals}
   We have
   \begin{enumerate}[(i)]
     \item whenever~$A:\mathcal{H}\lto \mathcal{H}$ is bounded and self-adjoint nonnegative, the sum (\ref{eqn-def-trace-H}) valuing in~$[0,+\infty]$ is independent of bases; it is finite iff~$A\in \mathcal{J}_1(\mathcal{H})$ in which case~$\ttr_{\mathcal{H}}(A)=\nrm{A}_{\mm{tr}}$;
     \item $A\in \mathcal{J}_p(\mathcal{H},\mathcal{K})$ iff~$(A^*A)^{p/2}\in \mathcal{J}_{1}(\mathcal{H})$, in which case~$\nrm{A}_{\mathcal{J}_p}=\ttr_{\mathcal{H}}((A^*A)^{p/2})^{1/p}$. In particular,~$A$ is Hilbert-Schmidt iff~$A^*A$ is trace class.
     \item Whenever~$A$,~$C\in \mathcal{L}(\mathcal{H})$ and~$B\in \mathcal{J}_p(\mathcal{H})$ we have~$\nrm{ABC}_{\mathcal{J}_p}\le \nrm{A}\nrm{C}\nrm{B}_{\mathcal{J}_p}$; thus each~$\mathcal{J}_p$ is a two-sided ideal in~$\mathcal{L}(\mathcal{H})$;
     \item if~$A\in \mathcal{J}_1(\mathcal{H})$ then the sum (\ref{eqn-def-trace-H}) converges absolutely, is independent of bases, and~$|\ttr_{\mathcal{H}}(A)|\le \nrm{A}_{\mm{tr}}$;
     \item whenever~$p^{-1}=q^{-1}+r^{-1}$ and~$A\in \mathcal{J}_q(\mathcal{H})$,~$B\in \mathcal{J}_r(\mathcal{H})$ then~$\nrm{AB}_{\mathcal{J}_p}\le \nrm{A}_{\mathcal{J}_q}\nrm{B}_{\mathcal{J}_r}$ and~$AB\in \mathcal{J}_p(\mathcal{H})$; in particular, the product of two Hilbert-Schmidt operators is trace class.
   \end{enumerate}
 \end{lemm}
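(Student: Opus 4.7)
The plan is to establish the five assertions in the order given, since each one builds on the previous, and to rely throughout on the spectral decomposition of compact self-adjoint operators together with the polar decomposition $A = U|A|$ with $|A| = (A^*A)^{1/2}$ for a general compact $A$.

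First, for (i) I would begin with the case $A$ bounded, self-adjoint, nonnegative. Writing $A^{1/2}$ via the continuous functional calculus, one has $\langle e_n, A e_n\rangle = \|A^{1/2} e_n\|^2$, and by Parseval applied in any orthonormal basis $\{f_m\}$,
\begin{equation}
\sum_n \|A^{1/2} e_n\|^2 = \sum_{n,m} |\langle f_m, A^{1/2} e_n\rangle|^2 = \sum_m \|A^{1/2} f_m\|^2,
\end{equation}
so the sum in $[0,+\infty]$ is basis-independent. When it is finite, $A^{1/2}$ is Hilbert-Schmidt in the elementary sense, hence compact, hence so is $A$; diagonalizing $A$ in its own eigenbasis then identifies $\sum_n \langle e_n, Ae_n\rangle$ with $\sum_n \mu_n(A) = \|A\|_{\mathcal{J}_1}$, giving $A \in \mathcal{J}_1$ and $\mathrm{tr}_{\mathcal{H}}(A) = \|A\|_{\mathrm{tr}}$.

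Next (ii) is really a definition-check: by construction the singular values $\mu_n(A)$ are the eigenvalues of $|A| = (A^*A)^{1/2}$, so $\mu_n(A)^p$ are those of $(A^*A)^{p/2}$, and (i) applied to the nonnegative operator $(A^*A)^{p/2}$ gives both the equivalence and the identity $\|A\|_{\mathcal{J}_p} = \mathrm{tr}_{\mathcal{H}}((A^*A)^{p/2})^{1/p}$. For (iii), the key lemma is the min–max characterization of singular values (Ky Fan), namely
\begin{equation}
\mu_n(A) = \inf\{ \|A - F\| : \mathrm{rank}(F) \le n\}.
\end{equation}
Submultiplicativity of the operator norm gives $\mu_n(BCA) \le \|A\|\,\|C\|\,\mu_n(B)$ (approximating $B$ by finite-rank operators, composing, and using $\|ACF\| \le \|A\|\|C\|\|F\|$ and rank $\le$ rank). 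Summing the $p$-th powers yields the ideal inequality.

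For (iv), when $A \in \mathcal{J}_1$, write the polar decomposition $A = U|A|$ with $|A| \in \mathcal{J}_1$ nonnegative. The Cauchy–Schwarz inequality applied to the inner products with $|A|^{1/2}$ gives
\begin{equation}
\sum_n |\langle e_n, A e_n\rangle| = \sum_n |\langle |A|^{1/2} U^* e_n, |A|^{1/2} e_n\rangle| \le \Big(\sum_n \||A|^{1/2} U^* e_n\|^2\Big)^{1/2} \Big(\sum_n \||A|^{1/2} e_n\|^2\Big)^{1/2},
\end{equation}
and both factors on the right are bounded by $\|A\|_{\mathrm{tr}}^{1/2}$ by (i) and (iii) applied to $U^*|A|^{1/2}$ and $|A|^{1/2}$ respectively. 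Absolute convergence in turn justifies rearrangement, and basis-independence follows by polarization from the self-adjoint case in (i) (decompose $A$ into its real and imaginary self-adjoint parts, then each into positive and negative parts). Finally (v) is the Schatten–Hölder inequality: from the singular-value description in (ii) it reduces to the scalar Hölder inequality $\sum_n \mu_n(AB)^p \le (\sum_n \mu_n(A)^q)^{p/q}(\sum_n \mu_n(B)^r)^{p/r}$ once we have the multiplicative singular value bound $\mu_{m+n}(AB) \le \mu_m(A)\mu_n(B)$, itself another consequence of the Ky Fan characterization.

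The main obstacle I anticipate is the multiplicative singular value inequality used in (iii) and (v): the Ky Fan min–max itself is routine, but producing the sharp submultiplicative bounds $\mu_n(ABC) \le \|A\|\|C\|\mu_n(B)$ and $\mu_{m+n}(AB) \le \mu_m(A)\mu_n(B)$ requires a careful approximation argument balancing ranks in the infimum — the rest of the proof is spectral bookkeeping.
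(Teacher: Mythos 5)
The paper itself offers no proof of this lemma — it is quoted from Simon's \emph{Trace Ideals} and Glimm--Jaffe — and your outline follows essentially the same standard route as those sources: Parseval plus the functional calculus for (i), eigenvalues of $(A^*A)^{p/2}$ for (ii), the rank-approximation (Ky Fan) characterization of singular values for (iii), and polar decomposition plus Cauchy--Schwarz for (iv). Parts (i)--(iv) are essentially sound, with one caveat: the claim ``two-sided ideal'' in (iii) and your polarization step in (iv) both quietly use that $\mathcal{J}_p$ (resp.\ $\mathcal{J}_1$) is closed under adjoints and under sums. Adjoints are free since $\mu_n(A^*)=\mu_n(A)$, but additivity does not follow from $\mu_n(ABC)\le\|A\|\,\|C\|\,\mu_n(B)$ alone; you need the companion bound $\mu_{m+n+1}(S+T)\le\mu_{m+1}(S)+\mu_{n+1}(T)$, which fortunately follows from the same rank-approximation characterization you already invoke. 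Alternatively, you can bypass polarization altogether and prove basis-independence of the trace directly from the Schmidt expansion $A=\sum_k\mu_k\,\langle\psi_k,\cdot\rangle\,\phi_k$ together with Fubini (justified by $\sum_k\mu_k<\infty$ and Cauchy--Schwarz in the basis index), which is Simon's argument and avoids knowing in advance that $\mathcal{J}_1$ is a vector space.

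The genuine gap is in (v). The inequality $\mu_{m+n}(AB)\le\mu_m(A)\,\mu_n(B)$ combined with scalar H\"older does not yield the stated bound: pairing indices gives $\mu_{2k-1}(AB),\,\mu_{2k}(AB)\le\mu_k(A)\mu_k(B)$ and hence only $\|AB\|_{\mathcal{J}_p}\le 2^{1/p}\|A\|_{\mathcal{J}_q}\|B\|_{\mathcal{J}_r}$, while the termwise bound $\mu_n(AB)\le\mu_n(A)\mu_n(B)$ that would make your reduction exact is false in general. The sharp inequality requires Horn's multiplicative inequality $\prod_{j\le n}\mu_j(AB)\le\prod_{j\le n}\mu_j(A)\mu_j(B)$ together with the Weyl majorization lemma (log-majorization implies majorization of $t\mapsto t^p$), or else a duality/interpolation argument. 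For the one case the paper actually exploits, namely $p=1$, $q=r=2$, there is a short direct proof you could substitute: writing $AB=U|AB|$ and taking $\{e_n\}$ an orthonormal eigenbasis of $|AB|$, one has $\mathrm{tr}\,|AB|=\sum_n\langle A^*Ue_n,\,Be_n\rangle\le\big(\sum_n\|A^*Ue_n\|^2\big)^{1/2}\big(\sum_n\|Be_n\|^2\big)^{1/2}\le\|A\|_{\mathcal{J}_2}\|B\|_{\mathcal{J}_2}$. Your weaker constant-$2^{1/p}$ version would in fact suffice for every qualitative use made of (v) in this paper, but it does not prove the lemma as stated.
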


 \section{Hilbert-Schmidt Operators on $L^2$}
  \begin{prop}
   [\cite{RSim} page 52] \label{prop-tensor-l2} Let~$(Q_1,\mu_1)$ and~$(Q_2,\mu_2)$ be measure spaces so that $L^2(Q_1,\mu_1)$, $L^2(Q_2,\mu_2)$ are separable. Then the map
   \begin{equation}
     \left.
     \begin{array}{rcl}
        L^2(Q_1,\mu_1)\times L^2(Q_2,\mu_2) &\lto & L^2(Q_1\times Q_2,\mu_1\otimes \mu_2),\\
	(f,g) &\longmapsto& fg,
     \end{array}
     \right.
     \label{}
   \end{equation}
   with~$(fg)(x,y):=f(x)g(y)$ extends to a unique isomorphism~$L^2(Q_1,\mu_1)\otimes L^2(Q_2,\mu_2) \cong L^2(Q_1\times Q_2,\mu_1\otimes \mu_2)$. 
 \end{prop}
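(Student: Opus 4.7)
The strategy proceeds by the standard two-step recipe: first show the map extends to an isometry from the algebraic tensor product, then show its extension to the Hilbert space completion is surjective.

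First, the bilinear map $(f,g) \mapsto fg$ induces, by the universal property of the algebraic tensor product, a linear map $\Psi$ from $L^2(Q_1,\mu_1) \odot L^2(Q_2,\mu_2)$ into the space of measurable functions on $Q_1 \times Q_2$. I would verify that $\Psi$ lands in $L^2(Q_1 \times Q_2, \mu_1 \otimes \mu_2)$ and that for simple tensors
\begin{equation}
\bank{f_1 g_1, f_2 g_2}_{L^2(Q_1 \times Q_2)} = \bank{f_1, f_2}_{L^2(Q_1)} \bank{g_1, g_2}_{L^2(Q_2)}
\end{equation}
by a direct Fubini--Tonelli computation; this is exactly the inner product on the algebraic tensor product. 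Consequently $\Psi$ is an isometry on $L^2(Q_1) \odot L^2(Q_2)$, and hence extends uniquely to an isometric linear map, still denoted $\Psi$, from the Hilbert space completion $L^2(Q_1) \otimes L^2(Q_2)$ into $L^2(Q_1 \times Q_2, \mu_1 \otimes \mu_2)$.

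The substantive step is to show $\Psi$ is surjective, equivalently that its range is dense. Using separability, choose orthonormal bases $\{e_i\}_{i \ge 1}$ of $L^2(Q_1)$ and $\{f_j\}_{j \ge 1}$ of $L^2(Q_2)$. By the computation above, $\{e_i f_j\}_{i,j}$ is an orthonormal family in $L^2(Q_1 \times Q_2)$, and it lies in the range of $\Psi$. It therefore suffices to prove this family is complete. Suppose $h \in L^2(Q_1 \times Q_2, \mu_1 \otimes \mu_2)$ satisfies $\bank{h, e_i f_j}_{L^2(Q_1 \times Q_2)} = 0$ for all $i,j$. By Fubini, the function $h_i(y) \defeq \int_{Q_1} h(x,y)\,\overline{e_i(x)}\,\dd\mu_1(x)$ is well-defined for $\mu_2$-a.e.\ $y$ and lies in $L^2(Q_2,\mu_2)$; the vanishing assumption gives $\bank{h_i, f_j}_{L^2(Q_2)} = 0$ for every $j$, so by completeness of $\{f_j\}$, $h_i \equiv 0$ in $L^2(Q_2)$ for every $i$.

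Finally, applying Fubini once more, for $\mu_2$-a.e.\ fixed $y$ the slice $x \mapsto h(x,y)$ lies in $L^2(Q_1,\mu_1)$ and is orthogonal to every $e_i$, hence vanishes $\mu_1$-a.e. Integrating $|h|^2$ and applying Tonelli then yields $\nrm{h}_{L^2(Q_1 \times Q_2)} = 0$, so $h = 0$. The main technical care lies here, in the correct handling of the null sets arising from Fubini when passing between slices and the global function; once this is done, $\Psi$ is a surjective isometry of Hilbert spaces, i.e., a unitary isomorphism, which gives the proposition. Uniqueness of the extension is automatic from the density of simple tensors in the algebraic tensor product.
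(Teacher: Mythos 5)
Your proof is correct and is essentially the standard argument from Reed--Simon (the source the paper cites for this proposition; the paper itself supplies no proof): isometry on simple tensors via Fubini--Tonelli, then completeness of the product orthonormal basis $\{e_i f_j\}$ using countability to control the exceptional null sets. Nothing further is needed.
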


 \begin{prop}
   [\cite{RSim} page 220, \cite{Sim3} theorem 3.8.4]
   Let~$\mathcal{H}$,~$\mathcal{K}$ be Hilbert spaces. Then the map
   \begin{equation}
     \left.
     \begin{array}{rcl}
       \mathcal{I}:\mathcal{H}\times \mathcal{K} &\lto & \mathcal{J}_2(\mathcal{H},\mathcal{K}),\\
       (v,w) &\longmapsto & \ank{v,-}_{\mathcal{H}}w
     \end{array}
     \right.
     \label{}
   \end{equation}
   extends to a unique isometric isomorphism~$\mathcal{H}\otimes \mathcal{K}\cong  \mathcal{J}_2(\mathcal{H},\mathcal{K})$. In particular, for every~$\eta\in \mathcal{H}\otimes \mathcal{K}$ there exist orthonormal sets~$\{\varphi_n\}\subset \mathcal{H}$,~$\{\psi_n\}\subset \mathcal{K}$ and real numbers~$\lambda_n>0$ so that
   \begin{equation}
     \eta=\sum_n \lambda_n(\varphi_n\otimes \psi_n),\quad\textrm{and}\quad \mathcal{I}(\eta)(v)=\sum_n \lambda_n\ank{\varphi_n,v}_{\mathcal{H}}\psi_n,
     \label{}
   \end{equation}
   and vice versa.
 \end{prop}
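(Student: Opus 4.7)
The plan is to verify the four properties one needs: that $\mathcal{I}$ lands in $\mathcal{J}_2$, that it is isometric as a bilinear map with respect to the canonical cross-norm on $\mathcal{H}\otimes\mathcal{K}$, that its image is dense, and that it is therefore a surjective isometry. The ``in particular'' clause will drop out essentially for free from the polar/singular-value decomposition once the isomorphism is in hand.

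First I would verify the pointwise statement: for $v\in\mathcal{H}$ and $w\in\mathcal{K}$, the operator $\mathcal{I}(v,w):u\mapsto \langle v,u\rangle_{\mathcal{H}}w$ is rank one, hence compact, and its only nonzero singular value is $\|v\|_{\mathcal{H}}\|w\|_{\mathcal{K}}$ (compute $\mathcal{I}(v,w)^*\mathcal{I}(v,w)u=\|w\|_{\mathcal{K}}^2\langle v,u\rangle v$). Thus $\|\mathcal{I}(v,w)\|_{\mathcal{J}_2}=\|v\|_{\mathcal{H}}\|w\|_{\mathcal{K}}$, and more generally, for two pairs $(v_1,w_1)$, $(v_2,w_2)$, a direct calculation in an orthonormal basis adapted to $v_1,v_2$ yields
\begin{equation}
\bigl\langle \mathcal{I}(v_1,w_1),\mathcal{I}(v_2,w_2)\bigr\rangle_{\mathcal{J}_2}=\langle v_2,v_1\rangle_{\mathcal{H}}\langle w_1,w_2\rangle_{\mathcal{K}},
\end{equation}
which is exactly the inner product on the algebraic tensor product $\mathcal{H}\otimes_{\mathrm{alg}}\mathcal{K}$ evaluated on the simple tensors $v_i\otimes w_i$. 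By the universal property of the algebraic tensor product, the bilinear map $\mathcal{I}$ factors through a unique linear map $\widetilde{\mathcal{I}}:\mathcal{H}\otimes_{\mathrm{alg}}\mathcal{K}\to\mathcal{J}_2(\mathcal{H},\mathcal{K})$, and the identity above extends by bilinearity to show that $\widetilde{\mathcal{I}}$ is isometric on the algebraic tensor product.

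Since $\mathcal{J}_2(\mathcal{H},\mathcal{K})$ is complete under $\|\cdot\|_{\mathcal{J}_2}$ (this is part of lemma \ref{lemm-trace-ideals}), the isometry $\widetilde{\mathcal{I}}$ extends uniquely to an isometry on the Hilbert-space completion $\mathcal{H}\otimes\mathcal{K}$, and its image is the closure of the finite-rank operators in the Hilbert-Schmidt norm. The key step is then to establish surjectivity: I would use the singular value decomposition of a compact operator to write any $T\in\mathcal{J}_2(\mathcal{H},\mathcal{K})$ as $T=\sum_n \lambda_n\langle \varphi_n,-\rangle_{\mathcal{H}}\psi_n$ with $\{\varphi_n\}\subset\mathcal{H}$, $\{\psi_n\}\subset\mathcal{K}$ orthonormal and $\lambda_n>0$, where the condition $T\in\mathcal{J}_2$ is exactly $\sum_n\lambda_n^2<\infty$ by lemma \ref{lemm-trace-ideals}(ii). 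The partial sums $T_N:=\sum_{n\le N}\lambda_n\langle\varphi_n,-\rangle\psi_n$ are finite-rank, lie in the image of $\widetilde{\mathcal{I}}$ (as $T_N=\widetilde{\mathcal{I}}(\sum_{n\le N}\lambda_n\varphi_n\otimes\psi_n)$), and satisfy $\|T-T_N\|_{\mathcal{J}_2}^2=\sum_{n>N}\lambda_n^2\to 0$. Meanwhile, $\sum_n\lambda_n\varphi_n\otimes\psi_n$ converges in $\mathcal{H}\otimes\mathcal{K}$ because $\{\varphi_n\otimes\psi_n\}$ is orthonormal and $\sum\lambda_n^2<\infty$, giving the desired preimage.

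The main obstacle, as usual in this sort of argument, is pinning down the surjectivity cleanly: one must use the singular value decomposition for compact operators (already present behind the definition of $\mu_n(A)$ and $\|\cdot\|_{\mathcal{J}_p}$ in the excerpt) and verify simultaneously the convergence of the series on both sides of the correspondence. Once surjectivity is established, the ``in particular'' statement is just a transcription: the SVD expansion $T=\sum_n\lambda_n\langle\varphi_n,-\rangle\psi_n$ corresponds under $\widetilde{\mathcal{I}}^{-1}$ to $\eta=\sum_n\lambda_n(\varphi_n\otimes\psi_n)$, and conversely, any such $\eta$ is sent to an operator of this explicit form.
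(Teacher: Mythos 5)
Your argument is correct, and it is the standard one: the paper gives no proof of this proposition at all (it is quoted from Reed--Simon and from Simon's \emph{Operator Theory}), and the proofs in those references run exactly along your lines --- rank-one operators are Hilbert--Schmidt with norm $\|v\|\,\|w\|$, the algebraic tensor product maps isometrically onto the finite-rank operators, completeness of $\mathcal{J}_2$ gives the extension, and the singular value decomposition gives surjectivity together with the ``in particular'' clause.

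One wrinkle is worth flagging. Over complex scalars the map $(v,w)\mapsto\langle v,\cdot\rangle_{\mathcal{H}}\,w$ is antilinear in $v$, hence sesquilinear rather than bilinear, so the universal property of the algebraic tensor product does not apply verbatim; your own computation already shows this, since $\langle\mathcal{I}(v_1,w_1),\mathcal{I}(v_2,w_2)\rangle_{\mathcal{J}_2}=\langle v_2,v_1\rangle_{\mathcal{H}}\langle w_1,w_2\rangle_{\mathcal{K}}$ agrees with the tensor-product inner product $\langle v_1,v_2\rangle_{\mathcal{H}}\langle w_1,w_2\rangle_{\mathcal{K}}$ only after a conjugation in the first slot. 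The clean complex statement is $\overline{\mathcal{H}}\otimes\mathcal{K}\cong\mathcal{J}_2(\mathcal{H},\mathcal{K})$ (equivalently $\mathcal{H}^*\otimes\mathcal{K}$), which is also why the corollary that follows in the paper carries the conjugate $\overline{\varphi_n(x)}$ in the kernel expansion. In the paper's actual applications the Hilbert spaces are real $L^2$ spaces, where the distinction disappears and your proof goes through verbatim; if you want the statement for complex $\mathcal{H}$, either replace $\mathcal{H}$ by its conjugate or record that the identification is antilinear in the first factor.
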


 \begin{corr}
   [\cite{Sim3} theorem 3.8.5]
   Let~$(Q_1,\mu_1)$ and~$(Q_2,\mu_2)$ be measure spaces so that~$L^2(Q_1,\mu_1)$, $L^2(Q_2,\mu_2)$ are separable. Then there is an isometric isomorphism
   \begin{equation}
     \left.
     \def\arraystretch{1.3}
     \begin{array}{rcl}
       \mathcal{I}:L^2(Q_1\times Q_2,\mu_1\otimes \mu_2) &\lto & \mathcal{J}_2(L^2(Q_1,\mu_1),L^2(Q_2,\mu_2)),\\
       K(x,y) &\longmapsto& \Big[f\longmapsto \ddp\int_{}^{}K(x,-)f(x)\dd\mu_1(x) \Big].
     \end{array}
     \right.
     \label{eqn-hilbert-schmidt-int-ker-rep}
   \end{equation}
   In particular, there exist orthonormal families~$\{\varphi_n\}\subset L^2(Q_1,\mu_1)$,~$\{\psi_n\}\subset L^2(Q_2,\mu_2)$ so that
   \begin{equation}
     K(x,y)=\sum_n \mu_n(\mathcal{I}(K)) \ol{\varphi_n(x)} \psi_n(y),
     \label{}
   \end{equation}
   with the latter series converging absolutely in~$L^2(Q_1,\mu_1)$ for almost every fixed~$y$, and~$\mu_n(\mathcal{I}(K))$ denote the singular values of~$\mathcal{I}(K)$.
 \end{corr}

  Now (ii) of lemma \ref{lemm-trace-ideals} implies that
 \begin{corr}\label{lemm-trace-int-diagonal-hs}
   Let~$L^2(Q_1,\mu_1)$,~$L^2(Q_2,\mu_2)$ be as above and~$A:L^2(Q_1,\mu_1)\lto L^2(Q_2,\mu_2)$ a Hilbert-Schmidt operator, represented via (\ref{eqn-hilbert-schmidt-int-ker-rep}) by the integral kernel~$K_A(x,y)$. Then
   \begin{equation}
     \nrm{A}_{\mm{HS}}^2=\ttr_{L^2(Q_1)}(A^*A)=\int_{}^{}|K_A(x,y)|^2 \dd\mu_1(x)\dd\mu_2(y). \quad\Box
     \label{}
   \end{equation}
 \end{corr}

 \begin{def7}\label{rem-trace-and-diagonal}
   For general~$A\in \mathcal{L}(L^2(Q))$ or~$\mathcal{J}_2(L^2(Q))$, the fact that~$\int_{}^{}|K_A(x,x)|\dd \mu(x)<\infty$ does not imply~$A$ is trace class, and nor does the trace equal $\int_{}^{}K_A(x,x)\dd \mu(x)$ when $A$ is trace class (since the diagonal has measure zero in $Q\times Q$, in reality one could let $K_A(x,x)$ be arbitrary without affecting $A$, thus an important condition is that $K_A$ be continuous in ``some sense''). See Simon \cite{Sim3} section 3.11, also Vershik, Petrov and Zatitskiy \cite{PVZ} section 3.3.
 \end{def7}

\section{Nuclear Spaces}\label{sec-app-nuclear-space}

We say that a topological vector space~$\mathcal{H}_{\infty}$ is a \textsf{countably Hilbert space} if it is a Fr\'echet space whose topology is generated by a countable family~$(\ank{\cdot,\cdot}_n)_{n=1}^{\infty}$ of \textit{inner products} (thus~$\ank{v,v}=0$ implies~$v=0$) whose norms~$\nrm{\cdot}_n:=\sqrt{\ank{\cdot,\cdot}_n}$ form an increasing sequence~$\nrm{\cdot}_1\le \nrm{\cdot}_2\le \cdots$. By completeness of~$\mathcal{H}_{\infty}$ in its topology it follows that
\begin{equation}
  \mathcal{H}_{\infty}=\bigcap_{n=1}^{\infty}\mathcal{H}_n,\quad\textrm{with}\quad \mathcal{H}_n\defeq \ol{\mathcal{H}}_{\infty}^{\nrm{\cdot}_n}.
  \label{}
\end{equation}
Each~$\mathcal{H}_n$ is a Hilbert space with~$\ank{\cdot,\cdot}_n$, and~$\mathcal{H}_1\supset \mathcal{H}_2\supset\cdots$. We denote by~$\iota_{n,m}:\mathcal{H}_n\lhook\joinrel\longrightarrow \mathcal{H}_m$ the natural inclusion for~$m\le n$. It follows from the definitions that~$\iota_{n,m}$ is continuous with respect to the~$\nrm{\cdot}_n$ and~$\nrm{\cdot}_m$ norms and has dense image. 

\begin{deef}
  A countably Hilbert space~$\mathcal{H}_{\infty}$ with inner products~$(\ank{\cdot,\cdot}_n)_{n=1}^{\infty}$ is called \textsf{nuclear} if for every~$m\in\mb{N}$ there exists~$n> m$ so that the injection~$\iota_{n,m}:\mathcal{H}_n\lhook\joinrel\longrightarrow \mathcal{H}_m$ is Hilbert-Schmidt.
\end{deef}
 
By definition of the topology of~$\mathcal{H}_{\infty}$, a bilinear form~$\mathcal{C}:\mathcal{H}_{\infty}\times \mathcal{H}_{\infty}\lto \mb{C}$ is continuous if and only if there exist~$C>0$ and~$m\in\mb{N}$ such that
\begin{equation}
  |\mathcal{C}(f,h)|\le C\nrm{f}_m \nrm{h}_m
  \label{}
\end{equation}
for all~$f$,~$h\in \mathcal{H}_{\infty}$. Denote by~$\mathcal{H}_{-\infty}:=\mathcal{H}_{\infty}'$ the topological dual of~$\mathcal{H}_{\infty}$. Countably Hilbert spaces are reflexive.

\chapter{Global Analysis}

\section{Sobolev Spaces over Domains}\label{sec-app-sobo}

\noindent In this paper we make essential use of the usual~$L^2$ Sobolev spaces over Riemannian manifolds. First let~$(M,g)$ be a closed Riemannian manifold and~$s\in \mb{R}$. Then the Sobolev space~$W^s(M)$ of order~$s$ is defined generally as the closure of~$C^{\infty}(M)$ under a norm~$\nrm{\cdot}_{W^s(M)}$, where the norm~$\nrm{\cdot}_{W^s(M)}$ could be defined in various equivalent ways. We refer to Taylor \cite{Taylor1} chapter 4 for a general discussion. For us,~$s=\pm 1$,~$\pm\frac{1}{2}$. We rely heavily on the following fact.
\begin{lemm}\label{lemm-sobo-inner-prod}
  Let~$\Lambda_{2s}$ be an elliptic \textsf{strictly positive} formally self-adjoint pseudodifferential operator on~$M$ with order~$2s$. Then the inner product
  \begin{equation}
    \ank{-,-}_{W^s}\defeq \ank{-,\Lambda_{2s}-}_{L^2}
    \label{eqn-sobo-inner-prod}
  \end{equation}
  induces an equivalent norm for~$W^s(M)$.
\end{lemm}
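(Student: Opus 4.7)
The plan is to prove the equivalence by a two-sided estimate, with the work concentrated in constructing a square root of $\Lambda_{2s}$ within the calculus of classical elliptic pseudodifferential operators.

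First I would establish the upper bound. Since $\Lambda_{2s}$ is a $\Psi$DO of order $2s$ on the closed manifold~$M$, by the standard mapping property it extends to a bounded linear map $\Lambda_{2s}:W^s(M)\lto W^{-s}(M)$. Because $W^{-s}(M)$ is the topological dual of $W^s(M)$ under the $L^2$-pairing, for $u$,~$v\in W^s(M)$ we get
\begin{equation*}
\big|\ank{u,\Lambda_{2s}v}_{L^2}\big|\le \nrm{u}_{W^s}\nrm{\Lambda_{2s}v}_{W^{-s}}\le C\,\nrm{u}_{W^s}\nrm{v}_{W^s},
\end{equation*}
so in particular $\ank{v,\Lambda_{2s}v}_{L^2}\le C\,\nrm{v}_{W^s}^2$. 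This is the easy half.

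For the reverse inequality, the idea is to construct the square root $\Lambda_{2s}^{1/2}$ as a classical elliptic $\Psi$DO of order $s$ and use it to transfer $L^2$ bounds on $\Lambda_{2s}^{1/2}u$ into $W^s$ bounds on $u$. Since $\Lambda_{2s}$ is formally self-adjoint on $C^{\infty}(M)$, elliptic of positive order $2s$, and strictly positive, it admits a unique self-adjoint extension whose spectrum is discrete, accumulates only at $+\infty$, and lies in some interval $[\rho,\infty)$ with $\rho>0$. The strictly positive principal symbol then allows the Seeley construction of complex powers (via a Cauchy integral of the resolvent, exactly as in the construction recalled in proposition \ref{prop-def-det-zeta}) to produce $\Lambda_{2s}^{z}$ as a holomorphic family of classical $\Psi$DOs. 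In particular $\Lambda_{2s}^{1/2}$ is a classical elliptic $\Psi$DO of order $s$, self-adjoint and strictly positive with $\Lambda_{2s}^{1/2}\circ\Lambda_{2s}^{1/2}=\Lambda_{2s}$, and its inverse $\Lambda_{2s}^{-1/2}=(\Lambda_{2s}^{1/2})^{-1}$ is a classical elliptic $\Psi$DO of order $-s$.

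Once this is in hand, the reverse bound is immediate: $\Lambda_{2s}^{-1/2}$ maps $L^2(M)=W^0(M)$ boundedly into $W^s(M)$, so writing $u=\Lambda_{2s}^{-1/2}(\Lambda_{2s}^{1/2}u)$ and using self-adjointness of $\Lambda_{2s}^{1/2}$ to rewrite $\ank{u,\Lambda_{2s}u}_{L^2}=\nrm{\Lambda_{2s}^{1/2}u}_{L^2}^2$, we obtain
\begin{equation*}
\nrm{u}_{W^s}\le C'\,\nrm{\Lambda_{2s}^{1/2}u}_{L^2}=C'\,\ank{u,\Lambda_{2s}u}_{L^2}^{1/2}.
\end{equation*}
The only real work is in the construction of $\Lambda_{2s}^{1/2}$ as a $\Psi$DO of the correct order, and this is where I would expect to invoke standard machinery rather than develop anything new; once that construction is granted, both directions follow from the elementary Sobolev mapping properties of classical $\Psi$DOs on a closed manifold.
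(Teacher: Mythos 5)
Your proof is correct; the paper itself gives no proof of this lemma (it is quoted as a standard fact with a pointer to Taylor's book), and your argument — Seeley complex powers to produce $\Lambda_{2s}^{\pm 1/2}$ as classical elliptic $\Psi$DOs of order $\pm s$, then the elementary Sobolev mapping properties for both directions of the estimate — is precisely the standard route. The only points left implicit, both routine, are that strict positivity of the operator (together with ellipticity, formal self-adjointness and positive order) is what forces the principal symbol to be positive so that $(-\infty,0]$ is a ray of minimal growth for the Seeley construction, and that the identity $\langle u,\Lambda_{2s}u\rangle_{L^2}=\|\Lambda_{2s}^{1/2}u\|_{L^2}^2$, proved first on $C^{\infty}(M)$, should be extended to all of $W^s(M)$ by density and continuity of both sides.
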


In particular, the real power~$(\Delta_M+m^2)^s$ of the Helmholtz operator (massive Laplacian)~$\Delta_M+m^2$ provides such a candidate for~$\Lambda_{2s}$. \textbf{Convention:} whenever we use the space~$W^s(M)$, the inner product (\ref{eqn-sobo-inner-prod}) with~$\Lambda_{2s}=(\Delta_M+m^2)^s$ is understood, unless otherwise specified.

\begin{def7}
  Various regimes of functional calculus can be used to define~$(\Delta_M+m^2)^s$. One of them is presented in section \ref{sec-det-def} which in fact defines \textit{complex} powers. We also mention a smooth functional calculus presented in Sogge \cite{Sogge} theorem 4.3.1.
\end{def7}

Next we discuss important subspaces of~$W^s(M)$. Let~$A\subset M$ be a closed set and~$U\subset M$ an open set. Define
\begin{align}
  W^s_A(M)&\defeq \{u\in W^s(M)~|~\supp u\subset A\textrm{ as a distribution}\},\label{eqn-def-sobo-closed-support}\\
  W^s_U(M) &\defeq \textrm{closure of }C_c^{\infty}(U)\textrm{ inside }W^s(M), \label{eqn-def-sobo-open-closure}\\
  W^s(U)&\defeq W^s_{M\setminus U}(M)^{\perp}\subset W^s(M).
  \label{eqn-sobo-open-set-ortho}
\end{align}
These are closed subspaces of~$W^s(M)$. 

\begin{def7}
  We point out right away that by definition, then,
  \begin{equation}
    W^s(U)\cong W^s(M)/W^s_{M\setminus U}(M),
    \label{}
  \end{equation}
  the latter equipped with the quotient norm, which is a more familiar characterization of~$W^s(U)$, see Taylor \cite{Taylor1} page 339. Our definition as in (\ref{eqn-sobo-open-set-ortho}) poses the obvious problem that in general~$C_c^{\infty}(U)\not\subset W^s(U)$, at least for~$s\not\in \mb{Z}_+$. We emphasize therefore that what is important in this definition is not the space~$W^s(U)$ \textit{per se} but the following choice for its inner product:
  \begin{equation}
    \bank{f,h}_{W^s(U)}\defeq \bank{P_{M\setminus U}^{\perp} f,P_{M\setminus U}^{\perp} h}_{W^s(M)},
    \label{eqn-def-inner-prod-sobo-open}
  \end{equation}
  for any~$f$,~$h\in W^s(M)$, in particular for~$f$,~$h\in C_c^{\infty}(U)$, which produces a norm equivalent to the quotient norm, where~$P_{M\setminus U}^{\perp}:W^s(M)\lto W^s(U)$ denotes the orthogonal projection.
\end{def7}

\begin{def7}
  Clearly~$W^s_U(M)\subset W^s_{\ol{U}}(M)$ by definition. In general the inclusion is strict (certainly if $U\ne (\ol{U})^{\circ}$!). See Taylor \cite{Taylor1} page 339 and section 4.7 for interesting discussions on conditions for~$s$ and~$U$ for which equality holds. In particular,~$W^k_{\Omega}(M)=W^k_{\Omega^{\circ}}(M)= W^k_{\ol{\Omega}}(M)$ if~$\Omega\subset M$ is a domain with smooth boundary~$\partial\Omega$ (a closed Riemannian manifold with one dimension less) and~$k\in \mb{Z}_+$. In this case, we use these notations interchangeably. 
\end{def7}

The rest of this appendix could be read along with section \ref{sec-sobo-decomp}. Let~$s=-1$. Although~$C_c^{\infty}(U)\not\subset W^{-1}(U)$, we have

\begin{lemm}\label{lemm-sobo-test-dense-in-open}
  Let~$U\subset M$ be an open set. Then~$P_{M\setminus U}^{\perp}(C_c^{\infty}(U))$ is dense in~$W^{-1}(U)$.
\end{lemm}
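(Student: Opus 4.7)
The plan is to prove this density statement by taking an arbitrary element $v\in W^{-1}(U)$ that is orthogonal to $P^\perp_{M\setminus U}(C_c^\infty(U))$ and showing $v=0$. This is the classical Hilbert-space criterion for density, and the inner-product structure from (\ref{eqn-sobo-inner-prod}) combined with Lemma \ref{lemm-sobo-decomp} makes it essentially mechanical.

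First, I will unpack what orthogonality means. Since $P^\perp_{M\setminus U}$ is the orthogonal projection in $W^{-1}(M)$ onto $W^{-1}_{M\setminus U}(M)^\perp=W^{-1}(U)$, it is self-adjoint with respect to $\ank{-,-}_{W^{-1}}$. For $v\in W^{-1}(U)$ we have $P^\perp_{M\setminus U}v=v$, so for every $\varphi\in C_c^\infty(U)$,
\begin{equation*}
0=\bank{v,P^\perp_{M\setminus U}\varphi}_{W^{-1}(M)}=\bank{P^\perp_{M\setminus U}v,\varphi}_{W^{-1}(M)}=\bank{v,\varphi}_{W^{-1}(M)}=\bank{v,(\Delta+m^2)^{-1}\varphi}_{L^2(M)}.
\end{equation*}
Next, by self-adjointness of $(\Delta+m^2)^{-1}$, setting $u:=(\Delta+m^2)^{-1}v\in W^1(M)$ yields $\ank{u,\varphi}_{L^2(M)}=0$ for every $\varphi\in C_c^\infty(U)$, that is, $u$ vanishes as a distribution on $U$.

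Then I will use Lemma \ref{lemm-sobo-decomp} with $A=M\setminus U$: it states that $\Delta+m^2$ is an isometric isomorphism preserving the orthogonal decompositions, and in particular maps $W^1_U(M)$ isomorphically onto $W^{-1}_{M\setminus U}(M)^\perp=W^{-1}(U)$. Hence $u\in W^1_U(M)$, meaning $u$ is the $W^1(M)$-limit of a sequence $(\varphi_n)\subset C_c^\infty(U)$. This limit holds a fortiori in $L^2(M)$; since each $\varphi_n$ vanishes on $M\setminus U$, the limit gives $u|_{M\setminus U}=0$ in $L^2$-sense.

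Finally, combining the two pieces---$u|_U=0$ as a distribution (equivalently, as an $L^2$-function since $u\in W^1(M)\subset L^2(M)$) and $u|_{M\setminus U}=0$ in $L^2$---I conclude $u=0$ in $L^2(M)$, whence $u=0$ in $W^1(M)$, and so $v=(\Delta+m^2)u=0$. The main obstacle is not conceptual but one of bookkeeping: one must carefully distinguish distributional vanishing on $U$ from $L^2$-vanishing on $M\setminus U$, and invoke the correct half of the Lemma \ref{lemm-sobo-decomp} isomorphism (the space $W^1_U(M)$ is the closure of test functions in $U$, which is what permits the $L^2$-restriction argument on the complement).
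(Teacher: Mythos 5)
Your argument is correct, but it takes a genuinely different route from the paper's. The paper's proof is a short direct one built on locality of $\Delta+m^2$: since $(\Delta+m^2)(C_c^{\infty}(U))\subset C_c^{\infty}(U)$, and since $\Delta+m^2$ carries $W^1_U(M)$ (by definition the $W^1$-closure of $C_c^{\infty}(U)$) isometrically onto $W^{-1}_{M\setminus U}(M)^{\perp}=W^{-1}(U)$ by Lemma \ref{lemm-sobo-decomp}, the set $(\Delta+m^2)(C_c^{\infty}(U))$ is a dense subset of $W^{-1}(U)$ made of test functions supported in $U$; as $P_{M\setminus U}^{\perp}$ fixes it, it lies inside $P_{M\setminus U}^{\perp}(C_c^{\infty}(U))$, which is therefore dense. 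You instead run the dual Hilbert-space argument: a $v\in W^{-1}(U)$ orthogonal to the image is pulled back under the covariance $(\Delta+m^2)^{-1}$ to $u\in W^1_U(M)$ which vanishes distributionally on $U$, and membership in $W^1_U(M)$ forces $u$ to vanish a.e.\ on $M\setminus U$ as well, so $u=0$ and hence $v=0$. Both proofs rest on the same isomorphism from Lemma \ref{lemm-sobo-decomp} and on the definition of $W^1_U(M)$, but your version never invokes locality of $\Delta+m^2$, so it would work verbatim for a covariance operator that is not local but has the same mapping property; what you give up is the paper's explicit dense family $(\Delta+m^2)(C_c^{\infty}(U))$ of compactly supported smooth representatives inside the image, which is exactly the feature that Remark \ref{rem-natural-sobo-cam-mar} exploits immediately after the lemma. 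A small streamlining of your final step: once you know $u\in W^1_U(M)$ and $\langle u,\varphi\rangle_{L^2}=0$ for all $\varphi\in C_c^{\infty}(U)$, take $\varphi_n\to u$ in $W^1(M)$ with $\varphi_n\in C_c^{\infty}(U)$ and pass to the limit in $\langle u,\varphi_n\rangle_{L^2}=0$ to get $\|u\|_{L^2}^2=0$ directly, which avoids the separate bookkeeping of a.e.\ vanishing on $U$ and on $M\setminus U$.
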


\begin{proof}
  We note~$\Delta_M+m^2$ is local and therefore~$(\Delta+m^2)(C_c^{\infty}(U))\subset C_c^{\infty}(U)$. It follows from lemma \ref{rem-sobo-proj=dist-res} and our definition of~$W_U^1(M)$ that~$(\Delta+m^2)(C_c^{\infty}(U))\subset W^{-1}_{M\setminus U}(M)^{\perp}$ and is dense there, proving the result.
\end{proof}

\begin{def7}\label{rem-natural-sobo-cam-mar}
  Clearly, the map~$P_{M\setminus U}^{\perp}$ is also injective on~$C_c^{\infty}(U)$; together with lemma \ref{lemm-sobo-test-dense-in-open} this shows~$P_{M\setminus U}^{\perp}$ is a good embedding of~$C_c^{\infty}(U)$ in~$W^{-1}(U)$. In fact, this is the same as the embedding of~$C_c^{\infty}(U)$ in~$\mathcal{D}'(U)$, by remark \ref{rem-sobo-proj=dist-res}. Nevertheless, the smaller class~$(\Delta+m^2)(C_c^{\infty}(U))$, as it is already dense in~$W^{-1}(U)$, suffices as a class of test functions to define the GFF with Dirichlet condition over a domain (see remark below lemma \ref{lemm-ghs-for-gff}). This reflects the fact that the Cameron-Martin pairing~$\ank{-,-}_{W^1}$ is more natural than~$\ank{-,-}_{L^2}$ in treating the GFF (see remark \ref{rem-use-cam-mar-pairing-not-l2}). We have stuck to~$\ank{-,-}_{L^2}$ only because this is more practical with functional analysis.
\end{def7}

\begin{def7}
For general~$s\in\mb{R}$, one could also define
  \begin{equation}
    W^s_0(U)\defeq \textrm{closure of }C_c^{\infty}(U)\textrm{ under (\ref{eqn-def-inner-prod-sobo-open})}.
    \label{}
  \end{equation}
  Then~$W^s_U(M)\subset W^s_0(U)$. But it cannot generally be compared with~$W^s_{\ol{U}}(M)$ (to the author's knowledge). See the exercises in Taylor \cite{Taylor1} pages 343-344 for more information.
\end{def7}

Next we state the duality results for the various spaces. Recall that~$\ank{-,-}_{L^2(M)}$ denotes both the inner product of~$L^2(M)$ and the distributional pairing between~$\mathcal{D}'(M)$ and~$C^{\infty}(M)$. Below, we extend it to denote also the pairing between dual Sobolev spaces (see (i) of the lemma below).
\begin{lemm}\label{lemm-sobo-dual}
  Let~$M$ be a closed Riemannian manifold,~$U\subset M$ an open set,~$A\subset M$ a closed set, and~$s\in\mb{R}$.
  \begin{enumerate}[(i)]
    \item $W^{-s}(M)$ is the dual Banach space, denoted~$W^s(M)^*$, of~$W^s(M)$ under~$\ank{-,-}_{L^2}$;
    \item the \textsf{annihilator} of~$W^s_U(M)$ under~$\ank{-,-}_{L^2}$ is~$W^{-s}_{M\setminus U}(M)$, that is,
      \begin{equation}
	W^{-s}_{M\setminus U}(M)=\{u\in W^{-s}(M)~|~\ank{u,f}_{L^2}=0\textrm{ for all }f\in W^s_U(M)\};
	\label{}
      \end{equation}
    the \textsf{annihilator} of~$W^s_A(M)$ is accordingly~$W^{-s}_{M\setminus A}(M)$;
  \item $W^s(U)^*\cong W^{-s}_U(M)$,~$W^s_U(M)^*\cong W^{-s}(U)$, these spaces being therefore reflexive.
  \end{enumerate}
\end{lemm}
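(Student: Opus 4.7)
The plan is to treat (i), (ii), (iii) in order, since each builds on the previous. For (i), I would start from the isometric isomorphism $(\Delta_M+m^2)^s:W^s(M)\xlongrightarrow{\sim} W^{-s}(M)$ that follows directly from Lemma \ref{lemm-sobo-inner-prod} with the conventions adopted there. Given $u\in W^{-s}(M)$, the functional $f\mapsto \ank{u,f}_{L^2}$ is initially defined on $C^{\infty}(M)$; I would extend it by writing $\ank{u,f}_{L^2}=\ank{(\Delta+m^2)^{-s/2}u,(\Delta+m^2)^{s/2}f}_{L^2}$ (valid on smooth functions), from which continuity in the $W^s(M)$ norm is immediate, with operator norm $\nrm{u}_{W^{-s}(M)}$. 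Conversely, every continuous linear functional on $W^s(M)$ gives, via Riesz on $L^2(M)$ and the identification $(\Delta+m^2)^s:W^s\lto W^{-s}$, a unique $u\in W^{-s}(M)$ realizing it.

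For (ii), I would first check the ``easy'' inclusion. If $u\in W^{-s}_{M\setminus U}(M)$ and $f\in C_c^{\infty}(U)$, then the distributional pairing $\ank{u,f}$ vanishes because $\supp u\cap \supp f=\varnothing$; by definition of $W^{-s}_{M\setminus U}(M)$ as a closed subspace of $W^{-s}(M)$, and the density of $C_c^{\infty}(U)$ in $W^s_U(M)$ together with the continuity from (i), we get $\ank{u,f}_{L^2}=0$ for all $f\in W^s_U(M)$. For the converse, take $u\in W^{-s}(M)$ annihilating $W^s_U(M)$. In particular $\ank{u,\varphi}_{L^2}=0$ for all $\varphi\in C_c^{\infty}(U)$, which means $\supp u\subset M\setminus U$ as a distribution, hence $u\in W^{-s}_{M\setminus U}(M)$. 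The statement for closed $A$ is the same argument with $U=M\setminus A$.

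For (iii), the idea is to combine (i), (ii) with the orthogonal decomposition $W^s(M)=W^s_U(M)\oplus W^s(M\setminus U)^{\circ}$ afforded by the definition (\ref{eqn-sobo-open-set-ortho}), where for notational uniformity I set $W^s(V):=W^s_{M\setminus V}(M)^{\perp}$ for open $V$. Standard Banach space duality then gives $(W^s(M)/W^s_{M\setminus U}(M))^*\cong \textrm{annihilator of }W^s_{M\setminus U}(M)$ in $W^s(M)^*=W^{-s}(M)$. The quotient on the left is isomorphic to $W^s(U)$ (this is the characterization mentioned in the remark in Appendix C.1), and the annihilator on the right is $W^{-s}_U(M)$ by (ii). This gives $W^s(U)^*\cong W^{-s}_U(M)$. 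The other identification $W^s_U(M)^*\cong W^{-s}(U)$ is the same argument with $s$ and $-s$ swapped and $U$ unchanged, combined with (ii). Reflexivity then follows by iterating the two isomorphisms.

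The main obstacle I anticipate is actually bookkeeping with the identifications rather than any deep analytic difficulty: one must keep track of which pairing is being used (the $L^2$ pairing, the $W^s$ inner product, or the induced duality pairing), and the fact that our definition of $W^s(U)$ in (\ref{eqn-sobo-open-set-ortho}) is via the orthogonal complement inside $W^s(M)$ rather than the more familiar quotient definition means the equivalence of these two pictures needs to be invoked explicitly (using (\ref{eqn-def-inner-prod-sobo-open})). Once that identification is set up, the proof is essentially formal from (i) and (ii).
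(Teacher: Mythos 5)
The paper itself states this lemma without proof (it is treated as standard, in the spirit of the references to \cite{Taylor1} in this appendix), so your proposal has to stand on its own; your part (i) and the first half of (ii) do follow the standard route and are fine. The genuine gap is in your last sentence of (ii): ``the statement for closed $A$ is the same argument with $U=M\setminus A$'' is not correct. Substituting $U=M\setminus A$ into what you actually proved gives the annihilator of $W^s_{M\setminus A}(M)$, i.e.\ of the closure of $C_c^\infty(M\setminus A)$, whereas the lemma's second assertion concerns the annihilator of the support-defined space $W^s_A(M)$ — because of the overloaded notation these are different statements, and the hard inclusion is different in kind. Your mechanism ``vanishing against $C_c^\infty(U)$ means $\supp u\subset M\setminus U$'' is just the definition of distributional support; but to show that a $u\in W^{-s}(M)$ annihilating \emph{all} of $W^s_A(M)$ lies in $W^{-s}_{M\setminus A}(M)$ you must show $u$ is a $W^{-s}$-\emph{limit of test functions} supported off $A$, which no support argument gives. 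The clean fix is a double-annihilator step: by the half you did prove, applied with $(-s,\,M\setminus A)$, the closed subspace $N:=W^{-s}_{M\setminus A}(M)$ has pre-annihilator $W^s_A(M)$ in $W^s(M)$; since by (i) the $L^2$-pairing puts $W^s(M)$ and $W^{-s}(M)$ in perfect duality and these are Hilbert (hence reflexive) spaces, the bipolar theorem gives $(W^s_A(M))^{\circ}=\ol{N}=N$, which is the missing assertion.

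This is not a cosmetic omission, because it is exactly the half of (ii) that the rest of your plan (and the paper) leans on: in (iii) you invoke ``the annihilator on the right is $W^{-s}_U(M)$ by (ii)'', which is the annihilator of the support-type space $W^s_{M\setminus U}(M)$, and the paper uses the same form in the proof of lemma \ref{lemm-sobo-decomp}. Once the bipolar step is inserted, your (iii) goes through via the quotient identifications $(X/V)^*\cong V^{\perp}$ and $V^*\cong X^*/V^{\perp}$ as you describe (reflexivity is anyway automatic since all the spaces are closed subspaces or quotients of Hilbert spaces). One smaller slip: the decomposition you quote, $W^s(M)=W^s_U(M)\oplus W^s(M\setminus U)^{\circ}$, is garbled; the orthogonal decomposition furnished by (\ref{eqn-sobo-open-set-ortho}) is $W^s(M)=W^s_{M\setminus U}(M)\oplus W^s(U)$, and in general $W^s_U(M)$ is \emph{not} such a complement (indeed $W^s_U(M)\subset W^s_{\ol{U}}(M)$ can be strict, as the paper remarks). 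This does not damage your argument since the duality steps you actually perform use the quotient picture, but the statement as written should be corrected.
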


Finally, when~$\Omega\subset M$ is a domain with smooth boundary~$\partial\Omega$, we define, in view of lemma \ref{lemm-sobo-decomp}, the \textsf{Dirichlet Green operator}~$(\Delta_{\Omega,D}+m^2)^{-1}:=(\Delta+m^2)^{-1}P_{M\setminus \Omega^{\circ}}^{\perp}:W^{-1}(\Omega^{\circ})\lto W_{\Omega^{\circ}}^1(M)$. Clearly this agrees with the usual definition. In terms of quadratic forms,
\begin{lemm}[\cite{Sim2} theorem VII.1] \label{lemm-diri-green-op-quad-form}
  Let~$\Omega\subset M$ be a domain with smooth boundary~$\partial\Omega$. We have
  \begin{equation}
    \bank{f,(\Delta_{\Omega,D}+m^2)^{-1}h}_{L^2}=\bank{P_{M\setminus \Omega^{\circ}}^{\perp}f, P_{M\setminus \Omega^{\circ}}^{\perp}h}_{W^{-1}},
    \label{}
  \end{equation}
  for~$f$,~$h\in C_c^{\infty}(\Omega^{\circ})$.
\end{lemm}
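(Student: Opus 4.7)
The plan is to unravel the definitions and reduce the identity to a statement about the annihilator pairing between $W^{-1}(M)$ and $W^1(M)$, where most of the work has already been done in Lemma~\ref{lemm-sobo-decomp} and Lemma~\ref{lemm-sobo-dual}.

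First I would set $v \defeq (\Delta_{\Omega,D}+m^2)^{-1}h = (\Delta+m^2)^{-1} P_{M\setminus\Omega^{\circ}}^{\perp} h$. Since $h\in C_c^{\infty}(\Omega^{\circ})\subset W^{-1}(M)$, the projection $P_{M\setminus\Omega^{\circ}}^{\perp} h$ lies in $W^{-1}_{M\setminus\Omega^{\circ}}(M)^{\perp}$, and by the commutativity of the diagram in Lemma~\ref{lemm-sobo-decomp} the isometry $(\Delta+m^2)^{-1}$ sends this space back onto $W^1_{\Omega^{\circ}}(M)$. Thus $v\in W^1_{\Omega^{\circ}}(M)$. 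By our choice of Sobolev inner products, the right-hand side of the claimed identity equals
\[
\bigl\langle P_{M\setminus\Omega^{\circ}}^{\perp}f,\,(\Delta+m^2)^{-1}P_{M\setminus\Omega^{\circ}}^{\perp}h\bigr\rangle_{L^2} = \bigl\langle P_{M\setminus\Omega^{\circ}}^{\perp}f,\,v\bigr\rangle_{L^2},
\]
whereas the left-hand side equals $\langle f,v\rangle_{L^2}$. So what needs to be shown is
\[
\bigl\langle f - P_{M\setminus\Omega^{\circ}}^{\perp}f,\,v\bigr\rangle_{L^2} = 0.
\]

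Next, by construction $f - P_{M\setminus\Omega^{\circ}}^{\perp}f = P_{M\setminus\Omega^{\circ}}f \in W^{-1}_{M\setminus\Omega^{\circ}}(M)$, i.e.\ a distribution supported in $M\setminus\Omega^{\circ}$. The pairing $\langle\cdot,\cdot\rangle_{L^2}$ on $f,h\in C^{\infty}(M)$ extends by density and continuity to the duality pairing between $W^{-1}(M)$ and $W^{1}(M)$ (Lemma~\ref{lemm-sobo-dual}(i)). But Lemma~\ref{lemm-sobo-dual}(ii) identifies $W^{-1}_{M\setminus\Omega^{\circ}}(M)$ as precisely the annihilator of $W^{1}_{\Omega^{\circ}}(M)$ under this pairing. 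Since $v\in W^1_{\Omega^{\circ}}(M)$, the desired vanishing follows immediately.

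The argument is therefore essentially a bookkeeping exercise once one recognises the right geometric fact: the isomorphism $\Delta+m^2$ from Lemma~\ref{lemm-sobo-decomp} sends the zero-Dirichlet side $W^1_{\Omega^{\circ}}(M)$ of the $W^1$ decomposition onto the ``orthogonal'' side $W^{-1}_{M\setminus\Omega^{\circ}}(M)^{\perp}$ of the $W^{-1}$ decomposition, and these two pieces are Sobolev-duals of each other in a way compatible with the $L^2$ pairing. The only potentially subtle point, which I would verify explicitly, is that $f\in C_c^{\infty}(\Omega^{\circ})$ is actually being interpreted as an element of $W^{-1}(M)$ when we write $P_{M\setminus\Omega^{\circ}}^{\perp}f$, but since $C^{\infty}(M)\subset W^s(M)$ for all $s$ on a closed manifold this causes no difficulty. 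Alternatively, one can give the same proof using Corollary~\ref{cor-adj-sobo-proj}, writing $\langle f,v\rangle_{L^2} = \langle f, p_{\Omega^{\circ}}^{\perp} v\rangle_{L^2} = \langle P_{M\setminus\Omega^{\circ}}^{\perp}f,v\rangle_{L^2}$, using that $v\in W^1_{\Omega^{\circ}}(M)$ means $p_{\Omega^{\circ}}^{\perp}v = v$.
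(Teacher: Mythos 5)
Your argument is correct. The paper itself gives no proof of this lemma (it simply cites Simon's Theorem VII.1), so your derivation is a self-contained substitute: once the Dirichlet Green operator is \emph{defined}, as the paper does just above the lemma, by $(\Delta_{\Omega,D}+m^2)^{-1}:=(\Delta+m^2)^{-1}P_{M\setminus\Omega^{\circ}}^{\perp}$, the identity reduces exactly as you say to the annihilator statement of Lemma~\ref{lemm-sobo-dual}(ii) applied to $P_{M\setminus\Omega^{\circ}}f\in W^{-1}_{M\setminus\Omega^{\circ}}(M)$ paired against $v\in W^{1}_{\Omega^{\circ}}(M)$, and the membership $v\in W^1_{\Omega^{\circ}}(M)$ is precisely what the commuting diagram of Lemma~\ref{lemm-sobo-decomp} gives. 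The one caveat worth keeping in mind is where the analytic content actually sits: relative to the classical Dirichlet Laplacian (Friedrichs extension on $C_c^\infty(\Omega^\circ)$), what your argument proves is the identity for the paper's ad hoc definition of the Green operator; the identification of that operator with the usual resolvent is the unproved assertion "clearly this agrees with the usual definition," which is what the citation to Simon is really carrying. Finally, a small notational slip in your alternative route: in the paper's conventions the lower-case projection onto $W^1_{\Omega^{\circ}}(M)$ is $p_{\Omega^{\circ}}$, not $p_{\Omega^{\circ}}^{\perp}$ (the latter projects onto $W^1_{\Omega^{\circ}}(M)^{\perp}$), so the correct one-liner via Corollary~\ref{cor-adj-sobo-proj} is $\bank{f,v}_{L^2}=\bank{f,p_{\Omega^{\circ}}v}_{L^2}=\bank{P^{\perp}_{M\setminus\Omega^{\circ}}f,v}_{L^2}$, using $p_{\Omega^{\circ}}v=v$. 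This does not affect your main proof.
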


 \section{Symbol Convergence Lemma and Heat Kernel}\label{app-symbol}

 \begin{proof}[Proof of lemma \ref{lemm-dyat-zwor-symbol-conv}.]
  By coordinate invariance of the definition of~$\Psi^m(M)$ it suffices to pick~$x\in M$ and prove the result for a chart around~$x$ and~$\chi(x)=1$. Denote the kernel of~$\chi E_{\varepsilon}\chi$ by~$E_{\chi,\varepsilon}$ then in this chart we could write
  \begin{equation}
    E_{\chi,\varepsilon}(x,y)=\tilde{E}_{\chi,\varepsilon}(x,h)=\frac{1}{F_{\varepsilon}(x)}\tilde{\psi}\left( \frac{h}{\varepsilon} \right)\tilde{\chi}(h),
    \label{}
  \end{equation}
  where~$h=x-y$. Indeed, by definition of our function~$\psi$ and freedom of choosing~$\chi$ we could further assume that for small enough~$\varepsilon$ one has~$\tilde{\chi}(h)\equiv 1$ on the support of~$\tilde{\psi}(\cdot/\varepsilon)$. Thus under this condition
  \begin{equation}
    \sigma_{\chi E_{\varepsilon}\chi}(x,\xi)=\int_{\mb{R}^d}^{} \me^{-\ii h\cdot \xi}\frac{1}{F_{\varepsilon}(x)} \tilde{\psi}\left( \frac{h}{\varepsilon} \right) \dd h =\frac{\varepsilon^d}{F_{\varepsilon}(x)}\underbrace{F_1(x) \sigma_{\chi E_1\chi}(x,\varepsilon\xi)}_{\textrm{indep. of }x}.
    \label{eqn-reg-symbol-expression}
  \end{equation}
  Note that~$\sigma_{\chi E_1\chi}(x,\eta)$ is Schwartz in~$\eta$ and~$\sigma_{\chi E_1\chi}(x,0)=1$. On the other hand clearly~$\sigma_{\chi\one\chi}(x,\xi)\equiv 1$. Thus for some~$U'\subset U$ depending only on the chart and~$\chi$, one has
  \begin{equation}
    \sup_{x\in K\Subset U'}\sup_{\xi}\frac{|\sigma_{\chi (E_{\varepsilon}-\one)\chi}(x,\xi)|}{\ank{\xi}^{\delta}} \le \left\{
      \def\arraystretch{1.3}
    \begin{array}{l}
      \ddp C\sup_{x\in K\Subset U'}\sup_{|\xi|\le R} \ank{\xi}^{-\delta} |\sigma_{\chi E_{1}\chi}(x,\varepsilon\xi)-1| \le C_{K,\chi} \sqrt{\varepsilon},\\
      \ddp C\sup_{x\in K\Subset U'}\sup_{|\xi|\ge R} (\cdots)\le C_{K,\chi}\varepsilon^{\delta/2}\sup_{\eta}|\sigma_{\chi E_1\chi}(x,\eta)|,
    \end{array}
    \right.
    \label{eqn-reg-symbol-bound}
  \end{equation}
  with~$R=\varepsilon^{-1/2}$. Next we deal with derivatives. Note that by (\ref{eqn-reg-symbol-expression}) all the~$x$-derivatives fall on~$1/F_{\varepsilon}(x)$ and all~$\xi$-derivatives fall on~$\sigma_{\chi E_1\chi}(x,\varepsilon\xi)$. Indeed, one has~$|\partial_x^{\alpha}(1/F_{\varepsilon}(x))|\le C_{\alpha} \varepsilon^{-d}$ (see Dyatlov and Zworski \cite{DZ} page 28), and so when there are only~$x$-derivatives we obtain the same bounds as (\ref{eqn-reg-symbol-bound}) only with new constants depending on~$\alpha$. When there is at least one~$\xi$-derivative,
  \begin{equation}
    |\partial_{\xi}^{\beta}\left( \sigma_{\chi E_1\chi}(x,\varepsilon\xi) \right)|=|\varepsilon^{|\beta|}(\partial_{\xi}^{\beta} \sigma_{\chi E_1\chi})(x,\varepsilon\xi)|\le C_{\beta,K,\chi}\varepsilon^{|\beta|}\ank{\varepsilon\xi}^{\delta-|\beta|},\quad |\beta|\ge 1.
    \label{}
  \end{equation}
  Hence, on account of (\ref{eqn-reg-symbol-expression}) again,
  \begin{equation}
    \sup_{x\in K\Subset U'}\sup_{\xi}\frac{|\partial_x^{\alpha}\partial_{\xi}^{\beta}\sigma_{\chi (E_{\varepsilon}-\one)\chi}(x,\xi)|}{\ank{\xi}^{\delta-|\beta|}}\le C\varepsilon^d C_{\alpha}\varepsilon^{-d}C_{\beta,K,\chi}\varepsilon^{|\beta|}=C_{\alpha,\beta,K,\chi}\varepsilon^{|\beta|}.
    \label{}
  \end{equation}
  Consequently, all the~$\mathcal{S}^{\delta}_{1,0}$ seminorms of~$\sigma_{\chi(E_{\varepsilon}-\one)\chi}$ goes to zero as~$\varepsilon\to 0$. We obtain the result.
\end{proof}

In what follows we sum up some properties of the \textsf{heat operator}~$\me^{-t(\Delta+m^2)}$ of the massive Laplacian (Helmholtz operator) and its Schwartz kernel~$p_t(x,y)$ called the \textsf{heat kernel}.

 \begin{lemm}[\cite{BGV} theorems 2.30, 2.38 and pages 92-94] \label{lemm-heat} We have
    \begin{enumerate}[(i)]
      \item $p_t(x,y)\in C^{\infty}((0,\infty)\times M\times M)$;
      \item we have
  \begin{equation}
    (\Delta+m^2)^{-1}=\int_{0}^{\infty}\me^{-t(\Delta+m^2)}\dd t.
    \label{eqnB8}
  \end{equation}
  In particular, the kernel~$G_t$ of~$\me^{-t(\Delta+m^2)}(\Delta+m^2)^{-1}=(\Delta+m^2)^{-1}\me^{-t(\Delta+m^2)}$ is
  \begin{equation}
    G_{t}(x,y)=\int_{t}^{\infty} p_s(x,y)\dd s,
    \label{eqnB9}
  \end{equation}
  for~$x$,~$y\in M$.
\item Let~$\dim M=n$. There are asymptotic expansions
  \begin{align}
  p_t(x,y)&\sim \frac{1}{(4\pi t)^{n/2}}\me^{-\frac{1}{4t}d(x,y)^2}\sum_{i=0}^{\infty} f_i(x,y) t^i,\\
    \ttr_{L^2(M)}\big(\me^{-t(\Delta+m^2)}\big)&\sim \frac{1}{(4\pi t)^{n/2}}\sum_{i=0}^{\infty} a_i t^i
    \label{}
  \end{align}
  as~$t\to 0+$, for some real numbers~$a_i$ and functions $f_i\in C^{\infty}(M\times M)$,~$i=0$,~$1$,~$2$, \dots.
\item For~$t$ large and each~$\ell\in\mb{N}$,
  \begin{equation}
    \nrm{p_t(x,y)}_{C^{\ell}}\le C_{\ell}\me^{-tm^2/2}
    \label{}
  \end{equation}
  for some constant~$C_{\ell}$.
    \end{enumerate}
\end{lemm}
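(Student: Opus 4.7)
The plan is to prove the four parts in the order (i), (iii), (iv), (ii), since the integral representation (ii) most cleanly follows once decay in $t$ at both ends is understood. Throughout I would work with the spectral resolution $\Delta = \sum_j \lambda_j \langle \varphi_j, - \rangle \varphi_j$ coming from the compactness of $M$ and ellipticity of $\Delta$, so that formally
\begin{equation}
    p_t(x,y) = \sum_{j=0}^\infty e^{-t(\lambda_j + m^2)} \varphi_j(x) \varphi_j(y).
\end{equation}
For (i), the smoothness away from $t=0$ is an immediate consequence of this expansion together with Weyl's law $\lambda_j \asymp j^{2/n}$ and the Sobolev-type bound $\|\varphi_j\|_{C^k} \lesssim (1+\lambda_j)^N$ (for some $N=N(k,n)$), which ensures that the series and all its $(t,x,y)$-derivatives converge absolutely and uniformly on compact subsets of $(0,\infty)\times M \times M$ thanks to the exponential factor $e^{-t\lambda_j}$.

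For (iii), the small-$t$ asymptotic of $p_t(x,y)$ would be obtained by the standard parametrix construction of Minakshisundaram--Pleijel: one writes down an approximate heat kernel
\begin{equation}
    H_t^N(x,y) = \frac{e^{-d(x,y)^2/4t}}{(4\pi t)^{n/2}} \sum_{i=0}^N f_i(x,y) t^i
\end{equation}
on a tubular neighborhood of the diagonal, determines the $f_i$ recursively so that $(\partial_t + \Delta + m^2) H_t^N$ vanishes to high order at $t=0$, glues via a cut-off to obtain a global approximation, and controls the remainder using Duhamel's principle. Integrating along the diagonal and using the pointwise asymptotic gives the heat trace expansion with coefficients $a_i = \int_M \ttr f_i(x,x)\,\dd V_g(x)$ (modulo curvature and mass corrections). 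For (iv), from the spectral representation and the fact that $\lambda_j \geq 0$ and all eigenvalues are shifted by $m^2>0$, one gets $|p_t(x,y)| \le e^{-tm^2} \sum_j e^{-t\lambda_j} |\varphi_j(x)\varphi_j(y)|$; splitting off a factor $e^{-tm^2/2}$ and absorbing the remaining sum into a uniform constant (using $(i)$ at time $t_0 = 1$, say, together with semigroup composition $p_{t} = p_{t-1} * p_{1/2} * p_{1/2}$ to gain $C^\ell$ bounds on the kernel) delivers the bound $\|p_t\|_{C^\ell} \le C_\ell e^{-tm^2/2}$.

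Finally for (ii), on the operator level the identity $(\Delta+m^2)^{-1} = \int_0^\infty e^{-t(\Delta+m^2)}\,\dd t$ holds by spectral calculus, since $\int_0^\infty e^{-t(\lambda + m^2)}\,\dd t = (\lambda+m^2)^{-1}$ for every $\lambda \ge 0$ and $m>0$. To pass from the operator identity to the kernel identity $G(x,y) = \int_0^\infty p_t(x,y)\,\dd t$, I would use Fubini: the small-$t$ asymptotic in (iii) shows $p_t(x,y)$ is integrable near $t=0$ for $x\ne y$ (the Gaussian factor kills the singularity), and (iv) shows integrability at $t=\infty$; together they justify interchange and give $G_t(x,y) = \int_t^\infty p_s(x,y)\,\dd s$.

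The only genuine obstacle is the parametrix construction in (iii), which is the technical heart of the lemma; the rest is spectral-theoretic bookkeeping. Since this is a classical result, one could equally well cite Berline--Getzler--Vergne \cite{BGV}, Chapter~2, for a full treatment, which is what we do in the statement.
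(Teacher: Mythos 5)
Your proposal is correct, and it matches the paper, which offers no proof of its own for this lemma beyond the citation to \cite{BGV} (theorems 2.30, 2.38) — exactly the fallback you invoke in your final remark. Your sketch (spectral series with Weyl's law and eigenfunction bounds for (i) and (iv), the Minakshisundaram--Pleijel parametrix with Duhamel for (iii), spectral calculus plus the large-time decay to justify the kernel identity in (ii)) is precisely the standard argument carried out in that reference, so nothing essential is missing.
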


\section{Poincar\'e-Lelong Lemmas}\label{sec-app-poin-lelong}

\begin{proof}[Proof of lemma \ref{lemm-bounded-lap-log}.]
	  We denote by~$\nabla_v$ the Levi-Civita covariant derivative in the direction~$v$. Since we are concerned with only one metric~$g$ we omit it from the notations. Take the geodesic polar coordinates $(r, \theta)$ around $z_0$. Then
 $g = \dd r^2 + \omega^2(r) \dd\theta^2$ where $\omega(r) = | \partial_{\theta}|$. From the expression of $\Delta$ in polar coordinates we have
	  \begin{align*}
	    \Delta \log r =  \frac{1}{\omega(r)} \partial_r \left (\omega(r) \partial_r \log r \right) = -\frac{1}{r^2} + \frac{\partial_r \omega(r)}{r\, \omega(r)}.
	  \end{align*}
	  Either noting the fact that~$\partial_{\theta}$ is a Jacobi field along $\theta=\textrm{const}$ connecting~$z_0$ to~$z$ (\cite{dC} page 115), or otherwise, we have a Taylor expansion
	  \begin{equation}
	    |\partial_{\theta}|=r+\mathcal{O}(r^3),\quad r\to 0.
	    \label{}
	  \end{equation}
	  This gives
	  \begin{equation}
	    \frac{\partial_r|\partial_{\theta}|}{|\partial_{\theta}|}=\frac{1}{r}+\mathcal{O}(r),\quad r\to 0,
	    \label{}
	  \end{equation}
	  yielding the result.
	\end{proof}

\begin{lemm}\label{lemm-app-poincare-lel}
  Let~$(\Sigma,g)$ be a Riemannian surface with smooth metric~$g$, and fix~$z_0\in \Sigma$. Denote~$r(z):=d_g(z,z_0)$. Then for any~$\psi\in C_c^{\infty}(U)$ where the neighborhood~$U$ lies within the injectivity radius of~$z_0$,
  \begin{equation}
    \int_{\Sigma}^{}\log r (-\Delta_g \psi)\,\dd V_g=\int_{\Sigma\setminus z_0}^{}\psi (-\Delta_g \log r)\,\dd V_g -2\pi \psi(z_0).
    \label{}
  \end{equation}
\end{lemm}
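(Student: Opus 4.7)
The plan is to prove this via Green's second identity on the excised manifold $\Omega_\epsilon := \Sigma \setminus B_\epsilon(z_0, g)$, and then let $\epsilon \to 0^+$. This is the natural analogue of the classical Poincar\'e-Lelong formula in two dimensions: morally, $-\Delta_g \log r = 2\pi \delta_{z_0} + F$, where $F$ is the bounded function $-\Delta_g \log r|_{\Sigma \setminus z_0}$ controlled by lemma \ref{lemm-bounded-lap-log}. The proof below makes the delta contribution rigorous by extracting it as a boundary integral over a shrinking geodesic circle.

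Concretely, for $\epsilon > 0$ small enough that $B_\epsilon := B_\epsilon(z_0, g)$ lies within both the injectivity radius at $z_0$ and the neighborhood $U$, both $\log r$ and $\psi$ are smooth on $\Omega_\epsilon$, and Green's second identity gives
\begin{equation*}
\int_{\Omega_\epsilon} \bigl[(\log r)\Delta_g \psi - \psi \, \Delta_g(\log r)\bigr] \, \dd V_g = -\int_{\partial B_\epsilon} \bigl[(\log r)\partial_\nu \psi - \psi \, \partial_\nu(\log r)\bigr] \, \dd \ell_g,
\end{equation*}
where $\nu$ is the outward unit normal of $\Omega_\epsilon$, so $\nu = -\partial_r$ in the geodesic polar frame centered at $z_0$. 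The next step is to estimate the two boundary terms as $\epsilon \to 0$. Since $\psi$ is smooth and $\log r \equiv \log \epsilon$ on $\partial B_\epsilon$, the first integral is $O(\epsilon \log \epsilon)$ thanks to the Jacobi-field expansion $|\partial B_\epsilon|_g = 2\pi\epsilon + O(\epsilon^3)$ (cf.\ equation (\ref{eqn-asymp-peri-circle})), hence vanishes in the limit. For the second, $\partial_\nu(\log r) = -1/\epsilon$, and combining the divergent factor with $|\partial B_\epsilon|_g \sim 2\pi\epsilon$ and the continuity of $\psi$ at $z_0$ produces a finite limit of size $\pm 2\pi \psi(z_0)$.

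To conclude, one passes to the limit $\epsilon \to 0^+$ on the interior side as well, where the absolute convergence of $\int_{\Sigma \setminus z_0} \psi \, \Delta_g(\log r) \, \dd V_g$ is guaranteed by lemma \ref{lemm-bounded-lap-log} (boundedness of $\Delta_g \log r$ away from $z_0$) combined with the compact support of $\psi$. Collecting terms and multiplying through by $-1$ yields the stated identity. The main (and really only) technical obstacle is bookkeeping: one must be careful that $\nu$ is the \emph{inward} radial direction (so $\partial_\nu \log r = -1/\epsilon$, not $+1/\epsilon$), that $\Delta_g = -\mathrm{div}_g \nabla_g$ is the positive convention, and that the orientation conventions of Green's identity are applied consistently, in order to pin down the sign of the $2\pi\psi(z_0)$ boundary residue.
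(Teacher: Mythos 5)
Your proposal is correct and follows essentially the same route as the paper: apply Green's second identity on $\Sigma\setminus B_{\varepsilon}(z_0,g)$, use the perimeter expansion (\ref{eqn-asymp-peri-circle}) together with smoothness of $\psi$ to kill the $(\log r)\partial_\nu\psi$ term and extract $2\pi\psi(z_0)$ from the $\psi\,\partial_\nu\log r$ term, and invoke lemma \ref{lemm-bounded-lap-log} for integrability of the interior term. No gaps; just make sure the final sign of the $2\pi\psi(z_0)$ residue is pinned down explicitly rather than left as $\pm$.
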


\begin{proof}
  Pick the metric disk~$B_{\varepsilon}(z_0)$ with radius~$\varepsilon>0$ at~$z_0$. Then Green's formula gives
  \begin{align*}
    \int_{\Sigma\setminus B_{\varepsilon}(z_0)}^{}\big[\log r (\Delta_g \psi)-\psi (\Delta_g \log r)\big]\,\dd V_g&= -\int_{\partial B_{\varepsilon}}^{} \big[\log r (\partial_r \psi)-\psi (\partial_r \log r)\big]\,\dd \ell_g\\
    &=\mathcal{O}(\varepsilon\log \varepsilon)+\psi(z_0)\frac{1}{\varepsilon}2\pi \varepsilon +\mathcal{O}(\varepsilon),
  \end{align*}
  as~$\varepsilon\to 0$, since~$\psi$ is smooth and (\ref{eqn-asymp-peri-circle}).
\end{proof}

\begin{def7}
  More precisely we have the equality of currents~$\Delta_g \log r\,\dd V_g=\Delta_g \log r|_{U\setminus z_0}\dd V_g+2\pi \delta_{z_0}$ on~$U$.
\end{def7}

\begin{lemm}\label{lemm-bound-lap-log-ratio}
  Fix~$r(z):=d_g(z,z_0)$ and now let~$g_1=\me^{2h}g$,~$h\in C^{\infty}(\Sigma)$, and~$r_1(z):=d_{g_1}(z,z_0)$. Then
  \begin{equation}
    \Delta_{g_1}\log r -\Delta_{g_1}\log r_1 \in L^1(U,g_1).
    \label{}
  \end{equation}
  Namely, as a distribution, it agrees with an integrable function when paired with~$\dd V_{g_1}$.
\end{lemm}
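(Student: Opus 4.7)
The strategy is to interpret both $\Delta_{g_1}\log r$ and $\Delta_{g_1}\log r_1$ as distributions on $U$, isolate their singular parts at $z_0$, and observe that these singular parts cancel, leaving a difference of bounded functions on $U\setminus z_0$.

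First I would note the distributional conformal invariance in dimension two. For any $\psi\in C_c^\infty(U)$, using (\ref{eqn-scale-vol}) and (\ref{eqn-scale-lap}) one has the pointwise identity $\Delta_{g_1}\psi \cdot \dd V_{g_1}=\Delta_g \psi\cdot \dd V_g$, so that
\begin{equation*}
\int_U \log r \cdot \Delta_{g_1}\psi\,\dd V_{g_1}=\int_U \log r\cdot \Delta_g\psi\,\dd V_g.
\end{equation*}
In other words, the distribution $\Delta_{g_1}\log r\cdot \dd V_{g_1}$ (defined via integration by parts against $\Delta_{g_1}\psi\,\dd V_{g_1}$) coincides with the current $\Delta_g \log r\cdot \dd V_g$. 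By lemma \ref{lemm-app-poincare-lel} and the remark following it, the latter equals $\Delta_g \log r|_{U\setminus z_0}\cdot \dd V_g+2\pi\delta_{z_0}$ as currents.

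Next, applying the same reasoning to $\log r_1$ with the metric $g_1$ (i.e.\ lemma \ref{lemm-app-poincare-lel} with $g$ replaced by $g_1$ and $r$ by $r_1$, both of which are allowed by lemma \ref{lemm-bounded-lap-log} applied to $g_1$), one gets $\Delta_{g_1}\log r_1\cdot \dd V_{g_1}=\Delta_{g_1}\log r_1|_{U\setminus z_0}\cdot \dd V_{g_1}+2\pi\delta_{z_0}$ as currents on $U$. Subtracting the two identities, the $2\pi\delta_{z_0}$ contributions cancel exactly, and we obtain
\begin{equation*}
\bigl(\Delta_{g_1}\log r-\Delta_{g_1}\log r_1\bigr)\cdot \dd V_{g_1}=\bigl(\Delta_g\log r|_{U\setminus z_0}\cdot \dd V_g-\Delta_{g_1}\log r_1|_{U\setminus z_0}\cdot \dd V_{g_1}\bigr),
\end{equation*}
where the right-hand side is a genuine 2-form on $U\setminus z_0$.

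Finally, converting everything to the $g_1$-volume form via $\dd V_g=e^{-2h}\dd V_{g_1}$ and $\Delta_{g_1}=e^{-2h}\Delta_g$ from (\ref{eqn-scale-vol})--(\ref{eqn-scale-lap}), the distribution $\Delta_{g_1}\log r-\Delta_{g_1}\log r_1$ is represented on $U$ by the function $e^{-2h}\Delta_g\log r-\Delta_{g_1}\log r_1$ restricted to $U\setminus z_0$. By lemma \ref{lemm-bounded-lap-log}, both $\Delta_g\log r$ and $\Delta_{g_1}\log r_1$ are smooth and uniformly bounded near but not at $z_0$, so this difference is bounded on $U\setminus z_0$. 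Since $U$ has finite $g_1$-volume, it lies in $L^\infty(U,g_1)\subset L^1(U,g_1)$, which is exactly the claim. No substantial obstacle arises, the only subtlety being the careful bookkeeping of the conformal rescaling between the distributional pairings, which is resolved by the identity $\Delta_{g_1}\psi\,\dd V_{g_1}=\Delta_g\psi\,\dd V_g$ that is special to dimension two.
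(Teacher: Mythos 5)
Your proposal is correct and follows essentially the same route as the paper: the conformal invariance of $\Delta_{g_1}\psi\,\dd V_{g_1}=\Delta_g\psi\,\dd V_g$ identifies the currents, the two $2\pi\delta_{z_0}$ terms from the Poincar\'e--Lelong lemma cancel exactly, and the remaining regular parts are bounded near $z_0$ by lemma \ref{lemm-bounded-lap-log}, hence in $L^{\infty}(U,g_1)\subset L^1(U,g_1)$. Your write-up merely spells out the bookkeeping (the factor $e^{-2h}$ and the application of the Poincar\'e--Lelong lemma to both metrics) that the paper leaves implicit.
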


\begin{proof}
  The point is that the current~$\Delta_{g_1}\log r\,\dd V_{g_1}$ agrees with~$\Delta_g \log r\,\dd V_g$, because of their definitions and conformal covariances (\ref{eqn-scale-vol}), (\ref{eqn-scale-lap}). Therefore the delta functions cancel out exactly and we are left with the regular parts~$\Delta_{g_1}\log r|_{U\setminus z_0} -\Delta_{g_1}\log r_1|_{U\setminus z_0}$ which is in fact~$L^{\infty}(U,g_1)$.
\end{proof}

\newpage

\printbibliography[heading=bibintoc, title={References}]

\end{document}